\newcounter{sarrow}
\newcommand\xrsquigarrow[1]{%
\stepcounter{sarrow}%
\mathrel{\begin{tikzpicture}[baseline= {( $ (current bounding box.south) + (0,-0.5ex) $ )}]
\node[inner sep=.5ex] (\thesarrow) {$\scriptstyle #1$};
\path[draw,<-,decorate,
  decoration={zigzag,amplitude=0.7pt,segment length=1.2mm,pre=lineto,pre length=4pt}]
    (\thesarrow.south east) -- (\thesarrow.south west);
    \end{tikzpicture}}%
}
\newcounter{sarrow1}
\newcommand\xnrsquigarrow[1]{%
\stepcounter{sarrow1}%
\mathrel{\begin{tikzpicture}[baseline= {( $ (current bounding box.south) + (0,-0.5ex) $ )}]
\node[inner sep=.5ex] (\thesarrow) {$\scriptstyle #1$};
\path[draw,<-,decorate,
  decoration={zigzag,amplitude=0.7pt,segment length=1.2mm,pre=lineto,pre length=4pt}]
    (\thesarrow1.south east) -- (\thesarrow1.south west);
    $\slashedarrowfill@\relbar\relbar/$
    \end{tikzpicture}}%
}
\def\slashedarrowfill@#1#2#3#4#5{%
  $\m@th\thickmuskip0mu\medmuskip\thickmuskip\thinmuskip\thickmuskip
   \relax#5#1\mkern-7mu%
   \cleaders\hbox{$#5\mkern-2mu#2\mkern-2mu$}\hfill
   \mathclap{#3}\mathclap{#2}%
   \cleaders\hbox{$#5\mkern-2mu#2\mkern-2mu$}\hfill
   \mkern-7mu#4$%
}
\def\rightslashedarrowfillb@{%
  \slashedarrowfill@\relbar\relbar/\rightarrow}
\newcommand\xnrightarrow[2][]{%
  \ext@arrow 0055{\rightslashedarrowfillb@}{#1}{#2}}
\def\rightslashedarrowfille@{%
  \slashedarrowfill@\relbar\relbar/\twoheadrightarrow}
\newcommand\xntworightarrow[2][]{%
  \ext@arrow 0055{\rightslashedarrowfille@}{#1}{#2}}
\def\rightslashedarrowfillg@{%
  \slashedarrowfill@\relbar\relbar{\raisebox{.12em}{}}\twoheadrightarrow}
\newcommand\xtworightarrow[2][]{%
  \ext@arrow 0055{\rightslashedarrowfillg@}{#1}{#2}}
\def\rightslashedarrowfillx@{%
  \slashedarrowfill@\Relbar\Relbar/\rightrightarrows}
\newcommand\xnTworightarrow[2][]{%
  \ext@arrow 0055{\rightslashedarrowfillx@}{#1}{#2}}
\def\rightslashedarrowfilly@{%
  \slashedarrowfill@\Relbar\Relbar{\raisebox{.12em}{}}\rightrightarrows}
\newcommand\xTworightarrow[2][]{%
  \ext@arrow 0055{\rightslashedarrowfilly@}{#1}{#2}}
\tikzset{nomorepostaction/.code=\let\tikz@postactions\pgfutil@empty}
\newtheorem{theorem}{Theorem}[section]
\newtheorem{definition}[theorem]{Definition}
\newtheorem{proposition}[theorem]{Proposition}
\newtheorem{lemma}[theorem]{Lemma}
\begin{document}

\begin{titlepage}
\thispagestyle{empty}

\hrule
\begin{center}
{\bf\LARGE Probabilistic Process Algebra for True Concurrency}

\vspace{0.7cm}
--- Yong Wang ---

\vspace{2cm}
\begin{figure}[!htbp]
 \centering
 \includegraphics[width=1.0\textwidth]{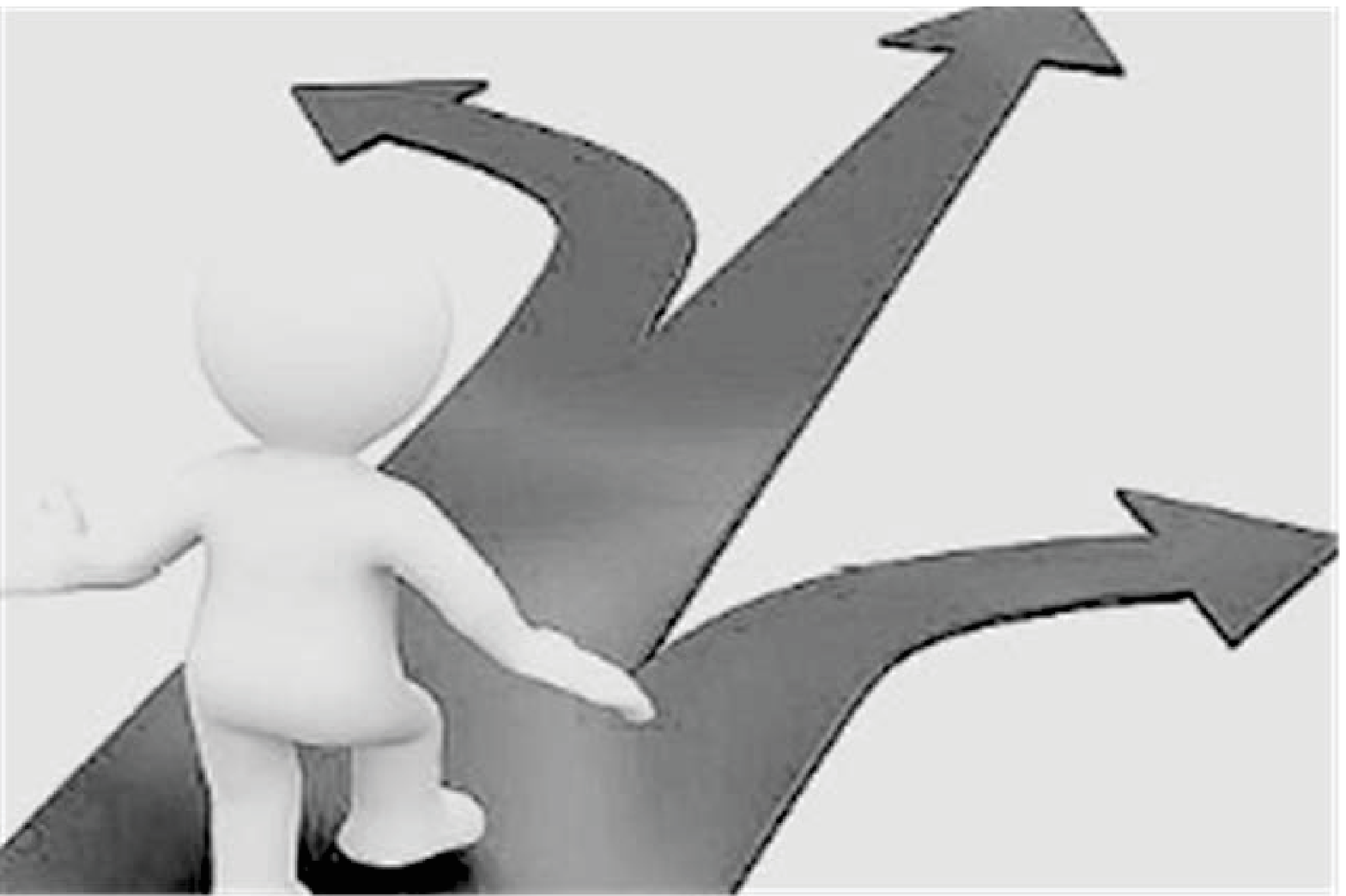}
\end{figure}

\end{center}
\end{titlepage}

\newpage 

\setcounter{page}{1}\pagenumbering{roman}

\tableofcontents

\newpage

\setcounter{page}{1}\pagenumbering{arabic}

        \section{Introduction}

The well-known process algebras, such as CCS \cite{CC} \cite{CCS}, ACP \cite{ACP} and $\pi$-calculus \cite{PI1} \cite{PI2}, capture the interleaving concurrency based on bisimilarity semantics.
We did some work on truly concurrent process algebras, such as CTC \cite{CTC}, APTC \cite{ATC} and $\pi_{tc}$ \cite{PITC}, capture the true concurrency based on truly concurrent bisimilarities, such as
pomset bisimilarity, step bisimilarity, history-preserving (hp-) bisimilarity and hereditary history-preserving (hhp-) bisimilarity. Truly concurrent process algebras are generalizations
of the corresponding traditional process algebras.

In this book, we introduce probabilism into truly concurrent process algebras, based on the work on probabilistic process algebra \cite{PPA} \cite{PPA2} \cite{PPA3}. We introduce the
preliminaries in chapter \ref{bg}. A Calculus for Probabilistic True Concurrency abbreviated $CPTC$ in chapter \ref{cptc}, Algebra of Probabilistic Processes for True Concurrency abbreviated
$APPTC$ in chapter \ref{apptc4}, a calculus for mobile processes called $\pi_{ptc}$ in chapter \ref{piptc} and guards in chapter \ref{pguards}.

\newpage\section{Backgrounds}\label{bg}

To make this book self-satisfied, we introduce some preliminaries in this chapter, including some introductions on operational semantics, proof techniques, truly concurrent process algebra \cite{ATC} \cite{CTC} \cite{PITC}, which is based on truly
concurrent operational semantics, and also probabilistic operational semantics.

\subsection{Operational Semantics}\label{OS}

The semantics of $ACP$ is based on bisimulation/rooted branching bisimulation equivalences, and the modularity of $ACP$ relies on the concept of conservative extension, for the
conveniences, we introduce some concepts and conclusions on them.

\begin{definition}[Bisimulation]
A bisimulation relation $R$ is a binary relation on processes such that: (1) if $p R q$ and $p\xrightarrow{a}p'$ then $q\xrightarrow{a}q'$ with $p' R q'$; (2) if $p R q$ and
$q\xrightarrow{a}q'$ then $p\xrightarrow{a}p'$ with $p' R q'$; (3) if $p R q$ and $pP$, then $qP$; (4) if $p R q$ and $qP$, then $pP$. Two processes $p$ and $q$ are bisimilar,
denoted by $p\sim_{HM} q$, if there is a bisimulation relation $R$ such that $p R q$.
\end{definition}

\begin{definition}[Congruence]
Let $\Sigma$ be a signature. An equivalence relation $R$ on $\mathcal{T}(\Sigma)$ is a congruence if for each $f\in\Sigma$, if $s_i R t_i$ for $i\in\{1,\cdots,ar(f)\}$, then
$f(s_1,\cdots,s_{ar(f)}) R f(t_1,\cdots,t_{ar(f)})$.
\end{definition}

\begin{definition}[Branching bisimulation]
A branching bisimulation relation $R$ is a binary relation on the collection of processes such that: (1) if $p R q$ and $p\xrightarrow{a}p'$ then either $a\equiv \tau$ and $p' R q$ or there is a sequence of (zero or more) $\tau$-transitions $q\xrightarrow{\tau}\cdots\xrightarrow{\tau}q_0$ such that $p R q_0$ and $q_0\xrightarrow{a}q'$ with $p' R q'$; (2) if $p R q$ and $q\xrightarrow{a}q'$ then either $a\equiv \tau$ and $p R q'$ or there is a sequence of (zero or more) $\tau$-transitions $p\xrightarrow{\tau}\cdots\xrightarrow{\tau}p_0$ such that $p_0 R q$ and $p_0\xrightarrow{a}p'$ with $p' R q'$; (3) if $p R q$ and $pP$, then there is a sequence of (zero or more) $\tau$-transitions $q\xrightarrow{\tau}\cdots\xrightarrow{\tau}q_0$ such that $p R q_0$ and $q_0P$; (4) if $p R q$ and $qP$, then there is a sequence of (zero or more) $\tau$-transitions $p\xrightarrow{\tau}\cdots\xrightarrow{\tau}p_0$ such that $p_0 R q$ and $p_0P$. Two processes $p$ and $q$ are branching bisimilar, denoted by $p\approx_{bHM} q$, if there is a branching bisimulation relation $R$ such that $p R q$.
\end{definition}

\begin{definition}[Rooted branching bisimulation]
A rooted branching bisimulation relation $R$ is a binary relation on processes such that: (1) if $p R q$ and $p\xrightarrow{a}p'$ then $q\xrightarrow{a}q'$ with $p'\approx_{bHM} q'$;
(2) if $p R q$ and $q\xrightarrow{a}q'$ then $p\xrightarrow{a}p'$ with $p'\approx_{bHM} q'$; (3) if $p R q$ and $pP$, then $qP$; (4) if $p R q$ and $qP$, then $pP$. Two processes $p$ and $q$ are rooted branching bisimilar, denoted by $p\approx_{rbHM} q$, if there is a rooted branching bisimulation relation $R$ such that $p R q$.
\end{definition}

\begin{definition}[Conservative extension]
Let $T_0$ and $T_1$ be TSSs (transition system specifications) over signatures $\Sigma_0$ and $\Sigma_1$, respectively. The TSS $T_0\oplus T_1$ is a conservative extension of $T_0$ if
the LTSs (labeled transition systems) generated by $T_0$ and $T_0\oplus T_1$ contain exactly the same transitions $t\xrightarrow{a}t'$ and $tP$ with $t\in \mathcal{T}(\Sigma_0)$.
\end{definition}

\begin{definition}[Source-dependency]
The source-dependent variables in a transition rule of $\rho$ are defined inductively as follows: (1) all variables in the source of $\rho$ are source-dependent; (2) if
$t\xrightarrow{a}t'$ is a premise of $\rho$ and all variables in $t$ are source-dependent, then all variables in $t'$ are source-dependent. A transition rule is source-dependent if
all its variables are. A TSS is source-dependent if all its rules are.
\end{definition}

\begin{definition}[Freshness]
Let $T_0$ and $T_1$ be TSSs over signatures $\Sigma_0$ and $\Sigma_1$, respectively. A term in $\mathbb{T}(T_0\oplus T_1)$ is said to be fresh if it contains a function symbol from
$\Sigma_1\setminus\Sigma_0$. Similarly, a transition label or predicate symbol in $T_1$ is fresh if it does not occur in $T_0$.
\end{definition}

\begin{theorem}[Conservative extension]\label{TCE}
Let $T_0$ and $T_1$ be TSSs over signatures $\Sigma_0$ and $\Sigma_1$, respectively, where $T_0$ and $T_0\oplus T_1$ are positive after reduction. Under the following conditions,
$T_0\oplus T_1$ is a conservative extension of $T_0$. (1) $T_0$ is source-dependent. (2) For each $\rho\in T_1$, either the source of $\rho$ is fresh, or $\rho$ has a premise of the
form $t\xrightarrow{a}t'$ or $tP$, where $t\in \mathbb{T}(\Sigma_0)$, all variables in $t$ occur in the source of $\rho$ and $t'$, $a$ or $P$ is fresh.
\end{theorem}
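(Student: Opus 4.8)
The plan is to establish the two inclusions that together constitute conservativity: first, that every transition $t\xrightarrow{a}t'$ or predicate $tP$ with $t\in\mathcal{T}(\Sigma_0)$ provable from $T_0$ is provable from $T_0\oplus T_1$, and second, the converse. Since both $T_0$ and $T_0\oplus T_1$ are positive after reduction, I would first pass to their positive reductions, which generate the same LTSs; this lets me reason entirely with ordinary well-founded proof trees over positive premises and avoid the three-valued subtleties of negative premises. In the reduced positive system the $\Sigma_0$-rules are common to both specifications, so the forward inclusion is the easy direction: a proof tree using only shared $\Sigma_0$-rules is a proof tree in both. The genuine content lies in the converse.

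For the converse I would argue by induction on the depth of a proof tree $\pi$ in $T_0\oplus T_1$, strengthening the statement to: if $\pi$ concludes $p\xrightarrow{a}p'$ (resp. $pP$) with $p\in\mathcal{T}(\Sigma_0)$, then $a$ (resp. $P$) is not fresh, $p'\in\mathcal{T}(\Sigma_0)$, and the transition is derivable from $T_0$. Let $\rho$ be the last rule applied under a closed substitution $\sigma$ with $\sigma(\mathrm{source}(\rho))=p$. The crux is a case analysis that rules out $\rho\in T_1$ using both sub-cases of condition (2). If the source of $\rho$ is fresh, then $p=\sigma(\mathrm{source}(\rho))$ contains a symbol of $\Sigma_1\setminus\Sigma_0$, contradicting $p\in\mathcal{T}(\Sigma_0)$. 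Otherwise $\rho$ has a premise $s\xrightarrow{b}s'$ or $sP'$ with $s\in\mathbb{T}(\Sigma_0)$, all variables of $s$ occurring in the source, and $s'$, $b$ or $P'$ fresh; since those variables are instantiated by $\sigma$ to subterms of $p\in\mathcal{T}(\Sigma_0)$, we get $\sigma(s)\in\mathcal{T}(\Sigma_0)$, so the induction hypothesis applies to this shorter subproof and forces $b$ (resp. $P'$) non-fresh and $\sigma(s')\in\mathcal{T}(\Sigma_0)$, directly contradicting the freshness of $b$, $P'$, or $s'$. Hence $\rho\in T_0$.

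For the remaining case $\rho\in T_0$, I would invoke source-dependency of $T_0$ in a nested inner induction that follows the inductive definition of source-dependent variables. The source variables are mapped by $\sigma$ to subterms of $p$, hence into $\mathcal{T}(\Sigma_0)$; and whenever a premise $s\xrightarrow{b}s'$ has all variables of $s$ already known to be mapped into $\mathcal{T}(\Sigma_0)$, so that $\sigma(s)\in\mathcal{T}(\Sigma_0)$, the outer induction hypothesis yields $\sigma(s')\in\mathcal{T}(\Sigma_0)$ together with $T_0$-derivability of that premise, propagating the property to the variables of $s'$. Because $T_0$ is source-dependent, every variable of $\rho$ is reached this way, so $\sigma$ maps all of them into $\mathcal{T}(\Sigma_0)$; in particular $p'=\sigma(\mathrm{target}(\rho))\in\mathcal{T}(\Sigma_0)$, the label $a$ belongs to $T_0$ and is non-fresh, and every premise is $T_0$-derivable. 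Applying $\rho$ under $\sigma$ then exhibits the conclusion as a $T_0$-proof, closing the induction.

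I expect the main obstacle to be the bookkeeping around negative premises: making the ``positive after reduction'' passage precise and verifying that it preserves source-dependency of $T_0$ and the freshness discipline of condition (2) on $T_1$, so that the clean positive-TSS induction above is genuinely applicable to the reduced systems. The structural induction itself is routine once these hypotheses survive the reduction; the delicate point is that the reduction must not create rules that introduce fresh labels, fresh targets, or fresh predicates on $\Sigma_0$-sourced derivations, nor disturb the agreement of the two reductions on $\Sigma_0$-transitions that underlies the forward inclusion.
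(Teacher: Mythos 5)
The paper does not prove this theorem at all: it is stated in the Backgrounds chapter as an imported preliminary (the standard conservative-extension meta-theorem of structural operational semantics, due to Verhoef and Fokkink; cf.\ the cited process-algebra textbook), and the paper only ever invokes it as a black box in later conservativity arguments. So there is no in-paper proof to compare against, and your proposal must stand on its own. On its merits, it is essentially the standard published proof and its core is sound: the main induction on proof trees, the two sub-cases eliminating rules of $T_1$ (fresh source contradicting $p\in\mathcal{T}(\Sigma_0)$; the distinguished $\Sigma_0$-premise whose instantiation lands in $\mathcal{T}(\Sigma_0)$ and whose fresh label, target, or predicate then contradicts the induction hypothesis), and the inner induction along source-dependency for rules of $T_0$ are all correctly arranged. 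The one place where your sketch is slightly out of order is the claim that the forward inclusion is "the easy direction" done first: with negative premises present, a rule of $T_0$ can be disabled in $T_0\oplus T_1$ precisely when the extension adds a transition refuting one of its negative premises, so the agreement of the two reductions on $\Sigma_0$-sourced transitions (hence the forward inclusion) actually depends on first establishing the backward inclusion. You do flag this dependency in your final paragraph, but the proof should be restructured so that the no-new-transitions direction is proved first and the forward direction is derived from it; as written the first paragraph quietly assumes what the last paragraph admits still needs proving.
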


\subsection{Proof Techniques}\label{PT}

In this subsection, we introduce the concepts and conclusions about elimination, which is very important in the proof of completeness theorem.

\begin{definition}[Elimination property]
Let a process algebra with a defined set of basic terms as a subset of the set of closed terms over the process algebra. Then the process algebra has the elimination to basic terms
property if for every closed term $s$ of the algebra, there exists a basic term $t$ of the algebra such that the algebra$\vdash s=t$.
\end{definition}

\begin{definition}[Strongly normalizing]
A term $s_0$ is called strongly normalizing if does not an infinite series of reductions beginning in $s_0$.
\end{definition}

\begin{definition}
We write $s>_{lpo} t$ if $s\rightarrow^+ t$ where $\rightarrow^+$ is the transitive closure of the reduction relation defined by the transition rules of an algebra.
\end{definition}

\begin{theorem}[Strong normalization]\label{SN}
Let a term rewriting system (TRS) with finitely many rewriting rules and let $>$ be a well-founded ordering on the signature of the corresponding algebra. If $s>_{lpo} t$ for each
rewriting rule $s\rightarrow t$ in the TRS, then the term rewriting system is strongly normalizing.
\end{theorem}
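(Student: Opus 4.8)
The plan is to treat $>_{lpo}$ as a well-founded \emph{reduction order} --- a strict partial order on terms that is closed under contexts and substitutions --- and to show that compatibility of every rewrite rule with this order forces each single rewrite step to be a strict $>_{lpo}$-decrease, so that an infinite reduction would produce an infinite descending $>_{lpo}$-chain, contradicting well-foundedness. The overall skeleton is: (i) verify the structural closure properties of $>_{lpo}$; (ii) establish that $>_{lpo}$ is well-founded; and (iii) lift rule-compatibility to step-compatibility and invoke well-foundedness to rule out infinite reductions, which is exactly the negation of the existence of an infinite reduction series in the Definition of strongly normalizing.

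First I would fix the well-founded precedence $>$ on the signature and record the defining clauses of $>_{lpo}$, then prove by induction on term structure the two indispensable closure properties: \emph{stability under substitution}, i.e.\ $s >_{lpo} t$ implies $s\sigma >_{lpo} t\sigma$ for every substitution $\sigma$; and \emph{monotonicity}, i.e.\ $s >_{lpo} t$ implies $C[s] >_{lpo} C[t]$ for every context $C$. Along the way I would also record the subterm property, that $f(t_1,\dots,t_n) >_{lpo} t_i$, since it is used repeatedly and is immediate from the definition. Each of these is a routine structural induction following the recursive clauses of the ordering, so I would not dwell on the case analysis.

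The hard part will be step (ii), the well-foundedness of $>_{lpo}$. The cleanest route is a \emph{minimal bad sequence} argument (equivalently, an appeal to Kruskal's tree theorem): assume for contradiction an infinite descending chain $t_0 >_{lpo} t_1 >_{lpo} \cdots$, choose such a chain that is minimal with respect to term size at each position, and then analyse the outermost function symbols. Since the precedence $>$ is well founded, the top symbols cannot strictly decrease infinitely often, so infinitely many steps must compare equal top symbols and recurse into the arguments; the homeomorphic-embedding well-quasi-order machinery then extracts from the minimal chain a strictly smaller bad chain, contradicting minimality. This is the one place where finiteness of the rule set and well-foundedness of $>$ are genuinely consumed, and it is where all the technical weight of the theorem sits.

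Finally, for step (iii), I would combine the pieces. Each rewrite step rewrites a redex $s\sigma$ to $t\sigma$ inside a context $C$, where $s\rightarrow t$ is one of the finitely many rules; by hypothesis $s >_{lpo} t$, so stability under substitution gives $s\sigma >_{lpo} t\sigma$, and monotonicity gives $C[s\sigma] >_{lpo} C[t\sigma]$. Hence every reduction step is a strict $>_{lpo}$-decrease. An infinite reduction beginning at any $s_0$ would therefore yield an infinite descending $>_{lpo}$-chain, which step (ii) forbids; so no term admits an infinite series of reductions, which is precisely the assertion that the term rewriting system is strongly normalizing.
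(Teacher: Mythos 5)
The paper does not actually prove this statement: Theorem \ref{SN} appears in the Backgrounds chapter as a quoted, classical result from the term rewriting literature (the termination theorem for recursive/lexicographic path orderings, essentially due to Dershowitz), and no proof environment follows it --- it is only ever \emph{used}, in the various elimination theorems. So there is no proof of record to compare yours against, and your sketch should be judged on its own merits. Judged that way, it is the standard argument and the decomposition is correct: (i) closure of $>_{lpo}$ under substitutions and contexts together with the subterm property, (ii) well-foundedness of $>_{lpo}$ given a well-founded precedence, and (iii) lifting rule compatibility to step compatibility so that an infinite reduction would give an infinite descending $>_{lpo}$-chain. The minimal bad sequence / Kruskal route for (ii) is a legitimate way to carry the technical weight, though a direct argument via computability predicates (Buchholz-style) also works and avoids the well-quasi-order machinery.

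One correction of emphasis: finiteness of the set of rewriting rules is not what is consumed in step (ii). What the Kruskal/minimal-bad-sequence argument needs is that the \emph{signature} is finite (or more generally that the precedence is a well-quasi-order), so that homeomorphic embedding is a wqo and the extracted subsequence contradicts minimality; the hypothesis that there are finitely many rules plays no role in the well-foundedness of $>_{lpo}$ and is here essentially part of the ambient definition of the TRS. If you make that adjustment --- and state explicitly that every rewrite step has the form $C[s\sigma]\rightarrow C[t\sigma]$ for some rule $s\rightarrow t$, so that (i) applies --- the outline is sound and complete.
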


\subsection{Truly Concurrent Process Algebra -- APTC}

APTC eliminates the differences of structures of transition system, event structure, etc, and discusses their behavioral equivalences. It considers that there are two kinds of causality
relations: the chronological order modeled by the sequential composition and the causal order between different parallel branches modeled by the communication merge. It also considers
that there exist two kinds of confliction relations: the structural confliction modeled by the alternative composition and the conflictions in different parallel branches which should
be eliminated. Based on conservative extension, there are four modules in APTC: BATC (Basic Algebra for True Concurrency), APTC (Algebra for Parallelism in True Concurrency), recursion
and abstraction.

\subsubsection{Basic Algebra for True Concurrency}

BATC has sequential composition $\cdot$ and alternative composition $+$ to capture the chronological ordered causality and the structural confliction. The constants are ranged over $A$,
the set of atomic actions. The algebraic laws on $\cdot$ and $+$ are sound and complete modulo truly concurrent bisimulation equivalences (including pomset bisimulation, step
bisimulation, hp-bisimulation and hhp-bisimulation).

\begin{definition}[Prime event structure with silent event]\label{PES}
Let $\Lambda$ be a fixed set of labels, ranged over $a,b,c,\cdots$ and $\tau$. A ($\Lambda$-labelled) prime event structure with silent event $\tau$ is a tuple
$\mathcal{E}=\langle \mathbb{E}, \leq, \sharp, \lambda\rangle$, where $\mathbb{E}$ is a denumerable set of events, including the silent event $\tau$. Let
$\hat{\mathbb{E}}=\mathbb{E}\backslash\{\tau\}$, exactly excluding $\tau$, it is obvious that $\hat{\tau^*}=\epsilon$, where $\epsilon$ is the empty event.
Let $\lambda:\mathbb{E}\rightarrow\Lambda$ be a labelling function and let $\lambda(\tau)=\tau$. And $\leq$, $\sharp$ are binary relations on $\mathbb{E}$,
called causality and conflict respectively, such that:

\begin{enumerate}
  \item $\leq$ is a partial order and $\lceil e \rceil = \{e'\in \mathbb{E}|e'\leq e\}$ is finite for all $e\in \mathbb{E}$. It is easy to see that
  $e\leq\tau^*\leq e'=e\leq\tau\leq\cdots\leq\tau\leq e'$, then $e\leq e'$.
  \item $\sharp$ is irreflexive, symmetric and hereditary with respect to $\leq$, that is, for all $e,e',e''\in \mathbb{E}$, if $e\sharp e'\leq e''$, then $e\sharp e''$.
\end{enumerate}

Then, the concepts of consistency and concurrency can be drawn from the above definition:

\begin{enumerate}
  \item $e,e'\in \mathbb{E}$ are consistent, denoted as $e\frown e'$, if $\neg(e\sharp e')$. A subset $X\subseteq \mathbb{E}$ is called consistent, if $e\frown e'$ for all
  $e,e'\in X$.
  \item $e,e'\in \mathbb{E}$ are concurrent, denoted as $e\parallel e'$, if $\neg(e\leq e')$, $\neg(e'\leq e)$, and $\neg(e\sharp e')$.
\end{enumerate}
\end{definition}

\begin{definition}[Configuration]
Let $\mathcal{E}$ be a PES. A (finite) configuration in $\mathcal{E}$ is a (finite) consistent subset of events $C\subseteq \mathcal{E}$, closed with respect to causality
(i.e. $\lceil C\rceil=C$). The set of finite configurations of $\mathcal{E}$ is denoted by $\mathcal{C}(\mathcal{E})$. We let $\hat{C}=C\backslash\{\tau\}$.
\end{definition}

A consistent subset of $X\subseteq \mathbb{E}$ of events can be seen as a pomset. Given $X, Y\subseteq \mathbb{E}$, $\hat{X}\sim \hat{Y}$ if $\hat{X}$ and $\hat{Y}$ are
isomorphic as pomsets. In the following of the paper, we say $C_1\sim C_2$, we mean $\hat{C_1}\sim\hat{C_2}$.

\begin{definition}[Pomset transitions and step]
Let $\mathcal{E}$ be a PES and let $C\in\mathcal{C}(\mathcal{E})$, and $\emptyset\neq X\subseteq \mathbb{E}$, if $C\cap X=\emptyset$ and $C'=C\cup X\in\mathcal{C}(\mathcal{E})$,
then $C\xrightarrow{X} C'$ is called a pomset transition from $C$ to $C'$. When the events in $X$ are pairwise concurrent, we say that $C\xrightarrow{X}C'$ is a step.
\end{definition}

\begin{definition}[Pomset, step bisimulation]\label{PSB}
Let $\mathcal{E}_1$, $\mathcal{E}_2$ be PESs. A pomset bisimulation is a relation $R\subseteq\mathcal{C}(\mathcal{E}_1)\times\mathcal{C}(\mathcal{E}_2)$, such that if
$(C_1,C_2)\in R$, and $C_1\xrightarrow{X_1}C_1'$ then $C_2\xrightarrow{X_2}C_2'$, with $X_1\subseteq \mathbb{E}_1$, $X_2\subseteq \mathbb{E}_2$, $X_1\sim X_2$ and $(C_1',C_2')\in R$,
and vice-versa. We say that $\mathcal{E}_1$, $\mathcal{E}_2$ are pomset bisimilar, written $\mathcal{E}_1\sim_p\mathcal{E}_2$, if there exists a pomset bisimulation $R$, such that
$(\emptyset,\emptyset)\in R$. By replacing pomset transitions with steps, we can get the definition of step bisimulation. When PESs $\mathcal{E}_1$ and $\mathcal{E}_2$ are step
bisimilar, we write $\mathcal{E}_1\sim_s\mathcal{E}_2$.
\end{definition}

\begin{definition}[Posetal product]
Given two PESs $\mathcal{E}_1$, $\mathcal{E}_2$, the posetal product of their configurations, denoted $\mathcal{C}(\mathcal{E}_1)\overline{\times}\mathcal{C}(\mathcal{E}_2)$,
is defined as

$$\{(C_1,f,C_2)|C_1\in\mathcal{C}(\mathcal{E}_1),C_2\in\mathcal{C}(\mathcal{E}_2),f:C_1\rightarrow C_2 \textrm{ isomorphism}\}.$$

A subset $R\subseteq\mathcal{C}(\mathcal{E}_1)\overline{\times}\mathcal{C}(\mathcal{E}_2)$ is called a posetal relation. We say that $R$ is downward closed when for any
$(C_1,f,C_2),(C_1',f',C_2')\in \mathcal{C}(\mathcal{E}_1)\overline{\times}\mathcal{C}(\mathcal{E}_2)$, if $(C_1,f,C_2)\subseteq (C_1',f',C_2')$ pointwise and $(C_1',f',C_2')\in R$,
then $(C_1,f,C_2)\in R$.

For $f:X_1\rightarrow X_2$, we define $f[x_1\mapsto x_2]:X_1\cup\{x_1\}\rightarrow X_2\cup\{x_2\}$, $z\in X_1\cup\{x_1\}$,(1)$f[x_1\mapsto x_2](z)=
x_2$,if $z=x_1$;(2)$f[x_1\mapsto x_2](z)=f(z)$, otherwise. Where $X_1\subseteq \mathbb{E}_1$, $X_2\subseteq \mathbb{E}_2$, $x_1\in \mathbb{E}_1$, $x_2\in \mathbb{E}_2$.
\end{definition}

\begin{definition}[(Hereditary) history-preserving bisimulation]\label{HHPB}
A history-preserving (hp-) bisimulation is a posetal relation $R\subseteq\mathcal{C}(\mathcal{E}_1)\overline{\times}\mathcal{C}(\mathcal{E}_2)$ such that if $(C_1,f,C_2)\in R$,
and $C_1\xrightarrow{e_1} C_1'$, then $C_2\xrightarrow{e_2} C_2'$, with $(C_1',f[e_1\mapsto e_2],C_2')\in R$, and vice-versa. $\mathcal{E}_1,\mathcal{E}_2$ are history-preserving
(hp-)bisimilar and are written $\mathcal{E}_1\sim_{hp}\mathcal{E}_2$ if there exists a hp-bisimulation $R$ such that $(\emptyset,\emptyset,\emptyset)\in R$.

A hereditary history-preserving (hhp-)bisimulation is a downward closed hp-bisimulation. $\mathcal{E}_1,\mathcal{E}_2$ are hereditary history-preserving (hhp-)bisimilar and are
written $\mathcal{E}_1\sim_{hhp}\mathcal{E}_2$.
\end{definition}

In the following, let $e_1, e_2, e_1', e_2'\in \mathbb{E}$, and let variables $x,y,z$ range over the set of terms for true concurrency, $p,q,s$ range over the set of closed terms.
The set of axioms of BATC consists of the laws given in Table \ref{AxiomsForBATC}.

\begin{center}
    \begin{table}
        \begin{tabular}{@{}ll@{}}
            \hline No. &Axiom\\
            $A1$ & $x+ y = y+ x$\\
            $A2$ & $(x+ y)+ z = x+ (y+ z)$\\
            $A3$ & $x+ x = x$\\
            $A4$ & $(x+ y)\cdot z = x\cdot z + y\cdot z$\\
            $A5$ & $(x\cdot y)\cdot z = x\cdot(y\cdot z)$\\
        \end{tabular}
        \caption{Axioms of BATC}
        \label{AxiomsForBATC}
    \end{table}
\end{center}

\begin{definition}[Basic terms of $BATC$]
The set of basic terms of $BATC$, $\mathcal{B}(BATC)$, is inductively defined as follows:
\begin{enumerate}
  \item $\mathbb{E}\subset\mathcal{B}(BATC)$;
  \item if $e\in \mathbb{E}, t\in\mathcal{B}(BATC)$ then $e\cdot t\in\mathcal{B}(BATC)$;
  \item if $t,s\in\mathcal{B}(BATC)$ then $t+ s\in\mathcal{B}(BATC)$.
\end{enumerate}
\end{definition}

\begin{theorem}[Elimination theorem of $BATC$]
Let $p$ be a closed $BATC$ term. Then there is a basic $BATC$ term $q$ such that $BATC\vdash p=q$.
\end{theorem}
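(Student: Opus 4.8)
The plan is to turn the axioms into a terminating term rewriting system and then characterise its normal forms as basic terms. First I would orient the axioms $A4$ and $A5$ from left to right as rewriting rules
\[(x+y)\cdot z \rightarrow x\cdot z + y\cdot z, \qquad (x\cdot y)\cdot z \rightarrow x\cdot(y\cdot z).\]
The remaining axioms $A1$--$A3$ concern only the structure of the alternative composition and need not be oriented: since clause (3) of the definition of $\mathcal{B}(BATC)$ permits an arbitrary sum of basic terms to be basic, no normalisation of $+$ is required for the argument.

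Next I would show that this rewriting system is strongly normalising by appealing to Theorem \ref{SN}. I would take the precedence $\cdot > +$ on the signature and the induced lexicographic path ordering $>_{lpo}$, and check that $s >_{lpo} t$ holds for each rule. For $A5$ both sides have root $\cdot$, the left arguments satisfy $x\cdot y >_{lpo} x$, and $(x\cdot y)\cdot z$ also dominates the subterm $y\cdot z$; for $A4$ the root $\cdot$ of the left-hand side exceeds the root $+$ of the right-hand side, and $(x+y)\cdot z >_{lpo} x\cdot z$ and $(x+y)\cdot z >_{lpo} y\cdot z$ both hold since $x+y >_{lpo} x$ and $x+y >_{lpo} y$. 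Theorem \ref{SN} then guarantees that every closed $BATC$ term $p$ rewrites in finitely many steps to a normal form $q$. Because each rewrite step is a substitution instance of $A4$ or $A5$, we obtain $BATC\vdash p=q$, which settles the soundness half of the claim.

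It then remains to identify the normal forms, and I would do this by structural induction on a closed normal form $q$, proving $q\in\mathcal{B}(BATC)$. If $q$ is an atom $e\in\mathbb{E}$, it is basic by clause (1). If $q=s+t$, then $s$ and $t$ are themselves normal forms, hence basic by the induction hypothesis, so $q$ is basic by clause (3). If $q=s\cdot t$, the crucial point is that the absence of $A4$- and $A5$-redexes forces the left factor $s$ to be neither a sum nor a product, so $s$ must be an atom $e$; since $t$ is again a normal form and therefore basic by induction, $q=e\cdot t$ is basic by clause (2).

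The main obstacle I anticipate is precisely this last characterisation, namely the claim that a sequential composition in normal form must have an atomic left factor. This is where the two rules are needed in tandem: $A5$ excludes a nested product on the left and $A4$ excludes a sum on the left, leaving an atom as the only possibility. By contrast, the strong-normalisation verification is routine once the precedence is fixed, and soundness is immediate because every reduction step is an instance of an axiom of $BATC$.
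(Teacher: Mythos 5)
Your proposal is correct and follows essentially the same route the paper takes (the paper states this theorem without proof in the background section, but proves the directly analogous elimination theorem for $BAPTC$ by exactly this method): orient $A4$ and $A5$ as rewrite rules, establish strong normalisation via Theorem \ref{SN} with the precedence $\cdot > +$ and lexicographical status on the first argument, and then show normal forms are basic by arguing that a non-atomic left factor of $\cdot$ would leave an $A4$- or $A5$-redex. The only cosmetic difference is that the paper also includes $x+x\rightarrow x$ in its rewrite system and phrases the normal-form step as a minimal-non-basic-subterm contradiction rather than a direct structural induction, neither of which changes the substance.
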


We give the operational transition rules of operators $\cdot$ and $+$ as Table \ref{TRForBATC} shows. And the predicate $\xrightarrow{e}\surd$ represents successful termination after
execution of the event $e$.

\begin{center}
    \begin{table}
        $$\frac{}{e\xrightarrow{e}\surd}$$
        $$\frac{x\xrightarrow{e}\surd}{x+ y\xrightarrow{e}\surd} \quad\frac{x\xrightarrow{e}x'}{x+ y\xrightarrow{e}x'} \quad\frac{y\xrightarrow{e}\surd}{x+ y\xrightarrow{e}\surd}
        \quad\frac{y\xrightarrow{e}y'}{x+ y\xrightarrow{e}y'}$$
        $$\frac{x\xrightarrow{e}\surd}{x\cdot y\xrightarrow{e} y} \quad\frac{x\xrightarrow{e}x'}{x\cdot y\xrightarrow{e}x'\cdot y}$$
        \caption{Transition rules of BATC}
        \label{TRForBATC}
    \end{table}
\end{center}

\begin{theorem}[Congruence of $BATC$ with respect to truly concurrent bisimulation equivalences]
Truly concurrent bisimulation equivalences $\sim_{p}$, $\sim_s$, $\sim_{hp}$ and $\sim_{hhp}$ are all congruences with respect to $BATC$.
\end{theorem}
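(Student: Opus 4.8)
The plan is to prove congruence by establishing, for each of the four equivalences $\sim_p$, $\sim_s$, $\sim_{hp}$, $\sim_{hhp}$ and for each of the two BATC operators $+$ and $\cdot$, that relatedness of the arguments forces relatedness of the composed terms; since $\{+,\cdot\}$ is the entire signature of BATC, this is exactly what congruence requires. Concretely, I would fix arguments with $x_1\sim_\star x_2$ and $y_1\sim_\star y_2$, take witnessing (posetal) bisimulations, and from them build an explicit candidate relation for $x_1+y_1$ versus $x_2+y_2$ and for $x_1\cdot y_1$ versus $x_2\cdot y_2$, then verify the transfer conditions by case analysis on the last transition rule of Table \ref{TRForBATC} applied.

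First I would treat $\sim_p$. Let $R_1$ witness $x_1\sim_p x_2$ and $R_2$ witness $y_1\sim_p y_2$. For the alternative composition I would take $R_+=\{(x_1+y_1,x_2+y_2)\}\cup R_1\cup R_2$: by the four $+$-rules any move of $x_1+y_1$ is a move of $x_1$ or of $y_1$ (ending in $\surd$ or in a continuation), which $R_1$ or $R_2$ matches, and the reached pairs already lie in $R_1\cup R_2$; the reverse direction is symmetric. For the sequential composition I would take $R_\cdot=\{(x_1'\cdot y_1,x_2'\cdot y_2):(x_1',x_2')\in R_1\}\cup R_2$. Case analysis on the two $\cdot$-rules: a transition $x_1\xrightarrow{e}x_1'$ lifts to $x_1\cdot y_1\xrightarrow{e}x_1'\cdot y_1$, matched by $x_2\cdot y_2\xrightarrow{e}x_2'\cdot y_2$ inside the first block of $R_\cdot$; a termination $x_1\xrightarrow{e}\surd$ yields $x_1\cdot y_1\xrightarrow{e}y_1$, where the handover to $y_1$ versus $y_2$ is covered by $R_2$. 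The label matching $X_1\sim X_2$ demanded by Definition \ref{PSB} is inherited from $R_1$ and $R_2$, since the BATC rules copy the executed label unchanged.

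The step case $\sim_s$ is identical: step transitions are the restriction of pomset transitions to pairwise-concurrent $X$, so the same relations $R_+$ and $R_\cdot$ work once one observes that the concurrency constraint on $X$ is a property of the executed events and is preserved under matching. For $\sim_{hp}$ I would carry the order-isomorphism along: from hp-bisimulations on the arguments I build a posetal relation on $\mathcal{C}(\mathcal{E}_1)\overline{\times}\mathcal{C}(\mathcal{E}_2)$ whose triples are produced by the same $+$- and $\cdot$-constructions, and when a single event $e_1$ is answered by $e_2$ I extend the isomorphism as $f[e_1\mapsto e_2]$ exactly as in Definition \ref{HHPB}; the point to check is that this extension is again an order isomorphism, which for $\cdot$ uses that every event contributed by the first factor causally precedes every event of the second, matching the chronological causality that $\cdot$ models.

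The main obstacle is the $\sim_{hhp}$ case, where the candidate posetal relation must in addition be downward closed. I would prove this by showing the construction preserves downward closure: if a triple built from the argument relations sits pointwise above another triple, the smaller triple decomposes into pointwise-smaller triples of the argument relations (splitting a sub-configuration of $x_1\cdot y_1$ into its $x_1$-part and its $y_1$-part along the termination point), and downward closure of the witnessing hhp-bisimulations on the arguments then returns those pieces to the relation. Verifying that this splitting is well defined --- that a downward-closed sub-configuration respects the boundary between the two factors of a sequential composition --- is the delicate step, and it is where the hereditary, global nature of hhp-bisimulation makes the argument genuinely more than a routine rule-by-rule check.
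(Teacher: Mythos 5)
Your overall approach is the standard one, and it is sound; but be aware that the paper itself offers nothing to compare it against. This theorem sits in the background chapter, quoted from the earlier APTC work without proof, and every analogous congruence theorem that the paper does state in its own development (e.g.\ congruence of $BAPTC$ and $APPTC$ with respect to $\sim_{pp}$, $\sim_{ps}$, $\sim_{php}$, $\sim_{phhp}$) is disposed of with ``the proof is quite trivial and we omit it.'' The paper's implicit recipe is exactly your opening paragraph --- reduce congruence to preservation under $+$ and $\cdot$ and check the transfer conditions --- so you are not taking a different route; you are simply carrying out the work the paper declines to write down. Your candidate relations $R_+$ and $R_\cdot$ are the right ones, and your identification of downward closure in the $\sim_{hhp}$ case as the only genuinely non-routine point is accurate for a signature without parallel composition.

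One caution on the $\sim_p$ case for sequential composition. You organize the verification as a ``case analysis on the last transition rule of Table \ref{TRForBATC} applied,'' but the rules in that table are single-event rules, whereas a pomset transition of $x_1\cdot y_1$ may carry a label $X$ that straddles the seam: part of $X$ completes the first factor and part of it already belongs to the second (e.g.\ $a\cdot b$ admits the pomset transition labelled $\{a,b\}$ with $a\leq b$ from the empty configuration). Such a transition is not an instance of either $\cdot$-rule alone, so the literal rule-by-rule dispatch does not cover it. The fix is the same splitting you invoke for hereditary downward closure: decompose $X$ into its $x_1$-part and its $y_1$-part, match each piece through $R_1$ and $R_2$ respectively, and recombine; the required pomset isomorphism $X_1\sim X_2$ then follows because every event contributed by the first factor causally precedes every event contributed by the second, which is the same ordering fact you already use in the $\sim_{hp}$ case. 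With that adjustment made explicit, the argument goes through for all four equivalences.
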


\begin{theorem}[Soundness of BATC modulo truly concurrent bisimulation equivalences]\label{SBATC}
The axiomatization of BATC is sound modulo truly concurrent bisimulation equivalences $\sim_{p}$, $\sim_{s}$, $\sim_{hp}$ and $\sim_{hhp}$. That is,

\begin{enumerate}
  \item let $x$ and $y$ be BATC terms. If BATC $\vdash x=y$, then $x\sim_{p} y$;
  \item let $x$ and $y$ be BATC terms. If BATC $\vdash x=y$, then $x\sim_{s} y$;
  \item let $x$ and $y$ be BATC terms. If BATC $\vdash x=y$, then $x\sim_{hp} y$;
  \item let $x$ and $y$ be BATC terms. If BATC $\vdash x=y$, then $x\sim_{hhp} y$.
\end{enumerate}

\end{theorem}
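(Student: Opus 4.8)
The plan is to exploit the structure of equational logic together with the congruence property already established. By definition, the relation $BATC\vdash x=y$ is the least congruence on open terms containing every instance of the axioms $A1$--$A5$ of Table \ref{AxiomsForBATC}. Since each of $\sim_p,\sim_s,\sim_{hp},\sim_{hhp}$ is an equivalence relation and, by the preceding congruence theorem, a congruence with respect to all operators of $BATC$, the set of pairs identified by each equivalence is itself closed under reflexivity, symmetry, transitivity, and substitution into contexts. Hence it suffices to verify that for every axiom $s=t$ we have $s\sim_p t$, $s\sim_s t$, $s\sim_{hp} t$ and $s\sim_{hhp} t$; the general statement then follows by induction on the length of a derivation of $x=y$, using congruence to push the equivalence through the surrounding context and transitivity to chain the individual steps.

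The core of the argument is therefore the axiom-by-axiom check against the operational rules of Table \ref{TRForBATC}. For the pomset and step cases I would, for each axiom $s=t$, exhibit an explicit relation $R$ on the configurations of the PESs associated to $s$ and $t$, containing $(\emptyset,\emptyset)$ and the identity on all configurations reachable after a common prefix, and verify the transfer condition in both directions. For $A1$, $A2$ and $A3$ the rules for $+$ show that both sides offer exactly the same initial moves (to $\surd$ or to a residual), after which the processes coincide, so $R$ is essentially the identity; for $A3$ the only point is that the duplicated summand contributes no new transitions. For $A5$ and the right distributivity $A4$, $(x+y)\cdot z=x\cdot z+y\cdot z$, I would trace how the rules for $\cdot$ unfold the sequential prefix: in $A4$ the first event resolves the choice $x+y$ and in both presentations leaves a residual that continues with $z$, so the reachable residuals match up to the already-established equivalence. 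Because $BATC$ has no parallel operator, every pomset transition decomposes into single-event transitions, so the step and pomset conditions reduce directly to the single-event analysis the rules provide.

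For the history-preserving cases the same relations must be upgraded to posetal relations $R\subseteq\mathcal{C}(\mathcal{E}_1)\overline{\times}\mathcal{C}(\mathcal{E}_2)$, so I would carry along the order-isomorphism $f$ between configurations and check that matching an event $e_1$ by the corresponding $e_2$ extends $f$ to $f[e_1\mapsto e_2]$ while preserving the causal order inherited from $\cdot$. For $\sim_{hhp}$ one must in addition confirm downward closure of the chosen relation; since the witnessing relations are essentially identities on isomorphic configurations, any pointwise-smaller triple is again of the same form, so downward closure holds automatically.

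The main obstacle I expect is bookkeeping rather than conceptual: correctly defining, for $A4$ and $A5$, the posetal relation and the isomorphism $f$ so that the causal structure imposed by sequential composition is matched on both sides, and then checking downward closure for $\sim_{hhp}$ in the $A4$ case, where the branching of $x+y$ interacts with the continuation $z$. All remaining verifications are routine once the relations are written down.
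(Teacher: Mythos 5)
Your proposal is correct and follows exactly the route this paper (which states Theorem \ref{SBATC} without proof, as recalled background from the cited APTC work) uses for every analogous soundness theorem it does prove, e.g.\ for $BAPTC$: since $\sim_{p}$, $\sim_{s}$, $\sim_{hp}$ and $\sim_{hhp}$ are equivalences and congruences with respect to the $BATC$ operators, soundness reduces to checking each axiom of Table \ref{AxiomsForBATC} against the transition rules of Table \ref{TRForBATC}. Your additional care about the posetal isomorphism $f$ and downward closure for $\sim_{hhp}$ in the $A4$/$A5$ cases is exactly the bookkeeping the paper leaves implicit, so there is nothing further to add.
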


\begin{theorem}[Completeness of BATC modulo truly concurrent bisimulation equivalences]\label{CBATC}
The axiomatization of BATC is complete modulo truly concurrent bisimulation equivalences $\sim_{p}$, $\sim_{s}$, $\sim_{hp}$ and $\sim_{hhp}$. That is,

\begin{enumerate}
  \item let $p$ and $q$ be closed BATC terms, if $p\sim_{p} q$ then $p=q$;
  \item let $p$ and $q$ be closed BATC terms, if $p\sim_{s} q$ then $p=q$;
  \item let $p$ and $q$ be closed BATC terms, if $p\sim_{hp} q$ then $p=q$;
  \item let $p$ and $q$ be closed BATC terms, if $p\sim_{hhp} q$ then $p=q$.
\end{enumerate}

\end{theorem}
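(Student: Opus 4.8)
The plan is to follow the standard two-stage completeness argument: first reduce arbitrary closed terms to a canonical shape, then show that this canonical shape is uniquely determined by the behaviour. The reduction is supplied by the Elimination theorem of $BATC$: given closed terms $p$ and $q$, there are basic terms $p'$ and $q'$ with $BATC\vdash p=p'$ and $BATC\vdash q=q'$. By soundness (Theorem \ref{SBATC}), provable equalities preserve each equivalence, so $p'\sim_\bullet q'$ whenever $p\sim_\bullet q$, for $\bullet\in\{p,s,hp,hhp\}$. It therefore suffices to prove the implication for basic terms.

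Next I would refine basic terms into normal forms. Using $A4$ and $A5$ one can rewrite every basic term as a sum $\sum_i t_i$ in which each summand $t_i$ is either a single atomic event $e_i$ or a prefixed term $e_i\cdot t_i'$ with $t_i'$ itself a normal form; strong normalization (Theorem \ref{SN}) guarantees that this rewriting terminates, and $A1$–$A3$ let me treat the summands as an unordered set without repetitions. The central lemma is then: if $n$ and $n'$ are normal forms with $n\sim_\bullet n'$, then $BATC\vdash n=n'$. I would prove this by induction on the total depth of $n$ and $n'$.

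For the inductive step, observe that the transition rules in Table \ref{TRForBATC} generate only single-event transitions $\xrightarrow{e}\surd$ and $\xrightarrow{e}t$. Consequently the outgoing transitions of a normal form are exactly these: each $e_i$-summand contributes $\xrightarrow{e_i}\surd$, and each $e_i\cdot t_i'$-summand contributes $\xrightarrow{e_i}t_i'$. Given $n\sim_\bullet n'$, every summand of $n$ must be matched by a transition of $n'$ with the same label and a bisimilar residual; by the induction hypothesis the residuals are provably equal, and the terminating summands are matched directly. Running the matching in both directions and invoking $A1$–$A3$ to reassemble the sums shows that $n$ and $n'$ have, up to provable equality, the same set of summands, hence $BATC\vdash n=n'$.

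The four equivalences can be treated uniformly: because every transition of a $BATC$ term carries a single event (there is no parallel operator to produce concurrent initial events), a pomset or step transition is necessarily a singleton, so the pomset and step conditions collapse to label-matching of single events, while the order-isomorphism bookkeeping required by $\sim_{hp}$ and $\sim_{hhp}$ is automatically satisfied along these singleton moves and inherited by the residuals. The main obstacle I anticipate is the matching argument in the core lemma — specifically, keeping the correspondence between summands well-defined when several summands share an initial event but have pairwise non-bisimilar residuals, and verifying that idempotence ($A3$) absorbs duplicated summands without collapsing genuinely distinct behaviour; this is where care with the induction hypothesis, and with the downward-closure condition in the $\sim_{hhp}$ case, is required.
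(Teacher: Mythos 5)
Your proposal is correct and follows exactly the template this paper uses for its completeness theorems: the paper states Theorem \ref{CBATC} without proof (it is a background result imported from the cited APTC work), but its own proofs of the analogous results, e.g.\ Theorem \ref{CBAPTCPBE}, proceed precisely as you do --- elimination to basic terms, normal forms modulo $A1$--$A2$ (with $A3$ absorbing duplicates), an induction on the sizes of the normal forms matching $e$-summands and $e\cdot t$-summands via their single-event transitions, and then soundness to close the argument. Your observation that all four equivalences collapse to the same argument because every BATC transition carries a single event is also the reason the paper can repeat the identical proof verbatim for $\sim_{p}$, $\sim_{s}$, $\sim_{hp}$ and $\sim_{hhp}$.
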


Since hhp-bisimilarity is a downward closed hp-bisimilarity and can be downward closed to single atomic event, which implies bisimilarity. As Moller \cite{ILM} proven, there is not a
finite sound and complete axiomatization for parallelism $\parallel$ modulo bisimulation equivalence, so there is not a finite sound and complete axiomatization for parallelism
$\parallel$ modulo hhp-bisimulation equivalence either. Inspired by the way of left merge to modeling the full merge for bisimilarity, we introduce a left parallel composition
$\leftmerge$ to model the full parallelism $\parallel$ for hhp-bisimilarity.

In the following subsection, we add left parallel composition $\leftmerge$ to the whole theory. Because the resulting theory is similar to the former, we only list the significant
differences, and all proofs of the conclusions are left to the reader.

\subsubsection{$APTC$ with Left Parallel Composition}

We give the transition rules of APTC in Table \ref{TRForAPTC}, it is suitable for all truly concurrent behavioral equivalence, including pomset bisimulation, step bisimulation,
hp-bisimulation and hhp-bisimulation.

\begin{center}
    \begin{table}
        $$\frac{x\xrightarrow{e_1}\surd\quad y\xrightarrow{e_2}\surd}{x\parallel y\xrightarrow{\{e_1,e_2\}}\surd} \quad\frac{x\xrightarrow{e_1}x'\quad y\xrightarrow{e_2}\surd}{x\parallel y\xrightarrow{\{e_1,e_2\}}x'}$$
        $$\frac{x\xrightarrow{e_1}\surd\quad y\xrightarrow{e_2}y'}{x\parallel y\xrightarrow{\{e_1,e_2\}}y'} \quad\frac{x\xrightarrow{e_1}x'\quad y\xrightarrow{e_2}y'}{x\parallel y\xrightarrow{\{e_1,e_2\}}x'\between y'}$$
        $$\frac{x\xrightarrow{e_1}\surd\quad y\xrightarrow{e_2}\surd}{x\mid y\xrightarrow{\gamma(e_1,e_2)}\surd} \quad\frac{x\xrightarrow{e_1}x'\quad y\xrightarrow{e_2}\surd}{x\mid y\xrightarrow{\gamma(e_1,e_2)}x'}$$
        $$\frac{x\xrightarrow{e_1}\surd\quad y\xrightarrow{e_2}y'}{x\mid y\xrightarrow{\gamma(e_1,e_2)}y'} \quad\frac{x\xrightarrow{e_1}x'\quad y\xrightarrow{e_2}y'}{x\mid y\xrightarrow{\gamma(e_1,e_2)}x'\between y'}$$
        $$\frac{x\xrightarrow{e_1}\surd\quad (\sharp(e_1,e_2))}{\Theta(x)\xrightarrow{e_1}\surd} \quad\frac{x\xrightarrow{e_2}\surd\quad (\sharp(e_1,e_2))}{\Theta(x)\xrightarrow{e_2}\surd}$$
        $$\frac{x\xrightarrow{e_1}x'\quad (\sharp(e_1,e_2))}{\Theta(x)\xrightarrow{e_1}\Theta(x')} \quad\frac{x\xrightarrow{e_2}x'\quad (\sharp(e_1,e_2))}{\Theta(x)\xrightarrow{e_2}\Theta(x')}$$
        $$\frac{x\xrightarrow{e_1}\surd \quad y\nrightarrow^{e_2}\quad (\sharp(e_1,e_2))}{x\triangleleft y\xrightarrow{\tau}\surd}
        \quad\frac{x\xrightarrow{e_1}x' \quad y\nrightarrow^{e_2}\quad (\sharp(e_1,e_2))}{x\triangleleft y\xrightarrow{\tau}x'}$$
        $$\frac{x\xrightarrow{e_1}\surd \quad y\nrightarrow^{e_3}\quad (\sharp(e_1,e_2),e_2\leq e_3)}{x\triangleleft y\xrightarrow{e_1}\surd}
        \quad\frac{x\xrightarrow{e_1}x' \quad y\nrightarrow^{e_3}\quad (\sharp(e_1,e_2),e_2\leq e_3)}{x\triangleleft y\xrightarrow{e_1}x'}$$
        $$\frac{x\xrightarrow{e_3}\surd \quad y\nrightarrow^{e_2}\quad (\sharp(e_1,e_2),e_1\leq e_3)}{x\triangleleft y\xrightarrow{\tau}\surd}
        \quad\frac{x\xrightarrow{e_3}x' \quad y\nrightarrow^{e_2}\quad (\sharp(e_1,e_2),e_1\leq e_3)}{x\triangleleft y\xrightarrow{\tau}x'}$$
        $$\frac{x\xrightarrow{e}\surd}{\partial_H(x)\xrightarrow{e}\surd}\quad (e\notin H)\quad\quad\frac{x\xrightarrow{e}x'}{\partial_H(x)\xrightarrow{e}\partial_H(x')}\quad(e\notin H)$$
        \caption{Transition rules of APTC}
        \label{TRForAPTC}
    \end{table}
\end{center}

The transition rules of left parallel composition $\leftmerge$ are shown in Table \ref{TRForLeftParallel}. With a little abuse, we extend the causal order relation $\leq$ on
$\mathbb{E}$ to include the original partial order (denoted by $<$) and concurrency (denoted by $=$).

\begin{center}
    \begin{table}
        $$\frac{x\xrightarrow{e_1}\surd\quad y\xrightarrow{e_2}\surd \quad(e_1\leq e_2)}{x\leftmerge y\xrightarrow{\{e_1,e_2\}}\surd} \quad\frac{x\xrightarrow{e_1}x'\quad y\xrightarrow{e_2}\surd \quad(e_1\leq e_2)}{x\leftmerge y\xrightarrow{\{e_1,e_2\}}x'}$$
        $$\frac{x\xrightarrow{e_1}\surd\quad y\xrightarrow{e_2}y' \quad(e_1\leq e_2)}{x\leftmerge y\xrightarrow{\{e_1,e_2\}}y'} \quad\frac{x\xrightarrow{e_1}x'\quad y\xrightarrow{e_2}y' \quad(e_1\leq e_2)}{x\leftmerge y\xrightarrow{\{e_1,e_2\}}x'\between y'}$$
        \caption{Transition rules of left parallel operator $\leftmerge$}
        \label{TRForLeftParallel}
    \end{table}
\end{center}

The new axioms for parallelism are listed in Table \ref{AxiomsForLeftParallelism}.

\begin{center}
    \begin{table}
        \begin{tabular}{@{}ll@{}}
            \hline No. &Axiom\\
            $A6$ & $x+ \delta = x$\\
            $A7$ & $\delta\cdot x =\delta$\\
            $P1$ & $x\between y = x\parallel y + x\mid y$\\
            $P2$ & $x\parallel y = y \parallel x$\\
            $P3$ & $(x\parallel y)\parallel z = x\parallel (y\parallel z)$\\
            $P4$ & $x\parallel y = x\leftmerge y + y\leftmerge x$\\
            $P5$ & $(e_1\leq e_2)\quad e_1\leftmerge (e_2\cdot y) = (e_1\leftmerge e_2)\cdot y$\\
            $P6$ & $(e_1\leq e_2)\quad (e_1\cdot x)\leftmerge e_2 = (e_1\leftmerge e_2)\cdot x$\\
            $P7$ & $(e_1\leq e_2)\quad (e_1\cdot x)\leftmerge (e_2\cdot y) = (e_1\leftmerge e_2)\cdot (x\between y)$\\
            $P8$ & $(x+ y)\leftmerge z = (x\leftmerge z)+ (y\leftmerge z)$\\
            $P9$ & $\delta\leftmerge x = \delta$\\
            $C10$ & $e_1\mid e_2 = \gamma(e_1,e_2)$\\
            $C11$ & $e_1\mid (e_2\cdot y) = \gamma(e_1,e_2)\cdot y$\\
            $C12$ & $(e_1\cdot x)\mid e_2 = \gamma(e_1,e_2)\cdot x$\\
            $C13$ & $(e_1\cdot x)\mid (e_2\cdot y) = \gamma(e_1,e_2)\cdot (x\between y)$\\
            $C14$ & $(x+ y)\mid z = (x\mid z) + (y\mid z)$\\
            $C15$ & $x\mid (y+ z) = (x\mid y)+ (x\mid z)$\\
            $C16$ & $\delta\mid x = \delta$\\
            $C17$ & $x\mid\delta = \delta$\\
            $CE18$ & $\Theta(e) = e$\\
            $CE19$ & $\Theta(\delta) = \delta$\\
            $CE20$ & $\Theta(x+ y) = \Theta(x)\triangleleft y + \Theta(y)\triangleleft x$\\
            $CE21$ & $\Theta(x\cdot y)=\Theta(x)\cdot\Theta(y)$\\
            $CE22$ & $\Theta(x\leftmerge y) = ((\Theta(x)\triangleleft y)\leftmerge y)+ ((\Theta(y)\triangleleft x)\leftmerge x)$\\
            $CE23$ & $\Theta(x\mid y) = ((\Theta(x)\triangleleft y)\mid y)+ ((\Theta(y)\triangleleft x)\mid x)$\\
            $U24$ & $(\sharp(e_1,e_2))\quad e_1\triangleleft e_2 = \tau$\\
            $U25$ & $(\sharp(e_1,e_2),e_2\leq e_3)\quad e_1\triangleleft e_3 = e_1$\\
            $U26$ & $(\sharp(e_1,e_2),e_2\leq e_3)\quad e_3\triangleleft e_1 = \tau$\\
            $U27$ & $e\triangleleft \delta = e$\\
            $U28$ & $\delta \triangleleft e = \delta$\\
            $U29$ & $(x+ y)\triangleleft z = (x\triangleleft z)+ (y\triangleleft z)$\\
            $U30$ & $(x\cdot y)\triangleleft z = (x\triangleleft z)\cdot (y\triangleleft z)$\\
            $U31$ & $(x\leftmerge y)\triangleleft z = (x\triangleleft z)\leftmerge (y\triangleleft z)$\\
            $U32$ & $(x\mid y)\triangleleft z = (x\triangleleft z)\mid (y\triangleleft z)$\\
            $U33$ & $x\triangleleft (y+ z) = (x\triangleleft y)\triangleleft z$\\
            $U34$ & $x\triangleleft (y\cdot z)=(x\triangleleft y)\triangleleft z$\\
            $U35$ & $x\triangleleft (y\leftmerge z) = (x\triangleleft y)\triangleleft z$\\
            $U36$ & $x\triangleleft (y\mid z) = (x\triangleleft y)\triangleleft z$\\
        \end{tabular}
        \caption{Axioms of parallelism with left parallel composition}
        \label{AxiomsForLeftParallelism}
    \end{table}
\end{center}

\begin{definition}[Basic terms of $APTC$ with left parallel composition]
The set of basic terms of $APTC$, $\mathcal{B}(APTC)$, is inductively defined as follows:
\begin{enumerate}
  \item $\mathbb{E}\subset\mathcal{B}(APTC)$;
  \item if $e\in \mathbb{E}, t\in\mathcal{B}(APTC)$ then $e\cdot t\in\mathcal{B}(APTC)$;
  \item if $t,s\in\mathcal{B}(APTC)$ then $t+ s\in\mathcal{B}(APTC)$;
  \item if $t,s\in\mathcal{B}(APTC)$ then $t\leftmerge s\in\mathcal{B}(APTC)$.
\end{enumerate}
\end{definition}

\begin{theorem}[Generalization of the algebra for left parallelism with respect to $BATC$]
The algebra for left parallelism is a generalization of $BATC$.
\end{theorem}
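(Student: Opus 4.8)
The plan is to read ``generalization'' as ``conservative extension'' and to obtain it directly from Theorem~\ref{TCE}. I would take $T_0$ to be the TSS of $BATC$ (the rules of Table~\ref{TRForBATC} over the signature $\Sigma_0$ generated by $\mathbb{E}$, $\cdot$ and $+$) and $T_1$ to be the rules for the new operators in Tables~\ref{TRForAPTC} and~\ref{TRForLeftParallel}, so that $T_0\oplus T_1$ is the TSS of the algebra for left parallelism over a signature $\Sigma_1\supseteq\Sigma_0$. It then suffices to verify the two hypotheses of Theorem~\ref{TCE} together with its standing requirement that $T_0$ and $T_0\oplus T_1$ be positive after reduction.

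First I would check condition~(1), that $T_0$ is source-dependent. This is a finite inspection of the $BATC$ rules: the axiom $e\xrightarrow{e}\surd$ has no variables; in each $+$-rule and in $\frac{x\xrightarrow{e}\surd}{x\cdot y\xrightarrow{e}y}$ every variable of the target already occurs in the source; and in $\frac{x\xrightarrow{e}x'}{x\cdot y\xrightarrow{e}x'\cdot y}$ the source variable $x$ is source-dependent, its premise makes $x'$ source-dependent, and $y$ lies in the source, so the target $x'\cdot y$ is source-dependent. Hence every $BATC$ rule, and therefore $T_0$, is source-dependent.

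Next I would check condition~(2) for every rule $\rho\in T_1$. The key observation is that the source of each such rule is headed by one of the fresh function symbols $\parallel,\leftmerge,\mid,\Theta,\triangleleft,\partial_H\in\Sigma_1\setminus\Sigma_0$; that is, every new rule has a fresh source. Thus the first disjunct of condition~(2) is satisfied uniformly, and the second (premise-based) disjunct is never needed. With both conditions discharged, Theorem~\ref{TCE} yields that $T_0\oplus T_1$ is a conservative extension of $T_0$, which is exactly the claim that the algebra for left parallelism generalizes $BATC$.

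The one place demanding care --- and the step I expect to be the main obstacle --- is the requirement that $T_0$ and $T_0\oplus T_1$ be positive after reduction, since the rules for the unless operator $\triangleleft$ in Table~\ref{TRForAPTC} carry negative premises of the form $y\nrightarrow^{e}$. I would note that $T_0$ is trivially positive, and that in $T_0\oplus T_1$ the negative premises are stratified by the size of the premise terms (each such $y$ being a proper subterm of the source), so the system reduces to a positive one and has a well-defined associated LTS. This legitimizes the application of Theorem~\ref{TCE}; everything else is the routine bookkeeping that the excerpt explicitly leaves to the reader.
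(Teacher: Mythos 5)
Your proposal is correct and follows essentially the same route the paper takes for its analogous generalization theorems (e.g.\ the generalization of the algebra for parallelism with respect to $BAPTC$): invoke the conservative-extension theorem by checking that the $BATC$ rules are source-dependent and that every new rule has a source headed by a fresh operator among $\parallel,\leftmerge,\mid,\Theta,\triangleleft,\partial_H$. Your extra attention to the positive-after-reduction hypothesis for the negative premises of $\triangleleft$ is a point the paper glosses over, but it does not change the argument.
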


\begin{theorem}[Congruence theorem of $APTC$ with left parallel composition]
Truly concurrent bisimulation equivalences $\sim_{p}$, $\sim_s$, $\sim_{hp}$ and $\sim_{hhp}$ are all congruences with respect to $APTC$ with left parallel composition.
\end{theorem}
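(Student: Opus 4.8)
The plan is to prove the congruence property operator-by-operator and equivalence-by-equivalence. Since the congruence of $BATC$ with respect to $\sim_p$, $\sim_s$, $\sim_{hp}$ and $\sim_{hhp}$ is already available, it suffices to treat the operators introduced by $APTC$ with left parallel composition, namely the whole merge $\between$, the parallel composition $\parallel$, the communication merge $\mid$, the left parallel composition $\leftmerge$, the conflict-elimination operators $\Theta$ and $\triangleleft$, and the encapsulation $\partial_H$. For each such operator $f$ and each equivalence $\sim_\kappa$ with $\kappa\in\{p,s,hp,hhp\}$ I would show that whenever the arguments are related by $\sim_\kappa$, so are the compound terms. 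Because these operators are mutually recursive in their transition rules (for instance a $\parallel$-transition produces a target of the form $x'\between y'$), the witnessing relations for $\parallel$, $\leftmerge$, $\mid$ and $\between$ must be constructed and verified simultaneously rather than in isolation.

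For the pomset and step cases I would proceed by the explicit construction of a witnessing bisimulation. Fixing $\leftmerge$, suppose $s_1\sim_p t_1$ and $s_2\sim_p t_2$ witnessed by pomset bisimulations $R_1$ and $R_2$. I would take the candidate relation generated by all pairs $(s_1\leftmerge s_2, t_1\leftmerge t_2)$ together with the pairs needed to close it under the parallel family of operators (in particular under $\between$), and check the transfer property by a case analysis on the last rule applied, using Tables \ref{TRForAPTC} and \ref{TRForLeftParallel}. The key point is that a compound transition carries a set label $\{e_1,e_2\}$ (or $\gamma(e_1,e_2)$ in the case of $\mid$): a move of $s_1\leftmerge s_2$ decomposes into component moves of $s_1$ and $s_2$, which $R_1$ and $R_2$ match by moves with pomset-isomorphic labels, and reassembling these yields a matching move of $t_1\leftmerge t_2$ with an isomorphic set label and a related target. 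The side condition $e_1\leq e_2$ on the $\leftmerge$ rules is preserved because the matching preserves labels and hence the causal and concurrency data recorded by $\leq$. The step case is identical, restricted to the subcase in which the label consists of pairwise concurrent events.

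For the $hp$- and $hhp$- cases I would instead construct a posetal relation $R\subseteq\mathcal{C}(\mathcal{E}_1)\overline{\times}\mathcal{C}(\mathcal{E}_2)$. Starting from posetal relations witnessing $s_i\sim_{hp}t_i$, I would assemble an order-isomorphism $f$ on the combined configurations of the compound terms and verify that a single-event move $\xrightarrow{e_1}$ is matched by a move $\xrightarrow{e_2}$ with $f[e_1\mapsto e_2]$ still an isomorphism lying in $R$, again by case analysis on the applicable rule. For $\sim_{hhp}$ the additional requirement is that $R$ be downward closed; I would verify this by showing that the construction commutes with passing to sub-configurations, using that $R_1$ and $R_2$ are themselves downward closed.

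The main obstacle I expect is precisely the interaction between the parallel family of operators and hhp-bisimilarity. Because $\parallel$, $\leftmerge$, $\mid$ and $\between$ refer to one another in their targets, the four witnessing relations cannot be built independently but must be defined by a single mutual closure and shown to satisfy the transfer property together; and for $\sim_{hhp}$ one must simultaneously maintain downward closure of this combined posetal relation while respecting the order side conditions $e_1\leq e_2$ that distinguish $\leftmerge$ from $\parallel$. Handling the negative-premise and lookahead rules for $\triangleleft$ and $\Theta$, which test $y\nrightarrow^{e}$ and the conflict relation $\sharp$, is a secondary technical point: there one checks that related processes enable and disable the same events so that the side conditions transfer, but this becomes routine once the positive cases are settled.
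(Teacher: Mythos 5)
The paper never writes this proof out: the subsection recalling $APTC$ with left parallel composition explicitly states that ``all proofs of the conclusions are left to the reader,'' and where the paper does prove the analogous congruence theorems (for $BAPTC$ and $APPTC$ in the probabilistic setting) it reduces the claim to preservation of each equivalence by each operator and then omits the verification as trivial. Your outline is exactly that standard argument, filled in with the right technical observations: reduction to the newly added operators, a simultaneous (mutually closed) construction of the witnessing relations for $\between$, $\parallel$, $\leftmerge$ and $\mid$ because their transition targets refer to one another, case analysis on the rules of Tables \ref{TRForAPTC} and \ref{TRForLeftParallel}, downward closure for $\sim_{hhp}$, and transfer of the side conditions $e_1\leq e_2$, $\sharp(e_1,e_2)$ and the negative premises of $\triangleleft$ between related processes. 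This is consistent with the paper's (omitted) approach, so I have no correction to offer; the only cosmetic remark is that the congruence of $\partial_H$ is stated as a separate theorem in the paper, so you need not fold it into this one.
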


\begin{theorem}[Elimination theorem of parallelism with left parallel composition]
Let $p$ be a closed $APTC$ with left parallel composition term. Then there is a basic $APTC$ term $q$ such that $APTC\vdash p=q$.
\end{theorem}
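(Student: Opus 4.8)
The plan is to follow the standard term-rewriting route to an elimination result, exactly as for $BATC$ and for $APTC$ without left parallel composition, and to invoke Theorem \ref{SN}. First I would turn the axioms of Table \ref{AxiomsForBATC} and Table \ref{AxiomsForLeftParallelism} into a term rewriting system (TRS) by orienting each equation from left to right, with the exception of the commutativity and associativity laws $A1$, $A2$ and the idempotence law $A3$ of $+$, which do not terminate when oriented and are instead factored out by working modulo $AC$ of $+$. The remaining rules $A4$--$A7$, $P5$--$P9$, $C10$--$C17$, $CE18$--$CE23$, $U24$--$U36$ (together with the encapsulation axioms for $\partial_H$ of the standard $ACP$ shape) become the rewrite relation $\rightarrow$. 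Since each such step is an instance of an axiom, $APTC\vdash s=t$ whenever $s\rightarrow t$, so any normal form reached from a closed term $p$ is provably equal to $p$; the theorem then reduces to the two claims that the TRS is strongly normalizing and that every normal form is a basic $APTC$ term.

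For strong normalization I would apply Theorem \ref{SN}, i.e. exhibit a well-founded ordering $>$ on the signature and check $s>_{lpo}t$ for every rule. The encapsulation, priority and unless operators are the easy part: giving $\partial_H$ and $\Theta$ the highest precedence and then $\triangleleft$ makes $CE18$--$CE23$ and $U24$--$U36$ decreasing, provided $\triangleleft$ is given a right-to-left lexicographic status so that the rules $U33$--$U36$ of the shape $x\triangleleft(y\,\mathrm{op}\,z)=(x\triangleleft y)\triangleleft z$, which shrink the second argument while wrapping the first, still strictly decrease. The distributivity rules $A4$, $C14$, $C15$, $P8$, $U29$--$U32$ and the associativity rule $A5$ are handled by the subterm and lexicographic clauses of the path ordering, and the $\delta$-laws $A6$, $A7$, $P9$, $C16$, $C17$, $U27$, $U28$ are immediate.

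The hard part, on which I would spend the most care, is the cluster $P1$, $P4$, $P5$--$P7$ governing $\between$, $\parallel$ and $\leftmerge$. Orienting $P1$ forces $\between$ above $\parallel$ and $\mid$, and orienting $P4$ forces $\parallel$ above $\leftmerge$; but $P7$, namely $(e_1\cdot x)\leftmerge(e_2\cdot y)=(e_1\leftmerge e_2)\cdot(x\between y)$, reintroduces $\between$ underneath a $\cdot$, and the two variables $x$ and $y$ of the new $\between$ sit in different arguments of the left-hand $\leftmerge$. Consequently no single direct subterm of the left-hand side dominates $x\between y$, so a plain path ordering would require $\leftmerge$ to sit \emph{above} $\between$, contradicting the chain $\between>\parallel>\leftmerge$ forced by $P1$ and $P4$. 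I expect this precedence cycle to be the genuine obstacle, and I do not believe a naive recursive path ordering or a simple polynomial interpretation resolves it, since the duplication in $P1$ and $P4$ forces a super-additive interpretation while $P7$ forces a sub-multiplicative one.

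The resolution I would pursue is to replace the single path order by a lexicographic measure, or equivalently to bypass the TRS and argue the elimination directly. Concretely, I would prove by induction on $|s|+|t|$ (the number of symbols in the arguments), with an inner induction on the operator rank $\between>\parallel>\mid>\leftmerge$, the mutual statement that for basic terms $s,t$ each of $s\between t$, $s\parallel t$, $s\mid t$, $s\leftmerge t$, $\Theta(s)$, $s\triangleleft t$ and $\partial_H(s)$ is provably equal to a basic term. The key point is that $P7$ re-creates $\between$ only on the \emph{strictly smaller} arguments $x,y$, which are proper subterms of $e_1\cdot x$ and $e_2\cdot y$, so the appeal to the $\between$-case is always at a smaller value of $|s|+|t|$ and the induction closes; the duplication in $P1$ and $P4$ is harmless because it only spawns lower-rank operators on arguments of the same size. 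Finally I would check that any resulting normal form, containing none of $\between,\parallel,\mid,\Theta,\triangleleft,\partial_H$ and with every $\delta$ absorbed by $A6$ and $A7$, is generated by atoms, $\cdot$, $+$ and $\leftmerge$ and hence is a basic $APTC$ term $q$; since every rewrite step is an instance of an axiom, $APTC\vdash p=q$, which is the assertion.
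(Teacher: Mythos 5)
Your proposal is correct, but it does not match what the paper does, for the simple reason that the paper does nothing: for this theorem (and all others in the ``$APTC$ with Left Parallel Composition'' subsection) the text explicitly states that ``all proofs of the conclusions are left to the reader.'' The closest model inside the paper is the proof of the elimination theorem for $APPTC$ (Theorem \ref{ETPParallelism}), which orients the axioms into a TRS, declares the precedence $\leftmerge>\cdot>+>\boxplus_{\pi}$, and asserts that $p>_{lpo}q$ ``can easily be proved'' for every rule. Your analysis shows why that assertion is too quick: the declared precedence does not even rank $\between$, $\parallel$ and $\mid$, and once $P1$ and $P4$ force $\between>\parallel>\leftmerge$, the rule $P7$ cannot be oriented by a plain path ordering, since $x\between y$ on the right-hand side contains variables drawn from \emph{both} arguments of the left-hand $\leftmerge$, so no single subterm of the redex dominates it and the head-symbol comparison goes the wrong way. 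Your repair --- an outer induction on $|s|+|t|$ with an inner induction on the operator rank $\between>\parallel>\mid>\leftmerge$, observing that $P7$ recreates $\between$ only on strictly smaller arguments while the duplications in $P1$ and $P4$ only lower the rank at equal size --- is the standard and correct way to close this gap, and your final check that the resulting normal forms use only atoms, $\cdot$, $+$ and $\leftmerge$ (so that $\leftmerge$ itself need not be eliminated, being admitted in basic $APTC$ terms) is exactly what is needed. In short: your argument is sound and strictly more careful than the paper's template; what the TRS route buys (when it works) is a uniform strong-normalization statement, whereas your direct induction buys an argument that actually goes through for the $\between/\parallel/\leftmerge$ cluster. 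The only cosmetic remark is that $\partial_H$ is introduced in the paper only in the subsequent elimination theorem, so you may drop the encapsulation axioms here without loss.
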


\begin{theorem}[Soundness of parallelism  with left parallel composition modulo truly concurrent bisimulation equivalences]
Let $x$ and $y$ be $APTC$ with left parallel composition terms. If $APTC\vdash x=y$, then

\begin{enumerate}
  \item $x\sim_{s} y$;
  \item $x\sim_{p} y$;
  \item $x\sim_{hp} y$;
  \item $x\sim_{hhp} y$.
\end{enumerate}
\end{theorem}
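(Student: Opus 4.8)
The plan is to reduce soundness of the whole axiom system to soundness of each individual axiom. Since $\sim_p$, $\sim_s$, $\sim_{hp}$ and $\sim_{hhp}$ are equivalence relations and, by the preceding congruence theorem, are congruences with respect to $APTC$ with left parallel composition, it suffices to show that for every axiom $s=t$ of Table \ref{AxiomsForLeftParallelism} and every closed substitution $\sigma$, the closed instances $\sigma(s)$ and $\sigma(t)$ are related by each of the four equivalences. Indeed, reflexivity is immediate, symmetry holds because each (posetal) bisimulation is symmetric, transitivity holds because bisimulations compose, and closure under contexts is exactly congruence; together these propagate per-axiom soundness to every derivable equation $APTC\vdash x=y$. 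The axioms $A1$--$A5$ inherited from $BATC$ are already sound by Theorem \ref{SBATC} together with the generalization theorem, so I would concentrate on the new axioms $A6,A7$, $P1$--$P9$, $C10$--$C17$, $CE18$--$CE23$ and $U24$--$U36$.

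For $\sim_p$ and $\sim_s$ the strategy for each axiom is to exhibit an explicit relation $R$ (typically the set of pairs of closed instances of that axiom, unioned with the identity) and verify the transfer conditions directly against the transition rules of Tables \ref{TRForAPTC} and \ref{TRForLeftParallel}. The axioms involving $\delta$ ($A6,A7,P9,C16,C17,U28$) are trivial because $\delta$ admits no transitions; the structural axioms $P2,P3$ (commutativity and associativity of $\parallel$) and the expansion laws $P1$ ($x\between y=x\parallel y+x\mid y$) and $P4$ ($x\parallel y=x\leftmerge y+y\leftmerge x$) are checked by matching each step-labelled transition of one side with an identical one of the other. For the left-merge laws $P5$--$P8$ and the communication laws $C10$--$C15$ I would use the side condition $(e_1\leq e_2)$ and the label $\gamma(e_1,e_2)$ respectively, confirming that the step labels $\{e_1,e_2\}$ and the residuals $x\between y$ agree on both sides. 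The conflict-elimination axioms $CE18$--$CE23$ and the unless axioms $U24$--$U36$ are handled the same way, reading off the rules for $\Theta$ and $\triangleleft$ and using the conflict predicate $\sharp(e_1,e_2)$.

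For $\sim_{hp}$ and $\sim_{hhp}$ the same case analysis is carried out, but the witnessing relations must live in the posetal product: each pair is equipped with the order-isomorphism $f$ between the configurations reached, extended along single-event transitions by $f[e_1\mapsto e_2]$ as in Definition \ref{HHPB}. Here I would pass to the induced single-event transitions and check that, whenever one side performs an event, the other performs a matching event so that the extended map remains an isomorphism of configurations. For $\sim_{hhp}$ I additionally verify that every such relation is downward closed, so that it is a hereditary hp-bisimulation.

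The main obstacle I anticipate is precisely the hhp-case of the parallel axioms $P4$--$P9$. The left parallel composition $\leftmerge$ was introduced exactly because $\parallel$ alone has no finite complete axiomatization modulo bisimulation, hence none modulo hhp-bisimulation either; the causal-order side condition $(e_1\leq e_2)$ on the rules of Table \ref{TRForLeftParallel} is what makes the associated posetal relations downward closed. The delicate point is therefore to show that the order constraint recorded by $\leq$ (refined to include both $<$ and concurrency $=$, as noted before Table \ref{TRForLeftParallel}) makes the isomorphism extensions $f[e_1\mapsto e_2]$ compatible with downward closure for $P4$--$P9$, while the remaining axioms pose only bookkeeping difficulties. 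Once this is settled for the left-merge laws, the rest of the verification is routine.
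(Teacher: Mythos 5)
Your proposal is correct and follows exactly the strategy the paper uses for every one of its soundness theorems: use the fact that $\sim_{p}$, $\sim_{s}$, $\sim_{hp}$, $\sim_{hhp}$ are equivalences and congruences to reduce the claim to checking each axiom of Table \ref{AxiomsForLeftParallelism} individually against the transition rules. The paper explicitly leaves this particular proof to the reader (and elsewhere dismisses the per-axiom checks as trivial), so your sketch --- in particular the observation that the side condition $(e_1\leq e_2)$ on the rules for $\leftmerge$ is what keeps the posetal relations downward closed in the hhp-case --- is, if anything, more detailed than anything the paper supplies.
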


\begin{theorem}[Completeness of parallelism with left parallel composition modulo truly concurrent bisimulation equivalences]
Let $x$ and $y$ be $APTC$ terms.

\begin{enumerate}
  \item If $x\sim_{s} y$, then $APTC\vdash x=y$;
  \item if $x\sim_{p} y$, then $APTC\vdash x=y$;
  \item if $x\sim_{hp} y$, then $APTC\vdash x=y$;
  \item if $x\sim_{hhp} y$, then $APTC\vdash x=y$.
\end{enumerate}
\end{theorem}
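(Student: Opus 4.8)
\emph{Proof proposal.} The plan is the standard elimination-plus-normal-form argument, carried out uniformly for the four equivalences. As with the completeness theorem for $BATC$ (Theorem \ref{CBATC}), the statement should be read for \emph{closed} terms, so I take $p,q$ to be closed $APTC$ terms with $p\sim q$ in one of the four senses. First I would invoke the elimination theorem for parallelism with left parallel composition to obtain basic terms $p',q'\in\mathcal{B}(APTC)$ with $APTC\vdash p=p'$ and $APTC\vdash q=q'$. By the corresponding soundness theorem, $p\sim p'$ and $q\sim q'$ in each equivalence, so from $p\sim q$ together with symmetry and transitivity we get $p'\sim q'$. Since $APTC\vdash p'=q'$ would yield $APTC\vdash p=q$, it suffices to prove the implication for basic terms.

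\textbf{Normal form.} Next I would fix a normal form for basic terms. Using $A1$--$A5$ to flatten and order sums of sequential products, and $P5$--$P9$, $C10$--$C17$ together with the distributive laws $P8$, $C14$, $C15$ to push every $\leftmerge$ and $\mid$ inward, I would show by induction on the structure of basic terms that each is provably equal to a sum $\sum_i t_i$ in which every summand $t_i$ is either a block or a block followed by a normal form, where a block is an atomic action $e$ or an irreducible parallel atom $e_1\leftmerge e_2$ (the shape produced by $P5$--$P7$ when $e_1\le e_2$). These parallel atoms are precisely what records the initial concurrency of a term.

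\textbf{Matching transitions.} Then I would prove, by induction on the combined size of two normal forms $n\sim n'$, that $APTC\vdash n=n'$. Each summand of $n$ determines an initial transition read off from the rules in Tables \ref{TRForBATC}, \ref{TRForAPTC} and \ref{TRForLeftParallel}: a single event $e$ for hp- and hhp-bisimulation, a step $X$ for step bisimulation, and a pomset for pomset bisimulation. Because $n\sim n'$, each such transition is matched by a transition of $n'$ to a bisimilar residual, and conversely. For every matched pair I would apply the induction hypothesis to the (strictly smaller) residual normal forms to obtain provable equality of residuals, then reassemble the summands; $A3$ absorbs the resulting duplicates, so that each summand of $n$ is provably a summand of $n'$ and vice versa, giving $APTC\vdash n=n'$.

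\textbf{Main obstacle.} The delicate point is the treatment of parallelism in the normal-form step: one must show that the concurrency structure of a term --- which initial events are pairwise concurrent versus causally ordered --- is faithfully captured by the parallel atoms $e_1\leftmerge e_2$, so that every step or pomset transition of $n$ is realized by a \emph{single} summand and bisimilar terms are thereby forced to have provably equal block decompositions. Getting the interaction of $\leftmerge$ with $\cdot$ correct through $P5$--$P7$, and handling the downward-closure condition of hhp-bisimulation uniformly with the other three cases --- it additionally constrains the matching of the order isomorphism $f$ from Definition \ref{HHPB} and of the backward moves --- is where the real work lies; the surrounding bookkeeping via distributivity and idempotence is routine.
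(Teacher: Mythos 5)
Your proposal is correct and follows essentially the same route the paper takes: the paper leaves this particular proof to the reader (it is background material), but its analogous completeness proofs that are written out in full (e.g.\ Theorem \ref{CPPPBE} for $APPTC$) use exactly your strategy of elimination to basic terms, a normal form modulo AC of $+$ whose summands are atomic events or sequential products of blocks $e_1\leftmerge\cdots\leftmerge e_m$, an induction on the sizes of normal forms matching summands via their initial (single-event, step, or pomset) transitions, and a final appeal to soundness. The only small adjustment is that your ``blocks'' should be left-merges of arbitrarily many atomic events, not just pairs $e_1\leftmerge e_2$, since basic terms close $\leftmerge$ under iteration.
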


The axioms of encapsulation operator are shown in \ref{AxiomsForEncapsulationLeft}.

\begin{center}
    \begin{table}
        \begin{tabular}{@{}ll@{}}
            \hline No. &Axiom\\
            $D1$ & $e\notin H\quad\partial_H(e) = e$\\
            $D2$ & $e\in H\quad \partial_H(e) = \delta$\\
            $D3$ & $\partial_H(\delta) = \delta$\\
            $D4$ & $\partial_H(x+ y) = \partial_H(x)+\partial_H(y)$\\
            $D5$ & $\partial_H(x\cdot y) = \partial_H(x)\cdot\partial_H(y)$\\
            $D6$ & $\partial_H(x\leftmerge y) = \partial_H(x)\leftmerge\partial_H(y)$\\
        \end{tabular}
        \caption{Axioms of encapsulation operator with left parallel composition}
        \label{AxiomsForEncapsulationLeft}
    \end{table}
\end{center}

\begin{theorem}[Conservativity of $APTC$ with respect to the algebra for parallelism with left parallel composition]
$APTC$ is a conservative extension of the algebra for parallelism with left parallel composition.
\end{theorem}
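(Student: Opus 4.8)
The plan is to derive the result directly from the general Conservative Extension Theorem (Theorem \ref{TCE}), so that the only work is the verification of its syntactic hypotheses. I would set $T_0$ to be the TSS of the algebra for parallelism with left parallel composition, over the signature $\Sigma_0$ generated by the constants in $\mathbb{E}$ together with $\delta$ and the operators $\cdot, +, \parallel, \leftmerge, \mid, \Theta, \triangleleft$; its rules are those of Table \ref{TRForBATC}, Table \ref{TRForAPTC} with the $\partial_H$ rules removed, and Table \ref{TRForLeftParallel}. Let $T_1$ consist of the two rules for the encapsulation operator, so that $\Sigma_1 = \Sigma_0 \cup \{\partial_H\}$ and $APTC$ is exactly $T_0 \oplus T_1$. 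It then suffices to check: (i) both $T_0$ and $T_0 \oplus T_1$ are positive after reduction; (ii) $T_0$ is source-dependent; and (iii) every rule of $T_1$ meets the freshness condition of Theorem \ref{TCE}.

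For (ii), I would verify source-dependency rule by rule. Each source is a single operator applied to variables, so the source variables are exactly the variables occurring in the left-hand sides of the premises; and each premise of the form $x \xrightarrow{e} x'$ (or a step, communication, or termination variant) has its target variable $x'$ rendered source-dependent because the premise's source variable is already source-dependent, by clause (2) of the definition of source-dependency. A routine pass over Tables \ref{TRForBATC}, \ref{TRForAPTC} and \ref{TRForLeftParallel} confirms that no rule introduces a variable unreachable in this way, so $T_0$ is source-dependent.

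For (iii), observe that every rule of $T_1$ has source $\partial_H(x)$, which contains the function symbol $\partial_H \in \Sigma_1 \setminus \Sigma_0$. Hence each such source is fresh in the sense of the Freshness definition, and the first disjunct of condition (2) of Theorem \ref{TCE} is satisfied immediately, with no analysis of the premises required. This is the cleanest part of the argument and is precisely where the choice of a brand-new operator symbol pays off.

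The main obstacle is condition (i): the rules for the auxiliary operator $\triangleleft$ carry negative premises of the form $y \nrightarrow^{e}$, so $T_0$ is not literally positive and one must argue that it is \emph{positive after reduction}. I would handle this by appealing to the stratification implicit in the construction of the LTS for $T_0$: the negative premises are tested against transitions governed by the conflict relation $\sharp$ and the ordering $\leq$ on $\mathbb{E}$, which yields a well-founded stratification and hence a well-defined, positive-after-reduction semantics; adding the $\partial_H$ rules, which are purely positive and introduce no new way to satisfy or refute any negative premise about $\Sigma_0$-terms, preserves this property for $T_0 \oplus T_1$. With (i)--(iii) in place, Theorem \ref{TCE} yields that $T_0 \oplus T_1 = APTC$ is a conservative extension of $T_0$; that is, the transitions $t \xrightarrow{a} t'$ and $tP$ of terms $t \in \mathcal{T}(\Sigma_0)$ are left unchanged by the addition of $\partial_H$, which is the claim.
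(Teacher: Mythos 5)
Your proposal is correct and follows the same route the paper takes for results of this kind: the paper leaves this particular proof to the reader, but its analogous conservativity proofs (e.g., for $APPTC$ with respect to the algebra for parallelism) invoke Theorem \ref{TCE} via exactly your two facts, namely that the base transition rules are source-dependent and that every $\partial_H$-rule has a source containing the fresh symbol $\partial_H$. Your additional attention to the \emph{positive after reduction} hypothesis, forced by the negative premises in the $\triangleleft$-rules, addresses a point the paper silently glosses over, and your stratification argument together with the observation that the purely positive $\partial_H$-rules cannot affect any negative premise over $\Sigma_0$-terms handles it correctly.
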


\begin{theorem}[Congruence theorem of encapsulation operator $\partial_H$]
Truly concurrent bisimulation equivalences $\sim_{p}$, $\sim_s$, $\sim_{hp}$ and $\sim_{hhp}$ are all congruences with respect to encapsulation operator $\partial_H$.
\end{theorem}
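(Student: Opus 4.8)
The plan is to handle the four equivalences one at a time, in each case taking a witnessing (posetal) relation for the hypothesis $x\sim y$ and closing it under the encapsulation operator to obtain a candidate relation for $\partial_H(x)$ and $\partial_H(y)$. The whole argument is driven by the two transition rules for $\partial_H$ in Table~\ref{TRForAPTC}: an event $e$ can fire from $\partial_H(x)$ only when $e\notin H$, and the residual is always again of the form $\partial_H(x')$. Hence verifying the transfer conditions amounts to checking (i) that a matching move on the opposite side also avoids $H$, and (ii) that matched residuals are again related by the closed-up relation, both of which follow almost mechanically from the shape of these rules.

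First I would treat the pomset and step cases together. Assuming $x\sim_p y$ is witnessed by a pomset bisimulation $R$, I define
$$R'=\{(\partial_H(C_1),\partial_H(C_2))\mid(C_1,C_2)\in R\}.$$
Since $(\emptyset,\emptyset)\in R$ we get $(\emptyset,\emptyset)\in R'$, giving the root condition. For the transfer property, suppose $\partial_H(C_1)\xrightarrow{X_1}\partial_H(C_1')$; by the encapsulation rules this forces $X_1\cap H=\emptyset$ and $C_1\xrightarrow{X_1}C_1'$. Then $R$ supplies $C_2\xrightarrow{X_2}C_2'$ with $X_1\sim X_2$ and $(C_1',C_2')\in R$, and because $X_1\sim X_2$ are isomorphic as pomsets they carry the same labels, so $X_2\cap H=\emptyset$ and $\partial_H(C_2)\xrightarrow{X_2}\partial_H(C_2')$ is legitimate with $(\partial_H(C_1'),\partial_H(C_2'))\in R'$. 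The reverse direction is symmetric, and the step case is obtained by reading every transition as a step throughout.

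For the hp- and hhp-cases I would start from a hp-bisimulation $R$ witnessing $x\sim_{hp}y$ and set
$$R'=\{(\partial_H(C_1),f,\partial_H(C_2))\mid(C_1,f,C_2)\in R\}.$$
The transfer argument repeats the previous one for single events: a move $\partial_H(C_1)\xrightarrow{e_1}\partial_H(C_1')$ with $e_1\notin H$ is matched by $C_2\xrightarrow{e_2}C_2'$ whose label agrees with that of $e_1$, so $e_2\notin H$, and the isomorphism is extended to $f[e_1\mapsto e_2]$, keeping the updated triple in $R'$. For hhp-bisimulation I would additionally check that $R'$ is downward closed: any triple below $(\partial_H(C_1),f,\partial_H(C_2))$ in the pointwise order arises by encapsulating a triple below $(C_1,f,C_2)$, which lies in $R$ by its downward closure, whence its image lies in $R'$.

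The step I expect to be the main obstacle is condition (i): showing that the event (or pomset) fired by the matching transition on the opposite side also lies outside $H$. This hinges on the fact that matched transitions carry identical labels and that membership in $H$ is determined by the label, so that $X_1\cap H=\emptyset$ indeed forces $X_2\cap H=\emptyset$; once this label-preservation is made precise, the remaining verifications are routine instances of the encapsulation rules.
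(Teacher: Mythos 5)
Your argument is correct, and it is the standard construction one would expect here: close the witnessing (posetal) bisimulation under $\partial_H$, observe that the two rules for $\partial_H$ force any move of $\partial_H(C_1)$ to come from a move of $C_1$ with all events outside $H$, and use the fact that matched transitions carry isomorphic (hence identically labelled) pomsets to conclude the matching move also avoids $H$. Be aware that the paper itself supplies no proof for this theorem --- it appears in the background chapter, where all proofs are explicitly left to the reader (and the analogous probabilistic versions in the $APPTC$ chapter are dismissed as ``quite trivial'') --- so there is nothing to diverge from; your write-up simply fills the omission, and the one point you flag as the main obstacle (label-determined membership in $H$ being preserved by the pomset isomorphism $X_1\sim X_2$) is indeed the only non-mechanical step. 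The sole cosmetic gap is the terminating case: the rule $\frac{x\xrightarrow{e}\surd}{\partial_H(x)\xrightarrow{e}\surd}$ produces the residual $\surd$ rather than $\partial_H(x')$, so your relation $R'$ should be understood to also contain the pair of terminated states; this is routine and does not affect the argument.
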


\begin{theorem}[Elimination theorem of $APTC$]
Let $p$ be a closed $APTC$ term including the encapsulation operator $\partial_H$. Then there is a basic $APTC$ term $q$ such that $APTC\vdash p=q$.
\end{theorem}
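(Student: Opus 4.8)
The plan is to follow the two-step recipe used for the earlier elimination results: orient the axioms into a term rewriting system (TRS), prove the TRS strongly normalizing, and then show that the closed normal forms are exactly the basic $APTC$ terms. Since the elimination theorem for the algebra of parallelism with left parallel composition (i.e.\ $APTC$ without $\partial_H$) is already available, together with the conservativity of $APTC$ over that algebra, the genuinely new content is the treatment of the encapsulation operator, governed by the axioms $D1$--$D6$ of Table \ref{AxiomsForEncapsulationLeft}.

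First I would fix the TRS by orienting every axiom of Tables \ref{AxiomsForLeftParallelism} and \ref{AxiomsForEncapsulationLeft} from left to right, discarding the commutativity and associativity axioms $A1$, $A2$, $P2$, $P3$ (which cannot be oriented without losing termination) and working modulo them; these merely permute summands and parallel branches and are irrelevant to the notion of basic term. Every remaining rule moves the compound operators $\partial_H,\Theta,\triangleleft,\between,\parallel,\mid$ toward the leaves or removes them; in particular $D4$--$D6$ distribute $\partial_H$ over $+$, $\cdot$ and $\leftmerge$, while $D1$--$D3$ erase it on atoms and on $\delta$.

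Next I would invoke Theorem \ref{SN}. The crux is to exhibit a well-founded ordering $>$ on the signature under which each rule strictly decreases in the induced lexicographic path order $>_{lpo}$. The precedence
$$\partial_H > \Theta > \triangleleft > \between > \parallel > \mid > \leftmerge > \cdot > +$$
works: in every rule the head symbol of the left-hand side outranks all symbols occurring on the right-hand side, so the $>_{lpo}$ test reduces to comparing the source with each argument of the right-hand side, and those arguments are built from proper subterms of the source. The rules that duplicate subterms --- $P4$, $CE20$--$CE23$, $U29$--$U36$, $D4$--$D6$ --- still decrease because the duplicated copies sit beneath strictly smaller heads. Verifying $s>_{lpo}t$ for each rule $s\rightarrow t$ and applying Theorem \ref{SN} yields strong normalization; hence every closed term $p$ rewrites to a normal form $q$, and soundness of the axioms gives $APTC\vdash p=q$.

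Finally I would show that every closed normal form is a basic term, by structural induction. For each compound operator I argue that, applied to basic-term arguments, it always exposes a redex: a top-level $\between$ triggers $P1$, a $\parallel$ triggers $P4$, a $\mid$ triggers one of $C10$--$C17$, $\Theta$ triggers $CE18$--$CE23$, $\triangleleft$ triggers $U24$--$U36$, and $\partial_H$ triggers $D1$--$D6$; so none of them can head a normal form, and what remains is precisely the grammar of $\mathcal{B}(APTC)$. I expect this exhaustiveness argument to be the main obstacle: one must check that for every shape of basic-term argument some rule actually fires, which forces a careful case split on the leading symbols ($e$, $\delta$, $\cdot$, $+$, $\leftmerge$) of the subterms and on the side conditions $e_1\leq e_2$ and $\sharp(e_1,e_2)$, the delicate cases being those in which a subterm is itself headed by $\leftmerge$ under one of $\mid$, $\parallel$, $\Theta$ or $\triangleleft$. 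An alternative that localizes this difficulty is the modular route: rewrite the argument of an innermost $\partial_H$-subterm to a basic term via the already-established elimination theorem for the left-parallel algebra, then eliminate that $\partial_H$ using only $D1$--$D6$, and finally appeal to conservativity; this confines all the new work to the encapsulation operator.
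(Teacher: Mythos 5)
Your proposal matches the paper's method: the paper establishes its elimination theorems (and for this background statement explicitly leaves the details to the reader) by orienting the axioms into a term rewriting system, invoking Theorem \ref{SN} with a lexicographical path ordering for strong normalization, and then showing by structural induction that every closed normal form is a basic term. In particular, the ``alternative modular route'' you sketch at the end --- reusing the elimination theorem for the algebra of parallelism with left parallel composition and treating only the new case $p'\equiv\partial_H(p_1)$ via $D1$--$D6$ --- is exactly how the paper organizes the corresponding proof for the encapsulation operator (cf.\ the proof of Theorem \ref{ETPEncapsulation}).
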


\begin{theorem}[Soundness of $APTC$ modulo truly concurrent bisimulation equivalences]
Let $x$ and $y$ be $APTC$ terms including encapsulation operator $\partial_H$. If $APTC\vdash x=y$, then

\begin{enumerate}
  \item $x\sim_{s} y$;
  \item $x\sim_{p} y$;
  \item $x\sim_{hp} y$;
  \item $x\sim_{hhp} y$.
\end{enumerate}
\end{theorem}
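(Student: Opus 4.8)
The plan is to reduce the claim to a finite, axiom-by-axiom check. By the congruence theorems already established (congruence of $BATC$, of $APTC$ with left parallel composition, and of the encapsulation operator $\partial_H$), each of $\sim_s$, $\sim_p$, $\sim_{hp}$ and $\sim_{hhp}$ is an equivalence relation that is preserved by every operator in the signature of $APTC$. Since $APTC\vdash x=y$ means that $x=y$ is derivable from the axioms by reflexivity, symmetry, transitivity and closure under the operators (substitution into contexts), it suffices to verify that every individual axiom is sound, i.e. that for each axiom $s=t$ in Tables \ref{AxiomsForBATC}, \ref{AxiomsForLeftParallelism} and \ref{AxiomsForEncapsulationLeft} we have $s\sim t$ for each of the four equivalences. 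The general derivation then follows because an equivalence relation that is also a congruence is automatically closed under the rules of equational logic.

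For each axiom I would exhibit a witnessing relation directly from the operational transition rules in Tables \ref{TRForBATC}, \ref{TRForAPTC} and \ref{TRForLeftParallel}. The axioms split into manageable groups: $A1$--$A7$ concern only $+$, $\cdot$ and $\delta$ and are handled exactly as in $BATC$ (indeed their soundness is inherited, since the algebra for left parallelism is a generalization of $BATC$); the parallel axioms $P1$--$P9$ are checked against the rules for $\parallel$, $\leftmerge$ and $\between$, noting that the side condition $e_1\leq e_2$ in $P5$--$P7$ matches exactly the guard on the $\leftmerge$ rules; the communication axioms $C10$--$C17$ are checked against the rules for $\mid$ using the communication function $\gamma$; the conflict-elimination axioms $CE18$--$CE23$ and the unless axioms $U24$--$U36$ are checked against the rules for $\Theta$ and $\triangleleft$ with their conflict side conditions $\sharp(e_1,e_2)$; and $D1$--$D6$ are checked against the rules for $\partial_H$ with the guard $e\notin H$. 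In each case the relation $R=\{(s,t)\}\cup\mathrm{Id}$, closed under the reachable residuals, is shown to be a pomset, respectively step, bisimulation: a transition of the left-hand side is matched by the right-hand side carrying an isomorphic pomset (or a step of pairwise concurrent events), and conversely.

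The subtle part, and the main obstacle, is the passage from $\sim_s$ and $\sim_p$ to $\sim_{hp}$ and especially $\sim_{hhp}$, since for these one must not merely match labels up to pomset isomorphism but also track an order-isomorphism $f:C_1\rightarrow C_2$ between the underlying configurations and, for $\sim_{hhp}$, verify that the posetal relation is downward closed. For the axioms involving genuine concurrency this forces a check that the event-level bijection induced by matching transitions respects causality and conflict, and it is precisely here that the left parallel composition earns its keep: because $P4$--$P7$ decompose $\parallel$ into the causally guarded $\leftmerge$, the isomorphism $f$ can be built incrementally and shown to extend consistently as $f[e_1\mapsto e_2]$ at each step, while downward closure is preserved because the guards $e_1\leq e_2$ suppress the spurious interleavings that would otherwise break hhp-bisimilarity. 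I would therefore treat the hhp case last, reusing the hp witnesses and verifying downward closure separately, and I expect the axioms $P4$, $P7$, $CE22$ and $CE23$ (which combine parallelism with conflict elimination) to demand the most care.
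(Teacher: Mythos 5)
Your proposal is correct and follows exactly the approach this paper uses for every analogous soundness theorem (e.g.\ Theorems \ref{SPPPBE}, \ref{SAPPTCPPBE}): since the four equivalences are congruences, reduce derivability to a per-axiom check against the transition rules, which is all that is needed. Note that for this particular background theorem the paper gives no proof at all (the section states that ``all proofs of the conclusions are left to the reader''), so your write-up, including the extra care you flag for the posetal isomorphism and downward closure in the $\sim_{hp}$/$\sim_{hhp}$ cases, is strictly more detailed than anything in the text.
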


\begin{theorem}[Completeness of $APTC$ modulo truly concurrent bisimulation equivalences]
Let $p$ and $q$ be closed $APTC$ terms including encapsulation operator $\partial_H$,

\begin{enumerate}
  \item if $p\sim_{s} q$ then $p=q$;
  \item if $p\sim_{p} q$ then $p=q$;
  \item if $p\sim_{hp} q$ then $p=q$;
  \item if $p\sim_{hhp} q$ then $p=q$.
\end{enumerate}
\end{theorem}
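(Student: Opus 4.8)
The plan is to reduce the completeness of full $APTC$ to the already-established completeness of $BATC$ (Theorem \ref{CBATC}), by first stripping away all of the parallel, communication, conflict-elimination and encapsulation operators via the Elimination theorem of $APTC$, and then treating the parallel steps $\{e_1,\ldots,e_n\}$ as extended atomic labels over which a $BATC$-style argument applies. First I would invoke the Elimination theorem of $APTC$: every closed $APTC$ term $p$ (including those with $\partial_H$) is provably equal to a basic $APTC$ term $p'$, and likewise $q=q'$. By the Soundness theorem for $APTC$, provable equality implies each of $\sim_{s},\sim_{p},\sim_{hp},\sim_{hhp}$, so if $p\sim_{\star}q$ (for $\star\in\{s,p,hp,hhp\}$) then $p'\sim_{\star}q'$. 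It therefore suffices to prove the four implications for basic terms and then transport back along $p=p'$ and $q=q'$ to conclude $p=q$.

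Second, I would establish a normal form for basic $APTC$ terms. A basic term is built from $+$, $\cdot$ and $\leftmerge$ over $\mathbb{E}$. Using the left-merge axioms $P5$--$P9$ together with the distributivity laws $A4$, $P8$ and associativity $A5$, I would show that every basic term is provably equal to a sum in which each summand is either a head label (a single event $e$ or a parallel step $\{e_1,\ldots,e_n\}$ arising from nested left merges $e_1\leftmerge(e_2\leftmerge(\cdots))$ under the ordering side condition) or such a head label followed by $\cdot$ and another term in normal form. In effect the step labels take over the role of atomic constants, and the normal form is exactly a $BATC$ basic term over this extended alphabet of steps.

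Third, on normal forms I would run the standard bisimulation-to-provability argument by induction on term size, in analogy with the proof of Theorem \ref{CBATC}. From $p'\sim_{\star}q'$ every initial transition $p'\xrightarrow{\alpha}p''$ is matched by some $q'\xrightarrow{\beta}q''$ with $\alpha$ and $\beta$ related in the appropriate sense (for $\sim_{p},\sim_{s}$: $\alpha\sim\beta$ as pomsets respectively steps; for $\sim_{hp},\sim_{hhp}$: the induced configuration isomorphism of Definition \ref{HHPB} is preserved), and the induction hypothesis gives $p''=q''$; summing the matched summands with $A1$--$A3$ yields $p'=q'$.

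The main obstacle will be the $hp$- and especially the $hhp$-cases. Unlike step and pomset bisimulation, these track an order-isomorphism of entire configurations (the posetal relations of Definition \ref{HHPB}), so matching a single initial step is not enough: the chosen matching must extend to a coherent isomorphism along the whole run, and for $hhp$ it must in addition be downward closed. The left parallel composition $\leftmerge$, whose transition rules carry the side condition $(e_1\leq e_2)$, was introduced precisely to fix a canonical order among parallel events so that a finite complete axiomatization survives for $hhp$-bisimilarity. I would therefore need to check carefully that the normal form respects this ordering constraint, so that $hhp$-bisimilar normal forms have literally the same step structure and the posetal and downward-closedness requirements collapse to syntactic equality of normal forms; once this is verified, the four cases close uniformly.
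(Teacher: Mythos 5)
The paper states this theorem in its background chapter without proof (that subsection explicitly leaves all proofs to the reader), but your strategy --- elimination to basic terms, a normal form whose summands are headed by single events or left-merge steps, induction on the size of normal forms to show bisimilar normal forms coincide modulo AC of $+$, and a soundness sandwich to transfer the result back to arbitrary closed terms --- is exactly the scheme the paper itself uses for the corresponding probabilistic completeness theorems (cf.\ Theorems \ref{CPPPBE} and \ref{CAPPTCPPBE}). Your proposal is therefore correct and follows essentially the same route; the paper likewise dispatches the hp- and hhp-cases with ``similarly,'' so your additional caution about the posetal and downward-closure conditions is a refinement rather than a divergence.
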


\subsubsection{Recursion}

\begin{definition}[Recursive specification]
A recursive specification is a finite set of recursive equations

$$X_1=t_1(X_1,\cdots,X_n)$$
$$\cdots$$
$$X_n=t_n(X_1,\cdots,X_n)$$

where the left-hand sides of $X_i$ are called recursion variables, and the right-hand sides $t_i(X_1,\cdots,X_n)$ are process terms in $APTC$ with possible occurrences of the recursion
variables $X_1,\cdots,X_n$.
\end{definition}

\begin{definition}[Solution]
Processes $p_1,\cdots,p_n$ are a solution for a recursive specification $\{X_i=t_i(X_1,\cdots,X_n)|i\in\{1,\cdots,n\}\}$ (with respect to truly concurrent bisimulation equivalences
$\sim_s$($\sim_p$, $\sim_{hp}$, $\sim_{hhp}$)) if $p_i\sim_s (\sim_p, \sim_{hp},\sim{hhp})t_i(p_1,\cdots,p_n)$ for $i\in\{1,\cdots,n\}$.
\end{definition}

\begin{definition}[Guarded recursive specification]
A recursive specification

$$X_1=t_1(X_1,\cdots,X_n)$$
$$...$$
$$X_n=t_n(X_1,\cdots,X_n)$$

is guarded if the right-hand sides of its recursive equations can be adapted to the form by applications of the axioms in $APTC$ and replacing recursion variables by the right-hand
sides of their recursive equations,

$$(a_{11}\leftmerge\cdots\leftmerge a_{1i_1})\cdot s_1(X_1,\cdots,X_n)+\cdots+(a_{k1}\leftmerge\cdots\leftmerge a_{ki_k})\cdot s_k(X_1,\cdots,X_n)+(b_{11}\leftmerge\cdots\leftmerge b_{1j_1})+\cdots+(b_{1j_1}\leftmerge\cdots\leftmerge b_{lj_l})$$

where $a_{11},\cdots,a_{1i_1},a_{k1},\cdots,a_{ki_k},b_{11},\cdots,b_{1j_1},b_{1j_1},\cdots,b_{lj_l}\in \mathbb{E}$, and the sum above is allowed to be empty, in which case it
represents the deadlock $\delta$.
\end{definition}

\begin{definition}[Linear recursive specification]
A recursive specification is linear if its recursive equations are of the form

$$(a_{11}\leftmerge\cdots\leftmerge a_{1i_1})X_1+\cdots+(a_{k1}\leftmerge\cdots\leftmerge a_{ki_k})X_k+(b_{11}\leftmerge\cdots\leftmerge b_{1j_1})+\cdots+(b_{1j_1}\leftmerge\cdots\leftmerge b_{lj_l})$$

where $a_{11},\cdots,a_{1i_1},a_{k1},\cdots,a_{ki_k},b_{11},\cdots,b_{1j_1},b_{1j_1},\cdots,b_{lj_l}\in \mathbb{E}$, and the sum above is allowed to be empty, in which case it
represents the deadlock $\delta$.
\end{definition}

The behavior of the solution $\langle X_i|E\rangle$ for the recursion variable $X_i$ in $E$, where $i\in\{1,\cdots,n\}$, is exactly the behavior of their right-hand sides
$t_i(X_1,\cdots,X_n)$, which is captured by the two transition rules in Table \ref{TRForGR}.

\begin{center}
    \begin{table}
        $$\frac{t_i(\langle X_1|E\rangle,\cdots,\langle X_n|E\rangle)\xrightarrow{\{e_1,\cdots,e_k\}}\surd}{\langle X_i|E\rangle\xrightarrow{\{e_1,\cdots,e_k\}}\surd}$$
        $$\frac{t_i(\langle X_1|E\rangle,\cdots,\langle X_n|E\rangle)\xrightarrow{\{e_1,\cdots,e_k\}} y}{\langle X_i|E\rangle\xrightarrow{\{e_1,\cdots,e_k\}} y}$$
        \caption{Transition rules of guarded recursion}
        \label{TRForGR}
    \end{table}
\end{center}

The $RDP$ (Recursive Definition Principle) and the $RSP$ (Recursive Specification Principle) are shown in Table \ref{RDPRSP}.

\begin{center}
\begin{table}
  \begin{tabular}{@{}ll@{}}
\hline No. &Axiom\\
  $RDP$ & $\langle X_i|E\rangle = t_i(\langle X_1|E,\cdots,X_n|E\rangle)\quad (i\in\{1,\cdots,n\})$\\
  $RSP$ & if $y_i=t_i(y_1,\cdots,y_n)$ for $i\in\{1,\cdots,n\}$, then $y_i=\langle X_i|E\rangle \quad(i\in\{1,\cdots,n\})$\\
\end{tabular}
\caption{Recursive definition and specification principle}
\label{RDPRSP}
\end{table}
\end{center}

\begin{theorem}[Conservitivity of $APTC$ with guarded recursion]
$APTC$ with guarded recursion is a conservative extension of $APTC$.
\end{theorem}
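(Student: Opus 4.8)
The plan is to invoke the conservative-extension theorem (Theorem~\ref{TCE}), taking $T_0$ to be the TSS of $APTC$ — the rules collected in Tables~\ref{TRForBATC}, \ref{TRForAPTC} and \ref{TRForLeftParallel} (the latter containing the rules for $\parallel$, $\mid$, $\Theta$, $\triangleleft$ and $\partial_H$) — and $T_1$ to be the two transition rules for guarded recursion displayed in Table~\ref{TRForGR}. The part of the signature $\Sigma_1\setminus\Sigma_0$ that is genuinely new consists precisely of the constants $\langle X_i|E\rangle$, one for each guarded recursive specification $E$ and recursion variable $X_i$. It then suffices to check the three hypotheses of Theorem~\ref{TCE}.

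First I would verify that $T_0$ is source-dependent. This is a routine induction matching the source-dependency definition: in each $APTC$ rule the variables $x,y$ occurring in the premises already occur in the source (as in $x\cdot y$, $x\parallel y$, $x\mid y$, $\Theta(x)$, $x\triangleleft y$, $\partial_H(x)$), so the source variables are source-dependent by clause (1); and the target variables $x',y'$ then become source-dependent through the premises $x\xrightarrow{e}x'$ and $y\xrightarrow{e}y'$ by clause (2). Since no rule introduces a variable in a target or premise that is not reachable from the source in this way, every rule, and hence $T_0$, is source-dependent.

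Second I would dispatch condition (2) for each rule $\rho\in T_1$. Both rules in Table~\ref{TRForGR} have source $\langle X_i|E\rangle$, which contains the function symbol $\langle X_i|E\rangle\in\Sigma_1\setminus\Sigma_0$ and is therefore fresh in the sense of the freshness definition. Thus the first disjunct of condition (2) holds outright, and nothing further need be checked for the new rules; in particular their positive premises $t_i(\langle X_1|E\rangle,\cdots,\langle X_n|E\rangle)\xrightarrow{\{e_1,\cdots,e_k\}}\cdots$ need not be inspected.

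Third, both $T_0$ and $T_0\oplus T_1$ must be positive after reduction, and I expect this to be the only delicate point and hence the main obstacle: the $APTC$ rules for the unless/priority operator $\triangleleft$ carry negative premises of the form $y\nrightarrow^{e}$, so $T_0$ is not literally positive. The argument is that these negative premises only ever refer to strictly smaller subterms, so the reduction procedure stratifies the specification and leaves a positive residual with the same provable transitions; adjoining the purely positive recursion rules of $T_1$ does not disturb this stratification, since the new constants occur only in sources and in positive premises. Once these three conditions are established, Theorem~\ref{TCE} gives that $T_0\oplus T_1$ generates exactly the same transitions $t\xrightarrow{a}t'$ and termination predicates $tP$ on $\mathcal{T}(\Sigma_0)$ as $T_0$, which is precisely the claim that $APTC$ with guarded recursion is a conservative extension of $APTC$.
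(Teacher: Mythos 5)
Your proof is correct and follows essentially the same route as the paper: the paper's own argument (stated explicitly only for the analogous $APPTC$ theorem, the $APTC$ version being left to the reader) simply invokes Theorem~\ref{TCE} via the two facts that the transition rules of $APTC$ are source-dependent and that the guarded-recursion rules contain only the fresh constants $\langle X_i|E\rangle$ in their sources. Your extra paragraph discharging the positive-after-reduction hypothesis for the negative premises of $\triangleleft$ addresses a point the paper silently omits, and the stratification sketch you give is the standard way to handle it.
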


\begin{theorem}[Congruence theorem of $APTC$ with guarded recursion]
Truly concurrent bisimulation equivalences $\sim_{p}$, $\sim_s$, $\sim_{hp}$, $\sim_{hhp}$ are all congruences with respect to $APTC$ with guarded recursion.
\end{theorem}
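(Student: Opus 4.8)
The plan is to isolate the one genuinely new ingredient—the guarded recursion construct—because congruence of all the $APTC$ operators $\cdot,+,\parallel,\leftmerge,\mid,\Theta,\triangleleft,\partial_H$ modulo $\sim_p,\sim_s,\sim_{hp},\sim_{hhp}$ is already in hand and $APTC$ with guarded recursion is a conservative extension of $APTC$. Thus it remains to show that each of the four equivalences is preserved when the recursion constants $\langle X_i|E\rangle$ are introduced and embedded in arbitrary $APTC$ contexts. The device throughout is the pair of transition rules in Table \ref{TRForGR}, which make $\langle X_i|E\rangle$ perform exactly the transitions of its unfolding $t_i(\langle X_1|E\rangle,\cdots,\langle X_n|E\rangle)$, with the identical label and target.

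First I would dispose of $\sim_s$ and $\sim_p$ by a direct bisimulation construction. Fixing $\sim_s$-related (resp. $\sim_p$-related) arguments, I would take the candidate relation to be the smallest relation that contains these argument pairs, is closed under all $APTC$ operators, and is closed under replacing a recursion constant by its unfolding and vice versa. The transfer property is then checked by a case analysis on the last derivation step: for a transition stemming from an $APTC$ operator I appeal to the already-proven congruence of that operator, while for a transition of a recursion constant I use Table \ref{TRForGR} to pull the move back to the unfolding, match it there, and push the matching move forward again—labels and step/pomset structure being preserved verbatim. Guardedness ensures that unfolding cannot be applied forever without producing a genuine action, so the relation is well-defined and the matching terminates.

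Next I would lift these relations to posetal relations on $\mathcal{C}(\mathcal{E}_1)\overline{\times}\mathcal{C}(\mathcal{E}_2)$ to obtain $\sim_{hp}$ and $\sim_{hhp}$. For each matched pair of single-event transitions $C_1\xrightarrow{e_1}C_1'$ and $C_2\xrightarrow{e_2}C_2'$ I must extend the order-isomorphism to $f[e_1\mapsto e_2]$ and verify that it remains an isomorphism of configurations; for $\sim_{hhp}$ I additionally check that the resulting posetal relation is downward closed, which is inherited from the downward closure assumed for the arguments. Since the recursion rules relabel a transition of the unfolding by the very same event, they neither create nor destroy causal or conflict relations, so the isomorphism-tracking in the recursion case collapses to that of the unfolding.

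The main obstacle I anticipate lies not in the recursion rules but in the $\sim_{hp}$/$\sim_{hhp}$ bookkeeping for the parallel and auxiliary operators: across $\parallel$, $\leftmerge$ and $\mid$ the events of both branches are merged into a single configuration, and across $\Theta$ and $\triangleleft$ the rules carry conflict side-conditions $\sharp(e_1,e_2)$, so one coherent family of configuration isomorphisms must be maintained simultaneously through all of these, and downward closure for hhp must be preserved by every operator at once. Once the posetal relation is shown to be closed under every $APTC$ operator and under recursion unfolding, guardedness again guarantees well-foundedness, so the relation is a genuine (hereditary) history-preserving bisimulation and all four cases are complete.
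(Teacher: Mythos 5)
Your proposal is correct and follows essentially the same route the paper intends: the paper states this theorem in the background without proof, and its proof of the probabilistic analogue (Congruence theorem of $APPTC$ with guarded recursion) is just the two-fact sketch that (i) guarded recursive equations can be unfolded into operator form and (ii) the equivalences are already congruences for all the operators. Your bisimulation-closure construction, the use of the rules of Table \ref{TRForGR} to transfer moves between $\langle X_i|E\rangle$ and its unfolding, the appeal to guardedness for well-foundedness, and the posetal lifting for $\sim_{hp}$/$\sim_{hhp}$ are exactly the details that sketch leaves implicit.
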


\begin{theorem}[Elimination theorem of $APTC$ with linear recursion]
Each process term in $APTC$ with linear recursion is equal to a process term $\langle X_1|E\rangle$ with $E$ a linear recursive specification.
\end{theorem}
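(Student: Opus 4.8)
The plan is to proceed by induction on the structure of a process term $p$ in $APTC$ with linear recursion, showing at each stage that $p$ is provably equal to some $\langle X_1|E\rangle$ with $E$ a linear recursive specification. The engine throughout will be $RSP$: to establish $p=\langle X_1|E\rangle$ it suffices to produce a linear specification $E$ together with processes, built from $p$ and its subterms, that solve $E$, since a linear specification is guarded and $RSP$ then forces any two of its solutions to coincide. The base cases are immediate: for $p=e$ take $E=\{X_1=e\}$ and for $p=\delta$ take $E=\{X_1=\delta\}$ (the empty sum), both linear; and if $p=\langle X_i|E'\rangle$ is itself a recursion constant over a linear specification, then after renaming the relevant variable to be the first one, $p$ already has the required form.

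For the inductive step I assume, by the induction hypothesis, that the immediate subterms equal linear recursion constants $\langle X_1|E_1\rangle$ and $\langle Y_1|E_2\rangle$ with disjoint variable sets, and I treat the operators in turn. For the alternative composition $p=t_1+t_2$ I introduce a fresh variable $Z$ whose defining equation is the sum of the right-hand sides of the equations for $X_1$ in $E_1$ and for $Y_1$ in $E_2$; adjoined to the remaining equations of $E_1$ and $E_2$ this is again linear, and $RDP$ shows $\langle Z|E\rangle$ solves it while equalling $t_1+t_2$. For the sequential composition $p=t_1\cdot t_2$ I modify $E_1$ by replacing every terminating summand $(b_{11}\leftmerge\cdots\leftmerge b_{1j})$ (one carrying no trailing variable) by $(b_{11}\leftmerge\cdots\leftmerge b_{1j})\,Y_1$ and adjoin $E_2$; this keeps every equation linear and yields $\langle X_1|E\rangle=t_1\cdot t_2$. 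For the encapsulation, conflict-elimination and unless operators $\partial_H$, $\Theta$ and $\triangleleft$ I build variables indexed by the variables of the underlying specification (for instance $Z_i$ standing for $\partial_H(\langle X_i|E_1\rangle)$) and use $D1$--$D6$, $CE18$--$CE23$ and $U24$--$U36$ to rewrite each defining equation into linear form, replacing encapsulated atomic summands $e\in H$ by $\delta$ and pushing the operators through $+$ and the action prefixes.

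The genuinely hard case is parallelism. For $p=t_1\between t_2$, and likewise for $\leftmerge$ and $\mid$, I will carry out a product construction, introducing variables $Z_{ij}$ intended to denote $\langle X_i|E_1\rangle\between\langle Y_j|E_2\rangle$. Using $RDP$ to expose the heads of $\langle X_i|E_1\rangle$ and $\langle Y_j|E_2\rangle$ as sums of prefixed terms and terminating summands, and then applying $P1$--$P9$ together with $C10$--$C17$, each $Z_{ij}$ expands into a sum whose every summand is a single (parallel or communicated) action prefix followed by one of the residual constants, that is, again some $Z_{i'j'}$, $\langle X_{i'}|E_1\rangle$ or $\langle Y_{j'}|E_2\rangle$. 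The main obstacle is to verify that this expansion always lands back in linear form: one must check that the laws never leave a composite, non-atomic and non-variable term in a prefix position, and that the side conditions $e_1\leq e_2$ attached to $\leftmerge$ and the conflict conditions governing $\mid$ only restrict which summands appear rather than breaking linearity. Once the finitely many equations for the $Z_{ij}$, together with the carried-over equations of $E_1$ and $E_2$, are seen to be linear, $RSP$ identifies $\langle Z_{11}|E\rangle$ with $t_1\between t_2$, which completes the induction.
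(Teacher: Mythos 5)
Your proposal is correct and follows essentially the same strategy as the paper: the paper leaves this background statement unproved (referring the reader to the prior APTC work), but its analogous proof for $APPTC$ with linear recursion (Theorem \ref{ETPRecursion}) likewise proceeds by structural induction to express the given term as one component of a solution of a linear recursive specification and then invokes $RSP$. The only difference is one of detail: the paper compresses the entire induction into the assertion that each term "can be expressed in the form of equations" of linear shape, whereas you spell out the per-operator constructions (merging heads for $+$, redirecting terminating summands for $\cdot$, the product construction for $\between$, $\leftmerge$ and $\mid$) that justify that assertion.
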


\begin{theorem}[Soundness of $APTC$ with guarded recursion]
Let $x$ and $y$ be $APTC$ with guarded recursion terms. If $APTC\textrm{ with guarded recursion}\vdash x=y$, then
\begin{enumerate}
  \item $x\sim_{s} y$;
  \item $x\sim_{p} y$;
  \item $x\sim_{hp} y$;
  \item $x\sim_{hhp} y$.
\end{enumerate}
\end{theorem}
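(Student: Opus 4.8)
The plan is to reduce soundness to a check of the individual axioms and recursion principles, and then to isolate the genuinely new content, which is the soundness of $RSP$. First I would invoke the congruence theorem of $APTC$ with guarded recursion: since $\sim_{s}$, $\sim_{p}$, $\sim_{hp}$ and $\sim_{hhp}$ are all congruences, any equation derivable by equational logic from valid equations is again valid, so it suffices to show that each axiom and each of $RDP$, $RSP$ (read as an equation, resp. an implication between equations) holds modulo each of the four equivalences. For the purely equational axioms ($A1$--$A7$, $P1$--$P9$, $C10$--$C17$, $CE18$--$CE23$, $U24$--$U36$, $D1$--$D6$) I would simply appeal to the already-established soundness theorem of $APTC$; these carry over verbatim and need no re-proof. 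Thus the only new obligations are $RDP$ and $RSP$.

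For $RDP$, I would show $\langle X_i|E\rangle \sim t_i(\langle X_1|E\rangle,\cdots,\langle X_n|E\rangle)$ for each equivalence directly from the two transition rules of Table \ref{TRForGR}. These rules are designed so that $\langle X_i|E\rangle$ inherits exactly the transitions of its right-hand side $t_i(\langle X_1|E\rangle,\cdots,\langle X_n|E\rangle)$; hence the relation pairing $\langle X_i|E\rangle$ with $t_i(\cdots)$, together with the identity on all other reachable terms, transfers every step in both directions with identical labels. For the pomset and step cases the matched transitions carry the same (pomset, resp. step) label by construction; for the hp and hhp cases the associated order-isomorphism is the identity and extends trivially, and downward closedness for hhp is immediate because the matching is label- and event-preserving.

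The main obstacle is the soundness of $RSP$, which amounts to the \emph{uniqueness} of solutions of guarded recursive specifications. Concretely I would prove that if $\vec p=(p_1,\cdots,p_n)$ and $\vec q=(q_1,\cdots,q_n)$ both solve the guarded specification $E$ --- so that $p_i\sim t_i(\vec p)$ and $q_i\sim t_i(\vec q)$ for all $i$ --- then $p_i\sim q_i$; combined with $RDP$ (which makes $\langle X_i|E\rangle$ a solution) this yields $RSP$. The candidate relation is the one pairing $u(\vec p)$ with $u(\vec q)$ for every term $u$ built from the components of $E$, and the crux is guardedness: after adapting each $t_i$ to the guarded form, every occurrence of a recursion variable is prefixed by an atomic action, so each initial transition of $t_i(\vec p)$ originates from a head action and its residual is again of the substitution form $u'(\vec p)$, matched by $u'(\vec q)$. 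This closes the transfer condition. I would run this argument uniformly for all four equivalences, tracking the pomset/step label in the $\sim_{p}$, $\sim_{s}$ cases and the order-isomorphism $f$ in the $\sim_{hp}$, $\sim_{hhp}$ cases; the most delicate point is verifying that the constructed hp-bisimulation is downward closed, which gives the $\sim_{hhp}$ case, since there one must ensure the pairing restricts correctly along sub-configurations.
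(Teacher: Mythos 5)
Your proposal follows the same route the paper takes (and for the analogous probabilistic theorem merely sketches): congruence reduces soundness to checking $RDP$ and $RSP$ individually, $RDP$ is immediate from the transition rules of Table \ref{TRForGR}, and $RSP$ is the uniqueness of solutions of guarded specifications, proved by exhibiting a bisimulation built from substitution instances $u(\vec{p})$, $u(\vec{q})$ and using guardedness to push each matching step past a head action. The only detail worth adding is that since $p_i$ is merely bisimilar to (not identical with) $t_i(\vec{p})$, the candidate relation must be closed under composition with $\sim$ on both sides; with that standard adjustment the argument is correct and is essentially the paper's own.
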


\begin{theorem}[Completeness of $APTC$ with linear recursion]
Let $p$ and $q$ be closed $APTC$ with linear recursion terms, then,
\begin{enumerate}
  \item if $p\sim_{s} q$ then $p=q$;
  \item if $p\sim_{p} q$ then $p=q$;
  \item if $p\sim_{hp} q$ then $p=q$;
  \item if $p\sim_{hhp} q$ then $p=q$.
\end{enumerate}
\end{theorem}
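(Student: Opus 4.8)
The plan is to prove all four statements by a single argument, handling the four equivalences uniformly and isolating the extra bookkeeping required for $\sim_{hp}$ and $\sim_{hhp}$. First I would invoke the Elimination theorem of $APTC$ with linear recursion to pass to normal forms: there are linear recursive specifications $E_1$ and $E_2$ with $APTC\vdash p=\langle X_1|E_1\rangle$ and $APTC\vdash q=\langle Y_1|E_2\rangle$. By the soundness theorem the hypothesis $p\sim_\star q$ transports to $\langle X_1|E_1\rangle\sim_\star\langle Y_1|E_2\rangle$ for the relevant $\star\in\{s,p,hp,hhp\}$, so it suffices to show that bisimilar linear specifications are provably equal. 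The essential feature of linearity is that each equation is a sum of summands $(a_{11}\leftmerge\cdots\leftmerge a_{1i_1})X_1+\cdots+(a_{k1}\leftmerge\cdots\leftmerge a_{ki_k})X_k+(b_{11}\leftmerge\cdots\leftmerge b_{1j_1})+\cdots+(b_{1j_1}\leftmerge\cdots\leftmerge b_{lj_l})$, so every reachable state of $\langle X_1|E_1\rangle$ is provably equal to some $\langle X_i|E_1\rangle$, and its outgoing step transitions are exactly the set-labelled transitions $\langle X_i|E_1\rangle\xrightarrow{\{a_{m1},\ldots,a_{mi_m}\}}\langle X_m|E_1\rangle$ together with the terminating summands, as read off from Table \ref{TRForGR} and the rules for $+$, $\cdot$ and $\leftmerge$.

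Next I would construct a combined linear recursive specification $E$ that internalizes the bisimulation. For the step and pomset cases, let $R$ be a step (resp. pomset) bisimulation with $(\langle X_1|E_1\rangle,\langle Y_1|E_2\rangle)\in R$, and introduce one recursion variable $Z_{ij}$ for each pair with $(\langle X_i|E_1\rangle,\langle Y_j|E_2\rangle)\in R$. Since the two states are bisimilar, their step-labelled transitions match up to $\sim$ on the labels; linearity lets me select for each matched step a common prefix $(c_1\leftmerge\cdots\leftmerge c_k)$, and the equation for $Z_{ij}$ collects these matched summands together with the matched terminating summands. For $\sim_{hp}$ and $\sim_{hhp}$ the index set is instead the posetal relation: variables are indexed by reached pairs of states together with the order-isomorphism $f$, and the clause $C_1\xrightarrow{e_1}C_1'\Rightarrow C_2\xrightarrow{e_2}C_2'$ with $(C_1',f[e_1\mapsto e_2],C_2')\in R$ dictates how $f$ is extended along each summand; for $\sim_{hhp}$ I additionally keep $R$ downward closed so that $E$ is well defined.

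Then I would verify that both families $\{\langle X_i|E_1\rangle\}$ and $\{\langle Y_j|E_2\rangle\}$, re-indexed by the pairs, are solutions of $E$: substituting these terms for the variables $Z_{ij}$ into the right-hand sides of $E$ and using $RDP$ for $E_1$ (resp. $E_2$) together with the axioms $A1$--$A5$ and $P4$--$P9$ reproduces the defining equations, so each family satisfies $E$. Since $E$ is linear, it is guarded, so $RSP$ applies and gives $\langle Z_{11}|E\rangle=\langle X_1|E_1\rangle$ and $\langle Z_{11}|E\rangle=\langle Y_1|E_2\rangle$, whence $APTC\vdash p=q$.

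The hard part will be the well-definedness of $E$ in the true-concurrency setting. For step and pomset equivalence one must check that a matched pair of transitions can always be carried by the \emph{same} step $(c_1\leftmerge\cdots\leftmerge c_k)$ up to pomset isomorphism, so that a single equation serves both solutions; this rests on the fact that $\leftmerge$-prefixes in a linear specification correspond exactly to the set-labelled step transitions. For $\sim_{hp}$ and especially $\sim_{hhp}$ the genuine difficulty is tracking the isomorphisms $f$ consistently along every branch and respecting downward closure, so that $E$ stays a finite linear specification to which $RSP$ is applicable; once this is in place the remaining checks are routine inductions using the soundness theorem and the already-established axioms.
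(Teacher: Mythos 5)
Your proposal is correct and follows essentially the same route the paper takes (the statement itself is recalled without proof in the background chapter, but the paper's proof of the corresponding completeness theorem for $APPTC$ with linear recursion is the template): first apply the elimination theorem to reduce to terms $\langle X_1|E_1\rangle$ and $\langle Y_1|E_2\rangle$, then build a combined linear recursive specification with variables $Z_{XY}$ indexed by bisimilar pairs whose summands are the matched $(a_1\leftmerge\cdots\leftmerge a_m)$-prefixed and terminating summands, show via $RDP$ that both families of solutions satisfy it, and conclude by $RSP$. Your extra care about carrying the posetal isomorphisms $f$ for $\sim_{hp}$ and $\sim_{hhp}$ is a reasonable elaboration of what the paper dismisses with ``it can be proven similarly.''
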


\subsubsection{Abstraction}

\begin{definition}[Weak pomset transitions and weak step]
Let $\mathcal{E}$ be a PES and let $C\in\mathcal{C}(\mathcal{E})$, and $\emptyset\neq X\subseteq \hat{\mathbb{E}}$, if $C\cap X=\emptyset$ and
$\hat{C'}=\hat{C}\cup X\in\mathcal{C}(\mathcal{E})$, then $C\xRightarrow{X} C'$ is called a weak pomset transition from $C$ to $C'$, where we define
$\xRightarrow{e}\triangleq\xrightarrow{\tau^*}\xrightarrow{e}\xrightarrow{\tau^*}$. And $\xRightarrow{X}\triangleq\xrightarrow{\tau^*}\xrightarrow{e}\xrightarrow{\tau^*}$,
for every $e\in X$. When the events in $X$ are pairwise concurrent, we say that $C\xRightarrow{X}C'$ is a weak step.
\end{definition}

\begin{definition}[Branching pomset, step bisimulation]\label{BPSB}
Assume a special termination predicate $\downarrow$, and let $\surd$ represent a state with $\surd\downarrow$. Let $\mathcal{E}_1$, $\mathcal{E}_2$ be PESs. A branching pomset
bisimulation is a relation $R\subseteq\mathcal{C}(\mathcal{E}_1)\times\mathcal{C}(\mathcal{E}_2)$, such that:
 \begin{enumerate}
   \item if $(C_1,C_2)\in R$, and $C_1\xrightarrow{X}C_1'$ then
   \begin{itemize}
     \item either $X\equiv \tau^*$, and $(C_1',C_2)\in R$;
     \item or there is a sequence of (zero or more) $\tau$-transitions $C_2\xrightarrow{\tau^*} C_2^0$, such that $(C_1,C_2^0)\in R$ and $C_2^0\xRightarrow{X}C_2'$ with
     $(C_1',C_2')\in R$;
   \end{itemize}
   \item if $(C_1,C_2)\in R$, and $C_2\xrightarrow{X}C_2'$ then
   \begin{itemize}
     \item either $X\equiv \tau^*$, and $(C_1,C_2')\in R$;
     \item or there is a sequence of (zero or more) $\tau$-transitions $C_1\xrightarrow{\tau^*} C_1^0$, such that $(C_1^0,C_2)\in R$ and $C_1^0\xRightarrow{X}C_1'$ with
     $(C_1',C_2')\in R$;
   \end{itemize}
   \item if $(C_1,C_2)\in R$ and $C_1\downarrow$, then there is a sequence of (zero or more) $\tau$-transitions $C_2\xrightarrow{\tau^*}C_2^0$ such that $(C_1,C_2^0)\in R$
   and $C_2^0\downarrow$;
   \item if $(C_1,C_2)\in R$ and $C_2\downarrow$, then there is a sequence of (zero or more) $\tau$-transitions $C_1\xrightarrow{\tau^*}C_1^0$ such that $(C_1^0,C_2)\in R$
   and $C_1^0\downarrow$.
 \end{enumerate}

We say that $\mathcal{E}_1$, $\mathcal{E}_2$ are branching pomset bisimilar, written $\mathcal{E}_1\approx_{bp}\mathcal{E}_2$, if there exists a branching pomset bisimulation $R$,
such that $(\emptyset,\emptyset)\in R$.

By replacing pomset transitions with steps, we can get the definition of branching step bisimulation. When PESs $\mathcal{E}_1$ and $\mathcal{E}_2$ are branching step bisimilar,
we write $\mathcal{E}_1\approx_{bs}\mathcal{E}_2$.
\end{definition}

\begin{definition}[Rooted branching pomset, step bisimulation]\label{RBPSB}
Assume a special termination predicate $\downarrow$, and let $\surd$ represent a state with $\surd\downarrow$. Let $\mathcal{E}_1$, $\mathcal{E}_2$ be PESs. A branching pomset
bisimulation is a relation $R\subseteq\mathcal{C}(\mathcal{E}_1)\times\mathcal{C}(\mathcal{E}_2)$, such that:
 \begin{enumerate}
   \item if $(C_1,C_2)\in R$, and $C_1\xrightarrow{X}C_1'$ then $C_2\xrightarrow{X}C_2'$ with $C_1'\approx_{bp}C_2'$;
   \item if $(C_1,C_2)\in R$, and $C_2\xrightarrow{X}C_2'$ then $C_1\xrightarrow{X}C_1'$ with $C_1'\approx_{bp}C_2'$;
   \item if $(C_1,C_2)\in R$ and $C_1\downarrow$, then $C_2\downarrow$;
   \item if $(C_1,C_2)\in R$ and $C_2\downarrow$, then $C_1\downarrow$.
 \end{enumerate}

We say that $\mathcal{E}_1$, $\mathcal{E}_2$ are rooted branching pomset bisimilar, written $\mathcal{E}_1\approx_{rbp}\mathcal{E}_2$, if there exists a rooted branching pomset
bisimulation $R$, such that $(\emptyset,\emptyset)\in R$.

By replacing pomset transitions with steps, we can get the definition of rooted branching step bisimulation. When PESs $\mathcal{E}_1$ and $\mathcal{E}_2$ are rooted branching step
bisimilar, we write $\mathcal{E}_1\approx_{rbs}\mathcal{E}_2$.
\end{definition}

\begin{definition}[Branching (hereditary) history-preserving bisimulation]\label{BHHPB}
Assume a special termination predicate $\downarrow$, and let $\surd$ represent a state with $\surd\downarrow$. A branching history-preserving (hp-) bisimulation is a posetal
relation $R\subseteq\mathcal{C}(\mathcal{E}_1)\overline{\times}\mathcal{C}(\mathcal{E}_2)$ such that:

 \begin{enumerate}
   \item if $(C_1,f,C_2)\in R$, and $C_1\xrightarrow{e_1}C_1'$ then
   \begin{itemize}
     \item either $e_1\equiv \tau$, and $(C_1',f[e_1\mapsto \tau],C_2)\in R$;
     \item or there is a sequence of (zero or more) $\tau$-transitions $C_2\xrightarrow{\tau^*} C_2^0$, such that $(C_1,f,C_2^0)\in R$ and $C_2^0\xrightarrow{e_2}C_2'$ with
     $(C_1',f[e_1\mapsto e_2],C_2')\in R$;
   \end{itemize}
   \item if $(C_1,f,C_2)\in R$, and $C_2\xrightarrow{e_2}C_2'$ then
   \begin{itemize}
     \item either $X\equiv \tau$, and $(C_1,f[e_2\mapsto \tau],C_2')\in R$;
     \item or there is a sequence of (zero or more) $\tau$-transitions $C_1\xrightarrow{\tau^*} C_1^0$, such that $(C_1^0,f,C_2)\in R$ and $C_1^0\xrightarrow{e_1}C_1'$ with
     $(C_1',f[e_2\mapsto e_1],C_2')\in R$;
   \end{itemize}
   \item if $(C_1,f,C_2)\in R$ and $C_1\downarrow$, then there is a sequence of (zero or more) $\tau$-transitions $C_2\xrightarrow{\tau^*}C_2^0$ such that $(C_1,f,C_2^0)\in R$
   and $C_2^0\downarrow$;
   \item if $(C_1,f,C_2)\in R$ and $C_2\downarrow$, then there is a sequence of (zero or more) $\tau$-transitions $C_1\xrightarrow{\tau^*}C_1^0$ such that $(C_1^0,f,C_2)\in R$
   and $C_1^0\downarrow$.
 \end{enumerate}

$\mathcal{E}_1,\mathcal{E}_2$ are branching history-preserving (hp-)bisimilar and are written $\mathcal{E}_1\approx_{bhp}\mathcal{E}_2$ if there exists a branching hp-bisimulation
$R$ such that $(\emptyset,\emptyset,\emptyset)\in R$.

A branching hereditary history-preserving (hhp-)bisimulation is a downward closed branching hhp-bisimulation. $\mathcal{E}_1,\mathcal{E}_2$ are branching hereditary history-preserving
(hhp-)bisimilar and are written $\mathcal{E}_1\approx_{bhhp}\mathcal{E}_2$.
\end{definition}

\begin{definition}[Rooted branching (hereditary) history-preserving bisimulation]\label{RBHHPB}
Assume a special termination predicate $\downarrow$, and let $\surd$ represent a state with $\surd\downarrow$. A rooted branching history-preserving (hp-) bisimulation is a posetal relation $R\subseteq\mathcal{C}(\mathcal{E}_1)\overline{\times}\mathcal{C}(\mathcal{E}_2)$ such that:

 \begin{enumerate}
   \item if $(C_1,f,C_2)\in R$, and $C_1\xrightarrow{e_1}C_1'$, then $C_2\xrightarrow{e_2}C_2'$ with $C_1'\approx_{bhp}C_2'$;
   \item if $(C_1,f,C_2)\in R$, and $C_2\xrightarrow{e_2}C_1'$, then $C_1\xrightarrow{e_1}C_2'$ with $C_1'\approx_{bhp}C_2'$;
   \item if $(C_1,f,C_2)\in R$ and $C_1\downarrow$, then $C_2\downarrow$;
   \item if $(C_1,f,C_2)\in R$ and $C_2\downarrow$, then $C_1\downarrow$.
 \end{enumerate}

$\mathcal{E}_1,\mathcal{E}_2$ are rooted branching history-preserving (hp-)bisimilar and are written $\mathcal{E}_1\approx_{rbhp}\mathcal{E}_2$ if there exists rooted a branching
hp-bisimulation $R$ such that $(\emptyset,\emptyset,\emptyset)\in R$.

A rooted branching hereditary history-preserving (hhp-)bisimulation is a downward closed rooted branching hhp-bisimulation. $\mathcal{E}_1,\mathcal{E}_2$ are rooted branching
hereditary history-preserving (hhp-)bisimilar and are written $\mathcal{E}_1\approx_{rbhhp}\mathcal{E}_2$.
\end{definition}

\begin{definition}[Guarded linear recursive specification]
A recursive specification is linear if its recursive equations are of the form

$$(a_{11}\leftmerge\cdots\leftmerge a_{1i_1})X_1+\cdots+(a_{k1}\leftmerge\cdots\leftmerge a_{ki_k})X_k+(b_{11}\leftmerge\cdots\leftmerge b_{1j_1})+\cdots+(b_{1j_1}\leftmerge\cdots\leftmerge b_{lj_l})$$

where $a_{11},\cdots,a_{1i_1},a_{k1},\cdots,a_{ki_k},b_{11},\cdots,b_{1j_1},b_{1j_1},\cdots,b_{lj_l}\in \mathbb{E}\cup\{\tau\}$, and the sum above is allowed to be empty, in which case
it represents the deadlock $\delta$.

A linear recursive specification $E$ is guarded if there does not exist an infinite sequence of $\tau$-transitions
$\langle X|E\rangle\xrightarrow{\tau}\langle X'|E\rangle\xrightarrow{\tau}\langle X''|E\rangle\xrightarrow{\tau}\cdots$.
\end{definition}

The transition rules of $\tau$ are shown in Table \ref{TRForTau}, and axioms of $\tau$ are as Table \ref{AxiomsForTauLeft} shows.

\begin{center}
    \begin{table}
        $$\frac{}{\tau\xrightarrow{\tau}\surd}$$
        $$\frac{x\xrightarrow{e}\surd}{\tau_I(x)\xrightarrow{e}\surd}\quad e\notin I
        \quad\quad\frac{x\xrightarrow{e}x'}{\tau_I(x)\xrightarrow{e}\tau_I(x')}\quad e\notin I$$

        $$\frac{x\xrightarrow{e}\surd}{\tau_I(x)\xrightarrow{\tau}\surd}\quad e\in I
        \quad\quad\frac{x\xrightarrow{e}x'}{\tau_I(x)\xrightarrow{\tau}\tau_I(x')}\quad e\in I$$
        \caption{Transition rule of $\textrm{APTC}_{\tau}$}
        \label{TRForTau}
    \end{table}
\end{center}

\begin{theorem}[Conservitivity of $APTC$ with silent step and guarded linear recursion]
$APTC$ with silent step and guarded linear recursion is a conservative extension of $APTC$ with linear recursion.
\end{theorem}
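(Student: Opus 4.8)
The plan is to derive the statement as a direct instance of the Conservative Extension Theorem (Theorem \ref{TCE}). I take $T_0$ to be the TSS of $APTC$ with linear recursion, and $T_1$ to be the collection of transition rules in Table \ref{TRForTau}, i.e.\ the rule for the silent step $\tau$ together with the rules for the abstraction operator $\tau_I$. Then $T_0\oplus T_1$ is precisely the TSS of $APTC$ with silent step and guarded linear recursion, and it suffices to check the hypotheses of Theorem \ref{TCE}. First I would record that both $T_0$ and $T_0\oplus T_1$ are positive after reduction: the only negative premises occurring anywhere are the premises of the form $y\nrightarrow^{e}$ in the rules for the unless operator $\triangleleft$, and these already reside in $T_0$; since $T_1$ adds no new negative premises, the reduced systems agree on this point and both are positive after reduction.

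Next I would verify that $T_0$ is source-dependent, which is condition (1) of Theorem \ref{TCE}. This is done by inspecting the rules of $BATC$ (Table \ref{TRForBATC}), of the parallel, communication, conflict-elimination, unless and encapsulation operators (Table \ref{TRForAPTC}, Table \ref{TRForLeftParallel}), and of guarded recursion (Table \ref{TRForGR}): in each rule every variable in the source is source-dependent by clause (1) of the definition, and every variable appearing in a target first appears in the right-hand side of a premise whose left-hand side is already source-dependent, so clause (2) applies. Hence every rule of $T_0$, and therefore $T_0$ itself, is source-dependent.

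The essential step is condition (2) of Theorem \ref{TCE} for the rules in $T_1$, and here the verification falls entirely into the first disjunct, namely that the source of each new rule is fresh. Both the silent constant $\tau$ and the operator $\tau_I$ are function symbols of $\Sigma_1\setminus\Sigma_0$: the rule $\tau\xrightarrow{\tau}\surd$ has source $\tau$, and each of the four rules for $\tau_I$ has source $\tau_I(x)$, so in every case the source contains a fresh function symbol. The one point demanding care, which I would flag explicitly, is that the \emph{label} $\tau$ is not fresh—it already occurs in $T_0$ through the conclusions of the $\triangleleft$ rules—whereas freshness in condition (2) is a statement about function symbols in the source. Since it is the constant $\tau$ (and the operator $\tau_I$) that is the new function symbol, the first disjunct applies and no premise-based reasoning is required.

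With positivity after reduction, source-dependency of $T_0$, and the freshness of all sources in $T_1$ established, Theorem \ref{TCE} applies and yields that $T_0\oplus T_1$ is a conservative extension of $T_0$; that is, every transition $t\xrightarrow{a}t'$ and every termination $t\,\surd$ with $t\in\mathcal{T}(\Sigma_0)$ is derivable in $APTC$ with silent step and guarded linear recursion if and only if it is derivable in $APTC$ with linear recursion, which is exactly the claim. I expect the main obstacle to be not any deep argument but the bookkeeping of the two preceding items: a complete check of source-dependency across all operator and recursion rules, and a clean account of the label-versus-symbol distinction for $\tau$, since a careless reading could wrongly suggest that the $\tau$ rule fails the freshness requirement.
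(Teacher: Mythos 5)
Your proposal is correct and follows essentially the same route as the paper: the paper's own proof (written out explicitly for the probabilistic analogue, Conservativity of $APPTC$ with silent step and guarded linear recursion) simply invokes Theorem \ref{TCE}, citing source-dependency of the base TSS and the fact that the new rule has the fresh constant $\tau$ as its source, which is exactly your conditions (1) and (2). One small correction of bookkeeping: the extension named in this theorem adds only the silent-step constant $\tau$ and its single rule, not the $\tau_I$ rules --- the abstraction operator belongs to the subsequent extension $APTC_{\tau}$, which has its own separate conservativity theorem --- although including those rules in $T_1$ does not damage your argument, since their sources $\tau_I(x)$ contain a fresh function symbol as well.
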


\begin{theorem}[Congruence theorem of $APTC$ with silent step and guarded linear recursion]
Rooted branching truly concurrent bisimulation equivalences $\approx_{rbp}$, $\approx_{rbs}$, $\approx_{rbhp}$, and $\approx_{rbhhp}$ are all congruences with respect to $APTC$ with
silent step and guarded linear recursion.
\end{theorem}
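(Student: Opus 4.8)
The plan is to prove congruence one operator at a time: for every function symbol $f$ in the signature of $APTC$ with silent step and guarded linear recursion I would assume its arguments on the two sides are related by the relevant equivalence and then exhibit a concrete rooted branching (truly concurrent) bisimulation showing that $f$ applied to the two argument tuples is again related. The four equivalences $\approx_{rbp}$, $\approx_{rbs}$, $\approx_{rbhp}$, $\approx_{rbhhp}$ differ only in how transition labels are matched (a full pomset, a step, or a single event together with an order-isomorphism) and, for $\approx_{rbhhp}$, in the extra requirement of downward closure; so I would carry out the argument for $\approx_{rbp}$ in full and then indicate the routine modifications for the other three. For $\approx_{rbhhp}$ the only additional point is that the witnessing relation is downward closed, which is inherited from the downward closure of the component hhp-bisimulations. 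Throughout I would exploit the standard split of the signature into \emph{dynamic} operators ($+$ and $\cdot$), whose leading symbol disappears after a single transition, and \emph{static} operators ($\parallel$, $\leftmerge$, $\mid$, $\Theta$, $\triangleleft$, $\partial_H$ and $\tau_I$), whose leading symbol persists into the target of each transition.

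For the dynamic operators the root condition of Definition \ref{RBPSB} does the work. To show, for instance, that $x\approx_{rbp}x'$ and $y\approx_{rbp}y'$ imply $x+y\approx_{rbp}x'+y'$, I would take $R=\{(x+y,\,x'+y')\}$ together with the plain branching bisimulations underlying the hypotheses, and read off from the rules for $+$ in Table \ref{TRForBATC} that any first move of $x+y$ comes from a first move of $x$ or of $y$; the root clause on the components then supplies a matching \emph{concrete} (non-$\tau$-absorbed) move of $x'$ or $y'$, hence of $x'+y'$, after which the $+$ has vanished and the continuations are handled directly by the component branching bisimulations. Sequential composition is treated the same way using the rules for $\cdot$. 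This is exactly the point at which rootedness is indispensable: plain branching bisimilarity fails to be a congruence for $+$, and the root clause is what repairs it.

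For the static operators the leading symbol reappears in every target, so here I must prove the stronger statement that they preserve not only the rooted equivalence but also the underlying plain branching equivalence $\approx_{bp}$ (resp. $\approx_{bs}$, $\approx_{bhp}$, $\approx_{bhhp}$); rootedness of the composite then follows by one further application of the respective rule to the first step. For each such $f$ I would build the relation $R_f=\{(f(\vec{x}),f(\vec{y}))\mid x_i\approx_{bp}y_i\ \text{for each}\ i\}$ and verify the two-case matching clauses of Definition \ref{BPSB} by pushing a move of $f(\vec{x})$ through the corresponding rule in Tables \ref{TRForAPTC}, \ref{TRForLeftParallel} or \ref{TRForTau}, matching it with the move the component bisimulations provide and closing up under the (possibly $\tau$-prefixed) stuttering that branching bisimulation permits. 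For the pomset and step variants one additionally checks that the isomorphic label sets $X_1\sim X_2$ are preserved by the rule, and for the hp/hhp variants that the order-isomorphism is extended consistently. Guarded linear recursion is then immediate: by the rules of Table \ref{TRForGR} the process $\langle X_i\mid E\rangle$ has exactly the transitions of its guarded linear body $t_i(\langle X_1\mid E\rangle,\ldots,\langle X_n\mid E\rangle)$, so congruence for recursion reduces to congruence of the operators occurring in the right-hand sides, which the previous steps supply.

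The hard part will be the abstraction operator $\tau_I$ at the level of plain branching bisimulation. Because $\tau_I$ relabels every $e\in I$ to $\tau$ (Table \ref{TRForTau}), a move that was visible on the components can become silent after abstraction, and one must check that the two-case clause of Definition \ref{BPSB} survives this relabelling --- in particular that a $\tau$ produced by $\tau_I$ can always be answered either by stuttering or by a matching $\tau$, and that the termination predicate $\downarrow$ is transported correctly along $\tau$-sequences. Guardedness of the recursive specification is what keeps this well founded, since it forbids an infinite $\tau$-chain $\langle X\mid E\rangle\xrightarrow{\tau}\langle X'\mid E\rangle\xrightarrow{\tau}\cdots$ from arising under abstraction. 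The truly concurrent refinements contribute no new conceptual difficulty here, only the additional bookkeeping of pomset isomorphisms and, for $\approx_{rbhhp}$, of downward closure noted above.
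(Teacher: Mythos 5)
Your operator-by-operator strategy is the standard and essentially correct way to establish this result, but it is a genuinely different route from what the paper offers. The statement you are proving sits in the Backgrounds chapter, where the paper explicitly omits all proofs (deferring to the cited work on $APTC$); the closest proof actually present is that of the analogous congruence theorem for $APPTC$ with silent step and guarded linear recursion, which is a three-line argument: (i) guarded linear recursive bodies can be normalised by the axioms, (ii) the strong truly concurrent bisimulation equivalences are congruences for all operators and imply the rooted branching ones, ``so'' the rooted branching ones are congruences, and (iii) the extension of $\mathbb{E}$ by $\tau$ preserves this. Your version buys rigour that the paper's sketch lacks --- in particular, the paper's step (ii) does not actually transfer congruence from a finer equivalence to a coarser one, whereas your explicit treatment of the root condition for the dynamic operators $+$ and $\cdot$ (the only place where plain branching bisimilarity genuinely fails to be a congruence) and your reduction of the static operators to preservation of the plain branching equivalences is the argument one would really have to give.

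Two points to tighten. First, a scoping issue: the system named ``$APTC$ with silent step and guarded linear recursion'' contains the constant $\tau$ but \emph{not} the abstraction operator $\tau_I$; that operator is only added in $APTC_{\tau}$, whose congruence theorem is stated separately later. So the part of your proof you flag as the hard case --- pushing branching bisimulation through the relabelling performed by $\tau_I$ --- belongs to that later theorem and is not needed here; for the present statement the static operators are only $\parallel$, $\leftmerge$, $\mid$, $\Theta$, $\triangleleft$ and $\partial_H$. Second, the witnessing relation $R_f=\{(f(\vec{x}),f(\vec{y}))\mid x_i\approx_{bp}y_i\}$ is not closed under transitions as written, because the targets of the rules for the parallel operators change the leading symbol (e.g. $x\parallel y\xrightarrow{\{e_1,e_2\}}x'\between y'$, and $x\leftmerge y$ likewise lands in a $\between$-term). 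You need to take the union of these relations over the mutually dependent operators $\between$, $\parallel$, $\leftmerge$, $\mid$ (and include the identity for the residuals produced by $\cdot$), proving the congruence for this cluster simultaneously. This is routine but it is exactly the kind of bookkeeping your outline currently elides.
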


\begin{center}
\begin{table}
  \begin{tabular}{@{}ll@{}}
\hline No. &Axiom\\
  $B1$ & $e\cdot\tau=e$\\
  $B2$ & $e\cdot(\tau\cdot(x+y)+x)=e\cdot(x+y)$\\
  $B3$ & $x\leftmerge\tau=x$\\
\end{tabular}
\caption{Axioms of silent step}
\label{AxiomsForTauLeft}
\end{table}
\end{center}

\begin{theorem}[Elimination theorem of $APTC$ with silent step and guarded linear recursion]
Each process term in $APTC$ with silent step and guarded linear recursion is equal to a process term $\langle X_1|E\rangle$ with $E$ a guarded linear recursive specification.
\end{theorem}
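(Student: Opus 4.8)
The plan is to argue by induction on the structure of a process term $p$ of $APTC$ with silent step and guarded linear recursion, producing in each case a guarded linear recursive specification $E$ together with a derivation $p=\langle X_1|E\rangle$. The whole argument leans on three results already in hand: the Elimination theorem of $APTC$ with linear recursion, the Conservativity of $APTC$ with silent step and guarded linear recursion over $APTC$ with linear recursion, and $RDP$/$RSP$. The base cases are immediate: an atomic action $e\in\mathbb{E}$, the silent constant $\tau$, and $\delta$ are each already of the required shape, e.g. $e=\langle X_1|\{X_1=e\}\rangle$ with the trivially guarded one-equation specification. The key observation organizing the induction is that for every operator \emph{except} abstraction, $\tau$ may be treated as just another element of the action set $\mathbb{E}\cup\{\tau\}$.

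For the inductive steps corresponding to the $APTC$ operators $+,\cdot,\parallel,\leftmerge,\mid,\Theta,\triangleleft$ and $\partial_H$, I would invoke the Elimination theorem of $APTC$ with linear recursion over the extended action set. Concretely, given subterms placed by the induction hypothesis in the form $\langle X_1|E_1\rangle$ and $\langle Y_1|E_2\rangle$ with $E_1,E_2$ guarded linear, I would form a single merged specification on suitable combined recursion variables (the product construction used in the linear-recursion elimination proof), expanding the outermost operator with the axioms of Tables \ref{AxiomsForLeftParallelism} and \ref{AxiomsForEncapsulationLeft} so that each right-hand side is again a sum of terms $(a_{1}\leftmerge\cdots\leftmerge a_{m})X$ or $(b_{1}\leftmerge\cdots\leftmerge b_{n})$. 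Since $\tau$ behaves exactly like an atomic label for these rules, linearity transfers verbatim; one must only check that the combined specification remains guarded, which follows because the merge of two specifications with no infinite $\tau$-sequence again admits none.

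The genuinely new case is abstraction, $\tau_I(p)$. By the induction hypothesis $p=\langle X_1|E\rangle$ with $E=\{X_i=t_i\}$ guarded linear. I would introduce fresh variables $Y_i$ and define $E'=\{Y_i=t_i'\}$, where $t_i'$ is obtained by distributing $\tau_I$ inward over $+$ and $\cdot$ and renaming each atomic summand-head $a$ to $\tau$ when $a\in I$ and leaving it otherwise, in accordance with the transition rules for $\tau_I$ in Table \ref{TRForTau}. One then checks that $E'$ is linear, that $\langle Y_1|E'\rangle$ solves the equations satisfied by $\tau_I(\langle X_1|E\rangle)$, and concludes $\tau_I(\langle X_1|E\rangle)=\langle Y_1|E'\rangle$ by $RSP$.

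The main obstacle, and the step deserving real care, is guardedness of $E'$ in this abstraction case. Renaming labels in $I$ to $\tau$ can turn a chain of guarded transitions into an infinite sequence of $\tau$-transitions, which is precisely what the Definition of guarded linear recursive specification forbids; moreover $RSP$ may only be applied \emph{after} guardedness of $E'$ has been secured, so the argument cannot simply assume it. I would therefore establish, before invoking $RSP$, that the guardedness of the source $E$ together with the linear shape of its equations prevents the creation of such a $\tau$-cycle — i.e. that each equation of $E'$ still carries a genuine guard after abstraction — and only then derive the equality. Everything else (soundness of the rewriting at each step, and the legitimacy of replacing subterms inside the operators) is covered by the congruence and soundness theorems already stated, leaving this guardedness-preservation check as the single point requiring a dedicated lemma.
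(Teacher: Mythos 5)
Your treatment of the cases that actually occur in this theory --- structural induction reducing each term built from $+,\cdot,\parallel,\leftmerge,\mid,\Theta,\triangleleft,\partial_H$ and the constants of $\mathbb{E}\cup\{\tau\}\cup\{\delta\}$ to a system of linear equations over merged recursion variables, followed by an appeal to $RDP$/$RSP$ --- is essentially the paper's argument; the corresponding proof for $APPTC$ with silent step and guarded linear recursion (Theorem \ref{ETTau}) proceeds exactly this way, only more tersely. The problem is the case you single out as ``the genuinely new case'': abstraction. The theory named in the statement, $APTC$ with silent step and guarded linear recursion, contains the constant $\tau$ but \emph{not} the abstraction operator $\tau_I$; that operator is only added afterwards to form $APTC_{\tau}$ with guarded linear recursion, and the paper deliberately states no elimination theorem for that larger theory. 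So the $\tau_I$ case does not belong in this induction at all.

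More seriously, the ``dedicated lemma'' on which you say the whole argument hinges --- that guardedness of the source $E$ together with linearity prevents the renamed specification $E'$ from acquiring a $\tau$-cycle --- is false. Take $E=\{X=aX\}$ with $a\in I$: it is guarded and linear and contains no $\tau$ at all, yet your construction turns it into $E'=\{Y=\tau Y\}$, which admits the infinite sequence $\langle Y|E'\rangle\xrightarrow{\tau}\langle Y|E'\rangle\xrightarrow{\tau}\cdots$ and is therefore not guarded, so $RSP$ cannot be invoked; indeed $\tau_{\{a\}}(\langle X|X=aX\rangle)$ is not equal to any $\langle X_1|E\rangle$ with $E$ guarded linear. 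This is precisely the example the paper itself gives when motivating the cluster fair abstraction rule $CFAR$. Delete the abstraction case from your induction; once it is gone, the remaining argument is sound and coincides with the paper's.
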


\begin{theorem}[Soundness of $APTC$ with silent step and guarded linear recursion]
Let $x$ and $y$ be $APTC$ with silent step and guarded linear recursion terms. If $APTC$ with silent step and guarded linear recursion $\vdash x=y$, then
\begin{enumerate}
  \item $x\approx_{rbs} y$;
  \item $x\approx_{rbp} y$;
  \item $x\approx_{rbhp} y$;
  \item $x\approx_{rbhhp} y$.
\end{enumerate}
\end{theorem}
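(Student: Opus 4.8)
The plan is to reduce the whole theorem to a check of individual axioms and then concentrate the work on the genuinely new ingredients. Since the preceding congruence theorem tells us that $\approx_{rbs}$, $\approx_{rbp}$, $\approx_{rbhp}$ and $\approx_{rbhhp}$ are congruences for $APTC$ with silent step and guarded linear recursion, and since provable equality $\vdash x=y$ is generated from instances of the axioms by reflexivity, symmetry, transitivity and closure under contexts, it suffices to verify that each individual axiom, read as an equation between open terms, is sound modulo each of the four equivalences. Soundness of the entire theory then follows by induction on the length of a derivation, the congruence property discharging the context-closure step and the fact that each equivalence relation is reflexive, symmetric and transitive discharging the remaining closure rules.

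First I would dispose of the inherited axioms. The laws $A1$--$A7$, $P1$--$P9$, $C10$--$C17$, $CE18$--$CE23$, $U24$--$U36$ and $D1$--$D6$ were already shown sound modulo the strong truly concurrent equivalences $\sim_s$, $\sim_p$, $\sim_{hp}$ and $\sim_{hhp}$. Because each strong equivalence is contained in the corresponding rooted branching equivalence -- a strong bisimulation (respectively posetal relation) is in particular a branching one and relates residuals that are again bisimilar, so it satisfies the rooted transfer conditions with empty $\tau^*$-prefixes -- these axioms remain sound here with no further work. The same containment settles $RDP$: by the two rules of Table \ref{TRForGR} the term $\langle X_i|E\rangle$ and its unfolding $t_i(\langle X_1|E\rangle,\cdots,\langle X_n|E\rangle)$ have literally the same outgoing transitions, so the identity-style relation is a strong bisimulation and hence witnesses all four rooted branching equivalences.

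The real content lies in the silent-step axioms $B1$ ($e\cdot\tau=e$), $B2$ ($e\cdot(\tau\cdot(x+y)+x)=e\cdot(x+y)$) and $B3$ ($x\leftmerge\tau=x$). For each I would exhibit an explicit relation containing the pair of the two sides, closed under reachable states, and check the clauses of Definition \ref{RBPSB} for the pomset and step cases and of Definition \ref{RBHHPB} for the hp and hhp cases (carrying along the order-isomorphism $f$ and, for hhp, verifying downward closure). The point to stress is the \emph{root} condition: the leading observable action $e$ in $B1$ and $B2$, and the shape of $\leftmerge$ in $B3$, force the first move to be matched strongly, after which the residual $\tau$ may be absorbed by an ordinary branching bisimulation. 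This is precisely why these rooted laws are sound while their unrooted analogues (such as $\tau\cdot x=x$) are not.

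Finally, for $RSP$ I would show that a guarded linear recursive specification has a unique solution up to each of the four rooted branching equivalences. Given two solutions I would build the relation pairing their corresponding components and invoke guardedness -- the absence of any infinite sequence $\langle X|E\rangle\xrightarrow{\tau}\langle X'|E\rangle\xrightarrow{\tau}\cdots$ -- to argue that every block of silent steps terminates, so that the branching transfer conditions can be discharged and the root condition checked on the first step. I expect this to be the main obstacle: establishing uniqueness in the presence of silent steps genuinely needs the guardedness hypothesis to exclude divergence, and in the hp and hhp cases the relation must simultaneously respect the configuration isomorphisms (and be downward closed for hhp), which is the most delicate bookkeeping in the whole proof.
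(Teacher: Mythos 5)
Your proposal is correct and follows essentially the same route as the paper: the paper's own argument (given explicitly only for the analogous probabilistic theorem, and omitted here) is exactly the reduction via congruence and equivalence to an axiom-by-axiom soundness check, which it then declares trivial and omits. Your elaboration of the inherited axioms, the root condition for $B1$--$B3$, and the role of guardedness in establishing uniqueness of solutions for $RSP$ supplies precisely the details the paper leaves to the reader, and correctly identifies $RSP$ as the only genuinely delicate case.
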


\begin{theorem}[Completeness of $APTC$ with silent step and guarded linear recursion]
Let $p$ and $q$ be closed $APTC$ with silent step and guarded linear recursion terms, then,
\begin{enumerate}
  \item if $p\approx_{rbs} q$ then $p=q$;
  \item if $p\approx_{rbp} q$ then $p=q$;
  \item if $p\approx_{rbhp} q$ then $p=q$;
  \item if $p\approx_{rbhhp} q$ then $p=q$.
\end{enumerate}
\end{theorem}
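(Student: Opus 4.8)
The plan is to follow the standard recipe for completeness of a process algebra with silent step and guarded linear recursion: first reduce arbitrary closed terms to solutions of guarded linear recursive specifications, and then convert a semantic bisimulation into a provable identity by exhibiting both sides as solutions of one and the same guarded linear recursive specification, so that $RSP$ forces them equal. I would carry out the argument in full for the rooted branching step case $\approx_{rbs}$ and then indicate the (purely decorative) modifications needed for $\approx_{rbp}$, $\approx_{rbhp}$ and $\approx_{rbhhp}$.

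First I would invoke the Elimination theorem of $APTC$ with silent step and guarded linear recursion to write $p=\langle X_1|E_1\rangle$ and $q=\langle Y_1|E_2\rangle$ with $E_1,E_2$ guarded linear recursive specifications; by the corresponding Soundness theorem we still have $\langle X_1|E_1\rangle\approx_{rbs}\langle Y_1|E_2\rangle$. Hence it suffices to prove the reduced statement: if $\langle X_1|E_1\rangle\approx_{rbs}\langle Y_1|E_2\rangle$ for guarded linear recursive specifications, then $\langle X_1|E_1\rangle=\langle Y_1|E_2\rangle$ is derivable.

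For the core construction, let $R$ be a rooted branching step bisimulation witnessing $\langle X_1|E_1\rangle\approx_{rbs}\langle Y_1|E_2\rangle$. Since both specifications are linear, every reachable state is provably equal to some $\langle X_i|E_1\rangle$ or $\langle Y_j|E_2\rangle$, so $R$ induces a finite set of related pairs. I would introduce a fresh recursion variable $Z_{ij}$ for each related pair $(\langle X_i|E_1\rangle,\langle Y_j|E_2\rangle)$ and assemble a guarded linear recursive specification $E$ whose equation for $Z_{ij}$ collects exactly the matched summands. For a non-initial pair, branching bisimilarity answers a transition on one side by a sequence of $\tau$-steps followed by a matching transition on the other; I would contract each such $\tau$-prefix using axioms $B1$ and $B2$ (together with $A3$) so that both sides literally share the summand, and guardedness of $E_1,E_2$ guarantees these $\tau$-sequences are finite, so $E$ is again guarded linear. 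For the initial pair the rooted clause of $\approx_{rbs}$ forces the first transitions to match exactly, with no $\tau$-absorption at the root, so the top equation of $E$ is set up directly. Having shown that $\{\langle X_i|E_1\rangle\}$ and $\{\langle Y_j|E_2\rangle\}$ each provably satisfy the equations of $E$, $RSP$ yields $\langle X_1|E_1\rangle=\langle Z_{11}|E\rangle=\langle Y_1|E_2\rangle$.

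\textbf{The main obstacle.} The delicate part is the bookkeeping of $\tau$-transitions in the branching (non-rooted) tail: I would have to verify that the $\tau$-prefix contractions via $B1$/$B2$ produce equations that \emph{both} families of processes provably satisfy, and that the merged specification $E$ is genuinely guarded, i.e. linear and free of infinite $\tau$-cycles. The pomset and history-preserving variants reuse this scaffolding essentially verbatim, replacing single events by pomset/step labels matched up to isomorphism, carrying along the order-isomorphism $f$ of the posetal relation for the hp-case, and additionally requiring the witnessing relation to be downward closed for the hhp-case; none of these decorations touches the $RSP$ step, so the substance of the proof remains the step case.
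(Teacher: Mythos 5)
Your proposal follows essentially the same route as the paper: the paper leaves this background theorem's proof to the reader, but its proof of the directly analogous probabilistic statement (Completeness of $APPTC$ with silent step and guarded linear recursion) proceeds exactly as you describe — eliminate to guarded linear recursive specifications, build a merged guarded specification with a variable $Z_{XY}$ for each bisimilar pair whose equation collects the matched summands (with $\tau$-summands handled via $RDP$, $A3$ and $B1$--$B2$, and the rooted condition exempting the initial pair), exhibit both families as solutions, and conclude by $RSP$. The only refinement the paper adds to what you sketch is the explicit auxiliary terms $s_{XY}$ (equal to $\tau\langle X|E_1\rangle+u_{XY}$ for non-root pairs absorbing a $\tau$ on the other side, and to $\langle X|E_1\rangle$ otherwise), which is precisely the bookkeeping you flag as the main obstacle.
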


The transition rules of $\tau_I$ are shown in Table \ref{TRForTau}, and the axioms are shown in Table \ref{AxiomsForAbstractionLeft}.

\begin{theorem}[Conservitivity of $APTC_{\tau}$ with guarded linear recursion]
$APTC_{\tau}$ with guarded linear recursion is a conservative extension of $APTC$ with silent step and guarded linear recursion.
\end{theorem}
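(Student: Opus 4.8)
The plan is to derive the result directly from the general Conservative Extension Theorem (Theorem \ref{TCE}), exactly as in the preceding conservativity results of this section. I would set $T_0$ to be the TSS of $APTC$ with silent step and guarded linear recursion, and let $T_1$ consist of the transition rules for the abstraction operator $\tau_I$ displayed in Table \ref{TRForTau} (the four rules splitting on $e\notin I$ and $e\in I$). The extended system is then $T_0\oplus T_1 = APTC_{\tau}$ with guarded linear recursion, so the goal reduces to verifying the hypotheses of Theorem \ref{TCE} for this pair.

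First I would record that $T_0$ is source-dependent. This property has been maintained through the entire chain of extensions starting from $BATC$: each new module adds rules whose premises are of the form $x\xrightarrow{e}x'$ (or a predicate) with $x$ occurring in the source, so source-dependency of the source variables propagates to the target terms. Hence it holds for $APTC$ with silent step and guarded linear recursion. Next I would check condition (2) of Theorem \ref{TCE} for each $\rho\in T_1$. Every rule for $\tau_I$ in Table \ref{TRForTau} has source $\tau_I(x)$, and since $\tau_I$ is a function symbol in $\Sigma_1\setminus\Sigma_0$, the source is fresh; thus the first disjunct of condition (2) is satisfied outright for all four rules, and no analysis of the premises is needed. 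Intuitively this reflects that $\tau_I$ cannot occur in any term of $\mathcal{T}(\Sigma_0)$, so the new rules never fire on old terms and add no transitions $t\xrightarrow{a}t'$ or predicates $tP$ with $t\in\mathcal{T}(\Sigma_0)$.

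The step I expect to require the most care is establishing that $T_0$ and $T_0\oplus T_1$ are \emph{positive after reduction}. The underlying APTC rules are not syntactically positive: the rules for the unless operator $\triangleleft$ carry negative premises such as $y\nrightarrow^{e}$, so positivity cannot be read off directly. I would argue, following the standard reduction procedure for these truly concurrent specifications, that the negative premises are resolved on strongly normalizing terms (cf. Theorem \ref{SN}) and that the reduced specification is positive. The essential point is that adjoining the purely positive $\tau_I$ rules of $T_1$, whose sources are fresh, introduces neither new negative premises nor any circular dependency among rules, so positivity after reduction is inherited by the extension from $T_0$.

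Once source-dependency of $T_0$, the freshness condition (2), and positivity after reduction of both systems are in hand, Theorem \ref{TCE} applies and yields that $T_0\oplus T_1$ is a conservative extension of $T_0$; that is, $APTC_{\tau}$ with guarded linear recursion is a conservative extension of $APTC$ with silent step and guarded linear recursion, as claimed.
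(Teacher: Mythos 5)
Your proposal is correct and follows essentially the same route as the paper: the paper's own argument (given explicitly for the analogous $APPTC_{\tau}$ and $APPTC_{G_{\tau}}$ statements, and left to the reader here) consists precisely of invoking Theorem \ref{TCE} with the two observations that the rules of $APTC$ with silent step and guarded linear recursion are source-dependent and that the $\tau_I$ rules have the fresh operator $\tau_I$ in their source. Your additional care in checking the ``positive after reduction'' hypothesis goes beyond what the paper records, but it is consistent with and strengthens the same argument.
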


\begin{theorem}[Congruence theorem of $APTC_{\tau}$ with guarded linear recursion]
Rooted branching truly concurrent bisimulation equivalences $\approx_{rbp}$, $\approx_{rbs}$, $\approx_{rbhp}$ and $\approx_{rbhhp}$ are all congruences with respect to $APTC_{\tau}$
with guarded linear recursion.
\end{theorem}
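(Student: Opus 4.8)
The plan is to reduce the statement to a single new operator. The preceding congruence theorem for $APTC$ with silent step and guarded linear recursion already shows that $\approx_{rbp}$, $\approx_{rbs}$, $\approx_{rbhp}$ and $\approx_{rbhhp}$ are preserved by every operator of that system. Since an equivalence is a congruence for a signature exactly when it is substitutive under each operator separately (substitutivity then lifts to arbitrary contexts by induction on context structure), and since the congruence arguments for the old operators use only the transition rules of those operators together with the hypothesis that the immediate arguments are related — and so are insensitive to whether $\tau_I$ occurs inside the arguments — it suffices to prove that the abstraction operator $\tau_I$, whose rules are given in Table \ref{TRForTau}, preserves each of the four equivalences. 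I would first establish the non-rooted statement, that $\tau_I$ preserves $\approx_{bp}$, $\approx_{bs}$, $\approx_{bhp}$ and $\approx_{bhhp}$, and then bootstrap the rooted statement, since the rooted equivalences are defined with their non-rooted counterparts appearing in the transfer clauses (Definitions \ref{RBPSB} and \ref{RBHHPB}).

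For the non-rooted step and pomset cases, assume $p \approx_{bs} q$ witnessed by a branching step bisimulation $S$, and take as candidate relation the image of $S$ under $\tau_I$, namely the pairs $(\tau_I(p'), \tau_I(q'))$ with $(p',q') \in S$. I would verify the four clauses of Definition \ref{BPSB}. Reading Table \ref{TRForTau}, every transition of $\tau_I(p')$ descends from a unique transition of $p'$ in which labels lying in $I$ are renamed to $\tau$ and labels outside $I$ are kept. Matching the underlying transition against $q'$ through $S$ yields either a silent (stuttering) match or a sequence $q' \xrightarrow{\tau^*} q'' \xRightarrow{X} q'''$; applying $\tau_I$ to this sequence, again via Table \ref{TRForTau}, produces the required matching transition of $\tau_I(q')$ landing inside the candidate relation. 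The termination clauses transfer immediately because $\tau_I$ does not affect $\downarrow$. The pomset case is identical, and the step case needs only the observation that renaming preserves pairwise concurrency of the events in the label.

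For the hp- and hhp-cases I would carry the order-isomorphism along and take the posetal relation obtained by applying $\tau_I$ on both coordinates, updating the isomorphism through the renaming so that a single event $e_1$ is tracked by $f[e_1 \mapsto e_2]$ when $e_1 \notin I$ and by $f[e_1 \mapsto \tau]$ when $e_1 \in I$ becomes silent; the clauses of Definition \ref{BHHPB} are then checked as above. For hhp one must in addition confirm that the resulting relation is downward closed, which I expect to follow from downward closure of $S$ together with the fact that $\tau_I$ acts pointwise on events and therefore commutes with the pointwise inclusion ordering on the posetal product. Finally, for the rooted versions I take the relation pairing $\tau_I(p)$ with $\tau_I(q)$ together with the identity elsewhere and verify the rooted clauses of Definitions \ref{RBPSB} and \ref{RBHHPB}: an outermost move of $\tau_I(p)$ comes from an outermost move of $p$, which a rooted branching bisimulation between $p$ and $q$ matches \emph{strongly}, and pushing that single move through $\tau_I$ gives a strong match whose residuals are related by the already-established non-rooted equivalence.

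The step I expect to be the main obstacle is the hhp-case, specifically the preservation of downward closure under abstraction: one must check that renaming abstracted events to $\tau$ does not disturb the pointwise inclusion ordering on posetal triples and that the updated isomorphisms remain mutually consistent once several events have been sent to $\tau$, so that every sub-configuration of an abstracted pair is again the abstraction of a sub-configuration of a related pair. A secondary delicate point is the interaction of abstraction with the branching (stuttering) clause: a label that is already silent and one that becomes silent only after abstraction must both be routed into the ``either $X \equiv \tau^*$'' alternative of Definition \ref{BPSB}, so the case split on whether the label lies inside $I$, outside $I$, or straddles $I$ has to be arranged so that each matched sub-step lands in the correct branch.
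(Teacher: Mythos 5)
Your proposal is correct and follows the same route the paper takes for the analogous congruence theorems (e.g.\ for $APPTC_{\tau}$ and $APPTC_{G_{\tau}}$ with guarded linear recursion): observe that congruence for the operators of the underlying theory is already established, reduce the claim to preservation of the four equivalences by the single new operator $\tau_I$, and pass from the non-rooted to the rooted versions via the transfer clauses of Definitions \ref{RBPSB} and \ref{RBHHPB}. The paper itself states this statement in the background chapter with the proof left to the reader, and in the later analogous proofs declares the $\tau_I$-preservation step ``trivial and omitted,'' so your write-up simply supplies the verification details (the candidate relation, the routing of abstracted labels into the stuttering clause, and the downward-closure check for hhp) that the paper elides.
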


\begin{center}
\begin{table}
  \begin{tabular}{@{}ll@{}}
\hline No. &Axiom\\
  $TI1$ & $e\notin I\quad \tau_I(e)=e$\\
  $TI2$ & $e\in I\quad \tau_I(e)=\tau$\\
  $TI3$ & $\tau_I(\delta)=\delta$\\
  $TI4$ & $\tau_I(x+y)=\tau_I(x)+\tau_I(y)$\\
  $TI5$ & $\tau_I(x\cdot y)=\tau_I(x)\cdot\tau_I(y)$\\
  $TI6$ & $\tau_I(x\leftmerge y)=\tau_I(x)\leftmerge\tau_I(y)$\\
\end{tabular}
\caption{Axioms of abstraction operator}
\label{AxiomsForAbstractionLeft}
\end{table}
\end{center}

\begin{theorem}[Soundness of $APTC_{\tau}$ with guarded linear recursion]
Let $x$ and $y$ be $APTC_{\tau}$ with guarded linear recursion terms. If $APTC_{\tau}$ with guarded linear recursion $\vdash x=y$, then
\begin{enumerate}
  \item $x\approx_{rbs} y$;
  \item $x\approx_{rbp} y$;
  \item $x\approx_{rbhp} y$;
  \item $x\approx_{rbhhp} y$.
\end{enumerate}
\end{theorem}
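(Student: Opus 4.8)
The plan is to reduce the theorem to checking that every axiom of $APTC_{\tau}$ with guarded linear recursion is sound modulo each of the four rooted branching truly concurrent bisimulation equivalences. Since these equivalences are congruences with respect to all operators of $APTC_{\tau}$ with guarded linear recursion (by the congruence theorem stated just above), provable equality is generated from the axioms by reflexivity, symmetry, transitivity, and closure under the operators, and each such closure step preserves the equivalence once the axioms themselves are sound. Because $\approx_{rbp},\approx_{rbs},\approx_{rbhp},\approx_{rbhhp}$ are equivalence relations, it therefore suffices to prove, for each axiom $s=t$, that $s\approx t$ in each of the four senses.

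The axioms split into two groups. Those inherited from $APTC$ with silent step and guarded linear recursion (namely $A1$--$A7$, $P1$--$P9$, $C10$--$C17$, $CE18$--$CE23$, $U24$--$U36$, $D1$--$D6$, $RDP$, $RSP$, and $B1$--$B3$) are already sound by the soundness theorem for $APTC$ with silent step and guarded linear recursion stated above. Hence only the abstraction axioms $TI1$--$TI6$ of Table~\ref{AxiomsForAbstractionLeft} are new and require verification. For each of these I would exhibit an explicit witnessing relation and check the clauses of Definitions~\ref{RBPSB} and~\ref{RBHHPB}.

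For the step and pomset cases the witnessing relation is $R=\{(s,t)\}\cup\mathrm{Id}$, together with the identity on all reachable residuals; the transition rules for $\tau_I$ in Table~\ref{TRForTau} show that $\tau_I$ merely relabels every action in $I$ to $\tau$ while leaving the branching structure of its argument intact, so each axiom $TI1$--$TI6$ matches transitions on the two sides exactly. For instance, $TI2$ ($\tau_I(e)=\tau$ with $e\in I$) is witnessed by the single step $\tau_I(e)\xrightarrow{\tau}\surd$ against $\tau\xrightarrow{\tau}\surd$, and the distributivity axioms $TI4$--$TI6$ follow because the rules of $\tau_I$ commute with $+$, $\cdot$ and $\leftmerge$ at the level of derivations. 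The rootedness requirement---that the very first transition be matched by an equally labelled transition rather than absorbed into a $\tau$-sequence---is immediate, since on both sides of each axiom the top-level operator is preserved and no spurious initial $\tau$ is introduced.

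For the hp- and hhp-cases I would upgrade $R$ to a posetal relation of triples $(C_1,f,C_2)$ with $f$ the evident order-isomorphism between the configurations reached on the two sides; the relabelling performed by $\tau_I$ does not alter the causal ordering of the underlying events, so $f$ remains an isomorphism after each matched step, yielding branching hp-bisimilarity. The main obstacle I anticipate is the hhp-case, where one must additionally verify downward closure of the posetal relation. Here I would argue that the relation built for each $TI$ axiom is closed under passing to sub-configurations precisely because $\tau_I$ acts pointwise on events and commutes with the prefix order, so restricting a matched pair of configurations again yields a matched pair. Once downward closure is confirmed for $TI1$--$TI6$, all four soundness claims follow.
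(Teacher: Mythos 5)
Your proposal is correct and follows essentially the same route the paper takes: the paper states this theorem in the background chapter without proof, and for the analogous result it does prove (Soundness of $APPTC_{\tau}$ with guarded linear recursion) it likewise reduces, via the congruence and equivalence properties, to checking each abstraction axiom $TI1$--$TI6$ modulo the four rooted branching equivalences, omitting that check as trivial. Your write-up simply supplies the per-axiom witnessing relations that the paper leaves implicit, so no further comparison is needed.
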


\begin{definition}[Cluster]
Let $E$ be a guarded linear recursive specification, and $I\subseteq \mathbb{E}$. Two recursion variable $X$ and $Y$ in $E$ are in the same cluster for $I$ iff there exist sequences of
transitions $\langle X|E\rangle\xrightarrow{\{b_{11},\cdots, b_{1i}\}}\cdots\xrightarrow{\{b_{m1},\cdots, b_{mi}\}}\langle Y|E\rangle$ and $\langle Y|E\rangle\xrightarrow{\{c_{11},\cdots, c_{1j}\}}\cdots\xrightarrow{\{c_{n1},\cdots, c_{nj}\}}\langle X|E\rangle$, where $b_{11},\cdots,b_{mi},c_{11},\cdots,c_{nj}\in I\cup\{\tau\}$.

$a_1\leftmerge\cdots\leftmerge a_k$ or $(a_1\leftmerge\cdots\leftmerge a_k) X$ is an exit for the cluster $C$ iff: (1) $a_1\leftmerge\cdots\leftmerge a_k$ or
$(a_1\leftmerge\cdots\leftmerge a_k) X$ is a summand at the right-hand side of the recursive equation for a recursion variable in $C$, and (2) in the case of
$(a_1\leftmerge\cdots\leftmerge a_k) X$, either $a_l\notin I\cup\{\tau\}(l\in\{1,2,\cdots,k\})$ or $X\notin C$.
\end{definition}

\begin{center}
\begin{table}
  \begin{tabular}{@{}ll@{}}
\hline No. &Axiom\\
  $CFAR$ & If $X$ is in a cluster for $I$ with exits \\
           & $\{(a_{11}\leftmerge\cdots\leftmerge a_{1i})Y_1,\cdots,(a_{m1}\leftmerge\cdots\leftmerge a_{mi})Y_m, b_{11}\leftmerge\cdots\leftmerge b_{1j},\cdots,b_{n1}\leftmerge\cdots\leftmerge b_{nj}\}$, \\
           & then $\tau\cdot\tau_I(\langle X|E\rangle)=$\\
           & $\tau\cdot\tau_I((a_{11}\leftmerge\cdots\leftmerge a_{1i})\langle Y_1|E\rangle+\cdots+(a_{m1}\leftmerge\cdots\leftmerge a_{mi})\langle Y_m|E\rangle+b_{11}\leftmerge\cdots\leftmerge b_{1j}+\cdots+b_{n1}\leftmerge\cdots\leftmerge b_{nj})$\\
\end{tabular}
\caption{Cluster fair abstraction rule}
\label{CFARLeft}
\end{table}
\end{center}

\begin{theorem}[Soundness of $CFAR$]
$CFAR$ is sound modulo rooted branching truly concurrent bisimulation equivalences $\approx_{rbs}$, $\approx_{rbp}$, $\approx_{rbhp}$ and $\approx_{rbhhp}$.
\end{theorem}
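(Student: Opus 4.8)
The plan is to prove all four cases by building, for each equivalence, an explicit relation witnessing it, and to extract from the guardedness of $E$ the fairness on which the rule rests. Write $u$ for the exit sum $(a_{11}\leftmerge\cdots\leftmerge a_{1i})\langle Y_1|E\rangle+\cdots+b_{n1}\leftmerge\cdots\leftmerge b_{nj}$ appearing on the right. The leading $\tau$ on both sides is exactly what handles the root requirement: the initial $\tau$-step of $\tau\cdot\tau_I(\langle X|E\rangle)$ is matched by the initial $\tau$-step of $\tau\cdot\tau_I(u)$, and by the standard fact that prefixing both members of a (non-rooted) branching bisimilarity by $\tau$ yields a rooted one, it then suffices to establish the non-rooted branching equivalence between $\tau_I(\langle X|E\rangle)$ and $\tau_I(u)$. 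First I would record two structural facts: every summand of a cluster variable $Z\in C$ whose action lies in $I\cup\{\tau\}$ and whose target is again in $C$ becomes a genuine $\tau$-transition of $\tau_I(\langle Z|E\rangle)$ that stays inside the cluster, and, because $E$ is guarded, there is no infinite sequence of such internal $\tau$-steps, so every computation inside the cluster reaches an exit after finitely many silent moves.

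For the step and pomset cases I would then take the relation
$$R=\{(\tau_I(\langle Z|E\rangle),\tau_I(u))\mid Z\in C\}\cup\{(r,r)\mid r\textrm{ a closed }APTC_\tau\textrm{ term}\},$$
the diagonal component absorbing every term reached once an exit has fired, since after an exit both sides continue through the identical residual $\tau_I(\langle Y_l|E\rangle)$ (or $\surd$). Checking the four clauses of Definition \ref{BPSB}: an internal move $\tau_I(\langle Z|E\rangle)\xrightarrow{\tau}\tau_I(\langle Z'|E\rangle)$ with $Z'\in C$ is matched by the right-hand side standing still, via the ``either $X\equiv\tau^*$'' alternative, because the new pair is again of the first kind; a visible exit of $\tau_I(\langle Z|E\rangle)$ is literally one of the summands of $\tau_I(u)$, matched directly into the diagonal; and a silent exit (all actions in $I$ but target outside $C$) is matched by the identical summand of $\tau_I(u)$, both sides landing on the same term in the diagonal. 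The one genuinely nontrivial direction is an exit of the right-hand side $\tau_I(u)$: here I would invoke the defining mutual-reachability of the cluster to supply internal $\tau$-transitions $\tau_I(\langle Z|E\rangle)\xrightarrow{\tau^*}\tau_I(\langle Z''|E\rangle)$ reaching the variable $Z''\in C$ whose equation carries that particular exit summand, and then fire it, which is precisely the second alternative $C_1\xrightarrow{\tau^*}C_1^0\xRightarrow{X}C_1'$ of the matching clause, with every intermediate state still in $R$. The termination clauses are immediate, since guardedness forces the cluster to be left in finitely many $\tau$-steps, so neither side can terminate or stall while the other cannot.

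For $\approx_{rbhp}$ and $\approx_{rbhhp}$ I would upgrade $R$ to a posetal relation by carrying along the configuration isomorphism accumulated so far, extending it by $f[e\mapsto e]$ whenever a common exit event or a common internal event is fired, and, in the hhp-case, closing the result downward; because the events produced on the two sides are the very same atomic events of $\mathbb{E}$ (the exits $a_{l1}\leftmerge\cdots\leftmerge a_{li}$ and the internal actions are reproduced verbatim), the isomorphism is preserved with no extra work, and downward closure is inherited from the diagonal. The main obstacle is the nontrivial matching direction isolated above: it is here, and only here, that the argument is not formal bookkeeping but genuinely consumes the two hypotheses packaged into the cluster---its mutual-reachability, which lets one navigate to the exit-bearing variable, and the no-infinite-$\tau$ consequence of guardedness, which certifies that this navigation terminates and that the cluster is really exited. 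Getting the interaction of these two facts right, while verifying that each intermediate configuration remains in $R$ (and, for hhp, downward closed), is the crux of the soundness proof.
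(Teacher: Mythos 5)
The paper itself offers no proof of this theorem: it sits in the background chapter, where the surrounding text explicitly leaves all proofs of the recalled $APTC$ results to the reader, so there is nothing in the source to compare your argument against. Judged on its own, your proposal is the standard CFAR soundness argument (the Fokkink-style cluster construction) correctly transported to the truly concurrent setting, and I see no gap in it: the reduction of rootedness to the leading $\tau$-prefix, the relation $R$ pairing $\tau_I(\langle Z|E\rangle)$ for every $Z$ in the cluster with the abstracted exit sum plus the diagonal, the exhaustive split of summands into internal cluster moves (all labels in $I\cup\{\tau\}$ and target in $C$) versus exits, and the use of mutual reachability to navigate to the exit-bearing variable in the one nontrivial matching direction are exactly the right ingredients, and your observation that guardedness is what certifies the navigation terminates is the correct role for that hypothesis. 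Two small points worth making explicit if you write this out in full: first, that the intermediate variables on a reachability path between two cluster members are themselves in the cluster (so the intermediate pairs really do stay in $R$) needs the one-line argument that mutual reachability is transitive through such a path; second, since the paper's weak pomset transition $\xRightarrow{X}$ is only defined for $X\subseteq\hat{\mathbb{E}}$, a silent exit (all labels in $I\cup\{\tau\}$ but target outside $C$) must be matched by literally firing the identical summand of $\tau_I(u)$ rather than by the weak-transition clause, which is what you do but is worth flagging as forced by the definitions.
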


\begin{theorem}[Completeness of $APTC_{\tau}$ with guarded linear recursion and $CFAR$]
Let $p$ and $q$ be closed $APTC_{\tau}$ with guarded linear recursion and $CFAR$ terms, then,
\begin{enumerate}
  \item if $p\approx_{rbs} q$ then $p=q$;
  \item if $p\approx_{rbp} q$ then $p=q$;
  \item if $p\approx_{rbhp} q$ then $p=q$;
  \item if $p\approx_{rbhhp} q$ then $p=q$.
\end{enumerate}
\end{theorem}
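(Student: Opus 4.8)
The plan is to reduce both $p$ and $q$ to the canonical form supplied by the elimination theorem and then to exhibit a single guarded linear recursive specification that both of them solve, so that equality follows from $RSP$; the role of $CFAR$ is precisely to remove the $\tau$-cycles that abstraction may introduce, which is what obstructs finding such a common specification.

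First I would invoke the elimination theorem of $APTC_{\tau}$ with silent step and guarded linear recursion: every closed term is provably equal to some $\langle X_1|E\rangle$ with $E$ guarded linear. Hence it suffices to treat $p=\langle X_1|E_1\rangle$ and $q=\langle Y_1|E_2\rangle$ for guarded linear recursive specifications $E_1,E_2$. Because the axiomatisation uses the left parallel composition $\leftmerge$ to present each parallel bundle $a_1\leftmerge\cdots\leftmerge a_k$ as a single guarded summand, the four cases $\approx_{rbs},\approx_{rbp},\approx_{rbhp},\approx_{rbhhp}$ can be handled uniformly: the transition rules and the axioms $B1$--$B3$, $TI1$--$TI6$ do not distinguish them, so I would carry out the argument once, writing $\approx_{rb}$ for whichever of the four is in force and appealing to the corresponding congruence and soundness theorems already established; in the $\approx_{rbhp}$ and $\approx_{rbhhp}$ cases one simply carries the order-isomorphism component $f$ of Definition \ref{BHHPB} along unchanged.

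The heart of the argument, and the reason $CFAR$ is needed beyond the earlier completeness theorem for $APTC_{\tau}$ with guarded linear recursion, is the cluster analysis. Applying $\tau_I$ to a guarded linear specification keeps it linear (by $TI4$--$TI6$) but can destroy guardedness, since a leading $e\in I$ becomes $\tau$; the recursion variables mutually reachable by such $\tau$-steps are exactly the clusters of the preceding definition. For each cluster I would apply $CFAR$, sound by the preceding theorem, together with $B1$--$B3$ and $RDP/RSP$, to replace the cluster by the sum of its exits, obtaining guarded linear specifications $E_1',E_2'$ that still define $p$ and $q$ and in which no infinite $\tau$-sequence occurs. This is the step I expect to be the main obstacle: one must verify that the exits are correctly enumerated, that the rewriting genuinely yields a \emph{guarded} specification, and --- most delicately --- that the leading $\tau\cdot$ prefix introduced by $CFAR$ does not violate the root condition of rooted branching bisimulation at the top level.

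Finally, with $p=\langle X_1|E_1'\rangle$, $q=\langle Y_1|E_2'\rangle$ and $p\approx_{rb}q$, I would form the product specification whose recursion variables are the pairs $(X,Y)$ of variables whose solutions are branching bisimilar, taking as summands the matched transitions guaranteed by the bisimulation, with $\tau$-steps absorbed on either side according to the branching clauses of Definition \ref{BHHPB} and with the initial transitions matched strictly to respect the root condition. One then checks that both $\langle X_1|E_1'\rangle$ and $\langle Y_1|E_2'\rangle$ are solutions of this common guarded linear specification and concludes $p=q$ by $RSP$. The residual bookkeeping --- that the constructed specification is indeed guarded and linear and that both sides are genuine solutions --- is routine given the congruence and soundness results, so I would not grind through it.
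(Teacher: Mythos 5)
Your proposal is correct and follows essentially the same route the paper relies on: the theorem is stated in the background chapter without proof, but its intended argument (and the detailed proofs the paper gives for the probabilistic analogues, e.g.\ the completeness of $APPTC$ with silent step and guarded linear recursion) is exactly your three-stage plan --- eliminate to $\langle X_1|E\rangle$ with $E$ guarded linear, use $CFAR$ to collapse the clusters of $\tau$-transitions that $\tau_I$ introduces so as to restore guardedness, and then build the product specification $Z_{XY}=t_{XY}$ from matched transitions and conclude by $RDP$/$RSP$. Your flagged worries (correct enumeration of exits, preservation of guardedness, and absorbing the leading $\tau$ from $CFAR$ via $B1$ so the root condition is respected) are precisely the points the standard proof handles, so nothing essential is missing.
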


\subsection{Probabilistic Operational Semantics for True Concurrency}

In the following, the variables $x,x',y,y',z,z'$ range over the collection of process terms, $s,s',t,t',u,u'$ are closed terms, $\tau$ is the special constant silent step, $\delta$
is the special constant deadlock, $A$ is the collection of atomic actions, atomic actions $a,b\in A$, $A_{\delta}=A\cup\{\delta\}$, $A_{\tau}=A\cup\{\tau\}$.
$\rightsquigarrow$ denotes probabilistic transition, and action transition labelled by an atomic action $a\in A$, $\xrightarrow{a}$ and $\xrightarrow{a}\surd$.
$x\xrightarrow{a}p$ means that by performing action $a$ process $x$ evolves into $p$; while $x\xrightarrow{a}\surd$ means that $x$ performs an $a$ action and then terminates.
$p\rightsquigarrow x$ denotes that process $p$ chooses to behave like process $x$ with a non-zero probability $\pi >0$.

\begin{definition}[Probabilistic prime event structure with silent event]\label{PPES}
Let $\Lambda$ be a fixed set of labels, ranged over $a,b,c,\cdots$ and $\tau$. A ($\Lambda$-labelled) prime event structure with silent event $\tau$ is a quintuple
$\mathcal{E}=\langle \mathbb{E}, \leq, \sharp, \sharp_{\pi}, \lambda\rangle$, where $\mathbb{E}$ is a denumerable set of events, including the silent event $\tau$. Let
$\hat{\mathbb{E}}=\mathbb{E}\backslash\{\tau\}$, exactly excluding $\tau$, it is obvious that $\hat{\tau^*}=\epsilon$, where $\epsilon$ is the empty event.
Let $\lambda:\mathbb{E}\rightarrow\Lambda$ be a labelling function and let $\lambda(\tau)=\tau$. And $\leq$, $\sharp$, $\sharp_{\pi}$ are binary relations on $\mathbb{E}$,
called causality, conflict and probabilistic conflict respectively, such that:

\begin{enumerate}
  \item $\leq$ is a partial order and $\lceil e \rceil = \{e'\in \mathbb{E}|e'\leq e\}$ is finite for all $e\in \mathbb{E}$. It is easy to see that
  $e\leq\tau^*\leq e'=e\leq\tau\leq\cdots\leq\tau\leq e'$, then $e\leq e'$.
  \item $\sharp$ is irreflexive, symmetric and hereditary with respect to $\leq$, that is, for all $e,e',e''\in \mathbb{E}$, if $e\sharp e'\leq e''$, then $e\sharp e''$;
  \item $\sharp_{\pi}$ is irreflexive, symmetric and hereditary with respect to $\leq$, that is, for all $e,e',e''\in \mathbb{E}$, if $e\sharp_{\pi} e'\leq e''$, then $e\sharp_{\pi} e''$.
\end{enumerate}

Then, the concepts of consistency and concurrency can be drawn from the above definition:

\begin{enumerate}
  \item $e,e'\in \mathbb{E}$ are consistent, denoted as $e\frown e'$, if $\neg(e\sharp e')$ and $\neg(e\sharp_{\pi} e')$. A subset $X\subseteq \mathbb{E}$ is called consistent, if $e\frown e'$ for all
  $e,e'\in X$.
  \item $e,e'\in \mathbb{E}$ are concurrent, denoted as $e\parallel e'$, if $\neg(e\leq e')$, $\neg(e'\leq e)$, and $\neg(e\sharp e')$ and $\neg(e\sharp_{\pi} e')$.
\end{enumerate}
\end{definition}

\begin{definition}[Configuration]
Let $\mathcal{E}$ be a PES. A (finite) configuration in $\mathcal{E}$ is a (finite) consistent subset of events $C\subseteq \mathcal{E}$, closed with respect to causality
(i.e. $\lceil C\rceil=C$). The set of finite configurations of $\mathcal{E}$ is denoted by $\mathcal{C}(\mathcal{E})$. We let $\hat{C}=C\backslash\{\tau\}$.
\end{definition}

A consistent subset of $X\subseteq \mathbb{E}$ of events can be seen as a pomset. Given $X, Y\subseteq \mathbb{E}$, $\hat{X}\sim \hat{Y}$ if $\hat{X}$ and $\hat{Y}$ are
isomorphic as pomsets. In the following of the paper, we say $C_1\sim C_2$, we mean $\hat{C_1}\sim\hat{C_2}$.

\begin{definition}[Pomset transitions and step]
Let $\mathcal{E}$ be a PES and let $C\in\mathcal{C}(\mathcal{E})$, and $\emptyset\neq X\subseteq \mathbb{E}$, if $C\cap X=\emptyset$ and $C'=C\cup X\in\mathcal{C}(\mathcal{E})$,
then $C\xrightarrow{X} C'$ is called a pomset transition from $C$ to $C'$. When the events in $X$ are pairwise concurrent, we say that $C\xrightarrow{X}C'$ is a step.
\end{definition}

\begin{definition}[Probabilistic transitions]
Let $\mathcal{E}$ be a PES and let $C\in\mathcal{C}(\mathcal{E})$, the transition $C\xrsquigarrow{\pi} C^{\pi}$ is called a probabilistic transition from $C$ to $C^{\pi}$.
\end{definition}

\begin{definition}[Weak pomset transitions and weak step]
Let $\mathcal{E}$ be a PES and let $C\in\mathcal{C}(\mathcal{E})$, and $\emptyset\neq X\subseteq \hat{\mathbb{E}}$, if $C\cap X=\emptyset$ and
$\hat{C'}=\hat{C}\cup X\in\mathcal{C}(\mathcal{E})$, then $C\xRightarrow{X} C'$ is called a weak pomset transition from $C$ to $C'$, where we define
$\xRightarrow{e}\triangleq\xrightarrow{\tau^*}\xrightarrow{e}\xrightarrow{\tau^*}$. And $\xRightarrow{X}\triangleq\xrightarrow{\tau^*}\xrightarrow{e}\xrightarrow{\tau^*}$,
for every $e\in X$. When the events in $X$ are pairwise concurrent, we say that $C\xRightarrow{X}C'$ is a weak step.
\end{definition}

We will also suppose that all the PESs in this book are image finite, that is, for any PES $\mathcal{E}$ and $C\in \mathcal{C}(\mathcal{E})$ and $a\in \Lambda$,
$\{\langle C,s\rangle\xrsquigarrow{\pi} \langle C^{\pi},s\rangle\}$,
$\{e\in \mathbb{E}|\langle C,s\rangle\xrightarrow{e} \langle C',s'\rangle\wedge \lambda(e)=a\}$ and
$\{e\in\hat{\mathbb{E}}|\langle C,s\rangle\xRightarrow{e} \langle C',s'\rangle\wedge \lambda(e)=a\}$ is finite.

A probability distribution function (PDF) $\mu$ is a map $\mu:\mathcal{C}\times\mathcal{C}\rightarrow[0,1]$ and $\mu^*$ is the cumulative probability distribution function (cPDF).

\begin{definition}[Probabilistic pomset, step bisimulation]\label{PPSB}
Let $\mathcal{E}_1$, $\mathcal{E}_2$ be PESs. A probabilistic pomset bisimulation is a relation $R\subseteq\mathcal{C}(\mathcal{E}_1)\times\mathcal{C}(\mathcal{E}_2)$, such that (1) if
$(C_1,C_2)\in R$, and $C_1\xrightarrow{X_1}C_1'$ then $C_2\xrightarrow{X_2}C_2'$, with $X_1\subseteq \mathbb{E}_1$, $X_2\subseteq \mathbb{E}_2$, $X_1\sim X_2$ and $(C_1',C_2')\in R$,
and vice-versa; (2) if $(C_1,C_2)\in R$, and $C_1\xrsquigarrow{\pi}C_1^{\pi}$ then $C_2\xrsquigarrow{\pi}C_2^{\pi}$ and $(C_1^{\pi},C_2^{\pi})\in R$, and vice-versa; (3) if $(C_1,C_2)\in R$,
then $\mu(C_1,C)=\mu(C_2,C)$ for each $C\in\mathcal{C}(\mathcal{E})/R$; (4) $[\surd]_R=\{\surd\}$. We say that $\mathcal{E}_1$, $\mathcal{E}_2$ are probabilistic pomset bisimilar, written $\mathcal{E}_1\sim_{pp}\mathcal{E}_2$,
if there exists a probabilistic pomset bisimulation $R$, such that
$(\emptyset,\emptyset)\in R$. By replacing probabilistic pomset transitions with steps, we can get the definition of probabilistic step bisimulation. When PESs $\mathcal{E}_1$ and $\mathcal{E}_2$ are probabilistic step
bisimilar, we write $\mathcal{E}_1\sim_{ps}\mathcal{E}_2$.
\end{definition}

\begin{definition}[Posetal product]
Given two PESs $\mathcal{E}_1$, $\mathcal{E}_2$, the posetal product of their configurations, denoted $\mathcal{C}(\mathcal{E}_1)\overline{\times}\mathcal{C}(\mathcal{E}_2)$,
is defined as

$$\{(C_1,f,C_2)|C_1\in\mathcal{C}(\mathcal{E}_1),C_2\in\mathcal{C}(\mathcal{E}_2),f:C_1\rightarrow C_2 \textrm{ isomorphism}\}.$$

A subset $R\subseteq\mathcal{C}(\mathcal{E}_1)\overline{\times}\mathcal{C}(\mathcal{E}_2)$ is called a posetal relation. We say that $R$ is downward closed when for any
$(C_1,f,C_2),(C_1',f',C_2')\in \mathcal{C}(\mathcal{E}_1)\overline{\times}\mathcal{C}(\mathcal{E}_2)$, if $(C_1,f,C_2)\subseteq (C_1',f',C_2')$ pointwise and $(C_1',f',C_2')\in R$,
then $(C_1,f,C_2)\in R$.

For $f:X_1\rightarrow X_2$, we define $f[x_1\mapsto x_2]:X_1\cup\{x_1\}\rightarrow X_2\cup\{x_2\}$, $z\in X_1\cup\{x_1\}$,(1)$f[x_1\mapsto x_2](z)=
x_2$,if $z=x_1$;(2)$f[x_1\mapsto x_2](z)=f(z)$, otherwise. Where $X_1\subseteq \mathbb{E}_1$, $X_2\subseteq \mathbb{E}_2$, $x_1\in \mathbb{E}_1$, $x_2\in \mathbb{E}_2$.
\end{definition}

\begin{definition}[Probabilistic (hereditary) history-preserving bisimulation]\label{PHHPB}
A probabilistic history-preserving (hp-) bisimulation is a posetal relation $R\subseteq\mathcal{C}(\mathcal{E}_1)\overline{\times}\mathcal{C}(\mathcal{E}_2)$ such that (1) if $(C_1,f,C_2)\in R$,
and $C_1\xrightarrow{e_1} C_1'$, then $C_2\xrightarrow{e_2} C_2'$, with $(C_1',f[e_1\mapsto e_2],C_2')\in R$, and vice-versa; (2) if $(C_1,f,C_2)\in R$, and
$C_1\xrsquigarrow{\pi}C_1^{\pi}$ then $C_2\xrsquigarrow{\pi}C_2^{\pi}$ and $(C_1^{\pi},f,C_2^{\pi})\in R$, and vice-versa; (3) if $(C_1,f,C_2)\in R$,
then $\mu(C_1,C)=\mu(C_2,C)$ for each $C\in\mathcal{C}(\mathcal{E})/R$; (4) $[\surd]_R=\{\surd\}$. $\mathcal{E}_1,\mathcal{E}_2$ are probabilistic history-preserving
(hp-)bisimilar and are written $\mathcal{E}_1\sim_{php}\mathcal{E}_2$ if there exists a probabilistic hp-bisimulation $R$ such that $(\emptyset,\emptyset,\emptyset)\in R$.

A probabilistic hereditary history-preserving (hhp-)bisimulation is a downward closed probabilistic hp-bisimulation. $\mathcal{E}_1,\mathcal{E}_2$ are probabilistic hereditary history-preserving (hhp-)bisimilar and are
written $\mathcal{E}_1\sim_{phhp}\mathcal{E}_2$.
\end{definition}

\begin{definition}[Weakly probabilistic pomset, step bisimulation]\label{WPSB}
Let $\mathcal{E}_1$, $\mathcal{E}_2$ be PESs. A weakly probabilistic pomset bisimulation is a relation $R\subseteq\mathcal{C}(\mathcal{E}_1)\times\mathcal{C}(\mathcal{E}_2)$, such that (1) if
$(C_1,C_2)\in R$, and $C_1\xRightarrow{X_1}C_1'$ then $C_2\xRightarrow{X_2}C_2'$, with $X_1\subseteq \hat{\mathbb{E}_1}$, $X_2\subseteq \hat{\mathbb{E}_2}$, $X_1\sim X_2$ and
$(C_1',C_2')\in R$, and vice-versa; (2) if $(C_1,C_2)\in R$, and $C_1\xrsquigarrow{\pi}C_1^{\pi}$ then $C_2\xrsquigarrow{\pi}C_2^{\pi}$ and $(C_1^{\pi},C_2^{\pi})\in R$, and vice-versa; (3) if $(C_1,C_2)\in R$,
then $\mu(C_1,C)=\mu(C_2,C)$ for each $C\in\mathcal{C}(\mathcal{E})/R$; (4) $[\surd]_R=\{\surd\}$. We say that $\mathcal{E}_1$, $\mathcal{E}_2$ are weakly pomset bisimilar,
written $\mathcal{E}_1\approx_{pp}\mathcal{E}_2$, if there exists a weak probabilistic pomset bisimulation $R$, such that $(\emptyset,\emptyset)\in R$. By replacing weakly probabilistic pomset transitions with
weak steps, we can get the definition of weakly probabilistic step bisimulation. When PESs $\mathcal{E}_1$ and $\mathcal{E}_2$ are weakly probabilistic step bisimilar, we write
$\mathcal{E}_1\approx_{ps}\mathcal{E}_2$.
\end{definition}

\begin{definition}[Weakly posetal product]
Given two PESs $\mathcal{E}_1$, $\mathcal{E}_2$, the weakly posetal product of their configurations, denoted $\mathcal{C}(\mathcal{E}_1)\overline{\times}\mathcal{C}(\mathcal{E}_2)$,
is defined as

$$\{(C_1,f,C_2)|C_1\in\mathcal{C}(\mathcal{E}_1),C_2\in\mathcal{C}(\mathcal{E}_2),f:\hat{C_1}\rightarrow \hat{C_2} \textrm{ isomorphism}\}.$$

A subset $R\subseteq\mathcal{C}(\mathcal{E}_1)\overline{\times}\mathcal{C}(\mathcal{E}_2)$ is called a weakly posetal relation. We say that $R$ is downward closed when for any
$(C_1,f,C_2),(C_1',f,C_2')\in \mathcal{C}(\mathcal{E}_1)\overline{\times}\mathcal{C}(\mathcal{E}_2)$, if $(C_1,f,C_2)\subseteq (C_1',f',C_2')$ pointwise and $(C_1',f',C_2')\in R$,
then $(C_1,f,C_2)\in R$.

For $f:X_1\rightarrow X_2$, we define $f[x_1\mapsto x_2]:X_1\cup\{x_1\}\rightarrow X_2\cup\{x_2\}$, $z\in X_1\cup\{x_1\}$,(1)$f[x_1\mapsto x_2](z)=
x_2$,if $z=x_1$;(2)$f[x_1\mapsto x_2](z)=f(z)$, otherwise. Where $X_1\subseteq \hat{\mathbb{E}_1}$, $X_2\subseteq \hat{\mathbb{E}_2}$, $x_1\in \hat{\mathbb{E}}_1$,
$x_2\in \hat{\mathbb{E}}_2$. Also, we define $f(\tau^*)=f(\tau^*)$.
\end{definition}

\begin{definition}[Weakly probabilistic (hereditary) history-preserving bisimulation]\label{WHHPB}
A weakly probabilistic history-preserving (hp-) bisimulation is a weakly posetal relation $R\subseteq\mathcal{C}(\mathcal{E}_1)\overline{\times}\mathcal{C}(\mathcal{E}_2)$ such that
(1) if $(C_1,f,C_2)\in R$, and $C_1\xRightarrow{e_1} C_1'$, then $C_2\xRightarrow{e_2} C_2'$, with $(C_1',f[e_1\mapsto e_2],C_2')\in R$, and vice-versa; (2) if $(C_1,f, C_2)\in R$, and
$C_1\xrsquigarrow{\pi}C_1^{\pi}$ then $C_2\xrsquigarrow{\pi}C_2^{\pi}$ and $(C_1^{\pi},f,C_2^{\pi})\in R$, and vice-versa; (3) if $(C_1,f,C_2)\in R$,
then $\mu(C_1,C)=\mu(C_2,C)$ for each $C\in\mathcal{C}(\mathcal{E})/R$; (4) $[\surd]_R=\{\surd\}$. $\mathcal{E}_1,\mathcal{E}_2$
are weakly probabilistic history-preserving (hp-)bisimilar and are written $\mathcal{E}_1\approx_{php}\mathcal{E}_2$ if there exists a weakly probabilistic hp-bisimulation $R$ such that
$(\emptyset,\emptyset,\emptyset)\in R$.

A weakly probabilistic hereditary history-preserving (hhp-)bisimulation is a downward closed weakly probabilistic hp-bisimulation. $\mathcal{E}_1,\mathcal{E}_2$ are weakly probabilistic
hereditary history-preserving (hhp-)bisimilar and are written $\mathcal{E}_1\approx_{phhp}\mathcal{E}_2$.
\end{definition}

\begin{definition}[Probabilistic branching pomset, step bisimulation]\label{PBPSB}
Assume a special termination predicate $\downarrow$, and let $\surd$ represent a state with $\surd\downarrow$. Let $\mathcal{E}_1$, $\mathcal{E}_2$ be PESs. A probabilistic branching pomset
bisimulation is a relation $R\subseteq\mathcal{C}(\mathcal{E}_1)\times\mathcal{C}(\mathcal{E}_2)$, such that:

 \begin{enumerate}
   \item if $(C_1,C_2)\in R$, and $C_1\xrightarrow{X}C_1'$ then
   \begin{itemize}
     \item either $X\equiv \tau^*$, and $(C_1',C_2)\in R$;
     \item or there is a sequence of (zero or more) probabilistic transitions and $\tau$-transitions $C_2\rightsquigarrow^*\xrightarrow{\tau^*} C_2^0$, such that $(C_1,C_2^0)\in R$ and $C_2^0\xRightarrow{X}C_2'$ with
     $(C_1',C_2')\in R$;
   \end{itemize}
   \item if $(C_1,C_2)\in R$, and $C_2\xrightarrow{X}C_2'$ then
   \begin{itemize}
     \item either $X\equiv \tau^*$, and $(C_1,C_2')\in R$;
     \item or there is a sequence of (zero or more) probabilistic transitions and $\tau$-transitions $C_1\rightsquigarrow^*\xrightarrow{\tau^*} C_1^0$, such that $(C_1^0,C_2)\in R$ and $C_1^0\xRightarrow{X}C_1'$ with
     $(C_1',C_2')\in R$;
   \end{itemize}
   \item if $(C_1,C_2)\in R$ and $C_1\downarrow$, then there is a sequence of (zero or more) probabilistic transitions and $\tau$-transitions $C_2\rightsquigarrow^*\xrightarrow{\tau^*} C_2^0$ such that $(C_1,C_2^0)\in R$
   and $C_2^0\downarrow$;
   \item if $(C_1,C_2)\in R$ and $C_2\downarrow$, then there is a sequence of (zero or more) probabilistic transitions and $\tau$-transitions $C_1\rightsquigarrow^*\xrightarrow{\tau^*} C_1^0$ such that $(C_1^0,C_2)\in R$
   and $C_1^0\downarrow$.
   \item if $(C_1,C_2)\in R$,then $\mu(C_1,C)=\mu(C_2,C)$ for each $C\in\mathcal{C}(\mathcal{E})/R$;
   \item $[\surd]_R=\{\surd\}$.
 \end{enumerate}

We say that $\mathcal{E}_1$, $\mathcal{E}_2$ are probabilistic branching pomset bisimilar, written $\mathcal{E}_1\approx_{pbp}\mathcal{E}_2$, if there exists a probabilistic branching pomset bisimulation $R$,
such that $(\emptyset,\emptyset)\in R$.

By replacing probabilistic branching pomset transitions with steps, we can get the definition of probabilistic branching step bisimulation. When PESs $\mathcal{E}_1$ and $\mathcal{E}_2$ are probabilistic branching step bisimilar,
we write $\mathcal{E}_1\approx_{pbs}\mathcal{E}_2$.
\end{definition}

\begin{definition}[Probabilistic rooted branching pomset, step bisimulation]\label{PRBPSB}
Assume a special termination predicate $\downarrow$, and let $\surd$ represent a state with $\surd\downarrow$. Let $\mathcal{E}_1$, $\mathcal{E}_2$ be PESs. A branching pomset
bisimulation is a relation $R\subseteq\mathcal{C}(\mathcal{E}_1)\times\mathcal{C}(\mathcal{E}_2)$, such that:

 \begin{enumerate}
   \item if $(C_1,C_2)\in R$, and $C_1\rightsquigarrow C_1^{\pi}\xrightarrow{X}C_1'$ then $C_2\rightsquigarrow C_2^{\pi}\xrightarrow{X}C_2'$ with $C_1'\approx_{pbp}C_2'$;
   \item if $(C_1,C_2)\in R$, and $C_2\rightsquigarrow C_2^{\pi}\xrightarrow{X}C_2'$ then $C_1\rightsquigarrow C_1^{\pi}\xrightarrow{X}C_1'$ with $C_1'\approx_{pbp}C_2'$;
   \item if $(C_1,C_2)\in R$ and $C_1\downarrow$, then $C_2\downarrow$;
   \item if $(C_1,C_2)\in R$ and $C_2\downarrow$, then $C_1\downarrow$.
 \end{enumerate}

We say that $\mathcal{E}_1$, $\mathcal{E}_2$ are probabilistic rooted branching pomset bisimilar, written $\mathcal{E}_1\approx_{prbp}\mathcal{E}_2$, if there exists a probabilistic rooted branching pomset
bisimulation $R$, such that $(\emptyset,\emptyset)\in R$.

By replacing probabilistic pomset transitions with steps, we can get the definition of probabilistic rooted branching step bisimulation. When PESs $\mathcal{E}_1$ and $\mathcal{E}_2$ are probabilistic rooted branching step
bisimilar, we write $\mathcal{E}_1\approx_{prbs}\mathcal{E}_2$.
\end{definition}

\begin{definition}[Probabilistic branching (hereditary) history-preserving bisimulation]\label{PBHHPB}
Assume a special termination predicate $\downarrow$, and let $\surd$ represent a state with $\surd\downarrow$. A probabilistic branching history-preserving (hp-) bisimulation is a weakly
posetal relation $R\subseteq\mathcal{C}(\mathcal{E}_1)\overline{\times}\mathcal{C}(\mathcal{E}_2)$ such that:

 \begin{enumerate}
   \item if $(C_1,f,C_2)\in R$, and $C_1\xrightarrow{e_1}C_1'$ then
   \begin{itemize}
     \item either $e_1\equiv \tau$, and $(C_1',f[e_1\mapsto \tau],C_2)\in R$;
     \item or there is a sequence of (zero or more) probabilistic transitions and $\tau$-transitions $C_2\rightsquigarrow^*\xrightarrow{\tau^*} C_2^0$, such that $(C_1,f,C_2^0)\in R$ and $C_2^0\xrightarrow{e_2}C_2'$ with
     $(C_1',f[e_1\mapsto e_2],C_2')\in R$;
   \end{itemize}
   \item if $(C_1,f,C_2)\in R$, and $C_2\xrightarrow{e_2}C_2'$ then
   \begin{itemize}
     \item either $X\equiv \tau$, and $(C_1,f[e_2\mapsto \tau],C_2')\in R$;
     \item or there is a sequence of (zero or more) probabilistic transitions and $\tau$-transitions $C_1\rightsquigarrow^*\xrightarrow{\tau^*} C_1^0$, such that $(C_1^0,f,C_2)\in R$ and $C_1^0\xrightarrow{e_1}C_1'$ with
     $(C_1',f[e_2\mapsto e_1],C_2')\in R$;
   \end{itemize}
   \item if $(C_1,f,C_2)\in R$ and $C_1\downarrow$, then there is a sequence of (zero or more) probabilistic transitions and $\tau$-transitions $C_2\rightsquigarrow^*\xrightarrow{\tau^*} C_2^0$ such that $(C_1,f,C_2^0)\in R$
   and $C_2^0\downarrow$;
   \item if $(C_1,f,C_2)\in R$ and $C_2\downarrow$, then there is a sequence of (zero or more) probabilistic transitions and $\tau$-transitions $C_1\rightsquigarrow^*\xrightarrow{\tau^*} C_1^0$ such that $(C_1^0,f,C_2)\in R$
   and $C_1^0\downarrow$;
   \item if $(C_1,C_2)\in R$,then $\mu(C_1,C)=\mu(C_2,C)$ for each $C\in\mathcal{C}(\mathcal{E})/R$;
   \item $[\surd]_R=\{\surd\}$.
 \end{enumerate}

$\mathcal{E}_1,\mathcal{E}_2$ are probabilistic branching history-preserving (hp-)bisimilar and are written $\mathcal{E}_1\approx_{pbhp}\mathcal{E}_2$ if there exists a probabilistic branching hp-bisimulation
$R$ such that $(\emptyset,\emptyset,\emptyset)\in R$.

A probabilistic branching hereditary history-preserving (hhp-)bisimulation is a downward closed probabilistic branching hhp-bisimulation. $\mathcal{E}_1,\mathcal{E}_2$ are probabilistic branching hereditary history-preserving
(hhp-)bisimilar and are written $\mathcal{E}_1\approx_{pbhhp}\mathcal{E}_2$.
\end{definition}

\begin{definition}[Probabilistic rooted branching (hereditary) history-preserving bisimulation]\label{PRBHHPB}
Assume a special termination predicate $\downarrow$, and let $\surd$ represent a state with $\surd\downarrow$. A probabilistic rooted branching history-preserving (hp-) bisimulation is a
posetal relation $R\subseteq\mathcal{C}(\mathcal{E}_1)\overline{\times}\mathcal{C}(\mathcal{E}_2)$ such that:

 \begin{enumerate}
   \item if $(C_1,f,C_2)\in R$, and $C_1\rightsquigarrow C_1^{\pi}\xrightarrow{e_1}C_1'$, then $C_2\rightsquigarrow C_2^{\pi}\xrightarrow{e_2}C_2'$ with $C_1'\approx_{pbhp}C_2'$;
   \item if $(C_1,f,C_2)\in R$, and $C_2\rightsquigarrow C_2^{\pi}\xrightarrow{e_2}C_1'$, then $C_1\rightsquigarrow C_1^{\pi}\xrightarrow{e_1}C_2'$ with $C_1'\approx_{pbhp}C_2'$;
   \item if $(C_1,f,C_2)\in R$ and $C_1\downarrow$, then $C_2\downarrow$;
   \item if $(C_1,f,C_2)\in R$ and $C_2\downarrow$, then $C_1\downarrow$.
 \end{enumerate}

$\mathcal{E}_1,\mathcal{E}_2$ are probabilistic rooted branching history-preserving (hp-)bisimilar and are written $\mathcal{E}_1\approx_{prbhp}\mathcal{E}_2$ if there exists a probabilistic rooted branching
hp-bisimulation $R$ such that $(\emptyset,\emptyset,\emptyset)\in R$.

A probabilistic rooted branching hereditary history-preserving (hhp-)bisimulation is a downward closed probabilistic rooted branching hhp-bisimulation. $\mathcal{E}_1,\mathcal{E}_2$ are probabilistic rooted branching
hereditary history-preserving (hhp-)bisimilar and are written $\mathcal{E}_1\approx_{prbhhp}\mathcal{E}_2$.
\end{definition}

\newpage\section{A Calculus for Probabilistic True Concurrency}\label{cptc}

In this chapter, we design a calculus for probabilistic true concurrency (CPTC). This chapter is organized as follows. We introduce strongly probabilistic
truly concurrent bisimulations in section \ref{stcb3}, its properties for weakly probabilistic truly concurrent bisimulations in section \ref{wtcb3}.

\subsection{Syntax and Operational Semantics}\label{sos3}

We assume an infinite set $\mathcal{N}$ of (action or event) names, and use $a,b,c,\cdots$ to range over $\mathcal{N}$. We denote by $\overline{\mathcal{N}}$ the set of co-names and
let $\overline{a},\overline{b},\overline{c},\cdots$ range over $\overline{\mathcal{N}}$. Then we set $\mathcal{L}=\mathcal{N}\cup\overline{\mathcal{N}}$ as the set of labels, and use
$l,\overline{l}$ to range over $\mathcal{L}$. We extend complementation to $\mathcal{L}$ such that $\overline{\overline{a}}=a$. Let $\tau$ denote the silent step (internal action or
event) and define $Act=\mathcal{L}\cup\{\tau\}$ to be the set of actions, $\alpha,\beta$ range over $Act$. And $K,L$ are used to stand for subsets of $\mathcal{L}$ and $\overline{L}$
is used for the set of complements of labels in $L$. A relabelling function $f$ is a function from $\mathcal{L}$ to $\mathcal{L}$ such that $f(\overline{l})=\overline{f(l)}$. By
defining $f(\tau)=\tau$, we extend $f$ to $Act$.

Further, we introduce a set $\mathcal{X}$ of process variables, and a set $\mathcal{K}$ of process constants, and let $X,Y,\cdots$ range over $\mathcal{X}$, and $A,B,\cdots$ range over
$\mathcal{K}$, $\widetilde{X}$ is a tuple of distinct process variables, and also $E,F,\cdots$ range over the recursive expressions. We write $\mathcal{P}$ for the set of processes.
Sometimes, we use $I,J$ to stand for an indexing set, and we write $E_i:i\in I$ for a family of expressions indexed by $I$. $Id_D$ is the identity function or relation over set $D$.

For each process constant schema $A$, a defining equation of the form

$$A\overset{\text{def}}{=}P$$

is assumed, where $P$ is a process.

\subsubsection{Syntax}

We use the Prefix $.$ to model the causality relation $\leq$ in true concurrency, the Summation $+$ to model the conflict relation $\sharp$, and the Box-Summation $\boxplus_{\pi}$ to
model the probabilistic conflict relation $\sharp_{\pi}$ in true concurrency, and the Composition $\parallel$ to explicitly model concurrent relation in true concurrency. And we
follow the conventions of process algebra.

\begin{definition}[Syntax]\label{syntax3}
Truly concurrent processes are defined inductively by the following formation rules:

\begin{enumerate}
  \item $A\in\mathcal{P}$;
  \item $\textbf{nil}\in\mathcal{P}$;
  \item if $P\in\mathcal{P}$, then the Prefix $\alpha.P\in\mathcal{P}$, for $\alpha\in Act$;
  \item if $P,Q\in\mathcal{P}$, then the Summation $P+Q\in\mathcal{P}$;
  \item if $P,Q\in\mathcal{P}$, then the Box-Summation $P\boxplus_{\pi}Q\in\mathcal{P}$;
  \item if $P,Q\in\mathcal{P}$, then the Composition $P\parallel Q\in\mathcal{P}$;
  \item if $P\in\mathcal{P}$, then the Prefix $(\alpha_1\parallel\cdots\parallel\alpha_n).P\in\mathcal{P}\quad(n\in I)$, for $\alpha_,\cdots,\alpha_n\in Act$;
  \item if $P\in\mathcal{P}$, then the Restriction $P\setminus L\in\mathcal{P}$ with $L\in\mathcal{L}$;
  \item if $P\in\mathcal{P}$, then the Relabelling $P[f]\in\mathcal{P}$.
\end{enumerate}

The standard BNF grammar of syntax of CPTC can be summarized as follows:

$$P::=A\quad|\quad\textbf{nil}\quad|\quad\alpha.P\quad|\quad P+P\quad |\quad P\boxplus_{\pi} P\quad |\quad P\parallel P\quad |\quad (\alpha_1\parallel\cdots\parallel\alpha_n).P \quad|\quad P\setminus L\quad |\quad P[f].$$
\end{definition}

\subsubsection{Operational Semantics}

The operational semantics is defined by LTSs (labelled transition systems), and it is detailed by the following definition.

\begin{definition}[Semantics]\label{semantics3}
The operational semantics of CPTC corresponding to the syntax in Definition \ref{syntax3} is defined by a series of transition rules, named $\textbf{PAct}$, $\textbf{PSum}$, $\textbf{PBox-Sum}$,
$\textbf{Com}$, $\textbf{Res}$, $\textbf{Rel}$ and $\textbf{Con}$ and named $\textbf{Act}$, $\textbf{Sum}$,
$\textbf{Com}$, $\textbf{Res}$, $\textbf{Rel}$ and $\textbf{Con}$ indicate that the rules are associated respectively with Prefix, Summation, Composition, Restriction, Relabelling and
Constants in Definition \ref{syntax3}. They are shown in Table \ref{PTRForCTC3} and \ref{TRForCTC3}.

\begin{center}
    \begin{table}
        \[\textbf{PAct}_1\quad \frac{}{\alpha.P\rightsquigarrow\breve{\alpha}.P}\]

        \[\textbf{PSum}\quad \frac{P\rightsquigarrow P'\quad Q\rightsquigarrow Q'}{P+Q\rightsquigarrow P'+Q'}\]

        \[\textbf{PBox-Sum}\quad \frac{P\rightsquigarrow P'}{P\boxplus_{\pi}Q\rightsquigarrow P'}\quad \frac{Q\rightsquigarrow Q'}{P\boxplus_{\pi}Q\rightsquigarrow Q'}\]

        \[\textbf{PCom}\quad \frac{P\rightsquigarrow P'\quad Q\rightsquigarrow Q'}{P\parallel Q\rightsquigarrow P'+Q'}\]

        \[\textbf{PAct}_2\quad \frac{}{(\alpha_1\parallel\cdots\parallel\alpha_n).P\rightsquigarrow(\breve{\alpha_1}\parallel\cdots\parallel\breve{\alpha_n}).P}\]

        \[\textbf{PRes}\quad \frac{P\rightsquigarrow P'}{P\setminus L\rightsquigarrow P'\setminus L}\]

        \[\textbf{PRel}\quad \frac{P\rightsquigarrow P'}{P[f]\rightsquigarrow P'[f]}\]

        \[\textbf{PCon}\quad\frac{P\rightsquigarrow P'}{A\rightsquigarrow P'}\quad (A\overset{\text{def}}{=}P)\]

        \caption{Probabilistic transition rules of CPTC}
        \label{PTRForCTC3}
    \end{table}
\end{center}

\begin{center}
    \begin{table}
        \[\textbf{Act}_1\quad \frac{}{\breve{\alpha}.P\xrightarrow{\alpha}P}\]

        \[\textbf{Sum}_1\quad \frac{P\xrightarrow{\alpha}P'}{P+Q\xrightarrow{\alpha}P'}\]

        \[\textbf{Com}_1\quad \frac{P\xrightarrow{\alpha}P'\quad Q\nrightarrow}{P\parallel Q\xrightarrow{\alpha}P'\parallel Q}\]

        \[\textbf{Com}_2\quad \frac{Q\xrightarrow{\alpha}Q'\quad P\nrightarrow}{P\parallel Q\xrightarrow{\alpha}P\parallel Q'}\]

        \[\textbf{Com}_3\quad \frac{P\xrightarrow{\alpha}P'\quad Q\xrightarrow{\beta}Q'}{P\parallel Q\xrightarrow{\{\alpha,\beta\}}P'\parallel Q'}\quad (\beta\neq\overline{\alpha})\]

        \[\textbf{Com}_4\quad \frac{P\xrightarrow{l}P'\quad Q\xrightarrow{\overline{l}}Q'}{P\parallel Q\xrightarrow{\tau}P'\parallel Q'}\]

        \[\textbf{Act}_2\quad \frac{}{(\breve{\alpha_1}\parallel\cdots\parallel\breve{\alpha_n}).P\xrightarrow{\{\alpha_1,\cdots,\alpha_n\}}P}\quad (\alpha_i\neq\overline{\alpha_j}\quad i,j\in\{1,\cdots,n\})\]

        \[\textbf{Sum}_2\quad \frac{P\xrightarrow{\{\alpha_1,\cdots,\alpha_n\}}P'}{P+Q\xrightarrow{\{\alpha_1,\cdots,\alpha_n\}}P'}\]

        \[\textbf{Res}_1\quad \frac{P\xrightarrow{\alpha}P'}{P\setminus L\xrightarrow{\alpha}P'\setminus L}\quad (\alpha,\overline{\alpha}\notin L)\]

        \[\textbf{Res}_2\quad \frac{P\xrightarrow{\{\alpha_1,\cdots,\alpha_n\}}P'}{P\setminus L\xrightarrow{\{\alpha_1,\cdots,\alpha_n\}}P'\setminus L}\quad (\alpha_1,\overline{\alpha_1},\cdots,\alpha_n,\overline{\alpha_n}\notin L)\]

        \[\textbf{Rel}_1\quad \frac{P\xrightarrow{\alpha}P'}{P[f]\xrightarrow{f(\alpha)}P'[f]}\]

        \[\textbf{Rel}_2\quad \frac{P\xrightarrow{\{\alpha_1,\cdots,\alpha_n\}}P'}{P[f]\xrightarrow{\{f(\alpha_1),\cdots,f(\alpha_n)\}}P'[f]}\]

        \[\textbf{Con}_1\quad\frac{P\xrightarrow{\alpha}P'}{A\xrightarrow{\alpha}P'}\quad (A\overset{\text{def}}{=}P)\]

        \[\textbf{Con}_2\quad\frac{P\xrightarrow{\{\alpha_1,\cdots,\alpha_n\}}P'}{A\xrightarrow{\{\alpha_1,\cdots,\alpha_n\}}P'}\quad (A\overset{\text{def}}{=}P)\]

        \caption{Action transition rules of CPTC}
        \label{TRForCTC3}
    \end{table}
\end{center}
\end{definition}

\subsubsection{Properties of Transitions}

\begin{definition}[Sorts]\label{sorts3}
Given the sorts $\mathcal{L}(A)$ and $\mathcal{L}(X)$ of constants and variables, we define $\mathcal{L}(P)$ inductively as follows.

\begin{enumerate}
  \item $\mathcal{L}(l.P)=\{l\}\cup\mathcal{L}(P)$;
  \item $\mathcal{L}((l_1\parallel \cdots\parallel l_n).P)=\{l_1,\cdots,l_n\}\cup\mathcal{L}(P)$;
  \item $\mathcal{L}(\tau.P)=\mathcal{L}(P)$;
  \item $\mathcal{L}(P+Q)=\mathcal{L}(P)\cup\mathcal{L}(Q)$;
  \item $\mathcal{L}(P\boxplus_{\pi}Q)=\mathcal{L}(P)\cup\mathcal{L}(Q)$;
  \item $\mathcal{L}(P\parallel Q)=\mathcal{L}(P)\cup\mathcal{L}(Q)$;
  \item $\mathcal{L}(P\setminus L)=\mathcal{L}(P)-(L\cup\overline{L})$;
  \item $\mathcal{L}(P[f])=\{f(l):l\in\mathcal{L}(P)\}$;
  \item for $A\overset{\text{def}}{=}P$, $\mathcal{L}(P)\subseteq\mathcal{L}(A)$.
\end{enumerate}
\end{definition}

Now, we present some properties of the transition rules defined in Table \ref{TRForCTC3}.

\begin{proposition}
If $P\xrightarrow{\alpha}P'$, then
\begin{enumerate}
  \item $\alpha\in\mathcal{L}(P)\cup\{\tau\}$;
  \item $\mathcal{L}(P')\subseteq\mathcal{L}(P)$.
\end{enumerate}

If $P\xrightarrow{\{\alpha_1,\cdots,\alpha_n\}}P'$, then
\begin{enumerate}
  \item $\alpha_1,\cdots,\alpha_n\in\mathcal{L}(P)\cup\{\tau\}$;
  \item $\mathcal{L}(P')\subseteq\mathcal{L}(P)$.
\end{enumerate}
\end{proposition}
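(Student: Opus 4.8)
The plan is to prove both implications simultaneously by induction on the height of the derivation of the transition, using the operational rules of Table~\ref{TRForCTC3} and the inductive clauses for $\mathcal{L}(\cdot)$ in Definition~\ref{sorts3}. A joint induction is needed rather than two separate ones, because the single-action rules $\textbf{Com}_3$ and $\textbf{Com}_4$ produce a multiset-labelled (resp.\ $\tau$-labelled) conclusion from single-action premises, while the multiset rules $\textbf{Sum}_2$, $\textbf{Res}_2$, $\textbf{Rel}_2$, $\textbf{Con}_2$ are built on multiset premises; so the two claimed shapes of transition feed into one another. Throughout I will rely only on the evident monotonicity of the sort function read off from Definition~\ref{sorts3}: $\mathcal{L}(P)\subseteq\mathcal{L}(P+Q)$, $\mathcal{L}(P\parallel Q)=\mathcal{L}(P)\cup\mathcal{L}(Q)$, $\mathcal{L}(P)\subseteq\mathcal{L}(A)$ when $A\overset{\text{def}}{=}P$, and so on.

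For the base cases I would treat $\textbf{Act}_1$ and $\textbf{Act}_2$, reading $\mathcal{L}(\breve{\alpha}.P)$ as $\mathcal{L}(\alpha.P)$ (the breve marks a resolved probabilistic choice and does not affect sorts). In $\textbf{Act}_1$ the source $\breve{\alpha}.P$ has sort $\{\alpha\}\cup\mathcal{L}(P)$ when $\alpha\in\mathcal{L}$ and $\mathcal{L}(P)$ when $\alpha=\tau$, so in either case $\alpha\in\mathcal{L}(\breve{\alpha}.P)\cup\{\tau\}$ and the residual satisfies $\mathcal{L}(P)\subseteq\mathcal{L}(\breve{\alpha}.P)$; $\textbf{Act}_2$ is the same observation applied to $\{\alpha_1,\cdots,\alpha_n\}$ through clause (2).

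For the inductive step I would proceed rule by rule. The rules $\textbf{Sum}_1$, $\textbf{Sum}_2$, $\textbf{Con}_1$, $\textbf{Con}_2$ follow by applying the induction hypothesis to the premise and then enlarging the sort, using $\mathcal{L}(P)\subseteq\mathcal{L}(P+Q)$ and clause (9). The composition rules $\textbf{Com}_1$--$\textbf{Com}_4$ use $\mathcal{L}(P'\parallel Q')=\mathcal{L}(P')\cup\mathcal{L}(Q')\subseteq\mathcal{L}(P)\cup\mathcal{L}(Q)=\mathcal{L}(P\parallel Q)$ for part (2), while for part (1) the labels come from the premises (for $\textbf{Com}_4$ the label is $\tau$, making (1) immediate). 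The relabelling rules $\textbf{Rel}_1$, $\textbf{Rel}_2$ use $f(\tau)=\tau$ together with $\mathcal{L}(P[f])=\{f(l):l\in\mathcal{L}(P)\}$ and monotonicity of the image under $f$.

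The only cases that invoke the side-conditions are $\textbf{Res}_1$ and $\textbf{Res}_2$. There the induction hypothesis gives $\alpha\in\mathcal{L}(P)\cup\{\tau\}$, and the side-condition $\alpha,\overline{\alpha}\notin L$ yields $\alpha\notin L\cup\overline{L}$, so a non-$\tau$ label survives into $\mathcal{L}(P\setminus L)=\mathcal{L}(P)-(L\cup\overline{L})$; part (2) then follows since deleting $L\cup\overline{L}$ is monotone. I do not expect a genuine obstacle here---the whole argument is a routine structural induction---so the only points demanding care are keeping the single-action and multiset-action claims coupled in one induction, and using the restriction side-conditions in $\textbf{Res}_1$/$\textbf{Res}_2$ to preserve part (1).
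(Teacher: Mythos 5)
Your proposal is correct and follows essentially the same route as the paper: induction on the depth of the inference of the transition, with a case analysis over the fourteen transition rules of Table \ref{TRForCTC3}, using the clauses of Definition \ref{sorts3} to track sorts. The paper writes out only the $\textbf{Act}_1$ and $\textbf{Act}_2$ cases (treating $\breve{\alpha}.P$ as $\alpha.P$, exactly as you do) and omits the remaining rules, which your proposal fills in with the expected monotonicity arguments.
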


\begin{proof}
By induction on the inference of $P\xrightarrow{\alpha}P'$ and $P\xrightarrow{\{\alpha_1,\cdots,\alpha_n\}}P'$, there are fourteen cases corresponding to the transition rules named
$\textbf{Act}_{1,2}$, $\textbf{Sum}_{1,2}$, $\textbf{Com}_{1,2,3,4}$, $\textbf{Res}_{1,2}$, $\textbf{Rel}_{1,2}$ and $\textbf{Con}_{1,2}$ in Table \ref{TRForCTC3}, we just prove the
one case $\textbf{Act}_1$ and $\textbf{Act}_2$, and omit the others.

Case $\textbf{Act}_1$: by $\textbf{Act}_1$, with $P\equiv\alpha.P'$. Then by Definition \ref{sorts3}, we have (1) $\mathcal{L}(P)=\{\alpha\}\cup\mathcal{L}(P')$ if $\alpha\neq\tau$;
(2) $\mathcal{L}(P)=\mathcal{L}(P')$ if $\alpha=\tau$. So, $\alpha\in\mathcal{L}(P)\cup\{\tau\}$, and $\mathcal{L}(P')\subseteq\mathcal{L}(P)$, as desired.

Case $\textbf{Act}_2$: by $\textbf{Act}_2$, with $P\equiv(\alpha_1\parallel\cdots\parallel\alpha_n).P'$. Then by Definition \ref{sorts3}, we have (1)
$\mathcal{L}(P)=\{\alpha_1,\cdots,\alpha_n\}\cup\mathcal{L}(P')$ if $\alpha_i\neq\tau$ for $i\leq n$;
(2) $\mathcal{L}(P)=\mathcal{L}(P')$ if $\alpha_1,\cdots,\alpha_n=\tau$. So, $\alpha_1,\cdots,\alpha_n\in\mathcal{L}(P)\cup\{\tau\}$, and $\mathcal{L}(P')\subseteq\mathcal{L}(P)$,
as desired.
\end{proof}

\subsection{Strongly Probabilistic Truly Concurrent Bisimulations}\label{stcb3}

\subsubsection{Laws and Congruence}

Based on the concepts of strongly probabilistic truly concurrent bisimulation equivalences, we get the following laws.

\begin{proposition}[Monoid laws for strongly probabilistic pomset bisimulation] The monoid laws for strongly probabilistic pomset bisimulation are as follows.
\begin{enumerate}
  \item $P+Q\sim_{pp} Q+P$;
  \item $P+(Q+R)\sim_{pp} (P+Q)+R$;
  \item $P+P\sim_{pp} P$;
  \item $P+\textbf{nil}\sim_{pp} P$.
\end{enumerate}
\end{proposition}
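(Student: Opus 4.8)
The plan is to prove each of the four identities by exhibiting an explicit probabilistic pomset bisimulation relation and checking the four clauses of Definition~\ref{PPSB}. Because these are \emph{strong} (non-weak) bisimulations and each law concerns only the top-level Summation operator, I expect that after the initial probabilistic transition (governed by $\textbf{PSum}$) and the first action transition (governed by $\textbf{Sum}_1$ and $\textbf{Sum}_2$) the two sides collapse onto syntactically identical continuations. Hence for each law it suffices to take $R = Id \cup R_0$, where $Id$ is the identity relation on process terms and $R_0$ is a small set of pairs closed under the relevant probabilistic derivatives: for commutativity $R_0 = \{(P+Q,\,Q+P)\} \cup \{(P'+Q',\,Q'+P')\}$ with $P\rightsquigarrow P'$ and $Q\rightsquigarrow Q'$; for associativity the analogous re-bracketed sums; for the unit law $\{(P+\textbf{nil},\,P)\}$ together with its probabilistic derivative; and for idempotency $\{(P+P,\,P)\}$ together with its derivatives.

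First I would verify clauses (1) and (2), the matching of pomset transitions and of probabilistic transitions. For commutativity and associativity this is a direct syntactic correspondence: every probabilistic move of $P+Q$ via $\textbf{PSum}$ is mirrored by the move of $Q+P$ obtained by swapping the two premises, and every action move via $\textbf{Sum}_1$ or $\textbf{Sum}_2$ selects the same summand on both sides, landing in the identity part of $R$; the pomset labels are literally equal, so the condition $X_1\sim X_2$ holds trivially. The unit law uses that $\textbf{nil}$ admits neither a probabilistic nor an action transition, so $P+\textbf{nil}$ and $P$ have exactly the same outgoing moves.

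The genuinely probabilistic content sits in clause (3), the requirement $\mu(C_1,C)=\mu(C_2,C)$ for every $R$-equivalence class $C$. Here I would use that the $\textbf{PSum}$ rule makes the probabilistic branching of a sum the joint resolution of the branchings of its two summands, which is symmetric under exchange and under re-association; hence commutativity and associativity preserve the cumulative distribution exactly, and the unit law is immediate since $\textbf{nil}$ contributes only a single trivial branch. Clause (4), $[\surd]_R=\{\surd\}$, is automatic, as none of the constructed relations identifies the terminated state $\surd$ with a non-terminated term.

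The hard part will be the idempotency law $P+P\sim_{pp}P$ under clause (3). Via $\textbf{PSum}$ the term $P+P$ performs $P+P\rightsquigarrow P'+P''$, where the two copies may resolve to \emph{different} probabilistic derivatives $P'\neq P''$, so its branching is a priori the product of two independent copies of $P$'s branching rather than a single copy. The key step is therefore to show that every such $P'+P''$ is $R$-related to the corresponding single derivative of $P$, and that, once the reachable states are partitioned into $R$-classes, the cumulative mass $P+P$ assigns to each class coincides with that assigned by $P$. I would establish this by checking that the independent double resolution re-sums to the single-copy probability on each class, using the action-level idempotency supplied by $\textbf{Sum}_1$ and $\textbf{Sum}_2$ to guarantee that $P'$, $P''$ and $P'+P''$ all lie in the same class as the matching derivative of $P$. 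Once clause (3) is settled for idempotency, the remaining verifications follow the routine pattern above.
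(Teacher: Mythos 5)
Your handling of commutativity, associativity and the unit law is sound, and since the paper's own ``proof'' consists entirely of the sentence ``we can easily prove the above equations, and we omit the proof,'' the explicit relations $R=Id\cup R_0$ and the clause-by-clause verification you sketch for those three laws are strictly more than the paper provides. For those laws the syntactic correspondence you describe (swapping or re-bracketing the premises of $\textbf{PSum}$, selecting the same summand under $\textbf{Sum}_{1,2}$, and the fact that $\textbf{nil}$ has no outgoing transitions of either kind) does close the argument.

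The gap is in the idempotency law, and it is not a hole you can patch: under the given rules the law is false. You correctly isolate the danger --- $\textbf{PSum}$ resolves the two copies of $P$ independently, so $P+P\rightsquigarrow P'+P''$ with possibly $P'\neq P''$ --- but your proposed repair, that ``$P'$, $P''$ and $P'+P''$ all lie in the same class as the matching derivative of $P$,'' fails. Take $P=a.\textbf{nil}\boxplus_{1/2}b.\textbf{nil}$ with $a\neq b$. Its only probabilistic derivatives are $\breve{a}.\textbf{nil}$ and $\breve{b}.\textbf{nil}$, each able to perform exactly one of the two actions; but $P+P\rightsquigarrow\breve{a}.\textbf{nil}+\breve{b}.\textbf{nil}$, and that derivative can perform \emph{both} $a$ and $b$, so it is not pomset bisimilar to any probabilistic derivative of $P$. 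Hence clause (2) of Definition \ref{PPSB} cannot be met by any relation containing $(P+P,P)$, and clause (3) fails too, since $P$ assigns mass $0$ to the class of $\breve{a}.\textbf{nil}+\breve{b}.\textbf{nil}$ while $P+P$ assigns it positive mass. This is exactly why the probabilistic process algebras this book builds on restrict idempotency of $+$ to atomic actions ($e+e=e$) or guard it by side conditions of the form $x+x=x$, as the paper itself does later in Table \ref{AxiomsForPParallelism}. Your instinct that clause (3) is where the genuinely probabilistic content sits was right; the correct conclusion is that item 3 of the proposition needs a hypothesis on $P$ (for instance that $P$ has a unique probabilistic derivative), not a cleverer re-summation over classes.
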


\begin{proof}
According to the definition of strongly probabilistic pomset bisimulation, we can easily prove the above equations, and we omit the proof.
\end{proof}

\begin{proposition}[Monoid laws for strongly probabilistic step bisimulation] The monoid laws for strongly probabilistic step bisimulation are as follows.
\begin{enumerate}
  \item $P+Q\sim_{ps} Q+P$;
  \item $P+(Q+R)\sim_{ps} (P+Q)+R$;
  \item $P+P\sim_{ps} P$;
  \item $P+\textbf{nil}\sim_{ps} P$.
\end{enumerate}
\end{proposition}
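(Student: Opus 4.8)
The plan is to prove each of the four equations by exhibiting an explicit relation and checking, clause by clause, that it is a probabilistic step bisimulation in the sense of Definition \ref{PPSB}. Recall that a probabilistic step bisimulation is obtained from a probabilistic pomset bisimulation by replacing pomset transitions with steps, so that the event multiset carried by each transition is required to be pairwise concurrent; apart from this, the four conditions (step matching, probabilistic-transition matching, agreement of the cumulative distribution $\mu$ on equivalence classes, and $[\surd]_{\mathcal{R}}=\{\surd\}$) are identical. Consequently the whole argument runs in exact parallel to the preceding proposition on strongly probabilistic \emph{pomset} bisimulation, and for each law I only need to produce a witnessing relation closed under both kinds of residual.

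For commutativity $P+Q\sim_{ps}Q+P$ I would take $\mathcal{R}$ to be the identity relation on closed terms augmented with every pair $(s+t,\,t+s)$ and its converse. Clause~(1): a step $P+Q\xrightarrow{X}C'$ is derived, via $\textbf{Sum}_1$ or $\textbf{Sum}_2$ applied to whichever summand supplies it, from a step of $P$ or of $Q$, and in either case $Q+P$ performs the same step $X$ (with $X\sim X$) landing in the identity-related residual. Clause~(2): by $\textbf{PSum}$ a transition $P+Q\rightsquigarrow P'+Q'$ forces $P\rightsquigarrow P'$ and $Q\rightsquigarrow Q'$, whence $Q+P\rightsquigarrow Q'+P'$ and $(P'+Q',\,Q'+P')\in\mathcal{R}$. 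Clauses~(3) and~(4) hold because $+$ only reorders the probabilistic branches, so the cumulative measure on each $\mathcal{R}$-class is preserved and $\surd$ stays isolated. Associativity is treated identically, with $\mathcal{R}$ built from the pairs $((s+t)+u,\,s+(t+u))$.

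For idempotence $P+P\sim_{ps}P$ and the unit law $P+\textbf{nil}\sim_{ps}P$ the same template applies, but here is where I expect the only genuine care to be required. Under $\textbf{PSum}$ a transition of $P+P$ has the form $P+P\rightsquigarrow P'+P''$, built from two (possibly distinct) probabilistic residuals $P\rightsquigarrow P'$ and $P\rightsquigarrow P''$, whereas $P$ itself performs a single $P\rightsquigarrow P'$. I would therefore take $\mathcal{R}$ to contain all pairs $(s+s,\,s)$ and $(s+\textbf{nil},\,s)$ with $s$ ranging over reachable closed terms, so that $\mathcal{R}$ is closed under the probabilistic-residual step, and then verify that the cumulative distribution collapses the duplicated (respectively dead) branch to exactly the distribution of $s$. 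The action-step and termination clauses then follow at once from $\textbf{Sum}_1$, $\textbf{Sum}_2$ and from the fact that $\textbf{nil}$ contributes no transitions.

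The main obstacle is clause~(3), the equality $\mu(C_1,C)=\mu(C_2,C)$ of the cumulative probability distribution over each $\mathcal{R}$-class: one must check that summation merely permutes the probabilistic branches (for commutativity and associativity) or coalesces duplicated and dead branches (for idempotence and the unit), so that the measures induced on $\mathcal{R}$-classes agree. Once the witnessing relations are chosen closed under the probabilistic residual operation of $\textbf{PSum}$, every remaining clause reduces to a direct reading of the transition rules for $+$ in Tables \ref{PTRForCTC3} and \ref{TRForCTC3}, exactly as in the strongly probabilistic pomset case, so the four laws follow.
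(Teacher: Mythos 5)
Your proposal is correct in outline and follows exactly the route the paper intends: the paper's own ``proof'' consists solely of the sentence ``According to the definition of strongly probabilistic step bisimulation, we can easily prove the above equations, and we omit the proof,'' i.e.\ direct verification against Definition \ref{PPSB} with pomset transitions replaced by steps. You supply the witnessing relations and clause-by-clause checks that the paper leaves implicit, and you correctly identify the only delicate points (the $\textbf{PSum}$ residual $P+P\rightsquigarrow P'+P''$ with possibly distinct branches, and the $\mu$-matching clause), so no further comparison is needed.
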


\begin{proof}
According to the definition of strongly probabilistic step bisimulation, we can easily prove the above equations, and we omit the proof.
\end{proof}

\begin{proposition}[Monoid laws for strongly probabilistic hp-bisimulation] The monoid laws for strongly probabilistic hp-bisimulation are as follows.
\begin{enumerate}
  \item $P+Q\sim_{php} Q+P$;
  \item $P+(Q+R)\sim_{php} (P+Q)+R$;
  \item $P+P\sim_{php} P$;
  \item $P+\textbf{nil}\sim_{php} P$.
\end{enumerate}
\end{proposition}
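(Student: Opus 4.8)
The plan is to prove each of the four laws by exhibiting an explicit probabilistic hp-bisimulation in the sense of Definition \ref{PHHPB} and checking its four defining clauses; the argument parallels the step and pomset cases above but must in addition carry along the order-isomorphism $f$ and respect the cumulative probability distribution. Throughout I would present the candidate posetal relation as a set of triples $(C_1,f,C_2)$ with $f:C_1\to C_2$ an isomorphism of configurations, seeded with $(\emptyset,\emptyset,\emptyset)$ so that the requirement $(\emptyset,\emptyset,\emptyset)\in R$ is met by construction.

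For commutativity and associativity (items 1 and 2) I would take $R$ to pair the root of the left-hand process with the root of the right-hand process and otherwise coincide with the identity posetal relation on the configurations of their common residuals. The point is that a probabilistic transition of $P+Q$ arises, via rule \textbf{PSum}, from simultaneous probabilistic moves of $P$ and $Q$ and produces $P'+Q'$, which is matched on the $Q+P$ side by $Q'+P'$ with the same isomorphism $f$, discharging clause (2); an action transition then comes, via rule \textbf{Sum}, from one summand and yields a residual syntactically identical on both sides, so clause (1) is discharged by extending $f$ to $f[e_1\mapsto e_2]$ with $e_1=e_2$. Associativity is handled the same way, the reassociated summands contributing the same events in the same causality and conflict relations. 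Since corresponding residuals fall into the same $R$-class and \textbf{PSum} splits the same probability mass to matched successors, clause (3), $\mu(C_1,C)=\mu(C_2,C)$ for each class $C$, holds; clause (4) is immediate because neither rewriting introduces or removes a terminated state.

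Idempotence (item 3) and the unit law (item 4) are where the probabilistic clauses do the real work. For $P+P\sim_{php}P$ I would relate the root of $P+P$ to the root of $P$ and fall back to the identity once a transition has been taken: a move of $P+P$ originating in either copy is matched by the single $P$, and conversely a move of $P$ is matched by (say) the left copy, with $f$ extended identically. The non-trivial check is clause (3): because $\mu$ assigns probability to $R$-\emph{classes} rather than to individual residuals, the two syntactic copies of each successor collapse into one class of $\mathcal{C}(\mathcal{E})/R$, so the cumulative weight reaching that class from $P+P$ equals the weight reaching it from $P$. For $P+\textbf{nil}\sim_{php}P$ the main subtlety is the interaction of \textbf{nil} with the probabilistic rule \textbf{PSum}; one must confirm that the probabilistic and action behaviour of $P+\textbf{nil}$ reduces exactly to that of $P$, after which $R$ is again essentially the identity.

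I expect the main obstacle to be clause (3) together with the bookkeeping of the isomorphism $f$ that distinguishes the hp-case from the earlier pomset and step cases: one must verify that for every reachable pair the induced partition $\mathcal{C}(\mathcal{E})/R$ is the same on both sides and that the cumulative probability distribution function agrees class by class. Once the relations above are checked to be well-defined posetal relations and the \textbf{PSum} and \textbf{PBox-Sum} rules are observed to distribute identical probability mass to matched successors, all four clauses follow and, as in the pomset and step cases, the verification is routine.
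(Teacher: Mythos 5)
You adopt the right overall strategy: the paper omits this proof entirely, and exhibiting an explicit posetal relation containing $(\emptyset,\emptyset,\emptyset)$ and checking the four clauses of Definition \ref{PHHPB} is exactly what ``according to the definition'' must amount to. Your treatment of commutativity and associativity (items 1 and 2) is correct in outline. The genuine gap is in items 3 and 4, and it sits precisely at the probabilistic-transition clause (2) that you yourself flag as ``where the probabilistic clauses do the real work'' but then do not actually discharge. The rule $\textbf{PSum}$ in Table \ref{PTRForCTC3} has the form $\frac{P\rightsquigarrow P'\quad Q\rightsquigarrow Q'}{P+Q\rightsquigarrow P'+Q'}$: both summands resolve their probabilistic choices simultaneously and \emph{independently}. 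Hence $P+P\rightsquigarrow P'+P''$ for any pair of resolutions $P\rightsquigarrow P'$ and $P\rightsquigarrow P''$, including $P'\neq P''$; for $P=a.\textbf{nil}\boxplus_{1/2}b.\textbf{nil}$ one gets $P+P\rightsquigarrow \breve{a}.\textbf{nil}+\breve{b}.\textbf{nil}$, a state that can perform either $a$ or $b$, while a single probabilistic transition of $P$ reaches only $\breve{a}.\textbf{nil}$ or $\breve{b}.\textbf{nil}$. Your relation (``identity once a transition has been taken'', matching a move of $P+P$ ``originating in either copy'') covers only the diagonal case $P'=P''$ and cannot close clause (2) in the mixed case; appealing to clause (3) and the quotient $\mathcal{C}(\mathcal{E})/R$ does not repair this, because the mismatch is already in which residuals are reachable, not in how much probability they carry. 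This is the same phenomenon that forces the axiom systems later in the paper to weaken idempotence to $e+e=e$ for atomic $e$ and to guard $P1$ and $P4$ with the side conditions $x+x=x$, $y+y=y$.

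Item 4 has a dual problem that you name but do not resolve: Table \ref{PTRForCTC3} gives no probabilistic transition rule for $\textbf{nil}$, so the premise $Q\rightsquigarrow Q'$ of $\textbf{PSum}$ is unsatisfiable for $Q=\textbf{nil}$ and $P+\textbf{nil}$ has no probabilistic transition at all, whereas $P$ does; clause (2) then fails in the direction from $P$ to $P+\textbf{nil}$. To make items 3 and 4 go through you would need either an additional hypothesis on $P$ (no top-level probabilistic choice), or explicit conventions such as $\textbf{nil}\rightsquigarrow\breve{\textbf{nil}}$ together with a coupling of the two copies of $P$ in $\textbf{PSum}$ --- none of which is in the paper as written. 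As it stands, items 1 and 2 of your argument are sound, but items 3 and 4 contain a step that fails.
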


\begin{proof}
According to the definition of strongly probabilistic hp-bisimulation, we can easily prove the above equations, and we omit the proof.
\end{proof}

\begin{proposition}[Monoid laws for strongly probabilistic hhp-bisimulation] The monoid laws for strongly probabilistic hhp-bisimulation are as follows.
\begin{enumerate}
  \item $P+Q\sim_{phhp} Q+P$;
  \item $P+(Q+R)\sim_{phhp} (P+Q)+R$;
  \item $P+P\sim_{phhp} P$;
  \item $P+\textbf{nil}\sim_{phhp} P$.
\end{enumerate}
\end{proposition}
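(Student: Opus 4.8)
The plan is to treat each equation as a separate bisimulation construction. Since, by Definition \ref{PHHPB}, $\sim_{phhp}$ is exactly a \emph{downward closed} probabilistic hp-bisimulation, it suffices in each of the four cases to exhibit a posetal relation $R\subseteq\mathcal{C}(\mathcal{E}_1)\overline{\times}\mathcal{C}(\mathcal{E}_2)$ that contains the root triple $(\emptyset,\emptyset,\emptyset)$, meets clauses (1)--(4) of Definition \ref{PHHPB} (action-transition matching via $f[e_1\mapsto e_2]$, probabilistic-transition matching, equality of $\mu$ on $R$-classes, and $[\surd]_R=\{\surd\}$), and is downward closed. The guiding observation is that the Summation $+$ is resolved \emph{at the root}: the probabilistic transitions merely pre-process both summands (rule \textbf{PSum}), and the ordinary choice is collapsed by the first action transition (rules \textbf{Sum}$_1$, \textbf{Sum}$_2$), after which the two sides become the \emph{same} closed term.

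For commutativity $P+Q\sim_{phhp}Q+P$ and associativity $P+(Q+R)\sim_{phhp}(P+Q)+R$ I would take
$$R=\{(\emptyset,\emptyset,\emptyset)\}\cup\{(C,Id_C,C)\mid C\in\mathcal{C}(\mathcal{E})\},$$
i.e.\ the root triple together with the identity posetal relation on every configuration reachable after the first action step. The verification runs as follows. A probabilistic move $P+Q\rightsquigarrow P'+Q'$ is mirrored by $Q+P\rightsquigarrow Q'+P'$ using the same premises of \textbf{PSum} (and symmetrically for associativity through the nested derivation), so clause (2) holds by an evident bijection of probabilistic moves and the residual pair is again of root type. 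Any action transition out of $P'+Q'$ is inherited from a single summand via \textbf{Sum}$_1$/\textbf{Sum}$_2$ and lands in exactly the residual that the matching transition of $Q'+P'$ reaches; thus clause (1) holds with $e_2\equiv e_1$ and $f=Id$ (the isomorphism is empty at the root and the identity thereafter). Clause (3), the equality $\mu(C_1,C)=\mu(C_2,C)$ on each $R$-class, follows because the summation rules assign mirrored branches the same distribution, and clause (4) is immediate since neither side terminates without a prior move.

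Idempotence (3) and the unit law (4) use relations of the same shape, $R$ relating $P+P$ to $P$ (respectively $P+\textbf{nil}$ to $P$) at the root and reducing to $Id$ after the first action. For the unit law, $\textbf{nil}$ contributes no action transitions, so every move of $P+\textbf{nil}$ originates in $P$ and is matched verbatim; the only thing to check is that $\textbf{nil}$ introduces no probabilistic mass that could desynchronize clauses (2)--(3). Idempotence is the delicate case: an application of \textbf{PSum} resolves the two copies of $P$ \emph{independently}, so $P+P\rightsquigarrow P'+P''$ may arise with $P'\neq P''$, whereas $P$ resolves only once to a single $P'$. One must therefore argue that every such residual $P'+P''$ is still $R$-related to a probabilistic residual of $P$ and that the cumulative distribution $\mu^*$ merges the coincident branches so that $\mu(P+P,\cdot)$ matches $\mu(P,\cdot)$ on $R$-classes.

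Downward closure is then essentially free: because each $R$ consists of a single root triple (whose only pointwise sub-triple is itself) together with identity triples, any $(C_1,f,C_2)\subseteq(C_1',f',C_2')$ with the larger triple in $R$ is again of this identity form and hence in $R$. The main obstacle I expect is therefore not the action-transition clause (which is routine given the static nature of $+$) but the probabilistic clauses (2) and (3), and within them the idempotence case above: reconciling the independent \textbf{PSum}-resolution of the two identical summands of $P+P$ with the single resolution of $P$, and showing via $\mu^*$ that no residual of $P+P$ escapes the relation and that the probability mass on $R$-equivalent target classes agrees. The unit law's requirement that $\textbf{nil}$ be probabilistically inert is a milder instance of the same bookkeeping.
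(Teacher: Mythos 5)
The paper offers no actual argument here (its ``proof'' is a one-line appeal to the definition), so the only thing to assess is whether your plan would go through. Your overall strategy --- exhibit an explicit downward-closed posetal relation containing $(\emptyset,\emptyset,\emptyset)$ and check clauses (1)--(4) of Definition \ref{PHHPB} --- is the right one, and for commutativity and associativity the relation $\{(\emptyset,\emptyset,\emptyset)\}\cup\{(C,Id_C,C)\}$ does work, since $\textbf{PSum}$ mirrors probabilistic moves componentwise and $\textbf{Sum}_{1,2}$ collapse the choice to a common residual; downward closure of identity triples is indeed free.

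The genuine gap is the idempotence case, which you flag as ``delicate'' but do not close --- and it cannot be closed by the relation you propose. By $\textbf{PSum}$, $P+P\rightsquigarrow P'+P''$ where $P'$ and $P''$ are \emph{independent} resolutions of $P$. Take $P=a\boxplus_{\pi}b$: then $P+P\rightsquigarrow \breve{a}.{}+\breve{b}.{}$ with positive probability, and this residual offers both an $a$-transition and a $b$-transition, whereas every probabilistic residual of $P$ (namely $\breve{a}$ or $\breve{b}$) offers exactly one of them. Clause (2) of Definition \ref{PHHPB} requires each probabilistic transition of $C_1$ to be matched by one of $C_2$ landing in an $R$-related state; no residual of $P$ is hp-related to the mixed residual, and the cumulative distribution $\mu^*$ cannot repair a failure of the pointwise matching clause. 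So the argument breaks unless either the law is restricted to processes without top-level probabilistic choice (note that the axiom system itself is cautious here: $A3$ is stated only for atomic events, and $P1$, $P4$ carry the proviso $x+x=x$), or probabilistic transitions are matched against combined (convex) transitions, which the paper's definitions do not provide. A smaller instance of the same bookkeeping problem afflicts the unit law: $\textbf{PSum}$ needs a probabilistic move from \emph{both} summands, and no rule gives $\textbf{nil}$ one, so under a literal reading $P+\textbf{nil}$ has no probabilistic transitions at all while $P$ does. You would need to make explicit what convention resolves this before clause (2) can be verified.
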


\begin{proof}
According to the definition of strongly probabilistic hhp-bisimulation, we can easily prove the above equations, and we omit the proof.
\end{proof}

\begin{proposition}[Monoid laws 2 for strongly probabilistic pomset bisimulation] 
The monoid laws 2 for strongly probabilistic pomset bisimulation are as follows.

\begin{enumerate}
  \item $P\boxplus_{\pi} Q\sim_{pp} Q\boxplus_{1-\pi} P$\\
  \item $P\boxplus_{\pi}(Q\boxplus_{\rho} R)\sim_{pp} (P\boxplus_{\frac{\pi}{\pi+\rho-\pi\rho}}Q)\boxplus_{\pi+\rho-\pi\rho} R$\\
  \item $P\boxplus_{\pi}P\sim_{pp} P$\\
  \item $P\boxplus_{\pi}\textbf{nil}\sim_{pp} P$.
\end{enumerate}
\end{proposition}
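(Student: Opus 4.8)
The plan is to prove each identity by exhibiting a concrete probabilistic pomset bisimulation and verifying the four clauses of Definition~\ref{PPSB}. Since the two sides of every law differ only in how the initial probabilistic choice is packaged, and become syntactically identical once that choice has been resolved by a $\rightsquigarrow$-transition, I would take $R$ to be the identity relation on the reachable configurations of the common resolved behaviour, augmented with the single pair $(\emptyset,\emptyset)$ relating the two initial configurations. With this $R$, clause~(4), $[\surd]_R=\{\surd\}$, is immediate, and clause~(1), the matching of pomset transitions $\xrightarrow{X}$, is trivial as well: at the initial pair no action transition is enabled until the probabilistic choice has been resolved, so the premise of clause~(1) is vacuous there, while on the identity part of $R$ the two components are literally equal and match each other's $\xrightarrow{X}$ moves. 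All the content therefore concentrates in clause~(2), the matching of the probabilistic transitions $\xrsquigarrow{\pi}$, and clause~(3), the equality $\mu(C_1,C)=\mu(C_2,C)$ of the cumulative distribution over each $R$-class $C$.

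For clauses~(2) and~(3) I would read the probabilistic behaviour off the rule $\textbf{PBox-Sum}$ of Table~\ref{PTRForCTC3}: from $\emptyset$ the term $P\boxplus_{\pi}Q$ can fire $P\boxplus_{\pi}Q\rightsquigarrow P'$ with weight $\pi$ whenever $P\rightsquigarrow P'$, and $P\boxplus_{\pi}Q\rightsquigarrow Q'$ with weight $1-\pi$ whenever $Q\rightsquigarrow Q'$, nested boxes being collapsed in one $\rightsquigarrow$-step with the branch weights multiplying. The substance of each law is then a short arithmetic check that both sides deposit the same total weight on each resolved branch. For law~1 the left side sends $P'\mapsto\pi$ and $Q'\mapsto 1-\pi$, while $Q\boxplus_{1-\pi}P$ sends $Q'\mapsto 1-\pi$ and $P'\mapsto 1-(1-\pi)=\pi$, so the distributions agree. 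For law~3 each resolved state of $P$ receives weights $\pi w$ and $(1-\pi)w$ from the two copies, summing to the original weight $w$ carried by $P\rightsquigarrow\cdot$. For law~4 I would first note that no rule derives any transition $\textbf{nil}\rightsquigarrow\cdot$, so the right branch of $P\boxplus_{\pi}\textbf{nil}$ is never enabled; hence $P\boxplus_{\pi}\textbf{nil}$ and $P$ have exactly the same probabilistic transitions and the same $\mu$.

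The main obstacle is law~2, where the reweighting must be matched against clause~(3). Writing $s=\pi+\rho-\pi\rho$, the left side $P\boxplus_{\pi}(Q\boxplus_{\rho}R)$ assigns $P'\mapsto\pi$, $Q'\mapsto(1-\pi)\rho$ and $R'\mapsto(1-\pi)(1-\rho)$, whereas the right side $(P\boxplus_{\pi/s}Q)\boxplus_{s}R$ assigns $P'\mapsto s\cdot\tfrac{\pi}{s}=\pi$, $Q'\mapsto s\cdot(1-\tfrac{\pi}{s})=s-\pi=(1-\pi)\rho$ and $R'\mapsto 1-s=(1-\pi)(1-\rho)$, so the two coincide. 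Here I would flag two points that need care: the division forces the side condition $s=\pi+\rho-\pi\rho\neq 0$, so that the weight $\pi/s$ appearing on the right is well defined; and the claim that a transition through the nested box carries the product of the two branch weights rests on how $\mu$ extends across $\boxplus_{\pi}$, which is precisely the computation clause~(3) demands. Once this multiplicativity of $\mu$ through nested boxes is in hand, clause~(3) reduces to the displayed arithmetic and every case closes.
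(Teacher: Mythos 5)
The paper offers no actual proof here---it states only that the laws follow ``according to the definition of strongly probabilistic pomset bisimulation'' and omits all details---so your proposal supplies precisely the argument the paper gestures at: exhibit the identity-based bisimulation augmented with the initial pair, observe that clauses (1) and (4) are trivial, and reduce clauses (2) and (3) to the arithmetic of the branch weights, which you carry out correctly (including the key computation $s\cdot(1-\tfrac{\pi}{s})=(1-\pi)\rho$ and $1-s=(1-\pi)(1-\rho)$ for law 2). The two caveats you flag---the side condition $\pi+\rho-\pi\rho\neq 0$ and the multiplicativity of $\mu$ through nested $\boxplus_{\pi}$, which the paper only makes explicit for $BAPTC$ in Table \ref{PDFBAPTC}---are exactly the points the paper silently assumes, so your proof is correct and strictly more complete than the original.
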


\begin{proof}
According to the definition of strongly probabilistic pomset bisimulation, we can easily prove the above equations, and we omit the proof.
\end{proof}

\begin{proposition}[Monoid laws 2 for strongly probabilistic step bisimulation] 
The monoid laws 2 for strongly probabilistic step bisimulation are as follows.

\begin{enumerate}
  \item $P\boxplus_{\pi} Q\sim_{ps} Q\boxplus_{1-\pi} P$\\
  \item $P\boxplus_{\pi}(Q\boxplus_{\rho} R)\sim_{ps} (P\boxplus_{\frac{\pi}{\pi+\rho-\pi\rho}}Q)\boxplus_{\pi+\rho-\pi\rho} R$\\
  \item $P\boxplus_{\pi}P\sim_{ps} P$\\
  \item $P\boxplus_{\pi}\textbf{nil}\sim_{ps} P$.
\end{enumerate}
\end{proposition}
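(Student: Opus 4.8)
The plan is to treat all four identities uniformly by exhibiting, for each, an explicit probabilistic step bisimulation $R$ and then checking the four clauses of Definition \ref{PPSB}. Since the box-summation $\boxplus_{\pi}$ introduces no concurrency and no new events --- by rule \textbf{PBox-Sum} it merely resolves a probabilistic choice before any action is performed --- the step and pomset requirements coincide here, so the argument is verbatim the one for the preceding proposition on strongly probabilistic pomset bisimulation. In each case the candidate relation relates the two roots and is otherwise the identity on residuals: once the probabilistic choice is settled, both sides have collapsed to one of the common subprocesses $P$, $Q$, $R$ (or $\textbf{nil}$), so the action/step clause (1) and the termination clause $[\surd]_R=\{\surd\}$ (4) hold trivially through the identity. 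The substantive content lies in clauses (2) and (3): matching the probabilistic transitions $\xrsquigarrow{\pi}$ and verifying the distribution condition $\mu(C_1,C)=\mu(C_2,C)$ on each $R$-class.

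For law (1) I would relate $P\boxplus_{\pi}Q$ to $Q\boxplus_{1-\pi}P$. Resolving the choice, the left root reaches $P$ with weight $\pi$ and $Q$ with weight $1-\pi$, while the right root reaches $Q$ with weight $1-\pi$ and $P$ with weight $\pi$; both therefore assign mass $\pi$ to $[P]_R$ and $1-\pi$ to $[Q]_R$, so clause (3) is immediate. Law (3) is similar: in $P\boxplus_{\pi}P$ both summands fall into the single class $[P]_R$, so the mass reaching $[P]_R$ is $\pi+(1-\pi)=1$, matching the degenerate distribution of the bare $P$. Law (4) is the one needing care about the operational convention: $\textbf{nil}$ performs no probabilistic move, so the only resolvable branch of $P\boxplus_{\pi}\textbf{nil}$ is the $P$-branch; I would argue that after normalisation all the mass is carried by $[P]_R$, whence $\mu(P\boxplus_{\pi}\textbf{nil},[P]_R)=1=\mu(P,[P]_R)$, and that the dead $\textbf{nil}$-branch contributes no action transition that $P$ must match.

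The heart of the proposition is law (2), where $R$ again pairs the two roots and is the identity on the common residuals $P$, $Q$, $R$. Here I would carry out the bookkeeping explicitly. On the left, $P\boxplus_{\pi}(Q\boxplus_{\rho}R)$ assigns mass $\pi$ to $[P]_R$, mass $(1-\pi)\rho$ to $[Q]_R$, and mass $(1-\pi)(1-\rho)$ to $[R]_R$. On the right, writing $b=\pi+\rho-\pi\rho$ so that $1-b=(1-\pi)(1-\rho)$, the process $(P\boxplus_{\pi/b}Q)\boxplus_{b}R$ assigns mass $b\cdot\frac{\pi}{b}=\pi$ to $[P]_R$, mass $b\cdot(1-\frac{\pi}{b})=b-\pi=(1-\pi)\rho$ to $[Q]_R$, and mass $1-b=(1-\pi)(1-\rho)$ to $[R]_R$. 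These agree class by class, which is exactly clause (3), and clauses (1), (2), (4) follow as before. I expect this algebraic check --- confirming that the reweighting factor $\frac{\pi}{\pi+\rho-\pi\rho}$ is precisely what keeps the conditional probabilities of $P$, $Q$, $R$ invariant under reassociation --- to be the only real obstacle, the remaining clauses being routine once the identity-based relation is fixed.
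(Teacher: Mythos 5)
Your proposal is correct and follows the only natural route—exhibiting an identity-based bisimulation and checking the clauses of the definition directly—which is precisely the approach the paper takes, except that the paper omits all details with the remark that the equations follow easily from the definition. Your explicit mass bookkeeping for law (2), confirming that $b=\pi+\rho-\pi\rho$ gives $1-b=(1-\pi)(1-\rho)$ and that the three classes receive weights $\pi$, $(1-\pi)\rho$, $(1-\pi)(1-\rho)$ on both sides, supplies exactly the verification the paper leaves implicit.
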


\begin{proof}
According to the definition of strongly probabilistic step bisimulation, we can easily prove the above equations, and we omit the proof.
\end{proof}

\begin{proposition}[Monoid laws 2 for strongly probabilistic hp-bisimulation] 
The monoid laws 2 for strongly probabilistic hp-bisimulation are as follows.

\begin{enumerate}
  \item $P\boxplus_{\pi} Q\sim_{php} Q\boxplus_{1-\pi} P$\\
  \item $P\boxplus_{\pi}(Q\boxplus_{\rho} R)\sim_{php} (P\boxplus_{\frac{\pi}{\pi+\rho-\pi\rho}}Q)\boxplus_{\pi+\rho-\pi\rho} R$\\
  \item $P\boxplus_{\pi}P\sim_{php} P$\\
  \item $P\boxplus_{\pi}\textbf{nil}\sim_{php} P$.
\end{enumerate}
\end{proposition}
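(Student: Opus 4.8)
The plan is to prove each of the four laws by exhibiting an explicit probabilistic hp-bisimulation relation $R$ witnessing $(\emptyset,\emptyset,\emptyset)\in R$ and then checking the four clauses of Definition \ref{PHHPB}. The guiding observation is that the Box-Summation operator $\boxplus_{\pi}$ contributes only probabilistic transitions (via rule \textbf{PBox-Sum}) and never action transitions of its own, so once the top-level probabilistic choice on either side is resolved the two sides reach configurations built from the \emph{same} underlying subprocesses $P$, $Q$, $R$. I would therefore take $R$ to consist of the initial triple together with the identity-related triples $(C,\mathrm{id}_C,C)$ for configurations reachable after resolving the choice, extended by the posetal bookkeeping $f[e_1\mapsto e_2]$ along each matched action transition. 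With this choice, clauses (1) and (2) (action matching carrying the isomorphism $f$) reduce to reflexive matching of identical transitions, and clause (4), $[\surd]_R=\{\surd\}$, holds by construction.

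The substantive content lives in the probabilistic part: matching the transitions $\xrsquigarrow{\pi}$ in clause (2) and, above all, verifying clause (3), namely $\mu(C_1,C)=\mu(C_2,C)$ for every equivalence class $C\in\mathcal{C}(\mathcal{E})/R$. Reading off \textbf{PBox-Sum}, the term $P\boxplus_{\pi}Q$ reaches a resolved form of $P$ with probability $\pi$ and of $Q$ with probability $1-\pi$. For commutativity (law~1) the class of $P$ receives $\pi$ on the left and $1-(1-\pi)=\pi$ on the right, and symmetrically for $Q$, so the distributions coincide. For idempotence (law~3) both branches land in the single class $[P]_R$, whose cumulative probability is $\pi+(1-\pi)=1$, matching the probability $1$ with which $P$ resolves to itself. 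For the unit law (law~4) the $P$-branch and the $\textbf{nil}$-branch are treated analogously, the $\textbf{nil}$-branch contributing to a class with no further behaviour.

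The main obstacle, and the only genuine computation, is the associativity law (law~2), where the nested choices force the reweighting factor. Writing $s=\pi+\rho-\pi\rho=1-(1-\pi)(1-\rho)$, the left-hand side $P\boxplus_{\pi}(Q\boxplus_{\rho}R)$ resolves to $P$, $Q$, $R$ with probabilities $\pi$, $(1-\pi)\rho$, $(1-\pi)(1-\rho)$, while the right-hand side $(P\boxplus_{\pi/s}Q)\boxplus_{s}R$ resolves to $P$ with probability $s\cdot(\pi/s)=\pi$, to $Q$ with probability $s\cdot(1-\pi/s)=s-\pi=(1-\pi)\rho$, and to $R$ with probability $1-s=(1-\pi)(1-\rho)$. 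These agree class by class, which is exactly what clause (3) demands. The remaining clauses then follow as above, so $R$ is a probabilistic hp-bisimulation and all four equations hold; the argument mirrors the pomset and step cases recorded in the two preceding propositions, the only addition being the posetal component $f$ carried along each action transition.
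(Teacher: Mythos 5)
Your proposal is correct and follows exactly the route the paper intends: unfold Definition \ref{PHHPB}, exhibit the identity-based posetal relation, and verify the four clauses, with the only real work being the probability bookkeeping via the PDF $\mu(x\boxplus_{\pi}y,z)=\pi\mu(x,z)+(1-\pi)\mu(y,z)$. The paper itself omits the proof entirely (``we can easily prove the above equations, and we omit the proof''), so your writeup --- in particular the explicit computation $\pi$, $(1-\pi)\rho$, $(1-\pi)(1-\rho)$ on both sides of the associativity law --- simply supplies the details the paper leaves to the reader.
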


\begin{proof}
According to the definition of strongly probabilistic hp-bisimulation, we can easily prove the above equations, and we omit the proof.
\end{proof}

\begin{proposition}[Monoid laws 2 for strongly probabilistic hhp-bisimulation] 
The monoid laws 2 for strongly probabilistic hhp-bisimulation are as follows.

\begin{enumerate}
  \item $P\boxplus_{\pi} Q\sim_{phhp} Q\boxplus_{1-\pi} P$\\
  \item $P\boxplus_{\pi}(Q\boxplus_{\rho} R)\sim_{phhp} (P\boxplus_{\frac{\pi}{\pi+\rho-\pi\rho}}Q)\boxplus_{\pi+\rho-\pi\rho} R$\\
  \item $P\boxplus_{\pi}P\sim_{phhp} P$\\
  \item $P\boxplus_{\pi}\textbf{nil}\sim_{phhp} P$.
\end{enumerate}
\end{proposition}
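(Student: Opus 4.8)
The plan is to treat all four laws uniformly by exhibiting, for each, an explicit downward-closed probabilistic hp-bisimulation (a posetal relation satisfying the four clauses of Definition \ref{PHHPB}) that contains the pair of initial empty configurations, and then to invoke the characterization of $\sim_{phhp}$ as a downward-closed probabilistic hp-bisimulation.

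First I would analyze the probabilistic transitions emanating from each side using rule $\textbf{PBox-Sum}$. The operator $\boxplus_{\pi}$ contributes only probabilistic moves $\rightsquigarrow$ at the top level, resolving the probabilistic conflict before any action is performed; once the choice is resolved, both processes continue as one of their summands, whose event structures are exactly those of $P$, $Q$, $R$ themselves. Thus for each law I would compute the induced distribution over the summands. For law (1), both $P\boxplus_{\pi}Q$ and $Q\boxplus_{1-\pi}P$ reach the $P$-branch with weight $\pi$ and the $Q$-branch with weight $1-\pi$; for law (3), every probabilistic resolution of $P\boxplus_{\pi}P$ lands in a copy of $P$, matching $P$ itself; for law (4), since $\textbf{nil}$ admits no $\rightsquigarrow$ move, only the left premise of $\textbf{PBox-Sum}$ can fire, so $P\boxplus_{\pi}\textbf{nil}$ resolves into $P$ with full weight.

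Then I would define $R$ to pair the two initial configurations and, after the probabilistic choice is made, to relate the configuration reached in a summand on the left with the configuration reached in the corresponding summand on the right, taking the isomorphism $f$ to be the identity on the shared events of $P$, $Q$, $R$. Clauses (1) and (2) of Definition \ref{PHHPB} then hold because, once the choice is resolved, the two sides perform literally the same action and probabilistic transitions; clause (4) holds because neither side can terminate before entering a summand. Downward closure is immediate: any sub-configuration of a related pair lies inside the same summand and is related by the restriction of the identity isomorphism, so $R$ is a probabilistic hhp-bisimulation and not merely an hp-bisimulation.

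The main obstacle is clause (3), the requirement that $\mu(C_1,C)=\mu(C_2,C)$ for every $R$-equivalence class $C$, which is precisely where the stated coefficients are forced. For the associativity law (2) I would set $\sigma=\pi+\rho-\pi\rho=1-(1-\pi)(1-\rho)$ and check that $P\boxplus_{\pi}(Q\boxplus_{\rho}R)$ assigns weights $\pi$, $(1-\pi)\rho$, $(1-\pi)(1-\rho)$ to the $P$-, $Q$-, $R$-classes, while $(P\boxplus_{\pi/\sigma}Q)\boxplus_{\sigma}R$ assigns $\sigma\cdot\tfrac{\pi}{\sigma}=\pi$, $\sigma\cdot(1-\tfrac{\pi}{\sigma})=\rho(1-\pi)$, and $1-\sigma=(1-\pi)(1-\rho)$; these coincide, so clause (3) holds. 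Laws (1) and (3) reduce to the trivial identities $1-(1-\pi)=\pi$ and $\pi+(1-\pi)=1$, and law (4) to the renormalization to weight $1$. Once these arithmetic checks are discharged, the constructed $R$ satisfies all clauses of Definition \ref{PHHPB} and is downward closed, so the four $\sim_{phhp}$-equivalences follow.
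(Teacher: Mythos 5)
Your proposal is correct and follows exactly the route the paper indicates: the paper's own ``proof'' merely states that the laws follow from the definition of strongly probabilistic hhp-bisimulation and omits all details, and what you have written is the natural filling-in of that omission (explicit downward-closed posetal relation, matching of the $\rightsquigarrow$-resolutions via $\textbf{PBox-Sum}$, and the arithmetic check of the $\mu$-clause, where the identity $\pi+\rho-\pi\rho=1-(1-\pi)(1-\rho)$ does the work for associativity). No gap; your version is strictly more detailed than the paper's.
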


\begin{proof}
According to the definition of strongly probabilistic hhp-bisimulation, we can easily prove the above equations, and we omit the proof.
\end{proof}

\begin{proposition}[Static laws for strongly probabilistic pomset bisimulation] \label{SLSPB3}
The static laws for strongly probabilistic pomset bisimulation are as follows.

\begin{enumerate}
  \item $P\parallel Q\sim_{pp} Q\parallel P$;
  \item $P\parallel(Q\parallel R)\sim_{pp} (P\parallel Q)\parallel R$;
  \item $P\parallel \textbf{nil}\sim_{pp} P$;
  \item $P\setminus L\sim_{pp} P$, if $\mathcal{L}(P)\cap(L\cup\overline{L})=\emptyset$;
  \item $P\setminus K\setminus L\sim_{pp} P\setminus(K\cup L)$;
  \item $P[f]\setminus L\sim_{pp} P\setminus f^{-1}(L)[f]$;
  \item $(P\parallel Q)\setminus L\sim_{pp} P\setminus L\parallel Q\setminus L$, if $\mathcal{L}(P)\cap\overline{\mathcal{L}(Q)}\cap(L\cup\overline{L})=\emptyset$;
  \item $P[Id]\sim_{pp} P$;
  \item $P[f]\sim_{pp} P[f']$, if $f\upharpoonright\mathcal{L}(P)=f'\upharpoonright\mathcal{L}(P)$;
  \item $P[f][f']\sim_{pp} P[f'\circ f]$;
  \item $(P\parallel Q)[f]\sim_{pp} P[f]\parallel Q[f]$, if $f\upharpoonright(L\cup\overline{L})$ is one-to-one, where $L=\mathcal{L}(P)\cup\mathcal{L}(Q)$.
\end{enumerate}
\end{proposition}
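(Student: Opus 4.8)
The plan is to prove each of the eleven identities by exhibiting an explicit relation on processes and verifying that it is a strongly probabilistic pomset bisimulation in the sense of Definition \ref{PPSB}. Concretely, for an identity of the form $P\star Q\sim_{pp} P'\star' Q'$ I would take $R$ to be the set of all pairs of the two syntactic shapes occurring on the two sides, closed under identity, i.e. $R=\{(s,t)\}\cup Id_{\mathcal{P}}$ where $(s,t)$ ranges over instances of the two sides. To show $R$ is a strongly probabilistic pomset bisimulation I must check the four clauses of Definition \ref{PPSB}: (1) every pomset action transition $s\xrightarrow{X_1}s'$ is matched by $t\xrightarrow{X_2}t'$ with $X_1\sim X_2$ and $(s',t')\in R$, and vice versa; (2) every probabilistic transition $s\rightsquigarrow s^{\pi}$ is matched by $t\rightsquigarrow t^{\pi}$ with $(s^{\pi},t^{\pi})\in R$, and vice versa; (3) the cumulative probability-distribution clause $\mu(s,C)=\mu(t,C)$ for every $R$-class $C$; and (4) the termination clause $[\surd]_R=\{\surd\}$. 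Each check proceeds by a case analysis on the rules of Tables \ref{PTRForCTC3} and \ref{TRForCTC3} that can fire for the outermost operator.

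The restriction and relabelling laws (items 4, 5, 6, 8, 9, 10) and the unit law $P\parallel\textbf{nil}\sim_{pp}P$ (item 3) are the routine cases, for which I would take $R$ as an identity-style relation pairing the structurally corresponding subterms. The action-transition clause follows directly from the side conditions attached to rules $\textbf{Res}_{1,2}$ and $\textbf{Rel}_{1,2}$ together with the inductive definition of the sort $\mathcal{L}(P)$ in Definition \ref{sorts3}: law 4 uses that $\mathcal{L}(P)\cap(L\cup\overline{L})=\emptyset$ forces every label enabled by $P$ to pass the $\alpha,\overline{\alpha}\notin L$ guard, so $P\setminus L$ and $P$ have identical enabled transitions; law 6 uses $f^{-1}$ to transport the restriction guard through the relabelling, and law 10 uses functional composition. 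Because rules $\textbf{PRes}$ and $\textbf{PRel}$ merely propagate $\rightsquigarrow$ through the operator without altering the branching probabilities, clauses (2) and (3) hold immediately, and clause (4) is trivial since none of these operators produces $\surd$ on its own.

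The substance of the proposition is in the parallel-composition laws (items 1, 2, 7, 11). For commutativity I would use $R=\{(P\parallel Q,\ Q\parallel P)\}\cup Id_{\mathcal{P}}$, for associativity $R=\{((P\parallel Q)\parallel R,\ P\parallel(Q\parallel R))\}\cup Id_{\mathcal{P}}$, and analogously for items 7 and 11. The action-transition clause now forces a case split over the rules $\textbf{Com}_1,\textbf{Com}_2,\textbf{Com}_3,\textbf{Com}_4$ (and $\textbf{Act}_2$ for the multi-action prefixes), which between them generate singleton moves, synchronized $\tau$-moves, and the genuinely concurrent step labels $\{\alpha,\beta\}$; the key point is that a move of $P\parallel Q$ and the corresponding move of $Q\parallel P$ carry labels that are equal as multisets and hence pomset-isomorphic, so the requirement $X_1\sim X_2$ of clause (1) is met, and the residuals again have the paired shape, keeping us inside $R$. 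Clause (4) is again immediate.

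The hardest part is reconciling the probabilistic clauses (2) and (3) with the shape of rule $\textbf{PCom}$, namely $\frac{P\rightsquigarrow P'\ \ Q\rightsquigarrow Q'}{P\parallel Q\rightsquigarrow P'+Q'}$: after a probabilistic transition a parallel term resolves to a term of a different outermost shape, so to keep the relation closed under clause (2) I must enlarge $R$ so that it also relates the corresponding residuals, e.g. $P'+Q'$ with $Q'+P'$ for commutativity. This means the right candidate is not the bare relation above but its closure under the already-established monoid laws for $+$ (commutativity, associativity, idempotence); I would therefore define $R$ as the least relation containing the paired parallel (and box-sum) shapes and closed under these $+$-laws. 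Once $R$ is closed in this way, clause (3) reduces to checking that the two sides assign the same cumulative probability to each $R$-class, which holds because $\textbf{PCom}$ combines the distributions of the two arguments symmetrically, so the induced cPDF $\mu^*$ is invariant under swapping (item 1), reassociating (item 2), or pushing $\setminus L$ and $[f]$ inside (items 7 and 11, under their stated side conditions on $\mathcal{L}(P)$ and $\mathcal{L}(Q)$). Establishing this probability invariance, together with the accompanying closure of $R$, is where essentially all the real work lies; the analogous statements for strongly probabilistic step, hp- and hhp-bisimulation then follow using the same relations, replacing pomset transitions by steps, adjoining the order-isomorphism $f$ to each pair, and further imposing downward closure, respectively.
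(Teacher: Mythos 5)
Your proposal is correct and follows the same route the paper intends: the paper's own ``proof'' simply states that the laws follow from the definition of strongly probabilistic pomset bisimulation and omits all detail, and your plan of exhibiting an explicit relation for each law and checking the four clauses of Definition \ref{PPSB} against the rules of Tables \ref{PTRForCTC3} and \ref{TRForCTC3} is exactly the elaboration of that claim. Your observation that the rule $\textbf{PCom}$ sends $P\parallel Q$ to a residual of the form $P'+Q'$, so that the candidate relation must be enlarged (e.g.\ closed under the monoid laws for $+$) to remain closed under probabilistic transitions, is a genuine subtlety that the paper's ``easily prove'' glosses over, and your treatment of it is sound.
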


\begin{proof}
According to the definition of strongly probabilistic pomset bisimulation, we can easily prove the above equations, and we omit the proof.
\end{proof}

\begin{proposition}[Static laws for strongly probabilistic step bisimulation] \label{SLSSB3}
The static laws for strongly probabilistic step bisimulation are as follows.

\begin{enumerate}
  \item $P\parallel Q\sim_{ps} Q\parallel P$;
  \item $P\parallel(Q\parallel R)\sim_{ps} (P\parallel Q)\parallel R$;
  \item $P\parallel \textbf{nil}\sim_{ps} P$;
  \item $P\setminus L\sim_{ps} P$, if $\mathcal{L}(P)\cap(L\cup\overline{L})=\emptyset$;
  \item $P\setminus K\setminus L\sim_{ps} P\setminus(K\cup L)$;
  \item $P[f]\setminus L\sim_{ps} P\setminus f^{-1}(L)[f]$;
  \item $(P\parallel Q)\setminus L\sim_{ps} P\setminus L\parallel Q\setminus L$, if $\mathcal{L}(P)\cap\overline{\mathcal{L}(Q)}\cap(L\cup\overline{L})=\emptyset$;
  \item $P[Id]\sim_{ps} P$;
  \item $P[f]\sim_{ps} P[f']$, if $f\upharpoonright\mathcal{L}(P)=f'\upharpoonright\mathcal{L}(P)$;
  \item $P[f][f']\sim_{ps} P[f'\circ f]$;
  \item $(P\parallel Q)[f]\sim_{ps} P[f]\parallel Q[f]$, if $f\upharpoonright(L\cup\overline{L})$ is one-to-one, where $L=\mathcal{L}(P)\cup\mathcal{L}(Q)$.
\end{enumerate}
\end{proposition}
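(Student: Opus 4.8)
The plan is to prove each of the eleven static laws by exhibiting, in each case, an explicit relation on configurations and checking that it is a strongly probabilistic step bisimulation in the sense of Definition \ref{PPSB}. Since the four monoid-law propositions and their analogues were dispatched by appeal to the relevant definition, the natural approach here is identical: for each identity $P \mathbin{\mathcal{O}} Q \sim_{ps} P' \mathbin{\mathcal{O}} Q'$ I would take the relation $R$ to be the identity-up-to-the-stated-structural-congruence, i.e. the set of pairs whose two components are the left- and right-hand processes with matching subcomponents, closed under the transition rules of Table \ref{TRForCTC3} and the probabilistic rules of Table \ref{PTRForCTC3}. The four conditions of Definition \ref{PPSB} must then be verified: (1) matching of step (action) transitions with $X_1 \sim X_2$, (2) matching of probabilistic transitions $\xrsquigarrow{\pi}$, (3) agreement of the cumulative probability distribution $\mu$ on each $R$-equivalence class, and (4) $[\surd]_R = \{\surd\}$.

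First I would handle the purely structural laws (1)--(3), commutativity, associativity, and the unit law for $\parallel$. For these the step transitions arise from rules $\textbf{Com}_{1,2,3,4}$, and the symmetry of $\textbf{Com}_1/\textbf{Com}_2$ together with the set-valued label $\{\alpha,\beta\}$ in $\textbf{Com}_3$ makes the action-side matching immediate, while $\textbf{PCom}$ treats the two operands symmetrically so the probabilistic-side and $\mu$ conditions are inherited directly. The unit law $P \parallel \textbf{nil} \sim_{ps} P$ uses that $\textbf{nil}$ admits no transition, so only $\textbf{Com}_1$ fires and the configuration on the left tracks that of $P$ exactly. Next I would treat the restriction and relabelling laws (4)--(6), (8)--(10): these are single-operator identities where the side conditions ($\mathcal{L}(P)\cap(L\cup\overline L)=\emptyset$, $f\upharpoonright\mathcal{L}(P)=f'\upharpoonright\mathcal{L}(P)$, etc.) are exactly what is needed to make rules $\textbf{Res}_{1,2}$ and $\textbf{Rel}_{1,2}$ produce identical step labels on both sides; the sort information comes from the Proposition on transitions and Definition \ref{sorts3}, and the probabilistic clauses are trivial because $\textbf{PRes}$ and $\textbf{PRel}$ simply propagate $\rightsquigarrow$ through the operator.

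The genuinely interactive cases are the distribution laws (7) and (11), where a restriction or relabelling must be pushed across a parallel composition. For (7) the side condition $\mathcal{L}(P)\cap\overline{\mathcal{L}(Q)}\cap(L\cup\overline L)=\emptyset$ is precisely what guarantees that no communication via $\textbf{Com}_4$ produces a $\tau$ whose underlying labels $l,\overline l$ lie in $L$, so that restricting before and after the merge agree; for (11) the injectivity of $f\upharpoonright(L\cup\overline L)$ ensures relabelling does not create spurious complementary pairs that would enable a $\textbf{Com}_4$ synchronisation on one side but not the other. I expect the main obstacle to be exactly this bookkeeping in clause (1): one must argue, for every reachable configuration, that the communication transitions available after pushing the operator inward are in bijection with those available before, and that the step label sets are pomset-isomorphic ($X_1 \sim X_2$). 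Once the step-transition correspondence is established, conditions (2)--(4) follow because the probabilistic rules $\textbf{PRes}$, $\textbf{PRel}$, $\textbf{PCom}$ act structurally and do not alter the probability distribution $\mu$, so $\mu(C_1,C)=\mu(C_2,C)$ holds class-by-class and $\surd$ is related only to $\surd$. In keeping with the style of the surrounding propositions I would then remark that these verifications are routine given Definition \ref{PPSB} and omit the detailed calculations.
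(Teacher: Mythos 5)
Your proposal is correct and follows essentially the same route as the paper, which simply appeals to the definition of strongly probabilistic step bisimulation and omits the verification. Your elaboration of the candidate relations, the role of the side conditions in laws (4)--(7) and (9)--(11), and the check of the four clauses of Definition \ref{PPSB} is a faithful expansion of exactly that omitted argument.
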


\begin{proof}
According to the definition of strongly probabilistic step bisimulation, we can easily prove the above equations, and we omit the proof.
\end{proof}

\begin{proposition}[Static laws for strongly probabilistic hp-bisimulation] \label{SLSHPB3}
The static laws for strongly probabilistic hp-bisimulation are as follows.

\begin{enumerate}
  \item $P\parallel Q\sim_{php} Q\parallel P$;
  \item $P\parallel(Q\parallel R)\sim_{php} (P\parallel Q)\parallel R$;
  \item $P\parallel \textbf{nil}\sim_{php} P$;
  \item $P\setminus L\sim_{php} P$, if $\mathcal{L}(P)\cap(L\cup\overline{L})=\emptyset$;
  \item $P\setminus K\setminus L\sim_{php} P\setminus(K\cup L)$;
  \item $P[f]\setminus L\sim_{php} P\setminus f^{-1}(L)[f]$;
  \item $(P\parallel Q)\setminus L\sim_{php} P\setminus L\parallel Q\setminus L$, if $\mathcal{L}(P)\cap\overline{\mathcal{L}(Q)}\cap(L\cup\overline{L})=\emptyset$;
  \item $P[Id]\sim_{php} P$;
  \item $P[f]\sim_{php} P[f']$, if $f\upharpoonright\mathcal{L}(P)=f'\upharpoonright\mathcal{L}(P)$;
  \item $P[f][f']\sim_{php} P[f'\circ f]$;
  \item $(P\parallel Q)[f]\sim_{php} P[f]\parallel Q[f]$, if $f\upharpoonright(L\cup\overline{L})$ is one-to-one, where $L=\mathcal{L}(P)\cup\mathcal{L}(Q)$.
\end{enumerate}
\end{proposition}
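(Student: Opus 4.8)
The plan is to treat the eleven equations one at a time, and for each $P\sim_{php}Q$ to exhibit an explicit strongly probabilistic hp-bisimulation witnessing it. Writing $\mathcal{E}_1,\mathcal{E}_2$ for the probabilistic event structures denoted by the two sides, I would take the candidate posetal relation
$$R=\{(C_1,f,C_2)\mid C_1\in\mathcal{C}(\mathcal{E}_1),\ C_2\in\mathcal{C}(\mathcal{E}_2),\ C_2\text{ and }f\text{ induced from }C_1\},$$
where $C_2$ is the configuration matched to $C_1$ by the obvious structural correspondence of the law (swap of the two factors for (1), reassociation for (2), deletion of the $\textbf{nil}$-component for (3), and the identity on events for the restriction/relabelling laws (4)--(11)), and $f:C_1\to C_2$ is the induced bijection. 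These are the same correspondences already used for the pomset and step laws in Propositions \ref{SLSPB3} and \ref{SLSSB3}; the extra content for hp-bisimilarity is that $R$ must be a \emph{posetal} relation and must meet all four clauses of Definition \ref{PHHPB}. First I would check that each $f$ is an order-isomorphism and that $(\emptyset,\emptyset,\emptyset)\in R$, which is immediate.

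Next I would verify the action clause (1) of Definition \ref{PHHPB} by case analysis on the rule of Table \ref{TRForCTC3} that justifies a single-event move $C_1\xrightarrow{e_1}C_1'$ on the left. For the parallel laws the relevant rules are $\textbf{Com}_1$--$\textbf{Com}_4$ and $\textbf{Act}_{1,2}$: a move originating in the left factor, in the right factor, or from a synchronization is matched by the symmetric rule after swapping/reassociating, and for $P\parallel\textbf{nil}$ it is inherited directly because $\textbf{nil}$ emits no transitions. In every case the matching event $e_2$ is the image of $e_1$ under the correspondence, so $f[e_1\mapsto e_2]$ is again an order-isomorphism and $(C_1',f[e_1\mapsto e_2],C_2')\in R$; the backward direction is symmetric. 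For the restriction and relabelling laws I use $\textbf{Res}_{1,2}$ and $\textbf{Rel}_{1,2}$: the side conditions --- label-disjointness in (4) and (7), injectivity of $f$ on $L\cup\overline{L}$ in (11), and the set identities governing (5),(6),(8),(9),(10) --- are exactly what guarantee that an event is enabled on one side iff its counterpart is enabled on the other and carries the matching (possibly relabelled) label, controlled via Definition \ref{sorts3}.

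I would then handle the probabilistic clause (2) using the rules of Table \ref{PTRForCTC3}. A probabilistic move $C_1\xrsquigarrow{\pi}C_1^{\pi}$ is propagated through each operator by $\textbf{PCom}$, $\textbf{PRes}$ and $\textbf{PRel}$, giving a matching move $C_2\xrsquigarrow{\pi}C_2^{\pi}$ with the \emph{same} probability $\pi$; since a probabilistic transition fires no event, the isomorphism $f$ is carried over unchanged, as the clause requires. (For the parallel laws $\textbf{PCom}$ resolves a product to a summation, so matching the probabilistic successor also invokes the monoid laws for $+$ already established above.) Clause (3), the equality $\mu(C_1,C)=\mu(C_2,C)$ on every $R$-class $C$, follows because the correspondence is a probability-preserving bijection between the probabilistic branches of the two sides, so the cumulative distributions over matched classes agree. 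Clause (4), $[\surd]_R=\{\surd\}$, is immediate since none of the laws equates a terminated state with a non-terminated one.

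The hard part will be the joint bookkeeping required by clauses (1) and (3): $R$ must be chosen so that its event-isomorphisms stay coherent while its equivalence classes simultaneously line up the probabilistic branches. This is most delicate for laws (7) and (11), where a restriction or relabelling is pushed inside a parallel composition; a priori a relabelling could collapse two distinct events onto one label, or a restriction could suppress a synchronization, either of which would destroy $f$ or the matching of $\mu$. Showing that the side conditions $\mathcal{L}(P)\cap\overline{\mathcal{L}(Q)}\cap(L\cup\overline{L})=\emptyset$ and injectivity of $f$ on $L\cup\overline{L}$ rule this out is the substance of the argument. A secondary subtlety is that for the commutativity and associativity laws (1),(2) the isomorphism $f$ is a genuine reindexing rather than the identity, so one must confirm it is order-preserving; this is routine given the event-structure semantics, but it is the one place where the true-concurrency setting adds work beyond the classical static-law arguments.
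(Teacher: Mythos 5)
Your proposal is correct and takes essentially the same approach the paper intends: the paper's own ``proof'' is a single sentence asserting that the equations follow directly from the definition of strongly probabilistic hp-bisimulation and omitting all details, and your explicit construction of the posetal relations and case-by-case verification of the four clauses of Definition \ref{PHHPB} is exactly that direct-verification strategy, carried out in full.
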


\begin{proof}
According to the definition of strongly probabilistic hp-bisimulation, we can easily prove the above equations, and we omit the proof.
\end{proof}

\begin{proposition}[Static laws for strongly probabilistic hhp-bisimulation] \label{SLSHHPB3}
The static laws for strongly probabilistic hhp-bisimulation are as follows.

\begin{enumerate}
  \item $P\parallel Q\sim_{phhp} Q\parallel P$;
  \item $P\parallel(Q\parallel R)\sim_{phhp} (P\parallel Q)\parallel R$;
  \item $P\parallel \textbf{nil}\sim_{phhp} P$;
  \item $P\setminus L\sim_{phhp} P$, if $\mathcal{L}(P)\cap(L\cup\overline{L})=\emptyset$;
  \item $P\setminus K\setminus L\sim_{phhp} P\setminus(K\cup L)$;
  \item $P[f]\setminus L\sim_{phhp} P\setminus f^{-1}(L)[f]$;
  \item $(P\parallel Q)\setminus L\sim_{phhp} P\setminus L\parallel Q\setminus L$, if $\mathcal{L}(P)\cap\overline{\mathcal{L}(Q)}\cap(L\cup\overline{L})=\emptyset$;
  \item $P[Id]\sim_{phhp} P$;
  \item $P[f]\sim_{phhp} P[f']$, if $f\upharpoonright\mathcal{L}(P)=f'\upharpoonright\mathcal{L}(P)$;
  \item $P[f][f']\sim_{phhp} P[f'\circ f]$;
  \item $(P\parallel Q)[f]\sim_{phhp} P[f]\parallel Q[f]$, if $f\upharpoonright(L\cup\overline{L})$ is one-to-one, where $L=\mathcal{L}(P)\cup\mathcal{L}(Q)$.
\end{enumerate}
\end{proposition}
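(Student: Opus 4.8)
The plan is to exhibit, for each of the eleven laws, a witnessing probabilistic hp-bisimulation in the sense of Definition \ref{PHHPB} and then to check that it may be taken downward closed, since by definition a probabilistic hhp-bisimulation is simply a downward closed probabilistic hp-bisimulation. Because Proposition \ref{SLSHPB3} already establishes each identity up to $\sim_{php}$, the real content of the present statement is to verify that the relations witnessing those hp-equivalences are (or can be chosen to be) downward closed, so that they also witness $\sim_{phhp}$.

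First I would describe the canonical event bijections. Each law relates two processes whose labelled event structures differ only by a structural rearrangement: commutativity and associativity of $\parallel$ merely reindex the parallel components; $P\parallel\textbf{nil}$ contributes no new events; restriction $\setminus L$ deletes the events labelled in $L\cup\overline{L}$; and relabelling $[f]$ keeps the same event set while renaming labels. In every case there is an evident bijection $g$ between the events of the left-hand and right-hand processes that preserves labels (up to $f$ in the relabelling laws), causality $\leq$, conflict $\sharp$ and probabilistic conflict $\sharp_{\pi}$. I would then take $R$ to be the set of triples $(C_1,\,g\upharpoonright C_1,\,C_2)$ with $C_2=g(C_1)$, ranging over all reachable configurations $C_1$ of the left-hand process.

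Second I would verify the four clauses of Definition \ref{PHHPB} for this $R$. The ordinary transition clause holds because $g$ is a label- and order-isomorphism, so each $C_1\xrightarrow{e_1}C_1'$ is matched by $C_2\xrightarrow{g(e_1)}g(C_1')$ with $g\upharpoonright C_1'=(g\upharpoonright C_1)[e_1\mapsto g(e_1)]$, using the rules of Table \ref{TRForCTC3}; the side conditions $(\alpha,\overline{\alpha}\notin L)$ and the like are precisely what make $g$ well defined on the surviving transitions. The probabilistic clause and the distribution clause $\mu(C_1,C)=\mu(C_2,C)$ hold because $\parallel$, $\setminus L$ and $[f]$ commute with probabilistic transitions $\xrsquigarrow{\pi}$ (rules $\textbf{PCom}$, $\textbf{PRes}$, $\textbf{PRel}$ of Table \ref{PTRForCTC3}) and leave $\sharp_{\pi}$, hence $\mu$, invariant under $g$. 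The condition $[\surd]_R=\{\surd\}$ is immediate since $g$ maps $\surd$ to $\surd$.

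Finally, for the hhp-strengthening I would observe that $R$ is downward closed: if $(C_1,\,g\upharpoonright C_1,\,C_2)\subseteq(C_1',\,g\upharpoonright C_1',\,C_2')$ pointwise and the larger triple lies in $R$, then $C_1\subseteq C_1'$, $C_2=g(C_1)\subseteq g(C_1')=C_2'$, and the isomorphism component is just the further restriction $g\upharpoonright C_1$, so the smaller triple is again of the required form. This closure is automatic precisely because $g$ is a single global isomorphism rather than a map assembled transition by transition, which is the feature distinguishing the hhp case from the plain hp case. The main obstacle I anticipate is in the two laws where the structure is genuinely rearranged rather than merely renamed, namely associativity $P\parallel(Q\parallel R)\sim_{phhp}(P\parallel Q)\parallel R$ and the distribution law $(P\parallel Q)[f]\sim_{phhp}P[f]\parallel Q[f]$ under $f\upharpoonright(L\cup\overline{L})$ one-to-one: there the injectivity hypothesis is exactly what guarantees that $g$ \emph{reflects} (and not merely preserves) causality and both conflict relations, so that $g$ is a genuine isomorphism and the induced $R$ remains downward closed. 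The remaining laws then reduce to the routine verification sketched above.
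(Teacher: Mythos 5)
Your proposal is correct and is essentially the argument the paper intends: the paper's own ``proof'' consists of the single sentence that the laws follow easily from the definition of strongly probabilistic hhp-bisimulation and omits all details, so your sketch simply supplies the standard verification being gestured at. Your key observation --- that each law is witnessed by a single global isomorphism $g$ of labelled event structures preserving $\leq$, $\sharp$ and $\sharp_{\pi}$, so that the induced posetal relation of restrictions of $g$ is automatically downward closed and the hp-case upgrades to the hhp-case for free --- is exactly the right point to make explicit, and is where the content of this proposition (as opposed to Proposition \ref{SLSHPB3}) lies.
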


\begin{proof}
According to the definition of strongly probabilistic hhp-bisimulation, we can easily prove the above equations, and we omit the proof.
\end{proof}

\begin{proposition}[Expansion law for strongly probabilistic step bisimulation]\label{NELSSB3}
Let $P\equiv (P_1[f_1]\parallel\cdots\parallel P_n[f_n])\setminus L$, with $n\geq 1$. Then

\begin{eqnarray}
P\sim_{ps} \{(f_1(\alpha_1)\parallel\cdots\parallel f_n(\alpha_n)).(P_1'[f_1]\parallel\cdots\parallel P_n'[f_n])\setminus L: \nonumber\\
P_i\rightsquigarrow\xrightarrow{\alpha_i}P_i',i\in\{1,\cdots,n\},f_i(\alpha_i)\notin L\cup\overline{L}\} \nonumber\\
+\sum\{\tau.(P_1[f_1]\parallel\cdots\parallel P_i'[f_i]\parallel\cdots\parallel P_j'[f_j]\parallel\cdots\parallel P_n[f_n])\setminus L: \nonumber\\
P_i\rightsquigarrow\xrightarrow{l_1}P_i',P_j\rightsquigarrow\xrightarrow{l_2}P_j',f_i(l_1)=\overline{f_j(l_2)},i<j\} \nonumber
\end{eqnarray}
\end{proposition}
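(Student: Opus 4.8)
The plan is to exhibit a strongly probabilistic step bisimulation $R$ witnessing $P\sim_{ps}Q$, where $Q$ abbreviates the right-hand side, and then to verify the four clauses of Definition \ref{PPSB}. Since the operational semantics of CPTC separates a probabilistic phase ($\rightsquigarrow$, Table \ref{PTRForCTC3}) from an action phase ($\xrightarrow{A}$, Table \ref{TRForCTC3}), I would take $R$ to consist of the top pair $(P,Q)$, together with all pairs $(P^{\pi},Q^{\pi})$ of their matching probabilistic derivatives, and the identity relation on all remaining closed terms. The reason only these pairs are needed is that once a step fires both sides land in the \emph{same} residual $(P_1'[f_1]\parallel\cdots\parallel P_n'[f_n])\setminus L$, so every later pair is covered by the identity. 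Throughout I would use the static laws (Proposition \ref{SLSSB3}) to treat the $n$-fold parallel composition as associative and commutative, which reduces the enumeration to a manipulation over a multiset of components rather than a fixed binary bracketing.

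First I would verify clause (2). On the left a move $P\rightsquigarrow P^{\pi}$ is forced through $\textbf{PRes}$, then $\textbf{PCom}$ and $\textbf{PRel}$, to be determined componentwise by the probabilistic choices $P_i\rightsquigarrow P_i^{\pi}$, so that $P^{\pi}=(P_1^{\pi}[f_1]\parallel\cdots\parallel P_n^{\pi}[f_n])\setminus L$. On the right, $\textbf{PAct}_{1,2}$ and $\textbf{PSum}$ turn every guard of $Q$ into its checked form and produce the matching derivative $Q^{\pi}$. I would then check that these derivatives are paired in $R$ and that each probabilistic transition of one side is answered by the other.

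Next comes clause (1), the step matching for the resolved terms. Using $\textbf{Com}_{1,2,3}$, $\textbf{Res}_2$ and $\textbf{Rel}_2$ I would enumerate every step $A$ available to $P^{\pi}$: letting each component perform $P_i^{\pi}\xrightarrow{\alpha_i}P_i'$ and relabelling through $f_i$ yields the concurrent step $\{f_1(\alpha_1),\cdots,f_n(\alpha_n)\}$ provided every $f_i(\alpha_i)\notin L\cup\overline{L}$, which is exactly the first family of summands, with residual $(P_1'[f_1]\parallel\cdots\parallel P_n'[f_n])\setminus L$; and $\textbf{Com}_4$ contributes a $\tau$-labelled communication for each pair $i<j$ with $f_i(l_1)=\overline{f_j(l_2)}$, which is exactly the second family. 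Conversely each summand of $Q^{\pi}$ fires precisely one such step by $\textbf{Act}_{1,2}$ and $\textbf{Sum}$, reaching the identical residual. Hence the step transitions of $P^{\pi}$ and $Q^{\pi}$ are in a label-preserving bijection with equal (so $\sim$) step labels and identity-related residuals, and clause (4), $[\surd]_R=\{\surd\}$, holds trivially since no terminating move is introduced.

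The step I expect to be the main obstacle is clause (3), the agreement of the probability distributions $\mu(P,C)=\mu(Q,C)$ over each $R$-class $C$. On the left the mass reaching a class is obtained by combining, through $\textbf{PCom}$, the independent probabilistic choices of the $n$ components, whereas on the right it is assigned by the flat summation via $\textbf{PSum}$ and $\textbf{PBox-Sum}$; I must show these coincide class by class, which amounts to checking that the product and normalisation of the componentwise weights equals the weight the expanded summation carries on the corresponding summand. Threading the restriction side-condition $f_i(\alpha_i)\notin L\cup\overline{L}$ and the communication condition $f_i(l_1)=\overline{f_j(l_2)}$ correctly through both phases is the delicate bookkeeping. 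Once the distributions are matched, the remainder is the standard expansion-law transition matching, and $R$ is seen to be a strongly probabilistic step bisimulation containing $(P,Q)$, giving $P\sim_{ps}Q$.
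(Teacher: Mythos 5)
Your proof is sound but follows a genuinely different route from the paper's. The paper fixes no bisimulation relation at all: it first strips the Restriction and Relabelling contexts, proves the bare law for $P_1\parallel\cdots\parallel P_n$ by induction on $n$ (base case $n=1$, inductive step adjoining $P_{n+1}$ and using $\textbf{Com}_{1,2,3,4}$ to splice the new component's transitions into the inductively obtained sum), and only then reinstates $\setminus L$ and $[f_i]$. You instead fix $n$, exhibit the relation $R$ explicitly, and verify the four clauses of Definition \ref{PPSB} directly. Each approach buys something: the paper's induction cleanly handles the combinatorics of the $n$-fold composition, since the SOS rules for $\parallel$ are binary and a direct enumeration of the steps of $P_1\parallel\cdots\parallel P_n$ (as in your clause-(1) argument) secretly requires that same induction, which you partly paper over by appealing to the static laws of Proposition \ref{SLSSB3}; conversely, your construction makes the witnessing relation explicit and forces attention to the probabilistic clauses (2)--(4) of the definition, which the paper's proof never mentions. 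Your identification of clause (3) --- matching $\mu(P,C)=\mu(Q,C)$ class by class, i.e.\ reconciling the product of componentwise weights produced by $\textbf{PCom}$ against the weights carried by the flat summation on the right --- as the delicate point is well taken; you do not carry that computation out, but neither does the paper, so this is not a gap relative to the paper's own standard of rigour, and your sketch of what must be checked is the more informative of the two.
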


\begin{proof}
Though transition rules in Table \ref{TRForCTC3} are defined in the flavor of single event, they can be modified into a step (a set of events within which each event is pairwise
concurrent), we omit them. If we treat a single event as a step containing just one event, the proof of the new expansion law has not any problem, so we use this way and still use the
transition rules in Table \ref{TRForCTC3}.

Firstly, we consider the case without Restriction and Relabeling. That is, we suffice to prove the following case by induction on the size $n$.

For $P\equiv P_1\parallel\cdots\parallel P_n$, with $n\geq 1$, we need to prove

\begin{eqnarray}
P\sim_{ps} \{(\alpha_1\parallel\cdots\parallel \alpha_n).(P_1'\parallel\cdots\parallel P_n'): P_i\rightsquigarrow\xrightarrow{\alpha_i}P_i',i\in\{1,\cdots,n\}\nonumber\\
+\sum\{\tau.(P_1\parallel\cdots\parallel P_i'\parallel\cdots\parallel P_j'\parallel\cdots\parallel P_n): P_i\rightsquigarrow\xrightarrow{l}P_i',P_j\rightsquigarrow\xrightarrow{\overline{l}}P_j',i<j\} \nonumber
\end{eqnarray}

For $n=1$, $P_1\sim_{ps} \alpha_1.P_1':P_1\rightsquigarrow\xrightarrow{\alpha_1}P_1'$ is obvious. Then with a hypothesis $n$, we consider $R\equiv P\parallel P_{n+1}$. By the
transition rules $\textbf{Com}_{1,2,3,4}$, we can get

\begin{eqnarray}
R\sim_{ps} \{(p\parallel \alpha_{n+1}).(P'\parallel P_{n+1}'): P\rightsquigarrow\xrightarrow{p}P',P_{n+1}\rightsquigarrow\xrightarrow{\alpha_{n+1}}P_{n+1}',p\subseteq P\}\nonumber\\
+\sum\{\tau.(P'\parallel P_{n+1}'): P\rightsquigarrow\xrightarrow{l}P',P_{n+1}\rightsquigarrow\xrightarrow{\overline{l}}P_{n+1}'\} \nonumber
\end{eqnarray}

Now with the induction assumption $P\equiv P_1\parallel\cdots\parallel P_n$, the right-hand side can be reformulated as follows.

\begin{eqnarray}
\{(\alpha_1\parallel\cdots\parallel \alpha_n\parallel \alpha_{n+1}).(P_1'\parallel\cdots\parallel P_n'\parallel P_{n+1}'): \nonumber\\
P_i\rightsquigarrow\xrightarrow{\alpha_i}P_i',i\in\{1,\cdots,n+1\}\nonumber\\
+\sum\{\tau.(P_1\parallel\cdots\parallel P_i'\parallel\cdots\parallel P_j'\parallel\cdots\parallel P_n\parallel P_{n+1}): \nonumber\\
P_i\rightsquigarrow\xrightarrow{l}P_i',P_j\rightsquigarrow\xrightarrow{\overline{l}}P_j',i<j\} \nonumber\\
+\sum\{\tau.(P_1\parallel\cdots\parallel P_i'\parallel\cdots\parallel P_j\parallel\cdots\parallel P_n\parallel P_{n+1}'): \nonumber\\
P_i\xrightarrow{l}P_i',P_{n+1}\rightsquigarrow\xrightarrow{\overline{l}}P_{n+1}',i\in\{1,\cdots, n\}\} \nonumber
\end{eqnarray}

So,

\begin{eqnarray}
R\sim_{ps} \{(\alpha_1\parallel\cdots\parallel \alpha_n\parallel \alpha_{n+1}).(P_1'\parallel\cdots\parallel P_n'\parallel P_{n+1}'): \nonumber\\
P_i\rightsquigarrow\xrightarrow{\alpha_i}P_i',i\in\{1,\cdots,n+1\}\nonumber\\
+\sum\{\tau.(P_1\parallel\cdots\parallel P_i'\parallel\cdots\parallel P_j'\parallel\cdots\parallel P_n): \nonumber\\
P_i\rightsquigarrow\xrightarrow{l}P_i',P_j\rightsquigarrow\xrightarrow{\overline{l}}P_j',1 \leq i<j\geq n+1\} \nonumber
\end{eqnarray}

Then, we can easily add the full conditions with Restriction and Relabeling.
\end{proof}

\begin{proposition}[Expansion law for strongly probabilistic pomset bisimulation]\label{NELSPB3}
Let $P\equiv (P_1[f_1]\parallel\cdots\parallel P_n[f_n])\setminus L$, with $n\geq 1$. Then

\begin{eqnarray}
P\sim_{pp} \{(f_1(\alpha_1)\parallel\cdots\parallel f_n(\alpha_n)).(P_1'[f_1]\parallel\cdots\parallel P_n'[f_n])\setminus L: \nonumber\\
P_i\rightsquigarrow\xrightarrow{\alpha_i}P_i',i\in\{1,\cdots,n\},f_i(\alpha_i)\notin L\cup\overline{L}\} \nonumber\\
+\sum\{\tau.(P_1[f_1]\parallel\cdots\parallel P_i'[f_i]\parallel\cdots\parallel P_j'[f_j]\parallel\cdots\parallel P_n[f_n])\setminus L: \nonumber\\
P_i\rightsquigarrow\xrightarrow{l_1}P_i',P_j\rightsquigarrow\xrightarrow{l_2}P_j',f_i(l_1)=\overline{f_j(l_2)},i<j\} \nonumber
\end{eqnarray}
\end{proposition}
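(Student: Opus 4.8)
The plan is to follow the proof of Proposition \ref{NELSSB3} essentially verbatim, since strongly probabilistic pomset bisimulation $\sim_{pp}$ differs from strongly probabilistic step bisimulation $\sim_{ps}$ only in that matched transition labels need to be isomorphic as pomsets rather than to agree as steps of pairwise concurrent events. As there, I would first note that although the rules of Table \ref{TRForCTC3} are stated for single events, they lift to pomset transitions once a single event is read as a singleton pomset; this lets me reuse the same rules and the same communication rules $\textbf{Com}_{1,2,3,4}$ without change. I would also reduce immediately to the case without Restriction and Relabeling, proving the expansion for $P \equiv P_1 \parallel \cdots \parallel P_n$ and then reinstating the side conditions $f_i(\alpha_i) \notin L \cup \overline{L}$ via $\textbf{Res}_{1,2}$ and $\textbf{Rel}_{1,2}$ exactly as in the step case.

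The core is an induction on $n$. The base case $n = 1$ is immediate: $P_1 \sim_{pp} \{\alpha_1.P_1' : P_1 \rightsquigarrow \xrightarrow{\alpha_1} P_1'\}$ holds trivially. For the inductive step I would put $R \equiv P \parallel P_{n+1}$ and use $\textbf{Com}_{1,2,3,4}$ to split each transition of $R$ into either a synchronized $\tau$-move arising from complementary labels $l,\overline{l}$ or a pomset move combining a move of $P$ with a move of $P_{n+1}$; substituting the induction hypothesis for $P$ and collecting summands then yields the claimed right-hand side for $n+1$. The probabilistic clauses of Definition \ref{PPSB}, namely the matching of $\rightsquigarrow$-transitions and the equality $\mu(C_1,C) = \mu(C_2,C)$ on $R$-equivalence classes, are verified exactly as in the step-bisimulation argument and require no new idea.

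The main obstacle is the pomset bookkeeping in the inductive step. When a pomset transition $P \xrightarrow{p} P'$ of the $n$-fold composite is combined with a transition of $P_{n+1}$, one must check that the composite label is isomorphic as a pomset to $\alpha_1 \parallel \cdots \parallel \alpha_{n+1}$, i.e.\ that the partial order on the contributed events is preserved and that events coming from distinct parallel branches remain concurrent. This is the only place where the pomset case genuinely carries more data than the step case; however, since events drawn from different components of a parallel composition are concurrent by construction of $\parallel$, the required isomorphism $X_1 \sim X_2$ follows directly, and the remainder of the argument is identical to Proposition \ref{NELSSB3}.
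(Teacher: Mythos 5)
Your proposal matches the paper's own treatment: the paper simply observes that the argument is identical to the proof of the expansion law for strongly probabilistic step bisimulation (Proposition \ref{NELSSB3}) and omits the details, which is exactly the reduction you carry out. Your additional remark that the only new point is checking the pomset isomorphism of composite labels, and that concurrency of events from distinct parallel branches makes this automatic, is a correct and slightly more explicit account of why the step-case induction transfers.
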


\begin{proof}
Similarly to the proof of expansion law for strongly probabilistic step bisimulation (see Proposition \ref{NELSSB3}), we can prove that the new expansion law holds for strongly
probabilistic pomset bisimulation, we omit it.
\end{proof}

\begin{proposition}[Expansion law for strongly probabilistic hp-bisimulation]\label{NELSHPB3}
Let $P\equiv (P_1[f_1]\parallel\cdots\parallel P_n[f_n])\setminus L$, with $n\geq 1$. Then

\begin{eqnarray}
P\sim_{php} \{(f_1(\alpha_1)\parallel\cdots\parallel f_n(\alpha_n)).(P_1'[f_1]\parallel\cdots\parallel P_n'[f_n])\setminus L: \nonumber\\
P_i\rightsquigarrow\xrightarrow{\alpha_i}P_i',i\in\{1,\cdots,n\},f_i(\alpha_i)\notin L\cup\overline{L}\} \nonumber\\
+\sum\{\tau.(P_1[f_1]\parallel\cdots\parallel P_i'[f_i]\parallel\cdots\parallel P_j'[f_j]\parallel\cdots\parallel P_n[f_n])\setminus L: \nonumber\\
P_i\rightsquigarrow\xrightarrow{l_1}P_i',P_j\rightsquigarrow\xrightarrow{l_2}P_j',f_i(l_1)=\overline{f_j(l_2)},i<j\} \nonumber
\end{eqnarray}
\end{proposition}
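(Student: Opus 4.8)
The plan is to mirror the proof of the expansion law for strongly probabilistic step bisimulation (Proposition \ref{NELSSB3}), lifting it to the posetal product so that the event-to-event isomorphism required by Definition \ref{PHHPB} is maintained throughout. As in the step case, I would first note that the rules of Table \ref{TRForCTC3}, although written for single events, extend to pairwise-concurrent steps, and then reduce to the case without Restriction and Relabeling, i.e.\ to $P \equiv P_1 \parallel \cdots \parallel P_n$; the Restriction and Relabeling side conditions are reinstated at the end exactly as before, using the static laws of Proposition \ref{SLSHPB3}. Write $T$ for the right-hand side of the claimed identity.

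First I would set up the witnessing relation. Define a posetal relation $\mathcal{R}$ relating each reachable configuration $C$ of $P$ with the configuration $C'$ of $T$ obtained by firing the matching events, equipped with the isomorphism $f : C \to C'$ that identifies each action (or communication $\tau$) occurrence on the left with the syntactically identical occurrence inside the expanded prefix on the right. Since $P$ and $T$ are built from the same component actions and synchronizations, only regrouped, this $f$ is a genuine order-isomorphism: causality $\leq$, conflict $\sharp$, probabilistic conflict $\sharp_{\pi}$ and concurrency are all inherited component-wise from the $P_i$, so $f$ respects them at every stage.

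Next I would verify the clauses of Definition \ref{PHHPB} for $\mathcal{R}$. For clause (1), peeling off a single event $e_1$ from a configuration of $P$: by $\textbf{Com}_{1,2,3,4}$ and $\textbf{Act}_2$ this event is either a solitary $f_i(\alpha_i)$ from one component or a synchronization $\tau$ of two components, and in each case $T$ offers a matching $e_2$ with $(C', f[e_1 \mapsto e_2], C'') \in \mathcal{R}$, and symmetrically. Clause (2) is where the probabilistic transitions $\rightsquigarrow$ enter: because neither $\parallel$, Restriction nor Relabeling introduces probabilistic branching (only $\boxplus_{\pi}$ does), every $\rightsquigarrow$-step of $P$ is the component-wise resolution of the box-sums inside the $P_i$, and the summands of $T$ carry exactly the same $P_i \rightsquigarrow \xrightarrow{\alpha_i} P_i'$ premises, so $T$ matches it with the same $\pi$ while leaving $f$ unchanged; clause (3), equality of $\mu$ on $\mathcal{R}$-classes, follows from the same component-wise correspondence, and clause (4) holds since $\surd$ is related only to $\surd$. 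The induction on $n$ then proceeds as in Proposition \ref{NELSSB3}: the case $n = 1$ is immediate, and for $R \equiv P \parallel P_{n+1}$ the inductive hypothesis rewrites the left factor, after which $\textbf{Com}_{1,2,3,4}$ redistribute the new synchronizations with $P_{n+1}$, now bookkeeping the isomorphism update $f[e_1 \mapsto e_2]$ in place of a whole step.

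The main obstacle I anticipate is not the algebraic regrouping, which is identical to the step case, but confirming that peeling events one at a time never breaks the order-isomorphism: in the step formulation concurrent events fire simultaneously, whereas hp-bisimulation fires them in some sequential order, so I must check that every sequentialization of a step on the left is matched by a sequentialization of the corresponding step on the right with the partial order preserved. This is precisely where the mutual concurrency of the $\alpha_i$ inside $(\alpha_1 \parallel \cdots \parallel \alpha_n)$ and the side condition $\alpha_i \neq \overline{\alpha_j}$ of $\textbf{Act}_2$ are used. Once this is in hand, combining it with the probabilistic clauses (2)--(4) yields $P \sim_{php} T$, and reinstating Restriction and Relabeling via Proposition \ref{SLSHPB3} finishes the proof.
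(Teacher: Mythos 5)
Your proposal is correct and follows essentially the same route as the paper, which simply reduces the hp-case to the step/pomset expansion law (Proposition \ref{NELSSB3}/\ref{NELSPB3}) and notes that one must additionally verify the hp-bisimulation conditions of Definition \ref{PHHPB}; your construction of the posetal relation with the isomorphism $f$, the clause-by-clause check including the probabilistic transitions, and the attention to sequentializing steps are exactly the details the paper leaves implicit. No gaps.
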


\begin{proof}
Similarly to the proof of expansion law for strongly probabilistic pomset bisimulation (see Proposition \ref{NELSPB3}), we can prove that the expansion law holds for strongly probabilistic
hp-bisimulation, we just need additionally to check the above conditions on hp-bisimulation, we omit it.
\end{proof}

\begin{proposition}[Expansion law for strongly probabilistic hhp-bisimulation]\label{NELSHHPB3}
Let $P\equiv (P_1[f_1]\parallel\cdots\parallel P_n[f_n])\setminus L$, with $n\geq 1$. Then

\begin{eqnarray}
P\sim_{phhp} \{(f_1(\alpha_1)\parallel\cdots\parallel f_n(\alpha_n)).(P_1'[f_1]\parallel\cdots\parallel P_n'[f_n])\setminus L: \nonumber\\
P_i\rightsquigarrow\xrightarrow{\alpha_i}P_i',i\in\{1,\cdots,n\},f_i(\alpha_i)\notin L\cup\overline{L}\} \nonumber\\
+\sum\{\tau.(P_1[f_1]\parallel\cdots\parallel P_i'[f_i]\parallel\cdots\parallel P_j'[f_j]\parallel\cdots\parallel P_n[f_n])\setminus L: \nonumber\\
P_i\rightsquigarrow\xrightarrow{l_1}P_i',P_j\rightsquigarrow\xrightarrow{l_2}P_j',f_i(l_1)=\overline{f_j(l_2)},i<j\} \nonumber
\end{eqnarray}
\end{proposition}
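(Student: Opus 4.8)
The plan is to mirror the proofs of the three preceding expansion laws and to isolate the single additional requirement that distinguishes hhp-bisimulation from hp-bisimulation, namely downward closure. By Definition \ref{PHHPB}, a strongly probabilistic hhp-bisimulation is \emph{exactly} a downward-closed strongly probabilistic hp-bisimulation. Hence I would take as the starting point the posetal relation $R$ already constructed in the proof of the expansion law for strongly probabilistic hp-bisimulation (Proposition \ref{NELSHPB3}), and argue that $R$ (or its downward closure) is downward closed, so that it additionally witnesses $\sim_{phhp}$.

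First, exactly as in the proof of Proposition \ref{NELSSB3}, I would reduce to the case without Restriction and Relabelling, establishing the identity for $P \equiv P_1 \parallel \cdots \parallel P_n$ by induction on $n$ and then adding the $\setminus L$ and $[f_i]$ wrappers. A single event is treated as a singleton step; the base case $n=1$ and the inductive step for $R \equiv P \parallel P_{n+1}$ go through using the composition rules $\textbf{Com}_{1,2,3,4}$ together with the probabilistic rules $\textbf{PCom}$ and $\textbf{PBox-Sum}$. The probabilistic clauses (2)--(4) of Definition \ref{PHHPB} are verified just as in the step, pomset and hp cases, since neither side of the equation introduces probabilistic branchings beyond those already present in the factors $P_i$.

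The genuinely new step is the verification that the witnessing posetal relation is downward closed: for any matched triple $(C_1,f,C_2)$ in the relation and any $(C_1',f',C_2') \subseteq (C_1,f,C_2)$ pointwise, one must check that $(C_1',f',C_2')$ again lies in the relation. The point to exploit is that the bijection between transitions of the left-hand parallel term and those of the right-hand summation form is built up event by event and respects the causal order $\leq$, so a downward-closed sub-configuration $C_1' \subseteq C_1$ corresponds under $f$ to the downward-closed sub-configuration $C_2' = f(C_1') \subseteq C_2$, with the order-isomorphism restricting to $f' = f \upharpoonright C_1'$. I would then confirm that this restricted triple still satisfies clauses (1)--(4) of Definition \ref{PHHPB}, the measure condition $\mu(C_1',C)=\mu(C_2',C)$ being inherited from the corresponding condition on the larger configurations.

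The hard part will be making the downward-closure argument precise in the presence of the $\boxplus_{\pi}$-induced probabilistic conflict relation $\sharp_{\pi}$: one must ensure that restricting to a sub-history does not break the equality of cumulative probability distributions, since the probabilistic choices resolved along the path from $\emptyset$ to $C_1$ must match those along the path to $C_2$. Once it is observed that the expansion law does not alter which probabilistic branches are available --- it merely redistributes the interleavings of the already-resolved action transitions --- this equality is preserved under restriction, and downward closure follows. Given the close parallel with Proposition \ref{NELSHPB3}, I expect the remaining bookkeeping to be routine, so I would, in the paper's own style, carry out the hp-case construction in full and then simply note that the relation so obtained is downward closed.
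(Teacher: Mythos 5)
Your proposal follows essentially the same route as the paper: the paper's own proof simply observes that a strongly probabilistic hhp-bisimulation is a downward-closed strongly probabilistic hp-bisimulation and then appeals to the expansion law for $\sim_{php}$ (Proposition \ref{NELSHPB3}), omitting the details of the downward-closure check. Your write-up supplies considerably more of that omitted bookkeeping (in particular the interaction of restriction to sub-configurations with the $\mu$-matching condition), but the underlying strategy is identical.
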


\begin{proof}
From the definition of strongly probabilistic hhp-bisimulation (see Definition \ref{HHPB}), we know that strongly hhp-bisimulation is downward closed for strongly probabilistic hp-bisimulation.

Similarly to the proof of the expansion law for strongly probabilistic hp-bisimulation (see Proposition \ref{NELSHPB3}), we can prove that the expansion law holds for strongly probabilistic
hhp-bisimulation, that is, they are downward closed for strongly probabilistic hp-bisimulation, we omit it.
\end{proof}

\begin{theorem}[Congruence for strongly probabilistic pomset bisimulation] \label{CSPB3}
We can enjoy the full congruence for strongly probabilistic pomset bisimulation as follows.

\begin{enumerate}
  \item If $A\overset{\text{def}}{=}P$, then $A\sim_{pp} P$;
  \item Let $P_1\sim_{pp} P_2$. Then
        \begin{enumerate}
           \item $\alpha.P_1\sim_{pp} \alpha.P_2$;
           \item $(\alpha_1\parallel\cdots\parallel\alpha_n).P_1\sim_{pp} (\alpha_1\parallel\cdots\parallel\alpha_n).P_2$;
           \item $P_1+Q\sim_{pp} P_2 +Q$;
           \item $P_1\boxplus_{\pi}Q\sim_{pp} P_2 \boxplus_{\pi}Q$;
           \item $P_1\parallel Q\sim_{pp} P_2\parallel Q$;
           \item $P_1\setminus L\sim_{pp} P_2\setminus L$;
           \item $P_1[f]\sim_{pp} P_2[f]$.
         \end{enumerate}
\end{enumerate}
\end{theorem}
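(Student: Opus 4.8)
The overall strategy is to exhibit, in each case, an explicit relation and check that it meets the four defining clauses of a probabilistic pomset bisimulation (Definition~\ref{PPSB}): matching of pomset (action) transitions, matching of probabilistic transitions $\xrsquigarrow{\pi}$, agreement of the cumulative distribution $\mu(C_1,C)=\mu(C_2,C)$ on every equivalence class $C$, and $[\surd]_R=\{\surd\}$. For clause~(1), where $A\overset{\text{def}}{=}P$, I would take $R=Id\cup\{(A,P),(P,A)\}$. Rules $\textbf{PCon}$, $\textbf{Con}_1$ and $\textbf{Con}_2$ ensure that $A$ has \emph{exactly} the probabilistic and (step/pomset) action transitions of $P$ and no others, so each move of one side is answered by the identical move of the other with successors lying in $Id\subseteq R$; the distribution and termination conditions then hold trivially.

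For clause~(2), fix a witnessing probabilistic pomset bisimulation $R_0$ with $(P_1,P_2)\in R_0$. In each subcase I close $R_0$ under the operator in question (adding the identity on all remaining terms) and verify the conditions against Tables~\ref{PTRForCTC3} and~\ref{TRForCTC3}. For the prefix I use $R=\{(\alpha.P_1,\alpha.P_2),(\breve{\alpha}.P_1,\breve{\alpha}.P_2)\}\cup R_0\cup Id$: the only initial move is the probabilistic step $\textbf{PAct}_1$ taking $\alpha.P_i$ to $\breve{\alpha}.P_i$, after which $\textbf{Act}_1$ performs $\alpha$ into $P_i$, landing the successors in $R_0$; the parallel prefix is identical using $\textbf{PAct}_2$ and $\textbf{Act}_2$. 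For summation I take $R=\{(P_1+Q,P_2+Q)\}\cup R_0\cup Id$, using $\textbf{PSum}$ for the probabilistic step and $\textbf{Sum}_{1,2}$ for the action steps, where a move originating in the $Q$-summand is answered identically and a move from $P_1$ is answered through $R_0$. Restriction and relabelling are handled by the context closures $\{(s\setminus L,t\setminus L):(s,t)\in R_0\cup\{(P_1,P_2)\}\}\cup Id$ and $\{(s[f],t[f]):(s,t)\in R_0\cup\{(P_1,P_2)\}\}\cup Id$, with $\textbf{PRes}/\textbf{Res}_{1,2}$ and $\textbf{PRel}/\textbf{Rel}_{1,2}$; the label side-conditions transfer because $P_1$ and $P_2$ carry matching pomset labels.

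The box-summation $P_1\boxplus_{\pi}Q$ is where the distribution clause~(3) does genuine work. I take $R=\{(P_1\boxplus_{\pi}Q,P_2\boxplus_{\pi}Q)\}\cup R_0\cup Id$. By $\textbf{PBox-Sum}$ the probabilistic transitions of $P_1\boxplus_{\pi}Q$ split into those inherited from $P_1$ (with total weight governed by $\pi$) and those inherited from $Q$ (governed by $1-\pi$). Since the \emph{same} $Q$ and the \emph{same} $\pi$ occur on both sides, and since $(P_1,P_2)\in R_0$ already forces $\mu(P_1,C)=\mu(P_2,C)$ on each $R_0$-class, the cumulative probabilities into each $R$-class coincide for $P_1\boxplus_{\pi}Q$ and $P_2\boxplus_{\pi}Q$; the subsequent action transitions are then matched via $R_0$ or via the identity on $Q$.

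The parallel composition $P_1\parallel Q$ is the main obstacle, both because of the synchronisation rule $\textbf{Com}_4$ and because the distribution condition must now be combined across two components. The cleanest route is to reuse the expansion law for strongly probabilistic pomset bisimulation (Proposition~\ref{NELSPB3}), which is available as a prior result: it rewrites $P_1\parallel Q$ and $P_2\parallel Q$ into sums of prefixed terms whose summands are determined by the probabilistic-then-action behaviours $P_i\rightsquigarrow\xrightarrow{\alpha}P_i'$ and $Q\rightsquigarrow\xrightarrow{\beta}Q'$ together with the $\overline{l}$-synchronisations. Because $P_1\sim_{pp}P_2$ matches these behaviours summand-for-summand, including the pomset labels $\{\alpha,\beta\}$ and the induced weights, the two expanded forms are related by the summation and prefix congruences already established in the earlier subcases, giving $P_1\parallel Q\sim_{pp}P_2\parallel Q$. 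Alternatively one argues directly with a relation of the form $R=\{(s\parallel Q',\,t\parallel Q'):(s,t)\in R_0\cup\{(P_1,P_2)\},\ Q\rightsquigarrow^{*}Q'\}\cup Id$, but then clauses (1)--(3) must be checked by hand for all four $\textbf{Com}$ rules and one must show that the combined distribution $\mu$ factorises compatibly across the branches, which is the delicate point.
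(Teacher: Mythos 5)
The paper offers no proof of this theorem beyond the sentence ``according to the definition \dots we omit the proof,'' so your direct definition-checking is exactly the intended (unwritten) argument, and most of your case analysis is sound. There is, however, one concrete gap: your candidate relation for summation, $R=\{(P_1+Q,P_2+Q)\}\cup R_0\cup Id$, is not closed under the probabilistic transitions. The rule $\textbf{PSum}$ reads $\frac{P\rightsquigarrow P'\quad Q\rightsquigarrow Q'}{P+Q\rightsquigarrow P'+Q'}$, so the successor of $P_1+Q$ is the \emph{compound} term $P_1^{\pi}+Q^{\pi}$, which must be matched by $P_2^{\pi}+Q^{\pi}$ with the pair $(P_1^{\pi}+Q^{\pi},P_2^{\pi}+Q^{\pi})$ in $R$ --- and that pair lies in none of $\{(P_1+Q,P_2+Q)\}$, $R_0$ or $Id$. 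You need the full context closure $\{(s+u,\,t+u):(s,t)\in R_0,\ u\text{ a probabilistic derivative of }Q\}\cup R_0\cup Id$, exactly as you already (correctly) do for restriction, relabelling and the direct version of parallel composition. Note that box-summation does not suffer from this, since $\textbf{PBox-Sum}$ discards one branch and its successors land in $R_0$ or $Id$ directly; your treatment there is fine, modulo the fact that the $\mu$-condition must be checked against $R$-classes rather than $R_0$-classes (and the paper never actually defines the PDF $\mu$ for CPTC terms, so this clause is under-determined by the text in any case).

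A second, smaller caveat concerns your preferred route for $P_1\parallel Q$ via the expansion law (Proposition~\ref{NELSPB3}). That proposition is indeed available by the paper's ordering, but it only unfolds one layer: the summands of the expanded form still contain residuals $P_i'\parallel Q'$, which are again parallel compositions, so matching the two expansions ``summand-for-summand'' quietly presupposes the very congruence you are proving for the residuals. The expansion law therefore does not terminate the argument on its own; you either need a depth induction (valid only for finite-depth processes) or you must fall back on the coinductive relation you sketch as the alternative, $\{(s\parallel u,\,t\parallel u):(s,t)\in R_0\cup Id,\ u\text{ reachable from }Q\}\cup Id$, and check the four $\textbf{Com}$ rules and the factorisation of $\mu$ by hand. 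You should also be aware that the paper's rule $\textbf{PCom}$ literally sends $P\parallel Q$ to $P'+Q'$ (a sum, not a parallel composition), which, if taken at face value, changes the shape of the required closure; this is most likely a typo in the source, but any careful proof has to commit to one reading.
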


\begin{proof}
According to the definition of strongly probabilistic pomset bisimulation, we can easily prove the above equations, and we omit the proof.
\end{proof}

\begin{theorem}[Congruence for strongly probabilistic step bisimulation] \label{CSSB3}
We can enjoy the full congruence for strongly probabilistic step bisimulation as follows.

\begin{enumerate}
  \item If $A\overset{\text{def}}{=}P$, then $A\sim_{ps} P$;
  \item Let $P_1\sim_{ps} P_2$. Then
        \begin{enumerate}
           \item $\alpha.P_1\sim_{ps} \alpha.P_2$;
           \item $(\alpha_1\parallel\cdots\parallel\alpha_n).P_1\sim_{ps} (\alpha_1\parallel\cdots\parallel\alpha_n).P_2$;
           \item $P_1+Q\sim_{ps} P_2 +Q$;
           \item $P_1\boxplus_{\pi}Q\sim_{ps} P_2 \boxplus_{\pi}Q$;
           \item $P_1\parallel Q\sim_{ps} P_2\parallel Q$;
           \item $P_1\setminus L\sim_{ps} P_2\setminus L$;
           \item $P_1[f]\sim_{ps} P_2[f]$.
         \end{enumerate}
\end{enumerate}
\end{theorem}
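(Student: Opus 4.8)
The plan is to treat all seven contexts by a single uniform recipe: assume $P_1\sim_{ps}P_2$ witnessed by a probabilistic step bisimulation $R_0$ with $(P_1,P_2)\in R_0$, build for the context $C[\cdot]$ at hand a candidate relation $R$ by closing $R_0$ under $C$ together with the identity on the components that are left unchanged, and then verify that $R$ meets the four clauses of Definition \ref{PPSB} (step matching, probabilistic matching, the cumulative-distribution condition, and $[\surd]_R=\{\surd\}$). For clause~1, since rules $\textbf{PCon}$ and $\textbf{Con}_{1,2}$ give $A$ exactly the probabilistic and step transitions of $P$, the relation $Id\cup\{(A,P)\}$ is a probabilistic step bisimulation immediately.

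First I would dispatch the prefix cases (a),(b). The only subtlety here is the two-layer structure of the semantics: the probabilistic layer $\rightsquigarrow$ and the action layer $\xrightarrow{}$. For $\alpha.P_1,\alpha.P_2$ I would take $R=R_0\cup\{(\alpha.P_1,\alpha.P_2),(\breve{\alpha}.P_1,\breve{\alpha}.P_2)\}$. By $\textbf{PAct}_1$ both sides make the single probabilistic move $\alpha.P_i\rightsquigarrow\breve{\alpha}.P_i$, so clauses~2 and~3 match trivially, and by $\textbf{Act}_1$ both $\breve{\alpha}.P_i\xrightarrow{\alpha}P_i$, landing the residuals in $R_0$. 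Case (b) is identical using $\textbf{PAct}_2$ and $\textbf{Act}_2$ with the step label $\{\alpha_1,\ldots,\alpha_n\}$.

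Next come the compositional cases. For summation (c) I would set $R=R_0\cup Id\cup\{(P_1+Q,P_2+Q)\}$ and check $\textbf{PSum}$ for probabilistic matching and $\textbf{Sum}_{1,2}$ for step matching, so that a step inherited from $P_i$ stays in $R_0$ while a step from $Q$ lands in $Id$. Box-summation (d) uses $R_0\cup Id\cup\{(P_1\boxplus_\pi Q,P_2\boxplus_\pi Q)\}$, where the two $\textbf{PBox-Sum}$ rules generate the same probabilistic options on both sides because $P_1,P_2$ agree on their probabilistic behaviour. For parallel composition (e) I would close under the context, $R=\{(S_1\parallel T,S_2\parallel T):(S_1,S_2)\in R_0\cup Id\}$, and verify $\textbf{PCom}$ together with $\textbf{Com}_{1,2,3,4}$; the communication steps $\textbf{Com}_{3,4}$ preserve $R$ because the $T$-component is literally identical on the two sides. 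Restriction (f) and relabelling (g) follow the same pattern via $\textbf{PRes}/\textbf{Res}_{1,2}$ and $\textbf{PRel}/\textbf{Rel}_{1,2}$, using that the side condition $\alpha,\overline{\alpha}\notin L$ and the relabelled action $f(\alpha)$ are determined identically on both components.

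The main obstacle is clause~3, the requirement $\mu(C_1,C)=\mu(C_2,C)$ for every class $C\in\mathcal{C}(\mathcal{E})/R$, which forces $R$ to be an equivalence and forces the probability mass each side sends into a given class to coincide. For prefix, summation, restriction and relabelling this is immediate, since the probabilistic transition is either deterministic or a direct mirror of one already matched by $R_0$; the delicate cases are $\boxplus_\pi$ and $\parallel$, where I would have to lift the induced cPDF through the context. The cleanest route is to first replace $R_0$ by its equivalence closure (still a probabilistic step bisimulation) and then observe that, because the added context is applied identically to both sides, the probability assigned to each $R$-class is a fixed function of the probabilities $R_0$ already matches, so clause~3 for $R$ reduces to clause~3 for $R_0$. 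Everything remaining is the routine case analysis over the fourteen transition rules, which I would carry out but not belabour.
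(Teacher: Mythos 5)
The paper's own ``proof'' of Theorem \ref{CSSB3} is the single sentence ``According to the definition of strongly probabilistic step bisimulation, we can easily prove the above equations, and we omit the proof,'' so your proposal is not a different route but a correct filling-in of exactly the argument the paper gestures at: exhibit a context-closed candidate relation containing $R_0$ and check the four clauses of Definition \ref{PPSB}, taking the equivalence closure so that clause~3 (the $\mu(C_1,C)=\mu(C_2,C)$ condition on $R$-classes) reduces to the corresponding condition for $R_0$. The only caveat worth recording is that rule $\textbf{PCom}$ as printed sends $P\parallel Q\rightsquigarrow P'+Q'$ (a summation, not a parallel composition), so your relation for case (e) must also absorb the summation-context pairs $(S_1'+T',S_2'+T')$ to be closed under probabilistic transitions; with that adjustment the verification goes through.
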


\begin{proof}
According to the definition of strongly probabilistic step bisimulation, we can easily prove the above equations, and we omit the proof.
\end{proof}

\begin{theorem}[Congruence for strongly probabilistic hp-bisimulation] \label{CSHPB3}
We can enjoy the full congruence for strongly probabilistic hp-bisimulation as follows.

\begin{enumerate}
  \item If $A\overset{\text{def}}{=}P$, then $A\sim_{php} P$;
  \item Let $P_1\sim_{php} P_2$. Then
        \begin{enumerate}
           \item $\alpha.P_1\sim_{php} \alpha.P_2$;
           \item $(\alpha_1\parallel\cdots\parallel\alpha_n).P_1\sim_{php} (\alpha_1\parallel\cdots\parallel\alpha_n).P_2$;
           \item $P_1+Q\sim_{php} P_2 +Q$;
           \item $P_1\boxplus_{\pi}Q\sim_{php} P_2 \boxplus_{\pi}Q$;
           \item $P_1\parallel Q\sim_{php} P_2\parallel Q$;
           \item $P_1\setminus L\sim_{php} P_2\setminus L$;
           \item $P_1[f]\sim_{php} P_2[f]$.
         \end{enumerate}
\end{enumerate}
\end{theorem}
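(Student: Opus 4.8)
The plan is to prove each clause by exhibiting, for every operator, a posetal relation $R'$ and checking the four requirements of a probabilistic hp-bisimulation from Definition~\ref{PHHPB}: the action clause that maintains the order-isomorphism through $f[e_1\mapsto e_2]$, the probabilistic clause matching $\xrsquigarrow{\pi}$-moves, the PDF clause $\mu(C_1,C)=\mu(C_2,C)$ on the $R'$-classes, and the termination clause $[\surd]_{R'}=\{\surd\}$. In every case let $R$ witness $P_1\sim_{php}P_2$, so $(\emptyset,\emptyset,\emptyset)\in R$ and $R$ meets all four conditions; I then build $R'$ from $R$ and verify the same four conditions, with $(\emptyset,\emptyset,\emptyset)\in R'$.

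For clause~1, with $A\overset{\text{def}}{=}P$, I would take $R'$ to identify, via the identity isomorphism, each configuration reachable from $A$ with the corresponding one reachable from $P$: rules \textbf{PCon}, \textbf{Con}$_1$ and \textbf{Con}$_2$ make $A$ inherit verbatim the probabilistic and action transitions of $P$, so all four conditions are immediate. For the unary contexts $\Gamma[\cdot]$ --- prefix $\alpha.{-}$ (2a), multi-action prefix $(\alpha_1\parallel\cdots\parallel\alpha_n).{-}$ (2b), restriction ${-}\setminus L$ (2f), relabelling ${-}[f]$ (2g) --- I take $R'=\{(\Gamma[C_1],f',\Gamma[C_2]):(C_1,f,C_2)\in R\}$ together with the needed base triples, where $f'$ extends $f$ by the identity on the events the context contributes. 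For the two prefixes the only initial moves are the fixed probabilistic step (\textbf{PAct}$_{1,2}$) followed by the fixed action (\textbf{Act}$_{1,2}$) landing on $P_1,P_2$, after which $R$ takes over; restriction and relabelling merely filter or rename event labels while preserving the causal order, so the isomorphism transfers directly and conditions (2)--(4) are inherited. For summation $P_1+Q\sim_{php}P_2+Q$ (2c) and box-summation $P_1\boxplus_\pi Q\sim_{php}P_2\boxplus_\pi Q$ (2d) I use $R$ on the $P$-side and the identity on the shared $Q$-side; here the PDF clause~(3) is the one that does real work in the box-sum case, since $\boxplus_\pi$ is exactly where the weights $\pi,1-\pi$ reside, and matching holds because the $P$-branch class-probabilities agree by $R$ while the $Q$-branch is literally shared.

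The main obstacle is parallel composition (2e). I would relate $P_1\parallel Q$ with $P_2\parallel Q$ through the composite map that acts as $f$ on the $P$-component and as the identity on the $Q$-component, and the work is to confirm that this composite is genuinely an order-isomorphism of the product configurations and that it extends correctly to $f[e\mapsto e']$ under each of \textbf{Com}$_1$--\textbf{Com}$_4$. The delicate subcases are the concurrent-step rule \textbf{Com}$_3$ and the communication rule \textbf{Com}$_4$ (complementary labels yielding $\tau$), where both components move simultaneously and one must check that the matched events respect causality and concurrency across the two sides; the PDF clause must also be shown to compose, using rule \textbf{PCom} for the combined probabilistic resolution. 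Because this is hp- rather than hhp-bisimulation, no downward closure is required, which keeps the construction tractable; alternatively, on finite terms the expansion law for strongly probabilistic hp-bisimulation (Proposition~\ref{NELSHPB3}) rewrites $P_1\parallel Q$ and $P_2\parallel Q$ into prefix/summation/box-sum normal forms, reducing clause~(2e) to the already-settled cases (2a)--(2d).
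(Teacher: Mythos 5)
Your proposal is correct in outline and supplies considerably more than the paper does: the paper's own ``proof'' of this theorem is the single sentence that the equations follow easily from the definition of strongly probabilistic hp-bisimulation, with all details omitted. What you describe --- exhibiting, for each context, an explicit posetal relation built from a witnessing $R$ and checking the four clauses of Definition~\ref{PHHPB} (action matching via $f[e_1\mapsto e_2]$, matching of $\rightsquigarrow$-moves, the PDF condition on $R'$-classes, and $[\surd]_{R'}=\{\surd\}$) --- is the standard congruence argument the author is implicitly appealing to, and your identification of parallel composition (rules $\textbf{Com}_{3,4}$ together with $\textbf{PCom}$) as the only genuinely delicate case is accurate. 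One caution on your fallback route: Proposition~\ref{NELSHPB3} expands terms of the specific shape $(P_1[f_1]\parallel\cdots\parallel P_n[f_n])\setminus L$ and only applies to processes admitting such a head normal form, so it cannot fully replace the direct construction for clause (2e) over arbitrary $P_1,P_2,Q$; treat it as a sanity check rather than the proof. With that caveat, your argument is sound and is the proof the paper omits.
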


\begin{proof}
According to the definition of strongly probabilistic hp-bisimulation, we can easily prove the above equations, and we omit the proof.
\end{proof}

\begin{theorem}[Congruence for strongly probabilistic hhp-bisimulation] \label{CSHHPB3}
We can enjoy the full congruence for strongly probabilistic hhp-bisimulation as follows.

\begin{enumerate}
  \item If $A\overset{\text{def}}{=}P$, then $A\sim_{phhp} P$;
  \item Let $P_1\sim_{phhp} P_2$. Then
        \begin{enumerate}
           \item $\alpha.P_1\sim_{phhp} \alpha.P_2$;
           \item $(\alpha_1\parallel\cdots\parallel\alpha_n).P_1\sim_{phhp} (\alpha_1\parallel\cdots\parallel\alpha_n).P_2$;
           \item $P_1+Q\sim_{phhp} P_2 +Q$;
           \item $P_1\boxplus_{\pi}Q\sim_{phhp} P_2 \boxplus_{\pi}Q$;
           \item $P_1\parallel Q\sim_{phhp} P_2\parallel Q$;
           \item $P_1\setminus L\sim_{phhp} P_2\setminus L$;
           \item $P_1[f]\sim_{phhp} P_2[f]$.
         \end{enumerate}
\end{enumerate}
\end{theorem}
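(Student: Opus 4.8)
The plan is to argue operator by operator, exploiting the fact recorded in Definition~\ref{PHHPB} that a probabilistic hhp-bisimulation is precisely a \emph{downward closed} probabilistic hp-bisimulation. So in each clause I would start from a witnessing relation $R$ for the hypothesis $P_1\sim_{phhp}P_2$, reuse the construction already underlying the hp-congruence (Theorem~\ref{CSHPB3}) to build a candidate posetal relation $R'$ containing $(\emptyset,\emptyset,\emptyset)$, verify the four clauses of Definition~\ref{PHHPB} (action transfer with the isomorphism update $f[e_1\mapsto e_2]$, probabilistic transfer $C\xrsquigarrow{\pi}C^{\pi}$, the class-wise distribution condition $\mu(C_1,C)=\mu(C_2,C)$, and $[\surd]_{R'}=\{\surd\}$), and finally check that $R'$ can be taken downward closed. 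Since $R$ is downward closed by hypothesis, the extra work over the hp-case is confined to that last point, which in every case reduces to observing that the carrier isomorphisms restrict coordinatewise to sub-configurations.

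Part~(1) is immediate: by rules $\textbf{PCon}$ and $\textbf{Con}_{1,2}$ the constant $A$ fires exactly the probabilistic and action transitions of its body $P$, so the identity-carried relation between configurations of $A$ and of $P$ is a downward closed probabilistic hp-bisimulation. For the prefix clauses (2a) and (2b) I would take $R'$ to be $R$ together with the pair of initial configurations joined by the empty map, and prepend the single probabilistic-then-action step contributed by $\textbf{PAct}$ and $\textbf{Act}$; after that step both sides lie inside $R$ and the only freshly matched events are the prefix actions, for which the isomorphism extension is forced. The choice clauses (2c) and (2d) use $R'=R\cup Id$ augmented with the initial pair: for $+$ the untaken branch is handled by $Id$, while for $\boxplus_{\pi}$ one additionally checks, using $\textbf{PBox-Sum}$, that the probabilistic branching weights match, so clause~(3) survives. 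The restriction and relabelling clauses (2f) and (2g) come from pushing $R$ through the operators: restriction merely filters transitions by the side conditions of $\textbf{Res}_{1,2}$/$\textbf{PRes}$ and relabelling renames labels uniformly via $\textbf{Rel}_{1,2}$/$\textbf{PRel}$, so the same isomorphisms still match and downward closure is inherited unchanged.

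The parallel clause (2e) is the main obstacle. Here the candidate relation relates a configuration of $P_1\parallel Q$ to one of $P_2\parallel Q$ by splitting it into a $P_i$-part and a $Q$-part and carrying it by $f\cup Id$, where $f$ comes from $R$ on the $P_i$-part and the identity is used on the shared $Q$-part. One must then show this relation respects all of $\textbf{Com}_{1,2,3,4}$ — including the step-forming synchronization and the $l/\overline{l}$ communication producing $\tau$ — while preserving the posetal structure, and respects the componentwise rule $\textbf{PCom}$. The delicate point is clause~(3): the probabilistic transition of the composite is governed by the \emph{product} of the choices made independently in the two branches, so I would need to show that the cumulative PDF $\mu^{*}$ factors over $\parallel$ and that, since $P_1$ and $P_2$ induce equal distributions over $R$-classes while $Q$ is common, the composite distributions over the induced product classes agree. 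Establishing this factorization, and then verifying downward closure (the $f\cup Id$ carrier restricts coordinatewise, with downward closedness of $R$ supplying the $P_i$-part and the identity the $Q$-part), is where essentially all the content of the hhp-case lies; the step, pomset and hp congruences (Theorems~\ref{CSSB3}, \ref{CSPB3}, \ref{CSHPB3}) are then the coarser projections of the same argument, obtained by forgetting the isomorphism and the downward-closure requirement.
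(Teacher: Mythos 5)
The paper gives no actual proof of this theorem: it states only that the equations follow ``according to the definition of strongly probabilistic hhp-bisimulation'' and omits all details, so there is nothing concrete to compare your argument against. Your proposal is the standard operator-by-operator elaboration of exactly that appeal --- build a witnessing posetal relation for each construct, check the four clauses of Definition~\ref{PHHPB}, and verify downward closure on top of the hp-case --- and it correctly isolates the only genuinely delicate points (the product structure of the probabilistic transitions under $\parallel$ and the coordinatewise restriction of the carrier isomorphisms), so it is a sound and considerably more informative version of what the paper merely asserts.
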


\begin{proof}
According to the definition of strongly probabilistic hhp-bisimulation, we can easily prove the above equations, and we omit the proof.
\end{proof}

\subsubsection{Recursion}

\begin{definition}[Weakly guarded recursive expression]
$X$ is weakly guarded in $E$ if each occurrence of $X$ is with some subexpression $\alpha.F$ or $(\alpha_1\parallel\cdots\parallel\alpha_n).F$ of $E$.
\end{definition}

\begin{lemma}\label{LUS3}
If the variables $\widetilde{X}$ are weakly guarded in $E$, and $E\{\widetilde{P}/\widetilde{X}\}\rightsquigarrow\xrightarrow{\{\alpha_1,\cdots,\alpha_n\}}P'$, then $P'$ takes the
form $E'\{\widetilde{P}/\widetilde{X}\}$ for some expression $E'$, and moreover, for any $\widetilde{Q}$, $E\{\widetilde{Q}/\widetilde{X}\}\rightsquigarrow\xrightarrow{\{\alpha_1,\cdots,\alpha_n\}}E'\{\widetilde{Q}/\widetilde{X}\}$.
\end{lemma}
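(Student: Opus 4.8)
The plan is to argue by induction on the structure of $E$ (equivalently, on the depth of inference of the combined transition $E\{\widetilde{P}/\widetilde{X}\}\rightsquigarrow\xrightarrow{\{\alpha_1,\cdots,\alpha_n\}}P'$), inverting first the probabilistic rule and then the action rule that justify it. The weak guardedness hypothesis enters in two ways: it rules out the case $E\equiv X_i$ (a bare variable is never weakly guarded), and it is inherited by the immediate subexpressions under $+$, $\boxplus_{\pi}$, $\parallel$, $\setminus L$ and $[f]$, so that the induction hypothesis is available on those subterms.

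First I would treat the prefix cases, which are the heart of the argument. If $E\equiv\alpha.F$, the only probabilistic move is $\textbf{PAct}_1$, giving $\alpha.F\{\widetilde{P}/\widetilde{X}\}\rightsquigarrow\breve{\alpha}.F\{\widetilde{P}/\widetilde{X}\}$, and the only action move is $\textbf{Act}_1$, giving $\breve{\alpha}.F\{\widetilde{P}/\widetilde{X}\}\xrightarrow{\alpha}F\{\widetilde{P}/\widetilde{X}\}$; thus $P'\equiv F\{\widetilde{P}/\widetilde{X}\}$ and I take $E'\equiv F$. Crucially, I do not recurse into $F$: the prefix has absorbed the action, so the continuation is literally $F$ with the substitution applied, and the identical derivation goes through verbatim for any $\widetilde{Q}$, yielding $E\{\widetilde{Q}/\widetilde{X}\}\rightsquigarrow\xrightarrow{\alpha}F\{\widetilde{Q}/\widetilde{X}\}$. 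The multi-action prefix $(\alpha_1\parallel\cdots\parallel\alpha_n).F$ is identical, using $\textbf{PAct}_2$ and $\textbf{Act}_2$ and again setting $E'\equiv F$. This is exactly where weak guardedness pays off: the continuation $E'$ is determined by the syntax of $E$ alone and is independent of the substituted processes.

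Next I would dispatch the compositional cases by inverting the probabilistic rule, then the action rule, and appealing to the induction hypothesis. For $E\equiv E_1+E_2$, rule $\textbf{PSum}$ forces the probabilistic move on both summands and rule $\textbf{Sum}_{1,2}$ selects one summand, say $E_1$, to contribute the step; the induction hypothesis on $E_1$ supplies $E'$ with $P'\equiv E'\{\widetilde{P}/\widetilde{X}\}$ together with robustness under $\widetilde{Q}$, which then lifts back through $\textbf{PSum}$ and $\textbf{Sum}$. The box-sum $E_1\boxplus_{\pi}E_2$ is similar but easier, since $\textbf{PBox-Sum}$ already commits to a single branch before any action occurs. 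Restriction $E_1\setminus L$ and relabelling $E_1[f]$ pass their side conditions ($\alpha,\overline{\alpha}\notin L$, respectively $f(\alpha)$) through unchanged and reduce to the induction hypothesis on $E_1$, with $E'\equiv E_1'\setminus L$ respectively $E'\equiv E_1'[f]$. The constant case $E\equiv A$ with $A\overset{\text{def}}{=}P$ is immediate: $A$ carries no free occurrence of any $X_i$, so $P'$ does not depend on $\widetilde{P}$ and serves as $E'$ trivially.

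The main obstacle is the composition case $E\equiv E_1\parallel E_2$. After $\textbf{PCom}$ resolves the probabilistic choices, the step is produced by one of $\textbf{Com}_{1,2,3,4}$. The synchronized and communicating subcases $\textbf{Com}_{3,4}$ require applying the induction hypothesis to both components and recombining the continuations as $E'\equiv E_1'\parallel E_2'$ (with step label $\{\alpha,\beta\}$, resp.\ $\tau$); this part is routine once the hypothesis is invoked twice. The genuinely delicate point is the negative premise $Q\nrightarrow$ in $\textbf{Com}_{1,2}$: to transport the derivation to an arbitrary $\widetilde{Q}$ I must know that $E_2\{\widetilde{Q}/\widetilde{X}\}\nrightarrow$ follows from $E_2\{\widetilde{P}/\widetilde{X}\}\nrightarrow$. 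This is where weak guardedness is needed a second time: because every occurrence of a variable in $E_2$ lies beneath a prefix, the set of initial actions of $E_2\{\cdot/\widetilde{X}\}$ after a probabilistic transition is governed by the prefixes occurring in $E_2$ and is independent of the substitution. I would isolate this as a short auxiliary observation, proved by a routine induction on $E_2$ mirroring the prefix and compositional cases above, and then conclude that the negative premise is preserved, closing the parallel case and hence the induction.
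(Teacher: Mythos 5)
Your proposal follows essentially the same route as the paper's proof: induction on the structure of $E$ (equivalently the depth of the inference), with the bare-variable case excluded by weak guardedness, the prefix cases taking $E'\equiv F$ without recursing, and the summation, box-summation, composition, restriction, relabelling and constant cases handled by inverting the relevant rules and invoking the induction hypothesis. The one point where you go beyond the paper is your explicit auxiliary observation that the negative premise $E_2\{\widetilde{P}/\widetilde{X}\}\nrightarrow$ in $\textbf{Com}_{1,2}$ transfers to $E_2\{\widetilde{Q}/\widetilde{X}\}\nrightarrow$; the paper's proof silently assumes this, so your version is, if anything, more complete.
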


\begin{proof}
It needs to induct on the depth of the inference of $E\{\widetilde{P}/\widetilde{X}\}\rightsquigarrow\xrightarrow{\{\alpha_1,\cdots,\alpha_n\}}P'$.

\begin{enumerate}
  \item Case $E\equiv Y$, a variable. Then $Y\notin \widetilde{X}$. Since $\widetilde{X}$ are weakly guarded, $Y\{\widetilde{P}/\widetilde{X}\equiv Y\}\nrightarrow$, this case is
  impossible.
  \item Case $E\equiv\beta.F$. Then we must have $\alpha=\beta$, and $P'\equiv F\{\widetilde{P}/\widetilde{X}\}$, and $E\{\widetilde{Q}/\widetilde{X}\}\equiv \beta.F\{\widetilde{Q}/\widetilde{X}\}
  \rightsquigarrow\xrightarrow{\beta}F\{\widetilde{Q}/\widetilde{X}\}$, then, let $E'$ be $F$, as desired.
  \item Case $E\equiv(\beta_1\parallel\cdots\parallel\beta_n).F$. Then we must have $\alpha_i=\beta_i$ for $1\leq i\leq n$, and $P'\equiv F\{\widetilde{P}/\widetilde{X}\}$, and
  $E\{\widetilde{Q}/\widetilde{X}\}\equiv (\beta_1\parallel\cdots\parallel\beta_n).F\{\widetilde{Q}/\widetilde{X}\} \rightsquigarrow\xrightarrow{\{\beta_1,\cdots,\beta_n\}}F\{\widetilde{Q}/\widetilde{X}\}$,
  then, let $E'$ be $F$, as desired.
  \item Case $E\equiv E_1+E_2$. Then either $E_1\{\widetilde{P}/\widetilde{X}\} \rightsquigarrow\xrightarrow{\{\alpha_1,\cdots,\alpha_n\}}P'$ or $E_2\{\widetilde{P}/\widetilde{X}\}
  \rightsquigarrow\xrightarrow{\{\alpha_1,\cdots,\alpha_n\}}P'$, then, we can apply this lemma in either case, as desired.
  \item Case $E\equiv E_1\boxplus_{\pi}E_2$. Then either $E_1\{\widetilde{P}/\widetilde{X}\} \rightsquigarrow P'$ or $E_2\{\widetilde{P}/\widetilde{X}\}
  \rightsquigarrow P'$, then, we can apply this lemma in either case, as desired.
  \item Case $E\equiv E_1\parallel E_2$. There are four possibilities.
  \begin{enumerate}
    \item We may have $E_1\{\widetilde{P}/\widetilde{X}\} \rightsquigarrow\xrightarrow{\alpha}P_1'$ and $E_2\{\widetilde{P}/\widetilde{X}\}\nrightarrow$ with
    $P'\equiv P_1'\parallel (E_2\{\widetilde{P}/\widetilde{X}\})$, then by applying this lemma, $P_1'$ is of the form $E_1'\{\widetilde{P}/\widetilde{X}\}$, and for any $Q$,
    $E_1\{\widetilde{Q}/\widetilde{X}\}\rightsquigarrow\xrightarrow{\alpha} E_1'\{\widetilde{Q}/\widetilde{X}\}$. So, $P'$ is of the form
    $E_1'\parallel E_2\{\widetilde{P}/\widetilde{X}\}$, and for any $Q$, $E\{\widetilde{Q}/\widetilde{X}\}\equiv E_1\{\widetilde{Q}/\widetilde{X}\}\parallel E_2\{\widetilde{Q}/
    \widetilde{X}\}\rightsquigarrow\xrightarrow{\alpha} (E_1'\parallel E_2)\{\widetilde{Q}/\widetilde{X}\}$, then, let $E'$ be $E_1'\parallel E_2$, as desired.
    \item We may have $E_2\{\widetilde{P}/\widetilde{X}\} \rightsquigarrow\xrightarrow{\alpha}P_2'$ and $E_1\{\widetilde{P}/\widetilde{X}\}\nrightarrow$ with
    $P'\equiv P_2'\parallel (E_1\{\widetilde{P}/\widetilde{X}\})$, this case can be prove similarly to the above subcase, as desired.
    \item We may have $E_1\{\widetilde{P}/\widetilde{X}\} \rightsquigarrow\xrightarrow{\alpha}P_1'$ and $E_2\{\widetilde{P}/\widetilde{X}\}\rightsquigarrow\xrightarrow{\beta}P_2'$
    with $\alpha\neq\overline{\beta}$ and $P'\equiv P_1'\parallel P_2'$, then by applying this lemma, $P_1'$ is of the form $E_1'\{\widetilde{P}/\widetilde{X}\}$, and for any $Q$,
    $E_1\{\widetilde{Q}/\widetilde{X}\}\rightsquigarrow\xrightarrow{\alpha} E_1'\{\widetilde{Q}/\widetilde{X}\}$; $P_2'$ is of the form $E_2'\{\widetilde{P}/\widetilde{X}\}$, and for
    any $Q$, $E_2\{\widetilde{Q}/\widetilde{X}\}\rightsquigarrow\xrightarrow{\alpha} E_2'\{\widetilde{Q}/\widetilde{X}\}$. So, $P'$ is of the form
    $E_1'\parallel E_2'\{\widetilde{P}/\widetilde{X}\}$, and for any $Q$, $E\{\widetilde{Q}/\widetilde{X}\}\equiv E_1\{\widetilde{Q}/\widetilde{X}\}\parallel E_2\{\widetilde{Q}/
    \widetilde{X}\}\rightsquigarrow\xrightarrow{\{\alpha,\beta\}} (E_1'\parallel E_2')\{\widetilde{Q}/\widetilde{X}\}$, then, let $E'$ be $E_1'\parallel E_2'$, as desired.
    \item We may have $E_1\{\widetilde{P}/\widetilde{X}\} \rightsquigarrow\xrightarrow{l}P_1'$ and $E_2\{\widetilde{P}/\widetilde{X}\}\rightsquigarrow\xrightarrow{\overline{l}}P_2'$
    with $P'\equiv P_1'\parallel P_2'$, then by applying this lemma, $P_1'$ is of the form $E_1'\{\widetilde{P}/\widetilde{X}\}$, and for any $Q$, $E_1\{\widetilde{Q}/\widetilde{X}\}
    \rightsquigarrow\xrightarrow{l} E_1'\{\widetilde{Q}/\widetilde{X}\}$; $P_2'$ is of the form $E_2'\{\widetilde{P}/\widetilde{X}\}$, and for any $Q$,
    $E_2\{\widetilde{Q}/\widetilde{X}\}\rightsquigarrow\xrightarrow{\overline{l}} E_2'\{\widetilde{Q}/\widetilde{X}\}$. So, $P'$ is of the form
    $E_1'\parallel E_2'\{\widetilde{P}/\widetilde{X}\}$, and for any $Q$, $E\{\widetilde{Q}/\widetilde{X}\}\equiv E_1\{\widetilde{Q}/\widetilde{X}\}\parallel E_2\{\widetilde{Q}/
    \widetilde{X}\}\rightsquigarrow\xrightarrow{\tau} (E_1'\parallel E_2')\{\widetilde{Q}/\widetilde{X}\}$, then, let $E'$ be $E_1'\parallel E_2'$, as desired.
  \end{enumerate}
  \item Case $E\equiv F[R]$ and $E\equiv F\setminus L$. These cases can be prove similarly to the above case.
  \item Case $E\equiv C$, an agent constant defined by $C\overset{\text{def}}{=}R$. Then there is no $X\in\widetilde{X}$ occurring in $E$, so
  $C\rightsquigarrow\xrightarrow{\{\alpha_1,\cdots,\alpha_n\}}P'$, let $E'$ be $P'$, as desired.
\end{enumerate}
\end{proof}

\begin{theorem}[Unique solution of equations for strongly probabilistic step bisimulation]\label{USSSB3}
Let the recursive expressions $E_i(i\in I)$ contain at most the variables $X_i(i\in I)$, and let each $X_j(j\in I)$ be weakly guarded in each $E_i$. Then,

If $\widetilde{P}\sim_{ps} \widetilde{E}\{\widetilde{P}/\widetilde{X}\}$ and $\widetilde{Q}\sim_{ps} \widetilde{E}\{\widetilde{Q}/\widetilde{X}\}$,
then $\widetilde{P}\sim_{ps} \widetilde{Q}$.
\end{theorem}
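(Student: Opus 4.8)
The plan is to prove $\widetilde P\sim_{ps}\widetilde Q$ componentwise, i.e.\ $P_i\sim_{ps}Q_i$ for every $i\in I$, by exhibiting one relation that, together with the substitution Lemma \ref{LUS3}, witnesses strongly probabilistic step bisimilarity. First I would set up the candidate relation
$$R=\{(G\{\widetilde P/\widetilde X\},\,G\{\widetilde Q/\widetilde X\}) : G \text{ is a CPTC expression with free variables among }\widetilde X\}.$$
Taking $G\equiv X_i$ gives $X_i\{\widetilde P/\widetilde X\}\equiv P_i$ and $X_i\{\widetilde Q/\widetilde X\}\equiv Q_i$, so $(P_i,Q_i)\in R$ for each $i$. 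Hence it suffices to show that $R$ is a strongly probabilistic step bisimulation up to $\sim_{ps}$; equivalently, that the composition $\sim_{ps}\!\circ R\circ\sim_{ps}$ is a genuine strongly probabilistic step bisimulation in the sense of Definition \ref{PPSB}. I would prefer the composition formulation, since it lets me reuse that $\sim_{ps}$ is itself a bisimulation and the congruence result (Theorem \ref{CSSB3}) directly, rather than proving a separate soundness lemma for the up-to technique.

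The probabilistic-transition clause (2), the cumulative-distribution clause (3), and the singleton clause (4) form the easy half for a pair $(G\{\widetilde P/\widetilde X\},G\{\widetilde Q/\widetilde X\})\in R$: because both sides share the same outermost expression $G$, the probabilistic rules \textbf{PSum}, \textbf{PBox-Sum}, \textbf{PCom}, \textbf{PRes}, \textbf{PRel}, \textbf{PCon} of Table \ref{PTRForCTC3} fire identically on the two sides, producing matching probabilistic branches carrying equal probabilities; consequently $\mu(G\{\widetilde P/\widetilde X\},\cdot)$ and $\mu(G\{\widetilde Q/\widetilde X\},\cdot)$ agree on $R$-classes, each resolved residual is again a common expression instantiated by $\widetilde P$ and by $\widetilde Q$ (so it stays inside $R$), and $\surd$ appears on one side exactly when it appears on the other, giving $[\surd]_R=\{\surd\}$.

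The action clause (1) is the heart of the argument. Suppose $G\{\widetilde P/\widetilde X\}\rightsquigarrow\xrightarrow{\{\alpha_1,\dots,\alpha_n\}}P'$; I would induct on the structure of $G$. As long as the firing step does not pass through a bare occurrence of a recursion variable, the residual is forced to have the shape $G'\{\widetilde P/\widetilde X\}$, and the identical derivation on the right yields $G\{\widetilde Q/\widetilde X\}\rightsquigarrow\xrightarrow{\{\alpha_1,\dots,\alpha_n\}}G'\{\widetilde Q/\widetilde X\}$ with the same step, so the residual pair lies in $R$. The genuinely delicate case, which is the main obstacle, is when $G$ exposes some $X_j$ at the position that fires: then $G\{\widetilde P/\widetilde X\}$ can only move through $P_j$, which is merely bisimilar to, not syntactically equal to, $E_j\{\widetilde P/\widetilde X\}$. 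Here I would pivot through the hypothesis $P_j\sim_{ps}E_j\{\widetilde P/\widetilde X\}$ to transfer the move to $E_j\{\widetilde P/\widetilde X\}$, and then invoke Lemma \ref{LUS3} --- applicable precisely because each $X_k$ is weakly guarded in $E_j$ --- to conclude that the residual has the form $E'\{\widetilde P/\widetilde X\}$ and that $E_j\{\widetilde Q/\widetilde X\}\rightsquigarrow\xrightarrow{\{\alpha_1,\dots,\alpha_n\}}E'\{\widetilde Q/\widetilde X\}$ fires with the same guards (so the step labels on the two sides coincide). Using $Q_j\sim_{ps}E_j\{\widetilde Q/\widetilde X\}$ to pull this move back to $Q_j$, I obtain $Q'$ with $P'\sim_{ps}E'\{\widetilde P/\widetilde X\}$, $(E'\{\widetilde P/\widetilde X\},E'\{\widetilde Q/\widetilde X\})\in R$, and $E'\{\widetilde Q/\widetilde X\}\sim_{ps}Q'$, which is exactly a pair in $\sim_{ps}\!\circ R\circ\sim_{ps}$. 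Weak guardedness is what guarantees the exposed variable is shielded by a prefix before it can be re-exposed, so the induction is well founded and Lemma \ref{LUS3} always applies after one unfolding.

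I expect the fussiest bookkeeping to lie in the two-phase (probabilistic-then-action) transition format: one must track which states are resolved, ensure the relation is closed under both phases, and confirm that the step labels produced after unfolding are genuinely step-isomorphic on the two sides; some care is also needed to transport the cumulative-probability matching of clause (3) along both $\sim_{ps}$ and $R$ in the composition, which uses that $\sim_{ps}$ already equates $\mu$ on its own classes. Once clause (1) is established, together with its symmetric vice-versa counterpart (identical by the symmetry of $R$ and of the roles of $\widetilde P,\widetilde Q$), all four clauses hold for $\sim_{ps}\!\circ R\circ\sim_{ps}$, so this relation is a strongly probabilistic step bisimulation containing every $(P_i,Q_i)$, whence $P_i\sim_{ps}Q_i$ and therefore $\widetilde P\sim_{ps}\widetilde Q$. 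Finally, the step formulation extends the single-event one verbatim: wherever the inductive analysis uses a single-event transition $\xrightarrow{\alpha}$, the same rules read as step transitions $\xrightarrow{\{\alpha_1,\dots,\alpha_n\}}$, as already noted in the expansion-law proof (Proposition \ref{NELSSB3}).
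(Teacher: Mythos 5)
Your proposal is correct and follows essentially the same route as the paper's proof: both hinge on Lemma \ref{LUS3} and on pivoting through the hypotheses $P_j\sim_{ps}E_j\{\widetilde{P}/\widetilde{X}\}$ and $Q_j\sim_{ps}E_j\{\widetilde{Q}/\widetilde{X}\}$ at the occurrences of recursion variables, with a structural case analysis on the expression. The only difference is presentational: you make the witnessing relation $\sim_{ps}\circ R\circ\sim_{ps}$ explicit and check the probabilistic clauses, whereas the paper's induction on the depth of inference leaves that relation implicit.
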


\begin{proof}
It is sufficient to induct on the depth of the inference of $E\{\widetilde{P}/\widetilde{X}\}\rightsquigarrow\xrightarrow{\{\alpha_1,\cdots,\alpha_n\}}P'$.

\begin{enumerate}
  \item Case $E\equiv X_i$. Then we have $E\{\widetilde{P}/\widetilde{X}\}\equiv P_i\rightsquigarrow\xrightarrow{\{\alpha_1,\cdots,\alpha_n\}}P'$, since
  $P_i\sim_{ps} E_i\{\widetilde{P}/\widetilde{X}\}$, we have $E_i\{\widetilde{P}/\widetilde{X}\}\rightsquigarrow\xrightarrow{\{\alpha_1,\cdots,\alpha_n\}}P''\sim_{ps} P'$. Since
  $\widetilde{X}$ are weakly guarded in $E_i$, by Lemma \ref{LUS3}, $P''\equiv E'\{\widetilde{P}/\widetilde{X}\}$ and
  $E_i\{\widetilde{P}/\widetilde{X}\}\rightsquigarrow\xrightarrow{\{\alpha_1,\cdots,\alpha_n\}} E'\{\widetilde{P}/\widetilde{X}\}$. Since
  $E\{\widetilde{Q}/\widetilde{X}\}\equiv X_i\{\widetilde{Q}/\widetilde{X}\} \equiv Q_i\sim_{ps} E_i\{\widetilde{Q}/\widetilde{X}\}$,
  $E\{\widetilde{Q}/\widetilde{X}\}\rightsquigarrow\xrightarrow{\{\alpha_1,\cdots,\alpha_n\}}Q'\sim_{ps} E'\{\widetilde{Q}/\widetilde{X}\}$. So, $P'\sim_{ps} Q'$, as desired.
  \item Case $E\equiv\alpha.F$. This case can be proven similarly.
  \item Case $E\equiv(\alpha_1\parallel\cdots\parallel\alpha_n).F$. This case can be proven similarly.
  \item Case $E\equiv E_1+E_2$. We have $E_i\{\widetilde{P}/\widetilde{X}\} \rightsquigarrow\xrightarrow{\{\alpha_1,\cdots,\alpha_n\}}P'$,
  $E_i\{\widetilde{Q}/\widetilde{X}\} \rightsquigarrow\xrightarrow{\{\alpha_1,\cdots,\alpha_n\}}Q'$, then, $P'\sim_{ps} Q'$, as desired.
  \item Case $E\equiv E_1\boxplus_{\pi}E_2$. We have $E_i\{\widetilde{P}/\widetilde{X}\} \rightsquigarrow P'$,
  $E_i\{\widetilde{Q}/\widetilde{X}\} \rightsquigarrow Q'$, then, $P'\sim_{ps} Q'$, as desired.
  \item Case $E\equiv E_1\parallel E_2$, $E\equiv F[R]$ and $E\equiv F\setminus L$, $E\equiv C$. These cases can be prove similarly to the above case.
\end{enumerate}
\end{proof}

\begin{theorem}[Unique solution of equations for strongly probabilistic pomset bisimulation]\label{USSPB3}
Let the recursive expressions $E_i(i\in I)$ contain at most the variables $X_i(i\in I)$, and let each $X_j(j\in I)$ be weakly guarded in each $E_i$. Then,

If $\widetilde{P}\sim_{pp} \widetilde{E}\{\widetilde{P}/\widetilde{X}\}$ and $\widetilde{Q}\sim_{pp} \widetilde{E}\{\widetilde{Q}/\widetilde{X}\}$, then
$\widetilde{P}\sim_{pp} \widetilde{Q}$.
\end{theorem}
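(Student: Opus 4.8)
The plan is to mirror the proof of Theorem \ref{USSSB3} (the step-bisimulation version), inducting on the depth of the inference of the combined move $E\{\widetilde{P}/\widetilde{X}\}\rightsquigarrow\xrightarrow{X}P'$, where now the label $X$ ranges over pomsets rather than steps. In each case I would produce a matching move $E\{\widetilde{Q}/\widetilde{X}\}\rightsquigarrow\xrightarrow{X'}Q'$ with $X\sim X'$ (pomset isomorphism, as required by Definition \ref{PPSB}) and $P'\sim_{pp}Q'$, together with the symmetric clause. This exhibits the probabilistic pomset bisimulation witnessing $\widetilde{P}\sim_{pp}\widetilde{Q}$, since $P_i\sim_{pp}E_i\{\widetilde{P}/\widetilde{X}\}$ and $Q_i\sim_{pp}E_i\{\widetilde{Q}/\widetilde{X}\}$ by hypothesis.

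The structural key is Lemma \ref{LUS3}: because each $X_j$ is weakly guarded in $E_i$, every combined probabilistic-and-action move of $E\{\widetilde{P}/\widetilde{X}\}$ yields a residual of the form $E'\{\widetilde{P}/\widetilde{X}\}$, and the same move is available from $E\{\widetilde{Q}/\widetilde{X}\}$ giving $E'\{\widetilde{Q}/\widetilde{X}\}$. Intuitively this reduces everything to showing that the relation
$$R=\{(G\{\widetilde{P}/\widetilde{X}\},G\{\widetilde{Q}/\widetilde{X}\}):G\text{ an expression with free variables among }\widetilde{X}\}$$
pre- and post-composed with $\sim_{pp}$, is a strongly probabilistic pomset bisimulation. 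The case analysis ($E\equiv X_i$, $\alpha.F$, $(\alpha_1\parallel\cdots\parallel\alpha_n).F$, $E_1+E_2$, $E_1\boxplus_{\pi}E_2$, $E_1\parallel E_2$, $F[f]$, $F\setminus L$, and constants $C$) then transcribes the step proof: for the base case $E\equiv X_i$ one feeds $P_i\sim_{pp}E_i\{\widetilde{P}/\widetilde{X}\}$ through Lemma \ref{LUS3} to transfer the move and conclude $P'\sim_{pp}Q'$, and the compound cases follow verbatim.

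The one genuinely new point relative to the step version is that a pomset label need not consist of pairwise concurrent events. Here I would rely on the remark already used in the expansion-law proofs (Proposition \ref{NELSSB3}) that the rules of Table \ref{TRForCTC3} may be read event-by-event, so a pomset move decomposes into single-event moves to which Lemma \ref{LUS3} applies individually; reassembling them, the reconstructed $Q$-side move carries the \emph{same} multiset of actions with the same causal ordering, so $X\sim X'$ holds (indeed with $X=X'$). The probabilistic clauses (2)--(4) of Definition \ref{PPSB} carry over unchanged from the step setting: the rules $\textbf{PSum}$, $\textbf{PBox-Sum}$ and $\textbf{PCom}$ force matching probabilistic transitions $C_1\xrsquigarrow{\pi}C_1^{\pi}$ and $C_2\xrsquigarrow{\pi}C_2^{\pi}$ because both sides share the syntactic skeleton $G$, and the cumulative-distribution equality $\mu(C_1,C)=\mu(C_2,C)$ on each $R$-class, as well as $[\surd]_R=\{\surd\}$, are inherited from this shared skeleton.

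I expect the main obstacle to be securing the pomset analogue of Lemma \ref{LUS3}: the lemma is stated for step labels $\xrightarrow{\{\alpha_1,\cdots,\alpha_n\}}$, and one must justify that weak guardedness still forces a residual $E'\{\widetilde{P}/\widetilde{X}\}$ when the first observable is an arbitrary pomset. This is precisely where the pomset-versus-step difference bites, since the event-by-event factorization must be shown compatible with the weak-guardedness hypothesis so that the leading move is determined by the structure of $E$ rather than by the solutions $\widetilde{P}$ themselves; once that factorization is in place, the distribution-matching clause (3) and the isomorphism matching $X\sim X'$ follow routinely from the shared skeleton.
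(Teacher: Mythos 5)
Your proposal follows essentially the same route as the paper: the paper's own proof of this theorem is a one-line deferral to the step-bisimulation case (Theorem \ref{USSSB3}), which is exactly the adaptation you carry out, and your event-by-event treatment of non-step pomset labels matches the convention the paper already invokes in the expansion-law proofs. Your explicit flagging of the pomset analogue of Lemma \ref{LUS3} is more careful than the paper, which silently assumes it, but it does not constitute a different approach.
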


\begin{proof}
Similarly to the proof of unique solution of equations for strongly probabilistic step bisimulation (see Theorem \ref{USSSB3}), we can prove that the unique solution of equations holds
for strongly probabilistic pomset bisimulation, we omit it.
\end{proof}

\begin{theorem}[Unique solution of equations for strongly probabilistic hp-bisimulation]\label{USSHPB3}
Let the recursive expressions $E_i(i\in I)$ contain at most the variables $X_i(i\in I)$, and let each $X_j(j\in I)$ be weakly guarded in each $E_i$. Then,

If $\widetilde{P}\sim_{php} \widetilde{E}\{\widetilde{P}/\widetilde{X}\}$ and $\widetilde{Q}\sim_{php} \widetilde{E}\{\widetilde{Q}/\widetilde{X}\}$, then
$\widetilde{P}\sim_{php} \widetilde{Q}$.
\end{theorem}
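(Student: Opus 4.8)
The plan is to mirror the proof of the unique-solution theorem for strongly probabilistic step bisimulation (Theorem \ref{USSSB3}), upgrading the argument so that it also carries the order-isomorphism data demanded by Definition \ref{PHHPB}. Writing the two solution hypotheses as $P_i\sim_{php}E_i\{\widetilde{P}/\widetilde{X}\}$ and $Q_i\sim_{php}E_i\{\widetilde{Q}/\widetilde{X}\}$, I would first fix the posetal relations witnessing these equivalences, and then build a candidate posetal relation
\[
R=\{(G\{\widetilde{P}/\widetilde{X}\},\,f,\,G\{\widetilde{Q}/\widetilde{X}\})\mid G \textrm{ a recursive expression with free variables in } \widetilde{X}\},
\]
where $f$ is the order-isomorphism on the common history obtained by composing the isomorphisms carried by the two solution witnesses with the internal matching of $G$. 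Since $X_i\{\widetilde{P}/\widetilde{X}\}\equiv P_i$ and $X_i\{\widetilde{Q}/\widetilde{X}\}\equiv Q_i$, taking $G\equiv X_i$ places a triple $(P_i,f_i,Q_i)$ in $R$, so establishing that $R$ is a strongly probabilistic hp-bisimulation immediately yields $\widetilde{P}\sim_{php}\widetilde{Q}$.

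Next I would verify the clauses of Definition \ref{PHHPB} for $R$ by induction on the depth of inference of a single-event move $E\{\widetilde{P}/\widetilde{X}\}\rightsquigarrow\xrightarrow{e_1}P'$, case-splitting on the outermost operator of $E$ exactly as in Theorem \ref{USSSB3}: the variable case $E\equiv X_i$, the prefixes $\alpha.F$ and $(\alpha_1\parallel\cdots\parallel\alpha_n).F$, the sum $E_1+E_2$, the box-sum $E_1\boxplus_{\pi}E_2$, the composition $E_1\parallel E_2$, and the restriction, relabelling and constant cases. The variable case is where the solution hypotheses and Lemma \ref{LUS3} enter: from $P_i\sim_{php}E_i\{\widetilde{P}/\widetilde{X}\}$ the move is matched inside $E_i\{\widetilde{P}/\widetilde{X}\}$; weak guardedness, via Lemma \ref{LUS3}, forces the residual into the uniform form $E'\{\widetilde{P}/\widetilde{X}\}$ together with a matching move $E_i\{\widetilde{Q}/\widetilde{X}\}\rightsquigarrow\xrightarrow{e_1}E'\{\widetilde{Q}/\widetilde{X}\}$; and $Q_i\sim_{php}E_i\{\widetilde{Q}/\widetilde{X}\}$ transports this back to a move of $Q_i$. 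The probabilistic clause is handled by the same case analysis applied to $\rightsquigarrow$, using the rules \textbf{PSum}, \textbf{PBox-Sum} and \textbf{PCom}, while the distribution clause $\mu(C_1,C)=\mu(C_2,C)$ and the termination clause $[\surd]_R=\{\surd\}$ follow because $R$ is built syntactically and respects $\boxplus_{\pi}$, just as in the step case.

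The genuinely new work, and the step I expect to be the main obstacle, is the bookkeeping for the history component $f$. Unlike the step case, here I must show at each matching that the updated correspondence $f[e_1\mapsto e_2]$ is again an order-isomorphism between the two accumulated configurations, and that it coincides with the composite of the two solution-witness isomorphisms and the internal matching of $E_i$. The delicate point is that the isomorphism supplied by $P_i\sim_{php}E_i\{\widetilde{P}/\widetilde{X}\}$ and the one supplied by $Q_i\sim_{php}E_i\{\widetilde{Q}/\widetilde{X}\}$ must be composed with the internal matching in a way that is order-preserving on the \emph{whole} past, not merely on the freshly fired event. I would argue this by noting that each of the three component maps is an isomorphism of posets and that weak guardedness (through Lemma \ref{LUS3}) makes the decomposition of the residual compatible across the substitution, so the composite is well defined and order-preserving; this is precisely the place where the step-bisimulation proof must be strengthened to the history-preserving setting, and everything else follows the template of Theorem \ref{USSSB3}.
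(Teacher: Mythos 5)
Your proposal follows essentially the same route as the paper: the paper's own proof is a one-line reduction stating that the argument is "similar to the proof for strongly probabilistic pomset bisimulation (Theorem \ref{USSPB3})" (which in turn defers to the step case, Theorem \ref{USSSB3}), with the only extra work being to "additionally check the conditions on hp-bisimulation". Your blind attempt reproduces exactly this template --- induction on the depth of inference with the case split of Theorem \ref{USSSB3} and Lemma \ref{LUS3} in the variable case --- and in fact spells out the posetal/isomorphism bookkeeping more explicitly than the paper does.
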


\begin{proof}
Similarly to the proof of unique solution of equations for strongly probabilistic pomset bisimulation (see Theorem \ref{USSPB3}), we can prove that the unique solution of equations
holds for strongly probabilistic hp-bisimulation, we just need additionally to check the above conditions on hp-bisimulation, we omit it.
\end{proof}

\begin{theorem}[Unique solution of equations for strongly probabilistic hhp-bisimulation]\label{USSHHPB3}
Let the recursive expressions $E_i(i\in I)$ contain at most the variables $X_i(i\in I)$, and let each $X_j(j\in I)$ be weakly guarded in each $E_i$. Then,

If $\widetilde{P}\sim_{phhp} \widetilde{E}\{\widetilde{P}/\widetilde{X}\}$ and $\widetilde{Q}\sim_{phhp} \widetilde{E}\{\widetilde{Q}/\widetilde{X}\}$, then
$\widetilde{P}\sim_{phhp} \widetilde{Q}$.
\end{theorem}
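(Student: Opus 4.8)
The plan is to follow the same strategy as the three preceding unique-solution theorems (Theorems \ref{USSSB3}, \ref{USSPB3} and \ref{USSHPB3}) and reduce the hhp-case to the hp-case. Recall from Definition \ref{PHHPB} that a strongly probabilistic hhp-bisimulation is precisely a \emph{downward closed} strongly probabilistic hp-bisimulation. Hence it suffices to take the strongly probabilistic hp-bisimulation relating $\widetilde{P}$ and $\widetilde{Q}$ that is produced by the proof of Theorem \ref{USSHPB3}, and to verify in addition that this relation can be chosen downward closed.

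First I would recall the engine of the argument. As in Theorem \ref{USSSB3}, the proof proceeds by induction on the depth of inference of a combined probabilistic-and-action transition $E\{\widetilde{P}/\widetilde{X}\}\rightsquigarrow\xrightarrow{e_1}P'$, using Lemma \ref{LUS3} to guarantee that, because every $X_j$ is weakly guarded in each $E_i$, the target has the syntactic shape $P'\equiv E'\{\widetilde{P}/\widetilde{X}\}$ and that the matching move $E\{\widetilde{Q}/\widetilde{X}\}\rightsquigarrow\xrightarrow{e_1}E'\{\widetilde{Q}/\widetilde{X}\}$ exists. The candidate posetal relation is $R=\{(E\{\widetilde{P}/\widetilde{X}\},f,E\{\widetilde{Q}/\widetilde{X}\}) : \widetilde{X}\text{ weakly guarded in }E,\ f\text{ the induced order-isomorphism}\}$, equipped with the probabilistic matching $\xrsquigarrow{\pi}$ supplied by clause (2) of Definition \ref{PHHPB}. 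The case analysis on the last rule applied mirrors the cases of Lemma \ref{LUS3} (prefix, parallel prefix, $+$, $\boxplus_{\pi}$, $\parallel$ with its four subcases, relabelling, restriction, and constants); in each case the updated isomorphism $f[e_1\mapsto e_2]$ keeps the pair inside $R$, the probabilistic clause is discharged by matching $\xrsquigarrow{\pi}$ on both sides, and the PDF condition $\mu(C_1,C)=\mu(C_2,C)$ and the termination clause $[\surd]_R=\{\surd\}$ are inherited exactly as in the hp-case.

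The genuinely new step is to check downward closure of $R$. I would argue that whenever $(C_1,f,C_2)\subseteq(C_1',f',C_2')$ pointwise with $(C_1',f',C_2')\in R$, the smaller triple is again of the form $(E''\{\widetilde{P}/\widetilde{X}\},f\!\upharpoonright C_1,E''\{\widetilde{Q}/\widetilde{X}\})$: a sub-configuration $C_1$ of a reachable configuration $C_1'$ of $E\{\widetilde{P}/\widetilde{X}\}$ is itself reachable by a sub-derivation, and weak guardedness together with Lemma \ref{LUS3} forces the \emph{same} residual expression $E''$ to describe the corresponding sub-configuration $C_2$ of $E\{\widetilde{Q}/\widetilde{X}\}$, so the restricted isomorphism still matches the two sides. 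Thus $R$ is downward closed, and by Definition \ref{PHHPB} it is a strongly probabilistic hhp-bisimulation with $(\emptyset,\emptyset,\emptyset)\in R$, giving $\widetilde{P}\sim_{phhp}\widetilde{Q}$.

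The main obstacle will be the interaction of the downward-closure requirement with the probabilistic branching $\xrsquigarrow{\pi}$ and the order-isomorphism data simultaneously: one must ensure that passing to a sub-poset breaks neither the probabilistic matching of clause (2) nor the measure equality of clause (3), and that the sub-derivation extracted for $C_1$ really induces, via the same residual expression $E''$, a legitimate sub-derivation for $C_2$. Because weak guardedness guarantees the tight syntactic correspondence between the two instantiations at every reachable (and hence every sub-) configuration, I expect this to go through as in the hp-case; but it is exactly the point where the argument genuinely uses hereditariness rather than merely quoting Theorem \ref{USSHPB3}.
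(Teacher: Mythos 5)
Your proposal follows the same route as the paper: the paper's own proof is a one-line reduction to Theorem \ref{USSHPB3}, noting that the hhp-case follows "similarly" since a strongly probabilistic hhp-bisimulation is a downward closed strongly probabilistic hp-bisimulation. Your write-up is in fact more detailed than the paper's, since you explicitly identify and argue the downward-closure verification (the only genuinely new obligation), which the paper omits entirely.
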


\begin{proof}
Similarly to the proof of unique solution of equations for strongly probabilistic hp-bisimulation (see Theorem \ref{USSHPB3}), we can prove that the unique solution of equations holds
for strongly probabilistic hhp-bisimulation, we omit it.
\end{proof}

\subsection{Weakly Probabilistic Truly Concurrent Bisimulations}\label{wtcb3}

The weak probabilistic transition rules of CPTC are the same as the strong one in Table \ref{PTRForCTC3}. And the weak action transition rules of CPTC are listed in Table
\ref{WTRForCTC3}.

\begin{center}
    \begin{table}
        \[\textbf{WAct}_1\quad \frac{}{\alpha.P\xRightarrow{\alpha}P}\]

        \[\textbf{WSum}_1\quad \frac{P\xRightarrow{\alpha}P'}{P+Q\xRightarrow{\alpha}P'}\]

        \[\textbf{WCom}_1\quad \frac{P\xRightarrow{\alpha}P'\quad Q\nrightarrow}{P\parallel Q\xRightarrow{\alpha}P'\parallel Q}\]

        \[\textbf{WCom}_2\quad \frac{Q\xRightarrow{\alpha}Q'\quad P\nrightarrow}{P\parallel Q\xRightarrow{\alpha}P\parallel Q'}\]

        \[\textbf{WCom}_3\quad \frac{P\xRightarrow{\alpha}P'\quad Q\xRightarrow{\beta}Q'}{P\parallel Q\xRightarrow{\{\alpha,\beta\}}P'\parallel Q'}\quad (\beta\neq\overline{\alpha})\]

        \[\textbf{WCom}_4\quad \frac{P\xRightarrow{l}P'\quad Q\xRightarrow{\overline{l}}Q'}{P\parallel Q\xRightarrow{\tau}P'\parallel Q'}\]

        \[\textbf{WAct}_2\quad \frac{}{(\alpha_1\parallel\cdots\parallel\alpha_n).P\xRightarrow{\{\alpha_1,\cdots,\alpha_n\}}P}\quad (\alpha_i\neq\overline{\alpha_j}\quad i,j\in\{1,\cdots,n\})\]

        \[\textbf{WSum}_2\quad \frac{P\xRightarrow{\{\alpha_1,\cdots,\alpha_n\}}P'}{P+Q\xRightarrow{\{\alpha_1,\cdots,\alpha_n\}}P'}\]

        \[\textbf{WRes}_1\quad \frac{P\xRightarrow{\alpha}P'}{P\setminus L\xRightarrow{\alpha}P'\setminus L}\quad (\alpha,\overline{\alpha}\notin L)\]

        \[\textbf{WRes}_2\quad \frac{P\xRightarrow{\{\alpha_1,\cdots,\alpha_n\}}P'}{P\setminus L\xRightarrow{\{\alpha_1,\cdots,\alpha_n\}}P'\setminus L}\quad (\alpha_1,\overline{\alpha_1},\cdots,\alpha_n,\overline{\alpha_n}\notin L)\]

        \[\textbf{WRel}_1\quad \frac{P\xRightarrow{\alpha}P'}{P[f]\xRightarrow{f(\alpha)}P'[f]}\]

        \[\textbf{WRel}_2\quad \frac{P\xRightarrow{\{\alpha_1,\cdots,\alpha_n\}}P'}{P[f]\xRightarrow{\{f(\alpha_1),\cdots,f(\alpha_n)\}}P'[f]}\]

        \[\textbf{WCon}_1\quad\frac{P\xRightarrow{\alpha}P'}{A\xRightarrow{\alpha}P'}\quad (A\overset{\text{def}}{=}P)\]

        \[\textbf{WCon}_2\quad\frac{P\xRightarrow{\{\alpha_1,\cdots,\alpha_n\}}P'}{A\xRightarrow{\{\alpha_1,\cdots,\alpha_n\}}P'}\quad (A\overset{\text{def}}{=}P)\]
        \caption{Weak action transition rules of CPTC}
        \label{WTRForCTC3}
    \end{table}
\end{center}

\subsubsection{Laws and Congruence}

Remembering that $\tau$ can neither be restricted nor relabeled, we know that the monoid laws, the monoid laws 2, the static laws and the expansion law in
section \ref{stcb3} still hold with respect to the corresponding weakly probabilistic truly concurrent bisimulations. And also, we can enjoy the full congruence of Prefix, Summation, Composition,
Restriction, Relabelling and Constants with respect to corresponding weakly probabilistic truly concurrent bisimulations. We will not retype these laws, and just give the
$\tau$-specific laws.

\begin{proposition}[$\tau$ laws for weakly probabilistic pomset bisimulation]\label{TAUWPB3}
The $\tau$ laws for weakly probabilistic pomset bisimulation is as follows.

\begin{enumerate}
  \item $P\approx_{pp} \tau.P$;
  \item $\alpha.\tau.P\approx_{pp} \alpha.P$;
  \item $(\alpha_1\parallel\cdots\parallel\alpha_n).\tau.P\approx_{pp} (\alpha_1\parallel\cdots\parallel\alpha_n).P$;
  \item $P+\tau.P\approx_{pp} \tau.P$;
  \item $P\cdot((Q+\tau\cdot(Q+R))\boxplus_{\pi}S)\approx_{pp}P\cdot((Q+R)\boxplus_{\pi}S)$
  \item $P\approx_{pp} \tau\parallel P$.
\end{enumerate}
\end{proposition}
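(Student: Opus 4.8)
The plan is to treat all six equations by the same method: to establish $L\approx_{pp}R$ I will exhibit an explicit weakly probabilistic pomset bisimulation $\mathcal{R}\subseteq\mathcal{C}(\mathcal{E}_L)\times\mathcal{C}(\mathcal{E}_R)$ containing the initial pair $(\emptyset,\emptyset)$ and verify the four clauses of Definition \ref{WPSB}. The structural reason every equation holds is that the weak pomset transition $\xRightarrow{X}$ absorbs arbitrary $\tau^*$ before and after the visible part, and that a pomset label $X$ ranges only over $\hat{\mathbb{E}}=\mathbb{E}\setminus\{\tau\}$, so a silent event never appears in a transition label. Hence each $\tau$-prefix on one side can be matched by ``doing nothing visible'' on the other, while clauses (2) and (3) track the probabilistic branching and the PDF $\mu$ separately.

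For the easy laws (1)--(4) and (6) the only probabilistic transitions $\rightsquigarrow$ present are the trivial probability-one transitions generated by $\textbf{PAct}_1$, $\textbf{PAct}_2$, $\textbf{PCom}$ and the structural rules; consequently clauses (2) and (3) are discharged at once (every $\rightsquigarrow$ has weight $1$ and $\mu$ is concentrated), and the content is clause (1). For (1) I take $\mathcal{R}=\{(\tau.P,P),(\breve{\tau}.P,P)\}\cup Id$: a weak move $\tau.P\xRightarrow{X}C'$ must swallow the leading $\tau$ and so coincides with a weak move of $P$, and conversely every weak move of $P$ is already one of $\tau.P$. Laws (2),(3),(4) are handled analogously, inserting the new pair just after the guarding prefix(es); for (6) I use $\mathcal{R}=\{(\tau\parallel P,P)\}\cup Id$, the concurrent silent event being invisible to $\xRightarrow{X}$ whether it fires before, after, or alongside the moves of $P$.

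The main obstacle is law (5), the probabilistic analogue of the branching law $B2$ of Table \ref{AxiomsForTauLeft}, where the silent step sits \emph{inside} a probabilistic choice. I would first let both sides perform the common prefix $P$ (related by the identity), reaching the pair $(U_L,U_R)$ with $U_L=(Q+\tau\cdot(Q+R))\boxplus_\pi S$ and $U_R=(Q+R)\boxplus_\pi S$. The delicate point is that the choice must be resolved in lockstep: clause (2) forces the $\pi$-branch on the left to be matched by the $\pi$-branch on the right and the $(1-\pi)$-branch $S$ by itself, and clause (3) then holds because the induced cumulative PDF assigns the same weights $\pi$ and $1-\pi$ to the corresponding $\mathcal{R}$-classes. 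Inside the $\pi$-branch I relate $Q+\tau\cdot(Q+R)$ with $Q+R$: a visible move of $Q$ is matched directly, while the left-hand $\tau$ leading to $Q+R$ is absorbed by $\xRightarrow{}$ on the right (which simply stays at $Q+R$), and -- this is exactly the branching condition -- since the summand $Q$ is still available both before and after that $\tau$, no visible behaviour is lost or gained. Concretely I take
$$\mathcal{R}=Id\cup\{(P\cdot U_L,\, P\cdot U_R),\,(U_L,\, U_R),\,(Q+\tau\cdot(Q+R),\, Q+R)\},$$
closed under the descendant identity, and check the four clauses on it.

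Finally, clause (4), $[\surd]_{\mathcal{R}}=\{\surd\}$, holds by construction in every case, since none of the seed pairs ever relates the terminated state $\surd$ to a non-terminated process. Throughout I may invoke the congruence results already established (Theorem \ref{CSPB3} and its weak counterpart) to reduce each equation to its outermost operator and to justify that the identity closure of $\mathcal{R}$ is again a bisimulation on common descendants. The one genuinely delicate verification is the simultaneous satisfaction, in law (5), of the branching-style weak matching of clause (1) and the exact probability matching of clauses (2)--(3); once the $\pi$- and $(1-\pi)$-branches are paired correctly this collapses, inside each branch, to the standard $B2$ argument.
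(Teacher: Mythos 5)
Your proposal is correct and follows essentially the same route as the paper: the paper's entire proof of Proposition \ref{TAUWPB3} is the one-line remark that the equations follow directly from the definition of weakly probabilistic pomset bisimulation, with the details omitted. You are simply carrying out that omitted verification by exhibiting the witnessing relations explicitly (including the only nontrivial case, law (5), where the $\tau$ sits inside the probabilistic choice), so there is no divergence in method, only in the level of detail supplied.
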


\begin{proof}
According to the definition of weakly probabilistic pomset bisimulation, we can easily prove the above equations, and we omit the proof.
\end{proof}

\begin{proposition}[$\tau$ laws for weakly probabilistic step bisimulation]\label{TAUWSB3}
The $\tau$ laws for weakly probabilistic step bisimulation is as follows.

\begin{enumerate}
  \item $P\approx_{ps} \tau.P$;
  \item $\alpha.\tau.P\approx_{ps} \alpha.P$;
  \item $(\alpha_1\parallel\cdots\parallel\alpha_n).\tau.P\approx_{ps} (\alpha_1\parallel\cdots\parallel\alpha_n).P$;
  \item $P+\tau.P\approx_{ps} \tau.P$;
  \item $P\cdot((Q+\tau\cdot(Q+R))\boxplus_{\pi}S)\approx_{ps}P\cdot((Q+R)\boxplus_{\pi}S)$
  \item $P\approx_{ps} \tau\parallel P$.
\end{enumerate}
\end{proposition}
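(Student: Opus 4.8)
The plan is to establish each of the six equations by exhibiting an explicit witnessing relation and checking the four clauses of Definition \ref{WPSB}: the weak step transfer, the matching of probabilistic transitions $\rightsquigarrow$, the agreement of the cumulative probability distribution $\mu$ on the $R$-classes, and $[\surd]_R=\{\surd\}$. Throughout I would work with the weak action transition rules of Table \ref{WTRForCTC3} together with the probabilistic rules of Table \ref{PTRForCTC3}, exploiting that the only source of genuine probabilistic branching is the box-sum $\boxplus_{\pi}$: a leading or trailing $\tau$, or a $\tau$-prefix, contributes nothing to $\mu$ and can be absorbed by the weak arrow $\xRightarrow{}$.

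For the action-level laws (1)--(4) and (6) I would take the witnessing relation to be the identity relation $Id$ augmented with the single pair in question and closed under reachable derivatives. For instance, for (2) I would use $R=\{(\alpha.\tau.P,\ \alpha.P)\}\cup Id$: the only weak step either side can perform is $\xRightarrow{\alpha}$, after which $\alpha.\tau.P$ reaches $\tau.P$ and $\alpha.P$ reaches $P$, and since $\xRightarrow{}$ absorbs the residual $\tau$ these are $R$-related, using law (1) to close $\tau.P\ R\ P$. The parallel-prefix case (3) is identical with the step $\{\alpha_1,\cdots,\alpha_n\}$ in place of the single label, the pomset being a single antichain; (1), (4) and (6) are the standard absorption, idempotence and unit facts, the leading $\tau$ being swallowed by $\xRightarrow{}$ in each case. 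For all of these, clause (2) is trivial because no $\boxplus_{\pi}$ is exposed before the action fires, clause (3) holds since the relevant $\mu$-values reduce to those on the diagonal, and clause (4) is immediate as none of these equations identifies $\surd$ with a proper process.

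The genuinely delicate case is (5), $P\cdot((Q+\tau\cdot(Q+R))\boxplus_{\pi}S)\approx_{ps}P\cdot((Q+R)\boxplus_{\pi}S)$, where the silent step sits inside a summand that is itself a branch of the box-sum. Here I would first let the common prefix $P$ run, staying on the diagonal of $R$, and then analyse the residual $((Q+\tau\cdot(Q+R))\boxplus_{\pi}S)$ against $((Q+R)\boxplus_{\pi}S)$. The point is that the probabilistic transition $\rightsquigarrow$ resolves the box-sum \emph{before} any action, selecting the left branch or $S$ with the same weights on both sides (clauses (2) and (3)); once the left branch is chosen I must show $Q+\tau\cdot(Q+R)$ and $Q+R$ are weakly step-bisimilar in the branching sense built into clause (1), the inner $\tau\cdot(Q+R)$ being matched by idling and then performing the step of $Q+R$. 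The main obstacle is carrying the weak-step transfer and the probability bookkeeping \emph{simultaneously}: I must arrange $R$ so that after the probabilistic choice both $(Q+\tau\cdot(Q+R),\ Q+R)$ and $(S,S)$ lie in $R$ while $\mu$ still agrees on every $R$-class, which forces the class of the $S$-branch and that of the left branch to be kept separate yet assigned matching cumulative weights. Once $R$ is set up so that these two requirements are compatible, clause (1) reduces to the ordinary branching $\tau$-law verification and clauses (2)--(4) follow from the weight preservation just described. Finally, I note that the pomset analogue is already recorded as Proposition \ref{TAUWPB3}, so the step argument transcribes to it \emph{mutatis mutandis} by reading single events as singleton steps.
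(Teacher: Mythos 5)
Your proposal is correct and follows essentially the same route as the paper, which simply states that the equations follow "according to the definition of weakly probabilistic step bisimulation" and omits all details; your explicit construction of witnessing relations and clause-by-clause verification (including the careful treatment of law (5), where the $\tau$ sits inside a branch of $\boxplus_{\pi}$) is a faithful elaboration of exactly that definitional check. Nothing in your plan diverges from or conflicts with what the paper intends.
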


\begin{proof}
According to the definition of weakly probabilistic step bisimulation, we can easily prove the above equations, and we omit the proof.
\end{proof}

\begin{proposition}[$\tau$ laws for weakly probabilistic hp-bisimulation]\label{TAUWHB3}
The $\tau$ laws for weakly probabilistic hp-bisimulation is as follows.

\begin{enumerate}
  \item $P\approx_{php} \tau.P$;
  \item $\alpha.\tau.P\approx_{php} \alpha.P$;
  \item $(\alpha_1\parallel\cdots\parallel\alpha_n).\tau.P\approx_{php} (\alpha_1\parallel\cdots\parallel\alpha_n).P$;
  \item $P+\tau.P\approx_{php} \tau.P$;
  \item $P\cdot((Q+\tau\cdot(Q+R))\boxplus_{\pi}S)\approx_{php}P\cdot((Q+R)\boxplus_{\pi}S)$
  \item $P\approx_{php} \tau\parallel P$.
\end{enumerate}
\end{proposition}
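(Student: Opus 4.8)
The plan is to prove each of the six equations by exhibiting, for each, an explicit weakly posetal relation and checking that it satisfies all four clauses of Definition \ref{WHHPB}. Throughout I work with the weak action transitions $\xRightarrow{\alpha}$ of Table \ref{WTRForCTC3} together with the probabilistic transitions $\rightsquigarrow$ of Table \ref{PTRForCTC3}, and I carry along the configuration isomorphism $f$, extending it by $f[e_1\mapsto e_2]$ at every matched action step exactly as demanded in clause (1); silent events are absorbed into weak transitions and, where they must be recorded, are mapped via $f[e\mapsto\tau]$ so that they contribute nothing to the visible poset. Since $\tau$ can neither be restricted nor relabelled, and since the congruence of prefix, summation and composition for $\approx_{php}$ has already been noted at the start of this subsection, it suffices to build the relation locally at the point where the $\tau$ is introduced and then close it up.

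First I would dispatch the pure $\tau$-absorption laws (1)--(4) and (6) with a single template: take $R$ to be the weakly posetal closure of the pairing induced by the evident syntactic correspondence, together with the identity on all reachable configurations. For law (1), $P\approx_{php}\tau.P$, the crucial observation is that $\tau.P$ first makes the probabilistic move $\tau.P\rightsquigarrow\breve{\tau}.P$ and then the silent step $\breve{\tau}.P\xrightarrow{\tau}P$; hence any weak move $P\xRightarrow{e_1}C_1'$ is matched by $\tau.P$ folding the leading $\tau$ into its own $\xRightarrow{e_2}$, and conversely. Clause (2) holds because the initial $\rightsquigarrow$ of $\tau.P$ is matched by the trivial probabilistic behaviour of $P$; clauses (3) and (4) hold because no genuine probabilistic branching is created and the $\surd$-class stays a singleton. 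Laws (2), (3), (4) and (6) follow by the same device, using $B1$-- and $B3$-style reasoning: in (6) the concurrent $\tau$ in $\tau\parallel P$ is absorbed by the weak step just as the prefixed $\tau$ is in (1).

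The genuinely delicate case is law (5), $P\cdot((Q+\tau\cdot(Q+R))\boxplus_{\pi}S)\approx_{php}P\cdot((Q+R)\boxplus_{\pi}S)$, the probabilistic and truly-concurrent analogue of the second branching law. Here I would first run the common prefix $P$, using the congruence of prefixing for $\approx_{php}$ to reduce to comparing the two $\boxplus_{\pi}$-operators directly, and then argue that $Q+\tau\cdot(Q+R)$ is weakly hp-bisimilar to $Q+R$: every visible move of $Q+R$ is already available on the left, while the extra branch $\tau\cdot(Q+R)$ is absorbed into a weak transition that re-enters the same residual $Q+R$, so neither new observable behaviour nor new probabilistic choice is produced. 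I would construct $R$ so that the $\pi$-weighted branch is paired with the $\pi$-weighted branch and the $(1-\pi)$-branch $S$ with itself.

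The main obstacle, and where I expect to spend the most care, is reconciling clause (3) --- the cumulative distribution condition $\mu(C_1,C)=\mu(C_2,C)$ for every class $C\in\mathcal{C}(\mathcal{E})/R$ --- with the posetal isomorphism constraint in case (5). Because the left-hand side nests an internal $\tau$-choice beneath $\boxplus_{\pi}$, I must choose the classes of $R$ so that the $\tau$-successor of $\tau\cdot(Q+R)$ is collapsed with $(Q+R)$ itself, forcing the probabilistic weights on the two sides to aggregate to the same value on each class; simultaneously the isomorphism $f$ must remain identical on both sides (the absorbed $\tau$-event mapping to $\tau$ and leaving the visible poset unchanged), which is precisely the extra hp-verification that is absent from the corresponding step- and pomset-level statements (Propositions \ref{TAUWSB3} and \ref{TAUWPB3}). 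Once clause (3) is shown compatible with $f$ in this case, the remaining clauses of Definition \ref{WHHPB} and all of laws (1)--(4) and (6) follow routinely from the template above.
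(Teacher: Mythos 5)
Your proposal follows exactly the route the paper intends: the paper's own proof simply states that the equations follow directly from the definition of weakly probabilistic hp-bisimulation and omits the details, which is precisely the verification-by-explicit-relation argument you spell out. Your elaboration (including the care taken with clause (3) and the posetal isomorphism in law (5)) is a sound filling-in of that omitted routine check, so the approaches coincide.
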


\begin{proof}
According to the definition of weakly probabilistic hp-bisimulation, we can easily prove the above equations, and we omit the proof.
\end{proof}

\begin{proposition}[$\tau$ laws for weakly probabilistic hhp-bisimulation]\label{TAUWHHPB3}
The $\tau$ laws for weakly probabilistic hhp-bisimulation is as follows.

\begin{enumerate}
  \item $P\approx_{phhp} \tau.P$;
  \item $\alpha.\tau.P\approx_{phhp} \alpha.P$;
  \item $(\alpha_1\parallel\cdots\parallel\alpha_n).\tau.P\approx_{phhp} (\alpha_1\parallel\cdots\parallel\alpha_n).P$;
  \item $P+\tau.P\approx_{phhp} \tau.P$;
  \item $P\cdot((Q+\tau\cdot(Q+R))\boxplus_{\pi}S)\approx_{phhp}P\cdot((Q+R)\boxplus_{\pi}S)$
  \item $P\approx_{phhp} \tau\parallel P$.
\end{enumerate}
\end{proposition}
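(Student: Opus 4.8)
The plan is to exploit the fact, recorded in Definition \ref{WHHPB}, that a weakly probabilistic hhp-bisimulation is precisely a \emph{downward closed} weakly probabilistic hp-bisimulation. Since all six equations have already been established for weakly probabilistic hp-bisimulation in Proposition \ref{TAUWHB3}, the task reduces to showing that, for each law, the witnessing weakly probabilistic hp-bisimulation can be chosen downward closed. Concretely, for each equation I would take the weakly posetal relation $R\subseteq\mathcal{C}(\mathcal{E}_1)\overline{\times}\mathcal{C}(\mathcal{E}_2)$ produced in the proof of Proposition \ref{TAUWHB3}, form its downward closure $\overline{R}$, and verify that $\overline{R}$ still satisfies clauses (1)--(4) of Definition \ref{WHHPB}, with $(\emptyset,\emptyset,\emptyset)\in\overline{R}$.

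First I would describe the witnessing relations. In each law the two sides differ only by the insertion or deletion of silent $\tau$-events. Because the weakly posetal product is defined on the $\tau$-stripped configurations $\hat{C}$, and the weak transition relation $\xRightarrow{}$ absorbs $\tau^*$, the order-isomorphism $f:\hat{C_1}\to\hat{C_2}$ carried in each triple is unaffected by these $\tau$-events. I would therefore let $R$ relate configurations reached by matching weak moves, carrying the evident isomorphism on observable events, and check the probabilistic matching of clause (2), the cumulative-probability condition $\mu(C_1,C)=\mu(C_2,C)$ of clause (3), and the termination condition $[\surd]_R=\{\surd\}$ of clause (4). All of these are inherited essentially verbatim from the hp-bisimulation argument, since the $\rightsquigarrow$-behaviour and the $\mu$-values of the two sides agree once $\tau$ is abstracted away.

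The substantive step is downward closure. Given $(C_1,f,C_2)\in R$ and a pointwise-smaller triple $(C_1',f',C_2')$ with $C_1'\subseteq C_1$, $C_2'\subseteq C_2$ and $f'=f\upharpoonright\hat{C_1'}$, I must show $(C_1',f',C_2')\in R$. For laws (1)--(4) and (6) this is routine: the observable events on the two sides stand in an order-isomorphism that restricts to any downward-closed subconfiguration, and the probabilistic branching is either trivial (laws (1)--(3),(6)) or identical on both sides (law (4)), so the restricted triple is again of the form produced by matching weak moves. I would confirm this by a short inductive argument on the number of events removed.

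The main obstacle is law (5), $P\cdot((Q+\tau\cdot(Q+R))\boxplus_{\pi}S)\approx_{phhp}P\cdot((Q+R)\boxplus_{\pi}S)$. This is the genuinely branching law, combining sequential composition, the probabilistic box-sum $\boxplus_{\pi}$, and a $\tau$-guarded alternative. Here downward closure interacts non-trivially with the probabilistic choice: after the probabilistic transition selecting the $\boxplus_{\pi}$-summand and the $\tau$-step into $Q+R$, I must ensure that every smaller configuration which has already committed to the choice but has been stripped of later observable events remains related on both sides, and that the condition $\mu(C_1,C)=\mu(C_2,C)$ survives this restriction. Because hhp-bisimilarity is strictly finer than hp-bisimilarity precisely on branching behaviour, I cannot simply invoke Proposition \ref{TAUWHB3} here; I expect most of the effort to go into checking that the $\tau$-prefixed copy of $Q+R$ and the bare $Q+R$ generate the same downward-closed families of observable pomsets with identical $\mu$-values, and that the weak action-matching clause (1) of Definition \ref{WHHPB} holds for each restricted triple.
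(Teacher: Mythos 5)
Your proposal is sound and is essentially the same route the paper takes, to the extent the paper records one at all: the paper's proof of this proposition consists only of the remark that the equations follow ``according to the definition of weakly probabilistic hhp-bisimulation,'' and where the paper does sketch hhp-level arguments elsewhere (e.g.\ Proposition \ref{NELSHHPB3}) it invokes precisely your idea of obtaining the hhp-witness as a downward-closed version of the hp-relation from the corresponding hp-law (here Proposition \ref{TAUWHB3}). Your additional care in checking that the downward closure still satisfies clauses (1)--(4) of Definition \ref{WHHPB}, and your isolation of law (5) as the only case where the probabilistic and $\tau$-branching structure makes this nontrivial, go beyond anything the paper writes down but are consistent with its intended argument.
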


\begin{proof}
According to the definition of weakly probabilistic hhp-bisimulation, we can easily prove the above equations, and we omit the proof.
\end{proof}

\subsubsection{Recursion}

\begin{definition}[Sequential]
$X$ is sequential in $E$ if every subexpression of $E$ which contains $X$, apart from $X$ itself, is of the form $\alpha.F$, or $(\alpha_1\parallel\cdots\parallel\alpha_n).F$, or
$\sum\widetilde{F}$.
\end{definition}

\begin{definition}[Guarded recursive expression]
$X$ is guarded in $E$ if each occurrence of $X$ is with some subexpression $l.F$ or $(l_1\parallel\cdots\parallel l_n).F$ of $E$.
\end{definition}

\begin{lemma}\label{LUSWW3}
Let $G$ be guarded and sequential, $Vars(G)\subseteq\widetilde{X}$, and let $G\{\widetilde{P}/\widetilde{X}\}\rightsquigarrow\xrightarrow{\{\alpha_1,\cdots,\alpha_n\}}P'$. Then there is an expression
$H$ such that $G\rightsquigarrow\xrightarrow{\{\alpha_1,\cdots,\alpha_n\}}H$, $P'\equiv H\{\widetilde{P}/\widetilde{X}\}$, and for any $\widetilde{Q}$,
$G\{\widetilde{Q}/\widetilde{X}\}\rightsquigarrow\xrightarrow{\{\alpha_1,\cdots,\alpha_n\}} H\{\widetilde{Q}/\widetilde{X}\}$. Moreover $H$ is sequential, $Vars(H)\subseteq\widetilde{X}$, and if
$\alpha_1=\cdots=\alpha_n=\tau$, then $H$ is also guarded.
\end{lemma}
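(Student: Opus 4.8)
The plan is to argue by induction on the depth of the inference of $G\{\widetilde{P}/\widetilde{X}\}\rightsquigarrow\xrightarrow{\{\alpha_1,\cdots,\alpha_n\}}P'$, paralleling the case analysis in the proof of Lemma \ref{LUS3}, but additionally carrying along the three new invariants: that $H$ is sequential, that $Vars(H)\subseteq\widetilde{X}$, and that $H$ is guarded whenever $\alpha_1=\cdots=\alpha_n=\tau$. The decisive simplification over Lemma \ref{LUS3} is that sequentiality forbids any subexpression containing a variable from being a Composition $P\parallel Q$, a Restriction $P\setminus L$, or a Relabelling $P[f]$; hence the genuinely delicate cases of Lemma \ref{LUS3} (its four $\parallel$-subcases) do not arise here, and those operators can occur only inside variable-free subterms.

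First I would dispose of the closed case: if $G$ contains no variable from $\widetilde{X}$, then $G\{\widetilde{P}/\widetilde{X}\}\equiv G$, so I take $H\equiv P'$. Since $P'$ is then variable-free, $H\{\widetilde{Q}/\widetilde{X}\}\equiv P'$ for every $\widetilde{Q}$, the required transition is immediate, and sequentiality, $Vars(H)=\emptyset\subseteq\widetilde{X}$, and guardedness of $H$ all hold vacuously. This absorbs the Box-Summation, Composition, Restriction, Relabelling and Constant cases, which by sequentiality (and by the disjointness of variables from constants) carry no variable from $\widetilde{X}$.

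The substantive cases are those in which $G$ does contain a variable, which by guardedness and sequentiality must be a Prefix $\beta.F$, a parallel Prefix $(\beta_1\parallel\cdots\parallel\beta_n).F$, or a Summation $G_1+G_2$. For a Prefix I use $\textbf{PAct}_1$ followed by $\textbf{Act}_1$: the transition forces $n=1$, $\alpha_1=\beta$ and $P'\equiv F\{\widetilde{P}/\widetilde{X}\}$, so $H\equiv F$ works, and the for-all-$\widetilde{Q}$ claim is just a further instance of the same two rules; the parallel Prefix is identical via $\textbf{PAct}_2$ and $\textbf{Act}_2$. Sequentiality of $H\equiv F$ and $Vars(F)\subseteq Vars(G)\subseteq\widetilde{X}$ hold because every subexpression of a sequential expression is again sequential. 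For Summation I use $\textbf{PSum}$ to resolve both probabilistic branches and then $\textbf{Sum}_{1,2}$ to select the branch carrying the action, reducing to a single summand $G_i$ (which is again guarded and sequential), applying the induction hypothesis to obtain $H$, and lifting the transition back through $\textbf{PSum}$ and $\textbf{Sum}$ exactly as in Lemma \ref{LUS3}; all three invariants transfer directly from the hypothesis.

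The step I expect to be the main obstacle is establishing guardedness of $H$ in the all-$\tau$ case, since this is the only place the guardedness hypothesis on $G$ is genuinely exploited. In the Prefix case with $\beta=\tau$ (respectively the parallel Prefix with every $\beta_i=\tau$), the head $\tau$-prefix is \emph{not} a guard, because a guard must be a visible label $l$ (respectively a tuple of labels $l_1,\cdots,l_n$), whereas here it is silent. Consequently every occurrence of every variable in $F$ must already be guarded by a prefix lying strictly within $F$, so $F$—and hence $H$—is guarded, as required. When instead some $\beta_i$ is a visible label, the action set $\{\alpha_1,\cdots,\alpha_n\}$ is not all-$\tau$, so the guardedness conclusion is not demanded; and in the Summation case the chosen summand carries exactly the same action set, so the all-$\tau$ hypothesis passes unchanged into the induction hypothesis, which then supplies a guarded $H$.
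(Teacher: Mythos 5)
Your proof is correct and follows essentially the same route as the paper's: induction over the structure of $G$ (equivalently, the depth of the inference), with the variable-free operators dispatched by taking $H\equiv P'$, the prefix and multi-prefix cases read off directly from $\textbf{PAct}_{1,2}$/$\textbf{Act}_{1,2}$, and Summation reduced to a single summand via $\textbf{PSum}$ and $\textbf{Sum}_{1,2}$; your explicit verification of sequentiality, of $Vars(H)\subseteq\widetilde{X}$, and of guardedness of $H$ in the all-$\tau$ case (exploiting that a silent head prefix is not a guard) is sound and in fact more careful than the paper, which leaves those ``Moreover'' clauses unchecked. The one divergence is Box-Summation: the paper treats $G\equiv G_1\boxplus_{\pi}G_2$ as a live inductive case, applying the lemma to whichever branch the probabilistic transition selects, whereas you argue sequentiality forces such a $G$ to be variable-free --- if the paper's $\sum\widetilde{F}$ in the definition of sequential is meant to cover $\boxplus_{\pi}$ as well, you would need to restore that case, but it is handled by exactly the same one-line inductive step you already use for Summation.
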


\begin{proof}
We need to induct on the structure of $G$.

If $G$ is a Constant, a Composition, a Restriction or a Relabeling then it contains no variables, since $G$ is sequential and guarded, then
$G\rightsquigarrow\xrightarrow{\{\alpha_1,\cdots,\alpha_n\}}P'$, then let $H\equiv P'$, as desired.

$G$ cannot be a variable, since it is guarded.

If $G\equiv G_1+G_2$. Then either $G_1\{\widetilde{P}/\widetilde{X}\} \rightsquigarrow\xrightarrow{\{\alpha_1,\cdots,\alpha_n\}}P'$ or $G_2\{\widetilde{P}/\widetilde{X}\} \rightsquigarrow
\xrightarrow{\{\alpha_1,\cdots,\alpha_n\}}P'$, then, we can apply this lemma in either case, as desired.

If $G\equiv G_1\boxplus_{\pi}G_2$. Then either $G_1\{\widetilde{P}/\widetilde{X}\} \rightsquigarrow P'$ or $G_2\{\widetilde{P}/\widetilde{X}\} \rightsquigarrow
P'$, then, we can apply this lemma in either case, as desired.

If $G\equiv\beta.H$. Then we must have $\alpha=\beta$, and $P'\equiv H\{\widetilde{P}/\widetilde{X}\}$, and
$G\{\widetilde{Q}/\widetilde{X}\}\equiv \beta.H\{\widetilde{Q}/\widetilde{X}\} \rightsquigarrow\xrightarrow{\beta}H\{\widetilde{Q}/\widetilde{X}\}$, then, let $G'$ be $H$, as desired.

If $G\equiv(\beta_1\parallel\cdots\parallel\beta_n).H$. Then we must have $\alpha_i=\beta_i$ for $1\leq i\leq n$, and $P'\equiv H\{\widetilde{P}/\widetilde{X}\}$, and
$G\{\widetilde{Q}/\widetilde{X}\}\equiv (\beta_1\parallel\cdots\parallel\beta_n).H\{\widetilde{Q}/\widetilde{X}\} \rightsquigarrow\xrightarrow{\{\beta_1,\cdots,\beta_n\}}H\{\widetilde{Q}/\widetilde{X}\}$,
then, let $G'$ be $H$, as desired.

If $G\equiv\tau.H$. Then we must have $\tau=\tau$, and $P'\equiv H\{\widetilde{P}/\widetilde{X}\}$, and $G\{\widetilde{Q}/\widetilde{X}\}\equiv \tau.H\{\widetilde{Q}/\widetilde{X}\}
\rightsquigarrow\xrightarrow{\tau}H\{\widetilde{Q}/\widetilde{X}\}$, then, let $G'$ be $H$, as desired.
\end{proof}

\begin{theorem}[Unique solution of equations for weakly probabilistic step bisimulation]\label{USWSB3}
Let the guarded and sequential expressions $\widetilde{E}$ contain free variables $\subseteq \widetilde{X}$, then,

If $\widetilde{P}\approx_{ps} \widetilde{E}\{\widetilde{P}/\widetilde{X}\}$ and $\widetilde{Q}\approx_{ps} \widetilde{E}\{\widetilde{Q}/\widetilde{X}\}$, then
$\widetilde{P}\approx_{ps} \widetilde{Q}$.
\end{theorem}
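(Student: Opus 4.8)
The plan is to follow the proof of the strong-bisimulation version, Theorem~\ref{USSSB3}, step for step, making three replacements: the action relation $\xrightarrow{}$ becomes the weak relation $\xRightarrow{}$, strong step bisimilarity $\sim_{ps}$ becomes weak step bisimilarity $\approx_{ps}$, and the appeal to Lemma~\ref{LUS3} becomes an appeal to the weak recursion Lemma~\ref{LUSWW3}. As in Theorem~\ref{USSSB3}, I would induct on the depth of the inference of a move $E\{\widetilde{P}/\widetilde{X}\}\rightsquigarrow\xRightarrow{\{\alpha_1,\cdots,\alpha_n\}}P'$ (and symmetrically for $\widetilde{Q}$), the goal of each case being to produce a matching weak move $E\{\widetilde{Q}/\widetilde{X}\}\rightsquigarrow\xRightarrow{\{\alpha_1,\cdots,\alpha_n\}}Q'$ whose target is related to $P'$, modulo $\approx_{ps}$, through a common guarded and sequential context. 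The probabilistic clauses $(2)$--$(4)$ and the condition $[\surd]_R=\{\surd\}$ of Definition~\ref{WPSB} are inherited verbatim, since the probabilistic rules of Table~\ref{PTRForCTC3} are shared by both sides and $\mu$ is constant along a common context.

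The decisive case is $E\equiv X_i$, which carries the whole content. Here $E\{\widetilde{P}/\widetilde{X}\}\equiv P_i$, and the hypothesis $P_i\approx_{ps}E_i\{\widetilde{P}/\widetilde{X}\}$ transfers a move $P_i\rightsquigarrow\xRightarrow{\{\alpha_1,\cdots,\alpha_n\}}P'$ to a matching weak move $E_i\{\widetilde{P}/\widetilde{X}\}\rightsquigarrow\xRightarrow{\{\alpha_1,\cdots,\alpha_n\}}P''$ with $P''\approx_{ps}P'$. Since $E_i$ is guarded and sequential with $Vars(E_i)\subseteq\widetilde{X}$, I would apply Lemma~\ref{LUSWW3} along the constituent single steps of this weak move to obtain an expression $H$ with $E_i\rightsquigarrow\xRightarrow{\{\alpha_1,\cdots,\alpha_n\}}H$, $P''\equiv H\{\widetilde{P}/\widetilde{X}\}$, and the uniform transfer $E_i\{\widetilde{Q}/\widetilde{X}\}\rightsquigarrow\xRightarrow{\{\alpha_1,\cdots,\alpha_n\}}H\{\widetilde{Q}/\widetilde{X}\}$ for every $\widetilde{Q}$. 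Feeding this through $Q_i\approx_{ps}E_i\{\widetilde{Q}/\widetilde{X}\}$ yields $Q_i\rightsquigarrow\xRightarrow{\{\alpha_1,\cdots,\alpha_n\}}Q'$ with $Q'\approx_{ps}H\{\widetilde{Q}/\widetilde{X}\}$, so that $P'$ and $Q'$ are related modulo $\approx_{ps}$ through the common context $H$, closing the clause. The remaining cases---$\alpha.F$, $(\alpha_1\parallel\cdots\parallel\alpha_n).F$, $E_1+E_2$, $E_1\boxplus_{\pi}E_2$, $E_1\parallel E_2$, $F[R]$, $F\setminus L$, and constants---reduce to subterms and go through exactly as in Theorem~\ref{USSSB3}.

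I expect the silent behaviour to be the main obstacle. A single weak step $\xRightarrow{\{\alpha_1,\cdots,\alpha_n\}}$ absorbs arbitrarily long $\tau$-prefixes and $\tau$-suffixes, so the recursion-unfolding step in the variable case must be iterated along an a priori unbounded chain of internal moves, and one must check that the guarding context survives this chain rather than dissolving into an unconstrained internal divergence. This is exactly the reason the theorem assumes guardedness rather than the weak guardedness that sufficed for the strong Theorem~\ref{USSSB3}: the final clause of Lemma~\ref{LUSWW3} guarantees that a purely silent residual---the case $\alpha_1=\cdots=\alpha_n=\tau$---is again guarded, so the matching can be propagated across a $\tau$-segment while keeping the derivatives inside the guarded and sequential family. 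The delicate bookkeeping is to set the inductive measure (inference depth together with the guarded and sequential invariant) so that each matched $\tau$-block strictly decreases it, and to confirm that matching up to $\approx_{ps}$ is sound here because $\approx_{ps}$ is an equivalence compatible with the weak congruence laws noted after Table~\ref{WTRForCTC3}. Once the silent case is settled, everything else is routine and I would omit it, just as the paper omits the corresponding pomset, hp- and hhp-arguments.
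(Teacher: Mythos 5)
Your proposal is correct and matches the paper's argument in all essentials: both exhibit the family of pairs $(H(P),H(Q))$ over guarded and sequential contexts $H$ as a weak step bisimulation up to $\approx_{ps}$, with Lemma~\ref{LUSWW3} supplying the common residual context and its final clause (a purely silent residual is again guarded) controlling the $\tau$-chains. The only difference is presentational: the paper reduces without loss of generality to a single equation $X=E$ and verifies the transfer property for the relation $\{(H(P),H(Q)):H \textrm{ sequential}\}$ in one uniform step through $H(E(\cdot))$, whereas you organize the same verification as a case analysis on the structure of $E$ in the style of Theorem~\ref{USSSB3}.
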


\begin{proof}
Like the corresponding theorem in CCS, without loss of generality, we only consider a single equation $X=E$. So we assume $P\approx_{ps} E(P)$, $Q\approx_{ps} E(Q)$, then $P\approx_{ps} Q$.

We will prove $\{(H(P),H(Q)): H\}$ sequential, if $H(P)\rightsquigarrow\xrightarrow{\{\alpha_1,\cdots,\alpha_n\}}P'$, then, for some $Q'$,
$H(Q)\rightsquigarrow\xRightarrow{\{\alpha_1.\cdots,\alpha_n\}}Q'$ and $P'\approx_{ps} Q'$.

Let $H(P)\rightsquigarrow\xrightarrow{\{\alpha_1,\cdot,\alpha_n\}}P'$, then $H(E(P))\rightsquigarrow\xRightarrow{\{\alpha_1,\cdots,\alpha_n\}}P''$ and $P'\approx_{ps} P''$.

By Lemma \ref{LUSWW3}, we know there is a sequential $H'$ such that $H(E(P))\rightsquigarrow\xRightarrow{\{\alpha_1,\cdots,\alpha_n\}}H'(P)\Rightarrow P''\approx_{ps} P'$.

And, $H(E(Q))\rightsquigarrow\xRightarrow{\{\alpha_1,\cdots,\alpha_n\}}H'(Q)\Rightarrow Q''$ and $P''\approx_{ps} Q''$. And $H(Q)\rightsquigarrow\xrightarrow{\{\alpha_1,\cdots,\alpha_n\}}Q'\approx_{ps} Q''$.
Hence, $P'\approx_{ps} Q'$, as desired.
\end{proof}

\begin{theorem}[Unique solution of equations for weakly probabilistic pomset bisimulation]\label{USWPB3}
Let the guarded and sequential expressions $\widetilde{E}$ contain free variables $\subseteq \widetilde{X}$, then,

If $\widetilde{P}\approx_{pp} \widetilde{E}\{\widetilde{P}/\widetilde{X}\}$ and $\widetilde{Q}\approx_{pp} \widetilde{E}\{\widetilde{Q}/\widetilde{X}\}$, then
$\widetilde{P}\approx_{pp} \widetilde{Q}$.
\end{theorem}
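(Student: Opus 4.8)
The plan is to follow almost verbatim the proof of the step-bisimulation version (Theorem \ref{USWSB3}), adapting the transition-matching argument from steps to pomsets. As in the CCS tradition, I would first reduce to a single recursive equation $X = E$, assuming $P \approx_{pp} E(P)$ and $Q \approx_{pp} E(Q)$ and aiming to show $P \approx_{pp} Q$; the general $I$-indexed system follows by the same bookkeeping. The candidate relation is $\mathcal{R} = \{(H(P), H(Q)) : H \textrm{ sequential}\}$, and I would argue that $\mathcal{R}$ is a weakly probabilistic pomset bisimulation up to $\approx_{pp}$ containing $(P,Q)$, since $P \equiv X(P)$ and $Q \equiv X(Q)$ with $X$ itself sequential.

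First I would handle the action clause. Suppose $H(P) \rightsquigarrow \xrightarrow{X} P'$ for a pomset label $X$. Using $P \approx_{pp} E(P)$ together with the congruence of $\approx_{pp}$, I push this to a matching weak pomset transition $H(E(P)) \rightsquigarrow \xRightarrow{X} P''$ with $P' \approx_{pp} P''$. Then I invoke Lemma \ref{LUSWW3}, in its pomset-labelled form, to extract a sequential expression $H'$ with $H(E(P)) \rightsquigarrow \xRightarrow{X} H'(P)$ and $P'' \approx_{pp} H'(P)$, and crucially the \emph{uniform} companion transition $H(E(Q)) \rightsquigarrow \xRightarrow{X} H'(Q)$ for the $Q$-substitution. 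Running the same chain through $Q \approx_{pp} E(Q)$ then yields $H(Q) \rightsquigarrow \xRightarrow{X'} Q'$ with $X \sim X'$ and $Q' \approx_{pp} H'(Q)$. Since $(H'(P), H'(Q)) \in \mathcal{R}$, the pair $(P', Q')$ is matched up to $\approx_{pp}$, and transitivity of pomset isomorphism $\sim$ closes the label matching.

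Next I would verify the remaining clauses of Definition \ref{WPSB}. The probabilistic-transition clause is inherited for free: transitions $\rightsquigarrow$ carry no action labels, so the pomset/step distinction is irrelevant and the argument is identical to the step case. Likewise the cumulative-measure condition $\mu(C_1,C) = \mu(C_2,C)$ and the termination clause $[\surd]_{\mathcal{R}} = \{\surd\}$ are structural and do not interact with the pomset structure of action labels, so they transfer unchanged.

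The main obstacle, and the only genuine difference from Theorem \ref{USWSB3}, is that Lemma \ref{LUSWW3} as stated tracks a step label $\{\alpha_1, \ldots, \alpha_n\}$, whereas the pomset clause requires preserving the full causality-and-concurrency structure recorded by $X \sim X'$. The real work is therefore to confirm that the inductive case analysis of Lemma \ref{LUSWW3}, in particular the parallel-composition case where causal order between events from distinct branches arises, produces a derivative $H'$ whose transition carries a pomset isomorphic to the original rather than merely the same underlying multiset of actions. Once this pomset-labelled refinement of Lemma \ref{LUSWW3} is in place, the transitivity of $\sim$ makes the matching go through and the rest is mechanical; I would thus expect to cite the step-bisimulation argument and remark only on this single strengthening, exactly as the paper does for its parallel results.
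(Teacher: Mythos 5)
Your proposal is correct and follows essentially the same route as the paper, which simply defers to the step-bisimulation argument of Theorem \ref{USWSB3}; you reduce to a single equation, use the relation $\{(H(P),H(Q)) : H \textrm{ sequential}\}$ together with Lemma \ref{LUSWW3}, and adapt the label matching from steps to pomsets. Your additional remarks on the probabilistic, measure, and termination clauses and on the pomset-labelled strengthening of Lemma \ref{LUSWW3} are exactly the points the paper leaves implicit.
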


\begin{proof}
Similarly to the proof of unique solution of equations for weakly probabilistic step bisimulation $\approx_{ps}$ (Theorem \ref{USWSB3}), we can prove that unique solution of equations
holds for weakly probabilistic pomset bisimulation $\approx_{pp}$, we omit it.
\end{proof}

\begin{theorem}[Unique solution of equations for weakly probabilistic hp-bisimulation]\label{USWHPB3}
Let the guarded and sequential expressions $\widetilde{E}$ contain free variables $\subseteq \widetilde{X}$, then,

If $\widetilde{P}\approx_{php} \widetilde{E}\{\widetilde{P}/\widetilde{X}\}$ and $\widetilde{Q}\approx_{php} \widetilde{E}\{\widetilde{Q}/\widetilde{X}\}$, then $\widetilde{P}\approx_{php} \widetilde{Q}$.
\end{theorem}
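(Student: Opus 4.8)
The plan is to follow the template already used for weakly probabilistic step bisimulation (Theorem~\ref{USWSB3}) and pomset bisimulation (Theorem~\ref{USWPB3}), but to carry along the order-isomorphism data that a history-preserving bisimulation demands. As in the CCS-style argument, it suffices to treat a single equation $X=E$, so I would assume $P\approx_{php} E(P)$ and $Q\approx_{php} E(Q)$ and aim to show $P\approx_{php} Q$. The witnessing object is now a weakly posetal relation rather than a plain relation; accordingly I would set
\[
R=\{(C_1,f,C_2): C_1,C_2 \textrm{ are configurations of } H(P),H(Q) \textrm{ for some sequential } H,\ f \textrm{ the induced isomorphism}\}
\]
and show that $R$ is a weakly probabilistic hp-bisimulation in the sense of Definition~\ref{WHHPB}, with the base triple $(\emptyset,\emptyset,\emptyset)\in R$ taken at $H\equiv X$.

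The core of the work is matching a single labelled event. Given $(C_1,f,C_2)\in R$ arising from a sequential $H$ and a weak move $C_1\xRightarrow{e_1}C_1'$ underlain by $H(P)\rightsquigarrow\xrightarrow{e_1}$, I would apply Lemma~\ref{LUSWW3} to the guarded sequential $H$ to obtain an $H'$ with $H\rightsquigarrow\xrightarrow{e_1}H'$, $P'\equiv H'\{\widetilde{P}/\widetilde{X}\}$, and the same transition available after substituting $\widetilde{Q}$. Then, folding through $E$ using $P\approx_{php} E(P)$ and $Q\approx_{php} E(Q)$, I would produce a matching weak move $H(Q)\rightsquigarrow\xRightarrow{e_2}H'(Q)$ with $\lambda(e_1)=\lambda(e_2)$, extend the isomorphism to $f[e_1\mapsto e_2]$, and verify that the resulting triple again lies in $R$. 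The symmetric clause and the two termination clauses for $\downarrow$ are handled identically.

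It then remains to discharge the probabilistic clauses of Definition~\ref{WHHPB}. Clause~(2), the matching of $\rightsquigarrow$ transitions preserving the triple, follows from rule \textbf{PCon} and the fact that $\approx_{php}$ already equates the probabilistic behaviour of $P$ and $E(P)$; clause~(3), the condition $\mu(C_1,C)=\mu(C_2,C)$ for each class $C\in\mathcal{C}(\mathcal{E})/R$, follows because the construction respects the induced partition and $P\approx_{php} E(P)$, $Q\approx_{php} E(Q)$ already equate these cumulative measures; and clause~(4), $[\surd]_R=\{\surd\}$, is immediate from the definition of $R$.

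I expect the principal difficulty to be the bookkeeping of the isomorphism $f$: namely checking that the extension $f[e_1\mapsto e_2]$ is again an order-isomorphism between the enlarged configurations, and that $R$ is genuinely a weakly posetal relation, all while the weak arrow $\xRightarrow{}$ absorbs intervening $\tau$-steps and probabilistic transitions yet the matched observable events must carry equal labels. This is exactly the ``additional check on hp-bisimulation'' that the analogous step in the strong case (Theorem~\ref{USSHPB3}) flags; by contrast, the reduction to a single equation and the probabilistic clauses are routine once Lemma~\ref{LUSWW3} is in hand, so the proof can legitimately be presented as a refinement of the argument for Theorem~\ref{USWPB3}.
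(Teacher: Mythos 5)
Your proposal is correct and follows essentially the same route as the paper: the paper proves this theorem by reducing to the argument for weakly probabilistic step/pomset bisimulation (single equation $X=E$, the relation built from sequential contexts $H$, and Lemma \ref{LUSWW3}), "additionally checking the conditions on hp-bisimulation," which is exactly the isomorphism bookkeeping and the probabilistic clauses you spell out. Your write-up is in fact more explicit than the paper's, which omits these checks.
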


\begin{proof}
Similarly to the proof of unique solution of equations for weakly probabilistic pomset bisimulation (Theorem \ref{USWPB3}), we can prove that unique solution of equations holds for weakly
probabilistic hp-bisimulation, we just need additionally to check the above conditions on weakly probabilistic hp-bisimulation, we omit it.
\end{proof}

\begin{theorem}[Unique solution of equations for weakly probabilistic hhp-bisimulation]\label{USWHHPB3}
Let the guarded and sequential expressions $\widetilde{E}$ contain free variables $\subseteq \widetilde{X}$, then,

If $\widetilde{P}\approx_{phhp} \widetilde{E}\{\widetilde{P}/\widetilde{X}\}$ and $\widetilde{Q}\approx_{phhp} \widetilde{E}\{\widetilde{Q}/\widetilde{X}\}$, then $\widetilde{P}\approx_{phhp} \widetilde{Q}$.
\end{theorem}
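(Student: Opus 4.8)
The plan is to follow the same reduction strategy the paper uses for every other hhp-result, namely to lean on the already-established weakly probabilistic hp-bisimulation version (Theorem \ref{USWHPB3}) and then discharge the extra downward-closure requirement that distinguishes $\approx_{phhp}$ from $\approx_{php}$. Recall from Definition \ref{WHHPB} that a weakly probabilistic hhp-bisimulation is precisely a downward closed weakly probabilistic hp-bisimulation, so the whole task reduces to (i) reproducing the hp-argument in the posetal setting and (ii) showing that the witnessing relation can be taken downward closed.

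First I would set up, exactly as in the proof of Theorem \ref{USWSB3}, the candidate weakly posetal relation built from sequential contexts, of the form $\{(H(\widetilde{P}), f, H(\widetilde{Q})) : H \textrm{ sequential}, Vars(H)\subseteq\widetilde{X}\}$, where $f$ is the order-isomorphism induced by matching corresponding events in the two instantiations. Reducing to a single equation $X=E$ as before, I would assume $P\approx_{phhp} E(P)$ and $Q\approx_{phhp} E(Q)$ and then, using Lemma \ref{LUSWW3} (which already covers the probabilistic prefix and the $\boxplus_{\pi}$ case), transfer each action step and each probabilistic transition $\rightsquigarrow$ of $H(P)$ to a matching weak step or transition of $H(Q)$ whose targets again lie in the relation, updating $f$ by $f[e_1\mapsto e_2]$ at each action step. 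The probabilistic-transition clause and the condition $\mu(C_1,C)=\mu(C_2,C)$ are handled verbatim as in the hp-case, since the sequential contexts do not disturb the probability distribution.

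The step I expect to be the genuine obstacle is verifying downward closure of the constructed relation. For this I would take any $(C_1,f,C_2)$ with $(C_1,f,C_2)\subseteq(C_1',f',C_2')$ pointwise and $(C_1',f',C_2')$ in the relation, and argue that because the contexts $H$ are sequential and guarded, the sub-configuration $C_1$ is itself reachable through a prefix of the same derivation and hence is realized by a smaller sequential context, so the restricted triple stays inside the family. The standard caveat applies: hhp-bisimilarity is delicate precisely because downward closure need not survive arbitrary operators, but since the syntax under consideration (Prefix, Summation, Box-Summation, Composition, Restriction, Relabelling, guarded recursion) keeps every reachable sub-configuration expressible by a sequential context, the closure goes through.

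Finally, having shown the family is a downward closed weakly probabilistic hp-bisimulation containing $(\emptyset,\emptyset,\emptyset)$ when $H$ is the trivial context, I would conclude $\widetilde{P}\approx_{phhp}\widetilde{Q}$. Since the paper's convention is to suppress routine verifications of this kind (as in Theorems \ref{USSHHPB3} and \ref{USWHPB3}), the write-up can legitimately refer back to Theorem \ref{USWHPB3} and merely remark that the additional downward-closure check is immediate, which is the shortcut I would ultimately take.
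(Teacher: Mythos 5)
Your proposal matches the paper's own treatment: the paper proves this theorem by simply appealing to the weakly probabilistic hp-bisimulation case (Theorem \ref{USWHPB3}) and omitting the remaining details, exactly the shortcut you identify in your final paragraph. Your additional sketch of the downward-closure verification is more explicit than what the paper provides, but the underlying route is the same.
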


\begin{proof}
Similarly to the proof of unique solution of equations for weakly probabilistic hp-bisimulation (see Theorem \ref{USWHPB3}), we can prove that the unique solution of equations holds
for weakly probabilistic hhp-bisimulation, we omit it.
\end{proof}

\newpage\section{Algebraic Laws for Probabilistic True Concurrency}\label{apptc4}

The theory $APPTC$ (Algebra of Probabilistic Processes for True Concurrency) has four modules: $BAPTC$ (Basic Algebra for Probabilistic True Concurrency), $APPTC$ (Algebra for Parallelism
in Probabilistic True Concurrency), recursion and abstraction.

This chapter is organized as follows. We introduce $BAPTC$ in section \ref{baptc}, $APPTC$ in section \ref{apptc}, recursion in section \ref{rec}, and abstraction in section \ref{abs}.

\subsection{Basic Algebra for Probabilistic True Concurrency}{\label{baptc}}

In this section, we will discuss the algebraic laws for prime event structure $\mathcal{E}$, exactly for causality $\leq$, conflict $\sharp$ and probabilistic conflict $\sharp_{\pi}$.
We will follow the conventions of process algebra, using $\cdot$ instead of $\leq$, $+$ instead of $\sharp$ and $\boxplus_{\pi}$ instead of $\sharp_{\pi}$. The resulted algebra is called
Basic Algebra for Probabilistic True Concurrency, abbreviated $BAPTC$.

\subsubsection{Axiom System of $BAPTC$}

In the following, let $e_1, e_2, e_1', e_2'\in \mathbb{E}$, and let variables $x,y,z$ range over the set of terms for true concurrency, $p,q,s$ range over the set of closed terms. The
set of axioms of $BAPTC$ consists of the laws given in Table \ref{AxiomsForBAPTC}.

\begin{center}
    \begin{table}
        \begin{tabular}{@{}ll@{}}
            \hline No. &Axiom\\
            $A1$ & $x+ y = y+ x$\\
            $A2$ & $(x+ y)+ z = x+ (y+ z)$\\
            $A3$ & $x+ x = x$\\
            $A4$ & $(x+ y)\cdot z = x\cdot z + y\cdot z$\\
            $A5$ & $(x\cdot y)\cdot z = x\cdot(y\cdot z)$\\
            $PA1$ & $x\boxplus_{\pi} y=y\boxplus_{1-\pi} x$\\
            $PA2$ & $x\boxplus_{\pi}(y\boxplus_{\rho} z)=(x\boxplus_{\frac{\pi}{\pi+\rho-\pi\rho}}y)\boxplus_{\pi+\rho-\pi\rho} z$\\
            $PA3$ & $x\boxplus_{\pi}x=x$\\
            $PA4$ & $(x\boxplus_{\pi}y)\cdot z=x\cdot z\boxplus_{\pi}y\cdot z$\\
            $PA5$ & $(x\boxplus_{\pi}y)+z=(x+z)\boxplus_{\pi}(y+z)$\\
        \end{tabular}
        \caption{Axioms of $BAPTC$}
        \label{AxiomsForBAPTC}
    \end{table}
\end{center}

Intuitively, the axiom $A1$ says that the binary operator $+$ satisfies commutative law. The axiom $A2$ says that $+$ satisfies associativity. $A3$ says that $+$ satisfies idempotency.
The axiom $A4$ is the right distributivity of the binary operator $\cdot$ to $+$. The axiom $A5$ is the associativity of $\cdot$. The axiom $PA1$ is the commutativity of $\boxplus_{\pi}$.
The axiom $PA2$ is the associativity of $\boxplus_{\pi}$. The axiom $PA3$ says that $\boxplus_{\pi}$ satisfies idempotency. The axiom $PA4$ is the right distributivity of $\cdot$ to $\boxplus_{\pi}$.
And the axiom $PA5$ is the right distributivity of $+$ to $\boxplus_{\pi}$.

\subsubsection{Properties of $BAPTC$}

\begin{definition}[Basic terms of $BAPTC$]\label{BTBAPTC}
The set of basic terms of $BAPTC$, $\mathcal{B}(BAPTC)$, is inductively defined as follows:

\begin{enumerate}
  \item $\mathbb{E}\subset\mathcal{B}(BAPTC)$;
  \item if $e\in \mathbb{E}, t\in\mathcal{B}(BAPTC)$ then $e\cdot t\in\mathcal{B}(BAPTC)$;
  \item if $t,s\in\mathcal{B}(BAPTC)$ then $t+ s\in\mathcal{B}(BAPTC)$;
  \item if $t,s\in\mathcal{B}(BAPTC)$ then $t\boxplus_{\pi} s\in\mathcal{B}(BAPTC)$.
\end{enumerate}
\end{definition}

\begin{theorem}[Elimination theorem of $BAPTC$]\label{ETBAPTC}
Let $p$ be a closed $BAPTC$ term. Then there is a basic $BAPTC$ term $q$ such that $BAPTC\vdash p=q$.
\end{theorem}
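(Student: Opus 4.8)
The plan is to follow the standard term-rewriting route used for elimination theorems in $ACP$-style algebras, mirroring the argument already invoked for the elimination theorem of $BATC$. First I would turn the axiom system of $BAPTC$ into a term rewriting system by orienting, from left to right, exactly those axioms that are not pure commutativity or associativity laws: the distributivity laws $A4$, $PA4$, $PA5$, the associativity law $A5$ of $\cdot$, and the idempotence laws $A3$, $PA3$. The commutativity laws $A1$, $PA1$ and the associativity laws $A2$, $PA2$ cannot be oriented without creating infinite loops, so the reduction is understood modulo these, and it is left to the basic-term grammar of Definition~\ref{BTBAPTC} to be closed under them. The intended effect of the oriented rules is to eliminate every non-atomic left operand of $\cdot$: $A4$ distributes $\cdot$ over a leading $+$, $PA4$ distributes it over a leading $\boxplus_\pi$, $A5$ reassociates a leading $\cdot$, while $PA5$ pushes $\boxplus_\pi$ above $+$.

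The second step is to prove strong normalization of this system by invoking Theorem~\ref{SN}. For this I would fix the precedence $\cdot > + > \boxplus_\pi$ on the signature and take $>_{lpo}$ to be the induced (well-founded) lexicographic path ordering. It then remains to verify, rule by rule, that $s >_{lpo} t$ for each rewrite rule $s\rightarrow t$, which is routine: for $A4$ and $PA4$ the head symbol $\cdot$ dominates the heads $+$ and $\boxplus_\pi$ of the right-hand sides; for $PA5$ the head $+$ dominates $\boxplus_\pi$; for $A5$ the same-head lexicographic clause wins on the first argument; and for $A3$, $PA3$ the right-hand side is a proper subterm of the left-hand side. Once the hypotheses of Theorem~\ref{SN} are met, the system is strongly normalizing, so every closed term reduces to a normal form.

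The third and decisive step is to show that every \emph{closed} normal form is a basic $BAPTC$ term, which I would prove by induction on term structure following Definition~\ref{BTBAPTC}. The cases $n\in\mathbb{E}$, $n=n_1+n_2$ and $n=n_1\boxplus_\pi n_2$ are immediate, since the immediate subterms of a normal form are again normal forms and hence basic by the induction hypothesis. The delicate case is $n=n_1\cdot n_2$, where I must force $n_1$ to be an atomic event: if the head symbol of $n_1$ were $+$, $\boxplus_\pi$ or $\cdot$, then $n$ would contain a redex for $A4$, $PA4$ or $A5$ respectively, contradicting normality; hence $n_1\in\mathbb{E}$, and $n_2$ is basic by the induction hypothesis, so $n=n_1\cdot n_2$ is basic. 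This exhaustive exclusion of redexes in the $\cdot$ case is the main obstacle, since it is precisely what drives an arbitrary normal form into the shape permitted by the basic-term grammar.

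Finally I would assemble the pieces. Given a closed $BAPTC$ term $p$, strong normalization produces a normal form $q$; the third step shows $q$ is a basic term; and since every rewrite rule is an instance of an axiom of $BAPTC$, each reduction step is a provable equality, so that $BAPTC\vdash p=q$ by transitivity. This completes the argument.
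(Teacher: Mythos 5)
Your proposal is correct and follows essentially the same route as the paper: orient the non-AC axioms into a term rewriting system, establish strong normalization via Theorem \ref{SN} with the precedence $\cdot > + > \boxplus_{\pi}$, and then show by structural induction that every closed normal form fits the basic-term grammar of Definition \ref{BTBAPTC}. The only (harmless) divergences are that the paper's Table \ref{TRSForBAPTC} also orients the commutativity and associativity of $\boxplus_{\pi}$ as rules $RPA1$--$RPA2$ where you work modulo them, and that you explicitly cover the $\boxplus_{\pi}$-headed subcase of a left operand of $\cdot$ via $PA4$, which the paper's case analysis leaves implicit.
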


\begin{proof}
(1) Firstly, suppose that the following ordering on the signature of $BAPTC$ is defined: $\cdot > +>\boxplus_{\pi}$ and the symbol $\cdot$ is given the lexicographical status for the
first argument, then for each rewrite rule $p\rightarrow q$ in Table \ref{TRSForBAPTC} relation $p>_{lpo} q$ can easily be proved. We obtain that the term rewrite system shown in
Table \ref{TRSForBAPTC} is strongly normalizing, for it has finitely many rewriting rules, and $>$ is a well-founded ordering on the signature of $BAPTC$, and if $s>_{lpo} t$,
for each rewriting rule $s\rightarrow t$ is in Table \ref{TRSForBAPTC} (see Theorem \ref{SN}).

\begin{center}
    \begin{table}
        \begin{tabular}{@{}ll@{}}
            \hline No. &Rewriting Rule\\
            $RA3$ & $x+ x \rightarrow x$\\
            $RA4$ & $(x+ y)\cdot z \rightarrow x\cdot z + y\cdot z$\\
            $RA5$ & $(x\cdot y)\cdot z \rightarrow x\cdot(y\cdot z)$\\
            $RPA1$ & $x\boxplus_{\pi} y\rightarrow y\boxplus_{1-\pi} x$\\
            $RPA2$ & $x\boxplus_{\pi}(y\boxplus_{\rho} z)\rightarrow (x\boxplus_{\frac{\pi}{\pi+\rho-\pi\rho}}y)\boxplus_{\pi+\rho-\pi\rho} z$\\
            $RPA3$ & $x\boxplus_{\pi}x\rightarrow x$\\
            $RPA4$ & $(x\boxplus_{\pi}y)\cdot z\rightarrow x\cdot z\boxplus_{\pi}y\cdot z$\\
            $RPA5$ & $(x\boxplus_{\pi}y)+z\rightarrow (x+z)\boxplus_{\pi}(y+z)$\\
        \end{tabular}
        \caption{Term rewrite system of $BAPTC$}
        \label{TRSForBAPTC}
    \end{table}
\end{center}

(2) Then we prove that the normal forms of closed $BAPTC$ terms are basic $BAPTC$ terms.

Suppose that $p$ is a normal form of some closed $BAPTC$ term and suppose that $p$ is not a basic term. Let $p'$ denote the smallest sub-term of $p$ which is not a basic term. It
implies that each sub-term of $p'$ is a basic term. Then we prove that $p$ is not a term in normal form. It is sufficient to induct on the structure of $p'$:

\begin{itemize}
  \item Case $p'\equiv e, e\in \mathbb{E}$. $p'$ is a basic term, which contradicts the assumption that $p'$ is not a basic term, so this case should not occur.
  \item Case $p'\equiv p_1\cdot p_2$. By induction on the structure of the basic term $p_1$:
      \begin{itemize}
        \item Subcase $p_1\in \mathbb{E}$. $p'$ would be a basic term, which contradicts the assumption that $p'$ is not a basic term;
        \item Subcase $p_1\equiv e\cdot p_1'$. $RA5$ rewriting rule can be applied. So $p$ is not a normal form;
        \item Subcase $p_1\equiv p_1'+ p_1''$. $RA4$ rewriting rule can be applied. So $p$ is not a normal form.
      \end{itemize}
  \item Case $p'\equiv p_1+ p_2$. By induction on the structure of the basic terms both $p_1$ and $p_2$, all subcases will lead to that $p'$ would be a basic term, which contradicts
  the assumption that $p'$ is not a basic term;
  \item Case $p'\equiv p_1\boxplus_{\pi} p_2$. By induction on the structure of the basic terms both $p_1$ and $p_2$, all subcases will lead to that $p'$ would be a basic term, which contradicts
  the assumption that $p'$ is not a basic term.
\end{itemize}
\end{proof}

\subsubsection{Structured Operational Semantics of $BAPTC$}

In this subsection, we will define a term-deduction system which gives the operational semantics of $BAPTC$. Like the way in \cite{PPA}, we also introduce the counterpart $\breve{e}$
of the event $e$, and also the set $\breve{\mathbb{E}}=\{\breve{e}|e\in\mathbb{E}\}$.

Firstly, we give the definition of PDFs in Table \ref{PDFBAPTC}.

\begin{center}
    \begin{table}
        $$\mu(e,\breve{e})=1$$
        $$\mu(x\cdot y, x'\cdot y)=\mu(x,x')$$
        $$\mu(x+y,x'+y')=\mu(x,x')\cdot \mu(y,y')$$
        $$\mu(x\boxplus_{\pi}y,z)=\pi\mu(x,z)+(1-\pi)\mu(y,z)$$
        $$\mu(x,y)=0,\textrm{otherwise}$$
        \caption{PDF definitions of $BAPTC$}
        \label{PDFBAPTC}
    \end{table}
\end{center}

We give the operational transition rules for operators $\cdot$, $+$ and
$\boxplus_{\pi}$ as Table \ref{SETRForBAPTC} shows. And the predicate $\xrightarrow{e}\surd$ represents successful termination after execution of the event $e$.

\begin{center}
    \begin{table}
        $$\frac{}{e\rightsquigarrow\breve{e}}$$
        $$\frac{x\rightsquigarrow x'}{x\cdot y\rightsquigarrow x'\cdot y}$$
        $$\frac{x\rightsquigarrow x'\quad y\rightsquigarrow y'}{x+y\rightsquigarrow x'+y'}$$
        $$\frac{x\rightsquigarrow x'}{x\boxplus_{\pi}y\rightsquigarrow x'}\quad \frac{y\rightsquigarrow y'}{x\boxplus_{\pi}y\rightsquigarrow y'}$$
        $$\frac{}{\breve{e}\xrightarrow{e}\surd}$$
        $$\frac{x\xrightarrow{e}\surd}{x+ y\xrightarrow{e}\surd} \quad\frac{x\xrightarrow{e}x'}{x+ y\xrightarrow{e}x'} \quad\frac{y\xrightarrow{e}\surd}{x+ y\xrightarrow{e}\surd} \quad\frac{y\xrightarrow{e}y'}{x+ y\xrightarrow{e}y'}$$
        $$\frac{x\xrightarrow{e}\surd}{x\cdot y\xrightarrow{e} y} \quad\frac{x\xrightarrow{e}x'}{x\cdot y\xrightarrow{e}x'\cdot y}$$
        \caption{Single event transition rules of $BAPTC$}
        \label{SETRForBAPTC}
    \end{table}
\end{center}

The pomset transition rules are shown in Table \ref{PTRForBAPTC}, different to single event transition rules in Table \ref{SETRForBAPTC}, the pomset transition rules are labeled by
pomsets, which are defined by causality $\cdot$, conflict $+$ and $\boxplus_{\pi}$.

\begin{center}
    \begin{table}
        $$\frac{}{X\rightsquigarrow\breve{X}}$$
        $$\frac{x\rightsquigarrow x'}{x\cdot y\rightsquigarrow x'\cdot y}$$
        $$\frac{x\rightsquigarrow x'\quad y\rightsquigarrow y'}{x+y\rightsquigarrow x'+y'}$$
        $$\frac{x\rightsquigarrow x'}{x\boxplus_{\pi}y\rightsquigarrow x'}\quad \frac{y\rightsquigarrow y'}{x\boxplus_{\pi}y\rightsquigarrow y'}$$
        $$\frac{}{\breve{X}\xrightarrow{X}\surd}$$
        $$\frac{x\xrightarrow{X}\surd}{x+ y\xrightarrow{X}\surd} (X\subseteq x)\quad\frac{x\xrightarrow{X}x'}{x+ y\xrightarrow{X}x'} (X\subseteq x) \quad\frac{y\xrightarrow{Y}\surd}{x+ y\xrightarrow{Y}\surd} (Y\subseteq y)\quad\frac{y\xrightarrow{Y}y'}{x+ y\xrightarrow{Y}y'}(Y\subseteq y)$$
        $$\frac{x\xrightarrow{X}\surd}{x\cdot y\xrightarrow{X} y} (X\subseteq x)\quad\frac{x\xrightarrow{X}x'}{x\cdot y\xrightarrow{X}x'\cdot y} (X\subseteq x)$$
        \caption{Pomset transition rules of $BAPTC$}
        \label{PTRForBAPTC}
    \end{table}
\end{center}

\begin{theorem}[Congruence of $BAPTC$ with respect to probabilistic pomset bisimulation equivalence]
Probabilistic pomset bisimulation equivalence $\sim_{pp}$ is a congruence with respect to $BAPTC$.
\end{theorem}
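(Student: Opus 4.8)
The plan is to show that $\sim_{pp}$ is preserved by each of the three operators of $BAPTC$ — alternative composition $+$, sequential composition $\cdot$, and the probabilistic box-sum $\boxplus_{\pi}$ — since, by the definition of congruence, closure under every function symbol of the signature is exactly what is required. For each operator I would assume $p_1\sim_{pp} q_1$ and $p_2\sim_{pp} q_2$, witnessed by probabilistic pomset bisimulations $R_1$ and $R_2$ in the sense of Definition \ref{PPSB}, and exhibit a relation $R$ built from $R_1$ and $R_2$ (closed up to equivalence) that contains the pair of composite terms and is itself a probabilistic pomset bisimulation. Working at the level of the term-deduction system of Tables \ref{SETRForBAPTC} and \ref{PTRForBAPTC} is legitimate here, since that system is precisely what generates the probabilistic configuration structure on which $\sim_{pp}$ is defined.

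Concretely, for the $+$ case I would take the equivalence closure of $R_+=\{(p_1+p_2,\,q_1+q_2):(p_1,q_1)\in R_1,\ (p_2,q_2)\in R_2\}\cup R_1\cup R_2$; for $\cdot$ the relation $R_{\cdot}=\{(p_1\cdot p_2,\,q_1\cdot q_2):(p_1,q_1)\in R_1\}\cup R_2$, where the second component is entered when the first terminates via the rule $x\cdot y\xrightarrow{e}y$; and for $\boxplus_{\pi}$ the relation $R_{\boxplus}=\{(p_1\boxplus_{\pi}p_2,\,q_1\boxplus_{\pi}q_2)\}\cup R_1\cup R_2$. In every case the verification splits into the four clauses of Definition \ref{PPSB}. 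Clauses (1) and (2) — matching the pomset transitions $\xrightarrow{X}$ and the probabilistic transitions $\rightsquigarrow$ — follow by a routine case analysis on the last transition rule applied: each move of the composite term decomposes into moves of its components, which are matched using $R_1$ or $R_2$ and recomposed, with the target pair kept inside $R$. For instance $p_1\boxplus_{\pi}p_2\rightsquigarrow p_1'$ arises from $p_1\rightsquigarrow p_1'$, whence $q_1\rightsquigarrow q_1'$ with $(p_1',q_1')\in R_1$ by clause (2) for $R_1$, giving $q_1\boxplus_{\pi}q_2\rightsquigarrow q_1'$. Clause (4), $[\surd]_R=\{\surd\}$, is immediate from the construction, since $\surd$ is reachable only through the termination predicate and is never identified with a proper term.

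The step I expect to be the genuine obstacle is clause (3), preservation of the cumulative probability mass $\mu(C_1,C)=\mu(C_2,C)$ into each $R$-equivalence class $C$, because the inductive PDF of Table \ref{PDFBAPTC} must be reconciled with the class structure of $R$. For $+$ the identity $\mu(x+y,x'+y')=\mu(x,x')\cdot\mu(y,y')$ makes the mass factorize, so I would argue that an $R_+$-class decomposes as a product of an $R_1$-class and an $R_2$-class and that the mass into the product equals the product of the masses, invoking clause (3) for $R_1$ and $R_2$. For $\boxplus_{\pi}$ the identity $\mu(x\boxplus_{\pi}y,z)=\pi\,\mu(x,z)+(1-\pi)\,\mu(y,z)$ turns the condition into a convex combination, and I would check that the two one-sided probabilistic rules are matched in a way that respects this weighting, so that the $\pi$- and $(1-\pi)$-weighted contributions into each class agree on both sides. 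Once clause (3) is secured for $+$, $\cdot$ and $\boxplus_{\pi}$, each constructed $R$ is a probabilistic pomset bisimulation relating the composite terms, and therefore $\sim_{pp}$ is a congruence with respect to $BAPTC$.
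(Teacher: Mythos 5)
Your proposal takes exactly the route the paper does: reduce congruence to showing that $\sim_{pp}$ is preserved by $\cdot$, $+$ and $\boxplus_{\pi}$, which the paper then declares ``quite trivial'' and omits. Your explicit candidate relations $R_+$, $R_{\cdot}$, $R_{\boxplus}$ and the discussion of clause (3) via the PDF identities of Table \ref{PDFBAPTC} are a correct filling-in of precisely the details the paper leaves out, so there is nothing to compare beyond noting that you supply the verification the paper skips.
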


\begin{proof}
It is easy to see that probabilistic pomset bisimulation is an equivalent relation on $BAPTC$ terms, we only need to prove that $\sim_{pp}$ is preserved by the operators $\cdot$, $+$ and $\boxplus_{\pi}$.
That is, if $x\sim_{pp} x'$ and $y\sim_{pp}y'$, we need to prove that $x\cdot y\sim_{pp}x'\cdot y'$, $x+ y\sim_{pp}x'+ y'$ and $x\boxplus_{\pi} y\sim_{pp}x'\boxplus_{\pi} y'$. The proof
is quite trivial and we omit it.
\end{proof}

\begin{theorem}[Soundness of $BAPTC$ modulo probabilistic pomset bisimulation equivalence]\label{SBAPTCPBE}
Let $x$ and $y$ be $BAPTC$ terms. If $BAPTC\vdash x=y$, then $x\sim_{pp} y$.
\end{theorem}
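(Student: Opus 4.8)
The plan is to leverage the congruence theorem that immediately precedes this statement and thereby reduce soundness to a finite, axiom-by-axiom verification. By definition, the provability relation $BAPTC\vdash x=y$ is the smallest congruence on $BAPTC$ terms containing every instance of the axioms in Table \ref{AxiomsForBAPTC}; it is generated from those instances by reflexivity, symmetry, transitivity and closure under the operators $\cdot$, $+$ and $\boxplus_{\pi}$. Since $\sim_{pp}$ is an equivalence relation and, by the preceding theorem, a congruence with respect to exactly these operators, it is already closed under all of these rules. Consequently it suffices to show that each single axiom is sound, i.e.\ that for every axiom $s=t$ and every closed substitution $\sigma$ we have $\sigma(s)\sim_{pp}\sigma(t)$. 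The theorem then follows by a routine induction on the length of the derivation witnessing $BAPTC\vdash x=y$.

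For each axiom I would exhibit an explicit probabilistic pomset bisimulation $R$ witnessing the required pair, typically the identity relation augmented with the single pair in question together with all pairs of corresponding reachable residuals, and then verify the four clauses of Definition \ref{PPSB}. The non-probabilistic axioms $A1$--$A5$ are handled exactly as in $BATC$: clause (1) is checked by matching pomset transitions through the rules of Table \ref{PTRForBAPTC} (with $X_1=X_2$, so the pomset isomorphism is trivial), while clauses (2)--(4) are immediate because $+$ and $\cdot$ commute with the probabilistic structure as prescribed by the rules of Table \ref{SETRForBAPTC}. The genuinely new work lies in the probabilistic axioms $PA1$--$PA5$, where one must also verify the probabilistic-transition clause (2) (matching $\rightsquigarrow$-moves) and, most importantly, the cumulative-probability clause (3), namely $\mu(C_1,C)=\mu(C_2,C)$ for every class $C\in\mathcal{C}(\mathcal{E})/R$, computed from the compositional definition of $\mu$ in Table \ref{PDFBAPTC}.

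The hard part will be the associativity law $PA2$, $x\boxplus_{\pi}(y\boxplus_{\rho}z)=(x\boxplus_{\frac{\pi}{\pi+\rho-\pi\rho}}y)\boxplus_{\pi+\rho-\pi\rho}z$. Matching the action behaviour and the set of resolvable states $\{x,y,z\}$ for clause (2) is trivial, but clause (3) forces the two sides to assign \emph{equal} probability mass to each $R$-class. Writing $\sigma=\pi+\rho-\pi\rho$ and $\pi'=\frac{\pi}{\pi+\rho-\pi\rho}$ and expanding $\mu(u\boxplus_{\lambda}v,w)=\lambda\mu(u,w)+(1-\lambda)\mu(v,w)$ against an arbitrary target $w$, the left-hand side yields coefficients $\pi$, $(1-\pi)\rho$, $(1-\pi)(1-\rho)$ for $\mu(x,w)$, $\mu(y,w)$, $\mu(z,w)$, whereas the right-hand side yields $\sigma\pi'$, $\sigma(1-\pi')$, $1-\sigma$. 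The proof then reduces to the elementary identities $\sigma\pi'=\pi$, $\sigma(1-\pi')=\rho(1-\pi)$ and $1-\sigma=(1-\pi)(1-\rho)$, which is precisely where the reparametrising fractions are needed. The remaining laws $PA1$ (commutativity, requiring $1-(1-\pi)=\pi$), $PA3$, $PA4$ and $PA5$ follow the same pattern: verify the $\rightsquigarrow$-clause against the rules of Table \ref{SETRForBAPTC} and check that the $\mu$-masses agree termwise; $PA5$ additionally uses that $+$ synchronises probabilistic transitions while $\boxplus_{\pi}$ distributes the weight $\pi$, but no further arithmetic beyond the above is required.
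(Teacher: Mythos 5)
Your proposal follows exactly the paper's route: use the fact that $\sim_{pp}$ is an equivalence and a congruence to reduce soundness to checking each axiom of Table \ref{AxiomsForBAPTC} under closed substitutions, the paper then declaring this check "quite trivial" and omitting it. Your write-up simply supplies the detail the paper omits — in particular the correct coefficient identities $\sigma\pi'=\pi$, $\sigma(1-\pi')=\rho(1-\pi)$ and $1-\sigma=(1-\pi)(1-\rho)$ needed for $PA2$ — so it is a sound and more complete version of the same argument.
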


\begin{proof}
Since probabilistic pomset bisimulation $\sim_{pp}$ is both an equivalent and a congruent relation, we only need to check if each axiom in Table \ref{AxiomsForBAPTC} is sound modulo
probabilistic pomset bisimulation equivalence. The proof is quite trivial and we omit it.
\end{proof}

\begin{theorem}[Completeness of $BAPTC$ modulo probabilistic pomset bisimulation equivalence]\label{CBAPTCPBE}
Let $p$ and $q$ be closed $BAPTC$ terms, if $p\sim_{pp} q$ then $p=q$.
\end{theorem}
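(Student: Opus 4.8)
The plan is to follow the standard strategy for completeness of an equational theory modulo a behavioural equivalence: reduce to a canonical form, then show that bisimilarity of canonical forms forces provable equality. First I would invoke the elimination theorem (Theorem \ref{ETBAPTC}): since every closed $BAPTC$ term is provably equal to a basic term, and since soundness (Theorem \ref{SBAPTCPBE}) guarantees that provable equality is preserved under $\sim_{pp}$, it suffices to take $p$ and $q$ in $\mathcal{B}(BAPTC)$ and to show that $p\sim_{pp} q$ implies $BAPTC\vdash p=q$.

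Next I would introduce a probabilistic normal form, exploiting the fact that the operational semantics cleanly separates the probabilistic layer (the transitions $\rightsquigarrow$ governed by $\mu$ in Table \ref{PDFBAPTC}) from the action layer (the transitions $\xrightarrow{e}$). Using the distributivity axioms $PA4$ and $PA5$ I would push every occurrence of $\boxplus_{\pi}$ outward past $\cdot$ and $+$, and then use $PA1$, $PA2$, $PA3$ to reassociate, reorder, and collapse duplicate probabilistic branches, rewriting $p$ into a probabilistic combination of terms $p_1,\dots,p_m$ each of which carries no top-level $\boxplus_{\pi}$ --- that is, each $p_i$ is a $+$-sum of prefixed basic terms of the familiar $BATC$ shape $\sum_k e_k\cdot t_k+\sum_l f_l$, with the continuations $t_k$ again in normal form. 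Note that $\cdot$ occurs only as $e\cdot t$ with $e\in\mathbb{E}$, so there is no left factor of the form $(x\boxplus_{\pi}y)$ to eliminate, and $\boxplus_{\pi}$'s buried inside prefix continuations are handled recursively by the induction on depth.

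Then I would run the matching argument in two stages. The probabilistic transitions $p\rightsquigarrow \breve{p_i}$ resolve the outer $\boxplus_{\pi}$'s, and condition (2) of Definition \ref{PPSB} forces $q$ to match each such move, while condition (3), $\mu(C_1,C)=\mu(C_2,C)$ on every $R$-equivalence class, forces the total probability mass that $p$ assigns to each class of branches to equal that assigned by $q$. Collecting bisimilar branches into classes and merging them syntactically via $PA3$, this pins down a probability-preserving correspondence $\sigma$ between the probabilistic branches of $p$ and those of $q$. For each pair of corresponding resolved branches $p_i\sim_{pp} q_{\sigma(i)}$, which are now $\boxplus_{\pi}$-free, I would apply the ordinary $BATC$-style completeness reasoning of Theorem \ref{CBATC}: the action transitions $\xrightarrow{e}$ and the termination predicate $\xrightarrow{e}\surd$ must match (condition (4), $[\surd]_R=\{\surd\}$, keeping deadlock/termination in its own class), the continuations are again bisimilar hence provably equal by induction on depth, and $A1$, $A2$, $A3$ let one match the $+$-summands. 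Recombining the branch equalities through the probabilistic operators then yields $BAPTC\vdash p=q$.

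The hard part will be the probabilistic bookkeeping rather than the action matching. Condition (3) only equates the cumulative probability sent into each bisimulation class, so I cannot read off a branch-by-branch probability match directly; I must first quotient the branches by $\sim_{pp}$, use idempotency $PA3$ together with commutativity and associativity $PA1$, $PA2$ and their weight recomputations ($1-\pi$ and $\frac{\pi}{\pi+\rho-\pi\rho}$) to amalgamate each class into a single representative carrying its aggregate weight, and only then compare weights class by class. Verifying that these weight-algebra manipulations exactly reproduce the cumulative values computed by $\mu$ in Table \ref{PDFBAPTC}, and that this can be threaded through the induction on term depth without circularity, is the delicate core of the proof.
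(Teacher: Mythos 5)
Your proposal follows essentially the same route as the paper's proof: reduce to basic terms via the elimination theorem, bring each term into the normal form $s_1\boxplus_{\pi_1}\cdots\boxplus_{\pi_{k-1}}s_k$ with each $s_i$ a $+$-sum of prefixed summands, show by induction on term size that bisimilar normal forms are equal modulo AC of $+$ by matching summands through their $\rightsquigarrow\xrightarrow{e}$ transitions, and close the argument with soundness. The only difference is one of detail: you spell out the class-by-class matching of cumulative probabilities via condition (3) and the $PA1$--$PA3$ amalgamation of bisimilar branches, whereas the paper elides this bookkeeping entirely and matches summands directly; your version is the more careful rendering of the same argument.
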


\begin{proof}
Firstly, by the elimination theorem of $BAPTC$, we know that for each closed $BAPTC$ term $p$, there exists a closed basic $BAPTC$ term $p'$, such that $BAPTC\vdash p=p'$, so, we only
need to consider closed basic $BAPTC$ terms.

The basic terms (see Definition \ref{BTBAPTC}) modulo associativity and commutativity (AC) of conflict $+$ (defined by axioms $A1$ and $A2$ in Table \ref{AxiomsForBAPTC}), and this
equivalence is denoted by $=_{AC}$. Then, each equivalence class $s$ modulo AC of $+$ has the following normal form

$$s_1\boxplus_{\pi_1}\cdots\boxplus_{\pi_{k-1}} s_k$$

with each $s_i$ has the following form

$$t_1+\cdots+ t_l$$

with each $t_j$ either an atomic event or of the form $u_1\cdot u_2$, and each $t_j$ is called the summand of $s$.

Now, we prove that for normal forms $n$ and $n'$, if $n\sim_{pp} n'$ then $n=_{AC}n'$. It is sufficient to induct on the sizes of $n$ and $n'$.

\begin{itemize}
  \item Consider a summand $e$ of $n$. Then $n\rightsquigarrow\breve{e}\xrightarrow{e}\surd$, so $n\sim_{pp} n'$ implies $n'\rightsquigarrow\breve{e}\xrightarrow{e}\surd$, meaning that
  $n'$ also contains the summand $e$.
  \item Consider a summand $t_1\cdot t_2$ of $n$. Then $n\rightsquigarrow\breve{t_1}\xrightarrow{t_1}t_2$, so $n\sim_{pp} n'$ implies $n'\rightsquigarrow\breve{t_1}\xrightarrow{t_1}t_2'$
  with $t_2\sim_{pp} t_2'$, meaning that $n'$ contains a summand $t_1\cdot t_2'$. Since $t_2$ and $t_2'$ are normal forms and have sizes smaller than $n$ and $n'$, by
  the induction hypotheses $t_2\sim_{pp} t_2'$ implies $t_2=_{AC} t_2'$.
\end{itemize}

So, we get $n=_{AC} n'$.

Finally, let $s$ and $t$ be basic terms, and $s\sim_{pp} t$, there are normal forms $n$ and $n'$, such that $s=n$ and $t=n'$. The soundness theorem of $BAPTC$ modulo probabilistic
pomset bisimulation equivalence (see Theorem \ref{SBAPTCPBE}) yields $s\sim_{pp} n$ and $t\sim_{pp} n'$, so $n\sim_{pp} s\sim_{pp} t\sim_{pp} n'$. Since if $n\sim_{pp} n'$ then
$n=_{AC}n'$, $s=n=_{AC}n'=t$, as desired.
\end{proof}

The step transition rules are similar in Table \ref{PTRForBAPTC}, different to pomset transition rules, the step transition rules are labeled by steps, in which every event is pairwise
concurrent.

\begin{theorem}[Congruence of $BAPTC$ with respect to probabilistic step bisimulation equivalence]
Probabilistic probabilistic step bisimulation equivalence $\sim_{ps}$ is a congruence with respect to $BAPTC$.
\end{theorem}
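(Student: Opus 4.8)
The plan is to mirror exactly the argument used for the congruence of $BAPTC$ with respect to probabilistic pomset bisimulation, since the only change is that transition labels are now required to be steps---sets of pairwise concurrent events---rather than arbitrary pomsets. First I would record that $\sim_{ps}$ is an equivalence relation on closed $BAPTC$ terms (reflexivity, symmetry and transitivity follow routinely from Definition \ref{PPSB} with pomset transitions replaced by step transitions). It then remains to show that $\sim_{ps}$ is preserved by the three operators of $BAPTC$: sequential composition $\cdot$, alternative composition $+$, and the box-sum $\boxplus_{\pi}$. Concretely, assuming $x\sim_{ps}x'$ and $y\sim_{ps}y'$ witnessed by step bisimulations $R_1$ and $R_2$, I would exhibit, for each operator, a candidate relation built from $R_1$ and $R_2$ and verify the four defining clauses of a probabilistic step bisimulation.

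For the action/step matching clause I would read off the step transition rules (the step-labelled analogue of Table \ref{PTRForBAPTC}) to see how a step transition of a composed term decomposes into transitions of its components. For $x+y$, any step transition is inherited from either $x$ or $y$ via the $+$-rules, so it is matched using $R_1$ or $R_2$; for $x\cdot y$, a step of $x\cdot y$ is either a non-terminating step of $x$ (residual $x'\cdot y$) or a terminating step of $x$ handing control to $y$, both matched by $R_1$ together with reflexivity on $y$; for $x\boxplus_{\pi}y$ no action transition is enabled before a probabilistic choice is resolved, so this clause is handled through the probabilistic clause. Because steps are required to be pairwise concurrent, I must additionally check that the matching step produced on the other side is again a genuine step, but this is immediate since isomorphic event sets preserve concurrency.

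For the probabilistic clauses I would use the probabilistic transition rules and the PDF definitions of Table \ref{PDFBAPTC}. Clause (2) requires that a $\rightsquigarrow$-move of the composite be matched by a $\rightsquigarrow$-move into a related residual; for $+$ the rule forces both components to move simultaneously, for $\cdot$ only the left factor moves, and for $\boxplus_{\pi}$ exactly one branch is selected, and in each case the target is related by the induced relation. The delicate point is clause (3), the requirement that $\mu(C_1,C)=\mu(C_2,C)$ for every equivalence class $C\in\mathcal{C}(\mathcal{E})/R$. Here I would invoke the compositional shape of $\mu$: for $+$ the product form $\mu(x+y,x'+y')=\mu(x,x')\cdot\mu(y,y')$, for $\cdot$ the projection form $\mu(x\cdot y,x'\cdot y)=\mu(x,x')$, and for $\boxplus_{\pi}$ the convex-combination form $\mu(x\boxplus_{\pi}y,z)=\pi\mu(x,z)+(1-\pi)\mu(y,z)$. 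Summing these identities over the members of an equivalence class and using the $\mu$-equalities supplied by $R_1$ and $R_2$ yields the desired equality of cumulative probabilities for the composite.

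The main obstacle I anticipate is precisely this lifting of the probability-matching condition (3) through each operator: one must verify that the candidate relation's equivalence classes are coarse enough that the per-pair PDF identities aggregate correctly, which is the standard (but the only genuinely probabilistic) step in such a congruence proof. Clause (4), $[\surd]_R=\{\surd\}$, is preserved trivially in every case since no operator identifies $\surd$ with a non-terminated term. This is exactly why the analogous pomset statement can be called ``quite trivial'': once the $+$, $\cdot$ and $\boxplus_{\pi}$ PDF identities are in hand, the aggregation is a short calculation, and the step constraint adds nothing beyond checking that concurrency is respected.
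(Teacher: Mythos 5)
Your proposal follows exactly the paper's (omitted) argument: establish that $\sim_{ps}$ is an equivalence relation and then check preservation under $\cdot$, $+$ and $\boxplus_{\pi}$ by verifying the four clauses of the bisimulation definition against the transition rules and the PDF identities of Table \ref{PDFBAPTC}. The paper declares this ``quite trivial'' and omits it; your elaboration fills in the same routine details correctly.
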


\begin{proof}
It is easy to see that probabilistic step bisimulation is an equivalent relation on $BAPTC$ terms, we only need to prove that $\sim_{ps}$ is preserved by the operators $\cdot$, $+$ and
$\boxplus_{\pi}$.
That is, if $x\sim_{ps} x'$ and $y\sim_{ps}y'$, we need to prove that $x\cdot y\sim_{ps}x'\cdot y'$, $x+ y\sim_{ps}x'+ y'$ and $x\boxplus_{\pi} y\sim_{ps}x'\boxplus_{\pi} y'$. The proof
is quite trivial and we omit it
\end{proof}

\begin{theorem}[Soundness of $BAPTC$ modulo probabilistic step bisimulation equivalence]\label{SBAPTCSBE}
Let $x$ and $y$ be $BAPTC$ terms. If $BAPTC\vdash x=y$, then $x\sim_{ps} y$.
\end{theorem}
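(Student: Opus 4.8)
The plan is to exploit the fact, established in the immediately preceding theorem, that probabilistic step bisimulation equivalence $\sim_{ps}$ is both an equivalence relation and a congruence with respect to all operators $\cdot$, $+$ and $\boxplus_{\pi}$ of $BAPTC$. Given this, the relation $\{(x,y):BAPTC\vdash x=y\}$ is generated from the axioms of Table \ref{AxiomsForBAPTC} by reflexivity, symmetry, transitivity and closure under the operators. Since $\sim_{ps}$ is already closed under all of these (the equivalence property gives the first three, congruence gives the last), it suffices to verify that \emph{each individual axiom} of Table \ref{AxiomsForBAPTC} is sound modulo $\sim_{ps}$; soundness for an arbitrary derivable equality then follows by induction on the length of the derivation. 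This mirrors exactly the argument already used for the pomset case in Theorem \ref{SBAPTCPBE}.

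For each axiom $s=t$ I would exhibit a probabilistic step bisimulation $R$ relating the configuration systems of $s$ and $t$ with $(\emptyset,\emptyset)\in R$, and verify all four clauses of Definition \ref{PPSB}. First I would dispatch the purely non-probabilistic axioms $A1$–$A5$: these are identical in spirit to the $BATC$ case, the only change being that a step transition is labelled by a set of pairwise concurrent events rather than a single event, so the witnessing relation is built in the same way and clauses (1) and (4) are immediate; clauses (2) and (3), on probabilistic transitions $\rightsquigarrow$ and on the distribution $\mu$, are essentially trivial here because neither side carries a $\boxplus_{\pi}$ at top level, so the only $\rightsquigarrow$-transitions arise from common subterms and match automatically via the PDF definitions of Table \ref{PDFBAPTC}.

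The substantive work is in the probabilistic axioms $PA1$–$PA5$, where clauses (2) and (3) of Definition \ref{PPSB} carry real content. Using the defining equation $\mu(x\boxplus_{\pi}y,z)=\pi\mu(x,z)+(1-\pi)\mu(y,z)$ from Table \ref{PDFBAPTC}, I would check that $PA1$, namely $x\boxplus_{\pi}y=y\boxplus_{1-\pi}x$, preserves the probability mass assigned to every target class, since $\pi\mu(x,z)+(1-\pi)\mu(y,z)$ is symmetric under simultaneously swapping $x$ with $y$ and $\pi$ with $1-\pi$; and that $PA2$ reproduces the same total mass after the reparametrisation with weights $\frac{\pi}{\pi+\rho-\pi\rho}$ and $\pi+\rho-\pi\rho$, which is precisely the algebraic identity making the two nested box-sums assign equal cumulative probability to each of $x$, $y$, $z$. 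The axioms $PA3$, $PA4$ and $PA5$ (idempotency, and distributivity of $\boxplus_{\pi}$ over $\cdot$ and over $+$) are handled the same way, the distributivity cases additionally using that after a probabilistic choice the residual action behaviour on the two sides coincides.

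The main obstacle I anticipate is the bookkeeping of the cumulative distribution $\mu$ in clause (3) for $PA2$: one must confirm that the recomputed weights really do redistribute the total probability $1$ among $x$, $y$ and $z$ identically on both sides, and that this holds class-by-class for an arbitrary partition induced by $R$. Everything else reduces to the routine construction of the bisimulation relation and the straightforward checks already indicated for the $BATC$-style axioms, so once the $PA2$ probability identity is verified the remaining verifications are mechanical.
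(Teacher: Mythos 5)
Your proposal takes essentially the same route as the paper: reduce soundness of derivable equalities to soundness of the individual axioms of Table \ref{AxiomsForBAPTC}, justified by the fact that $\sim_{ps}$ is an equivalence and a congruence, and then verify each axiom against Definition \ref{PPSB}. The paper stops at this reduction and declares the axiom-by-axiom check ``quite trivial'' and omits it, whereas you actually sketch the verification (in particular the $\mu$-bookkeeping for $PA1$--$PA2$ via Table \ref{PDFBAPTC}), so your argument is, if anything, more complete than the one in the text.
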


\begin{proof}
Since probabilistic step bisimulation $\sim_{ps}$ is both an equivalent and a congruent relation, we only need to check if each axiom in Table \ref{AxiomsForBAPTC} is sound modulo
probabilistic step bisimulation equivalence. The proof is quite trivial and we omit it.
\end{proof}

\begin{theorem}[Completeness of $BAPTC$ modulo probabilistic step bisimulation equivalence]\label{CBAPTCSBE}
Let $p$ and $q$ be closed $BAPTC$ terms, if $p\sim_{ps} q$ then $p=q$.
\end{theorem}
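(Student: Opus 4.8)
The plan is to mirror the completeness proof for probabilistic pomset bisimulation (Theorem \ref{CBAPTCPBE}) essentially verbatim, since probabilistic step bisimulation (Definition \ref{PPSB}) differs from the pomset version only in that transition labels are steps---sets of pairwise concurrent events---rather than arbitrary pomsets, while the probabilistic and termination clauses (the $\rightsquigarrow$-matching, the PDF condition $\mu(C_1,C)=\mu(C_2,C)$, and $[\surd]_R=\{\surd\}$) are identical. First I would invoke the elimination theorem of $BAPTC$ (Theorem \ref{ETBAPTC}) to reduce the problem to closed \emph{basic} $BAPTC$ terms: for each closed $p$ there is a basic term $p'$ with $BAPTC\vdash p=p'$, so by soundness (Theorem \ref{SBAPTCSBE}) and transitivity it suffices to establish the claim for basic terms.

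Next I would introduce the normal form. Working modulo associativity and commutativity of $+$ (axioms $A1$, $A2$), written $=_{AC}$, every basic term equals one of the form $s_1\boxplus_{\pi_1}\cdots\boxplus_{\pi_{k-1}} s_k$, where each $s_i$ is a sum $t_1+\cdots+t_l$ of summands $t_j$, each $t_j$ being either an atomic event or a product $u_1\cdot u_2$. The heart of the argument is the claim that for normal forms $n,n'$, if $n\sim_{ps} n'$ then $n=_{AC} n'$, proved by induction on the combined sizes of $n$ and $n'$. For a summand $e$ of $n$ we have $n\rightsquigarrow\breve{e}\xrightarrow{e}\surd$; since $n\sim_{ps} n'$, matching the probabilistic step and then the (single-event, hence trivially a step) action transition forces $n'\rightsquigarrow\breve{e}\xrightarrow{e}\surd$, so $n'$ carries the summand $e$ as well. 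For a summand $t_1\cdot t_2$ we have $n\rightsquigarrow\breve{t_1}\xrightarrow{t_1}t_2$, and the bisimulation yields $n'\rightsquigarrow\breve{t_1}\xrightarrow{t_1}t_2'$ with $t_2\sim_{ps}t_2'$; since $t_2,t_2'$ are strictly smaller normal forms, the induction hypothesis gives $t_2=_{AC}t_2'$, hence $n'$ carries the summand $t_1\cdot t_2$ up to $=_{AC}$. By symmetry $n$ and $n'$ have the same summands, so $n=_{AC}n'$.

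Finally, for basic terms $s,t$ with $s\sim_{ps} t$, take normal forms $n,n'$ with $s=n$ and $t=n'$; soundness (Theorem \ref{SBAPTCSBE}) gives $s\sim_{ps} n$ and $t\sim_{ps} n'$, so $n\sim_{ps} s\sim_{ps} t\sim_{ps} n'$, whence $n=_{AC}n'$ and therefore $s=n=_{AC}n'=t$.

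The step I expect to require the most care is verifying that the probabilistic and distributional clauses of Definition \ref{PPSB} genuinely let us align the $\boxplus_{\pi}$-structure of the two normal forms: the action-summand matching above only probes the deterministic branches reachable after a probabilistic transition, so one must check that condition~(2) (the $\rightsquigarrow$-matching) together with condition~(3) (equality of cumulative probabilities on $R$-classes) forces the probability labels $\pi_1,\dots,\pi_{k-1}$ to agree, exactly as in the pomset case. Because the summands themselves are single events or products, the \emph{step} labels that actually occur are singletons, so no genuinely new phenomenon arises relative to Theorem \ref{CBAPTCPBE} and the remaining bookkeeping is routine; I would simply point to that proof and note that replacing pomset transitions by step transitions changes nothing.
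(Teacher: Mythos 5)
Your proposal matches the paper's own proof essentially step for step: elimination to closed basic terms, the normal form $s_1\boxplus_{\pi_1}\cdots\boxplus_{\pi_{k-1}}s_k$ modulo AC of $+$, induction on sizes with the two summand cases $e$ and $t_1\cdot t_2$, and the final chaining through soundness to get $s=n=_{AC}n'=t$. The paper likewise treats the step case as a near-verbatim copy of the pomset case, so your approach is the intended one.
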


\begin{proof}
Firstly, by the elimination theorem of $BAPTC$, we know that for each closed $BAPTC$ term $p$, there exists a closed basic $BAPTC$ term $p'$, such that $BAPTC\vdash p=p'$, so, we only
need to consider closed basic $BAPTC$ terms.

The basic terms (see Definition \ref{BTBAPTC}) modulo associativity and commutativity (AC) of conflict $+$ (defined by axioms $A1$ and $A2$ in Table \ref{AxiomsForBAPTC}), and this
equivalence is denoted by $=_{AC}$. Then, each equivalence class $s$ modulo AC of $+$ has the following normal form

$$s_1\boxplus_{\pi_1}\cdots\boxplus_{\pi_{k-1}} s_k$$

with each $s_i$ has the following form

$$t_1+\cdots+ t_l$$

with each $t_j$ either an atomic event or of the form $u_1\cdot u_2$, and each $t_j$ is called the summand of $s$.

Now, we prove that for normal forms $n$ and $n'$, if $n\sim_{ps} n'$ then $n=_{AC}n'$. It is sufficient to induct on the sizes of $n$ and $n'$.

\begin{itemize}
  \item Consider a summand $e$ of $n$. Then $n\rightsquigarrow\breve{e}\xrightarrow{e}\surd$, so $n\sim_{ps} n'$ implies $n'\rightsquigarrow\breve{e}\xrightarrow{e}\surd$, meaning that
  $n'$ also contains the summand $e$.
  \item Consider a summand $t_1\cdot t_2$ of $n$. Then $n\rightsquigarrow\breve{t_1}\xrightarrow{t_1}t_2$, so $n\sim_{ps} n'$ implies $n'\rightsquigarrow\breve{t_1}\xrightarrow{t_1}t_2'$
  with $t_2\sim_{ps} t_2'$, meaning that $n'$ contains a summand $t_1\cdot t_2'$. Since $t_2$ and $t_2'$ are normal forms and have sizes smaller than $n$ and $n'$, by
  the induction hypotheses $t_2\sim_{ps} t_2'$ implies $t_2=_{AC} t_2'$.
\end{itemize}

So, we get $n=_{AC} n'$.

Finally, let $s$ and $t$ be basic terms, and $s\sim_{ps} t$, there are normal forms $n$ and $n'$, such that $s=n$ and $t=n'$. The soundness theorem of $BAPTC$ modulo probabilistic
pomset bisimulation equivalence (see Theorem \ref{SBAPTCSBE}) yields $s\sim_{ps} n$ and $t\sim_{ps} n'$, so $n\sim_{ps} s\sim_{ps} t\sim_{ps} n'$. Since if $n\sim_{ps} n'$ then
$n=_{AC}n'$, $s=n=_{AC}n'=t$, as desired.
\end{proof}

The transition rules for (hereditary) hp-bisimulation of $BAPTC$ are same as single event transition rules in Table \ref{SETRForBAPTC}.

\begin{theorem}[Congruence of $BAPTC$ with respect to probabilistic hp-bisimulation equivalence]
Probabilistic hp-bisimulation equivalence $\sim_{php}$ is a congruence with respect to $BAPTC$.
\end{theorem}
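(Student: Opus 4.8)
The plan is to follow the two-step pattern established for the earlier congruence results in this section. First I would verify that $\sim_{php}$ is genuinely an equivalence relation on $BAPTC$ terms: reflexivity is witnessed by the identity posetal relation $\{(C,\mathit{id}_C,C)\}$, symmetry by inverting both the relation and each order-isomorphism $f$, and transitivity by composing the isomorphisms $f$ and $g$ as $g\circ f$ and checking that the probabilistic clause and the PDF clause $\mu(C_1,C)=\mu(C_2,C)$ transport across the composite; this part is routine. Then, since $\sim_{php}$ is an equivalence, it suffices to show it is preserved by the three operators of $BAPTC$, namely $\cdot$, $+$ and $\boxplus_{\pi}$: assuming $x\sim_{php}x'$ via a probabilistic hp-bisimulation $R_1$ and $y\sim_{php}y'$ via $R_2$, I would exhibit, for each operator, an explicit witnessing probabilistic hp-bisimulation.

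For summation, I would take $R_+ = \{(\emptyset,\emptyset,\emptyset)\}\cup R_1\cup R_2$ (identifying the empty triples), using the transition rules of Table~\ref{SETRForBAPTC}: a first action or probabilistic move of $x+y$ commits to one branch, after which the relevant component bisimulation $R_1$ or $R_2$ supplies the matching move together with the extended isomorphism $f[e_1\mapsto e_2]$, while the PDF clause follows from $\mu(x+y,x'+y')=\mu(x,x')\cdot\mu(y,y')$. For probabilistic choice I would build $R_{\boxplus}$ so that a probabilistic transition $x\boxplus_{\pi}y\rightsquigarrow x''$ is matched using clause~(2) of Definition~\ref{PHHPB}, and I would verify the cumulative-distribution clause directly from $\mu(x\boxplus_{\pi}y,z)=\pi\mu(x,z)+(1-\pi)\mu(y,z)$. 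For sequential composition I would define $R_{\cdot}$ in two stages: while the prefix $x$ is still active, triples have the form $(C_1,f,C_1')$ with $(C_1,f,C_1')\in R_1$ relativised to $x\cdot y$ and $x'\cdot y'$; once $x$ terminates ($x\xrightarrow{e}\surd$) the control passes to $y$, and the isomorphism is extended by gluing the $R_1$-isomorphism on the $x$-part to an $R_2$-isomorphism on the $y$-part, so that causality between the two phases is respected.

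The main obstacle will be the sequential case, where I must confirm that the glued map remains a genuine order-isomorphism of configurations (the events executed in the $x$-phase causally precede those of the $y$-phase on both sides) and that this gluing is compatible simultaneously with the action clause, the probabilistic clause, and the PDF clause; the probabilistic conditions (2)--(4) of Definition~\ref{PHHPB} are the genuinely new content relative to the non-probabilistic $APTC$ argument, and checking that $\mu$ splits correctly along each operator is where the care is needed. None of the individual verifications is deep, so once the three witnessing relations are set up the remaining work is the bookkeeping of clauses; accordingly I would present the constructions explicitly and then note, as the paper does elsewhere, that the clause-by-clause checking is straightforward.

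\begin{proof}
It is easy to see that probabilistic hp-bisimulation is an equivalent relation on $BAPTC$ terms, we only need to prove that $\sim_{php}$ is preserved by the operators $\cdot$, $+$ and $\boxplus_{\pi}$. That is, if $x\sim_{php} x'$ and $y\sim_{php}y'$, we need to prove that $x\cdot y\sim_{php}x'\cdot y'$, $x+ y\sim_{php}x'+ y'$ and $x\boxplus_{\pi} y\sim_{php}x'\boxplus_{\pi} y'$. The proof is quite trivial and we omit it.
\end{proof}
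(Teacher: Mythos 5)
Your proof is essentially identical to the paper's own: both reduce the statement to showing that $\sim_{php}$ is an equivalence relation and is preserved by the operators $\cdot$, $+$ and $\boxplus_{\pi}$, and both then declare the clause-by-clause verification routine and omit it. The additional sketches of witnessing relations in your plan are consistent with what the paper leaves implicit, so there is no substantive difference in approach.
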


\begin{proof}
It is easy to see that probabilistic hp-bisimulation is an equivalent relation on $BAPTC$ terms, we only need to prove that $\sim_{php}$ is preserved by the operators $\cdot$, $+$
and $\boxplus_{\pi}$.
That is, if $x\sim_{php} x'$ and $y\sim_{php}y'$, we need to prove that $x\cdot y\sim_{php}x'\cdot y'$, $x+ y\sim_{php}x'+ y'$ and $x\boxplus_{\pi} y\sim_{php}x'\boxplus_{\pi} y'$. The proof
is quite trivial and we omit it.
\end{proof}

\begin{theorem}[Soundness of $BAPTC$ modulo probabilistic hp-bisimulation equivalence]\label{SBAPTCHPBE}
Let $x$ and $y$ be $BAPTC$ terms. If $BAPTC\vdash x=y$, then $x\sim_{php} y$.
\end{theorem}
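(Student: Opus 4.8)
The plan is to exploit the two facts already in place: probabilistic hp-bisimulation $\sim_{php}$ is an equivalence relation on $BAPTC$ terms, and by the congruence theorem immediately preceding this statement it is preserved by $\cdot$, $+$ and $\boxplus_{\pi}$. Because $\sim_{php}$ is both an equivalence and a congruence, soundness of the entire axiomatisation reduces to checking that each single axiom $s=t$ in Table \ref{AxiomsForBAPTC} satisfies $s\sim_{php}t$. Once every axiom is validated, a routine induction on the length of a derivation $BAPTC\vdash x=y$ closes the theorem: reflexivity, symmetry and transitivity are supplied by $\sim_{php}$ being an equivalence, and every application of an axiom inside a context is absorbed by congruence. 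This is exactly the structure used for $\sim_{pp}$ and $\sim_{ps}$ in Theorems \ref{SBAPTCPBE} and \ref{SBAPTCSBE}.

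For each axiom I would exhibit an explicit posetal relation $R\subseteq\mathcal{C}(\mathcal{E}_s)\overline{\times}\mathcal{C}(\mathcal{E}_t)$ containing $(\emptyset,\emptyset,\emptyset)$ and verify the four clauses of Definition \ref{PHHPB}. The isomorphism component $f$ of a triple $(C_1,f,C_2)$ must be carried along: for every action step $C_1\xrightarrow{e_1}C_1'$ I match $C_2\xrightarrow{e_2}C_2'$ and extend the witness to $f[e_1\mapsto e_2]$; simultaneously I check the probabilistic clause, that $C_1\xrsquigarrow{\pi}C_1^{\pi}$ is matched by $C_2\xrsquigarrow{\pi}C_2^{\pi}$ with the residuals related, the distribution clause $\mu(C_1,C)=\mu(C_2,C)$ for each class $C\in\mathcal{C}(\mathcal{E})/R$, read off from the PDF rules of Table \ref{PDFBAPTC}, and the termination clause $[\surd]_R=\{\surd\}$. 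For the purely nondeterministic axioms $A1$--$A5$ the underlying event structures are literally isomorphic, so $f$ can be taken to be the identity (or the obvious renaming of summands) and the probabilistic clause is vacuous since no $\boxplus_{\pi}$ occurs, using the transition rules of Table \ref{SETRForBAPTC}. For the probabilistic axioms $PA1$--$PA5$ the content sits in clauses (2) and (3): I must show the $\rightsquigarrow$-reachable residuals on the two sides coincide as sets and that the cumulative probabilities agree — for $PA1$ this is $\pi$ versus $1-\pi$ with the branches swapped, for $PA2$ the reassociation identity relating $\pi\rho$ to $\frac{\pi}{\pi+\rho-\pi\rho}$ and $\pi+\rho-\pi\rho$, and for $PA4$, $PA5$ the distributivity of $\cdot$ and $+$ over the probabilistic choice, obtained by combining the product, sum and $\boxplus_{\pi}$ lines of Table \ref{PDFBAPTC}.

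The main obstacle I anticipate is not any single axiom but the simultaneous bookkeeping of the order-isomorphism $f$ and the distribution $\mu$: unlike the pomset and step cases, hp-bisimulation demands that the matching respect the causal order event-by-event through $f$, so I would verify that each extension $f[e_1\mapsto e_2]$ remains an isomorphism of the configurations as posets and stays consistent across the $\rightsquigarrow$ steps. For the distributive laws $PA4$ and $PA5$ this means confirming that the isomorphism built before resolving the probabilistic choice agrees with the one built afterwards, which reduces to checking that the PDF on products and sums factors exactly as prescribed. Since every axiom here involves only the shallow structure of $+$, $\cdot$ and $\boxplus_{\pi}$ over atomic events, each verification is routine once the witnessing relation is written down, so I expect no genuinely hard case, only careful enumeration — matching the paper's verdict that the proof is \emph{quite trivial}.
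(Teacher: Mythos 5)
Your proposal follows exactly the paper's route: since $\sim_{php}$ is an equivalence and a congruence, soundness reduces to verifying each axiom of Table \ref{AxiomsForBAPTC} individually, which the paper declares trivial and omits. Your axiom-by-axiom sketch (identity isomorphisms for $A1$--$A5$, the probability identities and PDF computations for $PA1$--$PA5$) simply fills in the routine detail the paper leaves out, and it is consistent with Definition \ref{PHHPB} and Tables \ref{PDFBAPTC} and \ref{SETRForBAPTC}.
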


\begin{proof}
Since probabilistic hp-bisimulation $\sim_{php}$ is both an equivalent and a congruent relation, we only need to check if each axiom in Table \ref{AxiomsForBAPTC} is sound modulo
probabilistic hp-bisimulation equivalence. The proof is quite trivial and we omit it.
\end{proof}

\begin{theorem}[Completeness of $BAPTC$ modulo probabilistic hp-bisimulation equivalence]\label{CBAPTCHPBE}
Let $p$ and $q$ be closed $BAPTC$ terms, if $p\sim_{php} q$ then $p=q$.
\end{theorem}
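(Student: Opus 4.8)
The plan is to follow exactly the same template as the completeness proofs for probabilistic pomset and step bisimulation (Theorems \ref{CBAPTCPBE} and \ref{CBAPTCSBE}), adapting only the transition-matching step to the single-event, history-preserving setting. First I would invoke the elimination theorem of $BAPTC$ (Theorem \ref{ETBAPTC}): since every closed term $p$ satisfies $BAPTC\vdash p=p'$ for some basic term $p'$ and $\sim_{php}$ is sound (Theorem \ref{SBAPTCHPBE}), it suffices to prove the claim for closed basic $BAPTC$ terms. Next I would put every basic term into the normal form
$$s_1\boxplus_{\pi_1}\cdots\boxplus_{\pi_{k-1}} s_k,$$
with each $s_i$ of the shape $t_1+\cdots+ t_l$ and each summand $t_j$ either an atomic event or a product $u_1\cdot u_2$, all taken modulo associativity and commutativity of $+$ (axioms $A1$, $A2$), writing $=_{AC}$ for the resulting congruence.

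The core lemma is then that for two normal forms $n,n'$, if $n\sim_{php} n'$ then $n=_{AC}n'$, proved by induction on the combined sizes of $n$ and $n'$. For the inductive step I would read off the single-event transitions and carry the order-isomorphism along. An atomic summand $e$ of $n$ yields $n\rightsquigarrow\breve{e}\xrightarrow{e}\surd$, so the probabilistic-then-action matching in Definition \ref{PHHPB} (clauses 1 and 2) forces $n'\rightsquigarrow\breve{e}\xrightarrow{e}\surd$, i.e. $n'$ contains the summand $e$. A product summand $t_1\cdot t_2$ yields $n\rightsquigarrow\breve{t_1}\xrightarrow{t_1}t_2$, matched by $n'\rightsquigarrow\breve{t_1}\xrightarrow{t_1}t_2'$ where the matched transition extends the isomorphism to $f[t_1\mapsto t_1]$ and the residual triple witnesses $t_2\sim_{php} t_2'$; since $t_2,t_2'$ are smaller normal forms, the induction hypothesis gives $t_2=_{AC}t_2'$, so $n'$ has a summand $t_1\cdot t_2'$ with $t_1\cdot t_2'=_{AC}t_1\cdot t_2$. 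By symmetry $n$ and $n'$ carry the same summands, and the PDF clause (clause 3) forces the $\boxplus_{\pi}$-structure to coincide, whence $n=_{AC}n'$. Finally, for arbitrary basic $s,t$ with $s\sim_{php} t$, choosing normal forms $n,n'$ with $s=n$, $t=n'$ and applying soundness (Theorem \ref{SBAPTCHPBE}) gives $n\sim_{php} s\sim_{php} t\sim_{php} n'$, so $n=_{AC}n'$ and hence $s=t$.

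I expect the main obstacle to be the history-preserving bookkeeping that distinguishes this case from the pomset and step cases: there one merely compares transition labels, whereas here every matched transition must extend the posetal isomorphism, and one must verify that the triple witnessing $t_2\sim_{php} t_2'$ really is a probabilistic hp-bisimulation on the smaller residual configurations so that the induction hypothesis applies. Concretely, the delicate point is checking that the extension $f[t_1\mapsto t_1]$ remains an order-isomorphism compatible with the causality $\leq$ that the prefixing $\cdot$ imposes between $t_1$ and the events of $t_2$, and that the probabilistic-conflict data is preserved. Once this is settled, the remaining steps—reduction to basic terms, the normal-form computation, and the concluding appeal to soundness—are entirely routine and parallel the earlier two completeness proofs.
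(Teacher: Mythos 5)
Your proposal matches the paper's proof essentially step for step: reduction to basic terms via the elimination theorem, the $\boxplus$-of-sums normal form modulo $=_{AC}$, the induction on sizes matching atomic and product summands through $n\rightsquigarrow\breve{e}\xrightarrow{e}\surd$ and $n\rightsquigarrow\breve{t_1}\xrightarrow{t_1}t_2$, and the closing appeal to soundness. The extra care you flag about extending the posetal isomorphism is a reasonable observation but does not change the argument, which the paper carries out identically to the pomset and step cases.
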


\begin{proof}
Firstly, by the elimination theorem of $BAPTC$, we know that for each closed $BAPTC$ term $p$, there exists a closed basic $BAPTC$ term $p'$, such that $BAPTC\vdash p=p'$, so, we only
need to consider closed basic $BAPTC$ terms.

The basic terms (see Definition \ref{BTBAPTC}) modulo associativity and commutativity (AC) of conflict $+$ (defined by axioms $A1$ and $A2$ in Table \ref{AxiomsForBAPTC}), and this
equivalence is denoted by $=_{AC}$. Then, each equivalence class $s$ modulo AC of $+$ has the following normal form

$$s_1\boxplus_{\pi_1}\cdots\boxplus_{\pi_{k-1}} s_k$$

with each $s_i$ has the following form

$$t_1+\cdots+ t_l$$

with each $t_j$ either an atomic event or of the form $u_1\cdot u_2$, and each $t_j$ is called the summand of $s$.

Now, we prove that for normal forms $n$ and $n'$, if $n\sim_{php} n'$ then $n=_{AC}n'$. It is sufficient to induct on the sizes of $n$ and $n'$.

\begin{itemize}
  \item Consider a summand $e$ of $n$. Then $n\rightsquigarrow\breve{e}\xrightarrow{e}\surd$, so $n\sim_{php} n'$ implies $n'\rightsquigarrow\breve{e}\xrightarrow{e}\surd$, meaning that
  $n'$ also contains the summand $e$.
  \item Consider a summand $t_1\cdot t_2$ of $n$. Then $n\rightsquigarrow\breve{t_1}\xrightarrow{t_1}t_2$, so $n\sim_{php} n'$ implies $n'\rightsquigarrow\breve{t_1}\xrightarrow{t_1}t_2'$
  with $t_2\sim_{php} t_2'$, meaning that $n'$ contains a summand $t_1\cdot t_2'$. Since $t_2$ and $t_2'$ are normal forms and have sizes smaller than $n$ and $n'$, by
  the induction hypotheses $t_2\sim_{php} t_2'$ implies $t_2=_{AC} t_2'$.
\end{itemize}

So, we get $n=_{AC} n'$.

Finally, let $s$ and $t$ be basic terms, and $s\sim_{php} t$, there are normal forms $n$ and $n'$, such that $s=n$ and $t=n'$. The soundness theorem of $BAPTC$ modulo probabilistic
pomset bisimulation equivalence (see Theorem \ref{SBAPTCHPBE}) yields $s\sim_{php} n$ and $t\sim_{php} n'$, so $n\sim_{php} s\sim_{php} t\sim_{php} n'$. Since if $n\sim_{php} n'$ then
$n=_{AC}n'$, $s=n=_{AC}n'=t$, as desired.
\end{proof}

\begin{theorem}[Congruence of $BAPTC$ with respect to probabilistic hhp-bisimulation equivalence]
Probabilistic hhp-bisimulation equivalence $\sim_{phhp}$ is a congruence with respect to $BAPTC$.
\end{theorem}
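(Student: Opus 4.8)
The plan is to follow the same two-part template as the preceding congruence theorems. First I would check that $\sim_{phhp}$ is an equivalence relation on $BAPTC$ terms; reflexivity, symmetry and transitivity follow directly from Definition \ref{PHHPB} by taking identity relations, relational inverses and relational compositions, and by checking that each of these constructions preserves downward closure. The substantive part is then to show that $\sim_{phhp}$ is preserved by the three operators of $BAPTC$, namely $\cdot$, $+$ and $\boxplus_{\pi}$: assuming $x\sim_{phhp} x'$ and $y\sim_{phhp} y'$, I would prove $x\cdot y\sim_{phhp} x'\cdot y'$, $x+y\sim_{phhp} x'+y'$ and $x\boxplus_{\pi} y\sim_{phhp} x'\boxplus_{\pi} y'$.

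Because a probabilistic hhp-bisimulation is by definition a downward closed probabilistic hp-bisimulation (Definition \ref{PHHPB}), I would build directly on the already-established congruence of $\sim_{php}$. For each operator I would start from the witnessing downward closed probabilistic hp-bisimulations $R_1$ relating $x$ to $x'$ and $R_2$ relating $y$ to $y'$, and assemble a candidate posetal relation on the configurations of the compound term. For $+$ and $\boxplus_{\pi}$ this candidate is essentially the union of suitably re-based copies of $R_1$ and $R_2$ together with the pair of empty configurations, while for sequential composition $\cdot$ it consists of the triples inherited from $R_1$ during the first phase and, once the $x$-part has terminated with $\surd$, the triples inherited from $R_2$ for the continuation, with the two isomorphisms composed along the split. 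I would then verify the four clauses of Definition \ref{PHHPB}: the action-transition matching (clause 1), the probabilistic-transition matching via $\rightsquigarrow$ (clause 2), the equality of cumulative probabilities $\mu(C_1,C)=\mu(C_2,C)$ on $R$-classes (clause 3) using the PDF definitions in Table \ref{PDFBAPTC}, and the termination condition $[\surd]_R=\{\surd\}$ (clause 4); the first three are inherited componentwise from the $\sim_{php}$ congruence argument.

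The hard part will be the clause requiring downward closure of the assembled relation, which is precisely what separates this theorem from the $\sim_{php}$ case. Since $R_1$ and $R_2$ are each downward closed, the unions used for $+$ and $\boxplus_{\pi}$ are immediately downward closed, because any triple pointwise below an element of the union lies below an element of $R_1$ or of $R_2$ and hence already belongs to that relation. For sequential composition the subtlety is that a triple $(C_1,f,C_2)$ pointwise below one whose configurations straddle the $\cdot$-boundary must be shown to again have the admissible split form; here I would argue that restricting the smaller triple to the first phase keeps it inside the $R_1$-part and that the composed isomorphism restricts correctly along the split, so that downward closure is preserved. Once downward closure is confirmed, the constructed relations are genuine probabilistic hhp-bisimulations containing $(\emptyset,\emptyset,\emptyset)$, which establishes the three preservation properties and hence the congruence.
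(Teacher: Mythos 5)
Your proposal follows exactly the same route as the paper: reduce the theorem to showing that $\sim_{phhp}$ is an equivalence relation and is preserved by $\cdot$, $+$ and $\boxplus_{\pi}$. The paper states this reduction and then omits the verification as trivial, so your filled-in construction of the witnessing downward closed relations (and in particular your attention to why downward closure survives the sequential-composition split) simply supplies the details the paper leaves to the reader; it is consistent with, and more complete than, the published argument.
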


\begin{proof}
It is easy to see that probabilistic hhp-bisimulation is an equivalent relation on $BAPTC$ terms, we only need to prove that $\sim_{phhp}$ is preserved by the operators $\cdot$, $+$ and
$\boxplus_{\pi}$.
That is, if $x\sim_{phhp} x'$ and $y\sim_{phhp}y'$, we need to prove that $x\cdot y\sim_{phhp}x'\cdot y'$, $x+ y\sim_{phhp}x'+ y'$ and $x\boxplus_{\pi} y\sim_{phhp}x'\boxplus_{\pi} y'$. The proof
is quite trivial and we omit it.
\end{proof}

\begin{theorem}[Soundness of $BAPTC$ modulo probabilistic hhp-bisimulation equivalence]\label{SBAPTCHHPBE}
Let $x$ and $y$ be $BAPTC$ terms. If $BAPTC\vdash x=y$, then $x\sim_{phhp} y$.
\end{theorem}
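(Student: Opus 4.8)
The plan is to follow the same template as the preceding soundness results for $\sim_{pp}$, $\sim_{ps}$ and $\sim_{php}$, exploiting that probabilistic hhp-bisimulation has already been shown to be both an equivalence and a congruence with respect to $BAPTC$. Granting congruence, it suffices to verify that each individual axiom in Table \ref{AxiomsForBAPTC} is sound modulo $\sim_{phhp}$: for every axiom $s = t$ I would exhibit a probabilistic hereditary history-preserving bisimulation $R$ relating the empty configurations of the event structures denoted by $s$ and $t$, and then close the derivation by an induction in which congruence lets equational replacement be carried through every context.

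First I would dispatch the purely equational axioms $A1$--$A5$, which involve only $+$ and $\cdot$. For these the underlying event structures contain no genuinely concurrent events, so every reachable configuration is generated by a linearly ordered (or mutually conflicting) family of events, and the order-isomorphisms $f$ arising in the posetal product are forced to be the obvious label-preserving bijections. I would take $R$ to be the set of triples $(C_1, f, C_2)$ where $C_1$, $C_2$ are configurations reached by matching action transitions on the two sides and $f$ is the induced isomorphism; then clause (1) of Definition \ref{PHHPB} (matching $C_1 \xrightarrow{e_1} C_1'$ by $C_2 \xrightarrow{e_2} C_2'$ with $f[e_1 \mapsto e_2]$) is exactly the computation already performed for $\sim_{php}$, and clause (4) forces $[\surd]_R = \{\surd\}$ at once.

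Next I would treat the genuinely probabilistic axioms $PA1$--$PA5$. The content beyond the hp case is clauses (2) and (3): the probabilistic transitions $\rightsquigarrow$ must be matched, and the cumulative probabilities must agree, $\mu(C_1, C) = \mu(C_2, C)$ for each $R$-class $C$. I would verify these from the PDF definitions in Table \ref{PDFBAPTC}. For $PA1$ ($x \boxplus_\pi y = y \boxplus_{1-\pi} x$) the required identity is just $\pi\,\mu(x,z) + (1-\pi)\,\mu(y,z) = (1-\pi)\,\mu(y,z) + \pi\,\mu(x,z)$; for $PA2$ it is the standard reparametrization by $\frac{\pi}{\pi+\rho-\pi\rho}$ and $\pi+\rho-\pi\rho$; and $PA3$ follows from idempotency of the cumulative weight. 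The distributive laws $PA4$, $PA5$ (and $A4$) combine both analyses: a probabilistic choice is pushed through sequential or alternative composition, and I would check that the $\rightsquigarrow$-transition followed by the subsequent action transition coincides up to $R$ on both sides.

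The only step genuinely specific to hhp rather than hp, and hence the main obstacle, is downward closure of each $R$ constructed above: whenever $(C_1', f', C_2') \in R$ and $(C_1, f, C_2) \subseteq (C_1', f', C_2')$ pointwise, one must have $(C_1, f, C_2) \in R$. Here I would use that no operator of $BAPTC$ introduces concurrency, so along any matched run the configurations grow one event at a time in a forced order; every pointwise-smaller triple is therefore itself a matched configuration pair occurring earlier in the same run and already lies in $R$ by construction. Thus each $R$ is automatically downward closed and is a bona fide probabilistic hhp-bisimulation, and with soundness of every axiom established together with congruence, $BAPTC \vdash x = y$ yields $x \sim_{phhp} y$ by induction on the length of the derivation.
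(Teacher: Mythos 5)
Your proposal follows exactly the paper's route: since $\sim_{phhp}$ is an equivalence and a congruence, soundness reduces to checking each axiom of Table \ref{AxiomsForBAPTC} modulo $\sim_{phhp}$, which the paper then declares trivial and omits. Your additional work --- the explicit relations for $A1$--$A5$, the PDF computations for $PA1$--$PA5$, and the observation that downward closure is automatic because $BAPTC$ introduces no concurrency --- correctly supplies the omitted verification and is consistent with the paper's argument.
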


\begin{proof}
Since probabilistic hhp-bisimulation $\sim_{phhp}$ is both an equivalent and a congruent relation, we only need to check if each axiom in Table \ref{AxiomsForBAPTC} is sound modulo
probabilistic hhp-bisimulation equivalence. It is quite trivial and we omit it.
\end{proof}

\begin{theorem}[Completeness of $BAPTC$ modulo probabilistic hhp-bisimulation equivalence]\label{CBAPTCHHPBE}
Let $p$ and $q$ be closed $BAPTC$ terms, if $p\sim_{phhp} q$ then $p=q$.
\end{theorem}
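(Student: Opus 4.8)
The plan is to follow exactly the template of the three preceding completeness proofs (Theorems \ref{CBAPTCPBE}, \ref{CBAPTCSBE} and \ref{CBAPTCHPBE}), which share an identical skeleton. First I would invoke the elimination theorem for $BAPTC$ (Theorem \ref{ETBAPTC}): every closed term $p$ satisfies $BAPTC\vdash p=p'$ for some basic term $p'$, so it suffices to establish the claim for closed \emph{basic} terms (Definition \ref{BTBAPTC}). Next I would normalize: working modulo associativity and commutativity of $+$ (written $=_{AC}$), and using $PA4$, $PA5$ to push $\boxplus_{\pi}$ to the top and $A4$, $A5$ to shape the sequential part, each basic term can be put in the form $s_1\boxplus_{\pi_1}\cdots\boxplus_{\pi_{k-1}} s_k$ where each $s_i$ is a sum $t_1+\cdots+t_l$ and each summand $t_j$ is either an atomic event $e$ or of the form $u_1\cdot u_2$.

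The core lemma to prove is that for normal forms $n,n'$, if $n\sim_{phhp} n'$ then $n=_{AC}n'$, by induction on the combined size of $n$ and $n'$. The summand-matching argument is the same as in the hp-case. For a summand $e$ of $n$ we have $n\rightsquigarrow\breve{e}\xrightarrow{e}\surd$, and $n\sim_{phhp} n'$ forces $n'\rightsquigarrow\breve{e}\xrightarrow{e}\surd$, so $n'$ carries the same summand; for a summand $t_1\cdot t_2$ we have $n\rightsquigarrow\breve{t_1}\xrightarrow{t_1}t_2$, whence $n'\rightsquigarrow\breve{t_1}\xrightarrow{t_1}t_2'$ with $t_2\sim_{phhp} t_2'$, and the induction hypothesis gives $t_2=_{AC}t_2'$. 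Symmetry yields the reverse inclusion of summands, while agreement of the probabilistic branches $s_i$ comes from clause (2) (probabilistic transitions) and clause (3) ($\mu(C_1,C)=\mu(C_2,C)$ on $R$-classes) of Definition \ref{PHHPB}. Finally I would assemble: given basic $s\sim_{phhp} t$ with normal forms $n,n'$ (so $s=n$, $t=n'$ provably), soundness (Theorem \ref{SBAPTCHHPBE}) gives $n\sim_{phhp} s\sim_{phhp} t\sim_{phhp} n'$, and the core lemma then delivers $n=_{AC}n'$, hence $s=n=_{AC}n'=t$.

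The only point requiring attention beyond the hp-proof is the downward-closedness built into hhp-bisimulation (Definition \ref{PHHPB}), and I expect this to be the main obstacle only nominally. Because $BAPTC$ has no composition operator $\parallel$, every configuration reachable along a closed basic term is a chain of causally ordered single events; the posetal isomorphisms in play are therefore isomorphisms of linear orders, which are automatically closed under passing to sub-configurations. Consequently a probabilistic hp-bisimulation between normal forms is already downward closed, so no extra verification is needed and the summand-matching above suffices. This is precisely why the four equivalences $\sim_{pp}$, $\sim_{ps}$, $\sim_{php}$, $\sim_{phhp}$ can be handled by one and the same argument at the level of $BAPTC$, and I would state the hhp-case by explicitly noting this reduction rather than reworking the inductive matching from scratch.
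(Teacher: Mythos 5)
Your proposal matches the paper's proof essentially step for step: reduce to closed basic terms via the elimination theorem, put terms in the $\boxplus$-of-sums normal form modulo $=_{AC}$, show by induction on size that $\sim_{phhp}$-equivalent normal forms coincide up to $=_{AC}$ by matching summands $e$ and $t_1\cdot t_2$, and conclude via soundness. The additional remark on why downward closure is automatic for $BAPTC$ normal forms is a sensible justification of a point the paper leaves implicit, but it does not change the argument.
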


\begin{proof}
Firstly, by the elimination theorem of $BAPTC$, we know that for each closed $BAPTC$ term $p$, there exists a closed basic $BAPTC$ term $p'$, such that $BAPTC\vdash p=p'$, so, we only
need to consider closed basic $BAPTC$ terms.

The basic terms (see Definition \ref{BTBAPTC}) modulo associativity and commutativity (AC) of conflict $+$ (defined by axioms $A1$ and $A2$ in Table \ref{AxiomsForBAPTC}), and this
equivalence is denoted by $=_{AC}$. Then, each equivalence class $s$ modulo AC of $+$ has the following normal form

$$s_1\boxplus_{\pi_1}\cdots\boxplus_{\pi_{k-1}} s_k$$

with each $s_i$ has the following form

$$t_1+\cdots+ t_l$$

with each $t_j$ either an atomic event or of the form $u_1\cdot u_2$, and each $t_j$ is called the summand of $s$.

Now, we prove that for normal forms $n$ and $n'$, if $n\sim_{phhp} n'$ then $n=_{AC}n'$. It is sufficient to induct on the sizes of $n$ and $n'$.

\begin{itemize}
  \item Consider a summand $e$ of $n$. Then $n\rightsquigarrow\breve{e}\xrightarrow{e}\surd$, so $n\sim_{phhp} n'$ implies $n'\rightsquigarrow\breve{e}\xrightarrow{e}\surd$, meaning that
  $n'$ also contains the summand $e$.
  \item Consider a summand $t_1\cdot t_2$ of $n$. Then $n\rightsquigarrow\breve{t_1}\xrightarrow{t_1}t_2$, so $n\sim_{phhp} n'$ implies $n'\rightsquigarrow\breve{t_1}\xrightarrow{t_1}t_2'$
  with $t_2\sim_{phhp} t_2'$, meaning that $n'$ contains a summand $t_1\cdot t_2'$. Since $t_2$ and $t_2'$ are normal forms and have sizes smaller than $n$ and $n'$, by
  the induction hypotheses $t_2\sim_{phhp} t_2'$ implies $t_2=_{AC} t_2'$.
\end{itemize}

So, we get $n=_{AC} n'$.

Finally, let $s$ and $t$ be basic terms, and $s\sim_{phhp} t$, there are normal forms $n$ and $n'$, such that $s=n$ and $t=n'$. The soundness theorem of $BAPTC$ modulo probabilistic
pomset bisimulation equivalence (see Theorem \ref{SBAPTCHHPBE}) yields $s\sim_{phhp} n$ and $t\sim_{phhp} n'$, so $n\sim_{phhp} s\sim_{phhp} t\sim_{phhp} n'$. Since if $n\sim_{phhp} n'$ then
$n=_{AC}n'$, $s=n=_{AC}n'=t$, as desired.
\end{proof}

\subsection{Algebra for Parallelism in Probabilistic True Concurrency}\label{apptc}

In this section, we will discuss parallelism in probabilistic true concurrency. The resulted algebra is called Algebra for Parallelism in Probabilistic True Concurrency, abbreviated $APPTC$.

\subsubsection{Axiom System of Parallelism}

We design the axioms of parallelism in Table \ref{AxiomsForPParallelism}, including algebraic laws for parallel operator $\parallel$, communication operator $\mid$, conflict elimination
operator $\Theta$ and unless operator $\triangleleft$, and also the whole parallel operator $\between$. Since the communication between two communicating events in different parallel
branches may cause deadlock (a state of inactivity), which is caused by mismatch of two communicating events or the imperfectness of the communication channel. We introduce a new
constant $\delta$ to denote the deadlock, and let the atomic event $e\in \mathbb{E}\cup\{\delta\}$.

\begin{center}
    \begin{table}
        \begin{tabular}{@{}ll@{}}
            \hline No. &Axiom\\
            $A3$ & $e+e=e$\\
            $A6$ & $x+ \delta = x$\\
            $A7$ & $\delta\cdot x =\delta$\\
            $P1$ & $(x+x=x,y+y=y)\quad x\between y = x\parallel y + x\mid y$\\
            $P2$ & $x\parallel y = y \parallel x$\\
            $P3$ & $(x\parallel y)\parallel z = x\parallel (y\parallel z)$\\
            $P4$ & $(x+x=x,y+y=y)\quad x\parallel y = x\leftmerge y + y\leftmerge x$\\
            $P5$ & $(e_1\leq e_2)\quad e_1\leftmerge (e_2\cdot y) = (e_1\leftmerge e_2)\cdot y$\\
            $P6$ & $(e_1\leq e_2)\quad (e_1\cdot x)\leftmerge e_2 = (e_1\leftmerge e_2)\cdot x$\\
            $P7$ & $(e_1\leq e_2)\quad (e_1\cdot x)\leftmerge (e_2\cdot y) = (e_1\leftmerge e_2)\cdot (x\between y)$\\
            $P8$ & $(x+ y)\leftmerge z = (x\leftmerge z)+ (y\leftmerge z)$\\
            $P9$ & $\delta\leftmerge x = \delta$\\
            $C10$ & $e_1\mid e_2 = \gamma(e_1,e_2)$\\
            $C11$ & $e_1\mid (e_2\cdot y) = \gamma(e_1,e_2)\cdot y$\\
            $C12$ & $(e_1\cdot x)\mid e_2 = \gamma(e_1,e_2)\cdot x$\\
            $C13$ & $(e_1\cdot x)\mid (e_2\cdot y) = \gamma(e_1,e_2)\cdot (x\between y)$\\
            $C14$ & $(x+ y)\mid z = (x\mid z) + (y\mid z)$\\
            $C15$ & $x\mid (y+ z) = (x\mid y)+ (x\mid z)$\\
            $C16$ & $\delta\mid x = \delta$\\
            $C17$ & $x\mid\delta = \delta$\\
            $PM1$ & $x\parallel (y\boxplus_{\pi} z)=(x\parallel y)\boxplus_{\pi}(x\parallel z)$\\
            $PM2$ & $(x\boxplus_{\pi} y)\parallel z=(x\parallel z)\boxplus_{\pi}(y\parallel z)$\\
            $PM3$ & $x\mid (y\boxplus_{\pi} z)=(x\mid y)\boxplus_{\pi}(x\mid z)$\\
            $PM4$ & $(x\boxplus_{\pi} y)\mid z=(x\mid z)\boxplus_{\pi}(y\mid z)$\\
            $CE18$ & $\Theta(e) = e$\\
            $CE19$ & $\Theta(\delta) = \delta$\\
            $CE20$ & $\Theta(x+ y) = \Theta(x)\triangleleft y + \Theta(y)\triangleleft x$\\
            $PCE1$ & $\Theta(x\boxplus_{\pi} y) = \Theta(x)\triangleleft y \boxplus_{\pi} \Theta(y)\triangleleft x$\\
            $CE21$ & $\Theta(x\cdot y)=\Theta(x)\cdot\Theta(y)$\\
            $CE22$ & $\Theta(x\leftmerge y) = ((\Theta(x)\triangleleft y)\leftmerge y)+ ((\Theta(y)\triangleleft x)\leftmerge x)$\\
            $CE23$ & $\Theta(x\mid y) = ((\Theta(x)\triangleleft y)\mid y)+ ((\Theta(y)\triangleleft x)\mid x)$\\
            $U24$ & $(\sharp(e_1,e_2))\quad e_1\triangleleft e_2 = \tau$\\
            $U25$ & $(\sharp(e_1,e_2),e_2\leq e_3)\quad e_1\triangleleft e_3 = e_1$\\
            $U26$ & $(\sharp(e_1,e_2),e_2\leq e_3)\quad e3\triangleleft e_1 = \tau$\\
            $PU1$ & $(\sharp_{\pi}(e_1,e_2))\quad e_1\triangleleft e_2 = \tau$\\
            $PU2$ & $(\sharp_{\pi}(e_1,e_2),e_2\leq e_3)\quad e_1\triangleleft e_3 = e_1$\\
            $PU3$ & $(\sharp_{\pi}(e_1,e_2),e_2\leq e_3)\quad e_3\triangleleft e_1 = \tau$\\
            $U27$ & $e\triangleleft \delta = e$\\
            $U28$ & $\delta \triangleleft e = \delta$\\
            $U29$ & $(x+ y)\triangleleft z = (x\triangleleft z)+ (y\triangleleft z)$\\
            $PU4$ & $(x\boxplus_{\pi} y)\triangleleft z = (x\triangleleft z)\boxplus_{\pi} (y\triangleleft z)$\\
            $U30$ & $(x\cdot y)\triangleleft z = (x\triangleleft z)\cdot (y\triangleleft z)$\\
            $U31$ & $(x\leftmerge y)\triangleleft z = (x\triangleleft z)\leftmerge (y\triangleleft z)$\\
            $U32$ & $(x\mid y)\triangleleft z = (x\triangleleft z)\mid (y\triangleleft z)$\\
            $U33$ & $x\triangleleft (y+ z) = (x\triangleleft y)\triangleleft z$\\
            $PU5$ & $x\triangleleft (y\boxplus_{\pi} z) = (x\triangleleft y)\triangleleft z$\\
            $U34$ & $x\triangleleft (y\cdot z)=(x\triangleleft y)\triangleleft z$\\
            $U35$ & $x\triangleleft (y\leftmerge z) = (x\triangleleft y)\triangleleft z$\\
            $U36$ & $x\triangleleft (y\mid z) = (x\triangleleft y)\triangleleft z$\\
        \end{tabular}
        \caption{Axioms of parallelism}
        \label{AxiomsForPParallelism}
    \end{table}
\end{center}

We explain the intuitions of the axioms of parallelism in Table \ref{AxiomsForPParallelism} in the following. The axiom $A6$ says that the deadlock $\delta$ is redundant in the process
term $t+ \delta$. $A7$ says that the deadlock blocks all behaviors of the process term $\delta\cdot t$.

The axiom $P1$ is the definition of the whole parallelism $\between$, which says that $s\between t$ either is the form of $s\parallel t$ or $s\mid t$. $P2$ says that $\parallel$
satisfies commutative law, while $P3$ says that $\parallel$ satisfies associativity. $P4$, $P5$ and $P6$ are the defining axioms of $\parallel$, say the $s\parallel t$ executes $s$ and
$t$ concurrently. $P8$ is the right of $\leftmerge$ to $+$. $P9$ says that $\delta\leftmerge t$ blocks any event.

$C10$, $C11$, $C12$ and $C13$ are the defining axioms of the communication operator $\mid$ which say that $s\mid t$ makes a communication between $s$ and $t$. $C14$ and $C15$ are the
right and left distributivity of $\mid$ to $+$. $C16$ and $C17$ say that both $\delta\mid t$ and $t\mid\delta$ all block any event.

$CE18$ and $CE19$ say that the conflict elimination operator $\Theta$ leaves atomic events and the deadlock unchanged. $CE20-CE23$ are the functions of $\Theta$ acting on the operators
$+$, $\cdot$, $\parallel$ and $\mid$. $U24$, $U25$ and $U26$ are the defining laws of the unless operator $\triangleleft$, in $U24$ and $U26$, there is a new constant $\tau$, the
silent step, we will discuss $\tau$ in details in section \ref{abs}, in these two axioms, we just need to remember that $\tau$ really keeps silent. $U27$ says that the deadlock
$\delta$ cannot block any event in the process term $e\triangleleft\delta$, while $U28$ says that $\delta\triangleleft e$ does not exhibit any behavior. $U29-U36$ are the disguised
right and left distributivity of $\triangleleft$ to the operators $+$, $\cdot$, $\parallel$ and $\mid$.

The axiom $A3$ in the above section is replaced by the new one, because of the introduction of $\delta$. The axioms $PM1$, $PM2$, $PM3$ and $PM4$ are the distributivity of $\parallel$ and
$\mid$ to $\boxplus_{\pi}$. $PCE1$ is the function of $\Theta$ acting on the operator $\boxplus_{\pi}$. $PU1$, $PU2$ and $PU3$ are the defining laws of the unless operator $\triangleleft$
for $\sharp_{\pi}$. $PU4$ and $PU5$ is the disguised right and left distributivity of $\triangleleft$ to the operators $\boxplus_{\pi}$.

\begin{definition}[Basic terms of $APPTC$]\label{BTAPPTC}
The set of basic terms of $APPTC$, $\mathcal{B}(APPTC)$, is inductively defined as follows:
\begin{enumerate}
  \item $\mathbb{E}\subset\mathcal{B}(APPTC)$;
  \item if $e\in \mathbb{E}, t\in\mathcal{B}(APPTC)$ then $e\cdot t\in\mathcal{B}(APPTC)$;
  \item if $t,s\in\mathcal{B}(APPTC)$ then $t+ s\in\mathcal{B}(APPTC)$;
  \item if $t,s\in\mathcal{B}(APPTC)$ then $t\boxplus_{\pi} s\in\mathcal{B}(APPTC)$;
  \item if $t,s\in\mathcal{B}(APPTC)$ then $t\leftmerge s\in\mathcal{B}(APPTC)$.
\end{enumerate}
\end{definition}

Based on the definition of basic terms for $APPTC$ (see Definition \ref{BTAPPTC}) and axioms of parallelism (see Table \ref{AxiomsForPParallelism}), we can prove the elimination
theorem of parallelism.

\begin{theorem}[Elimination theorem of parallelism]\label{ETPParallelism}
Let $p$ be a closed $APPTC$ term. Then there is a basic $APPTC$ term $q$ such that $APPTC\vdash p=q$.
\end{theorem}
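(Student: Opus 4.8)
The plan is to prove Theorem \ref{ETPParallelism} by exactly the three-step strategy already used for the elimination theorem of $BAPTC$ (Theorem \ref{ETBAPTC}): turn the axioms of parallelism into a term rewriting system, show that this TRS is strongly normalizing, and then verify that every normal form is a basic $APPTC$ term. Since each rewrite step is a left-to-right instance of a provable equation, the normal form $q$ of a closed term $p$ will satisfy $APPTC\vdash p=q$, and $q\in\mathcal{B}(APPTC)$.

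First I would orient the axioms of Table \ref{AxiomsForPParallelism} from left to right, adjoining the inherited rules $RA3$--$RA5$ and $RPA1$--$RPA5$ of Table \ref{TRSForBAPTC}. The resulting TRS is designed to drive out exactly the operators that do \emph{not} occur in Definition \ref{BTAPPTC}, namely $\between$, $\parallel$, $\mid$, $\Theta$ and $\triangleleft$, while keeping $\cdot$, $+$, $\boxplus_{\pi}$ and $\leftmerge$ in their permitted shapes. Concretely, $P1$ eliminates $\between$; $P4$ rewrites $\parallel$ into left merges; $P5$--$P9$ and $C10$--$C17$ push $\leftmerge$ and $\mid$ toward the heads of terms; $PM1$--$PM4$, $PU4$, $PU5$ and $PCE1$ distribute the parallel operators over $\boxplus_{\pi}$; $CE18$--$CE23$ compute $\Theta$; and $U24$--$U36$ together with $PU1$--$PU3$ evaluate the unless operator $\triangleleft$.

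The hard part will be the strong-normalization step. The paper's template simply asserts that a signature ordering with $s>_{lpo}t$ for each rule exists and invokes Theorem \ref{SN}, but here a single total precedence does not close directly: the rules $P7$ and $C13$ reintroduce $\between$ on their right-hand sides (e.g.\ $(e_1\cdot x)\leftmerge(e_2\cdot y)\rightarrow(e_1\leftmerge e_2)\cdot(x\between y)$, and similarly for $\mid$), while $P1$ forces $\between$ above $\parallel$ and $P4$ forces $\parallel$ above $\leftmerge$, producing the offending cycle $\leftmerge>\between>\parallel>\leftmerge$. The way I would resolve this is to observe that in $P7$ and $C13$ the reintroduced $\between$ is applied only to the \emph{proper} components $x,y$ of the redex, so the combined size of the arguments sitting beneath parallel-type operators strictly decreases across every rule. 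I would therefore replace the naive path order by a lexicographic termination measure whose first component is that weighted size (equivalently, a polynomial interpretation counting $\between/\parallel/\mid$-nestings scaled by argument size) and whose second component is the ordinary lexicographic path order on the remaining structure, with $\cdot$ and $\leftmerge$ given lexicographic status on their first argument as in the $BAPTC$ case. This breaks the circularity while still dominating $RA4$, $RPA4$, $RPA5$ and the distributive rules.

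With termination established, I would finish as in Theorem \ref{ETBAPTC}: assume a normal form $p$ is not basic, take a smallest non-basic subterm $p'$ (so all its proper subterms are basic), and case-split on the head symbol of $p'$. If the head is $\between$, $\parallel$, $\mid$, $\Theta$ or $\triangleleft$, then because the basic arguments have one of the finitely many shapes $e$, $e\cdot t$, $t+s$, $t\boxplus_{\pi}s$, $t\leftmerge s$, the matching defining rule applies, contradicting normality. Since $t+s$, $t\boxplus_{\pi}s$ and $t\leftmerge s$ with basic $t,s$ are already basic, the only kept operator that can head a non-basic $p'$ is $\cdot$; here $p'\equiv p_1\cdot p_2$ with $p_1\notin\mathbb{E}$, and a sub-case analysis on $p_1$ gives $RA5$ ($p_1\equiv e\cdot p_1'$), $RA4$ ($p_1\equiv p_1'+p_1''$) or $RPA4$ ($p_1\equiv p_1'\boxplus_{\pi}p_1''$), again contradicting normality. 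The one place needing extra care is the shape $p_1\equiv p_1'\leftmerge p_1''$ (and, analogously, stray atomic instances $e_1\triangleleft e_2$ left by the guarded $\triangleleft$-rules); I would handle these by checking that the applicable side conditions of $P5$--$P7$, $U24$--$U28$ exhaust the reachable configurations, so that no such stuck term survives in a normal form. This yields $p\in\mathcal{B}(APPTC)$ and completes the proof.
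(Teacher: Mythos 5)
Your overall decomposition is exactly the paper's: orient the axioms of Table \ref{AxiomsForPParallelism} into the TRS of Table \ref{TRSForAPPTC}, prove strong normalization, and then show by a smallest-non-basic-subterm analysis that every normal form lies in $\mathcal{B}(APPTC)$; your final case split on the head symbol coincides with the paper's almost clause for clause. Where you genuinely diverge is the termination step, and there your diagnosis is correct: the paper simply declares the ordering $\leftmerge>\cdot>+>\boxplus_{\pi}$ (not even ranking $\between$, $\parallel$, $\mid$, $\Theta$, $\triangleleft$) and asserts $p>_{lpo}q$ for every rule, but $RP1$ forces $\between>\parallel,\mid$, $RP4$ forces $\parallel>\leftmerge$, and $RP7$/$RC13$ force $\leftmerge>\between$ and $\mid>\between$, so no strict precedence can orient the whole system and the cited Theorem \ref{SN} does not apply as stated. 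Your repair---a lexicographic measure whose first component is a weight on parallel-operator occurrences scaled (in effect, multiplicatively or exponentially) by argument size, exploiting that $P7$ and $C13$ reintroduce $\between$ only on proper subterms of the redex---is the standard ACP-style fix and buys an argument that actually closes, at the price of having to check the weight inequalities $w(\between)>w(\parallel)+w(\mid)$ and $w(\parallel)>2w(\leftmerge)$ forced by $RP1$ and $RP4$. Two loose ends remain in both treatments, which you should make explicit: the pure commutativity rules $RP2$ and $RP3$ are size- and weight-preserving (indeed $RP2$ loops in two steps), so the rewriting must be understood modulo commutativity/associativity of $\parallel$ rather than as ordinary reduction; and in the normal-form analysis a term $e_1\triangleleft e_2$ with neither $\sharp(e_1,e_2)$ nor $\sharp_{\pi}(e_1,e_2)$ holding and $e_2\neq\delta$ matches no rewrite rule yet $\triangleleft$ is not a constructor of basic terms, so the claim that every normal form is basic needs the exhaustiveness check of the side conditions that you flag but the paper silently assumes.
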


\begin{proof}
(1) Firstly, suppose that the following ordering on the signature of $APPTC$ is defined: $\leftmerge > \cdot > +>\boxplus_{\pi}$ and the symbol $\parallel$ is given the lexicographical
status for the first argument, then for each rewrite rule $p\rightarrow q$ in Table \ref{TRSForAPPTC} relation $p>_{lpo} q$ can easily be proved. We obtain that the term rewrite system
shown in Table \ref{TRSForAPPTC} is strongly normalizing, for it has finitely many rewriting rules, and $>$ is a well-founded ordering on the signature of $APPTC$, and if $s>_{lpo} t$,
for each rewriting rule $s\rightarrow t$ is in Table \ref{TRSForAPPTC} (see Theorem \ref{SN}).

\begin{center}
    \begin{table}
        \begin{tabular}{@{}ll@{}}
            \hline No. &Rewriting Rule\\
            $RA3$ & $e+e\rightarrow e$\\
            $RA6$ & $x+ \delta \rightarrow x$\\
            $RA7$ & $\delta\cdot x \rightarrow\delta$\\
            $RP1$ & $(x+x=x,y+y=y)\quad x\between y \rightarrow x\parallel y + x\mid y$\\
            $RP2$ & $x\parallel y \rightarrow y \parallel x$\\
            $RP3$ & $(x\parallel y)\parallel z \rightarrow x\parallel (y\parallel z)$\\
            $RP4$ & $(x+x=x,y+y=y)\quad x\parallel y \rightarrow x\leftmerge y + y\leftmerge x$\\
            $RP5$ & $(e_1\leq e_2)\quad e_1\leftmerge (e_2\cdot y) \rightarrow (e_1\leftmerge e_2)\cdot y$\\
            $RP6$ & $(e_1\leq e_2)\quad (e_1\cdot x)\leftmerge e_2 \rightarrow (e_1\leftmerge e_2)\cdot x$\\
            $RP7$ & $(e_1\leq e_2)\quad (e_1\cdot x)\leftmerge (e_2\cdot y) \rightarrow (e_1\leftmerge e_2)\cdot (x\between y)$\\
            $RP8$ & $(x+ y)\leftmerge z \rightarrow (x\leftmerge z)+ (y\leftmerge z)$\\
            $RP9$ & $\delta\leftmerge x \rightarrow \delta$\\
            $RC10$ & $e_1\mid e_2 \rightarrow \gamma(e_1,e_2)$\\
            $RC11$ & $e_1\mid (e_2\cdot y) \rightarrow \gamma(e_1,e_2)\cdot y$\\
            $RC12$ & $(e_1\cdot x)\mid e_2 \rightarrow \gamma(e_1,e_2)\cdot x$\\
            $RC13$ & $(e_1\cdot x)\mid (e_2\cdot y) \rightarrow \gamma(e_1,e_2)\cdot (x\between y)$\\
            $RC14$ & $(x+ y)\mid z \rightarrow (x\mid z) + (y\mid z)$\\
            $RC15$ & $x\mid (y+ z) \rightarrow (x\mid y)+ (x\mid z)$\\
            $RC16$ & $\delta\mid x \rightarrow \delta$\\
            $RC17$ & $x\mid\delta \rightarrow \delta$\\
            $RPM1$ & $x\parallel (y\boxplus_{\pi} z)\rightarrow(x\parallel y)\boxplus_{\pi}(x\parallel z)$\\
            $RPM2$ & $(x\boxplus_{\pi} y)\parallel z\rightarrow(x\parallel z)\boxplus_{\pi}(y\parallel z)$\\
            $RPM3$ & $x\mid (y\boxplus_{\pi} z)\rightarrow(x\mid y)\boxplus_{\pi}(x\mid z)$\\
            $RPM4$ & $(x\boxplus_{\pi} y)\mid z\rightarrow(x\mid z)\boxplus_{\pi}(y\mid z)$\\
            $RCE18$ & $\Theta(e) \rightarrow e$\\
            $RCE19$ & $\Theta(\delta) \rightarrow \delta$\\
            $RCE20$ & $\Theta(x+ y) \rightarrow \Theta(x)\triangleleft y + \Theta(y)\triangleleft x$\\
            $RPCE1$ & $\Theta(x\boxplus_{\pi} y) \rightarrow \Theta(x)\triangleleft y \boxplus_{\pi} \Theta(y)\triangleleft x$\\
            $RCE21$ & $\Theta(x\cdot y)\rightarrow\Theta(x)\cdot\Theta(y)$\\
            $RCE22$ & $\Theta(x\leftmerge y) \rightarrow ((\Theta(x)\triangleleft y)\leftmerge y)+ ((\Theta(y)\triangleleft x)\leftmerge x)$\\
            $RCE23$ & $\Theta(x\mid y) \rightarrow ((\Theta(x)\triangleleft y)\mid y)+ ((\Theta(y)\triangleleft x)\mid x)$\\
            $RU24$ & $(\sharp(e_1,e_2))\quad e_1\triangleleft e_2 \rightarrow \tau$\\
            $RU25$ & $(\sharp(e_1,e_2),e_2\leq e_3)\quad e_1\triangleleft e_3 \rightarrow e_1$\\
            $RU26$ & $(\sharp(e_1,e_2),e_2\leq e_3)\quad e3\triangleleft e_1 \rightarrow \tau$\\
            $RPU1$ & $(\sharp_{\pi}(e_1,e_2))\quad e_1\triangleleft e_2 \rightarrow \tau$\\
            $RPU2$ & $(\sharp_{\pi}(e_1,e_2),e_2\leq e_3)\quad e_1\triangleleft e_3 \rightarrow e_1$\\
            $RPU3$ & $(\sharp_{\pi}(e_1,e_2),e_2\leq e_3)\quad e_3\triangleleft e_1 \rightarrow \tau$\\
            $RU27$ & $e\triangleleft \delta \rightarrow e$\\
            $RU28$ & $\delta \triangleleft e \rightarrow \delta$\\
            $RU29$ & $(x+ y)\triangleleft z \rightarrow (x\triangleleft z)+ (y\triangleleft z)$\\
            $RPU4$ & $(x\boxplus_{\pi} y)\triangleleft z \rightarrow (x\triangleleft z)\boxplus_{\pi} (y\triangleleft z)$\\
            $RU30$ & $(x\cdot y)\triangleleft z \rightarrow (x\triangleleft z)\cdot (y\triangleleft z)$\\
            $RU31$ & $(x\leftmerge y)\triangleleft z \rightarrow (x\triangleleft z)\leftmerge (y\triangleleft z)$\\
            $RU32$ & $(x\mid y)\triangleleft z \rightarrow (x\triangleleft z)\mid (y\triangleleft z)$\\
            $RU33$ & $x\triangleleft (y+ z) \rightarrow (x\triangleleft y)\triangleleft z$\\
            $RPU5$ & $x\triangleleft (y\boxplus_{\pi} z) \rightarrow (x\triangleleft y)\triangleleft z$\\
            $RU34$ & $x\triangleleft (y\cdot z)\rightarrow(x\triangleleft y)\triangleleft z$\\
            $RU35$ & $x\triangleleft (y\leftmerge z) \rightarrow (x\triangleleft y)\triangleleft z$\\
            $RU36$ & $x\triangleleft (y\mid z) \rightarrow (x\triangleleft y)\triangleleft z$\\
        \end{tabular}
        \caption{Term rewrite system of $APPTC$}
        \label{TRSForAPPTC}
    \end{table}
\end{center}

(2) Then we prove that the normal forms of closed $APPTC$ terms are basic $APPTC$ terms.

Suppose that $p$ is a normal form of some closed $APPTC$ term and suppose that $p$ is not a basic $APPTC$ term. Let $p'$ denote the smallest sub-term of $p$ which is not a basic $APPTC$ term. It implies that each sub-term of $p'$ is a basic $APPTC$ term. Then we prove that $p$ is not a term in normal form. It is sufficient to induct on the structure of $p'$:

\begin{itemize}
  \item Case $p'\equiv e, e\in \mathbb{E}$. $p'$ is a basic $APPTC$ term, which contradicts the assumption that $p'$ is not a basic $APPTC$ term, so this case should not occur.
  \item Case $p'\equiv p_1\cdot p_2$. By induction on the structure of the basic $APPTC$ term $p_1$:
      \begin{itemize}
        \item Subcase $p_1\in \mathbb{E}$. $p'$ would be a basic $APPTC$ term, which contradicts the assumption that $p'$ is not a basic $APPTC$ term;
        \item Subcase $p_1\equiv e\cdot p_1'$. $RA5$ rewriting rule in Table \ref{TRSForBAPTC} can be applied. So $p$ is not a normal form;
        \item Subcase $p_1\equiv p_1'+ p_1''$. $RA4$ rewriting rule in Table \ref{TRSForBAPTC} can be applied. So $p$ is not a normal form;
        \item Subcase $p_1\equiv p_1'\boxplus_{\pi} p_1''$. $RPA4$ and $RPA5$ rewriting rules in Table \ref{TRSForBAPTC} can be applied. So $p$ is not a normal form;
        \item Subcase $p_1\equiv p_1'\leftmerge p_1''$. $p'$ would be a basic $APPTC$ term, which contradicts the assumption that $p'$ is not a basic $APPTC$ term;
        \item Subcase $p_1\equiv p_1'\mid p_1''$. $RC11$ rewrite rule in Table \ref{TRSForAPPTC} can be applied. So $p$ is not a normal form;
        \item Subcase $p_1\equiv \Theta(p_1')$. $RCE19$ and $RCE20$ rewrite rules in Table \ref{TRSForAPPTC} can be applied. So $p$ is not a normal form.
      \end{itemize}
  \item Case $p'\equiv p_1+ p_2$. By induction on the structure of the basic $APPTC$ terms both $p_1$ and $p_2$, all subcases will lead to that $p'$ would be a basic $APPTC$ term,
  which contradicts the assumption that $p'$ is not a basic $APPTC$ term.
  \item Case $p'\equiv p_1\boxplus_{\pi} p_2$. By induction on the structure of the basic $APPTC$ terms both $p_1$ and $p_2$, all subcases will lead to that $p'$ would be a basic $APPTC$ term,
  which contradicts the assumption that $p'$ is not a basic $APPTC$ term.
  \item Case $p'\equiv p_1\leftmerge p_2$. By induction on the structure of the basic $APPTC$ terms both $p_1$ and $p_2$, all subcases will lead to that $p'$ would be a basic $APPTC$
  term, which contradicts the assumption that $p'$ is not a basic $APPTC$ term.
  \item Case $p'\equiv p_1\mid p_2$. By induction on the structure of the basic $APPTC$ terms both $p_1$ and $p_2$, all subcases will lead to that $p'$ would be a basic $APPTC$ term,
  which contradicts the assumption that $p'$ is not a basic $APPTC$ term.
  \item Case $p'\equiv \Theta(p_1)$. By induction on the structure of the basic $APPTC$ term $p_1$, $RCE19-RCE24$ rewrite rules in Table \ref{TRSForAPPTC} can be applied. So $p$ is not
  a normal form.
  \item Case $p'\equiv p_1\triangleleft p_2$. By induction on the structure of the basic $APPTC$ terms both $p_1$ and $p_2$, all subcases will lead to that $p'$ would be a basic
  $APPTC$ term, which contradicts the assumption that $p'$ is not a basic $APPTC$ term.
\end{itemize}
\end{proof}

\subsubsection{Structured Operational Semantics of Parallelism}

Firstly, we give the definition of PDFs in Table \ref{PDFAPPTC}.

\begin{center}
    \begin{table}
        $$\mu(\delta,\breve{\delta})=1$$
        $$\mu(x\between y,x'\parallel y'+x'\mid y')=\mu(x,x')\cdot\mu(y,y')$$
        $$\mu(x\parallel y,x'\leftmerge y+y'\leftmerge x)=\mu(x,x')\cdot \mu(y,y')$$
        $$\mu(x\leftmerge y, x'\leftmerge y)=\mu(x,x')$$
        $$\mu(x\mid y,x'\mid y')=\mu(x,x')\cdot \mu(y,y')$$
        $$\mu(\Theta(x),\Theta(x'))=\mu(x,x')$$
        $$\mu(x\triangleleft y, x'\triangleleft y)=\mu(x,x')$$
        $$\mu(x,y)=0,\textrm{otherwise}$$
        \caption{PDF definitions of $APPTC$}
        \label{PDFAPPTC}
    \end{table}
\end{center}

We give the transition rules of APTC in Table \ref{TRForAPPTC1}, \ref{TRForAPPTC}, it is suitable for all truly concurrent behavioral equivalence, including probabilistic pomset bisimulation,
probabilistic step bisimulation, probabilistic hp-bisimulation and probabilistic hhp-bisimulation.

\begin{center}
    \begin{table}
        $$\frac{x\rightsquigarrow x'\quad y\rightsquigarrow y'}{x\between y\rightsquigarrow x'\parallel y'+x'\mid y'}$$
        $$\frac{x\rightsquigarrow x'\quad y\rightsquigarrow y'}{x\parallel y\rightsquigarrow x'\leftmerge y+y'\leftmerge x}$$
        $$\frac{x\rightsquigarrow x'}{x\leftmerge y\rightsquigarrow x'\leftmerge y}$$
        $$\frac{x\rightsquigarrow x'\quad y\rightsquigarrow y'}{x\mid y\rightsquigarrow x'\mid y'}$$
        $$\frac{x\rightsquigarrow x'}{\Theta(x)\rightsquigarrow \Theta(x')}$$
        $$\frac{x\rightsquigarrow x'}{x\triangleleft y\rightsquigarrow x'\triangleleft y}$$
        \caption{Probabilistic transition rules of APPTC}
        \label{TRForAPPTC1}
    \end{table}
\end{center}

\begin{center}
    \begin{table}
        $$\frac{x\xrightarrow{e_1}\surd\quad y\xrightarrow{e_2}\surd}{x\parallel y\xrightarrow{\{e_1,e_2\}}\surd} \quad\frac{x\xrightarrow{e_1}x'\quad y\xrightarrow{e_2}\surd}{x\parallel y\xrightarrow{\{e_1,e_2\}}x'}$$
        $$\frac{x\xrightarrow{e_1}\surd\quad y\xrightarrow{e_2}y'}{x\parallel y\xrightarrow{\{e_1,e_2\}}y'} \quad\frac{x\xrightarrow{e_1}x'\quad y\xrightarrow{e_2}y'}{x\parallel y\xrightarrow{\{e_1,e_2\}}x'\between y'}$$
        $$\frac{x\xrightarrow{e_1}\surd\quad y\xrightarrow{e_2}\surd \quad(e_1\leq e_2)}{x\leftmerge y\xrightarrow{\{e_1,e_2\}}\surd} \quad\frac{x\xrightarrow{e_1}x'\quad y\xrightarrow{e_2}\surd \quad(e_1\leq e_2)}{x\leftmerge y\xrightarrow{\{e_1,e_2\}}x'}$$
        $$\frac{x\xrightarrow{e_1}\surd\quad y\xrightarrow{e_2}y' \quad(e_1\leq e_2)}{x\leftmerge y\xrightarrow{\{e_1,e_2\}}y'} \quad\frac{x\xrightarrow{e_1}x'\quad y\xrightarrow{e_2}y' \quad(e_1\leq e_2)}{x\leftmerge y\xrightarrow{\{e_1,e_2\}}x'\between y'}$$
        $$\frac{x\xrightarrow{e_1}\surd\quad y\xrightarrow{e_2}\surd}{x\mid y\xrightarrow{\gamma(e_1,e_2)}\surd} \quad\frac{x\xrightarrow{e_1}x'\quad y\xrightarrow{e_2}\surd}{x\mid y\xrightarrow{\gamma(e_1,e_2)}x'}$$
        $$\frac{x\xrightarrow{e_1}\surd\quad y\xrightarrow{e_2}y'}{x\mid y\xrightarrow{\gamma(e_1,e_2)}y'} \quad\frac{x\xrightarrow{e_1}x'\quad y\xrightarrow{e_2}y'}{x\mid y\xrightarrow{\gamma(e_1,e_2)}x'\between y'}$$
        $$\frac{x\xrightarrow{e_1}\surd\quad (\sharp(e_1,e_2))}{\Theta(x)\xrightarrow{e_1}\surd} \quad\frac{x\xrightarrow{e_2}\surd\quad (\sharp(e_1,e_2))}{\Theta(x)\xrightarrow{e_2}\surd}$$
        $$\frac{x\xrightarrow{e_1}x'\quad (\sharp(e_1,e_2))}{\Theta(x)\xrightarrow{e_1}\Theta(x')} \quad\frac{x\xrightarrow{e_2}x'\quad (\sharp(e_1,e_2))}{\Theta(x)\xrightarrow{e_2}\Theta(x')}$$
        $$\frac{x\xrightarrow{e_1}\surd\quad (\sharp_{\pi}(e_1,e_2))}{\Theta(x)\xrightarrow{e_1}\surd} \quad\frac{x\xrightarrow{e_2}\surd\quad (\sharp_{\pi}(e_1,e_2))}{\Theta(x)\xrightarrow{e_2}\surd}$$
        $$\frac{x\xrightarrow{e_1}x'\quad (\sharp_{\pi}(e_1,e_2))}{\Theta(x)\xrightarrow{e_1}\Theta(x')} \quad\frac{x\xrightarrow{e_2}x'\quad (\sharp_{\pi}(e_1,e_2))}{\Theta(x)\xrightarrow{e_2}\Theta(x')}$$
        $$\frac{x\xrightarrow{e_1}\surd \quad y\nrightarrow^{e_2}\quad (\sharp(e_1,e_2))}{x\triangleleft y\xrightarrow{\tau}\surd}
        \quad\frac{x\xrightarrow{e_1}x' \quad y\nrightarrow^{e_2}\quad (\sharp(e_1,e_2))}{x\triangleleft y\xrightarrow{\tau}x'}$$
        $$\frac{x\xrightarrow{e_1}\surd \quad y\nrightarrow^{e_3}\quad (\sharp(e_1,e_2),e_2\leq e_3)}{x\triangleleft y\xrightarrow{e_1}\surd}
        \quad\frac{x\xrightarrow{e_1}x' \quad y\nrightarrow^{e_3}\quad (\sharp(e_1,e_2),e_2\leq e_3)}{x\triangleleft y\xrightarrow{e_1}x'}$$
        $$\frac{x\xrightarrow{e_3}\surd \quad y\nrightarrow^{e_2}\quad (\sharp(e_1,e_2),e_1\leq e_3)}{x\triangleleft y\xrightarrow{\tau}\surd}
        \quad\frac{x\xrightarrow{e_3}x' \quad y\nrightarrow^{e_2}\quad (\sharp(e_1,e_2),e_1\leq e_3)}{x\triangleleft y\xrightarrow{\tau}x'}$$
        $$\frac{x\xrightarrow{e_1}\surd \quad y\nrightarrow^{e_2}\quad (\sharp_{\pi}(e_1,e_2))}{x\triangleleft y\xrightarrow{\tau}\surd}
        \quad\frac{x\xrightarrow{e_1}x' \quad y\nrightarrow^{e_2}\quad (\sharp_{\pi}(e_1,e_2))}{x\triangleleft y\xrightarrow{\tau}x'}$$
        $$\frac{x\xrightarrow{e_1}\surd \quad y\nrightarrow^{e_3}\quad (\sharp_{\pi}(e_1,e_2),e_2\leq e_3)}{x\triangleleft y\xrightarrow{e_1}\surd}
        \quad\frac{x\xrightarrow{e_1}x' \quad y\nrightarrow^{e_3}\quad (\sharp_{\pi}(e_1,e_2),e_2\leq e_3)}{x\triangleleft y\xrightarrow{e_1}x'}$$
        $$\frac{x\xrightarrow{e_3}\surd \quad y\nrightarrow^{e_2}\quad (\sharp_{\pi}(e_1,e_2),e_1\leq e_3)}{x\triangleleft y\xrightarrow{\tau}\surd}
        \quad\frac{x\xrightarrow{e_3}x' \quad y\nrightarrow^{e_2}\quad (\sharp_{\pi}(e_1,e_2),e_1\leq e_3)}{x\triangleleft y\xrightarrow{\tau}x'}$$
        \caption{Action transition rules of APPTC}
        \label{TRForAPPTC}
    \end{table}
\end{center}

\begin{theorem}[Generalization of the algebra for parallelism with respect to $BAPTC$]
The algebra for parallelism is a generalization of $BAPTC$.
\end{theorem}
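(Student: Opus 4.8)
The plan is to read ``generalization'' here as conservative extension and to apply the Conservative Extension Theorem (Theorem \ref{TCE}). Writing $T_0$ for the TSS of $BAPTC$ (the rules in Tables \ref{SETRForBAPTC} and \ref{PTRForBAPTC}) and $T_1$ for the transition rules of the new operators of the algebra for parallelism (Tables \ref{TRForAPPTC1} and \ref{TRForAPPTC}), I would show that $T_0\oplus T_1$ is a conservative extension of $T_0$. This means every transition $t\xrightarrow{a}t'$ (resp. $tP$) with $t$ over the $BAPTC$-signature is already derivable in $T_0$ alone, so the operational semantics, and hence the soundness and completeness results already established for $BAPTC$, are preserved intact; this is exactly what it means for the parallelism algebra to be a generalization of $BAPTC$.

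First I would record that the signature of the parallelism algebra strictly contains that of $BAPTC$: it adjoins the constant $\delta$ together with the function symbols $\between$, $\parallel$, $\leftmerge$, $\mid$, $\Theta$ and $\triangleleft$, none of which occurs in $BAPTC$. Hence any term headed by one of these symbols is \emph{fresh} in the sense of the Freshness definition. I would then verify the two hypotheses of Theorem \ref{TCE}. For condition (1), I would check by inspection that $T_0$ is source-dependent: in each rule of Tables \ref{SETRForBAPTC} and \ref{PTRForBAPTC} every variable occurring in the rule already appears in the source or is propagated along a premise whose left-hand side is source-dependent, so all variables are source-dependent and $T_0$ is a source-dependent TSS. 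For condition (2), I would observe that every rule in $T_1$ has a source of the form $f(\cdots)$ with $f\in\{\between,\parallel,\leftmerge,\mid,\Theta,\triangleleft\}$, i.e. the source of each new rule is fresh; this discharges the second condition of Theorem \ref{TCE} directly, without even invoking its alternative fresh-premise clause.

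The main obstacle I anticipate is the ``positive after reduction'' proviso demanded by Theorem \ref{TCE}. The rules for the unless operator $\triangleleft$ in Table \ref{TRForAPPTC} carry negative premises of the form $y\nrightarrow^{e}$, so $T_0\oplus T_1$ is not literally a positive TSS. I would argue that these negative premises are well-founded, since each such premise tests only the (strictly smaller) behaviour of a proper subterm, so a stratification exists and both $T_0$ and $T_0\oplus T_1$ are positive after reduction. I would also note that the probabilistic transition relation $\rightsquigarrow$ (governed by Table \ref{TRForAPPTC1} and the PDF definitions in Table \ref{PDFAPPTC}) and the termination predicate $\surd$ are handled on exactly the same footing as the predicates $tP$ appearing in the statement of Theorem \ref{TCE}, so no new machinery is required for them.

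Once this bookkeeping is settled, the two verified conditions let me invoke Theorem \ref{TCE} to conclude that $T_0\oplus T_1$ is a conservative extension of $T_0$. Since the axioms of $BAPTC$ are a subset of those of the parallelism algebra and the latter introduces no equation forcing new identifications among $BAPTC$-terms, the algebra for parallelism is a generalization of $BAPTC$, as desired. I would keep the verification of source-dependency and of the freshness of each new source brief, since both follow by routine case inspection of the two rule tables, and reserve the only substantive comment for the reduction argument disposing of the negative $\triangleleft$-premises.
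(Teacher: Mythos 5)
Your proposal is correct and follows essentially the same route as the paper: establish that the $BAPTC$ transition rules are source-dependent, observe that every new rule's source contains one of the fresh operators $\between,\parallel,\leftmerge,\mid,\Theta,\triangleleft$, and invoke the conservative extension theorem (Theorem \ref{TCE}). Your additional care about the negative premises of $\triangleleft$ and the ``positive after reduction'' proviso is a refinement the paper silently skips, but it does not change the structure of the argument.
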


\begin{proof}
It follows from the following three facts.

\begin{enumerate}
  \item The transition rules of $BAPTC$ in section \ref{baptc} are all source-dependent;
  \item The sources of the transition rules for the algebra for parallelism contain an occurrence of $\between$, or $\parallel$, or $\leftmerge$, or $\mid$, or $\Theta$, or $\triangleleft$;
  \item The transition rules of $APPTC$ are all source-dependent.
\end{enumerate}

So, the algebra for parallelism is a generalization of $BAPTC$, that is, $BAPTC$ is an embedding of the algebra for parallelism, as desired.
\end{proof}

\begin{theorem}[Congruence of $APPTC$ with respect to probabilistic pomset bisimulation equivalence]
Probabilistic pomset bisimulation equivalence $\sim_{pp}$ is a congruence with respect to $APPTC$.
\end{theorem}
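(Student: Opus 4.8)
The plan is to show that $\sim_{pp}$ is preserved by each operator of $APPTC$, having already established in the preceding congruence theorem that it is a congruence for the $BAPTC$ operators $+$, $\cdot$ and $\boxplus_{\pi}$. It remains to treat the parallelism operators $\between$, $\parallel$, $\leftmerge$, $\mid$ together with the auxiliary operators $\Theta$ and $\triangleleft$, whose probabilistic and action transition rules are collected in Tables \ref{TRForAPPTC1} and \ref{TRForAPPTC} and whose PDFs are given in Table \ref{PDFAPPTC}. The overall structure mirrors the standard ACP congruence argument: assuming the arguments are pairwise $\sim_{pp}$-related, I exhibit a witnessing relation and check the four clauses of Definition \ref{PPSB}.

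First I would fix relations $R_1$ witnessing $x \sim_{pp} x'$ and $R_2$ witnessing $y \sim_{pp} y'$, and then, rather than treating each operator in isolation, define a single candidate relation $R$ consisting of all pairs $(f(s_1,\ldots,s_k), f(s_1',\ldots,s_k'))$ where $f$ ranges over the $APPTC$ operators and each $(s_i,s_i')$ lies in the appropriate $R_j$, closed under the reachable combinations. This closure is forced by the probabilistic rules of Table \ref{TRForAPPTC1}, which rewrite $x \between y$ to $x' \parallel y' + x' \mid y'$ and $x \parallel y$ to $x' \leftmerge y + y' \leftmerge x$; hence $R$ must already contain the corresponding $\parallel$-, $\leftmerge$-, $\mid$- and $+$-images so that probabilistic successors stay related. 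I would take $R$ to be the smallest such relation, augmented with the identity and the pair $(\surd,\surd)$, and verify it is a probabilistic pomset bisimulation.

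The verification splits into four parts. Clause (i), the action-transition requirement, is handled by induction on the derivation of a pomset step $C_1 \xrightarrow{X_1} C_1'$ using the rules of Table \ref{TRForAPPTC}: for each rule the matching step for the related term is produced by the corresponding rule, the side conditions ($e_1 \leq e_2$, $\sharp(e_1,e_2)$, $\beta \neq \overline{\alpha}$, and so on) transfer verbatim, and the successors remain in $R$. Clause (ii), the probabilistic-transition requirement, is treated analogously with Table \ref{TRForAPPTC1}. Clause (iv), $[\surd]_R = \{\surd\}$, holds by construction, and for the $BAPTC$ operators one simply reuses the earlier congruence result.

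The main obstacle will be clause (iii), the distribution condition $\mu(C_1,C) = \mu(C_2,C)$ for every class $C \in \mathcal{C}(\mathcal{E})/R$. Here one must show that the PDF equations of Table \ref{PDFAPPTC}, for instance $\mu(x \between y, x' \parallel y' + x' \mid y') = \mu(x,x') \cdot \mu(y,y')$ and $\mu(x \parallel y, x' \leftmerge y + y' \leftmerge x) = \mu(x,x') \cdot \mu(y,y')$, are compatible with the classes of $R$: the multiplicative form of $\mu$ over the operators must assign equal cumulative probability to each $R$-class from $f(\vec{s})$ and from $f(\vec{s'})$. This reduces to the hypothesis that $\mu(s_i,\cdot)$ and $\mu(s_i',\cdot)$ already agree on $R_j$-classes together with the fact that the operators combine these probabilities componentwise, but making the bookkeeping precise, especially where a probabilistic choice and a pomset step interleave, is where the real work lies; by comparison the action- and probabilistic-transition clauses are a routine rule-by-rule case analysis.
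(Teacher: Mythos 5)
Your proposal is correct and follows essentially the same route as the paper: the paper's proof likewise reduces the claim to showing that $\sim_{pp}$ is preserved by each of the operators $\between$, $\parallel$, $\leftmerge$, $\mid$, $\Theta$ and $\triangleleft$, and then declares the verification trivial and omits it. You go further by actually describing the witnessing relation and isolating the probability-distribution clause as the only nontrivial check, which is a strictly more complete account of the same argument.
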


\begin{proof}
It is easy to see that probabilistic pomset bisimulation is an equivalent relation on $APPTC$ terms, we only need to prove that $\sim_{pp}$ is preserved by the operators
$\between$, $\parallel$, $\leftmerge$, $\mid$, $\Theta$ and $\triangleleft$.
That is, if $x\sim_{pp} x'$ and $y\sim_{pp}y'$, we need to prove that $x\between y\sim_{pp}x'\between y'$, $x\parallel y\sim_{pp}x'\parallel y'$, $x\leftmerge y\sim_{pp}x'\leftmerge y'$,
$x\mid y\sim_{pp}x'\mid y'$, $x\triangleleft y\sim_{pp}x'\triangleleft y'$, and $\Theta(x)\sim_{pp}\Theta(x')$. The proof is quite trivial and we omit it.
\end{proof}

\begin{theorem}[Soundness of parallelism modulo probabilistic pomset bisimulation equivalence]\label{SPPPBE}
Let $x$ and $y$ be $APPTC$ terms. If $APPTC\vdash x=y$, then $x\sim_{pp} y$.
\end{theorem}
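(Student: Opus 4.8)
The plan is to leverage the congruence theorem for $\sim_{pp}$ established immediately above, which reduces soundness of the whole equational theory to soundness of the individual axioms. Since $\sim_{pp}$ is an equivalence relation and is preserved by every operator of $APPTC$ ($\between$, $\parallel$, $\leftmerge$, $\mid$, $\Theta$, $\triangleleft$, as well as the $BAPTC$ operators $\cdot$, $+$, $\boxplus_{\pi}$), I would argue by induction on the length of the derivation $APPTC\vdash x=y$: reflexivity, symmetry and transitivity are covered by $\sim_{pp}$ being an equivalence, and the closure of $=$ under substitution into contexts is covered by congruence. Hence it suffices to exhibit, for each axiom $s=t$ in Table \ref{AxiomsForPParallelism}, a probabilistic pomset bisimulation $R$ with $(s,t)\in R$ and $(\emptyset,\emptyset)\in R$.

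For each axiom I would verify the four clauses of Definition \ref{PPSB} directly from the transition rules in Tables \ref{TRForAPPTC1} and \ref{TRForAPPTC} and the probability distribution functions in Table \ref{PDFAPPTC}. The axioms inherited from the non-probabilistic theory, namely $A3$, $A6$, $A7$, $P2$--$P9$, $C10$--$C17$, $CE18$--$CE23$ and $U24$--$U36$, carry no $\boxplus_{\pi}$ on a relevant branch, so clause (2) on probabilistic transitions $\rightsquigarrow$ and clause (3) on the PDF reduce to the observation that each side makes a single probabilistic move of cumulative probability $1$ into matching $\breve{}$-derivatives, while the action-transition clause (1) is checked exactly as in the truly concurrent (non-probabilistic) $APTC$ setting, using the side conditions on $\leq$, $\sharp$, $\sharp_{\pi}$ and on $\gamma$ wherever they occur. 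For $P1$ and $P4$ the side conditions $x+x=x$, $y+y=y$ guarantee that the probabilistic choice of each parallel component is deterministic, so the expansions into $\parallel$/$\mid$ and into $\leftmerge$ respect $\mu$ as required.

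The genuinely probabilistic axioms $PM1$--$PM4$, $PCE1$ and $PU4$--$PU5$ are where the real work lies, and $PM1$, namely $x\parallel(y\boxplus_{\pi}z)\sim_{pp}(x\parallel y)\boxplus_{\pi}(x\parallel z)$, is the representative obstacle. Here clause (3) cannot be read off trivially: I must compute the cumulative distribution on both sides from Table \ref{PDFAPPTC} together with Table \ref{PDFBAPTC}, using $\mu(x\boxplus_{\pi}y,z)=\pi\mu(x,z)+(1-\pi)\mu(y,z)$ and $\mu(x\parallel y,x'\leftmerge y+y'\leftmerge x)=\mu(x,x')\cdot\mu(y,y')$, and then check that after summing over an equivalence class $C\in\mathcal{C}(\mathcal{E})/R$ the weight $\pi$ distributes identically across the two sides. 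Accordingly, the hard part is not the action transitions but verifying the PDF clause for these distributivity laws. I would take $R$ to be the symmetric closure of the identity relation augmented with $(s,t)$ and all pairs of matched derivatives, and then show that each probabilistic move $\rightsquigarrow$ from one side lands in an $R$-related state reachable from the other side with equal cumulative probability; this is a finite computation organised by the inductive structure of $s$ and $t$, and the same template, with $\mu(\Theta(x),\Theta(x'))=\mu(x,x')$ and $\mu(x\triangleleft y,x'\triangleleft y)=\mu(x,x')$, discharges $PCE1$, $PU4$ and $PU5$.
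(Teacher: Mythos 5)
Your proposal follows exactly the paper's route: the paper likewise invokes the fact that $\sim_{pp}$ is an equivalence and a congruence with respect to $\between$, $\parallel$, $\leftmerge$, $\mid$, $\Theta$ and $\triangleleft$ to reduce soundness of derivations to checking each axiom of Table \ref{AxiomsForPParallelism}, and then omits the axiom-by-axiom verification as trivial. Your additional detail on verifying the PDF clause for $PM1$--$PM4$, $PCE1$ and $PU4$--$PU5$ via Tables \ref{PDFAPPTC} and \ref{PDFBAPTC} is a correct elaboration of what the paper leaves implicit.
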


\begin{proof}
Since probabilistic pomset bisimulation $\sim_{pp}$ is both an equivalent and a congruent relation with respect to the operators $\between$, $\parallel$, $\leftmerge$, $\mid$, $\Theta$ and
$\triangleleft$, we only need to check if each axiom in Table \ref{AxiomsForPParallelism} is sound modulo probabilistic pomset bisimulation equivalence. The proof is quite trivial, and we omit it.
\end{proof}

\begin{theorem}[Completeness of parallelism modulo probabilistic pomset bisimulation equivalence]\label{CPPPBE}
Let $p$ and $q$ be closed $APPTC$ terms, if $p\sim_{pp} q$ then $p=q$.
\end{theorem}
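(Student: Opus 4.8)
The plan is to follow exactly the pattern of the completeness proof for $BAPTC$ (Theorem \ref{CBAPTCPBE}), now adapted to the richer signature containing $\leftmerge$. First I would invoke the elimination theorem of parallelism (Theorem \ref{ETPParallelism}): every closed $APPTC$ term $p$ satisfies $APPTC\vdash p=p'$ for some basic $APPTC$ term $p'$, and by soundness (Theorem \ref{SPPPBE}) this gives $p\sim_{pp}p'$. Hence it suffices to prove the statement for closed \emph{basic} $APPTC$ terms, i.e. to show that $p\sim_{pp}q$ implies $APPTC\vdash p=q$ whenever $p,q\in\mathcal{B}(APPTC)$.

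Next I would set up a normal form. Working modulo associativity and commutativity of $+$ (axioms $A1$, $A2$), I would show each basic term is provably equal to
$$s_1\boxplus_{\pi_1}\cdots\boxplus_{\pi_{k-1}} s_k,$$
where each $s_i$ is a $+$-sum $t_1+\cdots+t_l$ of summands $t_j$. The essential difference from the $BAPTC$ case is that a summand $t_j$ may now be a parallel cluster of atomic events, i.e. of the form $(a_1\leftmerge\cdots\leftmerge a_m)$ or $(a_1\leftmerge\cdots\leftmerge a_m)\cdot u$, in addition to an atomic event $e$ or a sequential term $e\cdot u$. Obtaining this form uses the defining axioms $P5$, $P6$, $P7$ of $\leftmerge$ to flatten left-merge applications into clusters, $P8$ to distribute $\leftmerge$ over $+$, and $PM1$, $PM2$ to move $\boxplus_{\pi}$ to the outermost layer, together with $A1$, $A2$ and $PA1$--$PA3$ to arrange the probabilistic-choice structure.

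The core step is then the lemma that for normal forms $n,n'$, if $n\sim_{pp}n'$ then $n=_{AC}n'$, proved by induction on the combined sizes of $n$ and $n'$. For each summand I first resolve the probabilistic choice by a probabilistic transition and then match the corresponding action transition in $n'$, exactly as in Theorem \ref{CBAPTCPBE}. A summand $e$ gives $n\rightsquigarrow\breve{e}\xrightarrow{e}\surd$, forcing $e$ to be a summand of $n'$ as well; a summand $e\cdot u$ gives $n\rightsquigarrow\breve{e}\xrightarrow{e}u$ matched by some $e\cdot u'$ with $u\sim_{pp}u'$, to which the induction hypothesis applies; and the new summands $(a_1\leftmerge\cdots\leftmerge a_m)$ (possibly with a sequential tail) produce a pomset transition labelled by the pomset determined by the causal order among $a_1,\ldots,a_m$, which $n'$ must match by an isomorphic pomset transition per clause (1) of Definition \ref{PPSB}, while clause (3) forces the weights $\pi_i$ to agree across the $\boxplus$-structure. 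Finally, given basic $s,t$ with $s\sim_{pp}t$, I take normal forms $n,n'$ with $s=n$ and $t=n'$; soundness yields $n\sim_{pp}s\sim_{pp}t\sim_{pp}n'$, the lemma gives $n=_{AC}n'$, and therefore $s=n=_{AC}n'=t$, as desired.

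The hard part will be the treatment of the parallel summands together with the probabilistic layer. Unlike the purely sequential $BAPTC$ case, a summand $(a_1\leftmerge\cdots\leftmerge a_m)\cdot u$ fires a genuine pomset transition, so matching it in $n'$ requires a pomset \emph{isomorphism} respecting the causal order encoded by the side conditions $e_1\leq e_2$ in $P5$--$P7$ and in the transition rules of Table \ref{TRForAPPTC}, rather than mere set equality; one must also confirm that the residual $u$ is again a normal form of strictly smaller size so the induction closes. Simultaneously the probabilistic choice must be matched precisely, which is where the condition $\mu(C_1,C)=\mu(C_2,C)$ does the work, and verifying that the $\boxplus_{\pi}$-normal form makes these weights syntactically visible and hence forces $=_{AC}$ equality is the delicate point.
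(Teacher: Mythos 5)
Your proposal follows essentially the same route as the paper's proof: reduce to closed basic terms via the elimination theorem, put them in the normal form $s_1\boxplus_{\pi_1}\cdots\boxplus_{\pi_{k-1}}s_k$ with $+$-sums of summands that may be left-merge clusters, prove by induction on sizes that probabilistically pomset-bisimilar normal forms are $=_{AC}$-equal by matching $\rightsquigarrow$ followed by the (possibly pomset-labelled) action transition, and close with soundness. Your added remarks on the pomset isomorphism for parallel summands and on the role of the $\mu$-condition in pinning down the $\boxplus_{\pi}$ weights are elaborations of points the paper leaves implicit, not a deviation from its argument.
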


\begin{proof}
Firstly, by the elimination theorem of $APPTC$ (see Theorem \ref{ETPParallelism}), we know that for each closed $APPTC$ term $p$, there exists a closed basic $APPTC$ term $p'$,
such that $APPTC\vdash p=p'$, so, we only need to consider closed basic $APPTC$ terms.

The basic terms (see Definition \ref{BTAPPTC}) modulo associativity and commutativity (AC) of conflict $+$ (defined by axioms $A1$ and $A2$ in Table \ref{AxiomsForBAPTC} and these
equivalences is denoted by $=_{AC}$. Then, each equivalence class $s$ modulo AC of $+$ has the following normal form

$$s_1\boxplus_{\pi_1}\cdots\boxplus_{\pi_{k-1}} s_k$$

with each $s_i$ has the following form

$$t_1+\cdots+ t_l$$

with each $t_j$ either an atomic event or of the form

$$u_1\cdot\cdots\cdot u_m$$

with each $u_l$ either an atomic event or of the form

$$v_1\leftmerge\cdots\leftmerge v_m$$

with each $v_m$ an atomic event, and each $t_j$ is called the summand of $s$.

Now, we prove that for normal forms $n$ and $n'$, if $n\sim_{pp} n'$ then $n=_{AC}n'$. It is sufficient to induct on the sizes of $n$ and $n'$.

\begin{itemize}
  \item Consider a summand $e$ of $n$. Then $n\rightsquigarrow\breve{e}\xrightarrow{e}\surd$, so $n\sim_{pp} n'$ implies $n'\rightsquigarrow\breve{e}\xrightarrow{e}\surd$, meaning that
  $n'$ also contains the summand $e$.
  \item Consider a summand $t_1\cdot t_2$ of $n$,
  \begin{itemize}
    \item if $t_1\equiv e'$, then $n\rightsquigarrow\breve{e'}\xrightarrow{e'}t_2$, so $n\sim_{pp} n'$ implies $n'\rightsquigarrow\breve{e'}\xrightarrow{e'}t_2'$ with $t_2\sim_{pp} t_2'$,
    meaning that $n'$ contains a summand $e'\cdot t_2'$. Since $t_2$ and $t_2'$ are normal forms and have sizes smaller than $n$ and $n'$, by the induction hypotheses if
    $t_2\sim_{pp} t_2'$ then $t_2=_{AC} t_2'$;
    \item if $t_1\equiv e_1\leftmerge\cdots\leftmerge e_m$, then $n\rightsquigarrow\breve{e_1}\leftmerge\cdots\leftmerge\breve{e_m}\xrightarrow{\{e_1,\cdots,e_m\}}t_2$,
    so $n\sim_{pp} n'$ implies $n'\rightsquigarrow\breve{e_1}\leftmerge\cdots\leftmerge\breve{e_m}\xrightarrow{\{e_1,\cdots,e_m\}}t_2'$ with $t_2\sim_{pp} t_2'$, meaning that $n'$
    contains a summand $(e_1\leftmerge\cdots\leftmerge e_m)\cdot t_2'$. Since $t_2$ and $t_2'$ are normal forms and have sizes smaller than $n$ and $n'$, by the induction hypotheses
    if $t_2\sim_{pp} t_2'$ then $t_2=_{AC} t_2'$.
  \end{itemize}
\end{itemize}

So, we get $n=_{AC} n'$.

Finally, let $s$ and $t$ be basic $APPTC$ terms, and $s\sim_{pp} t$, there are normal forms $n$ and $n'$, such that $s=n$ and $t=n'$. The soundness theorem of parallelism modulo
probabilistic pomset bisimulation equivalence (see Theorem \ref{SPPPBE}) yields $s\sim_{pp} n$ and $t\sim_{pp} n'$, so $n\sim_{pp} s\sim_{pp} t\sim_{pp} n'$. Since if $n\sim_{pp} n'$
then $n=_{AC}n'$, $s=n=_{AC}n'=t$, as desired.
\end{proof}

\begin{theorem}[Congruence of $APPTC$ with respect to probabilistic step bisimulation equivalence]
Probabilistic step bisimulation equivalence $\sim_{ps}$ is a congruence with respect to $APPTC$.
\end{theorem}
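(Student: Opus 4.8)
The plan is to mirror the structure of the preceding congruence theorem for probabilistic pomset bisimulation, replacing pomset transitions by step transitions, i.e.\ transitions $C \xrightarrow{X} C'$ whose events are pairwise concurrent. First I would record that $\sim_{ps}$ is an equivalence relation on $APPTC$ terms: reflexivity because $\mathrm{Id}$ is a probabilistic step bisimulation, symmetry because the converse of such a bisimulation is again one, and transitivity because the relational composition of two probabilistic step bisimulations is one. In each case one checks that clauses (1)--(4) of Definition \ref{PPSB} (in its step formulation) survive the construction; the only nonroutine point is that the probabilistic clauses (2) and (3) compose, which holds because the partition underlying the composite relation is coarser than those of the two factors, so the cumulative distributions still agree class by class.

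Since $\sim_{ps}$ is an equivalence, it then suffices to prove that it is preserved by every operator of $APPTC$. Congruence for $+$, $\cdot$ and $\boxplus_{\pi}$ is already available from the $BAPTC$ development, so the remaining work concerns the parallelism operators $\between$, $\parallel$, $\leftmerge$, $\mid$, $\Theta$ and $\triangleleft$. For these I would argue uniformly: given witnessing probabilistic step bisimulations $R_1$ with $(x,x') \in R_1$ and $R_2$ with $(y,y') \in R_2$, form the relation
$$R = \mathrm{Id} \cup \{(C[\widetilde{s}], C[\widetilde{s'}]) : C \textrm{ an } APPTC\textrm{-context and } (s_i,s_i') \in R_1 \cup R_2\},$$
closed under all the operators, and verify that $R$ is itself a probabilistic step bisimulation. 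Closure under all operators is needed because, for instance, the action rules for $\parallel$ in Table \ref{TRForAPPTC} produce residuals of the form $x' \between y'$, so $R$ must already relate such composite terms.

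The verification of clause (1) proceeds by case analysis on which transition rule of Tables \ref{TRForAPPTC1} and \ref{TRForAPPTC} fires on the left: each premise is a transition of a component $s_i$, which the hypothesis $(s_i,s_i') \in R_j$ lets me reproduce for $s_i'$ (with a step label $X' \sim X$), and then the same rule assembles a matching transition on the right, landing the residuals back in $R$. Clause (4), $[\surd]_R = \{\surd\}$, is immediate from the shape of the rules, which reach $\surd$ from both sides simultaneously.

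The step I expect to be the main obstacle is clause (3), the requirement $\mu(C_1,C) = \mu(C_2,C)$ for every $R$-class $C$, together with the matching of probabilistic transitions $\rightsquigarrow$ in clause (2). The PDF of Table \ref{PDFAPPTC} is multiplicative on parallel components (e.g.\ $\mu(x \parallel y, x' \leftmerge y + y' \leftmerge x) = \mu(x,x')\cdot\mu(y,y')$) and convex-linear through $\boxplus_{\pi}$, so I must show that these product and convex structures are compatible with the partition induced by $R$ --- concretely, that each $R$-class of a composite term factors as a product of component classes, whence the cumulative probabilities on the two sides coincide by summing the componentwise identities supplied by $R_1$ and $R_2$. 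This probabilistic bookkeeping is the only place where anything beyond the standard nonprobabilistic congruence argument is required; once it is in place, the operators $\Theta$ and $\triangleleft$, whose PDF clauses merely copy $\mu$ of their principal argument, follow with no extra probabilistic content.
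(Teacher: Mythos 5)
Your proposal is correct and follows the same plan as the paper: the paper's proof likewise reduces the claim to showing that $\sim_{ps}$ is an equivalence and is preserved by $\between$, $\parallel$, $\leftmerge$, $\mid$, $\Theta$ and $\triangleleft$, and then declares the verification trivial and omits it. Your version simply supplies the details the paper leaves out (the context-closed witnessing relation, the case analysis on the rules of Tables \ref{TRForAPPTC1} and \ref{TRForAPPTC}, and the compatibility of the PDF of Table \ref{PDFAPPTC} with the induced partition), all of which are consistent with the intended argument.
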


\begin{proof}
It is easy to see that probabilistic step bisimulation is an equivalent relation on $APPTC$ terms, we only need to prove that $\sim_{ps}$ is preserved by the operators
$\between$, $\parallel$, $\leftmerge$, $\mid$, $\Theta$ and $\triangleleft$.
That is, if $x\sim_{ps} x'$ and $y\sim_{ps}y'$, we need to prove that $x\between y\sim_{ps}x'\between y'$, $x\parallel y\sim_{ps}x'\parallel y'$, $x\leftmerge y\sim_{ps}x'\leftmerge y'$,
$x\mid y\sim_{ps}x'\mid y'$, $x\triangleleft y\sim_{ps}x'\triangleleft y'$, and $\Theta(x)\sim_{ps}\Theta(x')$. The proof is quite trivial and we omit it.
\end{proof}

\begin{theorem}[Soundness of parallelism modulo probabilistic step bisimulation equivalence]\label{SPPSBE}
Let $x$ and $y$ be $APPTC$ terms. If $APPTC\vdash x=y$, then $x\sim_{ps} y$.
\end{theorem}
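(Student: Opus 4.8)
The plan is to exploit the fact, established in the congruence theorem just above, that probabilistic step bisimulation equivalence $\sim_{ps}$ is both an equivalence relation and a congruence with respect to all operators of $APPTC$ (namely $\between$, $\parallel$, $\leftmerge$, $\mid$, $\Theta$, $\triangleleft$, together with $\cdot$, $+$ and $\boxplus_{\pi}$). Granting this, soundness of the entire proof system reduces to checking that each individual axiom $s=t$ in Table \ref{AxiomsForPParallelism} (together with the carried-over $BAPTC$ axioms) is valid modulo $\sim_{ps}$. Once every axiom instance satisfies $s\sim_{ps} t$, congruence propagates the relation through arbitrary derivation contexts, and reflexivity, symmetry and transitivity of $\sim_{ps}$ close an easy induction on the length of a derivation $APPTC\vdash x=y$.

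For each axiom $s=t$ I would exhibit an explicit probabilistic step bisimulation $R$ containing $(s,t)$, closed under reachable successors and under the pair $(\surd,\surd)$, and then discharge the four clauses of Definition \ref{PPSB} in its step form: (i) every step transition $\xrightarrow{X}$ on one side is matched on the other by a step with an isomorphic label and $R$-related residual; (ii) every probabilistic transition $\rightsquigarrow$ is matched; (iii) the cumulative-PDF condition $\mu(s,C)=\mu(t,C)$ holds on every equivalence class $C\in\mathcal{C}(\mathcal{E})/R$; and (iv) $[\surd]_R=\{\surd\}$. The purely structural axioms — the monoid laws $A1$--$A3$, the distributivities $A4,A5$, the deadlock laws $A6,A7$, and the parallel, communication and auxiliary laws $P2$--$P9$, $C10$--$C17$, $CE18$--$CE23$, $U24$--$U36$ — behave for clauses (i) and (iv) exactly as in the non-probabilistic $APTC$ setting, so the only new content is confined to clauses (ii) and (iii).

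The main obstacle I expect is the family of axioms in which $\boxplus_{\pi}$ interacts with the remaining operators: $PM1$--$PM4$ (distribution of $\parallel$ and $\mid$ over $\boxplus_{\pi}$), $PCE1$, and $PU4$--$PU5$, as well as the pure probabilistic laws $PA1$--$PA5$. For these, clause (iii) is not automatic, because one must compute the cumulative probability distribution on both sides from the recursive PDF definitions in Table \ref{PDFAPPTC}. For instance, in $PM1$ one must show that $\mu(x\parallel(y\boxplus_{\pi} z),\cdot)$ coincides, class by class, with $\mu((x\parallel y)\boxplus_{\pi}(x\parallel z),\cdot)$, which amounts to verifying that the weighting $\pi\mu(\cdot)+(1-\pi)\mu(\cdot)$ introduced by $\boxplus_{\pi}$ distributes correctly through the product form $\mu(x,x')\cdot\mu(y,y')$ of the parallel PDF. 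Likewise $PA1$ and $PA2$ require checking that the reweightings $\pi\mapsto 1-\pi$ and the pair $\frac{\pi}{\pi+\rho-\pi\rho},\,\pi+\rho-\pi\rho$ leave the cumulative distribution invariant — a small but genuine arithmetic verification rather than a structural one.

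Because the step case differs from the pomset case of Theorem \ref{SPPPBE} only in that step labels must consist of pairwise concurrent events, clause (i) is handled identically once the rules of Tables \ref{TRForAPPTC1} and \ref{TRForAPPTC} are read as step rules. Thus the bulk of the argument is routine, and the only substantive work is the PDF bookkeeping for the $\boxplus_{\pi}$-axioms, which I would perform once and then reuse uniformly across $PM1$--$PM4$, $PCE1$, $PU4$, $PU5$ and $PA1$--$PA5$.
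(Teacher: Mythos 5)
Your proposal follows exactly the strategy of the paper's own proof: invoke the congruence theorem to reduce soundness to an axiom-by-axiom check modulo $\sim_{ps}$, then verify each axiom by exhibiting a probabilistic step bisimulation (the paper declares this check ``quite trivial'' and omits it, whereas you usefully identify the PDF bookkeeping for the $\boxplus_{\pi}$-axioms as the only non-structural content). The approach is correct and matches the paper.
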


\begin{proof}
Since probabilistic step bisimulation $\sim_{ps}$ is both an equivalent and a congruent relation with respect to the operators $\between$, $\parallel$, $\leftmerge$, $\mid$, $\Theta$ and
$\triangleleft$, we only need to check if each axiom in Table \ref{AxiomsForPParallelism} is sound modulo probabilistic step bisimulation equivalence. The proof is quite trivial, and we omit it.
\end{proof}

\begin{theorem}[Completeness of parallelism modulo probabilistic step bisimulation equivalence]\label{CPPSBE}
Let $p$ and $q$ be closed $APPTC$ terms, if $p\sim_{ps} q$ then $p=q$.
\end{theorem}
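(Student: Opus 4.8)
The plan is to follow the standard three-stage completeness argument already carried out for the pomset case (Theorem \ref{CPPPBE}), replacing $\sim_{pp}$ by $\sim_{ps}$ throughout. First I would invoke the elimination theorem of $APPTC$ (Theorem \ref{ETPParallelism}): since every closed $APPTC$ term is provably equal to a basic $APPTC$ term, it suffices to prove the statement for closed basic terms. Next I would put each basic term (Definition \ref{BTAPPTC}) into the normal form
$$s_1\boxplus_{\pi_1}\cdots\boxplus_{\pi_{k-1}}s_k,$$
modulo associativity and commutativity of $+$ (axioms $A1$, $A2$), where each $s_i$ is a sum $t_1+\cdots+t_l$ and each summand $t_j$ is either an atomic event or a product $u_1\cdots u_m$ whose factors are themselves atomic events or concurrent bundles $v_1\leftmerge\cdots\leftmerge v_m$ of atomic events. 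Writing $=_{AC}$ for equality modulo this normalisation, the whole theorem reduces to showing that step-bisimilar normal forms are $=_{AC}$-equal.

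The core of the argument is a single induction on the combined size of two normal forms $n$ and $n'$, establishing that $n\sim_{ps}n'$ implies $n=_{AC}n'$. For a summand $e$ of $n$, the move $n\rightsquigarrow\breve{e}\xrightarrow{e}\surd$ must be matched by $n'$, forcing $e$ to occur as a summand of $n'$ as well. For a summand of product shape $t_1\cdot t_2$ I would split on whether the head factor is a single event $e'$ or a concurrent bundle $e_1\leftmerge\cdots\leftmerge e_m$: in the first case $n\rightsquigarrow\breve{e'}\xrightarrow{e'}t_2$ is matched by $n'\rightsquigarrow\breve{e'}\xrightarrow{e'}t_2'$ with $t_2\sim_{ps}t_2'$, and in the second case the step transition $n\rightsquigarrow\breve{e_1}\leftmerge\cdots\leftmerge\breve{e_m}\xrightarrow{\{e_1,\cdots,e_m\}}t_2$ is matched analogously. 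Since the residuals $t_2,t_2'$ are strictly smaller normal forms, the induction hypothesis gives $t_2=_{AC}t_2'$, so the matched summands coincide; symmetry of bisimulation yields the reverse inclusion of summands, hence $n=_{AC}n'$. Combining this with soundness (Theorem \ref{SPPSBE}) — which supplies normal forms $n,n'$ with $s=n$, $t=n'$ and $n\sim_{ps}s\sim_{ps}t\sim_{ps}n'$ — closes the proof.

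The one place where the step version genuinely departs from the pomset version is in the labels carried by the concurrent summands. In the pomset proof the matching of a bundle appeals only to pomset isomorphism $X_1\sim X_2$, whereas here the label $\{e_1,\cdots,e_m\}$ must be a genuine step, that is, a set of pairwise concurrent events. I expect the main (though still mild) obstacle to be verifying that the transition rules of Table \ref{TRForAPPTC} do generate $\{e_1,\cdots,e_m\}$ as a legitimate step — this follows from the side condition $e_1\leq e_2$ on $\leftmerge$ together with the extension of $\leq$ to include concurrency — and hence that step bisimilarity preserves these bundles exactly, so a bundle of length $m$ can only be matched by a bundle of the same length carrying the same multiset of atomic events. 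Once this is checked, the summand-matching induction is identical to the pomset case, and the probabilistic operator $\boxplus_{\pi}$ introduces no new difficulty, since its summands are already separated by the $=_{AC}$ normal form and its probabilities are handled by the distribution clause of probabilistic step bisimulation.
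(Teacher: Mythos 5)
Your proposal matches the paper's own proof essentially step for step: elimination to basic terms, the same $\boxplus_{\pi}$-of-sums normal form modulo AC of $+$, the same induction on combined size with the case split on a single-event head versus a $\leftmerge$-bundle head, and the same final appeal to soundness (Theorem \ref{SPPSBE}) to transfer the result from normal forms back to arbitrary basic terms. The extra remark about verifying that $\{e_1,\cdots,e_m\}$ is a genuine step is a reasonable sanity check but does not change the argument, so this is the same proof.
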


\begin{proof}
Firstly, by the elimination theorem of $APPTC$ (see Theorem \ref{ETPParallelism}), we know that for each closed $APPTC$ term $p$, there exists a closed basic $APPTC$ term $p'$,
such that $APPTC\vdash p=p'$, so, we only need to consider closed basic $APPTC$ terms.

The basic terms (see Definition \ref{BTAPPTC}) modulo associativity and commutativity (AC) of conflict $+$ (defined by axioms $A1$ and $A2$ in Table \ref{AxiomsForBAPTC} and these
equivalences is denoted by $=_{AC}$. Then, each equivalence class $s$ modulo AC of $+$ has the following normal form

$$s_1\boxplus_{\pi_1}\cdots\boxplus_{\pi_{k-1}} s_k$$

with each $s_i$ has the following form

$$t_1+\cdots+ t_l$$

with each $t_j$ either an atomic event or of the form

$$u_1\cdot\cdots\cdot u_m$$

with each $u_l$ either an atomic event or of the form

$$v_1\leftmerge\cdots\leftmerge v_m$$

with each $v_m$ an atomic event, and each $t_j$ is called the summand of $s$.

Now, we prove that for normal forms $n$ and $n'$, if $n\sim_{ps} n'$ then $n=_{AC}n'$. It is sufficient to induct on the sizes of $n$ and $n'$.

\begin{itemize}
  \item Consider a summand $e$ of $n$. Then $n\rightsquigarrow\breve{e}\xrightarrow{e}\surd$, so $n\sim_{ps} n'$ implies $n'\rightsquigarrow\breve{e}\xrightarrow{e}\surd$, meaning that
  $n'$ also contains the summand $e$.
  \item Consider a summand $t_1\cdot t_2$ of $n$,
  \begin{itemize}
    \item if $t_1\equiv e'$, then $n\rightsquigarrow\breve{e'}\xrightarrow{e'}t_2$, so $n\sim_{ps} n'$ implies $n'\rightsquigarrow\breve{e'}\xrightarrow{e'}t_2'$ with $t_2\sim_{ps} t_2'$,
    meaning that $n'$ contains a summand $e'\cdot t_2'$. Since $t_2$ and $t_2'$ are normal forms and have sizes smaller than $n$ and $n'$, by the induction hypotheses if
    $t_2\sim_{ps} t_2'$ then $t_2=_{AC} t_2'$;
    \item if $t_1\equiv e_1\leftmerge\cdots\leftmerge e_m$, then $n\rightsquigarrow\breve{e_1}\leftmerge\cdots\leftmerge\breve{e_m}\xrightarrow{\{e_1,\cdots,e_m\}}t_2$,
    so $n\sim_{ps} n'$ implies $n'\rightsquigarrow\breve{e_1}\leftmerge\cdots\leftmerge\breve{e_m}\xrightarrow{\{e_1,\cdots,e_m\}}t_2'$ with $t_2\sim_{ps} t_2'$, meaning that $n'$
    contains a summand $(e_1\leftmerge\cdots\leftmerge e_m)\cdot t_2'$. Since $t_2$ and $t_2'$ are normal forms and have sizes smaller than $n$ and $n'$, by the induction hypotheses
    if $t_2\sim_{ps} t_2'$ then $t_2=_{AC} t_2'$.
  \end{itemize}
\end{itemize}

So, we get $n=_{AC} n'$.

Finally, let $s$ and $t$ be basic $APPTC$ terms, and $s\sim_{ps} t$, there are normal forms $n$ and $n'$, such that $s=n$ and $t=n'$. The soundness theorem of parallelism modulo
probabilistic step bisimulation equivalence (see Theorem \ref{SPPSBE}) yields $s\sim_{ps} n$ and $t\sim_{ps} n'$, so $n\sim_{ps} s\sim_{ps} t\sim_{ps} n'$. Since if $n\sim_{ps} n'$
then $n=_{AC}n'$, $s=n=_{AC}n'=t$, as desired.
\end{proof}

\begin{theorem}[Congruence of $APPTC$ with respect to probabilistic hp-bisimulation equivalence]
Probabilistic hp-bisimulation equivalence $\sim_{php}$ is a congruence with respect to $APPTC$.
\end{theorem}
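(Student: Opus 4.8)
The plan is to verify that $\sim_{php}$ is an equivalence relation and that it is preserved by every operator of $APPTC$. Reflexivity, symmetry and transitivity of $\sim_{php}$ are routine: the identity triples form a probabilistic hp-bisimulation, the converse of a posetal relation is again posetal, and the relational composition (composing both the underlying relations and the order isomorphisms $f$) of two probabilistic hp-bisimulations is again one, with conditions (2)--(4) of Definition \ref{PHHPB} inherited factor by factor. Since congruence for the $BAPTC$ fragment (the operators $\cdot$, $+$, $\boxplus_{\pi}$) is already established, it suffices to treat the six new operators $\between$, $\parallel$, $\leftmerge$, $\mid$, $\Theta$ and $\triangleleft$. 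For each I would assume $x\sim_{php}x'$ via a posetal hp-bisimulation $R_x$ and $y\sim_{php}y'$ via $R_y$, and exhibit a posetal relation on the composite terms satisfying Definition \ref{PHHPB}.

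For the binary merge operators I would take the candidate relation to consist of all triples $(C_1,f,C_2)$ whose configurations split into a left part and a right part, with $f=f_x\oplus f_y$ the disjoint union of the two component isomorphisms, and then close this set under the action and probabilistic rules of Tables \ref{TRForAPPTC1} and \ref{TRForAPPTC}. Condition (1) is checked by case analysis on which rule fires: for $\parallel$ and $\between$ the four synchronous/asynchronous rules, for $\leftmerge$ the four left-merge rules with side condition $e_1\leq e_2$, and for $\mid$ the four communication rules producing $\gamma(e_1,e_2)$. In each case the matching move is supplied by $R_x$ or $R_y$ (or both), and the posetal isomorphism is extended by $[e\mapsto e']$; the causal order on the product configuration is the disjoint union of the component orders, so $f_x\oplus f_y$ remains an order isomorphism and the pomset structure is preserved.

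The genuinely probabilistic conditions (2)--(4) are where the PDF definitions in Table \ref{PDFAPPTC} enter. Condition (2) (matching $\rightsquigarrow$) follows because the probabilistic rules in Table \ref{TRForAPPTC1} act structurally on each argument, so a composite probabilistic move decomposes into component moves that $R_x$ and $R_y$ can mirror. For condition (3) I would use that each composite PDF factors through the arguments, e.g. $\mu(x\parallel y,\,x'\leftmerge y+y'\leftmerge x)=\mu(x,x')\cdot\mu(y,y')$ and $\mu(x\mid y,\,x'\mid y')=\mu(x,x')\cdot\mu(y,y')$; equal argument PDFs therefore yield equal composite PDFs, the only real bookkeeping being to show that the $R$-equivalence classes of composite configurations correspond to products of $R_x$- and $R_y$-classes. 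Condition (4), $[\surd]_R=\{\surd\}$, is immediate from the corresponding property of $R_x$ and $R_y$.

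The hardest part will be $\Theta$ and $\triangleleft$, whose rules in Table \ref{TRForAPPTC} carry the side conditions $\sharp(e_1,e_2)$ and $\sharp_{\pi}(e_1,e_2)$ together with negative premises of the form $y\nrightarrow^{e_2}$. Preserving these under bisimulation requires that the ordinary and probabilistic conflict relations, and the ``cannot perform $e_2$'' capability, are respected across the isomorphism $f$; this follows from $\sim_{php}$ respecting labels and the underlying event-structure relations, but each negative premise must be checked to confirm that $x$ can fire $e_2$ exactly when $x'$ can (up to $f$), so that the unless and conflict-elimination operators behave consistently on related terms. Fortunately the PDF clauses here are single-argument, $\mu(\Theta(x),\Theta(x'))=\mu(x,x')$ and $\mu(x\triangleleft y,\,x'\triangleleft y)=\mu(x,x')$, so condition (3) for these operators reduces directly to condition (3) for $R_x$, leaving the treatment of the negative premises as the only subtle point.
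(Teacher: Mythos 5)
Your proposal follows exactly the route the paper takes: observe that $\sim_{php}$ is an equivalence relation and reduce the claim to preservation under the operators $\between$, $\parallel$, $\leftmerge$, $\mid$, $\Theta$ and $\triangleleft$; the paper then declares the remaining verification trivial and omits it, whereas you supply the construction of the witnessing posetal relations, the factorisation of the PDF conditions via Table \ref{PDFAPPTC}, and the caveat about negative premises. So the approach is essentially the same, with your write-up filling in the details the paper leaves out.
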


\begin{proof}
It is easy to see that probabilistic hp-bisimulation is an equivalent relation on $APPTC$ terms, we only need to prove that $\sim_{php}$ is preserved by the operators
$\between$, $\parallel$, $\leftmerge$, $\mid$, $\Theta$ and $\triangleleft$.
That is, if $x\sim_{php} x'$ and $y\sim_{php}y'$, we need to prove that $x\between y\sim_{php}x'\between y'$, $x\parallel y\sim_{php}x'\parallel y'$, $x\leftmerge y\sim_{php}x'\leftmerge y'$,
$x\mid y\sim_{php}x'\mid y'$, $x\triangleleft y\sim_{php}x'\triangleleft y'$, and $\Theta(x)\sim_{php}\Theta(x')$. The proof is quite trivial and we omit it.
\end{proof}

\begin{theorem}[Soundness of parallelism modulo probabilistic hp-bisimulation equivalence]\label{SPPHPBE}
Let $x$ and $y$ be $APPTC$ terms. If $APPTC\vdash x=y$, then $x\sim_{php} y$.
\end{theorem}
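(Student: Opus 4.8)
The plan is to turn the statement into a finite, per-axiom verification and then exhibit an explicit probabilistic hp-bisimulation for each axiom. First I would invoke the two facts already in hand: that $\sim_{php}$ is an equivalence relation, and that it is a congruence with respect to every operator of $APPTC$ (the congruence theorem established immediately above). Since any derivation $APPTC\vdash x=y$ is generated from instances of the axioms by reflexivity, symmetry, transitivity and closure under the operators, these two facts reduce soundness to showing that each single axiom $t=s$ of Table \ref{AxiomsForPParallelism} is sound modulo $\sim_{php}$; that is, for each closing substitution $\sigma$ of the variables $x,y,z$ by closed terms I must produce a probabilistic hp-bisimulation relating the PESs of $\sigma(t)$ and $\sigma(s)$ with $(\emptyset,\emptyset,\emptyset)\in R$, so that $\sigma(t)\sim_{php}\sigma(s)$.

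For each axiom I would take the witnessing posetal relation $R\subseteq\mathcal{C}(\mathcal{E}_1)\overline{\times}\mathcal{C}(\mathcal{E}_2)$ to be generated by the evident order-isomorphism $f$ between the configurations reachable on the two sides (for the commutativity and associativity laws $P2,P3,PA1,PA2$ this $f$ permutes the parallel or choice branches; for the defining and distributivity laws it is essentially the identity on events), and then check the four clauses of Definition \ref{PHHPB}: clause (1), that every action transition $C_1\xrightarrow{e_1}C_1'$ is matched by some $C_2\xrightarrow{e_2}C_2'$ with $(C_1',f[e_1\mapsto e_2],C_2')\in R$ and vice versa; clause (2), that every probabilistic transition $C_1\xrsquigarrow{\pi}C_1^{\pi}$ is matched by $C_2\xrsquigarrow{\pi}C_2^{\pi}$ with $(C_1^{\pi},f,C_2^{\pi})\in R$ and vice versa; clause (3), that $\mu(C_1,C)=\mu(C_2,C)$ for every $R$-class $C$; and clause (4), that $[\surd]_R=\{\surd\}$. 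The action-transition clause (1) is exactly the matching already performed for the non-probabilistic axioms $A3$–$U36$ in the $APTC$ setting, using the action rules of Table \ref{TRForAPPTC}, so the genuinely new content sits in the probabilistic axioms $PM1$–$PM4$, $PCE1$, $PU1$–$PU5$ and in clauses (2) and (3) for all axioms, read off from the probabilistic rules of Table \ref{TRForAPPTC1} and the PDF definitions of Table \ref{PDFAPPTC}.

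The hard part will be discharging clauses (2) and (3) simultaneously with the posetal structure, and this is where the distributivity laws $PM1$–$PM4$, such as $x\parallel(y\boxplus_{\pi}z)=(x\parallel y)\boxplus_{\pi}(x\parallel z)$, are the decisive cases. There I would compute the probability masses directly from Table \ref{PDFAPPTC}: the cumulative probability that the left-hand side resolves its probabilistic choice into a term $\sim_{php}$-equivalent to a given branch must equal that of the right-hand side, and the weighting in $\mu(x\boxplus_{\pi}y,z)=\pi\mu(x,z)+(1-\pi)\mu(y,z)$ is precisely what makes the two sides agree. The delicate point is that clause (3) quantifies over $R$-equivalence classes, so I must choose $R$ coarsely enough that the probabilistic masses coincide on each class yet finely enough that the order-isomorphism $f$ is genuinely preserved as an hp-bisimulation; reconciling the probabilistic matching of clause (2), the $\mu$-equality of clause (3), and the posetal data of clause (1) is the only place where the probabilistic and the true-concurrency structure interact, and it is the step demanding care. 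Once these probabilistic clauses are verified, the remaining axioms are structural and their checks are routine transcriptions of the $APTC$ arguments, and the theorem follows.
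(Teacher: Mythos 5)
Your proposal follows essentially the same route as the paper: both reduce soundness to a per-axiom check by appealing to the facts that $\sim_{php}$ is an equivalence and a congruence with respect to the operators of $APPTC$. The paper then declares the per-axiom verification ``quite trivial'' and omits it, whereas you go further and correctly identify where the real work lies (the probabilistic clauses of Definition \ref{PHHPB} for the axioms $PM1$--$PM4$, $PCE1$, $PU1$--$PU5$, using Tables \ref{TRForAPPTC1} and \ref{PDFAPPTC}), so your write-up is a faithful and somewhat more detailed version of the paper's argument.
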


\begin{proof}
Since probabilistic hp-bisimulation $\sim_{php}$ is both an equivalent and a congruent relation with respect to the operators $\between$, $\parallel$, $\leftmerge$, $\mid$, $\Theta$ and
$\triangleleft$, we only need to check if each axiom in Table \ref{AxiomsForPParallelism} is sound modulo probabilistic hp-bisimulation equivalence. The proof is quite trivial, and we omit it.
\end{proof}

\begin{theorem}[Completeness of parallelism modulo probabilistic hp-bisimulation equivalence]\label{CPPHPBE}
Let $p$ and $q$ be closed $APPTC$ terms, if $p\sim_{php} q$ then $p=q$.
\end{theorem}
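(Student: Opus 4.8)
The plan is to follow exactly the template already used for the completeness proofs modulo probabilistic pomset and step bisimulation (Theorems \ref{CPPPBE} and \ref{CPPSBE}), adapting only the bisimulation-matching step to the history-preserving setting. First I would invoke the elimination theorem of $APPTC$ (Theorem \ref{ETPParallelism}) to reduce the problem to closed basic $APPTC$ terms: for any closed $p$ there is a basic term $p'$ with $APPTC\vdash p=p'$, so completeness need only be established for basic terms. Next I would bring every basic term (Definition \ref{BTAPPTC}) into normal form modulo associativity and commutativity of $+$ (axioms $A1,A2$ of Table \ref{AxiomsForBAPTC}), writing $=_{AC}$ for this equivalence. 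Each class then has the shape
$$s_1\boxplus_{\pi_1}\cdots\boxplus_{\pi_{k-1}} s_k$$
with each $s_i$ of the form $t_1+\cdots+t_l$, each summand $t_j$ being an atomic event or a product $u_1\cdot\cdots\cdot u_m$ whose factors are atomic events or left-merges $v_1\leftmerge\cdots\leftmerge v_m$ of atomic events.

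The core lemma to prove is that for two normal forms $n,n'$, $n\sim_{php}n'$ implies $n=_{AC}n'$, argued by induction on the combined size of $n$ and $n'$ through a summand-by-summand comparison. For a summand $e$ we use $n\rightsquigarrow\breve{e}\xrightarrow{e}\surd$ and the matching clauses of probabilistic hp-bisimulation to force $n'$ to contain $e$ as well; for a summand $t_1\cdot t_2$ with $t_1$ an event $e'$ or a left-merge $e_1\leftmerge\cdots\leftmerge e_m$, we use $n\rightsquigarrow\breve{t_1}\xrightarrow{t_1}t_2$ and obtain a matching summand $t_1\cdot t_2'$ in $n'$ with $t_2\sim_{php}t_2'$, then apply the induction hypothesis to conclude $t_2=_{AC}t_2'$.

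The main obstacle, which distinguishes this from the pomset and step cases, lies in discharging the extra bookkeeping demanded by Definition \ref{PHHPB}: the matching must respect the order-isomorphism $f$ between the configurations reached (sending the fired event $e_1$ to $e_2$ and preserving the causal ordering), and the remaining clauses require the probabilistic transitions $\xrsquigarrow{\pi}$ to be matched, the distribution condition $\mu(C_1,C)=\mu(C_2,C)$ to hold on each $R$-class, and $[\surd]_R=\{\surd\}$. I expect these to go through routinely in the normal-form setting, since each prefix exposes exactly one event (or one pairwise-concurrent set), so the relevant isomorphism is forced and extends canonically, while the PDF definitions of Table \ref{PDFAPPTC} together with axioms $PA1$--$PA3$ make the probabilistic branching structure of $n$ and $n'$ coincide. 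This is essentially the sole additional ingredient over the step case.

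Finally, for arbitrary basic terms $s,t$ with $s\sim_{php}t$, choosing normal forms $n,n'$ with $s=n$ and $t=n'$, the soundness theorem (Theorem \ref{SPPHPBE}) gives $s\sim_{php}n$ and $t\sim_{php}n'$, whence $n\sim_{php}s\sim_{php}t\sim_{php}n'$; the core lemma then yields $n=_{AC}n'$, and therefore $s=n=_{AC}n'=t$, as desired.
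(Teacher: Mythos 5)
Your proposal follows essentially the same route as the paper's own proof: elimination to basic terms, reduction to normal forms modulo AC of $+$, a size-induction showing $n\sim_{php}n'$ implies $n=_{AC}n'$ by matching summands $e$ and $t_1\cdot t_2$ (with $t_1$ an event or a left-merge of events), and a final appeal to soundness (Theorem \ref{SPPHPBE}). Your additional remarks about discharging the posetal-isomorphism and PDF conditions are a reasonable elaboration of details the paper leaves implicit, and the argument is correct.
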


\begin{proof}
Firstly, by the elimination theorem of $APPTC$ (see Theorem \ref{ETPParallelism}), we know that for each closed $APPTC$ term $p$, there exists a closed basic $APPTC$ term $p'$,
such that $APPTC\vdash p=p'$, so, we only need to consider closed basic $APPTC$ terms.

The basic terms (see Definition \ref{BTAPPTC}) modulo associativity and commutativity (AC) of conflict $+$ (defined by axioms $A1$ and $A2$ in Table \ref{AxiomsForBAPTC} and these
equivalences is denoted by $=_{AC}$. Then, each equivalence class $s$ modulo AC of $+$ has the following normal form

$$s_1\boxplus_{\pi_1}\cdots\boxplus_{\pi_{k-1}} s_k$$

with each $s_i$ has the following form

$$t_1+\cdots+ t_l$$

with each $t_j$ either an atomic event or of the form

$$u_1\cdot\cdots\cdot u_m$$

with each $u_l$ either an atomic event or of the form

$$v_1\leftmerge\cdots\leftmerge v_m$$

with each $v_m$ an atomic event, and each $t_j$ is called the summand of $s$.

Now, we prove that for normal forms $n$ and $n'$, if $n\sim_{php} n'$ then $n=_{AC}n'$. It is sufficient to induct on the sizes of $n$ and $n'$.

\begin{itemize}
  \item Consider a summand $e$ of $n$. Then $n\rightsquigarrow\breve{e}\xrightarrow{e}\surd$, so $n\sim_{php} n'$ implies $n'\rightsquigarrow\breve{e}\xrightarrow{e}\surd$, meaning that
  $n'$ also contains the summand $e$.
  \item Consider a summand $t_1\cdot t_2$ of $n$,
  \begin{itemize}
    \item if $t_1\equiv e'$, then $n\rightsquigarrow\breve{e'}\xrightarrow{e'}t_2$, so $n\sim_{php} n'$ implies $n'\rightsquigarrow\breve{e'}\xrightarrow{e'}t_2'$ with $t_2\sim_{php} t_2'$,
    meaning that $n'$ contains a summand $e'\cdot t_2'$. Since $t_2$ and $t_2'$ are normal forms and have sizes smaller than $n$ and $n'$, by the induction hypotheses if
    $t_2\sim_{php} t_2'$ then $t_2=_{AC} t_2'$;
    \item if $t_1\equiv e_1\leftmerge\cdots\leftmerge e_m$, then $n\rightsquigarrow\breve{e_1}\leftmerge\cdots\leftmerge\breve{e_m}\xrightarrow{\{e_1,\cdots,e_m\}}t_2$,
    so $n\sim_{php} n'$ implies $n'\rightsquigarrow\breve{e_1}\leftmerge\cdots\leftmerge\breve{e_m}\xrightarrow{\{e_1,\cdots,e_m\}}t_2'$ with $t_2\sim_{php} t_2'$, meaning that $n'$
    contains a summand $(e_1\leftmerge\cdots\leftmerge e_m)\cdot t_2'$. Since $t_2$ and $t_2'$ are normal forms and have sizes smaller than $n$ and $n'$, by the induction hypotheses
    if $t_2\sim_{php} t_2'$ then $t_2=_{AC} t_2'$.
  \end{itemize}
\end{itemize}

So, we get $n=_{AC} n'$.

Finally, let $s$ and $t$ be basic $APPTC$ terms, and $s\sim_{php} t$, there are normal forms $n$ and $n'$, such that $s=n$ and $t=n'$. The soundness theorem of parallelism modulo
probabilistic hp-bisimulation equivalence (see Theorem \ref{SPPHPBE}) yields $s\sim_{php} n$ and $t\sim_{php} n'$, so $n\sim_{php} s\sim_{php} t\sim_{php} n'$. Since if $n\sim_{php} n'$
then $n=_{AC}n'$, $s=n=_{AC}n'=t$, as desired.
\end{proof}

\begin{theorem}[Congruence of $APPTC$ with respect to probabilistic hhp-bisimulation equivalence]
Probabilistic hhp-bisimulation equivalence $\sim_{phhp}$ is a congruence with respect to $APPTC$.
\end{theorem}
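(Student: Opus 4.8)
The plan is to follow the same template that already settled congruence for $\sim_{pp}$, $\sim_{ps}$ and $\sim_{php}$ in this subsection: first observe that $\sim_{phhp}$ is an equivalence relation on $APPTC$ terms (reflexivity, symmetry and transitivity of downward closed probabilistic hp-bisimulations are immediate), and then establish that $\sim_{phhp}$ is preserved by each of the operators $\between$, $\parallel$, $\leftmerge$, $\mid$, $\Theta$ and $\triangleleft$. The crucial structural observation is Definition \ref{PHHPB}: a probabilistic hhp-bisimulation is precisely a \emph{downward closed} probabilistic hp-bisimulation. I would therefore lean directly on the already-established congruence of $\sim_{php}$ and, for each operator, upgrade the witnessing hp-bisimulation to a downward closed one.

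Concretely, for a fixed operator and downward closed witnesses $R_1$ for $x\sim_{phhp}x'$ and $R_2$ for $y\sim_{phhp}y'$, I would build a candidate posetal relation for the composite term by taking the componentwise union $f=f_1\uplus f_2$ of the two configuration isomorphisms, and then verify the four clauses of Definition \ref{PHHPB}. The action-transition clauses, matching $C_1\xrightarrow{e_1}C_1'$ against $C_2\xrightarrow{e_2}C_2'$ with the updated isomorphism $f[e_1\mapsto e_2]$, follow by case analysis on the transition rules of Table \ref{TRForAPPTC} (including the $\sharp_{\pi}$-cases for $\Theta$ and $\triangleleft$); the probabilistic clauses $C_1\xrsquigarrow{\pi}C_1^{\pi}$ follow from Table \ref{TRForAPPTC1}; and the PDF-equality clause $\mu(C_1,C)=\mu(C_2,C)$ follows from the multiplicative shape of $\mu$ in Table \ref{PDFAPPTC} (for instance $\mu(x\between y,\cdot)=\mu(x,\cdot)\cdot\mu(y,\cdot)$ and similarly for $\parallel$, $\leftmerge$, $\mid$), which splits along the two components so that the component $\mu$-equalities supplied by $R_1$ and $R_2$ multiply through. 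The clause $[\surd]_R=\{\surd\}$ is inherited componentwise.

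The hard part will be verifying that the constructed witness is genuinely \emph{downward closed}, and this difficulty is concentrated in the parallel operators $\parallel$, $\leftmerge$ and $\between$. For a reachable posetal triple $(C_1,f,C_2)$ of a composite term, an arbitrary pointwise-smaller triple $(D_1,g,D_2)$ need not decompose as a product of sub-configurations of the two sides once events have been synchronised by $\mid$ or interleaved, so I must check that restricting the componentwise isomorphism to any sub-configuration still lands in the relation and still respects the causal order $\leq$ extended to include concurrency (the convention used in the $\leftmerge$ rules of Table \ref{TRForAPPTC}). This is exactly the point at which hereditary history-preservation is delicate, and the argument mirrors the downward-closure reasoning already invoked for the expansion law (cf. Proposition \ref{NELSHHPB3}): one shows that the product relation, closed under pointwise restriction, remains a probabilistic hp-bisimulation. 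By contrast, the prefixes $\alpha.P$ and $(\alpha_1\parallel\cdots\parallel\alpha_n).P$, the summations $+$ and $\boxplus_{\pi}$, and the operators $\Theta$ and $\triangleleft$ present no obstacle beyond the hp-case, since downward closure is preserved under them in a routine fashion.
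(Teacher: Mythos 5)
Your proposal takes essentially the same route as the paper: observe that $\sim_{phhp}$ is an equivalence relation and then check that it is preserved by each of the operators $\between$, $\parallel$, $\leftmerge$, $\mid$, $\Theta$ and $\triangleleft$, which is exactly the outline the paper gives before declaring the verification trivial and omitting it. Your elaboration --- building the witness from componentwise unions of the hp-bisimulation isomorphisms, checking the $\mu$-clause via the multiplicative PDF definitions, and isolating downward closure for the parallel operators as the only delicate point --- is consistent with, and considerably more explicit than, what the paper records.
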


\begin{proof}
It is easy to see that probabilistic hhp-bisimulation is an equivalent relation on $APPTC$ terms, we only need to prove that $\sim_{phhp}$ is preserved by the operators
$\between$, $\parallel$, $\leftmerge$, $\mid$, $\Theta$ and $\triangleleft$.
That is, if $x\sim_{phhp} x'$ and $y\sim_{phhp}y'$, we need to prove that $x\between y\sim_{phhp}x'\between y'$, $x\parallel y\sim_{phhp}x'\parallel y'$, $x\leftmerge y\sim_{phhp}x'\leftmerge y'$,
$x\mid y\sim_{phhp}x'\mid y'$, $x\triangleleft y\sim_{phhp}x'\triangleleft y'$, and $\Theta(x)\sim_{phhp}\Theta(x')$. The proof is quite trivial and we omit it.
\end{proof}

\begin{theorem}[Soundness of parallelism modulo probabilistic hhp-bisimulation equivalence]\label{SPPHHPBE}
Let $x$ and $y$ be $APPTC$ terms. If $APPTC\vdash x=y$, then $x\sim_{phhp} y$.
\end{theorem}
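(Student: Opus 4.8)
The plan is to exploit two facts already in hand: that $\sim_{phhp}$ is both an equivalence and a congruence with respect to every operator of $APPTC$ (the congruence theorem immediately preceding this statement), and that, by Definition \ref{PHHPB}, a probabilistic hhp-bisimulation is precisely a \emph{downward closed} probabilistic hp-bisimulation. Together these reduce the claim to a local check. Since $APPTC\vdash x=y$ means that $x=y$ is derivable from the axioms of Table \ref{AxiomsForPParallelism} (together with the $BAPTC$ axioms) by reflexivity, symmetry, transitivity and the congruence rules, and since $\sim_{phhp}$ is closed under all of these, it suffices to establish $s\sim_{phhp} t$ for each individual axiom $s=t$ of Table \ref{AxiomsForPParallelism}.

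First I would reuse the hp-level work. Because a probabilistic hhp-bisimulation is in particular a probabilistic hp-bisimulation, Theorem \ref{SPPHPBE} already supplies, for each axiom, a witnessing posetal relation $R$ built in the obvious syntax-driven way, relating the configuration reached by firing a summand on the left-hand side to the matching configuration on the right, equipped with the induced order isomorphism $f$. For each axiom I would then re-verify the four clauses of Definition \ref{PHHPB} on this $R$: the forward and backward action-transition matching via the rules of Tables \ref{TRForAPPTC1} and \ref{TRForAPPTC}; the probabilistic-transition matching of $C_1\rightsquigarrow C_1^{\pi}$ against $C_2\rightsquigarrow C_2^{\pi}$ using the PDF definitions of Table \ref{PDFAPPTC}; the cumulative-probability condition $\mu(C_1,C)=\mu(C_2,C)$ for each class $C\in\mathcal{C}(\mathcal{E})/R$; and the requirement $[\surd]_R=\{\surd\}$.

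The genuinely new obligation here is downward closure. For each axiom I would take the witnessing relation $R$, close it downward by adjoining every triple $(C_1,f,C_2)$ with $(C_1,f,C_2)\subseteq(C_1',f',C_2')$ pointwise for some $(C_1',f',C_2')\in R$, and check that the enlarged relation remains a probabilistic hp-bisimulation. The hard part will be exactly this, on the axioms that create real concurrency, namely the left-merge defining axioms $P5$--$P7$, the parallel axioms $P1$--$P4$, and the communication axioms $C10$--$C17$, where the posetal triples carry nontrivial order isomorphisms: one must confirm that restricting to an arbitrary sub-configuration still yields order-isomorphic components matched by the transition rules, and that the probabilistic side conditions introduced by $PM1$--$PM4$, $PCE1$ and $PU1$--$PU5$ survive the restriction. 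Once downward closure is confirmed axiom by axiom, which is mechanical in the same way as the hp verification, the soundness of each axiom modulo $\sim_{phhp}$ follows, and the reduction above then yields the theorem; I would present the reduction explicitly and remark that the per-axiom checks, being routine, can be left to the reader.
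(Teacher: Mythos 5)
Your proposal follows essentially the same route as the paper: since $\sim_{phhp}$ is an equivalence and a congruence with respect to all the operators of $APPTC$, soundness reduces to checking each axiom of Table \ref{AxiomsForPParallelism} individually, and those per-axiom checks are then declared routine and omitted (the paper omits them entirely; you additionally spell out that the new obligation beyond the hp-level verification is downward closure of the witnessing posetal relations). This is a correct and faithful, indeed somewhat more detailed, rendering of the paper's argument.
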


\begin{proof}
Since probabilistic hhp-bisimulation $\sim_{phhp}$ is both an equivalent and a congruent relation with respect to the operators $\between$, $\parallel$, $\leftmerge$, $\mid$, $\Theta$ and
$\triangleleft$, we only need to check if each axiom in Table \ref{AxiomsForPParallelism} is sound modulo probabilistic hhp-bisimulation equivalence. The proof is quite trivial, and we omit it.
\end{proof}

\begin{theorem}[Completeness of parallelism modulo probabilistic hhp-bisimulation equivalence]\label{CPPHHPBE}
Let $p$ and $q$ be closed $APPTC$ terms, if $p\sim_{phhp} q$ then $p=q$.
\end{theorem}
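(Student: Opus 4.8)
The plan is to follow the same route used for the completeness proofs of the coarser equivalences (Theorems \ref{CPPPBE}, \ref{CPPSBE} and \ref{CPPHPBE}), exploiting the fact that $\sim_{phhp}$ is by definition a downward closed $\sim_{php}$. First I would invoke the elimination theorem of $APPTC$ (Theorem \ref{ETPParallelism}) to reduce the problem to closed \emph{basic} $APPTC$ terms: for every closed term $p$ there is a basic term $p'$ with $APPTC\vdash p=p'$, so it suffices to prove the implication for basic terms.

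Next I would normalise each basic term modulo associativity and commutativity of $+$ (axioms $A1$, $A2$), writing the equivalence class of $s$ as $s_1\boxplus_{\pi_1}\cdots\boxplus_{\pi_{k-1}}s_k$, where each $s_i$ is a sum $t_1+\cdots+t_l$ of summands, and each summand $t_j$ is either an atomic event, or a sequential composition $u_1\cdot\cdots\cdot u_m$, or a left-merge block $v_1\leftmerge\cdots\leftmerge v_m$ of atomic events. This is exactly the normal form established in the preceding completeness theorems.

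The central step is to show that for normal forms $n$, $n'$, if $n\sim_{phhp}n'$ then $n=_{AC}n'$, by induction on the combined size of $n$ and $n'$. I would match summands on both sides: a summand $e$ yields $n\rightsquigarrow\breve{e}\xrightarrow{e}\surd$, so $n\sim_{phhp}n'$ forces $n'$ to contain $e$; a summand $t_1\cdot t_2$ with $t_1$ atomic yields $n\rightsquigarrow\breve{t_1}\xrightarrow{t_1}t_2$, matched by a summand $t_1\cdot t_2'$ with $t_2\sim_{phhp}t_2'$, and likewise a summand $(e_1\leftmerge\cdots\leftmerge e_m)\cdot t_2$ yields $n\rightsquigarrow\breve{e_1}\leftmerge\cdots\leftmerge\breve{e_m}\xrightarrow{\{e_1,\cdots,e_m\}}t_2$, matched analogously; in each case $t_2\sim_{phhp}t_2'$ gives $t_2=_{AC}t_2'$ by the induction hypothesis on the strictly smaller residuals, and the probabilistic distribution clause $\mu(C_1,C)=\mu(C_2,C)$ is preserved because the probabilistic branching structure $s_1\boxplus_{\pi_1}\cdots\boxplus_{\pi_{k-1}}s_k$ is matched branch by branch. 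The only ingredient beyond the $\sim_{php}$ argument is that the matching must respect the downward closure requirement of hhp-bisimulation; since the posetal isomorphisms built during the matching extend pointwise and the relation is downward closed, the hp-level matching already furnishes all subconfigurations, so no new obstruction appears.

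Finally I would close exactly as in the earlier proofs: given basic terms with $s\sim_{phhp}t$, choose normal forms $n,n'$ with $s=n$ and $t=n'$; soundness (Theorem \ref{SPPHHPBE}) gives $s\sim_{phhp}n$ and $t\sim_{phhp}n'$, hence $n\sim_{phhp}s\sim_{phhp}t\sim_{phhp}n'$, and the central step yields $n=_{AC}n'$, so $s=n=_{AC}n'=t$. I expect the main obstacle to be verifying the downward-closure bookkeeping inside the inductive matching step, but as this is inherited directly from the $\sim_{php}$ case it requires no new machinery, which is why the paper is content to reduce it to the hp-bisimulation argument.
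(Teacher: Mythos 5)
Your proposal follows the paper's own proof essentially verbatim: elimination to basic terms, normalization modulo AC of $+$ into the $\boxplus_{\pi}$-of-sums form, an induction on the sizes of normal forms matching atomic, sequential and left-merge summands, and the final closing argument via the soundness theorem. The extra remarks you add about the downward-closure bookkeeping and the preservation of the $\mu$-distribution clause are reasonable elaborations of points the paper leaves implicit, but they do not change the route.
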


\begin{proof}
Firstly, by the elimination theorem of $APPTC$ (see Theorem \ref{ETPParallelism}), we know that for each closed $APPTC$ term $p$, there exists a closed basic $APPTC$ term $p'$,
such that $APPTC\vdash p=p'$, so, we only need to consider closed basic $APPTC$ terms.

The basic terms (see Definition \ref{BTAPPTC}) modulo associativity and commutativity (AC) of conflict $+$ (defined by axioms $A1$ and $A2$ in Table \ref{AxiomsForBAPTC} and these
equivalences is denoted by $=_{AC}$. Then, each equivalence class $s$ modulo AC of $+$ has the following normal form

$$s_1\boxplus_{\pi_1}\cdots\boxplus_{\pi_{k-1}} s_k$$

with each $s_i$ has the following form

$$t_1+\cdots+ t_l$$

with each $t_j$ either an atomic event or of the form

$$u_1\cdot\cdots\cdot u_m$$

with each $u_l$ either an atomic event or of the form

$$v_1\leftmerge\cdots\leftmerge v_m$$

with each $v_m$ an atomic event, and each $t_j$ is called the summand of $s$.

Now, we prove that for normal forms $n$ and $n'$, if $n\sim_{phhp} n'$ then $n=_{AC}n'$. It is sufficient to induct on the sizes of $n$ and $n'$.

\begin{itemize}
  \item Consider a summand $e$ of $n$. Then $n\rightsquigarrow\breve{e}\xrightarrow{e}\surd$, so $n\sim_{phhp} n'$ implies $n'\rightsquigarrow\breve{e}\xrightarrow{e}\surd$, meaning that
  $n'$ also contains the summand $e$.
  \item Consider a summand $t_1\cdot t_2$ of $n$,
  \begin{itemize}
    \item if $t_1\equiv e'$, then $n\rightsquigarrow\breve{e'}\xrightarrow{e'}t_2$, so $n\sim_{phhp} n'$ implies $n'\rightsquigarrow\breve{e'}\xrightarrow{e'}t_2'$ with $t_2\sim_{phhp} t_2'$,
    meaning that $n'$ contains a summand $e'\cdot t_2'$. Since $t_2$ and $t_2'$ are normal forms and have sizes smaller than $n$ and $n'$, by the induction hypotheses if
    $t_2\sim_{phhp} t_2'$ then $t_2=_{AC} t_2'$;
    \item if $t_1\equiv e_1\leftmerge\cdots\leftmerge e_m$, then $n\rightsquigarrow\breve{e_1}\leftmerge\cdots\leftmerge\breve{e_m}\xrightarrow{\{e_1,\cdots,e_m\}}t_2$,
    so $n\sim_{phhp} n'$ implies $n'\rightsquigarrow\breve{e_1}\leftmerge\cdots\leftmerge\breve{e_m}\xrightarrow{\{e_1,\cdots,e_m\}}t_2'$ with $t_2\sim_{phhp} t_2'$, meaning that $n'$
    contains a summand $(e_1\leftmerge\cdots\leftmerge e_m)\cdot t_2'$. Since $t_2$ and $t_2'$ are normal forms and have sizes smaller than $n$ and $n'$, by the induction hypotheses
    if $t_2\sim_{phhp} t_2'$ then $t_2=_{AC} t_2'$.
  \end{itemize}
\end{itemize}

So, we get $n=_{AC} n'$.

Finally, let $s$ and $t$ be basic $APPTC$ terms, and $s\sim_{phhp} t$, there are normal forms $n$ and $n'$, such that $s=n$ and $t=n'$. The soundness theorem of parallelism modulo
probabilistic hhp-bisimulation equivalence (see Theorem \ref{SPPHHPBE}) yields $s\sim_{phhp} n$ and $t\sim_{phhp} n'$, so $n\sim_{phhp} s\sim_{phhp} t\sim_{phhp} n'$. Since if $n\sim_{phhp} n'$
then $n=_{AC}n'$, $s=n=_{AC}n'=t$, as desired.
\end{proof}

\subsubsection{Encapsulation}

The mismatch of two communicating events in different parallel branches can cause deadlock, so the deadlocks in the concurrent processes should be eliminated. Like $ACP$ \cite{ACP}, we
also introduce the unary encapsulation operator $\partial_H$ for set $H$ of atomic events, which renames all atomic events in $H$ into $\delta$. The whole algebra including parallelism
for true concurrency in the above subsections, deadlock $\delta$ and encapsulation operator $\partial_H$, is also called Algebra for Parallelism in Probabilistic True Concurrency, abbreviated $APPTC$.

Firstly, we give the definition of PDFs in Table \ref{PDFEncap}.

\begin{center}
    \begin{table}
        $$\mu(\partial_H(x),\partial_H(x'))=\mu(x,x')$$
        $$\mu(x,y)=0,\textrm{otherwise}$$
        \caption{PDF definitions of $\partial_H$}
        \label{PDFEncap}
    \end{table}
\end{center}

The transition rules of encapsulation operator $\partial_H$ are shown in Table \ref{TRForPEncapsulation}.

\begin{center}
    \begin{table}
        $$\frac{x\rightsquigarrow x'}{\partial_H(x)\rightsquigarrow\partial_H(x')}$$
        $$\frac{x\xrightarrow{e}\surd}{\partial_H(x)\xrightarrow{e}\surd}\quad (e\notin H)\quad\quad\frac{x\xrightarrow{e}x'}{\partial_H(x)\xrightarrow{e}\partial_H(x')}\quad(e\notin H)$$
        \caption{Transition rules of encapsulation operator $\partial_H$}
        \label{TRForPEncapsulation}
    \end{table}
\end{center}

Based on the transition rules for encapsulation operator $\partial_H$ in Table \ref{TRForPEncapsulation}, we design the axioms as Table \ref{AxiomsForPEncapsulation} shows.

\begin{center}
    \begin{table}
        \begin{tabular}{@{}ll@{}}
            \hline No. &Axiom\\
            $D1$ & $e\notin H\quad\partial_H(e) = e$\\
            $D2$ & $e\in H\quad \partial_H(e) = \delta$\\
            $D3$ & $\partial_H(\delta) = \delta$\\
            $D4$ & $\partial_H(x+ y) = \partial_H(x)+\partial_H(y)$\\
            $D5$ & $\partial_H(x\cdot y) = \partial_H(x)\cdot\partial_H(y)$\\
            $D6$ & $\partial_H(x\leftmerge y) = \partial_H(x)\leftmerge\partial_H(y)$\\
            $PD1$ & $\partial_H(x\boxplus_{\pi}y)=\partial_H(x)\boxplus_{\pi}\partial_H(y)$\\
        \end{tabular}
        \caption{Axioms of encapsulation operator}
        \label{AxiomsForPEncapsulation}
    \end{table}
\end{center}

The axioms $D1-D3$ are the defining laws for the encapsulation operator $\partial_H$, $D1$ leaves atomic events outside $H$ unchanged, $D2$ renames atomic events in $H$ into $\delta$,
and $D3$ says that it leaves $\delta$ unchanged. $D4-D6$ and $PD1$ say that in term $\partial_H(t)$, all transitions of $t$ labeled with atomic events in $H$ are blocked.

\begin{theorem}[Conservativity of $APPTC$ with respect to the algebra for parallelism]
$APPTC$ is a conservative extension of the algebra for parallelism.
\end{theorem}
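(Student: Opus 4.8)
The plan is to derive this from the general Conservative Extension Theorem (Theorem \ref{TCE}) by an appropriate splitting of the transition system specification. I would set $T_0$ to be the TSS of the algebra for parallelism over its signature $\Sigma_0$ (the operators $\cdot,+,\boxplus_{\pi},\between,\parallel,\leftmerge,\mid,\Theta,\triangleleft$ together with their probabilistic and action rules), and let $T_1$ consist of the single new function symbol $\partial_H$ together with its transition rules in Table \ref{TRForPEncapsulation}. Since $APPTC=T_0\oplus T_1$, it then suffices to verify the two numbered hypotheses of Theorem \ref{TCE}, after first checking the standing reduction hypothesis that $T_0$ and $T_0\oplus T_1$ are positive after reduction.

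First I would dispose of the reduction hypothesis. The only negative premises anywhere are the premises of the form $y\nrightarrow^{e}$ occurring in the rules for the unless operator $\triangleleft$; these already live in $T_0$ and are treated exactly as in the corresponding non-probabilistic $APTC$ development, while $T_1$ introduces no negative premises at all. Next, for condition (1) I would verify that $T_0$ is source-dependent, arguing exactly as in the proof of the generalization theorem above (the three-fact argument): in each rule of $BAPTC$ and of parallelism every variable in the source is source-dependent by definition, and each premise $t\xrightarrow{a}t'$ (or $t\rightsquigarrow t'$) propagates source-dependency to its target, so all variables of every rule are source-dependent. This step is routine but must be checked uniformly for the probabilistic rules $x\rightsquigarrow x'$ as well as the action rules.

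The crux is condition (2), and here the argument is short. Every rule in Table \ref{TRForPEncapsulation} — the probabilistic rule $\frac{x\rightsquigarrow x'}{\partial_H(x)\rightsquigarrow\partial_H(x')}$ and the two action rules with side condition $e\notin H$ — has a source of the form $\partial_H(x)$. Since $\partial_H\in\Sigma_1\setminus\Sigma_0$ does not occur in the signature of the algebra for parallelism, each such source is fresh in the sense of the Freshness definition. Thus the first disjunct of condition (2), namely that the source of $\rho$ is fresh, holds for every rule of $T_1$, and the second disjunct (the premise-based condition) never needs to be invoked. With both conditions established, Theorem \ref{TCE} yields that $T_0\oplus T_1$, i.e. $APPTC$, is a conservative extension of $T_0$, the algebra for parallelism, as desired.

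The only genuinely delicate point I anticipate is the interaction of the reduction hypothesis with the negative premises of $\triangleleft$: one must be sure that adding $\partial_H$ does not create any new ``stratification'' issue, but since $\partial_H$ occurs in no premise of any $T_0$ rule and $T_1$ contributes only positive premises over source-dependent terms, the reduction of $T_0\oplus T_1$ agrees with that of $T_0$ on $\mathcal{T}(\Sigma_0)$. Everything else (source-dependency of $T_0$, freshness of $\partial_H$) is bookkeeping, so the freshness of the encapsulation symbol is what makes the conservativity essentially automatic.
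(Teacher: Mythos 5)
Your proof is correct and follows essentially the same route as the paper: the paper also invokes Theorem \ref{TCE} via exactly the two facts that the transition rules of the algebra for parallelism are source-dependent and that the sources of the encapsulation rules contain an occurrence of the fresh symbol $\partial_H$. Your additional checks (the ``positive after reduction'' hypothesis and the interaction with the negative premises of $\triangleleft$) are details the paper leaves implicit, but they do not change the argument.
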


\begin{proof}
It follows from the following two facts (see Theorem \ref{TCE}).

\begin{enumerate}
  \item The transition rules of the algebra for parallelism in the above subsections are all source-dependent;
  \item The sources of the transition rules for the encapsulation operator contain an occurrence of $\partial_H$.
\end{enumerate}

So, $APPTC$ is a conservative extension of the algebra for parallelism, as desired.
\end{proof}

\begin{theorem}[Elimination theorem of $APPTC$]\label{ETPEncapsulation}
Let $p$ be a closed $APPTC$ term including the encapsulation operator $\partial_H$. Then there is a basic $APPTC$ term $q$ such that $APPTC\vdash p=q$.
\end{theorem}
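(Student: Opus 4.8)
The plan is to follow the same two-part strategy used for the Elimination theorem of parallelism (Theorem \ref{ETPParallelism}): first exhibit a strongly normalizing term rewrite system by orienting the axioms, then show that the resulting normal forms are exactly basic $APPTC$ terms. The only genuinely new ingredient here is the encapsulation operator $\partial_H$, so the work reduces to incorporating the axioms $D1$--$D6$ and $PD1$ of Table \ref{AxiomsForPEncapsulation} into the analysis already carried out for Table \ref{TRSForAPPTC}.

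First I would extend the rewrite system of Table \ref{TRSForAPPTC} with the rules obtained by orienting $D1$--$D6$ and $PD1$ from left to right, namely $\partial_H(e)\rightarrow e$ for $e\notin H$, $\partial_H(e)\rightarrow\delta$ for $e\in H$, $\partial_H(\delta)\rightarrow\delta$, and the four distribution rules pushing $\partial_H$ inside $+$, $\cdot$, $\leftmerge$ and $\boxplus_{\pi}$. To invoke Theorem \ref{SN} I would augment the signature ordering used there to $\partial_H > \leftmerge > \cdot > + > \boxplus_{\pi}$, placing $\partial_H$ at the top. With this ordering every new rule $s\rightarrow t$ satisfies $s>_{lpo}t$: the atomic rules strictly decrease the syntactic weight, and each distribution rule moves the dominating symbol $\partial_H$ strictly toward the leaves while replacing one outer $\partial_H$ by applications to proper subterms, which is exactly the situation the lexicographic path ordering handles. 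Since the combined system still has finitely many rules and $>$ remains well founded, strong normalization follows.

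Second I would show that every normal form of a closed $APPTC$ term (now including $\partial_H$) is a basic term in the sense of Definition \ref{BTAPPTC}. This repeats the case analysis of Theorem \ref{ETPParallelism} nearly verbatim, with one additional case: a smallest non-basic subterm of the form $p'\equiv\partial_H(p_1)$ where $p_1$ is already basic. Inducting on the structure of $p_1$, if $p_1\in\mathbb{E}$ then $D1$ or $D2$ applies; if $p_1\equiv\delta$ then $D3$ applies; and if $p_1$ is a sum, product, left merge, or box-sum then one of $D4$, $D5$, $D6$, $PD1$ applies respectively. In every subcase a rewrite rule fires, so $p'$ cannot be in normal form, contradicting the assumption. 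Hence no normal form contains $\partial_H$, and the normal forms are basic $APPTC$ terms, which yields $APPTC\vdash p=q$ for the basic term $q$ equal to the normal form of $p$.

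The main obstacle, such as it is, will be the bookkeeping in the strong-normalization step rather than any conceptual difficulty: I must check that promoting $\partial_H$ to the top of the ordering does not disturb the already-verified decreases for the parallelism rules of Table \ref{TRSForAPPTC}, and that the distribution rules for $\partial_H$ are genuinely orientable under $>_{lpo}$ even though they duplicate $\partial_H$ on the right-hand side. Because $\partial_H$ is unary and strictly dominates every other symbol, the lexicographic path ordering compares against the single argument and the duplication is harmless, so I expect this to go through routinely once the ordering is fixed.
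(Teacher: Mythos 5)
Your proposal is correct and follows essentially the same two-step strategy as the paper's proof: orient $D1$--$D6$ and $PD1$ into rewrite rules, establish strong normalization via Theorem \ref{SN}, and then show by the same case analysis (reducing to Theorem \ref{ETPParallelism} plus the new case $p'\equiv\partial_H(p_1)$) that normal forms are basic $APPTC$ terms. Your explicit placement of $\partial_H$ at the top of the ordering and the remark about duplication under $>_{lpo}$ is slightly more careful than the paper, which leaves $\partial_H$'s position in the ordering implicit, but the argument is the same.
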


\begin{proof}
(1) Firstly, suppose that the following ordering on the signature of $APPTC$ is defined: $\leftmerge > \cdot > +>\boxplus_{\pi}$ and the symbol $\leftmerge$ is given the
lexicographical status for the first argument, then for each rewrite rule $p\rightarrow q$ in Table \ref{TRSForPEncapsulation} relation $p>_{lpo} q$ can easily be proved. We obtain
that the term rewrite system shown in Table \ref{TRSForPEncapsulation} is strongly normalizing, for it has finitely many rewriting rules, and $>$ is a well-founded ordering on the
signature of $APPTC$, and if $s>_{lpo} t$, for each rewriting rule $s\rightarrow t$ is in Table \ref{TRSForPEncapsulation} (see Theorem \ref{SN}).

\begin{center}
    \begin{table}
        \begin{tabular}{@{}ll@{}}
            \hline No. &Rewriting Rule\\
            $RD1$ & $e\notin H\quad\partial_H(e) \rightarrow e$\\
            $RD2$ & $e\in H\quad \partial_H(e) \rightarrow \delta$\\
            $RD3$ & $\partial_H(\delta) \rightarrow \delta$\\
            $RD4$ & $\partial_H(x+ y) \rightarrow \partial_H(x)+\partial_H(y)$\\
            $RD5$ & $\partial_H(x\cdot y) \rightarrow \partial_H(x)\cdot\partial_H(y)$\\
            $RD6$ & $\partial_H(x\leftmerge y) \rightarrow \partial_H(x)\leftmerge\partial_H(y)$\\
            $RPD1$ & $\partial_H(x\boxplus_{\pi}y)\rightarrow\partial_H(x)\boxplus_{\pi}\partial_H(y)$\\
        \end{tabular}
        \caption{Term rewrite system of encapsulation operator $\partial_H$}
        \label{TRSForPEncapsulation}
    \end{table}
\end{center}

(2) Then we prove that the normal forms of closed $APPTC$ terms including encapsulation operator $\partial_H$ are basic $APPTC$ terms.

Suppose that $p$ is a normal form of some closed $APPTC$ term and suppose that $p$ is not a basic $APPTC$ term. Let $p'$ denote the smallest sub-term of $p$ which is not a
basic $APPTC$ term. It implies that each sub-term of $p'$ is a basic $APPTC$ term. Then we prove that $p$ is not a term in normal form. It is sufficient to induct on the structure of
$p'$, following from Theorem \ref{ETPParallelism}, we only prove the new case $p'\equiv \partial_H(p_1)$:

\begin{itemize}
  \item Case $p_1\equiv e$. The transition rules $RD1$ or $RD2$ can be applied, so $p$ is not a normal form;
  \item Case $p_1\equiv \delta$. The transition rules $RD3$ can be applied, so $p$ is not a normal form;
  \item Case $p_1\equiv p_1'+ p_1''$. The transition rules $RD4$ can be applied, so $p$ is not a normal form;
  \item Case $p_1\equiv p_1'\cdot p_1''$. The transition rules $RD5$ can be applied, so $p$ is not a normal form;
  \item Case $p_1\equiv p_1'\leftmerge p_1''$. The transition rules $RD6$ can be applied, so $p$ is not a normal form;
  \item Case $p_1\equiv p_1'\boxplus_{\pi} p_1''$. The transition rules $RPD1$ can be applied, so $p$ is not a normal form.
\end{itemize}
\end{proof}

\begin{theorem}[Congruence theorem of encapsulation operator $\partial_H$ with respect to probabilistic pomset bisimulation equivalence]
Probabilistic pomset bisimulation equivalence $\sim_{pp}$ is a congruence with respect to encapsulation operator $\partial_H$.
\end{theorem}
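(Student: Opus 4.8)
The plan is to reduce the statement to showing that $\partial_H$ preserves $\sim_{pp}$: since the preceding development already established that probabilistic pomset bisimulation is an equivalence relation on $APPTC$ terms, it remains only to prove that if $x\sim_{pp} x'$ then $\partial_H(x)\sim_{pp}\partial_H(x')$. First I would fix a witnessing probabilistic pomset bisimulation $R$ with $(x,x')\in R$ and define a candidate relation
$$R'=\{(\partial_H(s),\partial_H(t)):(s,t)\in R\}\cup\{(\surd,\surd)\},$$
and then verify that $R'$ satisfies all four clauses of Definition \ref{PPSB}.

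For the two transition clauses, I would argue by a case analysis on the transition rules for $\partial_H$ in Table \ref{TRForPEncapsulation}. A probabilistic move $\partial_H(s)\rightsquigarrow\partial_H(s_1)$ can only arise from $s\rightsquigarrow s_1$, which $R$ matches by some $t\rightsquigarrow t_1$ with $(s_1,t_1)\in R$; applying the probabilistic rule for $\partial_H$ again yields $\partial_H(t)\rightsquigarrow\partial_H(t_1)$ with $(\partial_H(s_1),\partial_H(t_1))\in R'$, and symmetrically. An action (pomset) move $\partial_H(s)\xrightarrow{X_1}\partial_H(s_1)$ likewise arises from $s\xrightarrow{X_1}s_1$ with all labels of $X_1$ outside $H$; the bisimulation $R$ supplies $t\xrightarrow{X_2}t_1$ with $X_1\sim X_2$ and $(s_1,t_1)\in R$, and since $X_1\sim X_2$ means the two pomsets carry the same labels, the side condition that blocks $H$ passes through identically, so $\partial_H(t)\xrightarrow{X_2}\partial_H(t_1)$ with the residuals in $R'$.

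The probabilistic clauses (3) and (4) are where the argument needs the most care, and I expect clause (3) to be the main obstacle. Here one must show $\mu(\partial_H(s),C)=\mu(\partial_H(t),C)$ for every class $C\in\mathcal{C}(\mathcal{E})/R'$. The key ingredients are the PDF equation $\mu(\partial_H(x),\partial_H(x'))=\mu(x,x')$ from Table \ref{PDFEncap}, together with the observation that every $R'$-class of encapsulated terms is exactly the $\partial_H$-image of a single $R$-class, so that the cumulative distribution over $R'$-classes is obtained by transporting the cumulative distribution over $R$-classes through $\partial_H$. Since $(s,t)\in R$ already guarantees $\mu(s,D)=\mu(t,D)$ for each $D\in\mathcal{C}(\mathcal{E})/R$, transporting along $\partial_H$ gives the desired equality on $R'$-classes; clause (4) is immediate because $[\surd]_{R'}=\{\surd\}$ by construction. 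Having checked all four clauses, $R'$ is a probabilistic pomset bisimulation containing $(\partial_H(x),\partial_H(x'))$, which establishes the congruence.
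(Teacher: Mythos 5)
Your proposal is correct and follows the same strategy as the paper: the paper likewise reduces the theorem to showing that $\sim_{pp}$ is preserved by $\partial_H$ (i.e., that $x\sim_{pp}x'$ implies $\partial_H(x)\sim_{pp}\partial_H(x')$) and then declares the verification trivial and omits it. Your explicit construction of the relation $R'$ and the clause-by-clause check, including the use of the PDF equation $\mu(\partial_H(x),\partial_H(x'))=\mu(x,x')$ for the probabilistic clause, simply supplies the details the paper leaves out.
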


\begin{proof}
It is easy to see that probabilistic pomset bisimulation is an equivalent relation on $APPTC$ terms, we only need to prove that $\sim_{pp}$ is preserved by the operators $\partial_H$.
That is, if $x\sim_{pp} x'$, we need to prove that $\partial_H(x)\sim_{pp}\partial_H(x')$. The proof is quite trivial and we omit it.
\end{proof}

\begin{theorem}[Soundness of $APPTC$ modulo probabilistic pomset bisimulation equivalence]\label{SAPPTCPPBE}
Let $x$ and $y$ be $APPTC$ terms including encapsulation operator $\partial_H$. If $APPTC\vdash x=y$, then $x\sim_{pp} y$.
\end{theorem}

\begin{proof}
Since probabilistic pomset bisimulation $\sim_{pp}$ is both an equivalent and a congruent relation with respect to the operator $\partial_H$, we only need to check if each axiom in
Table \ref{AxiomsForPEncapsulation} is sound modulo probabilistic pomset bisimulation equivalence. The proof is quite trivial and we omit it.
\end{proof}

\begin{theorem}[Completeness of $APPTC$ modulo probabilistic pomset bisimulation equivalence]\label{CAPPTCPPBE}
Let $p$ and $q$ be closed $APPTC$ terms including encapsulation operator $\partial_H$, if $p\sim_{pp} q$ then $p=q$.
\end{theorem}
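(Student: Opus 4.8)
The plan is to follow the same template used for every completeness result in this chapter, reducing the statement with encapsulation to the already-established completeness of parallelism. First I would invoke the elimination theorem of $APPTC$ including $\partial_H$ (Theorem \ref{ETPEncapsulation}): for each closed $APPTC$ term $p$ containing the encapsulation operator, there exists a basic $APPTC$ term $p'$ with $APPTC\vdash p=p'$. Since the grammar of basic terms (Definition \ref{BTAPPTC}) contains no occurrence of $\partial_H$, this step removes the encapsulation operator entirely, so it suffices to prove the implication for closed basic $APPTC$ terms.

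Next I would put each basic term into the standard normal form modulo associativity and commutativity of $+$ (axioms $A1$, $A2$), namely $s_1\boxplus_{\pi_1}\cdots\boxplus_{\pi_{k-1}}s_k$ where each $s_i$ is a sum $t_1+\cdots+t_l$ of summands $t_j$ that are either atomic events or products $u_1\cdot\cdots\cdot u_m$ whose factors are events or left-merges $v_1\leftmerge\cdots\leftmerge v_m$ of events. The heart of the argument is the claim that for normal forms $n$ and $n'$, if $n\sim_{pp} n'$ then $n=_{AC}n'$. I would prove this by induction on the combined sizes of $n$ and $n'$, examining each summand of $n$ and matching it in $n'$ via the probabilistic transition $\rightsquigarrow$ followed by an action transition $\xrightarrow{e}$ (for a leading event) or $\xrightarrow{\{e_1,\cdots,e_m\}}$ (for a leading left-merge block), exactly as in the proof of Theorem \ref{CPPPBE}. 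The bisimulation then forces $n'$ to carry a matching summand, and the induction hypothesis upgrades bisimilarity of the residuals to syntactic $=_{AC}$ equality.

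Finally I would close the loop with soundness: given basic terms $s,t$ with $s\sim_{pp} t$, choose normal forms $n,n'$ with $s=n$ and $t=n'$ provable in $APPTC$, apply Theorem \ref{SAPPTCPPBE} to get $n\sim_{pp} s\sim_{pp} t\sim_{pp} n'$, invoke the claim to conclude $n=_{AC}n'$, and chain the equalities $s=n=_{AC}n'=t$.

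I do not expect a genuine obstacle here, because the encapsulation operator never survives into a basic term; the only thing to verify is that Theorem \ref{ETPEncapsulation} genuinely lands in the $\partial_H$-free basic fragment, after which the normal-form and induction argument is the one already carried out in Theorem \ref{CPPPBE}. The one point deserving care is that the probabilistic layer, namely the $\rightsquigarrow$ transitions governed by the PDF of Table \ref{PDFEncap} together with the $\boxplus_{\pi}$ structure of the normal form, is matched correctly; but clauses (2) and (3) of probabilistic pomset bisimulation already handle this, and since $\partial_H$ commutes with $\boxplus_{\pi}$ by axiom $PD1$, no new probabilistic bookkeeping is introduced beyond what Theorem \ref{CPPPBE} already supplies.
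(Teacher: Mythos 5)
Your proposal follows the paper's proof exactly: eliminate $\partial_H$ via Theorem \ref{ETPEncapsulation} to reduce to closed basic $APPTC$ terms, reuse the normal-form induction of Theorem \ref{CPPPBE} to show $n\sim_{pp}n'$ implies $n=_{AC}n'$, and close with the soundness theorem \ref{SAPPTCPPBE} to chain $s=n=_{AC}n'=t$. No divergence and no gap.
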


\begin{proof}
Firstly, by the elimination theorem of $APPTC$ (see Theorem \ref{ETPEncapsulation}), we know that the normal form of $APPTC$ does not contain $\partial_H$, and for each closed $APPTC$
term $p$, there exists a closed basic $APPTC$ term $p'$, such that $APPTC\vdash p=p'$, so, we only need to consider closed basic $APPTC$ terms.

Similarly to Theorem \ref{CPPPBE}, we can prove that for normal forms $n$ and $n'$, if $n\sim_{pp} n'$ then $n=_{AC}n'$.

Finally, let $s$ and $t$ be basic $APPTC$ terms, and $s\sim_{pp} t$, there are normal forms $n$ and $n'$, such that $s=n$ and $t=n'$. The soundness theorem of $APPTC$ modulo probabilistic
pomset bisimulation equivalence (see Theorem \ref{SAPPTCPPBE}) yields $s\sim_{pp} n$ and $t\sim_{pp} n'$, so $n\sim_{pp} s\sim_{pp} t\sim_{pp} n'$. Since if $n\sim_{pp} n'$ then
$n=_{AC}n'$, $s=n=_{AC}n'=t$, as desired.
\end{proof}

\begin{theorem}[Congruence theorem of encapsulation operator $\partial_H$ with respect to probabilistic step bisimulation equivalence]
Probabilistic step bisimulation equivalence $\sim_{ps}$ is a congruence with respect to encapsulation operator $\partial_H$.
\end{theorem}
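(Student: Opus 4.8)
The plan is to follow the same two-step template used for the preceding congruence results. First I would record that $\sim_{ps}$ is an equivalence relation on $APPTC$ terms (reflexivity, symmetry and transitivity are immediate from Definition \ref{PPSB}), so that it only remains to show $\sim_{ps}$ is preserved by $\partial_H$: whenever $x\sim_{ps}x'$, I must produce a probabilistic step bisimulation relating $\partial_H(x)$ and $\partial_H(x')$. Given a witnessing probabilistic step bisimulation $R$ with $(x,x')\in R$, I would set
$$R'=\{(\partial_H(C_1),\partial_H(C_2)):(C_1,C_2)\in R\}$$
and verify the four clauses of Definition \ref{PPSB} for $R'$, using the transition rules for $\partial_H$ in Table \ref{TRForPEncapsulation} and the identity $\mu(\partial_H(x),\partial_H(x'))=\mu(x,x')$ from Table \ref{PDFEncap}.

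For the probabilistic clause, any move $\partial_H(C_1)\xrsquigarrow{\pi}D$ must, by the probabilistic rule for $\partial_H$, have the form $D=\partial_H(C_1^{\pi})$ with $C_1\xrsquigarrow{\pi}C_1^{\pi}$; matching $C_1^{\pi}$ inside $R$ and reapplying the rule yields the required move from $\partial_H(C_2)$ landing back in $R'$. For the step clause, a transition $\partial_H(C_1)\xrightarrow{X_1}\partial_H(C_1')$ forces $C_1\xrightarrow{X_1}C_1'$ with every event of $X_1$ admitted by the side condition $e\notin H$, and the step $C_2\xrightarrow{X_2}C_2'$ supplied by $R$ satisfies $X_1\sim X_2$ and $(C_1',C_2')\in R$. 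Clause (3) then reduces, via $\mu(\partial_H(\cdot),\partial_H(\cdot))=\mu(\cdot,\cdot)$ and the fact that the $R'$-classes are exactly the $\partial_H$-images of the $R$-classes, to the corresponding equality for $R$; clause (4), $[\surd]_{R'}=\{\surd\}$, is inherited directly from $R$.

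The step I expect to be the crux is checking that the matched step $X_2$ is also admitted through the encapsulation, i.e. that $\partial_H(C_2)\xrightarrow{X_2}\partial_H(C_2')$ is actually derivable. This is where the pomset isomorphism $X_1\sim X_2$ does the essential work: since $\sim$ preserves labels and the filter $e\notin H$ depends only on which atomic actions occur, a step is blocked or admitted according to its labels alone, so $X_1$ passing the filter entails that the label-isomorphic $X_2$ passes too. With this observation the forward direction closes, the symmetric direction is identical, and since $(\partial_H(x),\partial_H(x'))\in R'$ we conclude $\partial_H(x)\sim_{ps}\partial_H(x')$, establishing that $\sim_{ps}$ is a congruence with respect to $\partial_H$.
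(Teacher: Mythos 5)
Your proposal is correct and follows the same route as the paper, which reduces the claim to showing that $\sim_{ps}$ is an equivalence preserved by $\partial_H$ and then omits the verification as trivial. The details you supply---the lifted relation $R'$, the use of the rule $\mu(\partial_H(x),\partial_H(x'))=\mu(x,x')$, and the observation that the side condition $e\notin H$ is invariant under the label-preserving isomorphism $X_1\sim X_2$---are exactly the content the paper leaves to the reader.
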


\begin{proof}
It is easy to see that probabilistic step bisimulation is an equivalent relation on $APPTC$ terms, we only need to prove that $\sim_{ps}$ is preserved by the operators $\partial_H$.
That is, if $x\sim_{ps} x'$, we need to prove that $\partial_H(x)\sim_{ps}\partial_H(x')$. The proof is quite trivial and we omit it.
\end{proof}

\begin{theorem}[Soundness of $APPTC$ modulo probabilistic step bisimulation equivalence]\label{SAPPTCPSBE}
Let $x$ and $y$ be $APPTC$ terms including encapsulation operator $\partial_H$. If $APPTC\vdash x=y$, then $x\sim_{ps} y$.
\end{theorem}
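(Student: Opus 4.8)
The plan is to follow the same two-step template used for every soundness result in this section. First I would observe that probabilistic step bisimulation $\sim_{ps}$ is an equivalence relation on $APPTC$ terms and, by combining the congruence theorem for the parallel-composition fragment with the congruence theorem for $\partial_H$ just established, that it is a congruence with respect to every operator of the signature, now including $\partial_H$. Congruence is precisely what licenses an axiom-by-axiom argument: if each defining equation $s=t$ of the theory satisfies $s\sim_{ps}t$, then provable equality $APPTC\vdash x=y$ propagates through the term structure to yield $x\sim_{ps}y$.

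Since the axioms for the parallelism fragment were already shown sound modulo $\sim_{ps}$ in Theorem \ref{SPPSBE}, the only new work is to verify soundness of the encapsulation axioms $D1$--$D6$ and $PD1$ in Table \ref{AxiomsForPEncapsulation}. For each such axiom I would exhibit a concrete relation containing the pair of (closed instances of the) two sides, together with the diagonal pairs needed to close it into a bisimulation, and check the clauses of Definition \ref{PPSB} using the transition rules for $\partial_H$ in Table \ref{TRForPEncapsulation}. The action-transition clauses are routine: $D1$ and $D2$ follow from the two action rules for $\partial_H$ according to whether $e\notin H$ or $e\in H$; $D3$ is immediate since $\delta$ has no outgoing action transition; and $D4$--$D6$, $PD1$ follow because $\partial_H$ distributes over $+$, $\cdot$, $\leftmerge$ and $\boxplus_{\pi}$ in lock-step with the corresponding transition rules, so a step of one side is matched by the evidently corresponding step of the other.

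The step I expect to be the genuine obstacle is the probabilistic part, namely clauses (2) and (3) of Definition \ref{PPSB}: matching probabilistic transitions $\rightsquigarrow$ and verifying $\mu(C_1,C)=\mu(C_2,C)$ on every $R$-equivalence class $C$. This is what separates the claim from ordinary non-probabilistic $ACP$ soundness and is the reason one cannot merely invoke the classical result. Here the key fact is the PDF clause $\mu(\partial_H(x),\partial_H(x'))=\mu(x,x')$ from Table \ref{PDFEncap} together with the probabilistic transition rule for $\partial_H$, which together show that $\partial_H$ transports the probabilistic behaviour of its argument faithfully; combined with the distributivity laws, this makes the cumulative-distribution condition for $D4$--$D6$ and $PD1$ reduce to the same condition already holding for the argument subterms. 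Once these probabilistic clauses are checked, together with the termination clause $[\surd]_R=\{\surd\}$, which holds because $\partial_H$ neither introduces nor removes $\surd$, soundness follows. As in the preceding proofs, the individual verifications are routine, so I would state them briefly rather than grind through each in full.
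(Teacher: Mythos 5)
Your proposal follows exactly the paper's argument: the paper likewise appeals to $\sim_{ps}$ being an equivalence and a congruence with respect to $\partial_H$ and then reduces soundness to an axiom-by-axiom check of Table \ref{AxiomsForPEncapsulation}, which it declares trivial and omits. Your elaboration of the probabilistic clauses via the PDF definition in Table \ref{PDFEncap} and the transition rules in Table \ref{TRForPEncapsulation} is a correct filling-in of the details the paper leaves out, not a different route.
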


\begin{proof}
Since probabilistic step bisimulation $\sim_{ps}$ is both an equivalent and a congruent relation with respect to the operator $\partial_H$, we only need to check if each axiom in
Table \ref{AxiomsForPEncapsulation} is sound modulo probabilistic step bisimulation equivalence. The proof is quite trivial and we omit it.
\end{proof}

\begin{theorem}[Completeness of $APPTC$ modulo probabilistic step bisimulation equivalence]\label{CAPPTCPSBE}
Let $p$ and $q$ be closed $APPTC$ terms including encapsulation operator $\partial_H$, if $p\sim_{ps} q$ then $p=q$.
\end{theorem}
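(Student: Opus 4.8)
The plan is to follow the same normal-form strategy used for every completeness result in this chapter, in particular the pomset analogue with encapsulation (Theorem \ref{CAPPTCPPBE}), transporting it from $\sim_{pp}$ to $\sim_{ps}$. First I would invoke the elimination theorem of $APPTC$ including $\partial_H$ (Theorem \ref{ETPEncapsulation}): for every closed $APPTC$ term $p$ there is a basic $APPTC$ term $p'$ with $APPTC\vdash p=p'$, and crucially the normal forms contain no occurrence of $\partial_H$. Hence it suffices to prove the implication for closed basic $APPTC$ terms, where the encapsulation operator has already been rewritten away.

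Next I would fix a canonical shape for basic terms modulo associativity and commutativity of $+$ (axioms $A1$, $A2$), writing each class $=_{AC}$ in the layered normal form $s_1\boxplus_{\pi_1}\cdots\boxplus_{\pi_{k-1}} s_k$, where each $s_i$ is a $+$-sum of summands, and each summand is either an atomic event or a sequential composition $u_1\cdot\cdots\cdot u_m$ whose factors are atomic events or left-merges $v_1\leftmerge\cdots\leftmerge v_m$ of atomic events. This is exactly the normal form already established in Theorem \ref{CPPSBE}, so no new combinatorial work is needed at this stage.

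The core step is to show that for two normal forms $n$ and $n'$, if $n\sim_{ps} n'$ then $n=_{AC} n'$, by induction on the sizes of $n$ and $n'$. Here I would reuse the summand-matching argument verbatim from Theorem \ref{CPPSBE}: a summand $e$ forces the probabilistic-then-action transition $n\rightsquigarrow\breve{e}\xrightarrow{e}\surd$ to be matched in $n'$, and a summand $t_1\cdot t_2$ with $t_1\equiv e'$ or $t_1\equiv e_1\leftmerge\cdots\leftmerge e_m$ forces a matching summand in $n'$ whose tail $t_2'$ satisfies $t_2\sim_{ps} t_2'$, so the induction hypothesis delivers $t_2=_{AC} t_2'$. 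The only genuine difference from the pomset case is that transitions are labelled by steps of pairwise concurrent events rather than by pomsets, but the clauses of probabilistic step bisimulation (Definition \ref{PPSB}) consume those labels in exactly the same fashion, so the reasoning carries over without change.

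Finally I would bridge basic terms and their normal forms using soundness (Theorem \ref{SAPPTCPSBE}): given basic terms $s,t$ with $s\sim_{ps} t$, pick normal forms $n,n'$ with $s=n$ and $t=n'$ provable, whence $n\sim_{ps} s\sim_{ps} t\sim_{ps} n'$, and the core step yields $n=_{AC} n'$, giving $s=n=_{AC}n'=t$. I expect no real obstacle: all of the hard analysis was carried out in Theorem \ref{CPPSBE}, and the present statement adds only the fact, furnished by Theorem \ref{ETPEncapsulation}, that $\partial_H$ vanishes under rewriting. The mild "hard part" is therefore just the bookkeeping check that enlarging the signature with $\partial_H$ introduces no new summand shapes in the normal form, which is immediate once elimination has removed every $\partial_H$.
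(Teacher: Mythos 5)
Your proposal is correct and follows essentially the same route as the paper: apply the elimination theorem (Theorem \ref{ETPEncapsulation}) to remove $\partial_H$ and reduce to closed basic $APPTC$ terms, reuse the normal-form and summand-matching argument of Theorem \ref{CPPSBE} to get $n\sim_{ps} n'\Rightarrow n=_{AC}n'$, and close the loop via the soundness theorem (Theorem \ref{SAPPTCPSBE}). No substantive differences from the paper's own proof.
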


\begin{proof}
Firstly, by the elimination theorem of $APPTC$ (see Theorem \ref{ETPEncapsulation}), we know that the normal form of $APPTC$ does not contain $\partial_H$, and for each closed $APPTC$
term $p$, there exists a closed basic $APPTC$ term $p'$, such that $APPTC\vdash p=p'$, so, we only need to consider closed basic $APPTC$ terms.

Similarly to Theorem \ref{CPPSBE}, we can prove that for normal forms $n$ and $n'$, if $n\sim_{ps} n'$ then $n=_{AC}n'$.

Finally, let $s$ and $t$ be basic $APPTC$ terms, and $s\sim_{ps} t$, there are normal forms $n$ and $n'$, such that $s=n$ and $t=n'$. The soundness theorem of $APPTC$ modulo probabilistic
step bisimulation equivalence (see Theorem \ref{SAPPTCPSBE}) yields $s\sim_{ps} n$ and $t\sim_{ps} n'$, so $n\sim_{ps} s\sim_{ps} t\sim_{ps} n'$. Since if $n\sim_{ps} n'$ then
$n=_{AC}n'$, $s=n=_{AC}n'=t$, as desired.
\end{proof}

\begin{theorem}[Congruence theorem of encapsulation operator $\partial_H$ with respect to probabilistic hp-bisimulation equivalence]
Probabilistic hp-bisimulation equivalence $\sim_{php}$ is a congruence with respect to encapsulation operator $\partial_H$.
\end{theorem}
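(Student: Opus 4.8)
The plan is to follow the same template as the preceding congruence results and reduce the statement to showing that the single operator $\partial_H$ respects $\sim_{php}$. Since it is routine that probabilistic hp-bisimulation is an equivalence relation on $APPTC$ terms, it suffices to prove that $x\sim_{php}x'$ implies $\partial_H(x)\sim_{php}\partial_H(x')$. First I would fix a witnessing probabilistic hp-bisimulation $R\subseteq\mathcal{C}(\mathcal{E}_x)\overline{\times}\mathcal{C}(\mathcal{E}_{x'})$ with $(\emptyset,\emptyset,\emptyset)\in R$, and then build the candidate posetal relation $R'=\{(\partial_H(C_1),f,\partial_H(C_2)):(C_1,f,C_2)\in R\}$, checking that it is again a probabilistic hp-bisimulation containing the empty triple.

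The verification proceeds clause by clause against Definition \ref{PHHPB}, reading off the transition rules for $\partial_H$ in Table \ref{TRForPEncapsulation}. For the probabilistic clause the argument is immediate: the rule $\frac{x\rightsquigarrow x'}{\partial_H(x)\rightsquigarrow\partial_H(x')}$ lets every probabilistic move of $\partial_H(C_1)$ factor through a move $C_1\xrsquigarrow{\pi}C_1^{\pi}$, which $R$ matches by some $C_2\xrsquigarrow{\pi}C_2^{\pi}$, and the same rule re-wraps the matching move as $\partial_H(C_2)\xrsquigarrow{\pi}\partial_H(C_2^{\pi})$; the isomorphism $f$ is unchanged, exactly as required. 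The cumulative-probability clause (3) follows from the defining equation $\mu(\partial_H(x),\partial_H(x'))=\mu(x,x')$ in Table \ref{PDFEncap}, so that the distribution induced by $\partial_H(C_1)$ over $R'$-classes equals the one induced by $C_1$ over $R$-classes, and symmetrically for $C_2$; clause (4) holds because encapsulation does not affect the termination predicate.

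The only place that requires care, and hence the main obstacle, is the action clause (1)--(2) together with the side condition $e\notin H$ attached to the rules $\frac{x\xrightarrow{e}\surd}{\partial_H(x)\xrightarrow{e}\surd}$ and $\frac{x\xrightarrow{e}x'}{\partial_H(x)\xrightarrow{e}\partial_H(x')}$. A move $\partial_H(C_1)\xrightarrow{e_1}\partial_H(C_1')$ exists precisely when $C_1\xrightarrow{e_1}C_1'$ and $e_1\notin H$; $R$ then supplies a matching $C_2\xrightarrow{e_2}C_2'$ with $(C_1',f[e_1\mapsto e_2],C_2')\in R$, and I must argue that $e_2\notin H$ so that the encapsulated move $\partial_H(C_2)\xrightarrow{e_2}\partial_H(C_2')$ is actually available. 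This is where the posetal structure does the work: the isomorphism $f[e_1\mapsto e_2]$ preserves labels, so $e_1$ and $e_2$ carry the same label and are blocked or allowed by $H$ together; hence $e_1\notin H$ forces $e_2\notin H$, the matching encapsulated transition exists, and $(\partial_H(C_1'),f[e_1\mapsto e_2],\partial_H(C_2'))\in R'$. The symmetric direction is identical, so $R'$ is a probabilistic hp-bisimulation and $\partial_H(x)\sim_{php}\partial_H(x')$, completing the congruence proof. Unlike the hhp-case, no downward-closure check is needed here.
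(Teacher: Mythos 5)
Your proposal is correct and follows the same route as the paper: reduce the congruence claim to showing that $\partial_H$ preserves $\sim_{php}$, then verify the bisimulation clauses for the lifted relation. The paper declares this verification ``quite trivial'' and omits it entirely, so your explicit construction of $R'$ and the label-preservation argument for the side condition $e\notin H$ simply supply the details the paper leaves out.
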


\begin{proof}
It is easy to see that probabilistic hp-bisimulation is an equivalent relation on $APPTC$ terms, we only need to prove that $\sim_{php}$ is preserved by the operators $\partial_H$.
That is, if $x\sim_{php} x'$, we need to prove that $\partial_H(x)\sim_{php}\partial_H(x')$. The proof is quite trivial and we omit it.
\end{proof}

\begin{theorem}[Soundness of $APPTC$ modulo probabilistic hp-bisimulation equivalence]\label{SAPPTCPHPBE}
Let $x$ and $y$ be $APPTC$ terms including encapsulation operator $\partial_H$. If $APPTC\vdash x=y$, then $x\sim_{php} y$.
\end{theorem}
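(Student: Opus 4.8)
The plan is to follow the standard soundness template already used for the companion results in this section (Theorems \ref{SAPPTCPPBE}, \ref{SAPPTCPSBE}). The key observation is that $APPTC\vdash x=y$ means $x=y$ is derivable from the axioms by reflexivity, symmetry, transitivity, and substitution into contexts. Since $\sim_{php}$ is an equivalence relation, the first three closure conditions are automatically respected; and since $\sim_{php}$ is a congruence with respect to every operator of the signature --- in particular with respect to $\partial_H$ by the congruence theorem just established, and with respect to $\between,\parallel,\leftmerge,\mid,\Theta,\triangleleft$ by the congruence theorem for parallelism --- it is preserved under arbitrary contexts. Hence it suffices to show that each axiom $s=t$ in Table \ref{AxiomsForPEncapsulation}, namely $D1$--$D6$ and $PD1$, is sound modulo $\sim_{php}$ for every closed substitution instance, and then combine this with the already-proved soundness of parallelism modulo $\sim_{php}$ (Theorem \ref{SPPHPBE}) to cover the remaining operators.

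First I would dispose of the three defining laws $D1$, $D2$, $D3$. For $D1$ (with $e\notin H$) both $\partial_H(e)$ and $e$ have the single probabilistic move $\rightsquigarrow\breve{e}$ dictated by Table \ref{PDFEncap} and \ref{TRForPEncapsulation}, followed by the action transition $\xrightarrow{e}\surd$, so the identity posetal relation equipped with the one-point isomorphism $f[e\mapsto e]$ is a probabilistic hp-bisimulation. For $D2$ (with $e\in H$) and $D3$, both sides reduce to the deadlock $\delta$, which after its probabilistic move $\rightsquigarrow\breve{\delta}$ exhibits no action transitions, so the empty-continuation hp-bisimulation witnesses $\partial_H(e)\sim_{php}\delta$ and $\partial_H(\delta)\sim_{php}\delta$.

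Next I would handle the distribution laws $D4$--$D6$ and $PD1$. For these the strategy is to build the obvious candidate posetal relation that pushes $\partial_H$ inward through the term structure, relating $\partial_H(t)$ to the corresponding term with $\partial_H$ distributed over $+$, $\cdot$, $\leftmerge$, and $\boxplus_{\pi}$ respectively; I would then verify clause by clause the four conditions of Definition \ref{PHHPB}. The action-transition matching (clause 1) and the carried configuration isomorphism $f$ follow because $\partial_H$ commutes with the transition rules of Table \ref{TRForAPPTC} up to renaming blocked events to $\delta$, exactly as in the non-probabilistic $APTC$ case. The probabilistic-transition matching (clause 2) is immediate from the fact that the PDF equations in Table \ref{PDFEncap} for $\partial_H$ are definitionally compatible with the PDF equations for $+$, $\cdot$, $\leftmerge$ and $\boxplus_{\pi}$ in Table \ref{PDFAPPTC} and \ref{PDFBAPTC}.

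The hard part, and the only place where the probabilistic setting adds genuine content over the classical soundness argument, will be verifying the PDF condition (clause 3), $\mu(C_1,C)=\mu(C_2,C)$ for each class $C\in\mathcal{C}(\mathcal{E})/R$, together with $[\surd]_R=\{\surd\}$ (clause 4); the delicate instance is $PD1$, where one must check that encapsulation commutes with probabilistic branching so that the cumulative distribution is preserved across the relation. Once the candidate relation is shown to respect $\mu$ on its own quotient classes, the remaining posetal/downward-closedness bookkeeping is routine and identical to the hp-bisimulation checks performed for $BAPTC$ and for parallelism. In keeping with the style of the surrounding results, after invoking congruence this verification reduces to a mechanical inspection of each axiom, and I would present it as such.
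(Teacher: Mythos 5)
Your proposal follows essentially the same route as the paper: since $\sim_{php}$ is an equivalence and a congruence with respect to $\partial_H$, soundness reduces to checking each axiom of Table \ref{AxiomsForPEncapsulation} modulo $\sim_{php}$, which is exactly the paper's argument (the paper then declares the per-axiom verification trivial and omits it, whereas you sketch the witnessing posetal relations and the PDF condition explicitly). Your additional detail on $D1$--$D6$ and $PD1$ is a correct elaboration of what the paper leaves implicit, not a different method.
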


\begin{proof}
Since probabilistic hp-bisimulation $\sim_{php}$ is both an equivalent and a congruent relation with respect to the operator $\partial_H$, we only need to check if each axiom in
Table \ref{AxiomsForPEncapsulation} is sound modulo probabilistic hp-bisimulation equivalence. The proof is quite trivial and we omit it.
\end{proof}

\begin{theorem}[Completeness of $APPTC$ modulo probabilistic hp-bisimulation equivalence]\label{CAPPTCPHPBE}
Let $p$ and $q$ be closed $APPTC$ terms including encapsulation operator $\partial_H$, if $p\sim_{php} q$ then $p=q$.
\end{theorem}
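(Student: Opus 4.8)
The plan is to mirror exactly the three-step template already used for the completeness results in this section, in particular the proof of Theorem \ref{CAPPTCPPBE}, and to lean on the corresponding completeness result for parallelism, Theorem \ref{CPPHPBE}, together with the elimination and soundness machinery for $APPTC$ with encapsulation. First I would invoke the Elimination theorem of $APPTC$ (Theorem \ref{ETPEncapsulation}) to observe that every closed $APPTC$ term $p$ containing $\partial_H$ is provably equal to a closed basic $APPTC$ term $p'$ with $APPTC\vdash p=p'$; crucially the normal form no longer contains $\partial_H$, so it suffices to treat closed basic $APPTC$ terms only. As in the earlier proofs I would then fix the normal-form representation modulo associativity and commutativity of $+$ (axioms $A1$, $A2$ of Table \ref{AxiomsForBAPTC}), writing each class in the form $s_1\boxplus_{\pi_1}\cdots\boxplus_{\pi_{k-1}}s_k$ with each $s_i$ a $+$-sum of summands that are either atomic events or products whose first factor is an event or a $\leftmerge$-block $v_1\leftmerge\cdots\leftmerge v_m$.

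The central step is to prove that for normal forms $n$ and $n'$, $n\sim_{php}n'$ implies $n=_{AC}n'$, by induction on the combined sizes of $n$ and $n'$. Here I would analyse each summand exactly as in Theorem \ref{CPPHPBE}: a summand $e$ forces $n\rightsquigarrow\breve{e}\xrightarrow{e}\surd$, which the bisimulation must match so that $n'$ contains $e$; a product summand $t_1\cdot t_2$ with $t_1\equiv e'$ forces $n\rightsquigarrow\breve{e'}\xrightarrow{e'}t_2$ matched by a summand $e'\cdot t_2'$ with $t_2\sim_{php}t_2'$, and the $\leftmerge$-block case $t_1\equiv e_1\leftmerge\cdots\leftmerge e_m$ forces the step $n\rightsquigarrow\breve{e_1}\leftmerge\cdots\leftmerge\breve{e_m}\xrightarrow{\{e_1,\cdots,e_m\}}t_2$, again matched up to $\sim_{php}$. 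In each case the induction hypothesis on the smaller tails $t_2\sim_{php}t_2'$ yields $t_2=_{AC}t_2'$, and the probabilistic-transition clause together with clause $(3)$ on the distributions $\mu$ pins down the $\boxplus_{\pi}$-structure. This last point is where I expect the genuine work to lie relative to the pomset and step cases: I must additionally verify that the order-isomorphism component $f$ in the posetal product is preserved along these matching transitions, so that the extracted summands really agree as hp-structures and not merely as labelled multisets.

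Finally I would close the argument in the standard way. Given basic $APPTC$ terms $s$ and $t$ with $s\sim_{php}t$, take their normal forms $n$ and $n'$ with $s=n$ and $t=n'$ provable in $APPTC$; the Soundness theorem of $APPTC$ modulo probabilistic hp-bisimulation (Theorem \ref{SAPPTCPHPBE}) gives $s\sim_{php}n$ and $t\sim_{php}n'$, whence $n\sim_{php}s\sim_{php}t\sim_{php}n'$. Applying the normal-form lemma just established, $n=_{AC}n'$, and therefore $s=n=_{AC}n'=t$, as desired. The only delicate ingredient beyond bookkeeping is the preservation of the isomorphism $f$ in the hp-bisimulation matching, which is exactly the extra condition the paper flags when it says one "just need additionally to check the above conditions on hp-bisimulation"; everything else is routine adaptation of the pomset/step completeness proofs.
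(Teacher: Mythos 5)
Your proposal follows the paper's proof exactly: reduce to closed basic terms via the elimination theorem (Theorem \ref{ETPEncapsulation}), establish that probabilistic hp-bisimilar normal forms are equal modulo AC of $+$ by the same summand-by-summand induction as in Theorem \ref{CPPHPBE}, and close with the soundness theorem (Theorem \ref{SAPPTCPHPBE}). The only difference is that you spell out the normal-form induction that the paper delegates to ``similarly to Theorem \ref{CPPHPBE}'', which is a faithful expansion rather than a different route.
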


\begin{proof}
Firstly, by the elimination theorem of $APPTC$ (see Theorem \ref{ETPEncapsulation}), we know that the normal form of $APPTC$ does not contain $\partial_H$, and for each closed $APPTC$
term $p$, there exists a closed basic $APPTC$ term $p'$, such that $APPTC\vdash p=p'$, so, we only need to consider closed basic $APPTC$ terms.

Similarly to Theorem \ref{CPPHPBE}, we can prove that for normal forms $n$ and $n'$, if $n\sim_{php} n'$ then $n=_{AC}n'$.

Finally, let $s$ and $t$ be basic $APPTC$ terms, and $s\sim_{php} t$, there are normal forms $n$ and $n'$, such that $s=n$ and $t=n'$. The soundness theorem of $APPTC$ modulo probabilistic
hp-bisimulation equivalence (see Theorem \ref{SAPPTCPHPBE}) yields $s\sim_{php} n$ and $t\sim_{php} n'$, so $n\sim_{php} s\sim_{php} t\sim_{php} n'$. Since if $n\sim_{php} n'$ then
$n=_{AC}n'$, $s=n=_{AC}n'=t$, as desired.
\end{proof}

\begin{theorem}[Congruence theorem of encapsulation operator $\partial_H$ with respect to probabilistic hhp-bisimulation equivalence]
Probabilistic hhp-bisimulation equivalence $\sim_{phhp}$ is a congruence with respect to encapsulation operator $\partial_H$.
\end{theorem}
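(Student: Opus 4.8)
The plan is to follow the same template already used for the congruence theorems of $\partial_H$ with respect to $\sim_{pp}$, $\sim_{ps}$ and $\sim_{php}$, adding the single extra ingredient specific to the hereditary case, namely downward closure. First I would record that $\sim_{phhp}$ is an equivalence relation on $APPTC$ terms: reflexivity via the identity posetal relation, symmetry by inverting each triple $(C_1,f,C_2)$ to $(C_2,f^{-1},C_1)$, and transitivity by composing the witnessing isomorphisms. It then remains only to show that $\partial_H$ preserves $\sim_{phhp}$; that is, assuming $x\sim_{phhp} x'$, I must produce a downward closed probabilistic hp-bisimulation relating $\partial_H(x)$ and $\partial_H(x')$.

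Next I would fix a witnessing probabilistic hhp-bisimulation $R$ for $x\sim_{phhp} x'$ (Definition \ref{PHHPB}) and define the candidate relation $R'=\{(\partial_H(C_1),f,\partial_H(C_2)):(C_1,f,C_2)\in R\}$ on the posetal product of the configurations of $\partial_H(x)$ and $\partial_H(x')$, carrying the isomorphism $f$ across unchanged, since $\partial_H$ only blocks or relays events without renaming them. I would then verify each clause of Definition \ref{PHHPB} for $R'$. For the action-transition clause, the transition rules of Table \ref{TRForPEncapsulation} show that every move $\partial_H(x)\xrightarrow{e}\partial_H(x'')$ arises from a move $x\xrightarrow{e}x''$ with $e\notin H$; matching this against $R$ and re-applying the $e\notin H$ rule produces the corresponding move on the $x'$ side, while blocked transitions ($e\in H$) remain blocked symmetrically on both sides. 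The probabilistic-transition clause and the PDF-equality clause $\mu(C_1,C)=\mu(C_2,C)$ follow from the probabilistic rule of Table \ref{TRForPEncapsulation} together with the defining equation $\mu(\partial_H(x),\partial_H(x'))=\mu(x,x')$ of Table \ref{PDFEncap}; the termination condition $[\surd]_{R'}=\{\surd\}$ is inherited directly from $R$.

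The main obstacle, and the only step genuinely beyond the hp-case, will be establishing that $R'$ is downward closed. Here I would exploit that $\partial_H$ acts pointwise on events and monotonically on configurations, so that any subtriple $(D_1,g,D_2)\subseteq(\partial_H(C_1),f,\partial_H(C_2))$ is itself of the form $(\partial_H(C_1^0),g,\partial_H(C_2^0))$ for some $(C_1^0,g,C_2^0)\subseteq(C_1,f,C_2)$ in the posetal product of $x$ and $x'$. Downward closure of the original hhp-bisimulation $R$ then places this subtriple in $R$, whence its $\partial_H$-image lies in $R'$. Some care is needed to check that the restriction of $f$ to $D_1$ agrees with the isomorphism induced on the preimage configuration; this is immediate because $\partial_H$ leaves the underlying causality $\leq$, conflict $\sharp$ and probabilistic conflict $\sharp_\pi$ on the surviving events untouched. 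Having verified every clause, $R'$ witnesses $\partial_H(x)\sim_{phhp}\partial_H(x')$, which completes the congruence argument.
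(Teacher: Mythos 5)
Your proposal is correct and follows exactly the route the paper intends: establish that $\sim_{phhp}$ is an equivalence relation and then show that $\partial_H$ preserves it by lifting the witnessing bisimulation through $\partial_H$; the paper simply declares this step trivial and omits the details, whereas you supply the construction of the image relation, the clause-by-clause verification, and the downward-closure argument.
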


\begin{proof}
It is easy to see that probabilistic hhp-bisimulation is an equivalent relation on $APPTC$ terms, we only need to prove that $\sim_{phhp}$ is preserved by the operators $\partial_H$.
That is, if $x\sim_{phhp} x'$, we need to prove that $\partial_H(x)\sim_{phhp}\partial_H(x')$. The proof is quite trivial and we omit it.
\end{proof}

\begin{theorem}[Soundness of $APPTC$ modulo probabilistic hhp-bisimulation equivalence]\label{SAPPTCPHHPBE}
Let $x$ and $y$ be $APPTC$ terms including encapsulation operator $\partial_H$. If $APPTC\vdash x=y$, then $x\sim_{phhp} y$.
\end{theorem}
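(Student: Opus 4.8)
The plan is to follow the same two-stage strategy used for every preceding soundness result in this chapter, now exploiting the congruence theorem for $\partial_H$ that immediately precedes this statement. First I would invoke that $\sim_{phhp}$ is an equivalence relation on $APPTC$ terms and, by the congruence theorem for the encapsulation operator, a congruence with respect to $\partial_H$; combined with the already-established soundness of the parallelism fragment (Theorem \ref{SPPHHPBE}), this reduces the claim to checking that each \emph{new} axiom $D1$--$D6$ and $PD1$ in Table \ref{AxiomsForPEncapsulation} is itself sound modulo $\sim_{phhp}$. Concretely, any derivation $APPTC\vdash x=y$ is built from axiom instances closed under the operators, and since $\sim_{phhp}$ is preserved by all operators it suffices to exhibit $\ell\sim_{phhp} r$ for each axiom $\ell=r$.

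Next I would treat the seven encapsulation axioms one at a time by producing, for each, a witnessing probabilistic hereditary history-preserving bisimulation. For the defining laws $D1$ ($e\notin H\Rightarrow\partial_H(e)=e$), $D2$ ($e\in H\Rightarrow\partial_H(e)=\delta$) and $D3$ ($\partial_H(\delta)=\delta$), the configuration structures are trivial, so the identity posetal relation extended with the clause $[\surd]_R=\{\surd\}$ suffices once one reads off the transition rules of Table \ref{TRForPEncapsulation}. For the distributivity laws $D4$ (over $+$), $D5$ (over $\cdot$), $D6$ (over $\leftmerge$) and $PD1$ (over $\boxplus_{\pi}$), I would take the posetal relation pairing $\partial_H$ pushed inside with $\partial_H$ left outside on matching subterms, and verify the four clauses of Definition \ref{PHHPB}: the action-transition matching with the order-isomorphism update $f[e_1\mapsto e_2]$, the probabilistic-transition matching $C_1\xrsquigarrow{\pi}C_1^{\pi}$ against $C_2\xrsquigarrow{\pi}C_2^{\pi}$, and the PDF equality $\mu(C_1,C)=\mu(C_2,C)$. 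The last two clauses follow directly from the PDF definitions for $\partial_H$ in Table \ref{PDFEncap} together with those of Table \ref{PDFAPPTC}, since $\mu(\partial_H(x),\partial_H(x'))=\mu(x,x')$ forces the cumulative probabilities on both sides to coincide.

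The step I expect to be the main obstacle is the hereditary (downward-closed) requirement: unlike the plain hp-case, a probabilistic hhp-bisimulation must remain a bisimulation after restriction to any sub-configuration pair $(C_1,f,C_2)\subseteq(C_1',f',C_2')$. For $D6$ and $PD1$, where $\leftmerge$ and $\boxplus_{\pi}$ interact with already-resolved probabilistic choices, I would have to confirm that the witnessing relation stays closed under this pointwise restriction while still matching probabilistic transitions. The cleanest route, mirroring the expansion-law argument for $\sim_{phhp}$ in Proposition \ref{NELSHHPB3}, is to observe that $\partial_H$ commutes with the causal order and leaves $\mu$ invariant, so the relation I build is downward closed by construction; since hhp-bisimulation is by definition a downward-closed hp-bisimulation, the hp-level checks (identical to those in Theorem \ref{SAPPTCPHPBE}) plus this closure yield the result. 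All the remaining computations are routine and parallel the $\sim_{php}$ case, so I would state that the verifications are straightforward and omit the fully expanded calculations.
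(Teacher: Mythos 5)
Your proposal follows the same strategy as the paper's own proof: since $\sim_{phhp}$ is an equivalence and a congruence with respect to $\partial_H$, the claim reduces to checking that each axiom of Table \ref{AxiomsForPEncapsulation} is sound modulo $\sim_{phhp}$, which the paper then declares trivial and omits. Your axiom-by-axiom verification, including the witnessing posetal relations and the downward-closure check, simply fills in the details the paper leaves out, so the two arguments coincide.
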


\begin{proof}
Since probabilistic hhp-bisimulation $\sim_{phhp}$ is both an equivalent and a congruent relation with respect to the operator $\partial_H$, we only need to check if each axiom in
Table \ref{AxiomsForPEncapsulation} is sound modulo probabilistic hhp-bisimulation equivalence. The proof is quite trivial and we omit it.
\end{proof}

\begin{theorem}[Completeness of $APPTC$ modulo probabilistic hhp-bisimulation equivalence]\label{CAPPTCPHHPBE}
Let $p$ and $q$ be closed $APPTC$ terms including encapsulation operator $\partial_H$, if $p\sim_{phhp} q$ then $p=q$.
\end{theorem}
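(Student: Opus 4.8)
The plan is to follow exactly the template of the preceding completeness results (most directly Theorem \ref{CPPHHPBE}, its $\partial_H$-free analogue), reducing the statement to a normal-form matching lemma. First I would invoke the elimination theorem of $APPTC$ including the encapsulation operator (Theorem \ref{ETPEncapsulation}): it guarantees that every closed $APPTC$ term $p$ is provably equal to a basic $APPTC$ term $p'$, and crucially that the normal form contains no occurrence of $\partial_H$ (the rewrite rules $RD1$--$RD6$ and $RPD1$ push $\partial_H$ down to atomic events and eliminate it). Hence it suffices to treat closed basic $APPTC$ terms in the sense of Definition \ref{BTAPPTC}, and the encapsulation operator plays no further role beyond this reduction.

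Next I would fix the normal form. Working modulo associativity and commutativity of $+$ (axioms $A1$, $A2$ in Table \ref{AxiomsForBAPTC}), denoted $=_{AC}$, each basic term is brought into the shape $s_1\boxplus_{\pi_1}\cdots\boxplus_{\pi_{k-1}} s_k$, where each $s_i$ is a sum $t_1+\cdots+t_l$ of summands $t_j$, and each summand is either an atomic event or of the form $u_1\cdot\cdots\cdot u_m$ with each $u_l$ either atomic or a left-merge $v_1\leftmerge\cdots\leftmerge v_m$ of atomic events. The core lemma to establish is: for normal forms $n$ and $n'$, if $n\sim_{phhp} n'$ then $n=_{AC} n'$. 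I would prove this by induction on the sizes of $n$ and $n'$, matching summands through a probabilistic transition followed by an action transition, exactly as in Theorem \ref{CPPHHPBE}: a summand $e$ yields $n\rightsquigarrow\breve{e}\xrightarrow{e}\surd$, forcing $n'$ to carry the same summand; a summand $t_1\cdot t_2$ with $t_1\equiv e'$ or $t_1\equiv e_1\leftmerge\cdots\leftmerge e_m$ yields $n\rightsquigarrow\breve{e'}\xrightarrow{e'}t_2$ (resp.\ a step-labelled transition), forcing a matching summand in $n'$ with residual $t_2'$ satisfying $t_2\sim_{phhp} t_2'$, and the induction hypothesis then gives $t_2=_{AC}t_2'$.

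Finally I would close the argument by combining this lemma with soundness: given basic terms $s,t$ with $s\sim_{phhp} t$, take normal forms $n,n'$ with $s=n$ and $t=n'$; soundness of $APPTC$ modulo $\sim_{phhp}$ (Theorem \ref{SAPPTCPHHPBE}) yields $s\sim_{phhp} n$ and $t\sim_{phhp} n'$, whence $n\sim_{phhp} n'$, so $n=_{AC}n'$ by the lemma and $s=n=_{AC}n'=t$, as desired. The step I expect to be the genuine obstacle is the matching lemma for $\sim_{phhp}$ specifically: one must simultaneously respect the downward-closure requirement of hereditary history-preserving bisimulation and the probabilistic side conditions of Definition \ref{PHHPB}, namely $\mu(C_1,C)=\mu(C_2,C)$ on each equivalence class (so that the box-sum weights $\pi_i$ are forced to agree) together with $[\surd]_R=\{\surd\}$. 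Since $\sim_{phhp}$ is by definition a downward closed $\sim_{php}$ and, as noted earlier in the excerpt, can be downward closed to single atomic events where it coincides with ordinary bisimilarity, the summand-by-summand matching carried out through single probabilistic-plus-action transitions is compatible with downward closure, so the hp-level argument transfers; the remaining care is only in verifying that the probabilistic weights are preserved under the matching, which the PDF clauses of Table \ref{PDFAPPTC} and the condition $\mu(C_1,C)=\mu(C_2,C)$ supply directly.
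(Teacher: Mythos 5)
Your proposal follows the paper's proof essentially verbatim: reduce to basic terms via the elimination theorem (Theorem \ref{ETPEncapsulation}, which removes $\partial_H$ from normal forms), establish the normal-form matching lemma ``$n\sim_{phhp} n'$ implies $n=_{AC}n'$'' by summand-wise induction exactly as in Theorem \ref{CPPHHPBE}, and close with the soundness theorem \ref{SAPPTCPHHPBE}. Your added remarks on downward closure and the probabilistic side conditions are a reasonable elaboration of a step the paper leaves implicit, but the route is the same.
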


\begin{proof}
Firstly, by the elimination theorem of $APPTC$ (see Theorem \ref{ETPEncapsulation}), we know that the normal form of $APPTC$ does not contain $\partial_H$, and for each closed $APPTC$
term $p$, there exists a closed basic $APPTC$ term $p'$, such that $APPTC\vdash p=p'$, so, we only need to consider closed basic $APPTC$ terms.

Similarly to Theorem \ref{CPPHHPBE}, we can prove that for normal forms $n$ and $n'$, if $n\sim_{phhp} n'$ then $n=_{AC}n'$.

Finally, let $s$ and $t$ be basic $APPTC$ terms, and $s\sim_{phhp} t$, there are normal forms $n$ and $n'$, such that $s=n$ and $t=n'$. The soundness theorem of $APPTC$ modulo probabilistic
hhp-bisimulation equivalence (see Theorem \ref{SAPPTCPHHPBE}) yields $s\sim_{phhp} n$ and $t\sim_{phhp} n'$, so $n\sim_{phhp} s\sim_{phhp} t\sim_{phhp} n'$. Since if $n\sim_{phhp} n'$ then
$n=_{AC}n'$, $s=n=_{AC}n'=t$, as desired.
\end{proof}

\subsection{Recursion}\label{rec}

In this section, we introduce recursion to capture infinite processes based on $APPTC$. Since in $APPTC$, there are four basic operators $\cdot$, $+$, $\boxplus_{\pi}$ and $\leftmerge$,
the recursion must be adapted this situation to include $\boxplus_{\pi}$ and $\leftmerge$.

In the following, $E,F,G$ are recursion specifications, $X,Y,Z$ are recursive variables.

\subsubsection{Guarded Recursive Specifications}

\begin{definition}[Guarded recursive specification]
A recursive specification

$$X_1=t_1(X_1,\cdots,X_n)$$
$$...$$
$$X_n=t_n(X_1,\cdots,X_n)$$

is guarded if the right-hand sides of its recursive equations can be adapted to the form by applications of the axioms in $APPTC$ and replacing recursion variables by the right-hand
sides of their recursive equations,

$((a_{111}\leftmerge\cdots\leftmerge a_{11i_1})\cdot s_1(X_1,\cdots,X_n)+\cdots+(a_{1k1}\leftmerge\cdots\leftmerge a_{1ki_k})\cdot s_k(X_1,\cdots,X_n)+(b_{111}\leftmerge\cdots\leftmerge
b_{11j_1})+\cdots+(b_{11j_1}\leftmerge\cdots\leftmerge b_{1lj_l}))\boxplus_{\pi_1}\cdots\boxplus_{\pi_{m-1}}((a_{m11}\leftmerge\cdots\leftmerge a_{m1i_1})\cdot s_1(X_1,\cdots,X_n)+
\cdots+(a_{mk1}\leftmerge\cdots\leftmerge a_{mki_k})\cdot s_k(X_1,\cdots,X_n)+(b_{m11}\leftmerge\cdots\leftmerge b_{m1j_1})+\cdots+(b_{m1j_1}\leftmerge\cdots\leftmerge b_{mlj_l}))$

where $a_{111},\cdots,a_{11i_1},a_{1k1},\cdots,a_{1ki_k},b_{111},\cdots,b_{11j_1},b_{11j_1},\cdots,b_{1lj_l},\cdots, a_{m11},\cdots,a_{m1i_1},a_{mk1},\cdots,a_{mki_k},\\b_{m11},\cdots,
b_{m1j_1},b_{m1j_1},\cdots,b_{mlj_l}\in \mathbb{E}$, and the sum above is allowed to be empty, in which case it represents the deadlock $\delta$.
\end{definition}

\begin{definition}[Linear recursive specification]\label{LRS}
A recursive specification is linear if its recursive equations are of the form

$((a_{111}\leftmerge\cdots\leftmerge a_{11i_1})X_1+\cdots+(a_{1k1}\leftmerge\cdots\leftmerge a_{1ki_k})X_k+(b_{111}\leftmerge\cdots\leftmerge b_{11j_1})+\cdots+(b_{11j_1}\leftmerge\cdots
\leftmerge b_{1lj_l}))\boxplus_{\pi_1}\cdots\boxplus_{\pi_{m-1}}((a_{m11}\leftmerge\cdots\leftmerge a_{m1i_1})X_1+\cdots+(a_{mk1}\leftmerge\cdots\leftmerge a_{mki_k})X_k+
(b_{m11}\leftmerge\cdots\leftmerge b_{m1j_1})+\cdots+(b_{m1j_1}\leftmerge\cdots\leftmerge b_{mlj_l}))$

where $a_{111},\cdots,a_{11i_1},a_{1k1},\cdots,a_{1ki_k},b_{111},\cdots,b_{11j_1},b_{11j_1},\cdots,b_{1lj_l},\cdots,a_{m11},\cdots,a_{m1i_1},a_{mk1},\cdots,a_{mki_k},\\b_{m11},\cdots,
b_{m1j_1},b_{m1j_1},\cdots,b_{mlj_l}\in \mathbb{E}$, and the sum above is allowed to be empty, in which case it represents the deadlock $\delta$.
\end{definition}

Firstly, we give the definition of PDFs in Table \ref{PDFGR}.

\begin{center}
    \begin{table}
        $$\mu(\langle X|E\rangle,y)=\mu(\langle t_X|E\rangle,y)$$
        $$\mu(x,y)=0,\textrm{otherwise}$$
        \caption{PDF definitions of recursion}
        \label{PDFGR}
    \end{table}
\end{center}

For a guarded recursive specifications $E$ with the form

$$X_1=t_1(X_1,\cdots,X_n)$$
$$\cdots$$
$$X_n=t_n(X_1,\cdots,X_n)$$

the behavior of the solution $\langle X_i|E\rangle$ for the recursion variable $X_i$ in $E$, where $i\in\{1,\cdots,n\}$, is exactly the behavior of their right-hand sides
$t_i(X_1,\cdots,X_n)$, which is captured by the two transition rules in Table \ref{TRForPGR}.

\begin{center}
    \begin{table}
        $$\frac{t_i(\langle X_1|E\rangle,\cdots,\langle X_n|E\rangle)\rightsquigarrow y}{\langle X_i|E\rangle\rightsquigarrow y}$$
        $$\frac{t_i(\langle X_1|E\rangle,\cdots,\langle X_n|E\rangle)\xrightarrow{\{e_1,\cdots,e_k\}}\surd}{\langle X_i|E\rangle\xrightarrow{\{e_1,\cdots,e_k\}}\surd}$$
        $$\frac{t_i(\langle X_1|E\rangle,\cdots,\langle X_n|E\rangle)\xrightarrow{\{e_1,\cdots,e_k\}} y}{\langle X_i|E\rangle\xrightarrow{\{e_1,\cdots,e_k\}} y}$$
        \caption{Transition rules of guarded recursion}
        \label{TRForPGR}
    \end{table}
\end{center}

\begin{theorem}[Conservitivity of $APPTC$ with guarded recursion]
$APPTC$ with guarded recursion is a conservative extension of $APPTC$.
\end{theorem}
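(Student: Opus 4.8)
The plan is to derive this exactly as the earlier conservativity results in this section, namely by an appeal to the general Conservative Extension theorem (Theorem \ref{TCE}). I would take $T_0$ to be the TSS of $APPTC$, that is the probabilistic and action transition rules of Tables \ref{TRForAPPTC1}, \ref{TRForAPPTC} together with the encapsulation rules of Table \ref{TRForPEncapsulation}, and I would take $T_1$ to consist of the three transition rules for guarded recursion in Table \ref{TRForPGR}, so that $APPTC$ with guarded recursion is precisely $T_0\oplus T_1$. The goal is then to check the hypotheses of Theorem \ref{TCE} for this pair.

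First I would note that both $T_0$ and $T_0\oplus T_1$ are positive after reduction: the only negative premises in play are the $\nrightarrow$ premises inherited from the communication and unless rules, and these already live entirely in $T_0$, while the new recursion rules introduce no further negative premises. Next I would verify condition (1) of Theorem \ref{TCE}, that $T_0$ is source-dependent; this is exactly the fact already used in the generalization and conservativity proofs for $APPTC$ above, where each rule makes its variables source-dependent via the source together with the premises, so the whole system $APPTC$ is source-dependent.

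The remaining work is condition (2): for each $\rho\in T_1$ the source must be fresh. Here every rule of Table \ref{TRForPGR} has source $\langle X_i|E\rangle$, and the recursion construct $\langle X_i|E\rangle$ is a function symbol that does not occur in the signature $\Sigma_{APPTC}$. Hence the source of each recursion rule is fresh in the sense of the Freshness definition, so the first alternative of condition (2) is satisfied outright and no premise-based check is needed. The two facts to present are therefore: (i) the transition rules of $APPTC$ are all source-dependent, and (ii) the sources of the transition rules for guarded recursion contain an occurrence of the fresh recursion construct $\langle X_i|E\rangle$. With these in hand, Theorem \ref{TCE} yields directly that $T_0\oplus T_1$ is a conservative extension of $T_0$, which is the claim.

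Since the proof is almost entirely mechanical, there is no deep obstacle; the step needing the most care — and the one I expect to be the mildest sticking point — is confirming the positive-after-reduction hypothesis in the presence of the $\nrightarrow$ premises carried over from $APPTC$. But because those negative premises and their predicates are confined to $T_0$ and the recursion rules add nothing negative, the reduction behaviour of $T_0$ is untouched, so the hypothesis holds and the appeal to Theorem \ref{TCE} goes through.
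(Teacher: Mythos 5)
Your proposal is correct and follows essentially the same route as the paper: the paper's proof likewise observes that the transition rules of $APPTC$ are source-dependent and that the guarded-recursion rules of Table \ref{TRForPGR} contain only the fresh constant $\langle X_i|E\rangle$ in their source, then appeals to the general conservative-extension theorem. Your additional explicit check of the positive-after-reduction hypothesis is a reasonable elaboration the paper leaves implicit.
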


\begin{proof}
Since the transition rules of $APPTC$ are source-dependent, and the transition rules for guarded recursion in Table \ref{TRForPGR} contain only a fresh constant in their source, so the
transition rules of $APPTC$ with guarded recursion are a conservative extension of those of $APPTC$.
\end{proof}

\begin{theorem}[Congruence theorem of $APPTC$ with guarded recursion]
Probabilistic truly concurrent bisimulation equivalences $\sim_{pp}$, $\sim_{ps}$, $\sim_{php}$ and $\sim_{phhp}$ are all congruences with respect to $APPTC$ with guarded recursion.
\end{theorem}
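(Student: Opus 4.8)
The plan is to prove, for each of the four equivalences $\sim_{pp}$, $\sim_{ps}$, $\sim_{php}$ and $\sim_{phhp}$, that it is preserved by every operator of the extended signature, i.e. by $\cdot,+,\boxplus_{\pi},\between,\parallel,\leftmerge,\mid,\Theta,\triangleleft,\partial_H$ together with the recursion constants $\langle X_i|E\rangle$. Since each of these relations is already an equivalence, and since congruence of $\sim_{pp},\sim_{ps},\sim_{php},\sim_{phhp}$ with respect to the $APPTC$ operators has already been established in the preceding congruence theorems, the only genuinely new content is the recursion construct. First I would record the key structural fact built into Table \ref{TRForPGR} and the PDF clause of Table \ref{PDFGR}: the constant $\langle X_i|E\rangle$ inherits \emph{exactly} the behaviour of its unfolding $t_i(\langle X_1|E\rangle,\cdots,\langle X_n|E\rangle)$, in the sense that every probabilistic transition $\rightsquigarrow$, every (pomset, step, or single-event) action transition, every termination $\surd$, and every value $\mu(\langle X_i|E\rangle,y)$ is matched by the body and conversely; this is precisely the content of the clause $\mu(\langle X|E\rangle,y)=\mu(\langle t_X|E\rangle,y)$.

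The core step is a congruence-for-recursion argument. Given two guarded recursive specifications $E$ and $F$ over matched variables whose corresponding bodies are pairwise equivalent, I would define the candidate relation
$$R=\{(\langle X_i|E\rangle,\langle X_i|F\rangle):i\in\{1,\cdots,n\}\}$$
closed under all the $APPTC$ operators and unioned with the equivalence itself, and then verify the clauses of the relevant probabilistic bisimulation, namely Definition \ref{PPSB} for $\sim_{pp}$ and $\sim_{ps}$ and Definition \ref{PHHPB} for $\sim_{php}$ and $\sim_{phhp}$. For a probabilistic transition of $\langle X_i|E\rangle$, the rule of Table \ref{TRForPGR} forces it to originate from $t_i(\langle\widetilde{X}|E\rangle)\rightsquigarrow y$; invoking the already-proved congruence of the operators occurring in $t_i$ together with the hypothesis on the bodies, I would produce a matching move on the $F$-side landing again in $R$, and symmetrically. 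The action, step and pomset clauses are treated identically, and the termination clause through the $\surd$-rules. For $\sim_{php}$ one additionally carries the order-isomorphism $f$ on configurations through the unfolding, and for $\sim_{phhp}$ one checks that $R$ is downward closed.

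I expect the main obstacle to be the probabilistic matching condition, clause~(3) of Definitions \ref{PPSB} and \ref{PHHPB}, which demands $\mu(C_1,C)=\mu(C_2,C)$ for every class $C\in\mathcal{C}(\mathcal{E})/R$. This requires showing that the cumulative distribution of a recursion constant agrees with that of its unfolding and is stable under $R$; the clause $\mu(\langle X|E\rangle,y)=\mu(\langle t_X|E\rangle,y)$ of Table \ref{PDFGR} supplies exactly the needed identity, but one must confirm that the $R$-classes of the constants correspond to the $R$-classes of their bodies so that the summed probabilities coincide, appealing to guardedness to guarantee that the distributions are well-defined with no probabilistic divergence before the first action transition. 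Once this probabilistic clause is secured, the remaining clauses reduce to the established congruence of the $APPTC$ operators applied inside the bodies $t_i$, and the argument closes by a simultaneous induction on the structure of $t_i$ and on the depth of inference of the transitions, in the same spirit as Lemma \ref{LUS3}.
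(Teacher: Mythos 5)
Your proposal is correct and follows essentially the same route as the paper: the paper's own proof is a two-line sketch resting on exactly the two facts you identify, namely that $\langle X_i|E\rangle$ behaves as the unfolding of its right-hand side $t_i$ (which, for a guarded specification, is a term built from the $APPTC$ operators) and that congruence of $\sim_{pp}$, $\sim_{ps}$, $\sim_{php}$ and $\sim_{phhp}$ with respect to those operators is already established. Your version merely supplies the details — the explicit candidate relation, the treatment of the probabilistic clause via $\mu(\langle X|E\rangle,y)=\mu(\langle t_X|E\rangle,y)$, and the appeal to guardedness — that the paper leaves implicit.
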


\begin{proof}
It follows the following two facts:
\begin{enumerate}
  \item in a guarded recursive specification, right-hand sides of its recursive equations can be adapted to the form by applications of the axioms in $APPTC$ and replacing recursion
  variables by the right-hand sides of their recursive equations;
  \item truly concurrent bisimulation equivalences $\sim_{pp}$, $\sim_{ps}$, $\sim_{php}$ and $\sim_{phhp}$ are all congruences with respect to all operators of $APPTC$.
\end{enumerate}
\end{proof}

\subsubsection{Recursive Definition and Specification Principles}

The $RDP$ (Recursive Definition Principle) and the $RSP$ (Recursive Specification Principle) are shown in Table \ref{PRDPRSP}.

\begin{center}
\begin{table}
  \begin{tabular}{@{}ll@{}}
\hline No. &Axiom\\
  $RDP$ & $\langle X_i|E\rangle = t_i(\langle X_1|E\rangle,\cdots,\langle X_n|E\rangle)\quad (i\in\{1,\cdots,n\})$\\
  $RSP$ & if $y_i=t_i(y_1,\cdots,y_n)$ for $i\in\{1,\cdots,n\}$, then $y_i=\langle X_i|E\rangle \quad(i\in\{1,\cdots,n\})$\\
\end{tabular}
\caption{Recursive definition and specification principle}
\label{PRDPRSP}
\end{table}
\end{center}

$RDP$ follows immediately from the two transition rules for guarded recursion, which express that $\langle X_i|E\rangle$ and $t_i(\langle X_1|E\rangle,\cdots,\langle X_n|E\rangle)$
have the same initial transitions for $i\in\{1,\cdots,n\}$. $RSP$ follows from the fact that guarded recursive specifications have only one solution.

\begin{theorem}[Elimination theorem of $APPTC$ with linear recursion]\label{ETPRecursion}
Each process term in $APPTC$ with linear recursion is equal to a process term $\langle X_1|E\rangle$ with $E$ a linear recursive specification.
\end{theorem}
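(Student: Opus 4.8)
The plan is to prove the statement by induction on the structure of a process term $t$ of $APPTC$ with linear recursion, establishing in each case that $t$ is provably equal to some $\langle X_1|E\rangle$ with $E$ linear in the sense of Definition \ref{LRS}. A standing device throughout: whenever the induction hypothesis has already rewritten two immediate subterms as $\langle X_1|E_1\rangle$ and $\langle Y_1|E_2\rangle$, I first rename recursion variables so that $E_1$ and $E_2$ use disjoint variables (harmless by $RDP$ and $RSP$), after which the two specifications can be merged freely. Since every linear specification is guarded, $RSP$ is available to conclude ``this composite process solves the linear specification $E$, hence equals $\langle X_1|E\rangle$'' in each case.

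The base cases are immediate: $e\in\mathbb{E}$ equals $\langle X_1|\{X_1=e\}\rangle$, $\delta$ equals $\langle X_1|\{X_1=\delta\}\rangle$ with the empty sum, and a linear recursion constant is already in the desired form up to reindexing. For the $BAPTC$ operators I would argue directly. For $\boxplus_{\pi}$, adjoining the equation $Z_1=t_{X_1}\boxplus_{\pi}t_{Y_1}$ to $E_1\cup E_2$ is already linear. For $+$, I combine the leading right-hand sides and use $PA5$ to push the $+$ through the probabilistic layer until the equation regains the normal form of Definition \ref{LRS}. Sequential composition is the first delicate step: to express $\langle X_1|E_1\rangle\cdot\langle Y_1|E_2\rangle$ I introduce, for each variable $X_k$ of $E_1$, a fresh variable $W_k$ standing for $\langle X_k|E_1\rangle\cdot\langle Y_1|E_2\rangle$, deriving its equation from that of $X_k$ by replacing each action-prefixed continuation $X_{k'}$ with $W_{k'}$ and each terminating summand $b_{11}\leftmerge\cdots\leftmerge b_{1j}$ with $(b_{11}\leftmerge\cdots\leftmerge b_{1j})\cdot Y_1$ (justified by $A4$, $A5$, $PA4$ and $RDP$); adjoining $E_2$ yields a linear specification whose first variable equals the product by $RSP$.

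The main obstacle is the parallel layer $\parallel$, $\leftmerge$, $\mid$, $\between$. Here I would introduce, for each pair $(X_i,Y_j)$, fresh variables denoting the relevant merges $\langle X_i|E_1\rangle\between\langle Y_j|E_2\rangle$ (together with its left-merge and communication variants), and derive their equations by substituting the linear equations of $X_i$ and $Y_j$ into $P4$ ($x\parallel y=x\leftmerge y+y\leftmerge x$) and expanding with $P5-P9$, $C10-C17$ and $PM1-PM4$. The care here is twofold: the rule $P7$ (and $C13$) turns a pair of action-prefixed summands into a $\between$-continuation, so the family of fresh variables must be chosen closed under exactly the composite operators that reappear; and the probabilistic choices must be distributed out by $PM1-PM4$ and reassembled into the $\boxplus_{\pi}$ layer by $PA5$ so that each summand ends up either $(a\leftmerge\cdots\leftmerge a)Z$, a single communication event $\gamma(\cdot,\cdot)$, or terminating, i.e. genuinely linear with no surviving nested $\cdot$ or residual $\between$. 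Verifying that the finite system so obtained falls within Definition \ref{LRS} is the crux; once it does, $RSP$ identifies the pair variable for $(X_1,Y_1)$ with the merge.

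Finally, the unary operators are routine. For $\Theta$, $\triangleleft$ and $\partial_H$ I would push the operator through each equation of the relevant linear specification using $CE18-CE23$ together with $PCE1$, $U24-U36$ together with $PU1-PU5$, and $D1-D6$ together with $PD1$ respectively; each of these axioms rewrites a linear right-hand side into another linear right-hand side (with $\triangleleft$ handled via fresh variables for $X_i\triangleleft y$ exactly as in the parallel case), so the transformed specification stays linear and $RSP$ closes each case.
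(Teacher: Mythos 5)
Your proposal is correct and follows essentially the same route as the paper: a structural induction that rewrites each term as a solution of a finite linear recursive specification and then invokes $RSP$ to conclude $t_1=\langle X_1|E\rangle$. The paper states this in one compressed step (asserting that every term ``can be expressed in the form of equations'' of the linear shape and that the $t_i$ form a solution for $E$), whereas you supply the operator-by-operator constructions — fresh variables for sequential continuations, pair variables closed under the merge operators, and distribution of the unary operators — that the paper leaves implicit.
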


\begin{proof}
By applying structural induction with respect to term size, each process term $t_1$ in $APPTC$ with linear recursion generates a process can be expressed in the form of equations

$t_i=((a_{1i11}\leftmerge\cdots\leftmerge a_{1i1i_1})t_{i1}+\cdots+(a_{1ik_i1}\leftmerge\cdots\leftmerge a_{1ik_ii_k})t_{ik_i}+(b_{1i11}\leftmerge\cdots\leftmerge b_{1i1i_1})+\cdots+
(b_{1il_i1}\leftmerge\cdots\leftmerge b_{1il_ii_l}))\boxplus_{\pi_1}\cdots\boxplus_{\pi_{m-1}}((a_{mi11}\leftmerge\cdots\leftmerge a_{mi1i_1})t_{i1}+\cdots+(a_{mik_i1}\leftmerge\cdots
\leftmerge a_{mik_ii_k})t_{ik_i}+(b_{mi11}\leftmerge\cdots\leftmerge b_{mi1i_1})+\cdots+(b_{mil_i1}\leftmerge\cdots\leftmerge b_{mil_ii_l}))$

for $i\in\{1,\cdots,n\}$. Let the linear recursive specification $E$ consist of the recursive equations

$X_i=((a_{1i11}\leftmerge\cdots\leftmerge a_{1i1i_1})X_{i1}+\cdots+(a_{1ik_i1}\leftmerge\cdots\leftmerge a_{1ik_ii_k})X_{ik_i}+(b_{1i11}\leftmerge\cdots\leftmerge b_{1i1i_1})+\cdots+
(b_{1il_i1}\leftmerge\cdots\leftmerge b_{1il_ii_l}))\boxplus_{\pi_1}\cdots\boxplus_{\pi_{m-1}}((a_{mi11}\leftmerge\cdots\leftmerge a_{mi1i_1})X_{i1}+\cdots+(a_{mik_i1}\leftmerge\cdots
\leftmerge a_{mik_ii_k})X_{ik_i}+(b_{mi11}\leftmerge\cdots\leftmerge b_{mi1i_1})+\cdots+(b_{mil_i1}\leftmerge\cdots\leftmerge b_{mil_ii_l}))$

for $i\in\{1,\cdots,n\}$. Replacing $X_i$ by $t_i$ for $i\in\{1,\cdots,n\}$ is a solution for $E$, $RSP$ yields $t_1=\langle X_1|E\rangle$.
\end{proof}

\begin{theorem}[Soundness of $APPTC$ with guarded recursion]\label{SAPPTCR}
Let $x$ and $y$ be $APPTC$ with guarded recursion terms. If $APPTC\textrm{ with guarded recursion}\vdash x=y$, then
\begin{enumerate}
  \item $x\sim_{pp} y$;
  \item $x\sim_{ps} y$;
  \item $x\sim_{php} y$;
  \item $x\sim_{phhp} y$.
\end{enumerate}
\end{theorem}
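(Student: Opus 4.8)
The plan is to exploit the standard two-step structure used for every soundness result in this chapter. First I would invoke the congruence theorem for $APPTC$ with guarded recursion stated above, which guarantees that each of $\sim_{pp}$, $\sim_{ps}$, $\sim_{php}$ and $\sim_{phhp}$ is an equivalence relation preserved by every operator, including the recursion construct $\langle X_i|E\rangle$. Once congruence is in hand, proving that $APPTC\text{ with guarded recursion}\vdash x=y$ implies $x\sim y$ reduces to checking that each single axiom, read as an equation between closed instances, relates bisimilar processes. The purely algebraic axioms of $APPTC$ were already shown sound (see Theorems \ref{SPPPBE}, \ref{SPPSBE}, \ref{SPPHPBE}, \ref{SPPHHPBE} and \ref{SAPPTCPPBE}, \ref{SAPPTCPSBE}, \ref{SAPPTCPHPBE}, \ref{SAPPTCPHHPBE}), so the only genuinely new obligations are the recursion principles $RDP$ and $RSP$ of Table \ref{PRDPRSP}, which I would verify for all four equivalences simultaneously.

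For $RDP$ the argument is direct from the operational rules. The transition rules in Table \ref{TRForPGR} are designed precisely so that $\langle X_i|E\rangle$ inherits all probabilistic transitions $\rightsquigarrow$ and all action transitions $\xrightarrow{\{e_1,\cdots,e_k\}}$ of its body $t_i(\langle X_1|E\rangle,\cdots,\langle X_n|E\rangle)$, and conversely these are the only available transitions. Hence the relation pairing $\langle X_i|E\rangle$ with $t_i(\langle X_1|E\rangle,\cdots,\langle X_n|E\rangle)$ and taken as the identity elsewhere is a probabilistic bisimulation; the PDF clause holds because $\mu(\langle X|E\rangle,y)=\mu(\langle t_X|E\rangle,y)$ by Table \ref{PDFGR}, and for the hp and hhp cases the attached order-isomorphism is the identity, so the posetal and downward-closure conditions are immediate.

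The real work lies in establishing soundness of $RSP$, i.e.\ that a guarded recursive specification has a unique solution modulo each equivalence: if closed tuples $\widetilde{P}$ and $\widetilde{Q}$ both satisfy $\widetilde{P}\sim t_i(\widetilde{P})$ and $\widetilde{Q}\sim t_i(\widetilde{Q})$, then $P_i\sim Q_i$. The candidate relation pairs each recursion context applied to $\widetilde{P}$ with the same context applied to $\widetilde{Q}$, and the proof proceeds by matching, transition for transition, using induction on the depth of inference. This is a variation of the unique-solution arguments already carried out in Lemma \ref{LUS3} and Theorems \ref{USSSB3}--\ref{USSHHPB3}; the guardedness hypothesis is exactly what ensures that after each probabilistic choice every leading action is a genuine prefix, so no unguarded variable can contribute a spurious transition and the matching terminates.

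The main obstacle will be handling $RSP$ in the probabilistic setting: beyond matching action transitions one must also match the probabilistic transitions $\rightsquigarrow$ and verify the cumulative-probability clause $\mu(C_1,C)=\mu(C_2,C)$ on the quotient by the candidate relation. I expect the cleanest route is to prove soundness first for $\sim_{ps}$, then transfer to $\sim_{pp}$ using the observation that pomset and step matching differ only in whether the transition label must be pairwise concurrent, then strengthen to $\sim_{php}$ by additionally tracking the order-isomorphism $f$ across each matched pair, and finally to $\sim_{phhp}$ by checking the candidate relation is downward closed --- exactly the layering used throughout this chapter. The routine verification of the PDF equalities and of the congruence-based reduction I would omit, citing the corresponding earlier results.
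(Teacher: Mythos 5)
Your proposal follows essentially the same route as the paper: the paper's proof likewise invokes the fact that $\sim_{pp}$, $\sim_{ps}$, $\sim_{php}$ and $\sim_{phhp}$ are equivalences and congruences for $APPTC$ with guarded recursion, reduces soundness to checking the axioms $RDP$ and $RSP$ of Table \ref{PRDPRSP}, and then declares that check trivial and omits it. Your elaboration of how $RDP$ follows from the transition rules of Table \ref{TRForPGR} and the PDF clause of Table \ref{PDFGR}, and of how $RSP$ reduces to uniqueness of solutions of guarded specifications, simply fills in the detail the paper leaves out, consistent with the paper's own remark that ``$RSP$ follows from the fact that guarded recursive specifications have only one solution.''
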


\begin{proof}
Since $\sim_{pp}$, $\sim_{ps}$, $\sim_{php}$ and $\sim_{phhp}$ are all both an equivalent and a congruent relation with respect to $APPTC$ with guarded recursion, we only need to
check if each axiom in Table \ref{PRDPRSP} is sound modulo $\sim_{pp}$, $\sim_{ps}$, $\sim_{php}$ and $\sim_{phhp}$. The proof is quite trivial and we omit it.
\end{proof}

\begin{theorem}[Completeness of $APPTC$ with linear recursion]\label{CAPPTCR}
Let $p$ and $q$ be closed $APPTC$ with linear recursion terms, then,
\begin{enumerate}
  \item if $p\sim_{pp} q$ then $p=q$;
  \item if $p\sim_{ps} q$ then $p=q$;
  \item if $p\sim_{php} q$ then $p=q$;
  \item if $p\sim_{phhp} q$ then $p=q$.
\end{enumerate}
\end{theorem}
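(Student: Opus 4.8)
The plan is to follow the classical $RSP$-based completeness argument, adapted to the probabilistic setting. First I would invoke the elimination theorem for $APPTC$ with linear recursion (Theorem \ref{ETPRecursion}) to write $p = \langle X_1|E_1\rangle$ and $q = \langle Y_1|E_2\rangle$, where $E_1$ (with variables $X_1,\dots,X_m$) and $E_2$ (with variables $Y_1,\dots,Y_n$) are linear recursive specifications in the normal form of Definition \ref{LRS}. By soundness (Theorem \ref{SAPPTCR}) it then suffices to produce a single guarded linear recursive specification $E$ of which both $\langle X_1|E_1\rangle$ and $\langle Y_1|E_2\rangle$ are provable solutions; the two $RSP$ clauses (Table \ref{PRDPRSP}) would then force $p = \langle Z_{11}|E\rangle = q$.

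Second, I would build $E$ as a product specification driven by the given bisimulation $R$ witnessing $p \sim_{pp} q$. For every pair $(X_i, Y_j)$ of recursion variables with $\langle X_i|E_1\rangle \sim_{pp} \langle Y_j|E_2\rangle$ that is reachable from $(X_1, Y_1)$, I introduce a fresh variable $Z_{ij}$. Using $R$ I match up summands: each probabilistic branch and each action-prefixed summand of the equation for $X_i$ is paired with a corresponding summand of the equation for $Y_j$ (and conversely), and the matched successors $(X_{i'}, Y_{j'})$ are again $R$-related, so they name an already-declared $Z_{i'j'}$. The equation for $Z_{ij}$ records exactly these matched summands, preserving the outer $\boxplus_{\pi}$ / inner $+$ linear shape of Definition \ref{LRS}, so $E$ is again linear, and guardedness is inherited because every action summand is prefixed by atomic events.

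Third, I would verify that both assignments $Z_{ij} \mapsto \langle X_i|E_1\rangle$ and $Z_{ij} \mapsto \langle Y_j|E_2\rangle$ are solutions of $E$ up to provable equality, using $RDP$ to unfold $\langle X_i|E_1\rangle$ and $\langle Y_j|E_2\rangle$ and the $APPTC$ axioms (associativity, commutativity and idempotence of $+$, and the laws $PA1$--$PA5$ of Table \ref{AxiomsForBAPTC}) to rearrange each unfolded right-hand side into the matched form recorded in $E$. Applying $RSP$ yields $\langle X_1|E_1\rangle = \langle Z_{11}|E\rangle = \langle Y_1|E_2\rangle$, i.e.\ $p = q$, proving item (1). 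Items (2)--(4) follow the same construction: for $\sim_{ps}$ the step labels replace the pomset labels, and for $\sim_{php}$ and $\sim_{phhp}$ the related pairs $Z_{ij}$ additionally carry the order-isomorphism $f$ between configurations, with the hhp case requiring that the relation used be downward closed, exactly as in the strong definitions earlier (Definition \ref{PHHPB}).

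The hard part will be the probabilistic bookkeeping in the second and third steps. Unlike the non-probabilistic setting, matching action summands alone is insufficient: the probabilistic-bisimulation clause $\mu(C_1, C) = \mu(C_2, C)$ for every equivalence class $C$ must be respected, so the $\boxplus$-weights appearing in the product equation for $Z_{ij}$ have to be chosen so that the cumulative probability assigned to each matched class agrees on both sides. I would argue that $\langle X_i|E_1\rangle \sim_{pp} \langle Y_j|E_2\rangle$ forces the two distributions over $R$-classes to coincide, and then use $PA1$--$PA3$ together with the weight-renormalization in $PA2$ to group and reweight the probabilistic branches of each side into a common normal form $s_1 \boxplus_{\pi_1} \cdots \boxplus_{\pi_{k-1}} s_k$ whose weights are precisely those demanded by $E$. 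Establishing that this reweighting is always provable from the axioms, uniformly across all reachable related pairs, is the crux of the argument.
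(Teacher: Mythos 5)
Your proposal follows essentially the same route as the paper's own proof: eliminate to $\langle X_1|E_1\rangle$ and $\langle Y_1|E_2\rangle$, build a product linear specification $E$ with variables $Z_{XY}$ indexed by bisimilar pairs whose equations record the matched summands, show via $RDP$ that both $X\mapsto\langle X|E_1\rangle$ and $Y\mapsto\langle Y|E_2\rangle$ yield solutions, and conclude by $RSP$. The only difference is that you explicitly flag the $\boxplus_{\pi}$-weight bookkeeping as the crux, whereas the paper's definition of $t_{XY}$ lists only the action and termination summands and silently elides how the outer probabilistic structure of Definition \ref{LRS} is matched --- so your treatment is, if anything, the more careful one.
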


\begin{proof}
Firstly, by the elimination theorem of $APPTC$ with guarded recursion (see Theorem \ref{ETPRecursion}), we know that each process term in $APPTC$ with linear recursion is equal to a
process term $\langle X_1|E\rangle$ with $E$ a linear recursive specification.

It remains to prove the following cases.

(1) If $\langle X_1|E_1\rangle \sim_{pp} \langle Y_1|E_2\rangle$ for linear recursive specification $E_1$ and $E_2$, then $\langle X_1|E_1\rangle = \langle Y_1|E_2\rangle$.

Let $E_1$ consist of recursive equations $X=t_X$ for $X\in \mathcal{X}$ and $E_2$
consists of recursion equations $Y=t_Y$ for $Y\in\mathcal{Y}$. Let the linear recursive specification $E$ consist of recursion equations $Z_{XY}=t_{XY}$, and
$\langle X|E_1\rangle\sim_{pp}\langle Y|E_2\rangle$, and $t_{XY}$ consists of the following summands:

\begin{enumerate}
  \item $t_{XY}$ contains a summand $(a_1\leftmerge\cdots\leftmerge a_m)Z_{X'Y'}$ iff $t_X$ contains the summand $(a_1\leftmerge\cdots\leftmerge a_m)X'$ and $t_Y$ contains the
  summand $(a_1\leftmerge\cdots\leftmerge a_m)Y'$ such that $\langle X'|E_1\rangle\sim_{pp}\langle Y'|E_2\rangle$;
  \item $t_{XY}$ contains a summand $b_1\leftmerge\cdots\leftmerge b_n$ iff $t_X$ contains the summand $b_1\leftmerge\cdots\leftmerge b_n$ and $t_Y$ contains the summand
  $b_1\leftmerge\cdots\leftmerge b_n$.
\end{enumerate}

Let $\sigma$ map recursion variable $X$ in $E_1$ to $\langle X|E_1\rangle$, and let $\psi$ map recursion variable $Z_{XY}$ in $E$ to $\langle X|E_1\rangle$. So,
$\sigma((a_1\leftmerge\cdots\leftmerge a_m)X')\equiv(a_1\leftmerge\cdots\leftmerge a_m)\langle X'|E_1\rangle\equiv\psi((a_1\leftmerge\cdots\leftmerge a_m)Z_{X'Y'})$,
so by $RDP$, we get $\langle X|E_1\rangle=\sigma(t_X)=\psi(t_{XY})$. Then by $RSP$, $\langle X|E_1\rangle=\langle Z_{XY}|E\rangle$, particularly,
$\langle X_1|E_1\rangle=\langle Z_{X_1Y_1}|E\rangle$. Similarly, we can obtain $\langle Y_1|E_2\rangle=\langle Z_{X_1Y_1}|E\rangle$. Finally,
$\langle X_1|E_1\rangle=\langle Z_{X_1Y_1}|E\rangle=\langle Y_1|E_2\rangle$, as desired.

(2) If $\langle X_1|E_1\rangle \sim_{ps} \langle Y_1|E_2\rangle$ for linear recursive specification $E_1$ and $E_2$, then $\langle X_1|E_1\rangle = \langle Y_1|E_2\rangle$.

It can be proven similarly to (1), we omit it.

(3) If $\langle X_1|E_1\rangle \sim_{php} \langle Y_1|E_2\rangle$ for linear recursive specification $E_1$ and $E_2$, then $\langle X_1|E_1\rangle = \langle Y_1|E_2\rangle$.

It can be proven similarly to (1), we omit it.

(4) If $\langle X_1|E_1\rangle \sim_{phhp} \langle Y_1|E_2\rangle$ for linear recursive specification $E_1$ and $E_2$, then $\langle X_1|E_1\rangle = \langle Y_1|E_2\rangle$.

It can be proven similarly to (1), we omit it.
\end{proof}

\subsubsection{Approximation Induction Principle}

In this subsection, we introduce approximation induction principle ($AIP$) and try to explain that $AIP$ is still valid in probabilistic true concurrency. $AIP$ can be used to try and
equate probabilistic truly concurrent bisimilar guarded recursive specifications. $AIP$ says that if two process terms are probabilistic truly concurrent bisimilar up to any finite depth,
then they are probabilistic truly concurrent bisimilar.

Also, we need the auxiliary unary projection operator $\Pi_n$ for $n\in\mathbb{N}$ and $\mathbb{N}\triangleq\{0,1,2,\cdots\}$.

Firstly, we give the definition of PDFs in Table \ref{PDFAIP}.

\begin{center}
    \begin{table}
        $$\mu(\Pi_n(x),\Pi_n(x'))=\mu(x,x')\quad n\geq 1$$
        $$\mu(x,y)=0,\textrm{otherwise}$$
        \caption{PDF definitions of approximation induction principle}
        \label{PDFAIP}
    \end{table}
\end{center}

The transition rules of $\Pi_n$ are expressed in Table \ref{TRForPProjection}.

\begin{center}
    \begin{table}
        $$\frac{x\rightsquigarrow x'}{\Pi_n(x)\rightsquigarrow\Pi_n(x')}$$
        $$\frac{x\xrightarrow{\{e_1,\cdots,e_k\}}\surd}{\Pi_{n+1}(x)\xrightarrow{\{e_1,\cdots,e_k\}}\surd}
        \quad\frac{x\xrightarrow{\{e_1,\cdots,e_k\}}x'}{\Pi_{n+1}(x)\xrightarrow{\{e_1,\cdots,e_k\}}\Pi_n(x')}$$
        \caption{Transition rules of encapsulation operator $\partial_H$}
        \label{TRForPProjection}
    \end{table}
\end{center}

Based on the transition rules for projection operator $\Pi_n$ in Table \ref{TRForPProjection}, we design the axioms as Table \ref{AxiomsForPProjection} shows.

\begin{center}
    \begin{table}
        \begin{tabular}{@{}ll@{}}
            \hline No. &Axiom\\
            $PR1$ & $\Pi_n(x+y)=\Pi_n(x)+\Pi_n(y)$\\
            $PPR1$ & $\Pi_n(x\boxplus_{\rho} y)=\Pi_n(x)\boxplus_{\rho}\Pi_n(y)$\\
            $PR2$ & $\Pi_n(x\leftmerge y)=\Pi_n(x)\leftmerge \Pi_n(y)$\\
            $PR3$ & $\Pi_{n+1}(e_1\leftmerge\cdots\leftmerge e_k)=e_1\leftmerge\cdots\leftmerge e_k$\\
            $PR4$ & $\Pi_{n+1}((e_1\leftmerge\cdots\leftmerge e_k)\cdot x)=(e_1\leftmerge\cdots\leftmerge e_k)\cdot\Pi_n(x)$\\
            $PR5$ & $\Pi_0(x)=\delta$\\
            $PR6$ & $\Pi_n(\delta)=\delta$\\
        \end{tabular}
        \caption{Axioms of projection operator}
        \label{AxiomsForPProjection}
    \end{table}
\end{center}

The axioms $PR1-PR2$ and $PPR1$ say that $\Pi_n(s+t)$, $\Pi_n(s\leftmerge t)$ and $\Pi_n(s\boxplus_{\rho} t)$ can execute transitions of $s$ and $t$ up to depth $n$.
$PR3$ says that $\Pi_{n+1}(e_1\leftmerge\cdots\leftmerge e_k)$ executes $\{e_1,\cdots,e_{k}\}$ and terminates successfully. $PR4$ says that
$\Pi_{n+1}((e_1\leftmerge\cdots\leftmerge e_k)\cdot t)$ executes $\{e_1,\cdots,e_{k}\}$ and then executes transitions of $t$ up to depth $n$. $PR5$ and $PR6$ say that
$\Pi_0(t)$ and $\Pi_n(\delta)$ exhibit no actions.

\begin{theorem}[Conservativity of $APPTC$ with projection operator and guarded recursion]
$APPTC$  with projection operator and guarded recursion is a conservative extension of $APPTC$ with guarded recursion.
\end{theorem}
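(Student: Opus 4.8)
The plan is to invoke the conservative extension theorem (Theorem \ref{TCE}), exactly as in the preceding conservativity results of this chapter. By that theorem, to show that $APPTC$ with the projection operator $\Pi_n$ and guarded recursion conservatively extends $APPTC$ with guarded recursion, it suffices to verify two things: first, that the base system $APPTC$ with guarded recursion is source-dependent; and second, that each newly added transition rule---namely the rules for $\Pi_n$ in Table \ref{TRForPProjection}---either has a fresh source or carries a premise of the prescribed form, where ``fresh'' here means containing the function symbol $\Pi_n$ drawn from the new part of the signature $\Sigma_1\setminus\Sigma_0$.

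First I would check source-dependency of the base system. The action and probabilistic transition rules of $APPTC$ set up in the earlier subsections are all source-dependent, and the guarded recursion rules of Table \ref{TRForPGR} introduce only the fresh constants $\langle X_i|E\rangle$ in their sources while keeping all variables of their premises source-dependent; hence the composite system $APPTC$ with guarded recursion is itself source-dependent, discharging condition (1).

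Next I would inspect the projection rules. Each rule in Table \ref{TRForPProjection}---the probabilistic rule $\frac{x\rightsquigarrow x'}{\Pi_n(x)\rightsquigarrow\Pi_n(x')}$ together with the two action rules whose sources are $\Pi_{n+1}(x)$---has a source containing the function symbol $\Pi_n$, which lies in $\Sigma_1\setminus\Sigma_0$ and is therefore fresh. This immediately satisfies the second condition of Theorem \ref{TCE}, since for a fresh source no further premise is required.

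With both conditions in hand, Theorem \ref{TCE} delivers the conclusion that $APPTC$ with projection operator and guarded recursion is a conservative extension of $APPTC$ with guarded recursion. The only mildly delicate point, and the one I would be careful to state explicitly, is that introducing $\Pi_n$ creates no new transitions among the old terms; but this is precisely what freshness of the source $\Pi_n(x)$ guarantees, because no term of the old signature unifies with such a source, so none of the transitions $t\xrightarrow{a}t'$ or $t\rightsquigarrow t'$ with $t$ in the old signature is altered. I expect the entire argument to be routine verification rather than to present any genuine obstacle.
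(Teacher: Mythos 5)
Your proposal is correct and follows essentially the same route as the paper: both invoke Theorem \ref{TCE} by checking that the transition rules of $APPTC$ with guarded recursion are source-dependent and that the sources of the projection rules contain an occurrence of the fresh operator $\Pi_n$. Your additional remarks simply spell out why these two facts suffice, which the paper leaves implicit.
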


\begin{proof}
It follows from the following two facts (see Theorem \ref{TCE}).

\begin{enumerate}
  \item The transition rules of $APPTC$ with guarded recursion are all source-dependent;
  \item The sources of the transition rules for the projection operator contain an occurrence of $\Pi_n$.
\end{enumerate}
\end{proof}

\begin{theorem}[Congruence theorem of projection operator $\Pi_n$]
Probabilistic truly concurrent bisimulation equivalences $\sim_{pp}$, $\sim_{ps}$, $\sim_{php}$ and $\sim_{phhp}$ are all congruences with respect to projection operator $\Pi_n$.
\end{theorem}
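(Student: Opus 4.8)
The plan is to follow the same template used for the congruence theorems of the other operators of $APPTC$: since $\sim_{pp}$, $\sim_{ps}$, $\sim_{php}$ and $\sim_{phhp}$ are already equivalence relations on the terms of $APPTC$ with projection, it suffices to show that each is preserved by $\Pi_n$, i.e.\ that $x\sim x'$ implies $\Pi_n(x)\sim\Pi_n(x')$ for every $n\in\mathbb{N}$. Given a witnessing bisimulation $R$ for $x\sim x'$, I would take as candidate the relation $R'=\{(\Pi_n(s),\Pi_n(t)):(s,t)\in R,\ n\in\mathbb{N}\}$ (together with the pair $(\surd,\surd)$) and verify that $R'$ satisfies the corresponding clauses of Definition \ref{PPSB} for $\sim_{pp},\sim_{ps}$ and of Definition \ref{PHHPB} for $\sim_{php},\sim_{phhp}$. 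Because the transition rules for $\Pi_n$ in Table \ref{TRForPProjection} never introduce or rename events, the label sets, the step structure, and (in the hp/hhp cases) the order-isomorphisms are all inherited unchanged from $R$.

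First I would check the action-transition clause. Any move $\Pi_{n+1}(s)\xrightarrow{X}u$ must, by the only applicable rules in Table \ref{TRForPProjection}, arise from a move $s\xrightarrow{X}s'$ of $s$, with $u\equiv\surd$ (when $s\xrightarrow{X}\surd$) or $u\equiv\Pi_n(s')$. Since $(s,t)\in R$ and $R$ is a probabilistic bisimulation, $t$ answers with $t\xrightarrow{X'}t'$ where $X\sim X'$ and $(s',t')\in R$; applying the same projection rule to $t$ gives $\Pi_{n+1}(t)\xrightarrow{X'}\Pi_n(t')$ with $(\Pi_n(s'),\Pi_n(t'))\in R'$, while the termination sub-case is matched by $\surd$ against $\surd$. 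The probabilistic-transition clause is analogous and simpler: a move $\Pi_n(s)\rightsquigarrow v$ comes only from $s\rightsquigarrow s'$ with $v\equiv\Pi_n(s')$, which $R$ matches by $t\rightsquigarrow t'$, $(s',t')\in R$, yielding $\Pi_n(t)\rightsquigarrow\Pi_n(t')$ and $(\Pi_n(s'),\Pi_n(t'))\in R'$. Both directions are symmetric, and $[\surd]_{R'}=\{\surd\}$ follows directly from $[\surd]_R=\{\surd\}$.

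The step I expect to require the most care is the probabilistic-weight clause, namely $\mu(\Pi_n(s),C)=\mu(\Pi_n(t),C)$ for every class $C\in\mathcal{C}(\mathcal{E})/R'$. By the PDF definition for projection in Table \ref{PDFAIP}, $\mu(\Pi_n(s),z)$ is nonzero only when $z\equiv\Pi_n(s')$, in which case it equals $\mu(s,s')$, so all probabilistic mass leaving $\Pi_n(s)$ is concentrated on projected terms. The key observation is that, for $n\ge 1$, the partition of projected terms induced by $R'$ corresponds exactly to the partition of the underlying terms induced by $R$, i.e.\ $\Pi_n(s')\mathrel{R'}\Pi_n(t')$ iff $s'\mathrel{R}t'$. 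Summing the PDF over a class $C$ of $R'$ therefore equals the cumulative weight of $s$ (resp.\ $t$) over the corresponding class $D$ of $R$, and these agree because $(s,t)\in R$ already gives $\mu(s,D)=\mu(t,D)$ for every $D\in\mathcal{C}(\mathcal{E})/R$. I would make this rigorous by checking that $D\mapsto\Pi_n(D)$ is a bijection between the relevant classes that commutes with the cumulative distribution function.

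Finally I would dispatch the remaining points. For $\sim_{php}$ and $\sim_{phhp}$ the relation $R'$ is taken inside the posetal product, carrying along the isomorphism $f$ unchanged, since $\Pi_n$ preserves event labels and causal order; hence the clause $(C_1',f[e_1\mapsto e_2],C_2')\in R'$ is verified exactly as in the action case above, and downward closure of $R'$ in the hhp case is inherited from downward closure of $R$ using that $\Pi_n$ acts uniformly on sub-configurations. The base case $n=0$ is immediate: by axiom $PR5$, $\Pi_0(x)=\delta$ exhibits no action transitions, so $\Pi_0(s)$ and $\Pi_0(t)$ are trivially related. This completes the verification that $R'$ is a probabilistic bisimulation of each of the four kinds, establishing that $\Pi_n$ respects $\sim_{pp}$, $\sim_{ps}$, $\sim_{php}$ and $\sim_{phhp}$.
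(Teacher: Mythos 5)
Your proposal is correct and follows essentially the same route as the paper: the paper likewise reduces the theorem to showing that each of $\sim_{pp}$, $\sim_{ps}$, $\sim_{php}$ and $\sim_{phhp}$ is preserved by $\Pi_n$, and then declares the verification ``quite trivial'' and omits it. Your explicit construction of the witnessing relation $R'$ and the check of the transition, probabilistic and weight clauses simply supplies the details the paper leaves out.
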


\begin{proof}
It is easy to see that $\sim_{pp}$, $\sim_{ps}$, $\sim_{php}$ and $\sim_{phhp}$ are all an equivalent relation with respect to projection operator $\Pi_n$, we only need to prove that
$\sim_{pp}$, $\sim_{ps}$, $\sim_{php}$ and $\sim_{phhp}$ are preserved by the operators $\Pi_n$.
That is, if $x\sim_{pp} x'$, $x\sim_{ps} x'$, $x\sim_{php} x'$ and $x\sim_{phhp} x'$, we need to prove that $\Pi_n(x)\sim_{pp} \Pi_n(x')$, $\Pi_n(x)\sim_{ps} \Pi_n(x')$,
$\Pi_n(x)\sim_{php} \Pi_n(x')$ and $\Pi_n(x)\sim_{phhp} \Pi_n(x')$. The proof is quite trivial and we omit it.
\end{proof}

\begin{theorem}[Elimination theorem of $APPTC$ with linear recursion and projection operator]\label{ETPProjection}
Each process term in $APPTC$ with linear recursion and projection operator is equal to a process term $\langle X_1|E\rangle$ with $E$ a linear recursive specification.
\end{theorem}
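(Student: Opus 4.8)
The plan is to reduce this statement to the already-established Elimination theorem of $APPTC$ with linear recursion (Theorem~\ref{ETPRecursion}): since every projection-free term in $APPTC$ with linear recursion is provably equal to some $\langle X_1|E\rangle$ with $E$ linear, it suffices to show that every occurrence of the projection operator $\Pi_n$ can be driven out of a term, so that what remains is projection-free. I would therefore first isolate the heart of the matter as a lemma, and then glue it to Theorem~\ref{ETPRecursion} by an outer elimination argument.

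The key lemma I would prove is: for every $n\in\mathbb{N}$ and every linear recursive specification $E$, the term $\Pi_n(\langle X_1|E\rangle)$ is provably equal to a \emph{projection-free} closed term (in fact a basic $APPTC$ term). This is proved by induction on $n$. For the base case $n=0$, axiom $PR5$ gives $\Pi_0(\langle X_1|E\rangle)=\delta$, which is projection-free. For the step, I would use $RDP$ to unfold $\langle X_1|E\rangle$ into $t_1(\langle X_1|E\rangle,\cdots,\langle X_n|E\rangle)$, where $t_1$ has the linear normal form of Definition~\ref{LRS}, i.e. a $\boxplus_{\pi}$-combination of $+$-sums of summands of shape $(a_1\leftmerge\cdots\leftmerge a_k)\langle X_i|E\rangle$ or $(b_1\leftmerge\cdots\leftmerge b_j)$. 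Applying $\Pi_{n+1}$ and pushing it inward by $PPR1$ (through $\boxplus_{\pi}$) and $PR1$ (through $+$), each terminating summand is handled by $PR3$ ($\Pi_{n+1}(e_1\leftmerge\cdots\leftmerge e_k)=e_1\leftmerge\cdots\leftmerge e_k$) and each prefixed summand by $PR4$ ($\Pi_{n+1}((e_1\leftmerge\cdots\leftmerge e_k)\cdot\langle X_i|E\rangle)=(e_1\leftmerge\cdots\leftmerge e_k)\cdot\Pi_n(\langle X_i|E\rangle)$). Every residual projection has now dropped from depth $n+1$ to depth $n$ and sits on a recursion solution $\langle X_i|E\rangle$, so the induction hypothesis on $n$ finishes the step.

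With the lemma in hand, I would complete the theorem by an outer induction that removes projection operators from an arbitrary term $t$ of $APPTC$ with linear recursion and projection, working from the inside out. If $t$ contains no $\Pi_n$, Theorem~\ref{ETPRecursion} already gives $t=\langle X_1|E\rangle$. Otherwise choose an innermost subterm $\Pi_n(s)$ with $s$ projection-free; then $s$ is a term of $APPTC$ with linear recursion, hence $s=\langle X_1|E\rangle$ by Theorem~\ref{ETPRecursion}, and the lemma rewrites $\Pi_n(s)$ to a projection-free term. Substituting this back (equational logic is a congruence, so equals may be replaced by equals in any context) strictly decreases the number of projection symbols, so after finitely many steps $t$ equals a projection-free term, to which Theorem~\ref{ETPRecursion} applies once more to deliver the required $\langle X_1|E\rangle$ with $E$ linear.

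The main obstacle I anticipate is bookkeeping rather than conceptual: making the innermost-projection elimination rigorous requires checking that, after one elimination, the surrounding context together with the new projection-free subterm is again a legitimate $APPTC$-with-linear-recursion term whose projection count has genuinely decreased, and that the rewriting chosen in the lemma stays within the linear format so the final appeal to Theorem~\ref{ETPRecursion} is legitimate. Care is also needed so that the $RDP$-unfolding in the lemma's inductive step does not reintroduce deeper projections — but this is exactly why the depth index drops from $n+1$ to $n$, which is the device that makes the induction on $n$ close.
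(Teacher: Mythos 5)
Your proof is correct, but it is organized differently from the paper's. The paper proves this theorem by the same one-shot argument it uses for Theorem \ref{ETPRecursion}: a structural induction on term size that directly exhibits the whole term as a component of a solution of a linear recursive specification $E$ (with the projection axioms absorbed into the claim that each term ``can be expressed in the form of equations'' of the linear shape), after which a single application of $RSP$ gives $t_1=\langle X_1|E\rangle$, with the closing remark that the resulting $E$ contains no occurrence of $\Pi_n$. You instead factor the result through the already-proved Theorem \ref{ETPRecursion}, isolating as a separate lemma the fact that $\Pi_n(\langle X_i|E\rangle)$ rewrites, by induction on $n$ using $PR5$, $RDP$, $PPR1$, $PR1$, $PR3$ and $PR4$, to a projection-free (indeed basic) term, and then eliminating projections innermost-first with a decreasing count of projection symbols. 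Your decomposition makes explicit precisely the step the paper leaves implicit --- how a bounded projection of a recursive solution unwinds to a finite term --- and it reuses Theorem \ref{ETPRecursion} as a black box rather than redoing its induction; the price is the extra bookkeeping of the outer innermost-elimination loop, which the paper's monolithic induction avoids. Two small points to tighten: state the key lemma for all recursion variables $X_i$ of $E$ simultaneously (the induction on $n$ needs the hypothesis for every $\langle X_i|E\rangle$ appearing in the unfolding, not just $X_1$), and note that the projection-free term your lemma produces is a finite basic term, which is a legitimate term of $APPTC$ with linear recursion, so the final appeal to Theorem \ref{ETPRecursion} is indeed available.
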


\begin{proof}
By applying structural induction with respect to term size, each process term $t_1$ in $APPTC$ with linear recursion and projection operator $\Pi_n$ generates a process can be
expressed in the form of equations

$t_i=((a_{1i11}\leftmerge\cdots\leftmerge a_{1i1i_1})t_{i1}+\cdots+(a_{1ik_i1}\leftmerge\cdots\leftmerge a_{1ik_ii_k})t_{ik_i}+(b_{1i11}\leftmerge\cdots\leftmerge b_{1i1i_1})+\cdots+
(b_{1il_i1}\leftmerge\cdots\leftmerge b_{1il_ii_l}))\boxplus_{\pi_1}\cdots\boxplus_{\pi_{m-1}}((a_{mi11}\leftmerge\cdots\leftmerge a_{mi1i_1})t_{i1}+\cdots+(a_{mik_i1}\leftmerge\cdots
\leftmerge a_{mik_ii_k})t_{ik_i}+(b_{mi11}\leftmerge\cdots\leftmerge b_{mi1i_1})+\cdots+(b_{mil_i1}\leftmerge\cdots\leftmerge b_{mil_ii_l}))$

for $i\in\{1,\cdots,n\}$. Let the linear recursive specification $E$ consist of the recursive equations

$X_i=((a_{1i11}\leftmerge\cdots\leftmerge a_{1i1i_1})X_{i1}+\cdots+(a_{1ik_i1}\leftmerge\cdots\leftmerge a_{1ik_ii_k})X_{ik_i}+(b_{1i11}\leftmerge\cdots\leftmerge b_{1i1i_1})+\cdots+
(b_{1il_i1}\leftmerge\cdots\leftmerge b_{1il_ii_l}))\boxplus_{\pi_1}\cdots\boxplus_{\pi_{m-1}}((a_{mi11}\leftmerge\cdots\leftmerge a_{mi1i_1})X_{i1}+\cdots+(a_{mik_i1}\leftmerge\cdots
\leftmerge a_{mik_ii_k})X_{ik_i}+(b_{mi11}\leftmerge\cdots\leftmerge b_{mi1i_1})+\cdots+(b_{mil_i1}\leftmerge\cdots\leftmerge b_{mil_ii_l}))$

for $i\in\{1,\cdots,n\}$. Replacing $X_i$ by $t_i$ for $i\in\{1,\cdots,n\}$ is a solution for $E$, $RSP$ yields $t_1=\langle X_1|E\rangle$.

That is, in $E$, there is not the occurrence of projection operator $\Pi_n$.
\end{proof}

\begin{theorem}[Soundness of $APPTC$ with projection operator and guarded recursion]\label{SAPPTCR}
Let $x$ and $y$ be $APPTC$ with projection operator and guarded recursion terms. If $APPTC$ with projection operator and guarded recursion $\vdash x=y$, then
\begin{enumerate}
  \item $x\sim_{pp} y$;
  \item $x\sim_{ps} y$;
  \item $x\sim_{php} y$;
  \item $x\sim_{phhp} y$.
\end{enumerate}
\end{theorem}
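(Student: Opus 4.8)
The plan is to follow the same two-move strategy used for every soundness result in this chapter. First I would invoke the fact that $\sim_{pp}$, $\sim_{ps}$, $\sim_{php}$ and $\sim_{phhp}$ are each both an equivalence relation and a congruence: congruence with respect to the projection operator $\Pi_n$ is exactly the content of the immediately preceding congruence theorem, and congruence with respect to the remaining operators ($\cdot$, $+$, $\boxplus_{\pi}$, $\between$, $\parallel$, $\leftmerge$, $\mid$, $\Theta$, $\triangleleft$, $\partial_H$) together with the guarded-recursion constants has already been established. Consequently soundness reduces to checking, for each of the four equivalences, that every individual axiom in Table \ref{AxiomsForPProjection} (that is $PR1$--$PR6$ and $PPR1$) and the principles $RDP$ and $RSP$ in Table \ref{PRDPRSP} relate two $\sim$-equivalent terms.

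For the projection axioms I would, in each case, exhibit the natural candidate relation $R$ consisting of the identity on configurations together with the pairs forced by the two sides of the axiom, and verify the transfer clauses of Definition \ref{PPSB} and its hp-/hhp- refinements in Definition \ref{PHHPB}. Concretely, using the transition rules for $\Pi_n$ in Table \ref{TRForPProjection} one checks that both sides of $PR1$, $PPR1$ and $PR2$ perform exactly the same probabilistic transitions $\rightsquigarrow$ and the same depth-bounded action steps, that $PR3$ and $PR4$ correctly truncate a leading step $\{e_1,\cdots,e_k\}$ at level $n+1$ and continue as $\Pi_n$ of the residual, and that $PR5$, $PR6$ collapse to $\delta$ and hence exhibit no action transitions. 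In every case one must also confirm the two probabilistic side conditions that distinguish these equivalences from their non-probabilistic analogues: the measure-matching requirement $\mu(C_1,C)=\mu(C_2,C)$ for each class $C$, which follows from the PDF clause $\mu(\Pi_n(x),\Pi_n(x'))=\mu(x,x')$ of Table \ref{PDFAIP}, and the termination constraint $[\surd]_R=\{\surd\}$.

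For $RDP$ I would read off the equality directly from the two guarded-recursion rules in Table \ref{TRForPGR}, which give $\langle X_i|E\rangle$ and $t_i(\langle X_1|E\rangle,\cdots,\langle X_n|E\rangle)$ identical initial probabilistic and action transitions; soundness of $RSP$ rests on the fact, noted after Table \ref{PRDPRSP}, that a guarded recursive specification has a unique solution up to the relevant probabilistic truly concurrent bisimilarity.

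I expect the main obstacle to be the lift from the step/pomset level to the hp- and hhp- levels. There the relation $R$ must be upgraded to a posetal relation carrying an order-isomorphism $f$ between the configurations on the two sides, and for the hhp- case $R$ must in addition be shown downward closed; since $\Pi_n$ merely truncates runs without reordering or identifying events, the isomorphism induced on the truncated configurations is the evident restriction of the isomorphism on the untruncated ones, and downward closure is inherited, but making this precise for $PR4$, where a leading parallel block $e_1\leftmerge\cdots\leftmerge e_k$ is preserved while the tail is projected, is the step that will require the most care.
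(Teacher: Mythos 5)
Your proposal is correct and follows exactly the paper's strategy: the paper likewise reduces soundness to checking each projection axiom modulo the four equivalences via the fact that they are equivalences and congruences, and then declares the axiom-by-axiom verification "quite trivial" and omits it. Your elaboration of the individual checks (including the measure-matching condition via the PDF clause and the hp-/hhp- lift) simply fills in detail the paper leaves out, so there is no divergence in approach.
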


\begin{proof}
Since $\sim_{pp}$, $\sim_{ps}$, $\sim_{php}$ and $\sim_{phhp}$ are all both an equivalent and a congruent relation with respect to $APPTC$ with guarded recursion, we only need to
check if each axiom in Table \ref{TRForPProjection} is sound modulo $\sim_{pp}$, $\sim_{ps}$, $\sim_{php}$ and $\sim_{phhp}$. The proof is quite trivial and we omit it.
\end{proof}

Then $AIP$ is given in Table \ref{PAIP}.

\begin{center}
    \begin{table}
        \begin{tabular}{@{}ll@{}}
            \hline No. &Axiom\\
            $AIP$ & if $\Pi_n(x)=\Pi_n(y)$ for $n\in\mathbb{N}$, then $x=y$\\
        \end{tabular}
        \caption{$AIP$}
        \label{PAIP}
    \end{table}
\end{center}

\begin{theorem}[Soundness of $AIP$]\label{SPAIP}
Let $x$ and $y$ be $APPTC$ with projection operator and guarded recursion terms.

\begin{enumerate}
  \item If $\Pi_n(x)\sim_{pp}\Pi_n(y)$ for $n\in\mathbb{N}$, then $x\sim_{pp} y$;
  \item If $\Pi_n(x)\sim_{ps}\Pi_n(y)$ for $n\in\mathbb{N}$, then $x\sim_{ps} y$;
  \item If $\Pi_n(x)\sim_{php}\Pi_n(y)$ for $n\in\mathbb{N}$, then $x\sim_{php} y$;
  \item If $\Pi_n(x)\sim_{phhp}\Pi_n(y)$ for $n\in\mathbb{N}$, then $x\sim_{phhp} y$.
\end{enumerate}
\end{theorem}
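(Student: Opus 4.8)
The plan is to fix one bisimulation type (I will write the argument for $\sim_{pp}$; the other three follow the same template) and to exhibit the relation $R=\{(p,q):\Pi_n(p)\sim_{pp}\Pi_n(q)\text{ for all }n\in\mathbb{N}\}$ as a probabilistic pomset bisimulation containing $(x,y)$. The hypothesis puts $(x,y)\in R$ at once, so it suffices to prove that $R$ is a bisimulation. Before checking the clauses I would record two auxiliary facts. First, the projection identity $\Pi_n(\Pi_m(t))=\Pi_n(t)$ for $n\le m$, provable by structural induction from the projection axioms $PR1$--$PR6$ and $PPR1$ together with their soundness. Second, the monotonicity consequence that $\Pi_m(p')\sim_{pp}\Pi_m(q')$ implies $\Pi_n(p')\sim_{pp}\Pi_n(q')$ for every $n\le m$: apply $\Pi_n$ to both sides (using the congruence theorem for $\Pi_n$) and rewrite with the identity. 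Throughout I would use image-finiteness of all PESs, assumed in Chapter~\ref{bg}, which bounds the branching of both action and probabilistic transitions.

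To verify the action-transfer clause, suppose $(p,q)\in R$ and $p\xrightarrow{X}p'$. By the transition rules for $\Pi_n$ we have $\Pi_{n+1}(p)\xrightarrow{X}\Pi_n(p')$ for every $n$, and since $\Pi_{n+1}(p)\sim_{pp}\Pi_{n+1}(q)$ there is a match $\Pi_{n+1}(q)\xrightarrow{X}r_n$ with $\Pi_n(p')\sim_{pp}r_n$; the transition rules force $r_n\equiv\Pi_n(q'_n)$ for some $q'_n$ with $q\xrightarrow{X}q'_n$. Thus for each $n$ there is a successor $q'_n$ of $q$ with $\Pi_n(p')\sim_{pp}\Pi_n(q'_n)$. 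Image-finiteness makes $\{q':q\xrightarrow{X}q'\}$ finite, so by the pigeonhole principle one successor $q'$ occurs as $q'_n$ for infinitely many $n$; for that $q'$ we have $\Pi_n(p')\sim_{pp}\Pi_n(q')$ for infinitely many $n$, and by the monotonicity fact this upgrades to all $n$, giving $(p',q')\in R$ and the required match. The symmetric clause is identical.

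The probabilistic clauses are handled the same way. The rule $\frac{x\rightsquigarrow x'}{\Pi_n(x)\rightsquigarrow\Pi_n(x')}$ shows that projection commutes with probabilistic transitions, so a move $p\rightsquigarrow p^{\pi}$ lifts to $\Pi_n(p)\rightsquigarrow\Pi_n(p^{\pi})$, and the finiteness of the set of probabilistic transitions (also part of image-finiteness) lets me run the identical pigeonhole-plus-monotonicity extraction to obtain a single $q^{\pi}$ with $(p^{\pi},q^{\pi})\in R$. For the quantitative condition $\mu(p,C)=\mu(q,C)$ on $R$-classes and for $[\surd]_R=\{\surd\}$, I would note that these already hold at every finite level, since $\Pi_n(p)\sim_{pp}\Pi_n(q)$ enforces them on the projected processes and the PDF definitions of Table~\ref{PDFAIP} make $\mu$ commute with $\Pi_n$; passing to the limit yields the conditions for $p$ and $q$ themselves. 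The step case $\sim_{ps}$ is obtained by replacing pomset transitions with steps throughout, while the hp- and hhp-cases additionally carry the posetal isomorphism $f$ (and, for hhp, downward closure), the extraction argument being unchanged since the isomorphism is fixed by the matched events.

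The main obstacle is the compactness step: passing from a family of per-level matches $q'_n$ to a single successor $q'$ valid uniformly in $n$. This is precisely where image-finiteness is indispensable, for without a finite branching bound the pigeonhole extraction fails, and where the projection identity $\Pi_n(\Pi_m(t))=\Pi_n(t)$ does the real work, since it converts ``matches for infinitely many $n$'' into ``matches for all $n$.'' A secondary subtlety, specific to $\sim_{phhp}$, is checking that the witnessing relation stays downward closed under the extraction; I expect this to follow because $R$ is assembled uniformly from the downward-closed relations $\sim_{phhp}$ at each level, but it is the point most deserving explicit verification.
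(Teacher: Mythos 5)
Your proposal is correct and follows essentially the same route as the paper: the paper also defines $R=\{(p,q):\Pi_n(p)\sim_{pp}\Pi_n(q)\text{ for all }n\}$ and extracts a uniform matching successor $q'$ from the decreasing chain of non-empty finite sets $S_n=\{q'\mid q\rightsquigarrow\xrightarrow{\{e_1,\cdots,e_k\}}q'\text{ and }\Pi_n(p')\sim_{pp}\Pi_n(q')\}$, which is exactly your pigeonhole-plus-monotonicity argument in a different packaging (image-finiteness gives finiteness of each $S_n$, and your projection identity gives $S_n\supseteq S_{n+1}$). Your treatment of the probabilistic and quantitative clauses is somewhat more explicit than the paper's, but the underlying compactness step is identical.
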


\begin{proof}
(1) If $\Pi_n(x)\sim_{pp}\Pi_n(y)$ for $n\in\mathbb{N}$, then $x\sim_{pp} y$.

Since $\sim_{pp}$ is both an equivalent and a congruent relation with respect to $APPTC$ with guarded recursion and projection operator, we only need to check if $AIP$ in Table
\ref{PAIP} is sound modulo $\sim_{pp}$.

Let $p,p_0$ and $q,q_0$ be closed $APPTC$ with projection operator and guarded recursion terms such that $\Pi_n(p_0)\sim_{pp}\Pi_n(q_0)$ for $n\in\mathbb{N}$. We define a relation
$R$ such that $p R q$ iff $\Pi_n(p)\sim_{pp} \Pi_n(q)$. Obviously, $p_0 R q_0$, next, we prove that $R\in\sim_{pp}$.

Let $p R q$ and $p\rightsquigarrow\xrightarrow{\{e_1,\cdots,e_k\}}\surd$, then $\Pi_1(p)\rightsquigarrow\xrightarrow{\{e_1,\cdots,e_k\}}\surd$, $\Pi_1(p)\sim_{pp}\Pi_1(q)$ yields
$\Pi_1(q)\rightsquigarrow\xrightarrow{\{e_1,\cdots,e_k\}}\surd$. Similarly, $q\rightsquigarrow\xrightarrow{\{e_1,\cdots,e_k\}}\surd$ implies
$p\rightsquigarrow\xrightarrow{\{e_1,\cdots,e_k\}}\surd$.

Let $p R q$ and $p\rightsquigarrow\xrightarrow{\{e_1,\cdots,e_k\}}p'$. We define the set of process terms

$$S_n\triangleq\{q'|q\rightsquigarrow\xrightarrow{\{e_1,\cdots,e_k\}}q'\textrm{ and }\Pi_n(p')\sim_{pp}\Pi_n(q')\}$$

\begin{enumerate}
  \item Since $\Pi_{n+1}(p)\sim_{pp}\Pi_{n+1}(q)$ and $\Pi_{n+1}(p)\rightsquigarrow\xrightarrow{\{e_1,\cdots,e_k\}}\Pi_n(p')$, there exist $q'$ such that
  $\Pi_{n+1}(q)\rightsquigarrow\xrightarrow{\{e_1,\cdots,e_k\}}\Pi_n(q')$ and $\Pi_{n}(p')\sim_{pp}\Pi_{n}(q')$. So, $S_n$ is not empty.
  \item There are only finitely many $q'$ such that $q\rightsquigarrow\xrightarrow{\{e_1,\cdots,e_k\}}q'$, so, $S_n$ is finite.
  \item $\Pi_{n+1}(p)\sim_{pp}\Pi_{n+1}(q)$ implies $\Pi_{n}(p')\sim_{pp}\Pi_{n}(q')$, so $S_n\supseteq S_{n+1}$.
\end{enumerate}

So, $S_n$ has a non-empty intersection, and let $q'$ be in this intersection, then $q\rightsquigarrow\xrightarrow{\{e_1,\cdots,e_k\}}q'$ and $\Pi_n(p')\sim_{pp}\Pi_n(q')$,
so $p' R q'$. Similarly, let $p\mathcal{q}q$, we can obtain $q\rightsquigarrow\xrightarrow{\{e_1,\cdots,e_k\}}q'$ implies $p\rightsquigarrow\xrightarrow{\{e_1,\cdots,e_k\}}p'$
such that $p' R q'$.

Finally, $R\in\sim_{pp}$ and $p_0\sim_{pp} q_0$, as desired.

(2) If $\Pi_n(x)\sim_{ps}\Pi_n(y)$ for $n\in\mathbb{N}$, then $x\sim_{ps} y$.

It can be proven similarly to (1).

(3) If $\Pi_n(x)\sim_{php}\Pi_n(y)$ for $n\in\mathbb{N}$, then $x\sim_{php} y$.

It can be proven similarly to (1).

(4) If $\Pi_n(x)\sim_{phhp}\Pi_n(y)$ for $n\in\mathbb{N}$, then $x\sim_{phhp} y$.

It can be proven similarly to (1).
\end{proof}

\begin{theorem}[Completeness of $AIP$]\label{CPAIP}
Let $p$ and $q$ be closed $APPTC$ with linear recursion and projection operator terms, then,
\begin{enumerate}
  \item if $p\sim_{pp} q$ then $\Pi_n(p)=\Pi_n(q)$;
  \item if $p\sim_{ps} q$ then $\Pi_n(p)=\Pi_n(q)$;
  \item if $p\sim_{php} q$ then $\Pi_n(p)=\Pi_n(q)$;
  \item if $p\sim_{phhp} q$ then $\Pi_n(p)=\Pi_n(q)$.
\end{enumerate}
\end{theorem}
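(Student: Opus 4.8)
The plan is to reduce each of the four statements to the completeness of $APPTC$ for closed basic terms, by first observing that every finite projection of a linear recursive process is provably equal to a recursion-free basic $APPTC$ term. I will carry out the argument in detail for case (1), the case of $\sim_{pp}$; cases (2)--(4) are verbatim the same once $\sim_{pp}$ is replaced by $\sim_{ps}$, $\sim_{php}$, $\sim_{phhp}$ and the cited completeness theorem is changed accordingly (to Theorems \ref{CAPPTCPSBE}, \ref{CAPPTCPHPBE}, \ref{CAPPTCPHHPBE}).

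First I would fix $n\in\mathbb{N}$ and use the congruence of the projection operator $\Pi_n$ with respect to $\sim_{pp}$ (the Congruence theorem of projection operator $\Pi_n$) to pass from the hypothesis $p\sim_{pp} q$ to $\Pi_n(p)\sim_{pp}\Pi_n(q)$. The substance of the proof is then the following finiteness lemma, which I would establish by induction on $n$: for every closed $APPTC$-with-linear-recursion term $r$ there is a closed basic $APPTC$ term $r_n$ containing no occurrence of $\Pi_m$ and no recursion constant, with $\vdash\Pi_n(r)=r_n$. By the elimination theorem (Theorem \ref{ETPRecursion}) I may assume $r\equiv\langle X_1|E\rangle$ with $E$ linear, and unfold once by $RDP$ so that $r$ equals a $\boxplus_\pi$-combination of $+$-sums whose summands are either of the form $(a_1\leftmerge\cdots\leftmerge a_k)\langle X_j|E\rangle$ or of the form $b_1\leftmerge\cdots\leftmerge b_l$. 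The base case $n=0$ is immediate from $PR5$, giving $\Pi_0(r)=\delta$. For the inductive step I would push $\Pi_n$ inward through this normal form using $PPR1$ and $PR1$ (distribution over $\boxplus_\pi$ and $+$), $PR3$ on the pure-merge summands, and $PR4$ on the prefixed summands; the latter converts $\Pi_n$ into $\Pi_{n-1}$ acting on each tail $\langle X_j|E\rangle$, to which the induction hypothesis applies, delivering a basic term.

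With the lemma in hand I would write $\vdash\Pi_n(p)=p_n$ and $\vdash\Pi_n(q)=q_n$ for closed basic $APPTC$ terms $p_n,q_n$. Soundness of the projection axioms then gives $\Pi_n(p)\sim_{pp}p_n$ and $\Pi_n(q)\sim_{pp}q_n$, so together with $\Pi_n(p)\sim_{pp}\Pi_n(q)$ and transitivity I obtain $p_n\sim_{pp}q_n$. Since $p_n$ and $q_n$ are closed basic $APPTC$ terms, completeness of $APPTC$ modulo $\sim_{pp}$ (Theorem \ref{CAPPTCPPBE}) yields $p_n=q_n$, whence $\Pi_n(p)=p_n=q_n=\Pi_n(q)$, as desired.

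The hard part will be the finiteness lemma, and specifically verifying that the unfolding-and-projection procedure terminates after exactly $n$ levels. The point that makes it go through is the guardedness built into the linear form: every recursion variable in the unfolded body occurs underneath a prefix $(a_1\leftmerge\cdots\leftmerge a_k)$, so each $RDP$ unfolding consumes one unit of depth and each $PR4$ step strictly decreases the projection index. The probabilistic operator $\boxplus_\pi$ adds only bookkeeping, handled uniformly by $PPR1$, and does not affect termination; the genuine care is in organizing the double recursion, over $n$ and over the syntactic shape of the unfolded body, so that the induction hypothesis is applied only to proper subterms sitting under a strictly smaller projection.
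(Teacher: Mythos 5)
Your proposal is correct and follows essentially the same route as the paper's proof: reduce to linear recursive form via the elimination theorem, obtain closed basic $APPTC$ terms provably equal to $\Pi_n(p)$ and $\Pi_n(q)$, then chain congruence of $\Pi_n$, soundness, and completeness of $APPTC$ modulo the relevant bisimulation. The only difference is that you spell out the finiteness lemma (induction on $n$ with $RDP$-unfolding and the projection axioms $PR1$--$PR5$, $PPR1$) that the paper merely asserts when it claims the existence of basic terms $p'=\Pi_n(p)$ and $q'=\Pi_n(q)$; this is a welcome elaboration, not a departure.
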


\begin{proof}
Firstly, by the elimination theorem of $APPTC$ with guarded recursion and projection operator (see Theorem \ref{ETPProjection}), we know that each process term in $APPTC$ with linear
recursion and projection operator is equal to a process term $\langle X_1|E\rangle$ with $E$ a linear recursive specification:

$X_i=((a_{1i11}\leftmerge\cdots\leftmerge a_{1i1i_1})X_{i1}+\cdots+(a_{1ik_i1}\leftmerge\cdots\leftmerge a_{1ik_ii_k})X_{ik_i}+(b_{1i11}\leftmerge\cdots\leftmerge b_{1i1i_1})+\cdots+
(b_{1il_i1}\leftmerge\cdots\leftmerge b_{1il_ii_l}))\boxplus_{\pi_1}\cdots\boxplus_{\pi_{m-1}}((a_{mi11}\leftmerge\cdots\leftmerge a_{mi1i_1})X_{i1}+\cdots+(a_{mik_i1}\leftmerge\cdots
\leftmerge a_{mik_ii_k})X_{ik_i}+(b_{mi11}\leftmerge\cdots\leftmerge b_{mi1i_1})+\cdots+(b_{mil_i1}\leftmerge\cdots\leftmerge b_{mil_ii_l}))$

for $i\in\{1,\cdots,n\}$.

It remains to prove the following cases.

(1) if $p\sim_{pp} q$ then $\Pi_n(p)=\Pi_n(q)$.

Let $p\sim_{pp} q$, and fix an $n\in\mathbb{N}$, there are $p',q'$ in basic $APPTC$ terms such that $p'=\Pi_n(p)$ and $q'=\Pi_n(q)$. Since $\sim_{pp}$ is a congruence with respect to
$APPTC$, if $p\sim_{pp} q$ then $\Pi_n(p)\sim_{pp}\Pi_n(q)$. The soundness theorem yields $p'\sim_{pp}\Pi_n(p)\sim_{pp}\Pi_n(q)\sim_{pp} q'$. Finally, the completeness of $APPTC$
modulo $\sim_{pp}$ (see Theorem \ref{SAPPTCPPBE}) ensures $p'=q'$, and $\Pi_n(p)=p'=q'=\Pi_n(q)$, as desired.

(2) if $p\sim_{ps} q$ then $\Pi_n(p)=\Pi_n(q)$.

Let $p\sim_{ps} q$, and fix an $n\in\mathbb{N}$, there are $p',q'$ in basic $APPTC$ terms such that $p'=\Pi_n(p)$ and $q'=\Pi_n(q)$. Since $\sim_{ps}$ is a congruence with respect to
$APPTC$, if $p\sim_{ps} q$ then $\Pi_n(p)\sim_{ps}\Pi_n(q)$. The soundness theorem yields $p'\sim_{ps}\Pi_n(p)\sim_{ps}\Pi_n(q)\sim_{ps} q'$. Finally, the completeness of $APPTC$
modulo $\sim_{ps}$ (see Theorem \ref{SAPPTCPSBE}) ensures $p'=q'$, and $\Pi_n(p)=p'=q'=\Pi_n(q)$, as desired.

(3) if $p\sim_{php} q$ then $\Pi_n(p)=\Pi_n(q)$.

Let $p\sim_{php} q$, and fix an $n\in\mathbb{N}$, there are $p',q'$ in basic $APPTC$ terms such that $p'=\Pi_n(p)$ and $q'=\Pi_n(q)$. Since $\sim_{php}$ is a congruence with respect to
$APPTC$, if $p\sim_{php} q$ then $\Pi_n(p)\sim_{php}\Pi_n(q)$. The soundness theorem yields $p'\sim_{php}\Pi_n(p)\sim_{php}\Pi_n(q)\sim_{php} q'$. Finally, the completeness of $APPTC$
modulo $\sim_{php}$ (see Theorem \ref{SAPPTCPHPBE}) ensures $p'=q'$, and $\Pi_n(p)=p'=q'=\Pi_n(q)$, as desired.

(4) if $p\sim_{phhp} q$ then $\Pi_n(p)=\Pi_n(q)$.

Let $p\sim_{phhp} q$, and fix an $n\in\mathbb{N}$, there are $p',q'$ in basic $APPTC$ terms such that $p'=\Pi_n(p)$ and $q'=\Pi_n(q)$. Since $\sim_{phhp}$ is a congruence with respect to
$APPTC$, if $p\sim_{phhp} q$ then $\Pi_n(p)\sim_{phhp}\Pi_n(q)$. The soundness theorem yields $p'\sim_{phhp}\Pi_n(p)\sim_{phhp}\Pi_n(q)\sim_{phhp} q'$. Finally, the completeness of $APPTC$
modulo $\sim_{phhp}$ (see Theorem \ref{SAPPTCPHHPBE}) ensures $p'=q'$, and $\Pi_n(p)=p'=q'=\Pi_n(q)$, as desired.
\end{proof}

\subsection{Abstraction}\label{abs}

To abstract away from the internal implementations of a program, and verify that the program exhibits the desired external behaviors, the silent step $\tau$ and abstraction operator
$\tau_I$ are introduced, where $I\subseteq \mathbb{E}$ denotes the internal events. The silent step $\tau$ represents the internal events, when we consider the external behaviors of a
process, $\tau$ events can be removed, that is, $\tau$ events must keep silent. The transition rule of $\tau$ is shown in Table \ref{TRForPTau}. In the following, let the atomic event
$e$ range over $\mathbb{E}\cup\{\delta\}\cup\{\tau\}$, and let the communication function $\gamma:\mathbb{E}\cup\{\tau\}\times \mathbb{E}\cup\{\tau\}\rightarrow \mathbb{E}\cup\{\delta\}$,
with each communication involved $\tau$ resulting into $\delta$.

\begin{center}
    \begin{table}
        $$\frac{}{\tau\rightsquigarrow\breve{\tau}}$$
        $$\frac{}{\tau\xrightarrow{\tau}\surd}$$
        \caption{Transition rules of the silent step}
        \label{TRForPTau}
    \end{table}
\end{center}

In this section, we try to find the algebraic laws of $\tau$ and $\tau_I$ in probabilistic true concurrency.

\subsubsection{Guarded Linear Recursion}

The silent step $\tau$ as an atomic event, is introduced into $E$. Considering the recursive specification $X=\tau X$, $\tau s$, $\tau\tau s$, and $\tau\cdots s$ are all its solutions,
that is, the solutions make the existence of $\tau$-loops which cause unfairness. To prevent $\tau$-loops, we extend the definition of linear recursive specification
to the guarded one.

\begin{definition}[Guarded linear recursive specification]\label{GLRS}
A recursive specification is linear if its recursive equations are of the form

$((a_{111}\parallel\cdots\parallel a_{11i_1})X_1+\cdots+(a_{1k1}\parallel\cdots\parallel a_{1ki_k})X_k+(b_{111}\parallel\cdots\parallel b_{11j_1})+\cdots+
(b_{11j_1}\parallel\cdots\parallel b_{1lj_l}))\boxplus_{\pi_1}\cdots\boxplus_{\pi_{m-1}}((a_{m11}\parallel\cdots\parallel a_{m1i_1})X_1+\cdots+(a_{mk1}\parallel\cdots\parallel a_{mki_k})X_k+
(b_{m11}\parallel\cdots\parallel b_{m1j_1})+\cdots+(b_{m1j_1}\parallel\cdots\parallel b_{mlj_l}))$

where $a_{111},\cdots,a_{11i_1},a_{1k1},\cdots,a_{1ki_k},b_{111},\cdots,b_{11j_1},b_{11j_1},\cdots,b_{1lj_l}\cdots\\
a_{m11},\cdots,a_{m1i_1},a_{mk1},\cdots,a_{mki_k},b_{m11},\cdots,b_{m1j_1},b_{m1j_1},\cdots,b_{mlj_l}\in \mathbb{E}\cup\{\tau\}$, and the sum above is allowed to be empty, in which
case it represents the deadlock $\delta$.

A linear recursive specification $E$ is guarded if there does not exist an infinite sequence of $\tau$-transitions $\langle X|E\rangle\rightsquigarrow\xrightarrow{\tau}
\langle X'|E\rangle\rightsquigarrow\xrightarrow{\tau}\langle X''|E\rangle\rightsquigarrow\xrightarrow{\tau}\cdots$.
\end{definition}

\begin{theorem}[Conservitivity of $APPTC$ with silent step and guarded linear recursion]
$APPTC$ with silent step and guarded linear recursion is a conservative extension of $APPTC$ with linear recursion.
\end{theorem}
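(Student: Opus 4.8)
The plan is to invoke the general conservative-extension criterion of Theorem \ref{TCE}, exactly as in the earlier conservativity results for $APPTC$ with guarded recursion and for $APPTC$ with the projection operator. I would set $T_0$ to be the TSS of $APPTC$ with linear recursion and $T_1$ to be the transition rules that introduce the silent step, namely the two rules $\tau\rightsquigarrow\breve{\tau}$ and $\tau\xrightarrow{\tau}\surd$ of Table \ref{TRForPTau}. The signature $\Sigma_1\setminus\Sigma_0$ then consists of the fresh constant $\tau$ (together with its counterpart $\breve{\tau}$) and, as a label, the fresh action $\tau$; everything else is inherited from $\Sigma_0$. The restriction to \emph{guarded} linear recursion (Definition \ref{GLRS}) is a purely syntactic constraint on the admissible recursive specifications, not a change to the deduction rules, so it contributes no new transition rules and can be set aside for the purpose of checking conservativity.

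First I would verify the two hypotheses of Theorem \ref{TCE}. For the source-dependency condition (1), I would note that $T_0$ is source-dependent: this was already used for $APPTC$ (its action and probabilistic transition rules in Tables \ref{TRForAPPTC1} and \ref{TRForAPPTC} are source-dependent, as exploited in the generalization and conservativity theorems above), and the rules for guarded (hence linear) recursion in Table \ref{TRForPGR} add only a fresh constant $\langle X_i|E\rangle$ in their source, so source-dependency is preserved. For condition (2), I would observe that each of the two rules in $T_1$ is an axiom whose source is the constant $\tau$, which does not occur in $\Sigma_0$; hence the source of every rule of $T_1$ is fresh, and the second disjunct of condition (2) (an old-term premise carrying a fresh label) need not even be invoked.

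Having checked both conditions, Theorem \ref{TCE} yields directly that $T_0\oplus T_1$ is a conservative extension of $T_0$, i.e. that $APPTC$ with silent step and guarded linear recursion generates exactly the same action transitions $t\xrightarrow{a}t'$ and probabilistic transitions $t\rightsquigarrow t'$ for every closed term $t$ over $\Sigma_0$ as $APPTC$ with linear recursion does. The one point requiring a little care, which I would address explicitly, is the standing hypothesis that both $T_0$ and $T_0\oplus T_1$ be positive after reduction: the new $\tau$-rules are plainly positive, but some $APPTC$ rules (the unless rules for $\triangleleft$) carry negative premises of the form $y\nrightarrow^{e}$, so I would remark that these are accommodated by the same reduction already implicit in the earlier $APPTC$ conservativity arguments, and that adjoining $\tau$ introduces no new negative premises. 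This positivity bookkeeping is the only genuine subtlety; the verification of the freshness and source-dependency conditions themselves is routine.
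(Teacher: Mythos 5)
Your proposal is correct and follows essentially the same route as the paper's proof, which likewise invokes the general conservative-extension criterion by noting that the transition rules of $APPTC$ with linear recursion are source-dependent and that the silent-step rules in Table \ref{TRForPTau} have only the fresh constant $\tau$ in their source. Your additional remarks on the positivity-after-reduction hypothesis and on guardedness being a purely syntactic restriction are sensible elaborations that the paper leaves implicit.
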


\begin{proof}
Since the transition rules of $APPTC$ with linear recursion are source-dependent, and the transition rules for silent step in Table \ref{TRForPTau} contain only a fresh constant $\tau$
in their source, so the transition rules of $APPTC$ with silent step and guarded linear recursion is a conservative extension of those of $APPTC$ with linear recursion.
\end{proof}

\begin{theorem}[Congruence theorem of $APPTC$ with silent step and guarded linear recursion]
Probabilistic rooted branching truly concurrent bisimulation equivalences $\approx_{prbp}$, $\approx_{prbs}$, $\approx_{prbhp}$ and $\approx_{prbhhp}$ are all congruences with respect
to $APPTC$ with silent step and guarded linear recursion.
\end{theorem}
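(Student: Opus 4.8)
The plan is to follow the same two-layer strategy used for the earlier congruence results in this chapter: first establish that each of $\approx_{prbp}$, $\approx_{prbs}$, $\approx_{prbhp}$ and $\approx_{prbhhp}$ is an equivalence relation on the terms of $APPTC$ with silent step and guarded linear recursion, and then reduce the congruence claim to showing that each of these relations is preserved by every operator of the signature together with the guarded linear recursion construct. The essential point is that, unlike the (non-rooted) probabilistic branching equivalences of Definition \ref{PBPSB}, the rooted variants of Definition \ref{PRBPSB} and Definition \ref{PRBHHPB} impose that the very first probabilistic-then-action step be matched by a concrete step (no leading $\tau^*$ absorption); this rootedness is exactly what is needed to make the equivalence survive the $+$ and $\boxplus_{\pi}$ operators, which is where plain branching bisimulation would fail.

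For the static operators I would argue operator by operator. Fix an operator $f$ and suppose its argument tuples are componentwise related; I would exhibit a candidate posetal (resp. plain) relation $R$ containing the pair $(f(\widetilde{P}),f(\widetilde{Q}))$ and verify the transfer clauses of Definition \ref{PRBPSB} and Definition \ref{PRBHHPB}. Because of rootedness, any initial move $f(\widetilde{P})\rightsquigarrow\xrightarrow{X}P'$ must be answered by a matching $f(\widetilde{Q})\rightsquigarrow\xrightarrow{X}Q'$ with $P'\approx_{pbp}Q'$, and the residual obligation is discharged by the corresponding non-rooted probabilistic branching argument. Beyond the action and probabilistic clauses, I must also check the distribution clause $\mu(C_1,C)=\mu(C_2,C)$ for every class $C$ and the termination clause $[\surd]_R=\{\surd\}$; these follow by tracking how the $PDF$ of Table \ref{PDFAPPTC} decomposes along $\boxplus_{\pi}$, $\parallel$, $\leftmerge$, $\mid$, $\Theta$ and $\triangleleft$, using that $\mu$ factors multiplicatively over the parallel and communication combinators. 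The cases of $\partial_H$ and $\tau_I$ are handled as in the encapsulation and abstraction treatments, noting that $\tau$ can neither be restricted nor relabeled.

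For the guarded linear recursion construct the argument rests on two facts, in the spirit of the congruence theorem for $APPTC$ with guarded recursion proved earlier: (i) in a guarded linear recursive specification the right-hand sides can be rewritten, by applications of the $APPTC$ axioms and by replacing recursion variables with the right-hand sides of their equations, into the head-normal shape of Definition \ref{GLRS}, so that each $\langle X_i|E\rangle$ has the same initial probabilistic and action transitions as its guarded body; and (ii) the four rooted branching equivalences are already congruences for the remaining operators by the previous paragraph. The guardedness hypothesis is used in an essential way here: it rules out infinite $\tau$-transition sequences $\langle X|E\rangle\rightsquigarrow\xrightarrow{\tau}\langle X'|E\rangle\rightsquigarrow\xrightarrow{\tau}\cdots$, without which the rooted-branching matching could diverge and the equivalences would not be preserved by solutions of the specification.

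The main obstacle I anticipate is not the bookkeeping of the action clauses but the interaction of the probabilistic transition relation $\rightsquigarrow$ with the branching structure and the distribution clause. Ensuring simultaneously that rootedness is respected at the first step, that the residuals remain in probabilistic branching bisimulation, and that the cumulative probabilities $\mu(\cdot,C)$ agree on every equivalence class is delicate precisely for $\boxplus_{\pi}$, where probabilities are weighted, and for the parallel and communication operators, where the $PDF$s of the components are multiplied, since the class partition induced by $R$ must be shown compatible with these arithmetic combinations. Verifying that the candidate relation is genuinely downward closed in the hhp-case, while keeping the probabilistic distribution condition intact under the posetal refinement, is the step I expect to require the most care.
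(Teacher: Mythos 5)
Your proposal is sound in outline but follows a genuinely different route from the paper. The paper's proof is a three-item appeal to (i) the fact that the right-hand sides of a guarded linear recursive specification can be adapted to head normal form by the axioms of $APPTC$ and substitution of recursion variables, (ii) the claim that since the strong equivalences $\sim_{pp}$, $\sim_{ps}$, $\sim_{php}$, $\sim_{phhp}$ are congruences for all operators of $APPTC$ and each strong equivalence implies the corresponding rooted branching one, the rooted branching equivalences are congruences as well, and (iii) an omitted remark that everything survives the extension of $\mathbb{E}$ to $\mathbb{E}\cup\{\tau\}$. You instead carry out the standard direct verification: exhibit a candidate relation for each operator, check the rooted transfer clauses together with the distribution clause $\mu(C_1,C)=\mu(C_2,C)$ and the termination clause, and treat recursion via head normal forms plus guardedness. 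Your route is more work, but it is the one that actually establishes the statement: the paper's step (ii) --- inferring that $\approx$ is a congruence from ``$\sim$ is a congruence and $\sim\;\subseteq\;\approx$'' --- is not a valid implication in general, since it only yields $f(\widetilde{P})\approx f(\widetilde{Q})$ when the arguments are already \emph{strongly} bisimilar rather than merely rooted-branching bisimilar; your observation that rootedness is precisely what rescues preservation under $+$ and $\boxplus_{\pi}$ is the substantive content that the paper's shortcut glosses over. One small scope remark: the abstraction operator $\tau_I$ is not part of $APPTC$ with silent step and guarded linear recursion (it is only introduced for $APPTC_{\tau}$ in the following subsection), so that case can be dropped here.
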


\begin{proof}
It follows the following three facts:
\begin{enumerate}
  \item in a guarded linear recursive specification, right-hand sides of its recursive equations can be adapted to the form by applications of the axioms in $APPTC$ and replacing
  recursion variables by the right-hand sides of their recursive equations;
  \item probabilistic truly concurrent bisimulation equivalences $\sim_{pp}$, $\sim_{ps}$, $\sim_{php}$ and $\sim_{phhp}$ are all congruences with respect to all operators of $APPTC$,
  while probabilistic truly concurrent bisimulation equivalences $\sim_{pp}$, $\sim_{ps}$, $\sim_{php}$ and $\sim_{phhp}$ imply the corresponding probabilistic rooted branching truly
  concurrent bisimulations $\approx_{prbp}$, $\approx_{prbs}$, $\approx_{prbhp}$ and $\approx_{prbhhp}$, so probabilistic rooted branching truly concurrent bisimulations $\approx_{prbp}$,
  $\approx_{prbs}$, $\approx_{prbhp}$ and $\approx_{prbhhp}$ are all congruences with respect to all operators of $APPTC$;
  \item While $\mathbb{E}$ is extended to $\mathbb{E}\cup\{\tau\}$, it can be proved that probabilistic rooted branching truly concurrent bisimulations $\approx_{prbp}$, $\approx_{prbs}$,
  $\approx_{prbhp}$ and $\approx_{prbhhp}$ are all congruences with respect to all operators of $APPTC$, we omit it.
\end{enumerate}
\end{proof}

\subsubsection{Algebraic Laws for the Silent Step}

We design the axioms for the silent step $\tau$ in Table \ref{AxiomsForPTau}.

\begin{center}
\begin{table}
  \begin{tabular}{@{}ll@{}}
\hline No. &Axiom\\
  $B1$ & $(y=y+y,z=z+z)\quad x\cdot((y+\tau\cdot(y+z))\boxplus_{\pi}w)=x\cdot((y+z)\boxplus_{\pi}w)$\\
  $B2$ & $(y=y+y,z=z+z)\quad x\leftmerge((y+\tau\leftmerge(y+z))\boxplus_{\pi}w)=x\leftmerge((y+z)\boxplus_{\pi}w)$\\
\end{tabular}
\caption{Axioms of silent step}
\label{AxiomsForPTau}
\end{table}
\end{center}

The axioms $B1$ and $B2$ are the conditions in which $\tau$ really keeps silent to act with the operators $\cdot$, $+$, $\boxplus_{\pi}$ and $\leftmerge$.

\begin{theorem}[Elimination theorem of $APPTC$ with silent step and guarded linear recursion]\label{ETTau}
Each process term in $APPTC$ with silent step and guarded linear recursion is equal to a process term $\langle X_1|E\rangle$ with $E$ a guarded linear recursive specification.
\end{theorem}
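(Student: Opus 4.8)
The plan is to mirror the proof of the elimination theorem of $APPTC$ with linear recursion (Theorem \ref{ETPRecursion}), adapting it to accommodate the silent step $\tau$ as an additional atomic event drawn from $\mathbb{E}\cup\{\tau\}$. First I would apply structural induction with respect to term size to each process term $t_1$ of $APPTC$ with silent step and guarded linear recursion. Using the axioms of parallelism (Table \ref{AxiomsForPParallelism}), the silent-step axioms $B1$ and $B2$, and the equations of the guarded linear recursive specifications occurring in $t_1$, each such term generates a process that can be reshaped into the guarded linear form of Definition \ref{GLRS}, yielding a finite system of equations $t_i=\cdots$ for $i\in\{1,\cdots,n\}$ in which each summand is either a (possibly parallel) guarded prefix followed by some $t_{ij}$ or a pure guarded prefix.

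Second, I would introduce the linear recursive specification $E$ whose recursion variables $X_i$ have right-hand sides obtained from these equations by replacing each occurrence of $t_{ij}$ with the corresponding fresh variable $X_{ij}$, exactly as in Theorem \ref{ETPRecursion}. Replacing $X_i$ by $t_i$ for $i\in\{1,\cdots,n\}$ is then a solution of $E$, so $RSP$ (Table \ref{PRDPRSP}) yields $t_1=\langle X_1|E\rangle$. The algebraic normalization in the first step is routine given the rewrite system already established for $APPTC$, so I would not grind through it in detail.

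The hard part will be verifying that the resulting specification $E$ is not merely linear but \emph{guarded} in the sense of Definition \ref{GLRS}, i.e. that no infinite sequence $\langle X|E\rangle\rightsquigarrow\xrightarrow{\tau}\langle X'|E\rangle\rightsquigarrow\xrightarrow{\tau}\cdots$ exists. I would establish this by tracking the provenance of every $\tau$-labelled summand appearing in the right-hand sides of $E$. Since the communication function sends every synchronization involving $\tau$ to $\delta$, no fresh $\tau$-steps are manufactured by the communication operator $\mid$ or by the parallel operators $\between$, $\parallel$, $\leftmerge$ during the normalization; hence every $\tau$-summand of $E$ descends from an explicit silent prefix already present in one of the guarded linear recursive specifications embedded in $t_1$. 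As guardedness of each of those specifications forbids infinite $\tau$-chains, and the finite algebraic composition encoded by $E$ only interleaves or sequences these guarded fragments without creating a cycle of variables linked purely by $\tau$-transitions, $E$ inherits guardedness, which legitimizes the appeal to $RSP$ and completes the argument.
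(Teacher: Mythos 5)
Your proposal takes essentially the same route as the paper's proof: structural induction on term size to bring the term into the guarded-linear summand form, introduction of the linear recursive specification $E$ by replacing each $t_{ij}$ with a fresh variable $X_{ij}$, and an appeal to $RSP$ to conclude $t_1=\langle X_1|E\rangle$. The one place where you go beyond the paper is your final paragraph: the paper's proof simply declares $E$ to be the desired specification and never checks that $E$ is \emph{guarded} in the sense of Definition \ref{GLRS} (no infinite sequence of $\tau$-transitions), even though guardedness is exactly what licenses the use of $RSP$ and is asserted in the theorem statement. Your provenance argument --- that every $\tau$-summand of $E$ descends from a silent prefix in one of the already-guarded specifications embedded in $t_1$, that communication involving $\tau$ yields $\delta$ and so manufactures no new $\tau$-steps, and that the finite composition cannot close a pure-$\tau$ cycle of variables --- fills a genuine gap in the paper's argument rather than merely duplicating it. The proposal is correct and, on this point, more complete than the paper's own proof.
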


\begin{proof}
By applying structural induction with respect to term size, each process term $t_1$ in $APPTC$ with silent step and guarded linear recursion generates a process can be expressed in the
form of equations

$t_i=((a_{1i11}\leftmerge\cdots\leftmerge a_{1i1i_1})t_{i1}+\cdots+(a_{1ik_i1}\leftmerge\cdots\leftmerge a_{1ik_ii_k})t_{ik_i}+(b_{1i11}\leftmerge\cdots\leftmerge b_{1i1i_1})+\cdots+
(b_{1il_i1}\leftmerge\cdots\leftmerge b_{1il_ii_l}))\boxplus_{\pi_1}\cdots\boxplus_{\pi_{m-1}}((a_{mi11}\leftmerge\cdots\leftmerge a_{mi1i_1})t_{i1}+\cdots+(a_{mik_i1}\leftmerge\cdots
\leftmerge a_{mik_ii_k})t_{ik_i}+(b_{mi11}\leftmerge\cdots\leftmerge b_{mi1i_1})+\cdots+(b_{mil_i1}\leftmerge\cdots\leftmerge b_{m1il_ii_l}))$

for $i\in\{1,\cdots,n\}$. Let the linear recursive specification $E$ consist of the recursive equations

$X_i=((a_{1i11}\leftmerge\cdots\leftmerge a_{1i1i_1})X_{i1}+\cdots+(a_{1ik_i1}\leftmerge\cdots\leftmerge a_{1ik_ii_k})X_{ik_i}+(b_{1i11}\leftmerge\cdots\leftmerge b_{1i1i_1})+\cdots+
(b_{1il_i1}\leftmerge\cdots\leftmerge b_{1il_ii_l}))\boxplus_{\pi_1}\cdots\boxplus_{\pi_{m-1}}((a_{mi11}\leftmerge\cdots\leftmerge a_{mi1i_1})X_{i1}+\cdots+(a_{mik_i1}\leftmerge\cdots
\leftmerge a_{mik_ii_k})X_{ik_i}+(b_{mi11}\leftmerge\cdots\leftmerge b_{mi1i_1})+\cdots+(b_{mil_i1}\leftmerge\cdots\leftmerge b_{mil_ii_l}))$

for $i\in\{1,\cdots,n\}$. Replacing $X_i$ by $t_i$ for $i\in\{1,\cdots,n\}$ is a solution for $E$, $RSP$ yields $t_1=\langle X_1|E\rangle$.
\end{proof}

\begin{theorem}[Soundness of $APPTC$ with silent step and guarded linear recursion]\label{SAPPTCTAU}
Let $x$ and $y$ be $APPTC$ with silent step and guarded linear recursion terms. If $APPTC$ with silent step and guarded linear recursion $\vdash x=y$, then
\begin{enumerate}
  \item $x\approx_{prbp} y$;
  \item $x\approx_{prbs} y$;
  \item $x\approx_{prbhp} y$;
  \item $x\approx_{prbhhp}y$.
\end{enumerate}
\end{theorem}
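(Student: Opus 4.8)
The plan is to reduce the soundness claim to a finite, axiom-by-axiom verification, exactly as in the earlier soundness theorems of this chapter. By the Congruence theorem of $APPTC$ with silent step and guarded linear recursion, each of $\approx_{prbp}$, $\approx_{prbs}$, $\approx_{prbhp}$ and $\approx_{prbhhp}$ is both an equivalence and a congruence with respect to every operator of the signature. Hence it suffices to show that each axiom used in a derivation $APPTC\textrm{ with silent step and guarded linear recursion}\vdash x=y$ is individually sound modulo each of the four probabilistic rooted branching equivalences; soundness of the whole derivation then follows by a routine induction on its length, using transitivity, symmetry and the substitutivity granted by congruence.

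First I would dispose of the axioms that do not mention $\tau$. The axioms of $BAPTC$ (Table \ref{AxiomsForBAPTC}), of parallelism (Table \ref{AxiomsForPParallelism}), of encapsulation (Table \ref{AxiomsForPEncapsulation}), together with $RDP$ and $RSP$ (Table \ref{PRDPRSP}), have already been established sound modulo the strong probabilistic equivalences $\sim_{pp}$, $\sim_{ps}$, $\sim_{php}$ and $\sim_{phhp}$ in the corresponding soundness theorems. Since every strong probabilistic truly concurrent bisimulation is, in particular, a probabilistic rooted branching one — a strong bisimulation never needs to interpose a $\rightsquigarrow^*\xrightarrow{\tau^*}$ prefix, so it vacuously meets the clauses of Definition \ref{PRBPSB} and Definition \ref{PRBHHPB} — any instance sound modulo $\sim$ is a fortiori sound modulo the corresponding $\approx_{prb}$. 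No further argument is required for these axioms.

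The genuine content is therefore the soundness of the two silent-step laws $B1$ and $B2$ of Table \ref{AxiomsForPTau}, which are the probabilistic true-concurrency analogues of the classical branching law. For each I would exhibit an explicit witnessing relation $R$ and check it against Definition \ref{PRBPSB} (and, for the hp/hhp readings, against Definition \ref{PRBHHPB} with its posetal and downward-closed refinements). The decisive structural feature is that both sides of $B1$ are guarded by the leading factor $x\cdot(-)$, and both sides of $B2$ by $x\leftmerge(-)$: the first probabilistic transition $\rightsquigarrow$ and the first action transition are produced by $x$ and coincide on the two sides, so the rootedness clauses — which demand a strict initial match with residuals related by the non-rooted $\approx_{pbp}$ — hold immediately. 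Only after $x$ is consumed do we reach the pair $(y+\tau\cdot(y+z))\boxplus_{\pi}w$ and $(y+z)\boxplus_{\pi}w$, where the absorbed $\tau$ now occupies a non-initial position and may legitimately be matched by the branching clause of $\approx_{pbp}$; here the side conditions $y=y+y$ and $z=z+z$ are precisely what force the $\tau$-branch to collapse onto the $(y+z)$-branch in the branching sense.

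The hard part will be the probabilistic bookkeeping rather than the action-transition matching, which is classical. Concretely, I expect the delicate step to be clause $5$ of Definitions \ref{PRBPSB} and \ref{PBPSB}, namely $\mu(C_1,C)=\mu(C_2,C)$ on every equivalence class, for the residual pair living under $\boxplus_{\pi}$: one must verify that the probability mass distributed through $\boxplus_{\pi}$ by the PDF of Table \ref{PDFBAPTC} is unchanged when the $\tau$-summand is absorbed, which is where $PA3$-style idempotency combines with $y=y+y$ and $z=z+z$. For $B2$ the same check must be carried through the left-merge clause of the PDF in Table \ref{PDFAPPTC}. For the hhp-reading I additionally need the witness to be downward closed; I would obtain this by closing $R$ under the pointwise order on the posetal product, observing that the leading $x$-prefix together with the idempotency conditions keeps the closure compatible with the isomorphism component $f$. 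Once clause $5$ and downward-closure are settled, the remaining clauses are mechanical and I would omit them, consistent with the level of detail in the surrounding proofs.
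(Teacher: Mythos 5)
Your proposal follows essentially the same route as the paper: by the congruence theorem the four equivalences $\approx_{prbp}$, $\approx_{prbs}$, $\approx_{prbhp}$, $\approx_{prbhhp}$ are congruences, so soundness reduces to checking each axiom of Table \ref{AxiomsForPTau} individually, which the paper then declares trivial and omits. Your additional remarks --- that the non-$\tau$ axioms are covered because the strong equivalences imply the rooted branching ones, and your sketch of the witnessing relations and probabilistic bookkeeping for $B1$ and $B2$ --- are consistent with, and more detailed than, what the paper records.
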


\begin{proof}
Since probabilistic truly concurrent rooted branching bisimulation $\approx_{prbp}$, $\approx_{prbs}$, $\approx_{prbhp}$ and $\approx_{prbhhp}$ are all both an equivalent and a
congruent relation with respect to $APPTC$ with silent step and guarded linear recursion, we only need to check if each axiom in Table \ref{AxiomsForPTau} is sound modulo
probabilistic truly concurrent rooted branching bisimulation $\approx_{prbp}$, $\approx_{prbs}$, $\approx_{prbhp}$ and $\approx_{prbhhp}$. The proof is quite trivial and we omit it.
\end{proof}

\begin{theorem}[Completeness of $APPTC$ with silent step and guarded linear recursion]\label{CAPPTCTAU}
Let $p$ and $q$ be closed $APPTC$ with silent step and guarded linear recursion terms, then,
\begin{enumerate}
  \item if $p\approx_{prbp} q$ then $p=q$;
  \item if $p\approx_{prbs} q$ then $p=q$;
  \item if $p\approx_{prbhp} q$ then $p=q$;
  \item if $p\approx_{prbhhp} q$ then $p=q$.
\end{enumerate}
\end{theorem}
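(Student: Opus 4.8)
The plan is to reuse the summand-matching construction from the completeness proof for $APPTC$ with linear recursion (Theorem \ref{CAPPTCR}), adapting it so that it respects the silent step $\tau$ and the rooted branching structure of $\approx_{prbp}$, $\approx_{prbs}$, $\approx_{prbhp}$ and $\approx_{prbhhp}$. First I would apply the elimination theorem for $APPTC$ with silent step and guarded linear recursion (Theorem \ref{ETTau}), which lets me replace every closed term by one of the form $\langle X_1|E\rangle$ with $E$ a guarded linear recursive specification. Hence it suffices to prove that, for guarded linear recursive specifications $E_1$ and $E_2$, if $\langle X_1|E_1\rangle \approx_{prbp} \langle Y_1|E_2\rangle$ then $\langle X_1|E_1\rangle = \langle Y_1|E_2\rangle$ is derivable, and likewise for the three remaining equivalences.

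For the core step I would introduce a combined guarded linear recursive specification $E$ on fresh variables $Z_{XY}$, one for each pair of recursion variables $X$ in $E_1$ and $Y$ in $E_2$ with $\langle X|E_1\rangle \approx_{pbp} \langle Y|E_2\rangle$, where $\approx_{pbp}$ is the non-rooted probabilistic branching bisimilarity of Definition \ref{PBPSB}. Following the template of Theorem \ref{CAPPTCR}, the equation $t_{XY}$ of $Z_{XY}$ collects a summand $(a_1 \leftmerge \cdots \leftmerge a_m) Z_{X'Y'}$ exactly when $t_X$ has a summand $(a_1 \leftmerge \cdots \leftmerge a_m) X'$, $t_Y$ has a matching summand $(a_1 \leftmerge \cdots \leftmerge a_m) Y'$, and $\langle X'|E_1\rangle \approx_{pbp} \langle Y'|E_2\rangle$, together with a terminating summand $b_1 \leftmerge \cdots \leftmerge b_n$ whenever both $t_X$ and $t_Y$ carry it. Letting $\sigma$ send $X$ to $\langle X|E_1\rangle$ and $\psi$ send $Z_{XY}$ to $\langle X|E_1\rangle$, both substitutions solve $E$ by $RDP$, so $RSP$ gives $\langle X_1|E_1\rangle = \langle Z_{X_1Y_1}|E\rangle$ and symmetrically $\langle Y_1|E_2\rangle = \langle Z_{X_1Y_1}|E\rangle$, whence equality. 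The outermost layer is governed by the rootedness clause of Definition \ref{PRBPSB}, which matches the first transition exactly as in the strong case, so the top equation uses exact matches and only the deeper equations rely on the weaker $\approx_{pbp}$.

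The main obstacle, and the essential departure from the strong case of Theorem \ref{CAPPTCR}, is the treatment of silent summands. Under branching bisimilarity a $\tau$-summand $\tau X'$ of $t_X$ with $\langle X'|E_1\rangle \approx_{pbp} \langle X|E_1\rangle$ may be answered by inaction on the $E_2$ side, so it has no directly matching summand in $t_Y$ and would break the naive construction. I expect to need an auxiliary lemma showing that such inert $\tau$-steps can be eliminated using the silent-step axioms $B1$ and $B2$ (Table \ref{AxiomsForPTau}); the proof of that lemma proceeds by induction on the length of $\tau$-chains, which is finite precisely because $E$ is guarded (Definition \ref{GLRS} forbids infinite $\tau$-sequences). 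The probabilistic clauses $\mu(C_1,C) = \mu(C_2,C)$ and $[\surd]_R = \{\surd\}$ of Definition \ref{PBPSB} must also be verified to ensure the matching preserves the probabilistic weights of the box-summands $\boxplus_{\pi}$, so that no probabilistic branch is dropped when summands are collapsed. Finally, the cases $\approx_{prbs}$, $\approx_{prbhp}$ and $\approx_{prbhhp}$ follow from the same combined specification, checking in addition the step condition, the configuration order-isomorphism, and downward closure respectively, exactly as the analogous cases were dispatched for $\sim_{ps}$, $\sim_{php}$ and $\sim_{phhp}$ in the earlier completeness theorems.
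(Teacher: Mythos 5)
Your proposal follows the same skeleton as the paper's proof: eliminate to $\langle X_1|E_1\rangle$ and $\langle Y_1|E_2\rangle$ with guarded linear specifications, build a combined guarded linear specification on fresh variables $Z_{XY}$ whose equations collect the matched action summands $(a_1\leftmerge\cdots\leftmerge a_m)Z_{X'Y'}$ and termination summands, and conclude with $RDP$ and $RSP$; you also correctly identify the one genuinely new obstacle, namely $\tau$-summands that the other specification answers by doing nothing. Where you diverge is in how that obstacle is handled. The paper does not pre-eliminate inert $\tau$'s. It first removes only the degenerate equations $W=\tau+\cdots+\tau$ (for $W$ other than the root), then \emph{keeps} the inert $\tau$'s in the combined specification, adding a summand $\tau Z_{X'Y}$ (resp. $\tau Z_{XY'}$) to $t_{XY}$ whenever $XY$ is not the root pair and $t_X$ (resp. $t_Y$) has a $\tau$-summand whose target is still related to the other component; it then exhibits a solution of $E$ that is not simply $\langle X|E_1\rangle$ but $s_{XY}=\tau\langle X|E_1\rangle+u_{XY}$ for the pairs where $t_Y$ has an otherwise unmatched $\tau$-summand, and verifies with $A3$ and $B1$ that $(a_1\leftmerge\cdots\leftmerge a_m)s_{XY}=(a_1\leftmerge\cdots\leftmerge a_m)\langle X|E_1\rangle$, which is exactly what is needed to get $\psi(t_{XY})=s_{XY}$ and invoke $RSP$.

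The weak point of your version is the unproven auxiliary lemma. The silent-step axioms of this system, $B1$ and $B2$, only absorb a $\tau$ underneath a guarding prefix (they have the shape $x\cdot((y+\tau\cdot(y+z))\boxplus_{\pi}w)=x\cdot((y+z)\boxplus_{\pi}w)$ and its $\leftmerge$ analogue), so a top-level inert $\tau$-summand of a recursive equation cannot simply be deleted; any elimination of inert $\tau$-chains must be carried out inside such a prefix context, and making that precise is essentially equivalent to introducing the paper's terms $s_{XY}$ and proving $(a_1\leftmerge\cdots\leftmerge a_m)s_{XY}=(a_1\leftmerge\cdots\leftmerge a_m)\langle X|E_1\rangle$. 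Moreover the elimination must be forbidden at the root pair $Z_{X_1Y_1}$ to respect rootedness, a restriction your lemma would have to carry explicitly (the paper builds it in via the side condition that $XY$ differ from $X_1Y_1$). So the route is the same, but your substitute for the $s_{XY}$ construction is a genuine gap as stated: without it the map $\psi$ you define need not be a solution of $E$, and $RSP$ cannot be applied.
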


\begin{proof}
Firstly, by the elimination theorem of $APPTC$ with silent step and guarded linear recursion (see Theorem \ref{ETTau}), we know that each process term in $APPTC$ with silent step and guarded linear recursion is equal to a process term $\langle X_1|E\rangle$ with $E$ a guarded linear recursive specification.

It remains to prove the following cases.

(1) If $\langle X_1|E_1\rangle \approx_{prbp} \langle Y_1|E_2\rangle$ for guarded linear recursive specification $E_1$ and $E_2$, then $\langle X_1|E_1\rangle = \langle Y_1|E_2\rangle$.

Firstly, the recursive equation $W=\tau+\cdots+\tau$ with $W\nequiv X_1$ in $E_1$ and $E_2$, can be removed, and the corresponding summands $aW$ are replaced by $a$, to get $E_1'$ and
$E_2'$, by use of the axioms $RDP$, $A3$ and $B1-2$, and $\langle X|E_1\rangle = \langle X|E_1'\rangle$, $\langle Y|E_2\rangle = \langle Y|E_2'\rangle$.

Let $E_1$ consists of recursive equations $X=t_X$ for $X\in \mathcal{X}$ and $E_2$
consists of recursion equations $Y=t_Y$ for $Y\in\mathcal{Y}$, and are not the form $\tau+\cdots+\tau$. Let the guarded linear recursive specification $E$ consists of recursion
equations $Z_{XY}=t_{XY}$, and $\langle X|E_1\rangle\approx_{prbp}\langle Y|E_2\rangle$, and $t_{XY}$ consists of the following summands:

\begin{enumerate}
  \item $t_{XY}$ contains a summand $(a_1\leftmerge\cdots\leftmerge a_m)Z_{X'Y'}$ iff $t_X$ contains the summand $(a_1\leftmerge\cdots\leftmerge a_m)X'$ and $t_Y$ contains the summand
  $(a_1\leftmerge\cdots\leftmerge a_m)Y'$ such that $\langle X'|E_1\rangle\approx_{prbp}\langle Y'|E_2\rangle$;
  \item $t_{XY}$ contains a summand $b_1\leftmerge\cdots\leftmerge b_n$ iff $t_X$ contains the summand $b_1\leftmerge\cdots\leftmerge b_n$ and $t_Y$ contains the summand
  $b_1\leftmerge\cdots\leftmerge b_n$;
  \item $t_{XY}$ contains a summand $\tau Z_{X'Y}$ iff $XY\nequiv X_1Y_1$, $t_X$ contains the summand $\tau X'$, and $\langle X'|E_1\rangle\approx_{prbp}\langle Y|E_2\rangle$;
  \item $t_{XY}$ contains a summand $\tau Z_{XY'}$ iff $XY\nequiv X_1Y_1$, $t_Y$ contains the summand $\tau Y'$, and $\langle X|E_1\rangle\approx_{prbp}\langle Y'|E_2\rangle$.
\end{enumerate}

Since $E_1$ and $E_2$ are guarded, $E$ is guarded. Constructing the process term $u_{XY}$ consist of the following summands:

\begin{enumerate}
  \item $u_{XY}$ contains a summand $(a_1\leftmerge\cdots\leftmerge a_m)\langle X'|E_1\rangle$ iff $t_X$ contains the summand $(a_1\leftmerge\cdots\leftmerge a_m)X'$ and $t_Y$ contains
  the summand $(a_1\leftmerge\cdots\leftmerge a_m)Y'$ such that $\langle X'|E_1\rangle\approx_{prbp}\langle Y'|E_2\rangle$;
  \item $u_{XY}$ contains a summand $b_1\leftmerge\cdots\leftmerge b_n$ iff $t_X$ contains the summand $b_1\leftmerge\cdots\leftmerge b_n$ and $t_Y$ contains the summand
  $b_1\leftmerge\cdots\leftmerge b_n$;
  \item $u_{XY}$ contains a summand $\tau \langle X'|E_1\rangle$ iff $XY\nequiv X_1Y_1$, $t_X$ contains the summand $\tau X'$, and
  $\langle X'|E_1\rangle\approx_{rbs}\langle Y|E_2\rangle$.
\end{enumerate}

Let the process term $s_{XY}$ be defined as follows:

\begin{enumerate}
  \item $s_{XY}\triangleq\tau\langle X|E_1\rangle + u_{XY}$ iff $XY\nequiv X_1Y_1$, $t_Y$ contains the summand $\tau Y'$, and $\langle X|E_1\rangle\approx_{prbp}\langle Y'|E_2\rangle$;
  \item $s_{XY}\triangleq\langle X|E_1\rangle$, otherwise.
\end{enumerate}

So, $\langle X|E_1\rangle=\langle X|E_1\rangle+u_{XY}$, and $(a_1\leftmerge\cdots\leftmerge a_m)(\tau\langle X|E_1\rangle+u_{XY})
=(a_1\leftmerge\cdots\leftmerge a_m)((\tau\langle X|E_1\rangle+u_{XY})+u_{XY})=(a_1\leftmerge\cdots\leftmerge a_m)(\langle X|E_1\rangle+u_{XY})
=(a_1\leftmerge\cdots\leftmerge a_m)\langle X|E_1\rangle$, hence, $(a_1\leftmerge\cdots\leftmerge a_m)s_{XY}=(a_1\leftmerge\cdots\leftmerge a_m)\langle X|E_1\rangle$.

Let $\sigma$ map recursion variable $X$ in $E_1$ to $\langle X|E_1\rangle$, and let $\psi$ map recursion variable $Z_{XY}$ in $E$ to $s_{XY}$. It is sufficient to prove
$s_{XY}=\psi(t_{XY})$ for recursion variables $Z_{XY}$ in $E$. Either $XY\equiv X_1Y_1$ or $XY\nequiv X_1Y_1$, we all can get $s_{XY}=\psi(t_{XY})$.
So, $s_{XY}=\langle Z_{XY}|E\rangle$ for recursive variables $Z_{XY}$ in $E$ is a solution for $E$. Then by $RSP$, particularly, $\langle X_1|E_1\rangle=\langle Z_{X_1Y_1}|E\rangle$.
Similarly, we can obtain $\langle Y_1|E_2\rangle=\langle Z_{X_1Y_1}|E\rangle$. Finally, $\langle X_1|E_1\rangle=\langle Z_{X_1Y_1}|E\rangle=\langle Y_1|E_2\rangle$, as desired.

(2) If $\langle X_1|E_1\rangle \approx_{prbs} \langle Y_1|E_2\rangle$ for guarded linear recursive specification $E_1$ and $E_2$, then $\langle X_1|E_1\rangle = \langle Y_1|E_2\rangle$.

It can be proven similarly to (1), we omit it.

(3) If $\langle X_1|E_1\rangle \approx_{prbhp} \langle Y_1|E_2\rangle$ for guarded linear recursive specification $E_1$ and $E_2$, then $\langle X_1|E_1\rangle = \langle Y_1|E_2\rangle$.

It can be proven similarly to (1), we omit it.

(4) If $\langle X_1|E_1\rangle \approx_{prbhhp} \langle Y_1|E_2\rangle$ for guarded linear recursive specification $E_1$ and $E_2$, then $\langle X_1|E_1\rangle = \langle Y_1|E_2\rangle$.

It can be proven similarly to (1), we omit it.
\end{proof}

\subsubsection{Abstraction}

The unary abstraction operator $\tau_I$ ($I\subseteq \mathbb{E}$) renames all atomic events in $I$ into $\tau$. $APPTC$ with silent step and abstraction operator is called $APPTC_{\tau}$.
The transition rules of operator $\tau_I$ are shown in Table \ref{TRForPAbstraction}.

\begin{center}
    \begin{table}
        $$\frac{x\rightsquigarrow x'}{\tau_I(x)\rightsquigarrow\tau_I(x')}$$
        $$\frac{x\xrightarrow{e}\surd}{\tau_I(x)\xrightarrow{e}\surd}\quad e\notin I
        \quad\frac{x\xrightarrow{e}x'}{\tau_I(x)\xrightarrow{e}\tau_I(x')}\quad e\notin I$$

        $$\frac{x\xrightarrow{e}\surd}{\tau_I(x)\xrightarrow{\tau}\surd}\quad e\in I
        \quad\frac{x\xrightarrow{e}x'}{\tau_I(x)\xrightarrow{\tau}\tau_I(x')}\quad e\in I$$
        \caption{Transition rules of the abstraction operator}
        \label{TRForPAbstraction}
    \end{table}
\end{center}

\begin{theorem}[Conservitivity of $APPTC_{\tau}$ with guarded linear recursion]
$APPTC_{\tau}$ with guarded linear recursion is a conservative extension of $APPTC$ with silent step and guarded linear recursion.
\end{theorem}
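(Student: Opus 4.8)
The plan is to invoke the general conservative extension theorem (Theorem \ref{TCE}), taking $APPTC$ with silent step and guarded linear recursion as the base system $T_0$ over a signature $\Sigma_0$, and taking the single new function symbol $\tau_I$ together with its transition rules (Table \ref{TRForPAbstraction}) as the extension $T_1$ over $\Sigma_1 = \Sigma_0 \cup \{\tau_I\}$. Since both $T_0$ and $T_0 \oplus T_1$ are positive after reduction --- the only negative premises anywhere are the side conditions $y \nrightarrow^{e}$ attached to the unless operator $\triangleleft$, which live entirely in $T_0$ and are inherited unchanged --- the hypotheses of Theorem \ref{TCE} collapse to verifying its two numbered conditions.

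First I would verify condition (1): that $T_0$ is source-dependent. This follows from the cumulative source-dependency of all transition rules introduced up to this point, namely the rules for $\cdot$, $+$, $\boxplus_{\pi}$, $\parallel$, $\leftmerge$, $\mid$, $\Theta$, $\triangleleft$ (Tables \ref{TRForAPPTC1} and \ref{TRForAPPTC}), for $\partial_H$ (Table \ref{TRForPEncapsulation}), for guarded recursion (Table \ref{TRForPGR}), and for the silent step $\tau$ (Table \ref{TRForPTau}), together with the single-event rules of Table \ref{SETRForBAPTC}. In each of these rules every variable is reachable from the source through the premises, a property already relied upon in the preceding conservativity arguments (in particular the conservativity of $APPTC$ with silent step and guarded linear recursion over $APPTC$ with linear recursion), so I would simply carry that fact forward.

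Then I would verify condition (2): that every rule $\rho \in T_1$ meets the freshness requirement. Inspecting Table \ref{TRForPAbstraction}, each transition rule for $\tau_I$ has source of the form $\tau_I(x)$, which contains the function symbol $\tau_I \in \Sigma_1 \setminus \Sigma_0$. Hence the source of every such $\rho$ is fresh, the first disjunct of condition (2) is satisfied outright, and no examination of the premises is needed. With both conditions established, Theorem \ref{TCE} delivers that $T_0 \oplus T_1$ is a conservative extension of $T_0$, i.e.\ that $APPTC_{\tau}$ with guarded linear recursion is a conservative extension of $APPTC$ with silent step and guarded linear recursion.

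The only point demanding genuine attention --- and thus the main obstacle --- is confirming that adding the abstraction layer preserves the ``positive after reduction'' status of the combined system. I expect this to be routine: the negative premises of $\triangleleft$ occur solely in $T_0$, the $\tau_I$ rules introduce no new negative premises, and so the reduction rendering the system positive is exactly the one already in force at the earlier stage. Consequently I would present the proof in the same two-fact style as the earlier conservativity results, citing source-dependency of $T_0$ and the presence of the fresh symbol $\tau_I$ in the source of each of its rules, and then appealing to Theorem \ref{TCE}.
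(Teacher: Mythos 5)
Your proof is correct and follows essentially the same route as the paper: the paper's own argument simply cites the source-dependency of the transition rules of $APPTC$ with silent step and guarded linear recursion and the presence of the fresh operator $\tau_I$ in the source of each rule of Table \ref{TRForPAbstraction}, then appeals to Theorem \ref{TCE}. Your additional remarks on the ``positive after reduction'' hypothesis are a sensible elaboration of a point the paper leaves implicit, but they do not change the substance of the argument.
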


\begin{proof}
Since the transition rules of $APPTC$ with silent step and guarded linear recursion are source-dependent, and the transition rules for abstraction operator in Table
\ref{TRForPAbstraction} contain only a fresh operator $\tau_I$ in their source, so the transition rules of $APPTC_{\tau}$ with guarded linear recursion is a conservative extension of
those of $APPTC$ with silent step and guarded linear recursion.
\end{proof}

\begin{theorem}[Congruence theorem of $APPTC_{\tau}$ with guarded linear recursion]
Probabilistic rooted branching truly concurrent bisimulation equivalences $\approx_{prbp}$, $\approx_{prbs}$, $\approx_{prbhp}$ and $\approx_{prbhhp}$ are all congruences with respect
to $APPTC_{\tau}$ with guarded linear recursion.
\end{theorem}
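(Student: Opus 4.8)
The plan is to follow the same two-layer strategy used for the congruence theorem of $APPTC$ with silent step and guarded linear recursion, and to reduce the present statement to its single new ingredient, the abstraction operator $\tau_I$. First I would invoke the preceding congruence theorem to record that $\approx_{prbp}$, $\approx_{prbs}$, $\approx_{prbhp}$ and $\approx_{prbhhp}$ are already congruences with respect to every operator of $APPTC$ with silent step and guarded linear recursion, and that each of them is an equivalence relation on $APPTC_{\tau}$ terms. Hence it suffices to prove that each equivalence is preserved by $\tau_I$, i.e. that $x\approx_{prbp} x'$ implies $\tau_I(x)\approx_{prbp}\tau_I(x')$, and likewise for the step, hp- and hhp-variants.

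For this preservation step I would fix a witnessing probabilistic rooted branching bisimulation $R$ with $(x,x')\in R$ and define the candidate relation $R'=\{(\tau_I(P),\tau_I(Q)):(P,Q)\in R\}$, closing it where needed under the nested (non-rooted) branching relation used inside the rootedness clause. The verification then splits according to the transition rules in Table \ref{TRForPAbstraction}. A probabilistic move $\tau_I(P)\rightsquigarrow\tau_I(P^{\pi})$ arises exactly from $P\rightsquigarrow P^{\pi}$, so it is answered by $\tau_I(Q)\rightsquigarrow\tau_I(Q^{\pi})$ via the matching probabilistic move of $R$; an action transition $\tau_I(P)\xrightarrow{e}\tau_I(P')$ with $e\notin I$ is matched exactly as in $R$, while a transition $\tau_I(P)\xrightarrow{\tau}\tau_I(P')$ coming from $P\xrightarrow{e}P'$ with $e\in I$ is matched by replaying the $R$-move on the $Q$ side and relabelling the resulting event to $\tau$ through $\tau_I$. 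The termination clauses are inherited directly, the quotient condition $\mu(\tau_I(P),C)=\mu(\tau_I(Q),C)$ follows from the PDF clause for $\tau_I$ (which, in analogy with $\partial_H$, satisfies $\mu(\tau_I(x),\tau_I(x'))=\mu(x,x')$) together with $\mu$-agreement in $R$, and $[\surd]_{R'}=\{\surd\}$ follows from the same property of $R$.

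The part needing genuine care, and where I expect the main obstacle, is the rootedness clause: because $\tau_I$ turns an initial action lying in $I$ into $\tau$, I must check that the strict root-level matching demanded by $\approx_{prbp}$ is not quietly relaxed into a branching match. Concretely, an initial $\tau_I(P)\rightsquigarrow\tau_I(P^{\pi})\xrightarrow{e}\tau_I(P')$ must be answered by a genuine $\tau_I(Q)\rightsquigarrow\tau_I(Q^{\pi})\xrightarrow{e'}\tau_I(Q')$ with $\tau_I(P')\approx_{prbp}\tau_I(Q')$, and the label produced at the root, whether it survives from outside $I$ or is a fresh $\tau$ originating inside $I$, must be produced identically on both sides. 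This is exactly what rootedness of $R$ supplies before abstraction, and since the relabelling imposed by $\tau_I$ is deterministic and applied uniformly to both components, the root match is preserved while the residuals already sit inside the branching relation. For the hp- and hhp-variants I would additionally carry the order-isomorphism $f$ along the abstraction, using that $\tau$-labelled events are excluded from $\hat{C}$, and for hhp I would verify that downward closure of $R'$ descends from that of $R$. Combining preservation by $\tau_I$ with the already-known congruence for the remaining operators then yields the full claim for all four equivalences.
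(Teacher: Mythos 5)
Your proposal follows the same route as the paper: the paper's proof likewise observes that the four relations are equivalences on the relevant terms and reduces the claim to showing that each is preserved by the single new operator $\tau_I$, at which point it declares the verification ``quite trivial'' and omits it. Your explicit construction of the relation $R'=\{(\tau_I(P),\tau_I(Q)):(P,Q)\in R\}$ and the case analysis over the rules of Table \ref{TRForPAbstraction}, including the check that the root-level match is not weakened when an initial action in $I$ is renamed to $\tau$, simply supplies the detail the paper leaves out.
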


\begin{proof}
It is easy to see that probabilistic rooted branching truly concurrent bisimulations $\approx_{prbp}$, $\approx_{prbs}$, $\approx_{prbhp}$ and $\approx_{prbhhp}$ are all equivalent
relations on $APPTC$ terms, we only need to prove that $\approx_{prbp}$, $\approx_{prbs}$, $\approx_{prbhp}$ and $\approx_{prbhhp}$ are preserved by the operators $\tau_I$.
That is, if $x\approx_{prbp} x'$, $x\approx_{prbs} x'$, $x\approx_{prbhp} x'$ and $x\approx_{prbhhp} x'$, we need to prove that $\tau_I(x)\approx_{prbp}\tau_I(x')$,
$\tau_I(x)\approx_{prbs}\tau_I(x')$, $\tau_I(x)\approx_{prbhp}\tau_I(x')$ and $\tau_I(x)\approx_{prbhhp}\tau_I(x')$. The proof is quite trivial and we omit it.
\end{proof}

We design the axioms for the abstraction operator $\tau_I$ in Table \ref{AxiomsForPAbstraction}.

\begin{center}
\begin{table}
  \begin{tabular}{@{}ll@{}}
\hline No. &Axiom\\
  $TI1$ & $e\notin I\quad \tau_I(e)=e$\\
  $TI2$ & $e\in I\quad \tau_I(e)=\tau$\\
  $TI3$ & $\tau_I(\delta)=\delta$\\
  $TI4$ & $\tau_I(x+y)=\tau_I(x)+\tau_I(y)$\\
  $PTI1$ & $\tau_I(x\boxplus_{\pi}y)=\tau_I(x)\boxplus_{\pi}\tau_I(y)$\\
  $TI5$ & $\tau_I(x\cdot y)=\tau_I(x)\cdot\tau_I(y)$\\
  $TI6$ & $\tau_I(x\leftmerge y)=\tau_I(x)\leftmerge\tau_I(y)$\\
\end{tabular}
\caption{Axioms of abstraction operator}
\label{AxiomsForPAbstraction}
\end{table}
\end{center}

The axioms $TI1-TI3$ are the defining laws for the abstraction operator $\tau_I$; $TI4-TI6$ and $PTI1$ say that in process term $\tau_I(t)$, all transitions of $t$ labeled with atomic
events from $I$ are renamed into $\tau$.

\begin{theorem}[Soundness of $APPTC_{\tau}$ with guarded linear recursion]\label{SAPPTCABS}
Let $x$ and $y$ be $APPTC_{\tau}$ with guarded linear recursion terms. If $APPTC_{\tau}$ with guarded linear recursion $\vdash x=y$, then
\begin{enumerate}
  \item $x\approx_{prbp} y$;
  \item $x\approx_{prbs} y$;
  \item $x\approx_{prbhp} y$;
  \item $x\approx_{prbhhp} y$.
\end{enumerate}
\end{theorem}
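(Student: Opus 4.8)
The plan is to follow the standard soundness strategy. The preceding congruence theorem establishes that $\approx_{prbp}$, $\approx_{prbs}$, $\approx_{prbhp}$ and $\approx_{prbhhp}$ are all equivalences and congruences with respect to every operator of $APPTC_{\tau}$ with guarded linear recursion; consequently it suffices to verify that for each axiom $s=t$ the two sides are related by the relevant rooted branching equivalence, since derivability $APPTC_{\tau}\vdash x=y$ is generated from axiom instances by reflexivity, symmetry, transitivity and substitution into contexts, and congruence lifts per-axiom soundness to every derivable equation. The axioms fall into two groups: those inherited from the earlier fragments (parallelism, encapsulation, recursion and the silent-step laws $B1$ and $B2$), whose validity is already supplied by the earlier soundness results and in particular Theorem \ref{SAPPTCTAU}, and the genuinely new abstraction laws $TI1$--$TI6$ and $PTI1$ of Table \ref{AxiomsForPAbstraction}. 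I would therefore reduce the whole statement to checking these seven abstraction axioms.

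For each abstraction axiom I would exhibit a witnessing relation built from the transition rules of $\tau_I$ in Table \ref{TRForPAbstraction} together with the PDF clauses for $\tau_I$. The defining laws $TI1$ and $TI3$ and the homomorphic laws $TI4$, $TI5$, $TI6$ and $PTI1$ simply mirror the shape of the operational rules: $\tau_I$ commutes with $+$, $\cdot$, $\leftmerge$ and $\boxplus_{\pi}$, so the left- and right-hand sides generate identical probabilistic transitions $\rightsquigarrow$ and identical action transitions $\xrightarrow{e}$ and $\xrightarrow{\{e_1,\dots,e_k\}}$, and one checks that the induced relation $\{(\tau_I(\text{context}),\ \text{context}(\tau_I(\cdot)))\}$ is closed under these transitions. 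These cases in fact yield strong probabilistic bisimilarity, which, as noted in the congruence arguments, implies the corresponding rooted branching equivalence, so no genuine $\tau$-reasoning is needed for them.

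The only law that actually introduces silent behaviour is $TI2$, which renames an $e\in I$ into $\tau$, and here I expect the main work to lie. Tracking $\tau_I(e)$ through its probabilistic step $\tau_I(e)\rightsquigarrow\tau_I(\breve{e})$ and then the rule for $e\in I$ giving $\tau_I(\breve{e})\xrightarrow{\tau}\surd$, one matches it against $\tau\rightsquigarrow\breve{\tau}\xrightarrow{\tau}\surd$; the rooted condition is met because both sides perform a $\tau$ as their first post-probabilistic action leading to $\surd$. The remaining subtlety is that the rooted branching bisimulations carry the probabilistic clause $\mu(C_1,C)=\mu(C_2,C)$ and, for the hp- and hhp-variants, a posetal isomorphism; I would discharge these using the PDF definition of $\tau_I$, which leaves $\mu$ unchanged under the relabelling, and observe that since $\tau_I$ only relabels events without altering the underlying causality or conflict, the order isomorphism required for $\approx_{prbhp}$ and its downward-closed refinement for $\approx_{prbhhp}$ are preserved automatically. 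Assembling these per-axiom checks and invoking congruence then yields all four soundness claims.
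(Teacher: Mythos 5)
Your proposal follows essentially the same route as the paper: both invoke the congruence theorem to reduce soundness of derivable equations to a per-axiom check of the abstraction laws in Table \ref{AxiomsForPAbstraction}, with the remaining axioms covered by the earlier soundness results. The only difference is that the paper declares the axiom verification ``quite trivial'' and omits it, whereas you actually sketch the transition-matching and PDF arguments, including the one genuinely $\tau$-introducing case $TI2$; this is a faithful elaboration rather than a different proof.
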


\begin{proof}
Since probabilistic rooted branching step bisimulations $\approx_{prbp}$, $\approx_{prbs}$, $\approx_{prbhp}$ and $\approx_{prbhhp}$ are all both equivalent and congruent relations
with respect to $APPTC_{\tau}$ with guarded linear recursion, we only need to check if each axiom in Table \ref{AxiomsForPAbstraction} is sound modulo $\approx_{prbp}$, $\approx_{prbs}$, $\approx_{prbhp}$ and $\approx_{prbhhp}$.
The proof is quite trivial and we omit it.
\end{proof}

Though $\tau$-loops are prohibited in guarded linear recursive specifications in a specifiable way, they can be constructed using the abstraction operator, for example, there exist
$\tau$-loops in the process term $\tau_{\{a\}}(\langle X|X=aX\rangle)$. To avoid $\tau$-loops caused by $\tau_I$ and ensure fairness, we introduce the following recursive verification
rules as Table \ref{RVR} shows, note that $i_1,\cdots, i_m,j_1,\cdots,j_n\in I\subseteq \mathbb{E}\setminus\{\tau\}$.

\begin{center}
\begin{table}
    $$VR_1\quad \frac{x=y+(i_1\leftmerge\cdots\leftmerge i_m)\cdot x, y=y+y}{\tau\cdot\tau_I(x)=\tau\cdot \tau_I(y)}$$
    $$VR_2\quad \frac{x=z\boxplus_{\pi}(u+(i_1\leftmerge\cdots\leftmerge i_m)\cdot x),z=z+u,z=z+z}{\tau\cdot\tau_I(x)=\tau\cdot\tau_I(z)}$$
    $$VR_3\quad \frac{x=z+(i_1\leftmerge\cdots\leftmerge i_m)\cdot y,y=z\boxplus_{\pi}(u+(j_1\leftmerge\cdots\leftmerge j_n)\cdot x), z=z+u,z=z+z}{\tau\cdot\tau_I(x)=\tau\cdot\tau_I(y')\textrm{ for }y'=z\boxplus_{\pi}(u+(i_1\leftmerge\cdots\leftmerge i_m)\cdot y')}$$
\caption{Recursive verification rules}
\label{RVR}
\end{table}
\end{center}

\begin{theorem}[Soundness of $VR_1,VR_2,VR_3$]
$VR_1$, $VR_2$ and $VR_3$ are sound modulo probabilistic rooted branching truly concurrent bisimulation equivalences $\approx_{prbp}$, $\approx_{prbs}$, $\approx_{prbhp}$ and $\approx_{prbhhp}$.
\end{theorem}

\newpage\section{Mobility}\label{piptc}

In this chapter, we design a calculus of probabilistic truly concurrent mobile processes ($\pi_{ptc}$). This chapter is organized as follows. We introduce the syntax and operational
semantics of $\pi_{ptc}$ in section \ref{sos5}, its properties for strongly probabilistic truly concurrent bisimulations in section \ref{stcb5}, its axiomatization in section \ref{at5}.

\subsection{Syntax and Operational Semantics}\label{sos5}

We assume an infinite set $\mathcal{N}$ of (action or event) names, and use $a,b,c,\cdots$ to range over $\mathcal{N}$, use $x,y,z,w,u,v$ as meta-variables over names. We denote by 
$\overline{\mathcal{N}}$ the set of co-names and let $\overline{a},\overline{b},\overline{c},\cdots$ range over $\overline{\mathcal{N}}$. Then we set 
$\mathcal{L}=\mathcal{N}\cup\overline{\mathcal{N}}$ as the set of labels, and use $l,\overline{l}$ to range over $\mathcal{L}$. We extend complementation to $\mathcal{L}$ such that 
$\overline{\overline{a}}=a$. Let $\tau$ denote the silent step (internal action or event) and define $Act=\mathcal{L}\cup\{\tau\}$ to be the set of actions, $\alpha,\beta$ range over 
$Act$. And $K,L$ are used to stand for subsets of $\mathcal{L}$ and $\overline{L}$ is used for the set of complements of labels in $L$.

Further, we introduce a set $\mathcal{X}$ of process variables, and a set $\mathcal{K}$ of process constants, and let $X,Y,\cdots$ range over $\mathcal{X}$, and $A,B,\cdots$ range over 
$\mathcal{K}$. For each process constant $A$, a nonnegative arity $ar(A)$ is assigned to it. Let $\widetilde{x}=x_1,\cdots,x_{ar(A)}$ be a tuple of distinct name variables, then 
$A(\widetilde{x})$ is called a process constant. $\widetilde{X}$ is a tuple of distinct process variables, and also $E,F,\cdots$ range over the recursive expressions. We write 
$\mathcal{P}$ for the set of processes. Sometimes, we use $I,J$ to stand for an indexing set, and we write $E_i:i\in I$ for a family of expressions indexed by $I$. $Id_D$ is the 
identity function or relation over set $D$. The symbol $\equiv_{\alpha}$ denotes equality under standard alpha-convertibility, note that the subscript $\alpha$ has no relation to the 
action $\alpha$.

\subsubsection{Syntax}

We use the Prefix $.$ to model the causality relation $\leq$ in true concurrency, the Summation $+$ to model the conflict relation $\sharp$, and $\boxplus_{\pi}$ to model the probabilistic
conflict relation $\sharp_{\pi}$ in probabilistic true concurrency, and the Composition $\parallel$ to explicitly model concurrent relation in true concurrency. And we follow the 
conventions of process algebra.

\begin{definition}[Syntax]\label{syntax5}
A truly concurrent process $P$ is defined inductively by the following formation rules:

\begin{enumerate}
  \item $A(\widetilde{x})\in\mathcal{P}$;
  \item $\textbf{nil}\in\mathcal{P}$;
  \item if $P\in\mathcal{P}$, then the Prefix $\tau.P\in\mathcal{P}$, for $\tau\in Act$ is the silent action;
  \item if $P\in\mathcal{P}$, then the Output $\overline{x}y.P\in\mathcal{P}$, for $x,y\in Act$;
  \item if $P\in\mathcal{P}$, then the Input $x(y).P\in\mathcal{P}$, for $x,y\in Act$;
  \item if $P\in\mathcal{P}$, then the Restriction $(x)P\in\mathcal{P}$, for $x\in Act$;
  \item if $P,Q\in\mathcal{P}$, then the Summation $P+Q\in\mathcal{P}$;
  \item if $P,Q\in\mathcal{P}$, then the Summation $P\boxplus_{\pi}Q\in\mathcal{P}$;
  \item if $P,Q\in\mathcal{P}$, then the Composition $P\parallel Q\in\mathcal{P}$;
\end{enumerate}

The standard BNF grammar of syntax of $\pi_{tc}$ can be summarized as follows:

$$P::=A(\widetilde{x})\quad|\quad\textbf{nil}\quad|\quad\tau.P\quad|\quad \overline{x}y.P\quad |\quad x(y).P \quad|\quad (x)P \quad | \quad P+P\quad|\quad P\boxplus_{\pi}P\quad |\quad P\parallel P.$$
\end{definition}

In $\overline{x}y$, $x(y)$ and $\overline{x}(y)$, $x$ is called the subject, $y$ is called the object and it may be free or bound.

\begin{definition}[Free variables]
The free names of a process $P$, $fn(P)$, are defined as follows.

\begin{enumerate}
  \item $fn(A(\widetilde{x}))\subseteq\{\widetilde{x}\}$;
  \item $fn(\textbf{nil})=\emptyset$;
  \item $fn(\tau.P)=fn(P)$;
  \item $fn(\overline{x}y.P)=fn(P)\cup\{x\}\cup\{y\}$;
  \item $fn(x(y).P)=fn(P)\cup\{x\}-\{y\}$;
  \item $fn((x)P)=fn(P)-\{x\}$;
  \item $fn(P+Q)=fn(P)\cup fn(Q)$;
  \item $fn(P\boxplus_{\pi}Q)=fn(P)\cup fn(Q)$;
  \item $fn(P\parallel Q)=fn(P)\cup fn(Q)$.
\end{enumerate}
\end{definition}

\begin{definition}[Bound variables]
Let $n(P)$ be the names of a process $P$, then the bound names $bn(P)=n(P)-fn(P)$.
\end{definition}

For each process constant schema $A(\widetilde{x})$, a defining equation of the form

$$A(\widetilde{x})\overset{\text{def}}{=}P$$

is assumed, where $P$ is a process with $fn(P)\subseteq \{\widetilde{x}\}$.

\begin{definition}[Substitutions]\label{subs5}
A substitution is a function $\sigma:\mathcal{N}\rightarrow\mathcal{N}$. For $x_i\sigma=y_i$ with $1\leq i\leq n$, we write $\{y_1/x_1,\cdots,y_n/x_n\}$ or 
$\{\widetilde{y}/\widetilde{x}\}$ for $\sigma$. For a process $P\in\mathcal{P}$, $P\sigma$ is defined inductively as follows:
\begin{enumerate}
  \item if $P$ is a process constant $A(\widetilde{x})=A(x_1,\cdots,x_n)$, then $P\sigma=A(x_1\sigma,\cdots,x_n\sigma)$;
  \item if $P=\textbf{nil}$, then $P\sigma=\textbf{nil}$;
  \item if $P=\tau.P'$, then $P\sigma=\tau.P'\sigma$;
  \item if $P=\overline{x}y.P'$, then $P\sigma=\overline{x\sigma}y\sigma.P'\sigma$;
  \item if $P=x(y).P'$, then $P\sigma=x\sigma(y).P'\sigma$;
  \item if $P=(x)P'$, then $P\sigma=(x\sigma)P'\sigma$;
  \item if $P=P_1+P_2$, then $P\sigma=P_1\sigma+P_2\sigma$;
  \item if $P=P_1\boxplus_{\pi}P_2$, then $P\sigma=P_1\sigma\boxplus_{\pi}P_2\sigma$;
  \item if $P=P_1\parallel P_2$, then $P\sigma=P_1\sigma \parallel P_2\sigma$.
\end{enumerate}
\end{definition}

\subsubsection{Operational Semantics}

The operational semantics is defined by LTSs (labelled transition systems), and it is detailed by the following definition.

\begin{definition}[Semantics]\label{semantics5}
The operational semantics of $\pi_{tc}$ corresponding to the syntax in Definition \ref{syntax5} is defined by a series of transition rules, named $\textbf{PACT}$, $\textbf{PSUM}$, $\textbf{PBOX-SUM}$
$\textbf{PIDE}$, $\textbf{PPAR}$, $\textbf{PRES}$ and named $\textbf{ACT}$, $\textbf{SUM}$, 
$\textbf{IDE}$, $\textbf{PAR}$, $\textbf{COM}$, $\textbf{CLOSE}$, $\textbf{RES}$, $\textbf{OPEN}$ indicate that the rules are associated respectively with Prefix, Summation, Box-Summation,
Identity, Parallel Composition, Communication, and Restriction in Definition \ref{syntax5}. They are shown in Table \ref{PTRForPITC5} and \ref{TRForPITC5}.

\begin{center}
    \begin{table}
        \[\textbf{PTAU-ACT}\quad \frac{}{\tau.P\rightsquigarrow \breve{\tau}.P} \quad \textbf{POUTPUT-ACT}\quad \frac{}{\overline{x}y.P\rightsquigarrow \breve{\overline{x}y}.P}\]

        \[\textbf{PINPUT-ACT}\quad \frac{}{x(z).P\rightsquigarrow \breve{x(z)}.P}\]


        \[\textbf{PPAR}\quad \frac{P\rightsquigarrow P'\quad Q\rightsquigarrow Q'}{P\parallel Q\rightsquigarrow P'\parallel Q'}\]

        \[\textbf{PSUM}\quad \frac{P\rightsquigarrow P'\quad Q\rightsquigarrow Q'}{P+Q\rightsquigarrow P'+Q'}\]
        
        \[\textbf{PBOX-SUM}\quad \frac{P\rightsquigarrow P'}{P\boxplus_{\pi}Q\rightsquigarrow P'}\]

        \[\textbf{PIDE}\quad\frac{P\{\widetilde{y}/\widetilde{x}\}\rightsquigarrow P'}{A(\widetilde{y})\rightsquigarrow P'}\quad (A(\widetilde{x})\overset{\text{def}}{=}P)\]

        \[\textbf{PRES}\quad \frac{P\rightsquigarrow P'}{(y)P\rightsquigarrow (y)P'}\quad (y\notin n(\alpha))\]

        \caption{Probabilistic transition rules of $\pi_{ptc}$}
        \label{PTRForPITC5}
    \end{table}
\end{center}

\begin{center}
    \begin{table}
        \[\textbf{TAU-ACT}\quad \frac{}{\tau.P\xrightarrow{\tau}P} \quad \textbf{OUTPUT-ACT}\quad \frac{}{\overline{x}y.P\xrightarrow{\overline{x}y}P}\]

        \[\textbf{INPUT-ACT}\quad \frac{}{x(z).P\xrightarrow{x(w)}P\{w/z\}}\quad (w\notin fn((z)P))\]

        \[\textbf{PAR}_1\quad \frac{P\xrightarrow{\alpha}P'\quad Q\nrightarrow}{P\parallel Q\xrightarrow{\alpha}P'\parallel Q}\quad (bn(\alpha)\cap fn(Q)=\emptyset) \quad \textbf{PAR}_2\quad \frac{Q\xrightarrow{\alpha}Q'\quad P\nrightarrow}{P\parallel Q\xrightarrow{\alpha}P\parallel Q'}\quad (bn(\alpha)\cap fn(P)=\emptyset)\]

        \[\textbf{PAR}_3\quad \frac{P\xrightarrow{\alpha}P'\quad Q\xrightarrow{\beta}Q'}{P\parallel Q\xrightarrow{\{\alpha,\beta\}}P'\parallel Q'}\quad (\beta\neq\overline{\alpha}, bn(\alpha)\cap bn(\beta)=\emptyset, bn(\alpha)\cap fn(Q)=\emptyset,bn(\beta)\cap fn(P)=\emptyset)\]

        \[\textbf{PAR}_4\quad \frac{P\xrightarrow{x_1(z)}P'\quad Q\xrightarrow{x_2(z)}Q'}{P\parallel Q\xrightarrow{\{x_1(w),x_2(w)\}}P'\{w/z\}\parallel Q'\{w/z\}}\quad (w\notin fn((z)P)\cup fn((z)Q))\]

        \[\textbf{COM}\quad \frac{P\xrightarrow{\overline{x}y}P'\quad Q\xrightarrow{x(z)}Q'}{P\parallel Q\xrightarrow{\tau}P'\parallel Q'\{y/z\}}\]

        \[\textbf{CLOSE}\quad \frac{P\xrightarrow{\overline{x}(w)}P'\quad Q\xrightarrow{x(w)}Q'}{P\parallel Q\xrightarrow{\tau}(w)(P'\parallel Q')}\]

        \[\textbf{SUM}_1\quad \frac{P\xrightarrow{\alpha}P'}{P+Q\xrightarrow{\alpha}P'} \quad \textbf{SUM}_2\quad \frac{P\xrightarrow{\{\alpha_1,\cdots,\alpha_n\}}P'}{P+Q\xrightarrow{\{\alpha_1,\cdots,\alpha_n\}}P'}\]


        \[\textbf{IDE}_1\quad\frac{P\{\widetilde{y}/\widetilde{x}\}\xrightarrow{\alpha}P'}{A(\widetilde{y})\xrightarrow{\alpha}P'}\quad (A(\widetilde{x})\overset{\text{def}}{=}P) \quad \textbf{IDE}_2\quad\frac{P\{\widetilde{y}/\widetilde{x}\}\xrightarrow{\{\alpha_1,\cdots,\alpha_n\}}P'} {A(\widetilde{y})\xrightarrow{\{\alpha_1,\cdots,\alpha_n\}}P'}\quad (A(\widetilde{x})\overset{\text{def}}{=}P)\]

        \[\textbf{RES}_1\quad \frac{P\xrightarrow{\alpha}P'}{(y)P\xrightarrow{\alpha}(y)P'}\quad (y\notin n(\alpha)) \quad \textbf{RES}_2\quad \frac{P\xrightarrow{\{\alpha_1,\cdots,\alpha_n\}}P'}{(y)P\xrightarrow{\{\alpha_1,\cdots,\alpha_n\}}(y)P'}\quad (y\notin n(\alpha_1)\cup\cdots\cup n(\alpha_n))\]

        \[\textbf{OPEN}_1\quad \frac{P\xrightarrow{\overline{x}y}P'}{(y)P\xrightarrow{\overline{x}(w)}P'\{w/y\}} \quad (y\neq x, w\notin fn((y)P'))\]

        \[\textbf{OPEN}_2\quad \frac{P\xrightarrow{\{\overline{x}_1 y,\cdots,\overline{x}_n y\}}P'}{(y)P\xrightarrow{\{\overline{x}_1(w),\cdots,\overline{x}_n(w)\}}P'\{w/y\}} \quad (y\neq x_1\neq\cdots\neq x_n, w\notin fn((y)P'))\]

        \caption{Action transition rules of $\pi_{ptc}$}
        \label{TRForPITC5}
    \end{table}
\end{center}
\end{definition}

\subsubsection{Properties of Transitions}

\begin{proposition}
\begin{enumerate}
  \item If $P\xrightarrow{\alpha}P'$ then
  \begin{enumerate}
    \item $fn(\alpha)\subseteq fn(P)$;
    \item $fn(P')\subseteq fn(P)\cup bn(\alpha)$;
  \end{enumerate}
  \item If $P\xrightarrow{\{\alpha_1,\cdots,\alpha_n\}}P'$ then
  \begin{enumerate}
    \item $fn(\alpha_1)\cup\cdots\cup fn(\alpha_n)\subseteq fn(P)$;
    \item $fn(P')\subseteq fn(P)\cup bn(\alpha_1)\cup\cdots\cup bn(\alpha_n)$.
  \end{enumerate}
\end{enumerate}
\end{proposition}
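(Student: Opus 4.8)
The plan is to prove both statements simultaneously by induction on the depth of the inference of the transition $P\xrightarrow{\alpha}P'$ (respectively $P\xrightarrow{\{\alpha_1,\cdots,\alpha_n\}}P'$), performing a case analysis on the last rule applied from Table \ref{TRForPITC5}. Since the step-labelled transitions produced by $\textbf{PAR}_3$, $\textbf{PAR}_4$, $\textbf{SUM}_2$, $\textbf{IDE}_2$, $\textbf{RES}_2$ and $\textbf{OPEN}_2$ are built out of single-action transitions (and out of one another), parts (1) and (2) cannot be separated: they must be carried along together in a single induction.

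First I would dispatch the axiom (base) cases $\textbf{TAU-ACT}$, $\textbf{OUTPUT-ACT}$ and $\textbf{INPUT-ACT}$. For $\textbf{TAU-ACT}$ we have $\alpha=\tau$ with $fn(\tau)=\emptyset$ and $bn(\tau)=\emptyset$, so both inclusions are immediate from $fn(\tau.P)=fn(P)$. For $\textbf{OUTPUT-ACT}$, the set $fn(\overline{x}y)=\{x,y\}$ is exactly what is adjoined to $fn(P)$ in the definition of $fn(\overline{x}y.P)$, while $bn(\overline{x}y)=\emptyset$ and $P'=P$. The case $\textbf{INPUT-ACT}$ is the first requiring care: the object $w$ of $x(w)$ is bound, so $fn(x(w))=\{x\}\subseteq fn(x(z).P)$, and for part (b) I would invoke a substitution fact that $fn(P\{w/z\})\subseteq(fn(P)\setminus\{z\})\cup\{w\}$, which together with $fn(x(z).P)=fn(P)\cup\{x\}-\{z\}$ and $bn(x(w))=\{w\}$ yields $fn(P\{w/z\})\subseteq fn(x(z).P)\cup bn(x(w))$. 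I would isolate this elementary substitution inclusion as a preliminary lemma, proved by a routine structural induction on $P$ using Definition \ref{subs5}.

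For the inductive cases I would feed the premises to the induction hypothesis and then close the two inclusions using the name-set definitions and the side conditions attached to each rule. The propagation rules $\textbf{SUM}_{1,2}$, $\textbf{IDE}_{1,2}$ and $\textbf{RES}_{1,2}$ are straightforward: for instance in $\textbf{RES}_1$ the hypothesis gives $fn(\alpha)\subseteq fn(P)$ and $fn(P')\subseteq fn(P)\cup bn(\alpha)$, and since $fn((y)P)=fn(P)-\{y\}$ with the side condition $y\notin n(\alpha)$ forcing $y\notin fn(\alpha)\cup bn(\alpha)$, both inclusions transfer to $(y)P\xrightarrow{\alpha}(y)P'$. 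The parallel rules $\textbf{PAR}_{1,2,3,4}$ combine the hypotheses for the two components via $fn(P\parallel Q)=fn(P)\cup fn(Q)$.

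I expect the main obstacle to lie precisely in the rules that mix binding and substitution, namely $\textbf{COM}$, $\textbf{CLOSE}$ and $\textbf{OPEN}_{1,2}$. In $\textbf{COM}$ the residual $Q'\{y/z\}$ forces another appeal to the substitution lemma, and one must check that the emitted name $y$, being free in the output label, already appears in $fn(P)$ so that no stray free name is created. In $\textbf{CLOSE}$ and $\textbf{OPEN}$ the extrusion of a bound name (the fresh $w$, with side conditions such as $w\notin fn((y)P')$) is what makes the bookkeeping delicate: I would verify that restricting over $w$ exactly absorbs the new free occurrence introduced by opening, so that the label's free names remain within $fn(P)$ and the residual's free names stay within $fn(P)\cup bn(\alpha)$. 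Handling these freshness side conditions correctly, rather than any deep argument, is the crux of the proof.
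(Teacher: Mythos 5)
Your proposal is correct and takes the same route as the paper, whose entire proof of this proposition is the single line ``By induction on the depth of inference.'' Your elaboration --- the case split over the rules of Table \ref{TRForPITC5}, the auxiliary substitution inclusion $fn(P\{w/z\})\subseteq(fn(P)\setminus\{z\})\cup\{w\}$, and the careful treatment of $\textbf{COM}$, $\textbf{CLOSE}$ and $\textbf{OPEN}$ --- is a faithful and considerably more detailed working-out of exactly that induction.
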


\begin{proof}
By induction on the depth of inference.
\end{proof}

\begin{proposition}
Suppose that $P\xrightarrow{\alpha(y)}P'$, where $\alpha=x$ or $\alpha=\overline{x}$, and $x\notin n(P)$, then there exists some $P''\equiv_{\alpha}P'\{z/y\}$, 
$P\xrightarrow{\alpha(z)}P''$.
\end{proposition}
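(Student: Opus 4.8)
The plan is to argue by induction on the depth of the inference of the transition $P\xrightarrow{\alpha(y)}P'$, where $\alpha(y)$ denotes a bound action (a bound input $x(y)$ or a bound output $\overline{x}(y)$) and the meaningful freshness hypothesis is that the new name is fresh, i.e. $z\notin n(P)$ (I read the stated side condition accordingly, since $x$ is the subject and necessarily occurs in $P$). Since the label has the single bound-action form $\alpha(y)$, the last rule of the derivation must be one of $\textbf{INPUT-ACT}$ or $\textbf{OPEN}_1$ (the base cases), or one of the structural rules $\textbf{PAR}_1$, $\textbf{PAR}_2$, $\textbf{SUM}_1$, $\textbf{RES}_1$, $\textbf{IDE}_1$; the rules $\textbf{COM}$ and $\textbf{CLOSE}$ conclude a $\tau$-label and the step rules conclude set-labels, so none of them can produce a transition of this shape.

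For the base case $\textbf{INPUT-ACT}$ I would take $P\equiv x(u).Q$ with $P'\equiv Q\{y/u\}$ and $y\notin fn((u)Q)$. Re-applying $\textbf{INPUT-ACT}$ with the fresh $z$ is legitimate because $fn((u)Q)\subseteq n(P)$ gives $z\notin fn((u)Q)$, yielding $P\xrightarrow{x(z)}Q\{z/u\}$, so I set $P''\equiv Q\{z/u\}$. It then remains to verify $Q\{z/u\}\equiv_{\alpha}Q\{y/u\}\{z/y\}$, which holds since $y\notin fn((u)Q)$ forces the only free occurrences of $y$ in $Q\{y/u\}$ to be those created from $u$. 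For $\textbf{OPEN}_1$ I would take $P\equiv(v)R$ with premise $R\xrightarrow{\overline{x}v}R'$, $v\neq x$, $P'\equiv R'\{y/v\}$ and $y\notin fn((v)R')$. Here the earlier Proposition on free names of transitions gives $fn(R')\subseteq fn(R)$ (a free output has no bound names), whence $z\notin n(P)$ forces $z\notin fn((v)R')$; thus $\textbf{OPEN}_1$ re-applies to give $P\xrightarrow{\overline{x}(z)}R'\{z/v\}$, and the same substitution identity, using $y\notin fn((v)R')$, yields $R'\{z/v\}\equiv_{\alpha}R'\{y/v\}\{z/y\}$. The rule $\textbf{OPEN}_2$, if one includes multi-labels, is identical up to carrying a set of subjects.

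For the inductive cases I would invoke the induction hypothesis on the premise and re-apply the same rule with label $\alpha(z)$. In $\textbf{PAR}_1$, with $P\equiv P_1\parallel P_2$ and $P_1\xrightarrow{\alpha(y)}P_1'$, the hypothesis $z\notin n(P_1)$ gives $P_1\xrightarrow{\alpha(z)}P_1''$ with $P_1''\equiv_{\alpha}P_1'\{z/y\}$; the side condition $z\notin fn(P_2)$ holds since $z\notin n(P_2)$, and the original side condition $y\notin fn(P_2)$ gives $P_2\{z/y\}=P_2$, so $P_1''\parallel P_2\equiv_{\alpha}(P_1'\parallel P_2)\{z/y\}$. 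The cases $\textbf{PAR}_2$ and $\textbf{SUM}_1$ are analogous (and $\textbf{SUM}_1$ is simpler). In $\textbf{RES}_1$, with $P\equiv(v)R$ and side condition $v\notin n(\alpha(y))$, I would note $z\neq v$ by freshness, so $\textbf{RES}_1$ re-applies and the substitution passes under the restriction since $z,y\neq v$. In $\textbf{IDE}_1$, with $A(\widetilde{x})\overset{\text{def}}{=}R$ and $P\equiv A(\widetilde{w})$, I would apply the hypothesis to $R\{\widetilde{w}/\widetilde{x}\}$, using $fn(R)\subseteq\{\widetilde{x}\}$ to relate its names to $n(A(\widetilde{w}))$, alpha-converting bound names of $R$ away from $z$ if necessary.

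The routine but delicate part is the substitution-and-alpha bookkeeping, concentrated in the single identity $R'\{y/v\}\{z/y\}\equiv_{\alpha}R'\{z/v\}$ (and its input analogue): this is exactly where the freshness of the originally chosen bound name $y$ is consumed, and it must be combined with the freshness of $z$ to keep every side condition of the re-applied rules valid. I expect the $\textbf{IDE}_1$ and $\textbf{RES}_1$ cases to demand the most care, since there one must ensure that replacing $y$ by $z$ neither captures nor is captured by the parameter or restricted names, which is handled by the freshness of $z$ together with a silent use of alpha-convertibility $\equiv_{\alpha}$.
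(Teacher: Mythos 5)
Your proposal is correct and follows exactly the approach the paper indicates, namely induction on the depth of the inference of $P\xrightarrow{\alpha(y)}P'$; the paper gives only that one-line indication, and you have supplied the case analysis and the substitution/alpha-conversion bookkeeping it leaves implicit (including the sensible reading of the side condition as freshness of $z$ rather than of the subject $x$).
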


\begin{proof}
By induction on the depth of inference.
\end{proof}

\begin{proposition}
If $P\rightarrow P'$, $bn(\alpha)\cap fn(P'\sigma)=\emptyset$, and $\sigma\lceil bn(\alpha)=id$, then there exists some $P''\equiv_{\alpha}P'\sigma$, 
$P\sigma\xrightarrow{\alpha\sigma}P''$.
\end{proposition}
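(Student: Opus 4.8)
The plan is to argue by induction on the depth of the inference of the transition, reading the hypothesis $P\rightarrow P'$ as a labelled transition $P\xrightarrow{\alpha}P'$ (the label $\alpha$ in the conclusion forces this reading), exactly as in the two preceding propositions. For each case I case-split on the last transition rule of Table \ref{TRForPITC5} that was applied, and show that $\sigma$ can be pushed through the derivation to produce a derivation of $P\sigma\xrightarrow{\alpha\sigma}P''$ with $P''\equiv_{\alpha}P'\sigma$. Throughout I use the inductive definition of $P\sigma$ from Definition \ref{subs5}, which commutes with each process constructor, so that e.g. $(P_1\parallel P_2)\sigma\equiv P_1\sigma\parallel P_2\sigma$ and $(P_1+P_2)\sigma\equiv P_1\sigma+P_2\sigma$.

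First I would dispatch the base cases. For \textbf{TAU-ACT} and \textbf{OUTPUT-ACT} there are no bound names, so $bn(\alpha)=\emptyset$ and the conclusion is immediate: $(\tau.P)\sigma\equiv\tau.(P\sigma)\xrightarrow{\tau}P\sigma$ and $(\overline{x}y.P)\sigma\equiv\overline{x\sigma}y\sigma.(P\sigma)\xrightarrow{\overline{x\sigma}y\sigma}P\sigma$, taking $P''\equiv P'\sigma$. The \textbf{INPUT-ACT} case already needs care: here $\alpha=x(w)$ with $w$ bound, and the hypotheses $\sigma\lceil bn(\alpha)=id$ and $bn(\alpha)\cap fn(P'\sigma)=\emptyset$ guarantee that $w$ is fixed by $\sigma$ and does not clash with the image of $\sigma$, so that $(x(z).P)\sigma\xrightarrow{x\sigma(w)}(P\{w/z\})\sigma$ is a valid instance of the rule, and we identify $P''$ up to $\equiv_{\alpha}$.

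Next I would treat the structural cases. For \textbf{SUM}$_{1,2}$ and \textbf{IDE}$_{1,2}$ the induction hypothesis applied to the premise delivers the transition, and reassembly is immediate (for \textbf{IDE} one also uses the defining equation $A(\widetilde{x})\overset{\text{def}}{=}P$ together with the substitution clause for process constants). For \textbf{PAR}$_{1,2,3}$ and \textbf{COM} I apply the induction hypothesis to each premise, checking that side conditions such as $bn(\alpha)\cap fn(Q)=\emptyset$ are preserved after substitution, possibly after an alpha-renaming of bound names justified by the preceding change-of-bound-name proposition, and then recombine using commutation of $\sigma$ with $\parallel$. The delicate cases are \textbf{RES}$_{1,2}$, \textbf{OPEN}$_{1,2}$, \textbf{CLOSE} and \textbf{PAR}$_4$, where bound names $(y)$ or extruded names $w$ interact with $\sigma$.

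The hard part will be precisely these binder cases. In \textbf{RES} one has $(y)P\xrightarrow{\alpha}(y)P'$ with $y\notin n(\alpha)$; to commute $\sigma$ with the restriction $(y)$ I must first alpha-convert $y$ to a name $y'$ fresh for $\sigma$ (outside both the domain and range of $\sigma$ and outside $fn(\alpha)$), invoke the induction hypothesis, and then read the result back up to $\equiv_{\alpha}$ — this is exactly why the conclusion only asserts $P''\equiv_{\alpha}P'\sigma$ rather than literal equality. The \textbf{OPEN}, \textbf{CLOSE} and \textbf{PAR}$_4$ cases combine this renaming with the extrusion of a bound name, and require the freshness hypothesis $bn(\alpha)\cap fn(P'\sigma)=\emptyset$ to ensure that the extruded name survives the substitution without capture. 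Managing these alpha-conversions uniformly, rather than ad hoc in each case, is the main technical obstacle; once the bound names are chosen fresh for $\sigma$, each equation reduces to the commutation laws of Definition \ref{subs5} and a single application of the induction hypothesis.
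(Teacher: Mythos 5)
Your proposal follows exactly the same route as the paper, whose entire proof reads ``By the definition of substitution (Definition \ref{subs5}) and induction on the depth of inference.'' You simply spell out the case analysis (including the binder cases requiring alpha-conversion) that the paper leaves implicit, so the approach matches and the elaboration is sound.
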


\begin{proof}
By the definition of substitution (Definition \ref{subs5}) and induction on the depth of inference.
\end{proof}

\begin{proposition}
\begin{enumerate}
  \item If $P\{w/z\}\xrightarrow{\alpha}P'$, where $w\notin fn(P)$ and $bn(\alpha)\cap fn(P,w)=\emptyset$, then there exist some $Q$ and $\beta$ with $Q\{w/z\}\equiv_{\alpha}P'$ and 
  $\beta\sigma=\alpha$, $P\xrightarrow{\beta}Q$;
  \item If $P\{w/z\}\xrightarrow{\{\alpha_1,\cdots,\alpha_n\}}P'$, where $w\notin fn(P)$ and $bn(\alpha_1)\cap\cdots\cap bn(\alpha_n)\cap fn(P,w)=\emptyset$, then there exist some $Q$ 
  and $\beta_1,\cdots,\beta_n$ with $Q\{w/z\}\equiv_{\alpha}P'$ and $\beta_1\sigma=\alpha_1,\cdots,\beta_n\sigma=\alpha_n$, $P\xrightarrow{\{\beta_1,\cdots,\beta_n\}}Q$.
\end{enumerate}

\end{proposition}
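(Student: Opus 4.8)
The plan is to prove both parts by induction on the depth of the inference of the transition from $P\{w/z\}$, following the same strategy used for the preceding propositions in this subsection. Throughout I write $\sigma=\{w/z\}$. The key structural observation is that substitution is a homomorphism on process syntax (Definition \ref{subs5}): $P\sigma$ has the same top-level operator as $P$, and substitution never creates a prefix, a sum, a composition, or a restriction that was not already present in $P$. Hence, for each transition rule in Table \ref{TRForPITC5}, the shape of the conclusion $P\sigma\xrightarrow{\alpha}P'$ forces $P$ itself to have a matching outermost form, and I can case-split on the last rule applied and reconstruct the witnesses $\beta$ and $Q$ accordingly.

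First I would treat the base cases, namely the action axioms $\textbf{TAU-ACT}$, $\textbf{OUTPUT-ACT}$, and $\textbf{INPUT-ACT}$. If $P\sigma\xrightarrow{\tau}P'$ via $\textbf{TAU-ACT}$, then $P\equiv\tau.Q$ with $Q\sigma\equiv P'$, and taking $\beta\equiv\tau$ gives $\beta\sigma=\tau=\alpha$ and $Q\sigma\equiv_\alpha P'$. The output case is analogous: $P\equiv\overline{x'}y'.Q$ with $\overline{x'\sigma}\,y'\sigma=\overline{x}y$, so $\beta\equiv\overline{x'}y'$ works. The input axiom is the first place where the freshness hypotheses enter: from $P\equiv x'(z').Q$ we recover $\beta\equiv x'(z')$, but we must use $w\notin fn(P)$ and $bn(\alpha)\cap fn(P,w)=\emptyset$ to choose the bound object so that the renaming in $\textbf{INPUT-ACT}$ commutes with $\sigma$ without capture, yielding $Q$ with $Q\sigma\equiv_\alpha P'$ up to $\alpha$-conversion.

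Next I would handle the inductive cases for the structural rules $\textbf{SUM}_{1,2}$, $\textbf{IDE}_{1,2}$, $\textbf{PAR}_{1,2,3,4}$, $\textbf{COM}$, $\textbf{CLOSE}$, $\textbf{RES}_{1,2}$, and $\textbf{OPEN}_{1,2}$. In each, $P$ has the corresponding outermost operator; I apply the induction hypothesis to the premise transition(s) of the subterms of $P$, obtain pulled-back labels and residuals, and then reassemble them with the same rule to produce $\beta$ and $Q$, checking that the side conditions of the rule survive the pullback. The main obstacle will be precisely the binding rules—$\textbf{PAR}_4$, $\textbf{COM}$, $\textbf{CLOSE}$, $\textbf{RES}$, and $\textbf{OPEN}$—where bound names interact with $\sigma$ and with the distinguished name $w$. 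Here one must invoke the earlier propositions on $\alpha$-convertibility of transitions (the two preceding statements) and the hypotheses $w\notin fn(P)$ and $bn(\alpha)\cap fn(P,w)=\emptyset$ to rename bound names apart from $w$ and from the support of $\sigma$, so that scope extrusion (in $\textbf{CLOSE}$/$\textbf{OPEN}$) and the disjointness requirements are preserved and no capture occurs; this is exactly why the conclusion is stated only up to $\equiv_\alpha$.

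Finally, part (2) is a routine generalization of part (1): the induction is the same, but the concurrent rules $\textbf{PAR}_3$, $\textbf{PAR}_4$, $\textbf{SUM}_2$, $\textbf{IDE}_2$, $\textbf{RES}_2$, and $\textbf{OPEN}_2$ now carry a tuple of labels $\{\alpha_1,\cdots,\alpha_n\}$, and one reconstructs a matching tuple $\{\beta_1,\cdots,\beta_n\}$ componentwise with $\beta_i\sigma=\alpha_i$, using the global freshness hypothesis $bn(\alpha_1)\cap\cdots\cap bn(\alpha_n)\cap fn(P,w)=\emptyset$ to keep all the bound objects disjoint from $w$ simultaneously. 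No genuinely new difficulty arises beyond the bookkeeping of several bound names at once, so I would present part (2) briefly and defer the repeated capture-avoidance arguments to part (1).
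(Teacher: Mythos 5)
Your proposal is correct and follows essentially the same route as the paper, which proves this proposition simply ``by the definition of substitution (Definition \ref{subs5}) and induction on the depth of inference''; your case analysis on the last rule applied, using the homomorphism property of substitution and $\alpha$-conversion to handle the binding rules, is just a fleshed-out version of that one-line argument. No discrepancy to report.
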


\begin{proof}
By the definition of substitution (Definition \ref{subs5}) and induction on the depth of inference.
\end{proof}

\subsection{Strongly Probabilistic Truly Concurrent Bisimilarities}\label{stcb5}

\subsubsection{Basic Definitions}\label{STCC5}

Firstly, in this subsection, we introduce concepts of (strongly) probabilistic truly concurrent bisimilarities, including probabilistic pomset bisimilarity, probabilistic step 
bisimilarity, probabilistic history-preserving (hp-)bisimilarity and probabilistic hereditary history-preserving (hhp-)bisimilarity. In contrast to traditional probabilistic truly 
concurrent bisimilarities in section \ref{bg}, these versions in $\pi_{ptc}$ must take care of actions with bound objects. Note that, these probabilistic truly concurrent bisimilarities 
are defined as late bisimilarities, but not early bisimilarities, as defined in $\pi$-calculus \cite{PI1} \cite{PI2}. Note that, here, a PES $\mathcal{E}$ is deemed as a process.

\begin{definition}[Strongly probabilistic pomset, step bisimilarity 2]\label{PSB5}
Let $\mathcal{E}_1$, $\mathcal{E}_2$ be PESs. A strongly probabilistic pomset bisimulation is a relation $R\subseteq\mathcal{C}(\mathcal{E}_1)\times\mathcal{C}(\mathcal{E}_2)$, such that 
(1) if $(C_1,C_2)\in R$, and $C_1\xrightarrow{X_1}C_1'$ (with $\mathcal{E}_1\xrightarrow{X_1}\mathcal{E}_1'$) then $C_2\xrightarrow{X_2}C_2'$ (with 
$\mathcal{E}_2\xrightarrow{X_2}\mathcal{E}_2'$), with $X_1\subseteq \mathbb{E}_1$, $X_2\subseteq \mathbb{E}_2$, $X_1\sim X_2$ and $(C_1',C_2')\in R$:
\begin{enumerate}
  \item for each fresh action $\alpha\in X_1$, if $C_1''\xrightarrow{\alpha}C_1'''$ (with $\mathcal{E}_1''\xrightarrow{\alpha}\mathcal{E}_1'''$), then for some $C_2''$ and $C_2'''$, $C_2''\xrightarrow{\alpha}C_2'''$ (with $\mathcal{E}_2''\xrightarrow{\alpha}\mathcal{E}_2'''$), such that if $(C_1'',C_2'')\in R$ then $(C_1''',C_2''')\in R$;
  \item for each $x(y)\in X_1$ with ($y\notin n(\mathcal{E}_1, \mathcal{E}_2)$), if $C_1''\xrightarrow{x(y)}C_1'''$ (with $\mathcal{E}_1''\xrightarrow{x(y)}\mathcal{E}_1'''\{w/y\}$) for all $w$, then for some $C_2''$ and $C_2'''$, $C_2''\xrightarrow{x(y)}C_2'''$ (with $\mathcal{E}_2''\xrightarrow{x(y)}\mathcal{E}_2'''\{w/y\}$) for all $w$, such that if $(C_1'',C_2'')\in R$ then $(C_1''',C_2''')\in R$;
  \item for each two $x_1(y),x_2(y)\in X_1$ with ($y\notin n(\mathcal{E}_1, \mathcal{E}_2)$), if $C_1''\xrightarrow{\{x_1(y),x_2(y)\}}C_1'''$ (with $\mathcal{E}_1''\xrightarrow{\{x_1(y),x_2(y)\}}\mathcal{E}_1'''\{w/y\}$) for all $w$, then for some $C_2''$ and $C_2'''$, $C_2''\xrightarrow{\{x_1(y),x_2(y)\}}C_2'''$ (with $\mathcal{E}_2''\xrightarrow{\{x_1(y),x_2(y)\}}\mathcal{E}_2'''\{w/y\}$) for all $w$, such that if $(C_1'',C_2'')\in R$ then $(C_1''',C_2''')\in R$;
  \item for each $\overline{x}(y)\in X_1$ with $y\notin n(\mathcal{E}_1, \mathcal{E}_2)$, if $C_1''\xrightarrow{\overline{x}(y)}C_1'''$ (with $\mathcal{E}_1''\xrightarrow{\overline{x}(y)}\mathcal{E}_1'''$), then for some $C_2''$ and $C_2'''$, $C_2''\xrightarrow{\overline{x}(y)}C_2'''$ (with $\mathcal{E}_2''\xrightarrow{\overline{x}(y)}\mathcal{E}_2'''$), such that if $(C_1'',C_2'')\in R$ then $(C_1''',C_2''')\in R$.
\end{enumerate}
 and vice-versa; (2) if $(C_1,C_2)\in R$, and $C_1\xrsquigarrow{\pi}C_1^{\pi}$ then $C_2\xrsquigarrow{\pi}C_2^{\pi}$ and $(C_1^{\pi},C_2^{\pi})\in R$, and vice-versa; (3) if $(C_1,C_2)\in R$,
then $\mu(C_1,C)=\mu(C_2,C)$ for each $C\in\mathcal{C}(\mathcal{E})/R$; (4) $[\surd]_R=\{\surd\}$.

We say that $\mathcal{E}_1$, $\mathcal{E}_2$ are strongly probabilistic pomset bisimilar, written $\mathcal{E}_1\sim_{pp}\mathcal{E}_2$, if there exists a strongly probabilistic pomset 
bisimulation $R$, such that $(\emptyset,\emptyset)\in R$. By replacing probabilistic pomset transitions with steps, we can get the definition of strongly probabilistic step bisimulation. 
When PESs $\mathcal{E}_1$ and $\mathcal{E}_2$ are strongly probabilistic step bisimilar, we write $\mathcal{E}_1\sim_{ps}\mathcal{E}_2$.

\end{definition}

\begin{definition}[Strongly probabilistic (hereditary) history-preserving bisimilarity 2]\label{HHPB5}
A strongly probabilistic history-preserving (hp-) bisimulation is a posetal relation $R\subseteq\mathcal{C}(\mathcal{E}_1)\overline{\times}\mathcal{C}(\mathcal{E}_2)$ such that (1) if 
$(C_1,f,C_2)\in R$, and
\begin{enumerate}
  \item for $e_1=\alpha$ a fresh action, if $C_1\xrightarrow{\alpha}C_1'$ (with $\mathcal{E}_1\xrightarrow{\alpha}\mathcal{E}_1'$), then for some $C_2'$ and $e_2=\alpha$, $C_2\xrightarrow{\alpha}C_2'$ (with $\mathcal{E}_2\xrightarrow{\alpha}\mathcal{E}_2'$), such that $(C_1',f[e_1\mapsto e_2],C_2')\in R$;
  \item for $e_1=x(y)$ with ($y\notin n(\mathcal{E}_1, \mathcal{E}_2)$), if $C_1\xrightarrow{x(y)}C_1'$ (with $\mathcal{E}_1\xrightarrow{x(y)}\mathcal{E}_1'\{w/y\}$) for all $w$, then for some $C_2'$ and $e_2=x(y)$, $C_2\xrightarrow{x(y)}C_2'$ (with $\mathcal{E}_2\xrightarrow{x(y)}\mathcal{E}_2'\{w/y\}$) for all $w$, such that $(C_1',f[e_1\mapsto e_2],C_2')\in R$;
  \item for $e_1=\overline{x}(y)$ with $y\notin n(\mathcal{E}_1, \mathcal{E}_2)$, if $C_1\xrightarrow{\overline{x}(y)}C_1'$ (with $\mathcal{E}_1\xrightarrow{\overline{x}(y)}\mathcal{E}_1'$), then for some $C_2'$ and $e_2=\overline{x}(y)$, $C_2\xrightarrow{\overline{x}(y)}C_2'$ (with $\mathcal{E}_2\xrightarrow{\overline{x}(y)}\mathcal{E}_2'$), such that $(C_1',f[e_1\mapsto e_2],C_2')\in R$.
\end{enumerate}

and vice-versa; (2) if $(C_1,f,C_2)\in R$, and $C_1\xrsquigarrow{\pi}C_1^{\pi}$ then $C_2\xrsquigarrow{\pi}C_2^{\pi}$ and $(C_1^{\pi},f,C_2^{\pi})\in R$, and vice-versa; (3) if 
$(C_1,f,C_2)\in R$, then $\mu(C_1,C)=\mu(C_2,C)$ for each $C\in\mathcal{C}(\mathcal{E})/R$; (4) $[\surd]_R=\{\surd\}$. $\mathcal{E}_1,\mathcal{E}_2$ are strongly probabilistic 
history-preserving (hp-)bisimilar and are written $\mathcal{E}_1\sim_{php}\mathcal{E}_2$ if there exists a strongly probabilistic hp-bisimulation $R$ such that 
$(\emptyset,\emptyset,\emptyset)\in R$.

A strongly probabilistic hereditary history-preserving (hhp-)bisimulation is a downward closed strongly probabilistic hp-bisimulation. $\mathcal{E}_1,\mathcal{E}_2$ are strongly probabilistic 
hereditary history-preserving (hhp-)bisimilar and are written $\mathcal{E}_1\sim_{phhp}\mathcal{E}_2$.
\end{definition}

\begin{theorem}
$\equiv_{\alpha}$ are strongly probabilistic truly concurrent bisimulations. That is, if $P\equiv_{\alpha}Q$, then,
\begin{enumerate}
  \item $P\sim_{pp} Q$;
  \item $P\sim_{ps} Q$;
  \item $P\sim_{php} Q$;
  \item $P\sim_{phhp} Q$.
\end{enumerate}
\end{theorem}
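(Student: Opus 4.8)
The plan is to prove all four statements by exhibiting a single, explicit witnessing relation and checking that it satisfies the clauses of Definitions \ref{PSB5} and \ref{HHPB5}. For the pomset and step cases I would set
$$R = \{(C_1,C_2) : C_1\in\mathcal{C}(\mathcal{E}_1),\ C_2\in\mathcal{C}(\mathcal{E}_2),\ \text{the residual processes reached along } C_1,C_2 \text{ are } \equiv_{\alpha}\text{-related}\},$$
seeded with $(\emptyset,\emptyset)$, where $\mathcal{E}_1,\mathcal{E}_2$ are the PESs associated with $P$ and $Q$. For hp- and hhp-bisimilarity I would take the posetal analogue $R\subseteq\mathcal{C}(\mathcal{E}_1)\overline{\times}\mathcal{C}(\mathcal{E}_2)$, where the order-isomorphism $f$ is the bijection on events induced by the bound-name renaming that realizes $P\equiv_{\alpha}Q$; since $\equiv_{\alpha}$ changes only the choice of bound names, it induces an isomorphism of the underlying event structures, so $f$ is well defined and order-preserving. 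Because $(\emptyset,\emptyset,\emptyset)\in R$, it suffices to show $R$ is a bisimulation of the relevant kind.

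The heart of the argument is a transition-transfer lemma: if $P\equiv_{\alpha}Q$ then (i) $P\rightsquigarrow P'$ implies $Q\rightsquigarrow Q'$ for some $Q'\equiv_{\alpha}P'$, and (ii) $P\xrightarrow{\alpha}P'$ implies $Q\xrightarrow{\beta}Q'$ where $\beta$ agrees with $\alpha$ up to renaming of bound objects and $P'\equiv_{\alpha}Q'$ (and symmetrically). I would prove this by induction on the depth of inference of the transition, using the transition rules in Tables \ref{PTRForPITC5} and \ref{TRForPITC5} together with the substitution propositions established in the \emph{Properties of Transitions} subsection — in particular the proposition allowing a change of bound object $P\xrightarrow{\alpha(y)}P'$ to $P\xrightarrow{\alpha(z)}P''$ with $P''\equiv_{\alpha}P'\{z/y\}$, and the proposition relating transitions under a substitution $\sigma$. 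With this lemma in hand, each of the four clauses of Definition \ref{PSB5} follows mechanically: the fresh-action clause is immediate from (ii); the input clause $x(y)$ and the double-input clause $\{x_1(y),x_2(y)\}$ use the bound-object change-of-variable proposition to satisfy the ``for all $w$'' requirement; and the bound-output clause $\overline{x}(y)$ is handled by $\textbf{OPEN}_1$ together with the freshness side conditions. The probabilistic clause (2) follows from part (i) of the lemma, and the $\mu$-clause (3) follows because the PDFs are defined structurally and $\equiv_{\alpha}$ preserves term structure, hence $\mu(C_1,C)=\mu(C_2,C)$ for every $R$-class $C$; clause (4) holds since $\surd\equiv_{\alpha}\surd$ only.

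For hp-bisimilarity I would additionally verify, in clause (1) of Definition \ref{HHPB5}, that the matched pair $(e_1,e_2)$ extends $f$ to $f[e_1\mapsto e_2]$ consistently with the renaming isomorphism, which is exactly the content of the transfer lemma for single events; hhp-bisimilarity then follows because $R$ as defined is already downward closed (an alpha-renaming restricted to a sub-configuration is again an alpha-renaming), so no extra work beyond invoking the hp-case is needed. The main obstacle will be the bound-name bookkeeping: ensuring that the freshness side conditions in $\textbf{PAR}$, $\textbf{COM}$, $\textbf{CLOSE}$ and $\textbf{OPEN}$ are respected when transferring a transition across $\equiv_{\alpha}$, and in particular checking that the universally quantified $w$ in the input clauses can be chosen uniformly on both sides; I expect this to require careful application of the change-of-bound-variable proposition rather than any new idea. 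A secondary technical point is confirming that the structurally defined $\mu$ from Tables \ref{PDFBAPTC}--\ref{PDFEncap} is genuinely invariant under $\equiv_{\alpha}$, which reduces to the observation that alpha-conversion commutes with the operators appearing in those PDF equations.
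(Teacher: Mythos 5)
Your proposal follows essentially the same route as the paper: the paper's proof likewise establishes, by induction on the depth of inference over the transition rules, that free-action transitions transfer exactly and bound-object transitions transfer up to a change of bound name, and then concludes directly from Definitions \ref{PSB5} and \ref{HHPB5}. Your version merely makes explicit the witnessing relation, the $\mu$-clause, and the downward-closure check that the paper leaves implicit, so it is a correct (and somewhat more detailed) rendering of the same argument.
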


\begin{proof}
By induction on the depth of inference (see Table \ref{TRForPITC5}), we can get the following facts:

\begin{enumerate}
  \item If $\alpha$ is a free action and $P\rightsquigarrow\xrightarrow{\alpha}P'$, then equally for some $Q'$ with $P'\equiv_{\alpha}Q'$, $Q\rightsquigarrow\xrightarrow{\alpha}Q'$;
  \item If $P\rightsquigarrow\xrightarrow{a(y)}P'$ with $a=x$ or $a=\overline{x}$ and $z\notin n(Q)$, then equally for some $Q'$ with $P'\{z/y\}\equiv_{\alpha}Q'$, 
  $Q\rightsquigarrow\xrightarrow{a(z)}Q'$.
\end{enumerate}

Then, we can get:

\begin{enumerate}
  \item by the definition of strongly probabilistic pomset bisimilarity (Definition \ref{PSB5}), $P\sim_{pp} Q$;
  \item by the definition of strongly probabilistic step bisimilarity (Definition \ref{PSB5}), $P\sim_{ps} Q$;
  \item by the definition of strongly probabilistic hp-bisimilarity (Definition \ref{HHPB5}), $P\sim_{php} Q$;
  \item by the definition of strongly probabilistic hhp-bisimilarity (Definition \ref{HHPB5}), $P\sim_{phhp} Q$.
\end{enumerate}
\end{proof}

\subsubsection{Laws and Congruence}

Similarly to CPTC, we can obtain the following laws with respect to probabilistic truly concurrent bisimilarities.

\begin{theorem}[Summation laws for strongly probabilistic pomset bisimilarity]
The summation laws for strongly probabilistic pomset bisimilarity are as follows.

\begin{enumerate}
  \item $P+\textbf{nil}\sim_{pp} P$;
  \item $P+P\sim_{pp} P$;
  \item $P_1+P_2\sim_{pp} P_2+P_1$;
  \item $P_1+(P_2+P_3)\sim_{pp} (P_1+P_2)+P_3$.
\end{enumerate}
\end{theorem}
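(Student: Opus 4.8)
The plan is to prove each of the four laws by exhibiting an explicit strongly probabilistic pomset bisimulation $R$ witnessing the stated equivalence, and then checking the defining clauses of Definition \ref{PSB5}: action-transition matching together with its bound-object refinements, probabilistic-transition matching, equality of cumulative probabilities, and the termination clause $[\surd]_R=\{\surd\}$. Because summation introduces no name binding, the only rules that can fire at the top level of a sum are $\textbf{PSUM}$ on the probabilistic level and $\textbf{SUM}_1,\textbf{SUM}_2$ on the action level; hence every move of a sum is inherited verbatim from one of its summands, which is what makes the verification essentially mechanical.

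For commutativity (law 3) I would take $R=\{(Q_1+Q_2,\,Q_2+Q_1):Q_1,Q_2\in\mathcal{P}\}\cup Id_{\mathcal{P}}$. A probabilistic step $Q_1+Q_2\rightsquigarrow Q_1'+Q_2'$ forces $Q_1\rightsquigarrow Q_1'$ and $Q_2\rightsquigarrow Q_2'$ by $\textbf{PSUM}$, and is matched by $Q_2+Q_1\rightsquigarrow Q_2'+Q_1'$, whose target is again a commuted pair in $R$. An action step $Q_1'+Q_2'\xrightarrow{X}C'$ arises by $\textbf{SUM}$ from a single summand and is matched by the same rule applied to the corresponding summand on the other side, reaching the identical residual $C'$, so $(C',C')\in Id_{\mathcal{P}}\subseteq R$ and the bound-object subclauses 1--4 hold trivially on the diagonal. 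For associativity (law 4) the same recipe works with $R=\{((Q_1+Q_2)+Q_3,\,Q_1+(Q_2+Q_3))\}\cup Id_{\mathcal{P}}$ closed under probabilistic residuals; for idempotence (law 2) with $R=\{(Q+Q,\,Q)\}\cup Id_{\mathcal{P}}$, noting any move of $Q+Q$ is a move of $Q$ and conversely; and for the unit law (law 1) with $R=\{(Q+\textbf{nil},\,Q)\}\cup Id_{\mathcal{P}}$, using that $\textbf{nil}$ contributes no action transitions.

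The cumulative-probability clause is discharged uniformly from the multiplicativity of the PDF on $+$: since $\mu(x_1+x_2,x_1'+x_2')=\mu(x_1,x_1')\cdot\mu(x_2,x_2')$ and multiplication is commutative and associative, the weights assigned by the two sides to each $R$-equivalence class of probabilistic residuals coincide, which is exactly what clause (3) of Definition \ref{PSB5} requires; clause (4) holds because neither side is the terminated state $\surd$.

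The main obstacle I expect is law 1: the rule $\textbf{PSUM}$ demands a probabilistic move of \emph{both} operands, yet $\textbf{nil}$ has no probabilistic transition in Table \ref{PTRForPITC5}, so one must first settle how $Q+\textbf{nil}$ resolves its probabilistic choice before the transition-matching clauses can even be stated. The cleanest route is to read $\textbf{nil}$ as probabilistically stable (equivalently, to let the identity act as its probabilistic residual) so that a move $Q+\textbf{nil}\rightsquigarrow Q'+\textbf{nil}$ is driven solely by $Q\rightsquigarrow Q'$ and matched by $Q\rightsquigarrow Q'$, after which the action level is immediate. Once this convention is fixed, laws 2--4 present no genuine difficulty, since after the probabilistic step all four cases collapse onto the identity relation on residuals, making the pomset subclauses for fresh actions, bound inputs, and bound outputs satisfied reflexively.
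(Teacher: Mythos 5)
Your proposal is correct and takes essentially the same route as the paper, which simply states that the laws follow ``according to the definition of strongly probabilistic pomset bisimulation'' and omits the details you have supplied: exhibiting the obvious candidate relations and checking the transition-matching, probabilistic, $\mu$-equality and $[\surd]_R$ clauses. Your observation that $\textbf{nil}$ has no probabilistic transition in Table \ref{PTRForPITC5}, so that $\textbf{PSUM}$ cannot fire on $P+\textbf{nil}$ without treating $\textbf{nil}$ as probabilistically stable, points to a genuine gap in the paper's rule set rather than in your argument, and the convention you adopt is the standard and intended reading.
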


\begin{proof}
According to the definition of strongly probabilistic pomset bisimulation, we can easily prove the above equations, and we omit the proof.
\end{proof}

\begin{theorem}[Summation laws for strongly probabilistic step bisimilarity]
The summation laws for strongly probabilistic step bisimilarity are as follows.

\begin{enumerate}
  \item $P+\textbf{nil}\sim_{ps} P$;
  \item $P+P\sim_{ps} P$;
  \item $P_1+P_2\sim_{ps} P_2+P_1$;
  \item $P_1+(P_2+P_3)\sim_{ps} (P_1+P_2)+P_3$.
\end{enumerate}
\end{theorem}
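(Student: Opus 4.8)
The final theorem asserts the summation laws for strongly probabilistic step bisimilarity: $P+\textbf{nil}\sim_{ps} P$, $P+P\sim_{ps} P$, $P_1+P_2\sim_{ps} P_2+P_1$, and $P_1+(P_2+P_3)\sim_{ps} (P_1+P_2)+P_3$.

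The plan is to exhibit, for each of the four equations, an explicit relation $R$ on configurations and verify directly that it is a strongly probabilistic step bisimulation in the sense of Definition \ref{PSB5}. Since the operational behaviour of $+$ is governed entirely by the probabilistic rule $\textbf{PSUM}$ and the action rules $\textbf{SUM}_1,\textbf{SUM}_2$ (Tables \ref{PTRForPITC5} and \ref{TRForPITC5}), I would first read off exactly which probabilistic and step transitions each side admits. First I would treat commutativity and associativity, which are the cleanest: I take $R$ to be the identity-augmented relation pairing a term built over $P_1+P_2$ with the corresponding term over $P_2+P_1$ (respectively the two bracketings), together with the diagonal on all reachable residuals. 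For each such pair I check the four clauses of Definition \ref{PSB5}: matching of step transitions $C_1\xrightarrow{X}C_1'$ with attention to fresh actions, bound-input actions $x(y)$, paired bound inputs, and bound outputs $\overline{x}(y)$ (clause (1)); matching of probabilistic transitions $C_1\xrsquigarrow{\pi}C_1^{\pi}$ via $\textbf{PSUM}$ (clause (2)); equality of cumulative probabilities $\mu(C_1,C)=\mu(C_2,C)$ on $R$-equivalence classes (clause (3)); and $[\surd]_R=\{\surd\}$ (clause (4)). Because $\textbf{SUM}$ simply forwards a transition of either summand, the matching in clause (1) is symmetric and immediate, and residuals coincide so the diagonal absorbs them.

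Next I would handle $P+P\sim_{ps}P$ and $P+\textbf{nil}\sim_{ps}P$, relating a configuration over $P+P$ (resp.\ $P+\textbf{nil}$) to the same configuration over $P$. The key observation is that every step or probabilistic move of $P+P$ arises from a move of one copy of $P$ and lands in exactly the same residual that $P$ itself reaches, so the bisimulation clause (1) and (2) are discharged by pointing to the identical move on the $P$ side and vice versa; for $\textbf{nil}$ one notes that $\textbf{nil}$ contributes no transitions, so $P+\textbf{nil}$ has precisely the moves of $P$.

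The step expected to be the main obstacle is clause (3), the probabilistic-measure condition $\mu(C_1,C)=\mu(C_2,C)$ for $P+P$ and for the idempotence/unit cases, since here the two sides are genuinely different syntactic terms and the PDF is defined multiplicatively on $+$ (in the $BAPTC$-style definitions $\mu(x+y,x'+y')=\mu(x,x')\cdot\mu(y,y')$). The care needed is to confirm that summing the probability mass over each full $R$-equivalence class, rather than comparing individual target states, yields equal cumulative values $\mu^*$ on both sides; this is exactly the reason the definition quantifies over $C\in\mathcal{C}(\mathcal{E})/R$ and over the cumulative distribution rather than pointwise. I would verify this by checking that $R$-related targets collect the same total mass, using that the probabilistic rule for $+$ acts componentwise. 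Once clause (3) is settled the remaining clauses are routine, and the step bisimilarity then follows since $(\emptyset,\emptyset)\in R$ in each case. As the analogous pomset statement was already asserted in the preceding theorem, I would in practice remark that the step version follows by the identical argument with steps in place of pomsets, mirroring the paper's own style of stating that such proofs are routine and omitted.
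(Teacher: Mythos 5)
Your proposal is correct and takes essentially the same approach as the paper, which simply states that the laws follow directly from the definition of strongly probabilistic step bisimulation and omits the details. Your explicit construction of the candidate relations and your attention to the probabilistic-measure clause (3) for the idempotence and unit cases fill in exactly the routine verification the paper leaves to the reader.
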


\begin{proof}
According to the definition of strongly probabilistic step bisimulation, we can easily prove the above equations, and we omit the proof.
\end{proof}

\begin{theorem}[Summation laws for strongly probabilistic hp-bisimilarity]
The summation laws for strongly probabilistic hp-bisimilarity are as follows.

\begin{enumerate}
  \item $P+\textbf{nil}\sim_{php} P$;
  \item $P+P\sim_{php} P$;
  \item $P_1+P_2\sim_{php} P_2+P_1$;
  \item $P_1+(P_2+P_3)\sim_{php} (P_1+P_2)+P_3$.
\end{enumerate}
\end{theorem}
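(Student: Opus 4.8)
The plan is to prove each of the four laws by exhibiting an explicit strongly probabilistic hp-bisimulation, that is, a posetal relation $R\subseteq\mathcal{C}(\mathcal{E}_1)\overline{\times}\mathcal{C}(\mathcal{E}_2)$ containing $(\emptyset,\emptyset,\emptyset)$ and satisfying the four clauses of Definition \ref{HHPB5}. In every case the guiding observation is that the summation context survives only across the \emph{initial} probabilistic transition: once an action transition is taken through $\textbf{SUM}_1$ or $\textbf{SUM}_2$ in Table \ref{TRForPITC5}, the $+$ operator is discarded and both related processes continue as the very same residual. Hence each $R$ will be built from a few ``top'' triples recording the summation structure, glued to the identity posetal relation $Id$ on the common residuals.

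Concretely, for commutativity (3) I would let $R$ relate the initial triples of $P_1+P_2$ and $P_2+P_1$, relate the post-probabilistic forms $P_1'+P_2'$ and $P_2'+P_1'$ produced simultaneously by $\textbf{PSUM}$ in Table \ref{PTRForPITC5}, and use $Id$ thereafter; for the unit law (1) I relate $P+\textbf{nil}$ with $P$, noting that $\textbf{nil}$ has no outgoing probabilistic or action move, so $P+\textbf{nil}\rightsquigarrow P'+\textbf{nil}$ is answered by $P\rightsquigarrow P'$; for associativity (4) I relate the three re-bracketings in the same manner; and for idempotency (2) I relate $P+P$ with $P$.

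The verification then proceeds clause by clause. For the action clause (clause 1) I would check the three sub-cases of Definition \ref{HHPB5} — a fresh action $\alpha$, a bound input $x(y)$, and a bound output $\overline{x}(y)$ — exactly as for the corresponding non-probabilistic summation laws: a move of $P_1'+P_2'$ arises from one summand and is matched by the same summand on the commuted or re-bracketed side, after which $f[e_1\mapsto e_2]$ is the relevant order-isomorphism and the resulting pair lies in $Id$. The probabilistic clause (clause 2) is discharged directly by $\textbf{PSUM}$, since every $\rightsquigarrow$-move of one side is mirrored by the structurally identical move of the other, and clause 4, $[\surd]_R=\{\surd\}$, is immediate because $R$ adjoins nothing to the $\surd$-class.

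The main obstacle is clause 3, the cumulative-measure condition $\mu(C_1,C)=\mu(C_2,C)$ for every $R$-class $C$, because the PDF multiplies over $+$: $P+P$ can reach a product form $P'+P''$ with two \emph{independently} resolved copies, carrying weight $\mu(P,P')\cdot\mu(P,P'')$, whereas $P$ reaches a single derivative $P'$ with weight $\mu(P,P')$. For commutativity and associativity this is harmless, as the products are merely reordered or reassociated, and for the unit law the factor contributed by $\textbf{nil}$ is $1$ and disappears; the genuinely delicate case is idempotency. There the equality cannot hold termwise and must be read at the level of $R$-\emph{equivalence classes}: I would choose the classes so that all product forms $P'+P''$ sharing a common action capability are lumped with the corresponding single derivative, and then show the two sides assign equal total $\rightsquigarrow$-mass to each such class — which is consistent with the denoted prime event structure identifying the two copies of $P$. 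This lumping argument, rather than any action-transition reasoning, is where the real content of the proof concentrates, and it is the step I would treat most carefully.
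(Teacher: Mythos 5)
Your overall strategy for laws (1), (3) and (4) --- building the posetal relation from a few top-level triples recording the summation structure, glued to the identity relation on the common residuals, and then checking the four clauses of Definition \ref{HHPB5} --- is the standard argument and is evidently what the paper intends (its own proof is the single sentence that the equations ``can easily be proved'' from the definition, so there is no written argument to compare against). For those three laws your key observation is correct: $\textbf{PSUM}$ resolves both summands simultaneously, so the weights $\mu(P_1,P_1')\cdot\mu(P_2,P_2')$ are merely permuted or reassociated between the two sides, and the action clause reduces to the usual non-probabilistic summand-matching.

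The gap is in law (2), and it sits exactly where you locate the difficulty --- but your proposed repair does not work. Take $P=a.\textbf{nil}\boxplus_{1/2}b.\textbf{nil}$ with $a\neq b$. By $\textbf{PSUM}$ the two summands of $P+P$ are resolved \emph{independently}, so $P+P\rightsquigarrow \breve{a}.\textbf{nil}+\breve{b}.\textbf{nil}$ with positive probability, and this residual can perform both $a$ and $b$, whereas every probabilistic derivative of $P$ (namely $\breve{a}.\textbf{nil}$ or $\breve{b}.\textbf{nil}$) can perform exactly one of them. The failure is therefore not located in the measure clause (3) of Definition \ref{HHPB5}, which your lumping targets, but in the action clause (1): any triple pairing $\breve{a}.\textbf{nil}+\breve{b}.\textbf{nil}$ with a single derivative of $P$ violates clause (1) outright, and no choice of $R$-equivalence classes can repair a clause-(1) failure. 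So $P+P\not\sim_{php}P$ (indeed not even $\sim_{ps}$) under the operational semantics as given; the law is tenable only for processes without probabilistic branching, which is precisely why the paper's own axiomatization in chapter \ref{apptc4} weakens $A3$ to $e+e=e$ and guards $P1$ and $P4$ with the side conditions $(x+x=x,\ y+y=y)$. A correct treatment must either restrict law (2) accordingly or change the semantics of $+$ so that the two copies are resolved by one shared probabilistic experiment. (A smaller point on law (1): as the rules stand $\textbf{nil}$ has no $\rightsquigarrow$-move, so $\textbf{PSUM}$ never fires for $P+\textbf{nil}$; your argument tacitly assumes $\textbf{nil}\rightsquigarrow\textbf{nil}$, which the paper should, but does not, provide.)
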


\begin{proof}
According to the definition of strongly probabilistic hp-bisimulation, we can easily prove the above equations, and we omit the proof.
\end{proof}

\begin{theorem}[Summation laws for strongly probabilistic hhp-bisimilarity]
The summation laws for strongly probabilistic hhp-bisimilarity are as follows.

\begin{enumerate}
  \item $P+\textbf{nil}\sim_{phhp} P$;
  \item $P+P\sim_{phhp} P$;
  \item $P_1+P_2\sim_{phhp} P_2+P_1$;
  \item $P_1+(P_2+P_3)\sim_{phhp} (P_1+P_2)+P_3$.
\end{enumerate}
\end{theorem}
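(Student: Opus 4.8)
The plan is to prove each of the four laws by exhibiting an explicit posetal relation and checking that it is a strongly probabilistic hhp-bisimulation in the sense of Definition \ref{HHPB5}. Since a strongly probabilistic hhp-bisimulation is by definition a downward closed strongly probabilistic hp-bisimulation, I would first construct, for each equation, a witnessing strongly probabilistic hp-bisimulation $R\subseteq\mathcal{C}(\mathcal{E}_1)\overline{\times}\mathcal{C}(\mathcal{E}_2)$ containing $(\emptyset,\emptyset,\emptyset)$, and then observe that the chosen $R$ is downward closed, which upgrades the conclusion from $\sim_{php}$ to $\sim_{phhp}$. Because these four laws involve only the operators $+$ and $\textbf{nil}$ — neither of which introduces the box-summation $\boxplus_{\pi}$ — the probabilistic transition relation $\rightsquigarrow$ propagates transparently, so much of the verification is structural.

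Concretely, for commutativity $P_1+P_2\sim_{phhp}P_2+P_1$ and associativity $P_1+(P_2+P_3)\sim_{phhp}(P_1+P_2)+P_3$ I would take $R$ to be the identity on configurations paired with the identity order-isomorphism, since the reachable configurations of the two sides coincide up to the syntactic rearrangement of summands; for $P+P\sim_{phhp}P$ and $P+\textbf{nil}\sim_{phhp}P$ the relation maps a configuration of the left-hand process to the matching configuration of $P$. For each candidate $R$ I would verify clause (1) of Definition \ref{HHPB5}, running through all four action subcases — a fresh action $\alpha$, a bound input $x(y)$, a pair of bound inputs $\{x_1(y),x_2(y)\}$, and a bound output $\overline{x}(y)$ with $y\notin n(\mathcal{E}_1,\mathcal{E}_2)$ — via the rules $\textbf{SUM}_1$ and $\textbf{SUM}_2$ of Table \ref{TRForPITC5}; the point is that every action transition of a summation is inherited from a transition of one summand, so the matching move on the other side is immediate. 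Clause (2) is then checked using the rule $\textbf{PSUM}$ of Table \ref{PTRForPITC5}: a move $P_1+P_2\rightsquigarrow P_1'+P_2'$ forces both $P_1\rightsquigarrow P_1'$ and $P_2\rightsquigarrow P_2'$, which the mirror relation reproduces.

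The step I expect to require the most care is clause (3), the probability-matching condition $\mu(C_1,C)=\mu(C_2,C)$ for every class $C\in\mathcal{C}(\mathcal{E})/R$, together with $[\surd]_R=\{\surd\}$ in clause (4). For idempotency $P+P\sim_{phhp}P$ this is the delicate point: one must confirm that collapsing two syntactically identical summands does not alter the cumulative probability assigned to each $R$-class, i.e. that $\textbf{PSUM}$ together with the definition of $\mu$ makes the distribution of $P+P$ agree with that of $P$ on $R$-classes rather than doubling any weight. For the commutative and associative laws this reduces to the observation that $+$ does not affect $\mu$ beyond reindexing summands, so the probabilities of corresponding classes coincide.

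Finally I would check that each witnessing $R$ is downward closed: if $(C_1,f,C_2)\in R$ and $(C_1',f',C_2')\subseteq(C_1,f,C_2)$ pointwise, then $(C_1',f',C_2')\in R$. For the identity-based relations used in the commutative and associative cases this is immediate, and for the idempotency and unit cases it follows because restricting a matched pair of configurations to a sub-configuration again yields a matched pair. Having verified clauses (1)--(4) and downward closure, each $R$ is a strongly probabilistic hhp-bisimulation with $(\emptyset,\emptyset,\emptyset)\in R$, which yields the four equalities. As in the analogous results for $\sim_{pp}$, $\sim_{ps}$ and $\sim_{php}$, the hhp-case differs only in the extra downward-closure obligation, so the argument parallels the preceding summation-law theorems.
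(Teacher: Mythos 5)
Your proposal is correct and matches the paper's approach: the paper's proof of this theorem simply states that the equations follow directly from the definition of strongly probabilistic hhp-bisimulation and omits all details, and your plan — exhibiting an explicit downward-closed posetal relation for each law and verifying the transition-matching, probabilistic, and $\mu$-clauses of the definition — is exactly the verification the paper leaves to the reader.
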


\begin{proof}
According to the definition of strongly probabilistic hhp-bisimulation, we can easily prove the above equations, and we omit the proof.
\end{proof}

\begin{proposition}[Box-Summation laws for strongly probabilistic pomset bisimulation] 
The box-summation laws for strongly probabilistic pomset bisimulation are as follows.

\begin{enumerate}
  \item $P\boxplus_{\pi}\textbf{nil}\sim_{pp} P$.
  \item $P\boxplus_{\pi}P\sim_{pp} P$\\
  \item $P_1\boxplus_{\pi} P_2\sim_{pp} P_2\boxplus_{1-\pi} P_1$\\
  \item $P_1\boxplus_{\pi}(P_2\boxplus_{\rho} P_3)\sim_{pp} (P_1\boxplus_{\frac{\pi}{\pi+\rho-\pi\rho}}P_2)\boxplus_{\pi+\rho-\pi\rho} P_3$\\
\end{enumerate}
\end{proposition}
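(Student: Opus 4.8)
The plan is to prove each of the four laws by exhibiting an explicit strongly probabilistic pomset bisimulation $R$ and checking the four clauses of Definition \ref{PSB5}. In every case $R$ will be the identity relation on configurations augmented with the single pair relating the two sides of the law, together with the pairs forced by closing under matching transitions. Since after the initial probabilistic resolution both sides reach exactly the same residual processes, this closure collapses to the identity on the reachable residues, so $R$ is symmetric and the ``vice versa'' directions come for free. Because $\boxplus_{\pi}$ introduces no concurrency of its own, the same relation settles the pomset and the step case simultaneously, exactly as in the analogous \emph{Monoid laws 2 for strongly probabilistic pomset bisimulation} proved earlier for CPTC.

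First I would dispatch the action-transition clause (clause (1) of Definition \ref{PSB5}). By the rule $\textbf{PBOX-SUM}$ the only initial moves of a box-summation are the probabilistic transitions that resolve the choice, and the subsequent pomset, fresh-action, bound-input and bound-output transitions are precisely those of the chosen summand. Hence it suffices to observe that the sets of processes reachable after probabilistic resolution coincide on the two sides: for commutativity both sides resolve to $\{P_1,P_2\}$, for associativity both resolve to $\{P_1,P_2,P_3\}$, for idempotency both resolve to $\{P\}$. On these common residues $R$ restricts to the identity, so the action-matching subclauses and clause (4), $[\surd]_R=\{\surd\}$, hold trivially.

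The computational heart is clause (3), the requirement $\mu(C_1,C)=\mu(C_2,C)$ for every $R$-class $C$, which I would discharge using the probability distribution function $\mu(x\boxplus_{\pi}y,z)=\pi\mu(x,z)+(1-\pi)\mu(y,z)$ from Table \ref{PDFBAPTC}. For commutativity this is the identity $\pi\mu(P_1,z)+(1-\pi)\mu(P_2,z)=(1-\pi)\mu(P_2,z)+\pi\mu(P_1,z)$; for idempotency it is $\pi\mu(P,z)+(1-\pi)\mu(P,z)=\mu(P,z)$; and for associativity it is the reparametrization check that, writing $\sigma=\pi+\rho-\pi\rho$, the mass the right-hand side assigns to $P_1$, $P_2$, $P_3$, namely $\sigma\cdot\frac{\pi}{\sigma}=\pi$, $\sigma-\pi=(1-\pi)\rho$, and $1-\sigma=(1-\pi)(1-\rho)$, matches the left-hand masses $\pi$, $(1-\pi)\rho$, $(1-\pi)(1-\rho)$ term by term.

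The main obstacle will be the unit law $P\boxplus_{\pi}\textbf{nil}\sim_{pp}P$, where the $(1-\pi)$-branch leads to the inert $\textbf{nil}$ rather than to a copy of $P$. Under a naive reading the PDF gives $\mu(P\boxplus_{\pi}\textbf{nil},z)=\pi\mu(P,z)+(1-\pi)\mu(\textbf{nil},z)$, which equals $\mu(P,z)$ only if the $\textbf{nil}$-branch contributes no probability mass; so the crux is to justify the convention under which $\textbf{nil}$ is treated as a terminated state whose probabilistic contribution is absorbed (equivalently, the mass routed to $\textbf{nil}$ is reconciled by the $R$-class structure rather than counted against $\mu(P,\cdot)$). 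I would make this precise by fixing the PDF value of $\textbf{nil}$ accordingly and then verifying clause (3) relative to $P$. Once that convention is settled, the remaining three laws are routine arithmetic assembled from the transition analysis and the PDF identities above.
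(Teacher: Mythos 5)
Your proposal is correct and follows the same route the paper intends: the paper's entire proof is the sentence ``According to the definition of strongly probabilistic pomset bisimulation, we can easily prove the above equations, and we omit the proof,'' and your explicit construction of the bisimulation, the reparametrization check for associativity, and the appeal to the PDF clause are exactly the details that sentence elides. Your observation that the unit law $P\boxplus_{\pi}\textbf{nil}\sim_{pp}P$ needs a convention about the probabilistic mass assigned to $\textbf{nil}$ is a genuine subtlety the paper does not address at all, so flagging it (rather than silently asserting the law) only strengthens your write-up.
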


\begin{proof}
According to the definition of strongly probabilistic pomset bisimulation, we can easily prove the above equations, and we omit the proof.
\end{proof}

\begin{proposition}[Box-Summation laws for strongly probabilistic step bisimulation]
The box-summation laws for strongly probabilistic step bisimulation are as follows.

\begin{enumerate}
  \item $P\boxplus_{\pi}\textbf{nil}\sim_{ps} P$.
  \item $P\boxplus_{\pi}P\sim_{ps} P$\\
  \item $P_1\boxplus_{\pi} P_2\sim_{ps} P_2\boxplus_{1-\pi} P_1$\\
  \item $P_1\boxplus_{\pi}(P_2\boxplus_{\rho} P_3)\sim_{ps} (P_1\boxplus_{\frac{\pi}{\pi+\rho-\pi\rho}}P_2)\boxplus_{\pi+\rho-\pi\rho} P_3$\\
\end{enumerate}
\end{proposition}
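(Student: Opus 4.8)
The final proposition lists four box-summation laws for strongly probabilistic step bisimilarity $\sim_{ps}$:

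1. $P\boxplus_{\pi}\textbf{nil}\sim_{ps} P$
2. $P\boxplus_{\pi}P\sim_{ps} P$
3. $P_1\boxplus_{\pi} P_2\sim_{ps} P_2\boxplus_{1-\pi} P_1$
4. $P_1\boxplus_{\pi}(P_2\boxplus_{\rho} P_3)\sim_{ps} (P_1\boxplus_{\frac{\pi}{\pi+\rho-\pi\rho}}P_2)\boxplus_{\pi+\rho-\pi\rho} P_3$

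These are the analogues (for the $\pi_{ptc}$ calculus) of the monoid-laws-2 propositions proved earlier for CPTC, and the author's own proof just says "we omit the proof." Let me think about how I would actually establish these.

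Let me recall what $\sim_{ps}$ requires. From Definition \ref{PSB5}, a strongly probabilistic step bisimulation $R$ must satisfy: (1) matching of step action transitions (with the fresh-action / bound-input / bound-output clauses), (2) matching of probabilistic transitions $\xrsquigarrow{\pi}$, (3) for each $(C_1,C_2)\in R$, the cumulative probabilities $\mu(C_1,C)=\mu(C_2,C)$ for every equivalence class $C$, and (4) $[\surd]_R=\{\surd\}$.

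The crucial observation is how $\boxplus_{\pi}$ behaves operationally. Looking at the probabilistic transition rules $\textbf{PBOX-SUM}$ in Table \ref{PTRForPITC5}: $P\boxplus_{\pi}Q\rightsquigarrow P'$ when $P\rightsquigarrow P'$. So $\boxplus_{\pi}$ is resolved by a probabilistic choice, after which the process behaves like one of its components. The probability weight $\pi$ enters only through the cumulative probability distribution function $\mu$, not through the action transitions.

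So the four laws are really statements about the probabilistic behavior, and I would think about them as follows.

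**What's actually being claimed.** The key insight is that in $\pi_{ptc}$, the $\boxplus_\pi$ operator only makes a probabilistic choice via the $\rightsquigarrow$ transitions, and then the resolved state does ordinary action transitions. So the laws 1-4 are entirely about matching the probabilistic transitions and getting the cumulative PDF $\mu$ values right. Once the $\rightsquigarrow$ choice is made, both sides reach states that must be related, and from there the action-matching is identity (since the resolved state is literally a subterm on both sides, or an $\alpha$-variant of it).

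**Proof plan.**

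My plan is to prove each of the four laws by exhibiting an explicit strongly probabilistic step bisimulation relation $R$ containing the relevant pair, and checking the four clauses of Definition~\ref{PSB5}. The structure is the same for all four, so I would carry out laws~2 and~4 in detail (they carry the genuine content) and remark that 1 and 3 are simpler. First I would set up, for each law, a candidate relation: for instance for law~2, $R = \{(P\boxplus_{\pi}P, P)\}\cup \mathit{Id}$, where $\mathit{Id}$ is the identity relation on configurations/processes. The idea is that the only ``interesting'' pair is the root pair, and after the very first probabilistic transition both sides collapse to identical processes, so the identity relation suffices for everything below.

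Second, I would verify clause~(2), the matching of probabilistic transitions. This is where the operational rule $\textbf{PBOX-SUM}$ does the work: for law~2, $P\boxplus_{\pi}P\rightsquigarrow P'$ is derivable exactly when $P\rightsquigarrow P'$ (via either premise), and on the right $P\rightsquigarrow P'$ directly, landing in the pair $(P',P')\in \mathit{Id}$. For the commutativity law~3 I would note that $P_1\boxplus_{\pi}P_2$ and $P_2\boxplus_{1-\pi}P_1$ have the same set of $\rightsquigarrow$-successors ($P_1'$ and $P_2'$), and for associativity~4 both sides have $\rightsquigarrow$-successors among $P_1',P_2',P_3'$. Third, and this is the substantive part, I would verify clause~(3): the cumulative-PDF condition $\mu(C_1,C)=\mu(C_2,C)$. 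Here the arithmetic identities built into the probability weights become essential. I expect the main obstacle to be precisely this PDF bookkeeping.

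**The main obstacle: the cumulative PDF identities.** The hard part will be clause~(3) for the associativity law~4, where I must confirm that the PDF weights agree after the reparametrization $\pi\mapsto\frac{\pi}{\pi+\rho-\pi\rho}$ and $\rho\mapsto\pi+\rho-\pi\rho$. Concretely, the weight assigned to reaching (the class of) $P_1$ on the left-hand side $P_1\boxplus_{\pi}(P_2\boxplus_{\rho}P_3)$ is $\pi$, while on the right-hand side $(P_1\boxplus_{\frac{\pi}{\pi+\rho-\pi\rho}}P_2)\boxplus_{\pi+\rho-\pi\rho}P_3$ it is $\frac{\pi}{\pi+\rho-\pi\rho}\cdot(\pi+\rho-\pi\rho)=\pi$; the weight to reach $P_2$ is $(1-\pi)\rho$ on the left and $(1-\frac{\pi}{\pi+\rho-\pi\rho})(\pi+\rho-\pi\rho)=\rho-\pi\rho=(1-\pi)\rho$ on the right; and the weight to reach $P_3$ is $(1-\pi)(1-\rho)$ on the left and $1-(\pi+\rho-\pi\rho)=(1-\pi)(1-\rho)$ on the right. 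These three equalities are exactly the content, and they match the algebraic axiom $PA2$ of Table~\ref{AxiomsForBAPTC}; I would present them cleanly using the recursive clause $\mu(x\boxplus_{\pi}y,z)=\pi\mu(x,z)+(1-\pi)\mu(y,z)$ from the PDF definitions. The subtlety to watch is the degenerate case where some of $P_1,P_2,P_3$ coincide (so their classes merge) or where $\pi+\rho-\pi\rho=0$; I would handle the coinciding-components case by summing weights over the merged class, which is automatically consistent because $\mu$ is defined additively over $R$-classes, and note that $\pi+\rho-\pi\rho>0$ whenever the weights are genuine probabilities in $(0,1]$ so the reparametrization is well-defined. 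Finally, clauses~(1) and~(4) are immediate once the $\rightsquigarrow$-successors are related by $\mathit{Id}$: action transitions and termination are matched by identity, since the resolved states are syntactically equal, and $[\surd]_R=\{\surd\}$ holds because $\mathit{Id}$ does not identify $\surd$ with anything else.
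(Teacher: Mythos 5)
Your proposal is correct and follows exactly the route the paper intends: the paper's entire proof is ``according to the definition of strongly probabilistic step bisimulation, we can easily prove the above equations, and we omit the proof,'' and your explicit bisimulation relation, the matching of $\rightsquigarrow$-transitions via $\textbf{PBOX-SUM}$, and the cumulative-PDF arithmetic (in particular the three weight identities for associativity, which reduce to axiom $PA2$) supply precisely the content that one-liner leaves implicit. The only caveat worth recording is that Table~\ref{PTRForPITC5} lists just one $\textbf{PBOX-SUM}$ rule (the left premise), so your appeal to ``either premise'' --- which law~3 genuinely needs --- tacitly assumes the symmetric rule present in the CPTC table~\ref{PTRForCTC3}; this is evidently an omission in the paper's rule table rather than a gap in your argument.
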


\begin{proof}
According to the definition of strongly probabilistic step bisimulation, we can easily prove the above equations, and we omit the proof.
\end{proof}

\begin{proposition}[Box-Summation laws for strongly probabilistic hp-bisimulation]
The box-summation laws for strongly probabilistic hp-bisimulation are as follows.

\begin{enumerate}
  \item $P\boxplus_{\pi}\textbf{nil}\sim_{php} P$.
  \item $P\boxplus_{\pi}P\sim_{php} P$\\
  \item $P_1\boxplus_{\pi} P_2\sim_{php} P_2\boxplus_{1-\pi} P_1$\\
  \item $P_1\boxplus_{\pi}(P_2\boxplus_{\rho} P_3)\sim_{php} (P_1\boxplus_{\frac{\pi}{\pi+\rho-\pi\rho}}P_2)\boxplus_{\pi+\rho-\pi\rho} P_3$\\
\end{enumerate}
\end{proposition}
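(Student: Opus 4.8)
The plan is to establish each of the four equations by exhibiting an explicit strongly probabilistic hp-bisimulation $R$ witnessing it and checking the four clauses of Definition \ref{HHPB5}. Because the box-summation operator $\boxplus_{\pi}$ contributes no events of its own---it only selects, via a probabilistic transition, which of its operands is activated---the action-transition clauses (the sub-cases for fresh actions, bound inputs $x(y)$, co-input pairs $x_1(y),x_2(y)$, and bound outputs $\overline{x}(y)$) reduce to those already handled for ordinary summation and, in spirit, to the argument used for the corresponding strongly probabilistic pomset box-summation laws proven just above. The genuinely new content lies in clause (2), matching the probabilistic transitions $\xrsquigarrow{\pi}$ generated by the rule \textbf{PBOX-SUM}, and clause (3), the cumulative-probability identity $\mu(C_1,C)=\mu(C_2,C)$ on $R$-equivalence classes; clause (4), $[\surd]_R=\{\surd\}$, is immediate.

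First I would fix the relation. For laws (1)--(3) I take $R$ to consist of $(\emptyset,\mathrm{id},\emptyset)$ together with all triples $(C,f,C)$ reachable once the initial probabilistic choice has been resolved; since both sides then continue as one of the same residual processes $P$, $P_1$, or $P_2$, the order isomorphism $f$ may be taken as the identity and extended along matching transitions exactly as in the pomset case. The probabilistic-transition clause is then immediate: every $\rightsquigarrow$-move of the left-hand process is, after \textbf{PBOX-SUM}, a move into a residual that the right-hand process can match into the $R$-equal residual, and conversely. For law (4) the same scheme applies, nesting the resolution of the two choices so that $P_1$, $P_2$, $P_3$ are reached on both sides and related by the identity isomorphism.

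The arithmetic heart is clause (3), which I would verify using $\mu(x\boxplus_{\pi}y,z)=\pi\,\mu(x,z)+(1-\pi)\,\mu(y,z)$. For idempotency (2), $\pi\,\mu(P,z)+(1-\pi)\,\mu(P,z)=\mu(P,z)$; for commutativity (3), $\pi\,\mu(P_1,z)+(1-\pi)\,\mu(P_2,z)=(1-\pi)\,\mu(P_2,z)+\pi\,\mu(P_1,z)$; and for law (1) the term $(1-\pi)\,\mu(\textbf{nil},z)$ is absorbed because $\textbf{nil}$ admits no further behaviour. The associativity law (4) is the one requiring real bookkeeping: writing $\sigma=\frac{\pi}{\pi+\rho-\pi\rho}$ and $\theta=\pi+\rho-\pi\rho$, I would check $\theta\sigma=\pi$, $\theta(1-\sigma)=(1-\pi)\rho$ and $1-\theta=(1-\pi)(1-\rho)$, so that both sides assign $P_1,P_2,P_3$ the coefficients $\pi$, $(1-\pi)\rho$, $(1-\pi)(1-\rho)$ respectively, matching term by term.

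I expect the main obstacle to be organizational rather than conceptual: keeping the posetal isomorphism $f$ and the partition $\mathcal{C}(\mathcal{E})/R$ coherent while the probabilities are redistributed, and ensuring the cumulative-probability condition is checked class-by-class rather than pair-by-pair (the two differ precisely when distinct operands fall into the same $R$-class, as in the idempotent law (2)). Once the coefficient identities above are in hand and the degenerate $\textbf{nil}$ case is dealt with, the hp-refinement adds nothing beyond recording that $\boxplus_{\pi}$ introduces no new events, so the isomorphism extension is forced and the verification closes.
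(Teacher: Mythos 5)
Your proposal is correct and takes essentially the same route as the paper, which simply appeals to the definition of strongly probabilistic hp-bisimulation and omits all details. Your elaboration—exhibiting the identity-isomorphism posetal relation, matching the $\rightsquigarrow$-moves generated by $\textbf{PBOX-SUM}$, and verifying the coefficient identities $\theta\sigma=\pi$, $\theta(1-\sigma)=(1-\pi)\rho$, $1-\theta=(1-\pi)(1-\rho)$ for the associativity law via $\mu(x\boxplus_{\pi}y,z)=\pi\mu(x,z)+(1-\pi)\mu(y,z)$—is exactly the content the paper leaves implicit, and it checks out.
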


\begin{proof}
According to the definition of strongly probabilistic hp-bisimulation, we can easily prove the above equations, and we omit the proof.
\end{proof}

\begin{proposition}[Box-Summation laws for strongly probabilistic hhp-bisimulation]
The box-summation laws for strongly probabilistic hhp-bisimulation are as follows.

\begin{enumerate}
  \item $P\boxplus_{\pi}\textbf{nil}\sim_{phhp} P$.
  \item $P\boxplus_{\pi}P\sim_{phhp} P$\\
  \item $P_1\boxplus_{\pi} P_2\sim_{phhp} P_2\boxplus_{1-\pi} P_1$\\
  \item $P_1\boxplus_{\pi}(P_2\boxplus_{\rho} P_3)\sim_{phhp} (P_1\boxplus_{\frac{\pi}{\pi+\rho-\pi\rho}}P_2)\boxplus_{\pi+\rho-\pi\rho} P_3$\\
\end{enumerate}
\end{proposition}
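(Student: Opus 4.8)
The plan is to unfold the definition of $\sim_{phhp}$ and, for each of the four equations, exhibit a single downward closed posetal relation that is a strongly probabilistic hp-bisimulation and relates the two sides. By Definition \ref{HHPB5} a strongly probabilistic hhp-bisimulation is precisely a downward closed strongly probabilistic hp-bisimulation, so I must supply a posetal relation $R\subseteq\mathcal{C}(\mathcal{E}_1)\overline{\times}\mathcal{C}(\mathcal{E}_2)$ containing $(\emptyset,\emptyset,\emptyset)$ and verify its four clauses: the action clause with its three sub-cases for a fresh action $\alpha$, a bound input $x(y)$, and a bound output $\overline{x}(y)$; the probabilistic clause on $\xrsquigarrow{\pi}$; the cumulative-distribution clause requiring $\mu(C_1,C)=\mu(C_2,C)$ for every $C$ in the quotient; and the termination clause $[\surd]_R=\{\surd\}$. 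Since these four equations coincide with the box-summation laws already recorded for CPTC, the construction transports the CPTC argument, the only genuinely new points being the bound-name sub-cases and the probability bookkeeping.

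First I would build the relation. Because $\boxplus_{\pi}$ neither causally orders nor conflicts events and binds no names, once a probabilistic transition resolves the choice the reachable process on each side is one of $P$, $P_1$, $P_2$, $P_3$ (or $\textbf{nil}$) together with its derivatives, which are syntactically identical on the two sides of each equation. Hence $R$ can be taken to consist of the pair relating the two composite terms under the identity order-isomorphism, together with all pairs $(C,\mathrm{id}_C,C)$. With this choice the action clause collapses to reflexivity: whatever fresh-action, $x(y)$, or $\overline{x}(y)$ move the left term makes after resolving the choice, the right term makes the same move to the same target, so all three sub-cases of Definitions \ref{PSB5} and \ref{HHPB5} hold trivially, and downward closure is immediate since every sub-configuration is paired with itself by $\mathrm{id}$.

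The substantive work is the probabilistic clause together with the cumulative-distribution clause, where the probability arithmetic lives. Using the box-summation rule for the PDF, $\mu(x\boxplus_{\pi}y,z)=\pi\mu(x,z)+(1-\pi)\mu(y,z)$ (Table \ref{PDFBAPTC}), I would check that both sides assign equal total mass to each $R$-class. For idempotency this is $\pi\mu(P,z)+(1-\pi)\mu(P,z)=\mu(P,z)$; for commutativity it is the symmetry of the weights $\pi$ and $1-\pi$ between the classes of $P_1$ and $P_2$; and for associativity it is exactly the identity that $P_1\boxplus_{\pi}(P_2\boxplus_{\rho}P_3)$ and $(P_1\boxplus_{\frac{\pi}{\pi+\rho-\pi\rho}}P_2)\boxplus_{\pi+\rho-\pi\rho}P_3$ both place mass $\pi$ on $P_1$, mass $(1-\pi)\rho$ on $P_2$, and mass $(1-\pi)(1-\rho)$ on $P_3$, which is the content of the reparametrization appearing in the subscripts.

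I expect the main obstacle to be the unit law $P\boxplus_{\pi}\textbf{nil}\sim_{phhp}P$, whose delicacy is that $\textbf{nil}$ has no probabilistic transition, so the nominal weight $1-\pi$ on the dead branch must be shown not to disturb the cumulative distribution seen on the live $R$-classes; this is handled by observing that $\mu(\cdot,\textbf{nil})$ contributes nothing to any class containing an active derivative, after which the residual mass on the classes of $P$'s derivatives matches $\mu(P,\cdot)$ exactly. Verifying that $R$ is a genuine lumping, so that the quotient is well defined and the cumulative-distribution clause is even meaningful, is the other point requiring care, but it follows once the action clause has forced $R$ to be an equivalence respecting transitions.
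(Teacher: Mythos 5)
Your proposal is correct and takes essentially the approach the paper intends: the paper's own proof consists solely of the sentence ``According to the definition of strongly probabilistic hhp-bisimulation, we can easily prove the above equations, and we omit the proof,'' and your explicit construction of an identity-based downward closed posetal relation, with the action clauses holding by reflexivity and the probabilistic clauses reduced to the weight arithmetic $\pi$, $(1-\pi)\rho$, $(1-\pi)(1-\rho)$, is the standard way to discharge that omission. The one caveat is that the PDF rule $\mu(x\boxplus_{\pi}y,z)=\pi\mu(x,z)+(1-\pi)\mu(y,z)$ you invoke is defined only for the $APPTC$ signature and is never restated for the $\pi_{ptc}$ syntax, so you are implicitly transporting it; that gap is the paper's rather than yours.
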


\begin{proof}
According to the definition of strongly probabilistic hhp-bisimulation, we can easily prove the above equations, and we omit the proof.
\end{proof}

\begin{theorem}[Identity law for probabilistic truly concurrent bisimilarities]
If $A(\widetilde{x})\overset{\text{def}}{=}P$, then

\begin{enumerate}
  \item $A(\widetilde{y})\sim_{pp} P\{\widetilde{y}/\widetilde{x}\}$;
  \item $A(\widetilde{y})\sim_{ps} P\{\widetilde{y}/\widetilde{x}\}$;
  \item $A(\widetilde{y})\sim_{php} P\{\widetilde{y}/\widetilde{x}\}$;
  \item $A(\widetilde{y})\sim_{phhp} P\{\widetilde{y}/\widetilde{x}\}$.
\end{enumerate}
\end{theorem}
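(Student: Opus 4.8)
The plan is to exhibit, in each of the four cases, an explicit bisimulation containing the pair $(A(\widetilde{y}),P\{\widetilde{y}/\widetilde{x}\})$ and to read off all matching conditions directly from the operational rules $\textbf{PIDE}$, $\textbf{IDE}_1$ and $\textbf{IDE}_2$ in Tables \ref{PTRForPITC5} and \ref{TRForPITC5}. The decisive structural fact is that each of these rules has $P\{\widetilde{y}/\widetilde{x}\}$ as the sole premise for every move of $A(\widetilde{y})$: we have $A(\widetilde{y})\rightsquigarrow P'$ iff $P\{\widetilde{y}/\widetilde{x}\}\rightsquigarrow P'$, and likewise $A(\widetilde{y})\xrightarrow{X}P'$ iff $P\{\widetilde{y}/\widetilde{x}\}\xrightarrow{X}P'$ for both single actions and steps, and in each case the target $P'$ is literally the same term. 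Thus the two processes have identical initial transitions leading to identical successors.

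For the step and pomset cases I would take the relation $R=\{(A(\widetilde{y}),P\{\widetilde{y}/\widetilde{x}\})\}\cup Id$, where $Id$ is the identity relation on configurations (which is trivially a probabilistic pomset/step bisimulation). Given a transition of $A(\widetilde{y})$, the corresponding transition of $P\{\widetilde{y}/\widetilde{x}\}$ carries the same label to the same target, so the successor pair lies in $Id\subseteq R$; the four sub-clauses of Definition \ref{PSB5} concerning fresh actions, inputs $x(y)$, paired inputs and bound outputs $\overline{x}(y)$ are discharged in the same way, since the two processes offer exactly the same such continuations. The probabilistic clause holds because the single rule $\textbf{PIDE}$ forces $A(\widetilde{y})$ and $P\{\widetilde{y}/\widetilde{x}\}$ to have the same $\rightsquigarrow$-transitions, whence $\mu(A(\widetilde{y}),C)=\mu(P\{\widetilde{y}/\widetilde{x}\},C)$ for every class $C\in\mathcal{C}(\mathcal{E})/R$, and $[\surd]_R=\{\surd\}$ is immediate. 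The hp- and hhp-cases use the analogous posetal relation, pairing the two terms with the empty order isomorphism and adjoining the identity posetal relation; because the matched events coincide, the isomorphism is extended by $f[e\mapsto e]$, and the downward closure required for the hhp-case is inherited from the identity component.

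The work is essentially bookkeeping; the one point deserving care is the treatment of bound names in the $\pi_{ptc}$ clauses. The hardest part will be to confirm that the ``for all $w$'' quantifications in the input and bound-output clauses of Definitions \ref{PSB5} and \ref{HHPB5} are handled uniformly: since $A(\widetilde{y})\xrightarrow{x(y)}P'$ is derived solely from $P\{\widetilde{y}/\widetilde{x}\}\xrightarrow{x(y)}P'$ (and similarly for $\overline{x}(y)$), the instantiations $P'\{w/y\}$ agree on both sides for every $w$, so no genuine obstacle arises. I would therefore present the step-bisimulation argument in full and note that the pomset, hp- and hhp-cases follow by the same construction together with the already-established machinery for posetal relations and downward closure.
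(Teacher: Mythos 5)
Your proof is correct and takes the route the paper intends: the paper's own ``proof'' merely asserts that the equations follow from the definitions of the strongly probabilistic truly concurrent bisimulations and omits all details, and your explicit relation $R=\{(A(\widetilde{y}),P\{\widetilde{y}/\widetilde{x}\})\}\cup Id$, justified by the observation that $\textbf{PIDE}$, $\textbf{IDE}_1$ and $\textbf{IDE}_2$ give $A(\widetilde{y})$ exactly the transitions of $P\{\widetilde{y}/\widetilde{x}\}$ with literally identical targets, is the standard way to fill them in. The treatment of the bound-name clauses and of the probabilistic/$\mu$ conditions is likewise what the definitions require, so there is no discrepancy to report.
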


\begin{proof}
According to the definition of strongly probabilistic truly concurrent bisimulations, we can easily prove the above equations, and we omit the proof.
\end{proof}

\begin{theorem}[Restriction Laws for strongly probabilistic pomset bisimilarity]
The restriction laws for strongly probabilistic pomset bisimilarity are as follows.

\begin{enumerate}
  \item $(y)P\sim_{pp} P$, if $y\notin fn(P)$;
  \item $(y)(z)P\sim_{pp} (z)(y)P$;
  \item $(y)(P+Q)\sim_{pp} (y)P+(y)Q$;
  \item $(y)(P\boxplus_{\pi}Q)\sim_{pp} (y)P\boxplus_{\pi}(y)Q$;
  \item $(y)\alpha.P\sim_{pp} \alpha.(y)P$ if $y\notin n(\alpha)$;
  \item $(y)\alpha.P\sim_{pp} \textbf{nil}$ if $y$ is the subject of $\alpha$.
\end{enumerate}
\end{theorem}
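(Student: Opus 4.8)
The plan is to treat the six laws uniformly: for each one I would exhibit an explicit candidate relation $R$ and check that it is a strongly probabilistic pomset bisimulation in the sense of Definition \ref{PSB5}, i.e. verify the action-transition clause together with its four bound-name subcases, the probabilistic-transition clause, the PDF-matching clause $\mu(C_1,C)=\mu(C_2,C)$ on $R$-classes, and the termination clause $[\surd]_R=\{\surd\}$. In each case $R$ is the identity relation on configurations augmented with the pairs generated by the law and closed under the transitions reachable from the initial pair; for instance, for law (3) I would take $R=Id\cup\{((y)(P+Q),\,(y)P+(y)Q)\}\cup\{((y)P',\,(y)P'+(y)Q'):\dots\}$ and so on, so that every successor of a related pair is again related. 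Throughout I would read the transitions off Tables \ref{PTRForPITC5} and \ref{TRForPITC5}, and use the free-name property established above (if $P\xrightarrow{\alpha}P'$ then $fn(\alpha)\subseteq fn(P)$) to control which labels the restriction operator can block.

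For the action clause the key observation is purely structural: the restriction $(y)$ blocks exactly those transitions whose label mentions $y$, and otherwise commutes with the other operators. Concretely, in law (1) the hypothesis $y\notin fn(P)$ forces $y\notin fn(\alpha)$ for every transition $P\xrightarrow{\alpha}P'$, so $\textbf{RES}$ applies transparently and $\textbf{OPEN}$ (which would require $y\in fn(P)$) never fires; hence $(y)P$ and $P$ have matching action transitions. Laws (2)--(5) are handled by matching the $\textbf{RES}$ (resp.\ nested $\textbf{RES}$), $\textbf{SUM}$, $\textbf{PBOX-SUM}$ and prefix rules on the two sides and checking that the residuals are again $R$-related; the bound-object subcases of Definition \ref{PSB5} (fresh $\alpha$, $x(y)$, a pair $x_1(y),x_2(y)$, and $\overline{x}(y)$) are discharged by choosing the same fresh representative $y$ on both sides and invoking alpha-convertibility. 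Law (6) is the degenerate case: when $y$ is the subject of $\alpha$, the side condition $y\notin n(\alpha)$ of $\textbf{RES}$ fails and $\textbf{OPEN}$ can only extrude the object, never the subject, so $(y)\alpha.P$ performs no action transition at all and so matches $\textbf{nil}$.

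For the probabilistic layer I would lift the single probabilistic move through each operator: $\textbf{PRES}$ carries $P\rightsquigarrow P'$ to $(y)P\rightsquigarrow(y)P'$, and $\textbf{PSUM}$, $\textbf{PBOX-SUM}$ and the probabilistic prefix rules do the analogous thing for the other laws, so every $\rightsquigarrow$-move on one side is matched by the corresponding move on the other with residuals lying again in $R$. The PDF clause is then checked from the defining equations of $\mu$: $\mu$ commutes with $(y)(\cdot)$ and distributes over $+$ and $\boxplus_{\pi}$ exactly as the laws require, so the cumulative weight assigned to each $R$-class agrees on the two sides. The termination clause $[\surd]_R=\{\surd\}$ is immediate since $R$ never relates $\surd$ to a proper process.

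The main obstacle I anticipate is coordinating the probabilistic clause with law (6). Even though $(y)\alpha.P$ is action-dead, it still admits the prefix probabilistic step $(y)\alpha.P\rightsquigarrow(y)\breve{\alpha}.P$, whereas $\textbf{nil}$ has no outgoing transition at all; matching these requires the convention that a process incapable of any action is probabilistically bisimilar to $\textbf{nil}$, so that the $\rightsquigarrow$-move lands in a pair $((y)\breve{\alpha}.P,\textbf{nil})$ that I must also place in $R$ and whose PDF obligation is vacuous. The second delicate point, common to all six laws, is ensuring that scope extrusion of the restricted name never interferes; but this is precisely ruled out by the side conditions ($y\notin fn(P)$ in (1), $y\notin n(\alpha)$ in (5), $y$ the subject in (6)), so $\textbf{OPEN}$ on $y$ is never enabled and the verification reduces to the routine structural matching sketched above.
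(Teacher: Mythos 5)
Your proposal is correct and is exactly the routine verification from Definition \ref{PSB5} that the paper declares ``easy'' and omits entirely: exhibit the relation generated by each law, close it under successors, and check the action, probabilistic, PDF and termination clauses, using the side conditions to show that $\textbf{OPEN}$ on the restricted name is never enabled. Your observation about law (6) -- that $(y)\alpha.P$ still performs the probabilistic step $\rightsquigarrow(y)\breve{\alpha}.P$ while $\textbf{nil}$ has no outgoing transition, so one must adopt the convention that action-dead processes match $\textbf{nil}$ -- is a genuine subtlety in the paper's formalization that the omitted proof silently glosses over, and you handle it appropriately.
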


\begin{proof}
According to the definition of strongly probabilistic pomset bisimulation, we can easily prove the above equations, and we omit the proof.
\end{proof}

\begin{theorem}[Restriction Laws for strongly probabilistic step bisimilarity]
The restriction laws for strongly probabilistic step bisimilarity are as follows.

\begin{enumerate}
  \item $(y)P\sim_{ps} P$, if $y\notin fn(P)$;
  \item $(y)(z)P\sim_{ps} (z)(y)P$;
  \item $(y)(P+Q)\sim_{ps} (y)P+(y)Q$;
  \item $(y)(P\boxplus_{\pi}Q)\sim_{ps} (y)P\boxplus_{\pi}(y)Q$;
  \item $(y)\alpha.P\sim_{ps} \alpha.(y)P$ if $y\notin n(\alpha)$;
  \item $(y)\alpha.P\sim_{ps} \textbf{nil}$ if $y$ is the subject of $\alpha$.
\end{enumerate}
\end{theorem}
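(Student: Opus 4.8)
The plan is to handle each of the six equations separately by exhibiting an explicit candidate relation $R$ on configurations and checking that it is a strongly probabilistic step bisimulation in the sense of Definition \ref{PSB5}; that is, verifying the four clauses: matching of step (action) transitions with the bound-name bookkeeping for the four kinds of labels (fresh actions, $x(y)$, $\{x_1(y),x_2(y)\}$, and $\overline{x}(y)$), matching of probabilistic transitions $\rightsquigarrow$, agreement of the cumulative PDF $\mu(C_1,C)=\mu(C_2,C)$ on every class $C\in\mathcal{C}(\mathcal{E})/R$, and the terminal condition $[\surd]_R=\{\surd\}$. Since the step case only replaces pomset transitions by steps (sets of pairwise concurrent events) throughout, I would follow the pomset proof of the immediately preceding theorem essentially verbatim, reading ``pomset'' as ``step''.

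For the structural laws I would take the obvious relations. For (2) let $R$ relate $(y)(z)P'$ with $(z)(y)P'$ for every derivative $P'$ of $P$ (closed under both probabilistic and action transitions), together with the diagonal; the rules $\textbf{PRES}$, $\textbf{RES}_{1,2}$, $\textbf{OPEN}_{1,2}$ apply symmetrically because the side conditions $y\notin n(\alpha)$ and $z\notin n(\alpha)$ are independent, so transitions match in lockstep and the PDF clause reduces to pushing the restriction entry of $\mu$ through twice. For (3) and (4) the relation pairs $(y)(P'\ast Q')$ with $(y)P'\ast(y)Q'$ where $\ast\in\{+,\boxplus_{\pi}\}$: here $\textbf{PSUM}$ (resp. $\textbf{PBOX-SUM}$) composed with $\textbf{PRES}$ gives the probabilistic-transition match, while $\textbf{SUM}_{1,2}$ together with $\textbf{RES}_{1,2}$ give the action-transition match. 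For (1), (5), (6) I would argue: in (1), $y\notin fn(P)$ forces $y\notin n(\alpha)$ for every label reachable from $P$, so $\textbf{RES}$ never blocks and $R=\{((y)P',P')\}$ works; in (5), the hypothesis $y\notin n(\alpha)$ lets the restriction commute past the prefix after the probabilistic step $\alpha.P\rightsquigarrow\breve{\alpha}.P$; in (6), $y$ being the subject of $\alpha$ means $y\in n(\alpha)$, so $\textbf{RES}_{1,2}$ forbid the only action transition, $(y)\alpha.P$ is action-deadlocked, and $R$ relating $(y)\alpha.P$ (and its probabilistic derivative $(y)\breve{\alpha}.P$) to $\textbf{nil}$ is a bisimulation trivially.

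The step I expect to be the genuine obstacle is the probabilistic bookkeeping, concentrated in law (4) and in the PDF clause (3) generally: one must confirm that restriction distributes over $\boxplus_{\pi}$ not merely at the level of which branch is chosen but with the correct masses, i.e. that $\mu((y)(P\boxplus_{\pi}Q),\,\cdot)$ and $\mu((y)P\boxplus_{\pi}(y)Q,\,\cdot)$ assign equal probability to each $R$-class. This is where the probabilistic case genuinely departs from the ordinary $\pi$-calculus restriction laws; everything else, including the four label sub-cases of Definition \ref{PSB5} and the $\equiv_{\alpha}$ handling of bound names, is routine and identical to the non-probabilistic argument, because restriction interacts with the subject and object of an action exactly as in the classical calculus. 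I would discharge the PDF clause by computing $\mu$ from its structural definition on restriction and on $\boxplus_{\pi}$ and observing that both sides collapse to the same convex combination, after which the remaining clauses of the definition follow mechanically.
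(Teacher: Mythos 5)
Your proposal is correct and follows exactly the route the paper intends: the paper's own proof is the one-line remark that the laws follow ``according to the definition of strongly probabilistic step bisimulation'' with all details omitted, and your explicit candidate relations, clause-by-clause verification, and attention to the PDF condition for $\boxplus_{\pi}$ are precisely the details being elided. The only point worth flagging is that your treatment of law (6) implicitly requires $\textbf{nil}$ to admit a (trivial) probabilistic transition so that clause (2) of the bisimulation can be matched against $(y)\alpha.P\rightsquigarrow(y)\breve{\alpha}.P$; this is a convention the paper itself relies on elsewhere (e.g.\ in $P+\textbf{nil}\sim P$ via $\textbf{PSUM}$) and is not a gap specific to your argument.
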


\begin{proof}
According to the definition of strongly probabilistic step bisimulation, we can easily prove the above equations, and we omit the proof.
\end{proof}

\begin{theorem}[Restriction Laws for strongly probabilistic hp-bisimilarity]
The restriction laws for strongly probabilistic hp-bisimilarity are as follows.

\begin{enumerate}
  \item $(y)P\sim_{php} P$, if $y\notin fn(P)$;
  \item $(y)(z)P\sim_{php} (z)(y)P$;
  \item $(y)(P+Q)\sim_{php} (y)P+(y)Q$;
  \item $(y)(P\boxplus_{\pi}Q)\sim_{php} (y)P\boxplus_{\pi}(y)Q$;
  \item $(y)\alpha.P\sim_{php} \alpha.(y)P$ if $y\notin n(\alpha)$;
  \item $(y)\alpha.P\sim_{php} \textbf{nil}$ if $y$ is the subject of $\alpha$.
\end{enumerate}
\end{theorem}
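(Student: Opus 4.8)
The plan is to prove each of the six equations by exhibiting an explicit strongly probabilistic hp-bisimulation and checking the four clauses of Definition \ref{HHPB5}. For an instance $L\sim_{php}R$, I would take the posetal relation
\[
\mathcal{R}=\{(C_L,f,C_R)\mid C_L\in\mathcal{C}(\mathcal{E}_L),\ C_R\in\mathcal{C}(\mathcal{E}_R),\ f\text{ the canonical label-preserving isomorphism}\},
\]
seeded with $(\emptyset,\emptyset,\emptyset)$. The key structural observation is that in every one of these laws the restriction operator $(y)$ neither creates nor destroys events and leaves the causality $\leq$, the conflict $\sharp$ and the probabilistic conflict $\sharp_{\pi}$ among the remaining events unchanged; it only filters out transitions whose subject is $y$. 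Hence the canonical $f$ is, up to the bound-name renamings discussed below, the identity on events, so the order-isomorphism requirement and its stability under the extension $f[e_1\mapsto e_2]$ come for free once the transitions are shown to correspond.

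Second, I would verify the action clause (1) of Definition \ref{HHPB5} by a case analysis on the last rule used, splitting into the three prescribed subcases (a fresh action $\alpha$, a bound input $x(y)$, and a bound output $\overline{x}(y)$). For laws (2), (3) and (5) the matching is a direct bijection between derivations: $\textbf{RES}_{1,2}$ commute with a second restriction (law 2), with $\textbf{SUM}_{1,2}$ (law 3), and with the prefix rules when $y\notin n(\alpha)$ (law 5). Law (1) uses $y\notin fn(P)$ to argue that $\textbf{RES}$ blocks nothing, so every move of $P$ is mirrored by $(y)P$ and conversely. Law (6) is the degenerate case: when $y$ is the subject of $\alpha$, no instance of $\textbf{RES}_{1,2}$ applies, and $\textbf{OPEN}_{1,2}$ does not fire since the object is not extruded, so $(y)\alpha.P$ has no action transition, matching $\textbf{nil}$; here $\mathcal{R}=\{(\emptyset,\emptyset,\emptyset)\}$ suffices.

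Third, I would discharge the probabilistic clauses (2) and (3): clause (2) requires each $\rightsquigarrow$ move to be matched, which follows from $\textbf{PRES}$ together with $\textbf{PSUM}$ and $\textbf{PBOX-SUM}$ read off Table \ref{PTRForPITC5}, and clause (3) requires $\mu(C_1,C)=\mu(C_2,C)$ on each $\mathcal{R}$-class, which I would read directly from the PDF defining equations. Law (4), $(y)(P\boxplus_{\pi}Q)\sim_{php}(y)P\boxplus_{\pi}(y)Q$, is where this clause carries the real content: both sides resolve the probabilistic choice with weights $\pi$ and $1-\pi$ into $(y)P$ and $(y)Q$ respectively, so the probabilistic transitions and the cumulative values $\mu$ agree; the final clause $[\surd]_{\mathcal{R}}=\{\surd\}$ holds because $\mathcal{R}$ relates only genuine configurations to genuine configurations.

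The main obstacle will be the bound-name bookkeeping rather than any deep combinatorics. In the bound-input and bound-output subcases of Definition \ref{HHPB5} the matching transition is quantified over all substitutions $\{w/y\}$, so I must invoke the side conditions $y\notin n(\mathcal{E}_1,\mathcal{E}_2)$ and alpha-convertibility (together with the earlier transition properties) to ensure that the renamed residuals stay in $\mathcal{R}$ and that restriction and the $\textbf{OPEN}$ rule do not interfere with scope --- in particular, to confirm that in law (6) the restricted object can never be extruded, while in laws (1)--(5) no spurious extrusion is introduced or lost. Making the canonical isomorphism $f$ respect these renamings uniformly, so that it remains an order-isomorphism after each extension $f[e_1\mapsto e_2]$, is the one point that genuinely uses the hp-structure and must be checked with care; the corresponding pomset and step statements avoid it entirely.
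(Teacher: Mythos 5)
Your proposal is correct and follows essentially the same route the paper takes: the paper's own proof is simply the remark that the laws follow directly from the definition of strongly probabilistic hp-bisimulation, and your explicit construction of the canonical posetal relation together with the case analysis on $\textbf{RES}_{1,2}$, $\textbf{OPEN}_{1,2}$ and the probabilistic rules is the natural filling-in of that omitted argument. Your attention to the bound-name bookkeeping and the $\mu$-clause for law (4) goes beyond what the paper records, but it is consistent with, not divergent from, the intended proof.
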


\begin{proof}
According to the definition of strongly probabilistic hp-bisimulation, we can easily prove the above equations, and we omit the proof.
\end{proof}

\begin{theorem}[Restriction Laws for strongly probabilistic hhp-bisimilarity]
The restriction laws for strongly probabilistic hhp-bisimilarity are as follows.

\begin{enumerate}
  \item $(y)P\sim_{phhp} P$, if $y\notin fn(P)$;
  \item $(y)(z)P\sim_{phhp} (z)(y)P$;
  \item $(y)(P+Q)\sim_{phhp} (y)P+(y)Q$;
  \item $(y)(P\boxplus_{\pi}Q)\sim_{phhp} (y)P\boxplus_{\pi}(y)Q$;
  \item $(y)\alpha.P\sim_{phhp} \alpha.(y)P$ if $y\notin n(\alpha)$;
  \item $(y)\alpha.P\sim_{phhp} \textbf{nil}$ if $y$ is the subject of $\alpha$.
\end{enumerate}
\end{theorem}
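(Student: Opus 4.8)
The plan is to prove each of the six identities by exhibiting, for the two processes on either side, an explicit posetal relation $R\subseteq\mathcal{C}(\mathcal{E}_1)\overline{\times}\mathcal{C}(\mathcal{E}_2)$ and checking that it satisfies all clauses of Definition \ref{HHPB5}. Since a strongly probabilistic hhp-bisimulation is by definition just a downward closed strongly probabilistic hp-bisimulation, I would reuse the relations already built for the corresponding hp-restriction laws proved immediately above, and for each only add the extra verification that the relation is downward closed. Throughout, the two sides of every law denote event structures that become isomorphic once the effect of $(y)$ is accounted for, so the evident label-preserving bijection $f$ (the identity on the events common to both sides) is the natural candidate carried in each triple $(C_1,f,C_2)\in R$.

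For the structural laws (1), (2), (3), (5), and (6) I would take $R$ to relate pairs of syntactically corresponding derivatives: for (1), triples $((y)P',\,id,\,P')$ for every derivative $P'$ of $P$, which is legitimate because $y\notin fn(P)$ guarantees, by the side conditions of $\textbf{RES}_{1,2}$ and $\textbf{OPEN}_{1,2}$ in Table \ref{TRForPITC5}, that $(y)$ blocks no transition; for (2) the triples $((y)(z)P',\,id,\,(z)(y)P')$, using that the restriction side conditions are symmetric in $y$ and $z$; for (3) and (5) the obvious pairings obtained from $\textbf{SUM}_{1,2}$ and the prefix rules together with $\textbf{RES}$; and for (6) the single triple $(\emptyset,\emptyset,\emptyset)$, since when $y$ is the subject of $\alpha$ neither $\textbf{RES}$ (whose side condition $y\notin n(\alpha)$ fails) nor $\textbf{OPEN}$ applies, so $(y)\alpha.P$ has no action transition and matches $\textbf{nil}$. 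In each case I would discharge the action clauses (free action, bound input, the two bound inputs, and bound output) by matching the derivation on one side with the rule of the same name on the other, taking care, for the bound cases, to choose the same fresh name and to extend $f$ by $f[e_1\mapsto e_2]$ exactly as prescribed.

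It remains to treat the probabilistic and termination clauses together with downward closure. Clause (2) of Definition \ref{HHPB5} is immediate from rule $\textbf{PRES}$, which simply propagates $\rightsquigarrow$ through $(y)$, so a probabilistic transition $C_1\xrsquigarrow{\pi}C_1^{\pi}$ on one side is matched verbatim on the other with $(C_1^{\pi},f,C_2^{\pi})$ again in $R$; clause (3), $\mu(C_1,C)=\mu(C_2,C)$, follows because the two sides induce the same probabilistic branching once restriction is resolved; and clause (4), $[\surd]_R=\{\surd\}$, holds since $\surd$ is related only to $\surd$ in every $R$ above. For the upgrade from $\sim_{php}$ to $\sim_{phhp}$, I would observe that each $R$ is defined uniformly over all reachable configuration pairs, so whenever $(C_1',f',C_2')\subseteq(C_1,f,C_2)$ pointwise with $(C_1,f,C_2)\in R$, the smaller triple is again one of the listed pairs; hence $R$ is downward closed and is a strongly probabilistic hhp-bisimulation. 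Law (4) is handled the same way, except that here clause (3) carries the real content: $\textbf{PBOX-SUM}$ distributes over $(y)$, and one must check that $(y)(P\boxplus_{\pi}Q)$ and $(y)P\boxplus_{\pi}(y)Q$ assign equal $\mu$-mass to each $R$-class.

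I expect the main obstacle to be the bound-object bookkeeping rather than any of the structural manipulations. Concretely, under $\textbf{OPEN}_{1,2}$ a bound output extrudes the scope of a name, and one must verify that the posetal isomorphism $f$ stays well defined and order-preserving as fresh names are introduced, uniformly in the universally quantified witness of the bound clauses; this is the place where the remark preceding Definition \ref{PSB5} about actions with bound objects is actually used. The additional downward-closure condition for hhp must then be checked to remain compatible with these extended isomorphisms, that is, that restricting a triple to a sub-configuration does not disturb the freshness conventions. Once these points are settled, the six laws follow by the routine clause-by-clause verification sketched above.
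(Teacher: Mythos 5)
Your proposal is correct and is essentially a detailed working-out of the approach the paper itself takes: the paper's proof of this theorem consists only of the remark that the laws follow directly from the definition of strongly probabilistic hhp-bisimulation, with all details omitted. Your explicit candidate relations, the clause-by-clause verification (including the probabilistic and $\mu$-mass clauses), and the downward-closure check are exactly the content the paper leaves to the reader.
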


\begin{proof}
According to the definition of strongly probabilistic hhp-bisimulation, we can easily prove the above equations, and we omit the proof.
\end{proof}

\begin{theorem}[Parallel laws for strongly probabilistic pomset bisimilarity]
The parallel laws for strongly probabilistic pomset bisimilarity are as follows.

\begin{enumerate}
  \item $P\parallel \textbf{nil}\sim_{pp} P$;
  \item $P_1\parallel P_2\sim_{pp} P_2\parallel P_1$;
  \item $(P_1\parallel P_2)\parallel P_3\sim_{pp} P_1\parallel (P_2\parallel P_3)$;
  \item $(y)(P_1\parallel P_2)\sim_{pp} (y)P_1\parallel (y)P_2$, if $y\notin fn(P_1)\cap fn(P_2)$.
\end{enumerate}
\end{theorem}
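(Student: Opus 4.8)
The plan is to treat the four laws uniformly in the style forced by Definition \ref{PSB5}: to show two processes are $\sim_{pp}$-equivalent I exhibit an explicit candidate relation $R$ on configurations, containing $(\emptyset,\emptyset)$, and verify its four clauses, namely matching of pomset transitions (with the sub-cases for a fresh action, a bound input $x(y)$, a pair of inputs, and a bound output $\overline{x}(y)$), matching of probabilistic transitions $\xrsquigarrow{\pi}$, agreement of the PDF $\mu$ on $R$-classes, and $[\surd]_R=\{\surd\}$. In every case $R$ pairs syntactically corresponding derivatives, closed under reachability, together with $(\surd,\surd)$.

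First, for $P\parallel\textbf{nil}\sim_{pp}P$ I take $R=\{(P'\parallel\textbf{nil},P'): P' \text{ reachable from } P\}\cup\{(\surd,\surd)\}$. Since $\textbf{nil}$ admits no action transition, every move of $P'\parallel\textbf{nil}$ is forced through $\textbf{PAR}_1$ and inherited verbatim from $P'$, and conversely every move of $P'$ lifts through $\textbf{PAR}_1$. For the probabilistic clause I note that $\textbf{nil}$ is a probabilistically inert state, so $\textbf{PPAR}$ yields $P'\parallel\textbf{nil}\xrsquigarrow{\pi}P''\parallel\textbf{nil}$ exactly when $P'\xrsquigarrow{\pi}P''$, and the PDF definitions give $\mu(P'\parallel\textbf{nil},\cdot)=\mu(P',\cdot)$.

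For commutativity (2) and associativity (3) I take $R=\{(C,C'):C'\text{ is }C\text{ with its parallel components permuted, respectively reassociated}\}$. The rules $\textbf{PAR}_1$ through $\textbf{PAR}_4$, $\textbf{COM}$ and $\textbf{CLOSE}$ are symmetric under swapping the two operands and compatible with reassociation, so each derivation on one side is mirrored on the other; crucially, a synchronised step produced by $\textbf{PAR}_3$ carries the label set $\{\alpha,\beta\}$, which is insensitive to the order of the components, so the two observed pomsets are isomorphic ($X_1\sim X_2$) as Definition \ref{PSB5} requires. The $\mu$-clause follows from the product form of $\mu(x\parallel y,\cdot)$ and $\mu(x\between y,\cdot)$ in Table \ref{PDFAPPTC}, together with commutativity and associativity of multiplication.

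The hard case is law (4), $(y)(P_1\parallel P_2)\sim_{pp}(y)P_1\parallel(y)P_2$ under the hypothesis $y\notin fn(P_1)\cap fn(P_2)$, because this is precisely where restriction interacts with scope extrusion via $\textbf{OPEN}_1$, $\textbf{OPEN}_2$ and $\textbf{CLOSE}$. I would argue by cases on which component contains $y$ freely; by the hypothesis at most one does, say $y\notin fn(P_2)$ (the other case is symmetric). Using the free-name properties of transitions established earlier (a transition's names are controlled by the $fn$ of its source), $P_2$ can never fire an action whose subject is $y$, so no synchronisation on the restricted name $y$ between the two branches is possible; hence restricting the whole composition has the same effect as restricting only the $P_1$-side, and I reduce the goal to $(y)(P_1\parallel P_2)\sim_{pp}(y)P_1\parallel P_2$ using the earlier restriction law $(y)P_2\sim_{pp}P_2$ (valid since $y\notin fn(P_2)$) together with congruence of $\sim_{pp}$ for $\parallel$. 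The candidate $R$ then matches each $\textbf{RES}$-guarded move of the left-hand side with the corresponding move on the right, handling bound-output moves through $\textbf{OPEN}$ identically on both sides. The main obstacle, and the step requiring the most care, is verifying that the side condition genuinely rules out a lost $\textbf{CLOSE}$ synchronisation on $y$, and that the bound names introduced by $\textbf{OPEN}$ and $\textbf{PAR}_4$ are chosen fresh for both sides so that the induced pomsets stay isomorphic; once that bookkeeping is settled, the remaining clauses are routine.
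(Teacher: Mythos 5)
The paper offers no proof to compare against: its entire argument is the sentence ``According to the definition of strongly probabilistic pomset bisimulation, we can easily prove the above equations, and we omit the proof.'' Your proposal is therefore not a different route but a fleshing-out of exactly the route the paper gestures at --- exhibit an explicit relation $R$ containing $(\emptyset,\emptyset)$ and check the four clauses of Definition \ref{PSB5} --- and the relations you choose (pairing $P'\parallel\textbf{nil}$ with $P'$, permuted/reassociated compositions with each other, and the scope-extension reduction for law (4) via $(y)P_2\sim_{pp}P_2$ plus congruence) are the standard ones. Your handling of law (4), splitting on which component can contain $y$ free and observing that the side condition kills any $\textbf{CLOSE}$ synchronisation on $y$, is the right decomposition and is more than the paper provides.

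Two points deserve care. First, in law (1) you assert that $\textbf{PPAR}$ gives $P'\parallel\textbf{nil}\xrsquigarrow{\pi}P''\parallel\textbf{nil}$ exactly when $P'\xrsquigarrow{\pi}P''$; but $\textbf{PPAR}$ requires probabilistic moves from \emph{both} operands, and the rules of Table \ref{PTRForPITC5} assign $\textbf{nil}$ no probabilistic transition at all. As written, $P'\parallel\textbf{nil}$ has no $\rightsquigarrow$-move, so clause (2) of the bisimulation would fail in one direction unless one adopts the (unstated) convention that $\textbf{nil}\rightsquigarrow\textbf{nil}$, or treats probabilistically inert operands as absorbing in $\textbf{PPAR}$. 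This is a defect of the paper's rule system rather than of your argument, but your proof should state the convention it relies on. Second, your appeal to Table \ref{PDFAPPTC} for the $\mu$-clause is misplaced: that table belongs to the $APPTC$ chapter, and the $\pi_{ptc}$ chapter never defines a PDF for its syntax, so the $\mu$-agreement clause can only be discharged by an analogous (and likewise unstated) product-form definition. Neither issue invalidates your approach; both are places where the paper's omissions force you to supply assumptions explicitly.
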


\begin{proof}
According to the definition of strongly probabilistic pomset bisimulation, we can easily prove the above equations, and we omit the proof.
\end{proof}

\begin{theorem}[Parallel laws for strongly probabilistic step bisimilarity]
The parallel laws for strongly probabilistic step bisimilarity are as follows.

\begin{enumerate}
  \item $P\parallel \textbf{nil}\sim_{ps} P$;
  \item $P_1\parallel P_2\sim_{ps} P_2\parallel P_1$;
  \item $(P_1\parallel P_2)\parallel P_3\sim_{ps} P_1\parallel (P_2\parallel P_3)$;
  \item $(y)(P_1\parallel P_2)\sim_{ps} (y)P_1\parallel (y)P_2$, if $y\notin fn(P_1)\cap fn(P_2)$.
\end{enumerate}
\end{theorem}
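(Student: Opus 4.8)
The plan is to establish each of the four equations by exhibiting an explicit symmetric relation $R$ on configurations and verifying that it is a strongly probabilistic step bisimulation in the sense of Definition \ref{PSB5}; that is, $R$ must respect step transitions (including the clauses for fresh actions, bound inputs $x(y)$, paired bound inputs, and bound outputs $\overline{x}(y)$), respect probabilistic transitions $\xrsquigarrow{\pi}$, satisfy $\mu(C_1,C)=\mu(C_2,C)$ on every $R$-class, and keep $[\surd]_R=\{\surd\}$. Since each law asserts a structural identity between the two sides, in every case $R$ will be induced by a canonical bijection between the events of the two PESs, and the verification reduces to checking that this bijection commutes with the relevant transition rules in Tables \ref{PTRForPITC5} and \ref{TRForPITC5}.

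For (1) I would use that $\textbf{nil}$ has no events and admits no action transition, so $\textbf{nil}\nrightarrow$; hence by $\textbf{PAR}_1$ only the left component moves, and the map sending a configuration $C$ of $P\parallel\textbf{nil}$ to the corresponding configuration of $P$ is an isomorphism of PESs. For (2) and (3) the key observation is that the composition rules $\textbf{PAR}_1$, $\textbf{PAR}_2$, $\textbf{PAR}_3$, $\textbf{PAR}_4$, $\textbf{COM}$ and $\textbf{CLOSE}$ are symmetric in the two operands and that the associated PDF clauses for $\parallel$ are multiplicative (of the form $\mu(x\parallel y,\cdots)=\mu(x,x')\cdot\mu(y,y')$), so commutativity and associativity of the event sets transport directly to commutativity and associativity of both the step transitions and the probabilities. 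In each subcase the step label $X$, a set of pairwise concurrent actions, is preserved up to the pomset isomorphism $\sim$, which is exactly what the step-bisimulation clauses require; the probabilistic and termination clauses are immediate from the definitions of $\mu$ and of $\surd$.

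The substantive case is (4), the distribution of restriction over parallel composition. Here the side condition $y\notin fn(P_1)\cap fn(P_2)$ is essential: it guarantees that $y$ is not a channel shared by both components, so no communication between $P_1$ and $P_2$ can take place on $y$, and consequently pushing $(y)$ inside the two branches never blocks or enables a $\textbf{COM}$ or $\textbf{CLOSE}$ step that the other side cannot match. The relation $R$ must then be shown compatible with the interaction of $\textbf{RES}_1$, $\textbf{RES}_2$, $\textbf{OPEN}_1$, $\textbf{OPEN}_2$, $\textbf{COM}$ and $\textbf{CLOSE}$, tracking the bound-name freshness conditions and the object substitutions $\{w/y\}$ that appear in the bound-input and bound-output clauses of Definition \ref{PSB5}.

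I expect this last point to be the main obstacle: verifying that scope extrusion behaves identically on both sides, i.e. that an $\textbf{OPEN}$ of $y$ followed by a matching bound action on the combined process corresponds, under $R$, to the same extrusion performed within a single restricted branch, and that the freshness requirement $w\notin fn((z)P)\cup fn((z)Q)$ in $\textbf{PAR}_4$ is met consistently. Once the bound-name bookkeeping is handled, the remaining clauses (probabilistic matching via $\xrsquigarrow{\pi}$, equality of $\mu$ on $R$-classes, and $[\surd]_R=\{\surd\}$) follow as in cases (1)--(3), so the construction of $R$ together with the scope-extrusion analysis completes the proof. Alternatively, since the corresponding laws for strongly probabilistic pomset bisimilarity were just established, one may simply restrict those pomset bisimulations to steps whose labels are pairwise concurrent, obtaining the step versions with essentially no extra work.
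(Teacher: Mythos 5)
Your proposal is correct and follows the same route the paper takes: the paper's proof simply states that the laws follow directly from the definition of strongly probabilistic step bisimulation and omits all details, and your plan of exhibiting the canonical isomorphism-induced relations and checking the transition, probabilistic, $\mu$-, and $\surd$-clauses is exactly the verification being gestured at. Your additional care with the restriction case and scope extrusion goes beyond what the paper records, but it is the right place to concentrate the effort.
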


\begin{proof}
According to the definition of strongly probabilistic step bisimulation, we can easily prove the above equations, and we omit the proof.
\end{proof}

\begin{theorem}[Parallel laws for strongly probabilistic hp-bisimilarity]
The parallel laws for strongly probabilistic hp-bisimilarity are as follows.

\begin{enumerate}
  \item $P\parallel \textbf{nil}\sim_{php} P$;
  \item $P_1\parallel P_2\sim_{php} P_2\parallel P_1$;
  \item $(P_1\parallel P_2)\parallel P_3\sim_{php} P_1\parallel (P_2\parallel P_3)$;
  \item $(y)(P_1\parallel P_2)\sim_{php} (y)P_1\parallel (y)P_2$, if $y\notin fn(P_1)\cap fn(P_2)$.
\end{enumerate}
\end{theorem}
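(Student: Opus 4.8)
The plan is to prove each of the four laws by exhibiting an explicit posetal relation $R\subseteq\mathcal{C}(\mathcal{E}_1)\overline{\times}\mathcal{C}(\mathcal{E}_2)$ and verifying that it is a strongly probabilistic hp-bisimulation in the sense of Definition \ref{HHPB5}, with $(\emptyset,\emptyset,\emptyset)\in R$. For each law I would take the candidate relation generated by the syntactic forms on the two sides together with the natural bijection on events induced by the structural identity being asserted: for commutativity, $R=\{(C_1,f,C_2): C_1\in\mathcal{C}(P_1\parallel P_2),\ C_2\in\mathcal{C}(P_2\parallel P_1),\ f \text{ the event swap}\}$; for associativity, the re-bracketing bijection; and for the restriction law, the bijection identifying events of $(y)(P_1\parallel P_2)$ with those of $(y)P_1\parallel(y)P_2$. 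In each case I must check that $f$ and every extension $f[e_1\mapsto e_2]$ remain isomorphisms of the hp-posets, and that clauses (2)--(4) of Definition \ref{HHPB5} (probabilistic transitions, $\mu$-equivalence on classes, and $[\surd]_R=\{\surd\}$) hold.

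First I would dispatch law (1), $P\parallel\textbf{nil}\sim_{php}P$: since $\textbf{nil}$ admits neither a probabilistic move $\rightsquigarrow$ nor an action transition, every move of $P\parallel\textbf{nil}$ is forced through $\textbf{PPAR}$ and $\textbf{PAR}_1$ to originate from $P$ alone, so the matching of transitions is immediate, the event map is the identity, and the PDF clause reduces to $\mu(P,P')=\mu(P,P')$ by the definition $\mu(x\parallel y,\cdot)=\mu(x,\cdot)\cdot\mu(y,\cdot)$ together with $\mu(\textbf{nil},\textbf{nil})=1$.

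Then for laws (2) and (3) I would run the four action subcases of Definition \ref{HHPB5} in turn --- fresh action $\alpha$, bound input $x(y)$, the two-input step $\{x_1(y),x_2(y)\}$, and bound output $\overline{x}(y)$ --- checking that each applicable rule ($\textbf{PAR}_{1\text{--}4}$, $\textbf{COM}$, $\textbf{CLOSE}$) on one side is answered by the symmetric (respectively re-associated) instance on the other, while the induced $f[e_1\mapsto e_2]$ stays an isomorphism. The probabilistic clause follows because the PDF definitions make $\mu(x\parallel y,\cdot)$ symmetric and associative in its two arguments, and clauses (3) and (4) are then routine. Throughout I would invoke the earlier proposition lifting $P\{w/z\}$-transitions back to $P$-transitions to handle the bound objects up to $\equiv_{\alpha}$.

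The main obstacle will be law (4), $(y)(P_1\parallel P_2)\sim_{php}(y)P_1\parallel(y)P_2$ under the side condition $y\notin fn(P_1)\cap fn(P_2)$. Here I must show the restriction may be pushed across the composition, which requires a careful analysis of how $(y)\,\cdot$ interacts with $\textbf{COM}$, $\textbf{CLOSE}$, and especially $\textbf{OPEN}_{1,2}$: a communication or scope extrusion on the name $y$ that is enabled on one side must be shown either impossible or exactly matched on the other, precisely because the side condition guarantees $y$ is not shared as a free name by both branches. Managing the freshness bookkeeping via $\alpha$-conversion while keeping $f$ an isomorphism on the hp-posets is the delicate point; once this case analysis is settled, the probabilistic and termination clauses reduce, as before, to the PDF definitions and the already-established facts.
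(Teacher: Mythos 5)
Your proposal follows essentially the same route as the paper: the paper's own proof of this theorem consists only of the remark that the laws follow ``according to the definition of strongly probabilistic hp-bisimulation,'' with all details omitted, and your plan --- exhibiting the natural posetal relation for each law (identity, swap, re-bracketing, restriction-distribution), checking that $f$ and its extensions remain isomorphisms, and verifying the probabilistic, $\mu$-, and termination clauses of Definition \ref{HHPB5} by a case analysis over $\textbf{PAR}_{1\text{--}4}$, $\textbf{COM}$, $\textbf{CLOSE}$ and $\textbf{OPEN}_{1,2}$ --- is precisely the standard way to discharge that claim. Your identification of law (4) as the delicate case is also consistent with the usual treatment of scope-mobility laws.
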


\begin{proof}
According to the definition of strongly probabilistic hp-bisimulation, we can easily prove the above equations, and we omit the proof.
\end{proof}

\begin{theorem}[Parallel laws for strongly probabilistic hhp-bisimilarity]
The parallel laws for strongly probabilistic hhp-bisimilarity are as follows.

\begin{enumerate}
  \item $P\parallel \textbf{nil}\sim_{phhp} P$;
  \item $P_1\parallel P_2\sim_{phhp} P_2\parallel P_1$;
  \item $(P_1\parallel P_2)\parallel P_3\sim_{phhp} P_1\parallel (P_2\parallel P_3)$;
  \item $(y)(P_1\parallel P_2)\sim_{phhp} (y)P_1\parallel (y)P_2$, if $y\notin fn(P_1)\cap fn(P_2)$.
\end{enumerate}
\end{theorem}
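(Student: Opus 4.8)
The plan is to establish the four parallel laws for strongly probabilistic hhp-bisimilarity by exhibiting, in each case, an explicit strongly probabilistic hhp-bisimulation relating the two sides, following exactly the template used for the pomset, step, and hp-cases. Since a strongly probabilistic hhp-bisimulation is, by Definition \ref{HHPB5}, just a downward closed strongly probabilistic hp-bisimulation, the core of the work is to produce an hp-bisimulation (a posetal relation on $\mathcal{C}(\mathcal{E}_1)\overline{\times}\mathcal{C}(\mathcal{E}_2)$ satisfying the action, probabilistic, PDF, and termination clauses) and then check that the relation I write down is downward closed. For the algebraic identities this is routine because the relations are built from the structural congruences of the underlying operators.

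First I would treat laws (1)--(3), which are the monoid laws for $\parallel$: I would take the evident relation pairing configurations of $P\parallel\textbf{nil}$ with those of $P$, of $P_1\parallel P_2$ with those of $P_2\parallel P_1$, and of $(P_1\parallel P_2)\parallel P_3$ with those of $P_1\parallel(P_2\parallel P_3)$, in each case together with the induced order isomorphism $f$ on the common event set. Using the transition rules $\textbf{PAR}_{1,2,3,4}$, $\textbf{COM}$, and $\textbf{CLOSE}$ in Table \ref{TRForPITC5}, and the corresponding probabilistic rules $\textbf{PPAR}$ in Table \ref{PTRForPITC5}, I would verify the action-matching clauses for fresh actions, bound inputs $x(y)$, paired bound inputs, and bound outputs $\overline{x}(y)$; the probabilistic clause $C_1\xrsquigarrow{\pi}C_1^{\pi}$ matches because $\mu$ factors multiplicatively over $\parallel$ and $\between$ (cf. Table \ref{PDFAPPTC}); and the PDF and $[\surd]_R=\{\surd\}$ clauses follow since commutativity/associativity of $\parallel$ preserve the partition and termination. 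Downward closure is immediate since the defining isomorphisms restrict to sub-configurations.

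Next I would handle law (4), $(y)(P_1\parallel P_2)\sim_{phhp}(y)P_1\parallel(y)P_2$ under the side condition $y\notin fn(P_1)\cap fn(P_2)$. Here the bisimulation relates a configuration of $(y)(P_1\parallel P_2)$ to the matching configuration of $(y)P_1\parallel(y)P_2$, and the side condition guarantees that the restricted name $y$ is genuinely local to at most one branch, so that pushing the restriction $(y)$ inside the parallel composition neither creates nor destroys communications. The transition analysis uses $\textbf{RES}_{1,2}$ and $\textbf{OPEN}_{1,2}$ together with the parallel rules, again checking all four sub-cases of the action clause plus the probabilistic, PDF, and termination clauses.

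The main obstacle I anticipate is law (4), specifically the interplay between name restriction and scope extrusion in the bound-output and $\textbf{CLOSE}$ sub-cases: one must check that the freshness side condition $y\notin fn(P_1)\cap fn(P_2)$ exactly rules out the situations where a $\overline{x}(y)$ extrusion from one branch could communicate with the other branch differently on the two sides, and that $\alpha$-renaming of the bound $y$ (handled by $\equiv_{\alpha}$, already shown to be a strongly probabilistic truly concurrent bisimulation) lets us normalize names so the two transition systems line up. For hhp the extra burden is confirming downward closure survives this restriction manipulation; since restricting to a sub-configuration commutes with the name-localization used to define $f$, I expect this to go through, but it is the step requiring genuine care rather than the purely mechanical monoid laws. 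In keeping with the style of the preceding propositions, the detailed clause-by-clause verification can then be omitted as routine once the relations are specified.
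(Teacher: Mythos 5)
Your proposal is correct and follows essentially the same route as the paper, which simply states that the laws follow from the definition of strongly probabilistic hhp-bisimulation and omits all details; you supply the explicit bisimulation relations, the relevant transition rules, and the downward-closure check that the paper leaves implicit. Your identification of law (4) and the restriction/scope-extrusion interplay as the only non-mechanical step is a reasonable refinement of, not a departure from, the paper's argument.
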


\begin{proof}
According to the definition of strongly probabilistic hhp-bisimulation, we can easily prove the above equations, and we omit the proof.
\end{proof}

\begin{theorem}[Expansion law for truly concurrent bisimilarities]
Let $P\equiv\boxplus_i\sum_j \alpha_{ij}.P_{ij}$ and $Q\equiv\boxplus_{k}\sum_l\beta_{kl}.Q_{kl}$, where $bn(\alpha_{ij})\cap fn(Q)=\emptyset$ for all $i,j$, and
  $bn(\beta_{kl})\cap fn(P)=\emptyset$ for all $k,l$. Then,

\begin{enumerate}
  \item $P\parallel Q\sim_{pp} \boxplus_{i}\boxplus_{k}\sum_j\sum_l (\alpha_{ij}\parallel \beta_{kl}).(P_{ij}\parallel Q_{kl})+\boxplus_i\boxplus_k\sum_{\alpha_{ij} \textrm{ comp }\beta_{kl}}\tau.R_{ijkl}$;
  \item $P\parallel Q\sim_{ps} \boxplus_{i}\boxplus_{k}\sum_j\sum_l (\alpha_{ij}\parallel \beta_{kl}).(P_{ij}\parallel Q_{kl})+\boxplus_i\boxplus_k\sum_{\alpha_{ij} \textrm{ comp }\beta_{kl}}\tau.R_{ijkl}$;
  \item $P\parallel Q\sim_{php} \boxplus_{i}\boxplus_{k}\sum_j\sum_l (\alpha_{ij}\parallel \beta_{kl}).(P_{ij}\parallel Q_{kl})+\boxplus_i\boxplus_k\sum_{\alpha_{ij} \textrm{ comp }\beta_{kl}}\tau.R_{ijkl}$;
  \item $P\parallel Q\nsim_{phhp} \boxplus_{i}\boxplus_{k}\sum_j\sum_l (\alpha_{ij}\parallel \beta_{kl}).(P_{ij}\parallel Q_{kl})+\boxplus_i\boxplus_k\sum_{\alpha_{ij} \textrm{ comp }\beta_{kl}}\tau.R_{ijkl}$.
\end{enumerate}

Where $\alpha_i$ comp $\beta_j$ and $R_{ij}$ are defined as follows:
\begin{enumerate}
  \item $\alpha_{ij}$ is $\overline{x}u$ and $\beta_{kl}$ is $x(v)$, then $R_{ijkl}=P_{ij}\parallel Q_{kl}\{u/v\}$;
  \item $\alpha_{ij}$ is $\overline{x}(u)$ and $\beta_{kl}$ is $x(v)$, then $R_{ijkl}=(w)(P_{ij}\{w/u\}\parallel Q_{kl}\{w/v\})$, if $w\notin fn((u)P_{ij})\cup fn((v)Q_{kl})$;
  \item $\alpha_{ij}$ is $x(v)$ and $\beta_{kl}$ is $\overline{x}u$, then $R_{ijkl}=P_{ij}\{u/v\}\parallel Q_{kl}$;
  \item $\alpha_{ij}$ is $x(v)$ and $\beta_{kl}$ is $\overline{x}(u)$, then $R_{ijkl}=(w)(P_{ij}\{w/v\}\parallel Q_{kl}\{w/u\})$, if $w\notin fn((v)P_{ij})\cup fn((u)Q_{kl})$.
\end{enumerate}
\end{theorem}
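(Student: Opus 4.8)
The plan is to handle the four items non-uniformly: items (1)--(3) are established together by exhibiting an explicit bisimulation, while item (4) is a genuine separation and must be argued by counterexample. Write $T$ for the right-hand side term. Since both $P\equiv\boxplus_i\sum_j\alpha_{ij}.P_{ij}$, $Q\equiv\boxplus_k\sum_l\beta_{kl}.Q_{kl}$ and $T$ are already in probabilistic head normal form, I would build a candidate relation $R$ containing the pair $(P\parallel Q,\,T)$ together with the identity pairs on all reachable residuals, in particular $(P_{ij}\parallel Q_{kl},\,P_{ij}\parallel Q_{kl})$ and $(R_{ijkl},\,R_{ijkl})$, and then verify clauses (1)--(4) of Definition \ref{PSB5} (respectively Definition \ref{HHPB5}) in both directions.

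First I would settle the probabilistic layer. By the rules $\textbf{PPAR}$, $\textbf{PSUM}$ and $\textbf{PBOX-SUM}$ of Table \ref{PTRForPITC5}, a probabilistic move $P\parallel Q\rightsquigarrow\cdot$ is obtained exactly by resolving $P\rightsquigarrow\sum_j\alpha_{ij}.\breve{P}_{ij}$ and $Q\rightsquigarrow\sum_l\beta_{kl}.\breve{Q}_{kl}$, i.e. by selecting a box-sum index pair $(i,k)$ carrying the product of the two branch weights. The outer $\boxplus_i\boxplus_k$ of $T$ performs the very same selection, so the resolved action states correspond one-to-one and the induced cumulative distributions on $\mathcal{C}(\mathcal{E})/R$ coincide; this is precisely what clause (3), $\mu(C_1,C)=\mu(C_2,C)$, demands, and clause (4), $[\surd]_R=\{\surd\}$, holds since termination is reached identically on both sides.

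Next, the action layer for items (1)--(3). After the probabilistic choice I perform a case analysis on the transition rules of Table \ref{TRForPITC5}. A step transition of the resolved $(\sum_j\alpha_{ij}.P_{ij})\parallel(\sum_l\beta_{kl}.Q_{kl})$ produced by $\textbf{PAR}_3$ or $\textbf{PAR}_4$ (two non-communicating prefixes fired together) is matched exactly by the summand $(\alpha_{ij}\parallel\beta_{kl}).(P_{ij}\parallel Q_{kl})$ of $T$, whereas a $\tau$-transition produced by $\textbf{COM}$ or $\textbf{CLOSE}$ is matched by the summand $\tau.R_{ijkl}$, the four shapes of $R_{ijkl}$ corresponding to free output/input ($\textbf{COM}$) and bound output/input with scope extrusion ($\textbf{CLOSE}$ together with $\textbf{OPEN}$). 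Here the hypotheses $bn(\alpha_{ij})\cap fn(Q)=\emptyset$ and $bn(\beta_{kl})\cap fn(P)=\emptyset$ are exactly the side conditions legalising these rules and guaranteeing that the fresh name $w$ chosen in the bound cases may be used consistently on both sides. For the pomset reading I additionally check the label order-isomorphism $X_1\sim X_2$ (the identity here), for the step reading that the fired set is pairwise concurrent, and for the hp-reading that the extended map $f[e_1\mapsto e_2]$ stays an order-isomorphism between the two configurations; in every case the successor pair lies again in $R$, and the converse simulation is symmetric. This yields $\sim_{pp}$, $\sim_{ps}$ and $\sim_{php}$.

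The real obstacle is item (4), the failure for $\sim_{phhp}$. The point is that the hp-bisimulation produced above cannot be taken downward closed. I would exhibit a concrete small instance, e.g. choosing $P$ and $Q$ so that $P\parallel Q$ contains two causally independent concurrent events that the expansion bundles under the single prefix $(\alpha_{ij}\parallel\beta_{kl})$, and then display a sub-configuration obtained by deleting the later event at which no order-isomorphism compatible with $R$ survives, so the downward-closure clause of Definition \ref{HHPB5} is violated. The delicate part is to argue not merely that the obvious relation fails to be downward closed but that \emph{every} downward-closed refinement breaks, which is why I would also buttress the counterexample with the impossibility noted after Theorem \ref{CBATC} (Moller \cite{ILM}): were the expansion valid modulo $\sim_{phhp}$, it would, with the finite parallel axioms, eliminate $\parallel$ and so furnish a finite sound and complete axiomatisation of parallelism modulo hhp-bisimilarity, which cannot exist. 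Hence $P\parallel Q\nsim_{phhp} T$ in general, as claimed.
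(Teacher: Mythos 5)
Your proposal is correct and takes essentially the same route as the paper, whose entire proof is the single sentence that the equations follow ``according to the definition of strongly probabilistic truly concurrent bisimulations'' with all details omitted. Your elaboration --- the explicit candidate relation, the case split on $\textbf{PAR}_{3,4}$ versus $\textbf{COM}$/$\textbf{CLOSE}$ for the action layer, and the Moller-based impossibility argument for item (4), which mirrors the paper's own remark following Theorem \ref{CBATC} --- supplies considerably more than the paper records, the only remaining debts being the concrete hhp-counterexample you defer and the observation that the two outer box-summations on the right-hand side resolve independently under $\textbf{PSUM}$, a point your matching of probabilistic resolutions glosses over.
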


\begin{proof}
According to the definition of strongly probabilistic truly concurrent bisimulations, we can easily prove the above equations, and we omit the proof.
\end{proof}

\begin{theorem}[Equivalence and congruence for strongly probabilistic pomset bisimilarity]
\begin{enumerate}
  \item $\sim_{pp}$ is an equivalence relation;
  \item If $P\sim_{pp} Q$ then
  \begin{enumerate}
    \item $\alpha.P\sim_{pp} \alpha.Q$, $\alpha$ is a free action;
    \item $P+R\sim_{pp} Q+R$;
    \item $P\boxplus_{\pi} R\sim_{pp}Q\boxplus_{\pi}R$
    \item $P\parallel R\sim_{pp} Q\parallel R$;
    \item $(w)P\sim_{pp} (w)Q$;
    \item $x(y).P\sim_{pp} x(y).Q$.
  \end{enumerate}
\end{enumerate}
\end{theorem}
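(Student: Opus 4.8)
The plan is to verify, for each claim, the four defining clauses of a strongly probabilistic pomset bisimulation in Definition~\ref{PSB5}: the pomset-matching clause together with its sub-cases on the shape of the events in $X_1$ (fresh action, input $x(y)$, paired inputs, bound output $\overline{x}(y)$), the probabilistic-transition clause $C_1\xrsquigarrow{\pi}C_1^{\pi}\Rightarrow C_2\xrsquigarrow{\pi}C_2^{\pi}$, the cumulative-probability clause $\mu(C_1,C)=\mu(C_2,C)$ for every class $C\in\mathcal{C}(\mathcal{E})/\mathcal{R}$, and $[\surd]_{\mathcal{R}}=\{\surd\}$. For part~(1) I would take the identity relation to witness reflexivity (all clauses hold trivially, since $\mu(C,\cdot)=\mu(C,\cdot)$); symmetry is immediate because Definition~\ref{PSB5} is phrased with ``and vice-versa'', so the inverse $\mathcal{R}^{-1}$ of any strongly probabilistic pomset bisimulation is again one; and for transitivity I would show that the relational composition $\mathcal{R}_1\circ\mathcal{R}_2$ of two such bisimulations is again one.

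The delicate point in transitivity is the cumulative-probability clause: I must show $\mu(C_1,C)=\mu(C_3,C)$ for every class $C$ of $\mathcal{R}_1\circ\mathcal{R}_2$ whenever $(C_1,C_3)\in\mathcal{R}_1\circ\mathcal{R}_2$. The standard way I would handle this is to observe that, on the relevant part of the state space, every $(\mathcal{R}_1\circ\mathcal{R}_2)$-class is a union of $\mathcal{R}_1$-classes and also a union of $\mathcal{R}_2$-classes; picking an intermediate $C_2$ with $(C_1,C_2)\in\mathcal{R}_1$ and $(C_2,C_3)\in\mathcal{R}_2$ and summing the equalities $\mu(C_1,\cdot)=\mu(C_2,\cdot)$ and $\mu(C_2,\cdot)=\mu(C_3,\cdot)$ over the constituent classes yields the desired equality. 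The action- and probabilistic-transition clauses compose routinely.

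For part~(2), for each operator I would start from a bisimulation $\mathcal{R}$ with $(\emptyset,\emptyset)\in\mathcal{R}$ witnessing $P\sim_{pp}Q$ and exhibit a candidate relation on the configurations of the compound terms, then check it is a strongly probabilistic pomset bisimulation containing $(\emptyset,\emptyset)$. Concretely, for (a) prefix $\alpha.P\sim_{pp}\alpha.Q$ and (f) input $x(y).P\sim_{pp}x(y).Q$ I would take $\{(\alpha.C_1,\alpha.C_2)\}\cup\mathcal{R}\cup\{(\surd,\surd)\}$, using the probabilistic rules \textbf{PTAU-ACT}/\textbf{POUTPUT-ACT}/\textbf{PINPUT-ACT} to match the single initial $\rightsquigarrow$-step (where $\mu=1$) and then \textbf{TAU-ACT}/\textbf{OUTPUT-ACT}/\textbf{INPUT-ACT} to reduce to $(P,Q)\in\mathcal{R}$; for (f) the late phrasing of Definition~\ref{PSB5}, which quantifies the matching transition over all $w$, is exactly what makes the input-prefix case go through. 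For (b), (c), (d) and (e) I would use $\{(C_1+D,C_2+D)\}\cup\mathcal{R}$, $\{(C_1\boxplus_{\pi}D,C_2\boxplus_{\pi}D)\}\cup\mathcal{R}$, $\{(C_1\parallel D,C_2\parallel D)\}\cup\mathcal{R}$, and $\{((w)C_1,(w)C_2)\}\cup\mathcal{R}$ respectively, with $D$ ranging over the configurations of $R$, and perform a case analysis on the last rule used to derive each transition, appealing to the PDF clauses and the transition rules in Tables~\ref{PTRForPITC5} and~\ref{TRForPITC5}.

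The main obstacles I expect are two. First, in the parallel case~(d) the pomset-matching clause must be checked through the full combinatorics of the rules $\textbf{PAR}_1,\dots,\textbf{PAR}_4$, $\textbf{COM}$ and $\textbf{CLOSE}$, keeping track of the bound-name side conditions $bn(\alpha)\cap fn(\cdot)=\emptyset$ and of the probabilistic $\textbf{PPAR}$ step that precedes each action; I would lean on the restriction and parallel laws already established to normalise the residuals. Second, in the box-summation case~(c) the cumulative-probability clause is genuinely probabilistic: the $\rightsquigarrow$-transitions of $P\boxplus_{\pi}R$ split via \textbf{PBOX-SUM} into a $P$-branch and an $R$-branch weighted by $\pi$ and $1-\pi$, so I must verify that $\mathcal{R}$-equivalent resolutions receive equal $\mu$-mass after mixing; this is where the probability-distribution clause interacts most tightly with the choice operator, and it is the step that requires the most care.
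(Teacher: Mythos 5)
Your plan is correct and follows exactly the route the paper itself gestures at: the paper's ``proof'' of this theorem is a one-line appeal to the definition of strongly probabilistic pomset bisimulation with all details omitted, and your proposal simply carries out that direct verification (identity, inverse and composition of bisimulations for part (1); explicit witness relations and clause-by-clause checking for each operator in part (2)). The two points you flag as delicate --- the cumulative-probability clause under relational composition and the $\boxplus_{\pi}$ mixing --- are indeed the only places requiring real care, and your treatment of them is sound.
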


\begin{proof}
According to the definition of strongly probabilistic pomset bisimulation, we can easily prove the above equations, and we omit the proof.
\end{proof}

\begin{theorem}[Equivalence and congruence for strongly probabilistic step bisimilarity]
\begin{enumerate}
  \item $\sim_{ps}$ is an equivalence relation;
  \item If $P\sim_{ps} Q$ then
  \begin{enumerate}
    \item $\alpha.P\sim_{ps} \alpha.Q$, $\alpha$ is a free action;
    \item $P+R\sim_{ps} Q+R$;
    \item $P\boxplus_{\pi} R\sim_{ps}Q\boxplus_{\pi}R$
    \item $P\parallel R\sim_{ps} Q\parallel R$;
    \item $(w)P\sim_{ps} (w)Q$;
    \item $x(y).P\sim_{ps} x(y).Q$.
  \end{enumerate}
\end{enumerate}
\end{theorem}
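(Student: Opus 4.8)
The plan is to prove the two assertions by exhibiting, in each case, an explicit strongly probabilistic step bisimulation (Definition \ref{PSB5}, with pomset transitions replaced by steps) and verifying its four defining clauses: the step clause with its four sub-cases (a fresh action, a single input $x(y)$, two inputs $x_1(y),x_2(y)$, and a bound output $\overline{x}(y)$), the probabilistic-transition clause, the $\mu$-matching clause over $R$-classes, and $[\surd]_R=\{\surd\}$. For part (1), I would establish reflexivity, symmetry and transitivity. Reflexivity holds because the identity relation is trivially a step bisimulation; symmetry holds because every clause of Definition \ref{PSB5} is stated together with its ``vice-versa'', so $R^{-1}$ works whenever $R$ does. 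For transitivity, given step bisimulations $R_1$ for $(P,Q)$ and $R_2$ for $(Q,S)$, I would show the composition $R_1\circ R_2$ is a step bisimulation; the only non-routine point is the probabilistic clause, where I would pass to the common refinement of the quotients $\mathcal{C}(\mathcal{E})/R_1$ and $\mathcal{C}(\mathcal{E})/R_2$ so that the two equalities $\mu(C_1,C)=\mu(C_2,C)$ and $\mu(C_2,C)=\mu(C_3,C)$ can be chained across each $(R_1\circ R_2)$-class.

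For part (2), for each operator I would fix a step bisimulation $R$ witnessing $P\sim_{ps}Q$ (with $(\emptyset,\emptyset)\in R$) and close it under the operator. The prefix, summation and box-summation cases are the most direct. For $\alpha.P$ and $\alpha.Q$ the relation $\{(\alpha.P,\alpha.Q)\}\cup R$ suffices once one checks that the single probabilistic step $\alpha.P\rightsquigarrow\breve{\alpha}.P$ (rules \textbf{PTAU-ACT}, \textbf{POUTPUT-ACT}, \textbf{PINPUT-ACT}) is matched and that the following action (rule \textbf{TAU-ACT}/\textbf{OUTPUT-ACT}) lands inside $R$. For $P+R_0$ and $P\boxplus_\pi R_0$ I would add the pairs generated by \textbf{PSUM}/\textbf{PBOX-SUM} followed by \textbf{SUM}$_{1,2}$; in the $\boxplus_\pi$ case the $\mu$-clause is checked directly from the probabilistic rules, the weights $\pi$ and $1-\pi$ being identical on both sides.

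The restriction case and the input-prefix case require one auxiliary fact: that $\sim_{ps}$ is preserved under name substitution, i.e.\ $P\sim_{ps}Q$ implies $P\{w/z\}\sim_{ps}Q\{w/z\}$ for the fresh names $w$ introduced by \textbf{INPUT-ACT}, \textbf{OPEN}$_{1,2}$, \textbf{PAR}$_4$, \textbf{COM} and \textbf{CLOSE}. I would first record this substitution lemma, proved by induction on the depth of inference using the transition properties collected in Section \ref{sos5} and the definition of substitution (Definition \ref{subs5}), and then feed it into the matching of sub-clauses (2)--(4) of Definition \ref{PSB5}. With the lemma in hand, $(w)P\sim_{ps}(w)Q$ follows by tracking the side conditions $y\notin n(\alpha)$ through \textbf{RES}$_{1,2}$ and the name extrusion through \textbf{OPEN}$_{1,2}$, while $x(y).P\sim_{ps}x(y).Q$ follows by matching $x(w)$-transitions and appealing to $P\{w/y\}\sim_{ps}Q\{w/y\}$.

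The hard part will be the parallel case. Here I would take $R'=\{(P_1\parallel R_0,\,Q_1\parallel R_0):(P_1,Q_1)\in R\}$, suitably closed over reachable configurations and under the identity on the $R_0$ component, and verify the step clause against \textbf{PPAR}, \textbf{PAR}$_{1\text{--}4}$, \textbf{COM} and \textbf{CLOSE}. The difficulties are threefold. First, a step label may mix an action of the left component, an action of the right component, and a synchronization, so each of the four sub-cases of Definition \ref{PSB5} must be checked for every constituent while respecting the bound-name side conditions $bn(\alpha)\cap fn(\cdot)=\emptyset$. Second, \textbf{CLOSE} generates fresh restricted names and \textbf{COM}, \textbf{PAR}$_4$ generate substitutions, so the substitution lemma is invoked again. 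Third, the probabilistic clause must be shown to survive the product structure of \textbf{PPAR}, i.e.\ the $\mu$-weights factor over the two branches, so that equal weights on each factor yield equal weights on the composite. I expect this bound-name bookkeeping together with the factorization of $\mu$ to be the genuine technical work; where convenient the expansion law already established for these bisimilarities can be used to normalise $P\parallel R_0$ before checking the clauses.
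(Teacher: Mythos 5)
The paper offers no proof of this theorem at all --- its ``proof'' consists of the single sentence that the equations follow easily from the definition and are omitted --- so your plan supplies essentially all of the content. Your overall architecture (exhibit an explicit relation for each clause, verify the four defining conditions of Definition \ref{PSB5}, handle transitivity of the $\mu$-clause by refining the quotients, and treat the parallel case by closing $R$ under composition with a fixed context) is the standard one and is surely what the paper intends.

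There is, however, one genuine gap, and it sits exactly where you lean hardest: the ``substitution lemma'' $P\sim_{ps}Q\Rightarrow P\{w/z\}\sim_{ps}Q\{w/z\}$, which you propose to prove by induction on the depth of inference and then feed into clause 2(f). For \emph{injective} renamings by fresh names (the bookkeeping required by $\textbf{OPEN}_{1,2}$, $\textbf{CLOSE}$ and $\textbf{RES}_{1,2}$) the lemma is unproblematic. But the input-prefix case needs it for arbitrary received names $w$, including names already free in $P$ and $Q$, and there the induction breaks at the parallel-composition case: a substitution that identifies two previously distinct names can enable a $\textbf{COM}$-transition in $P\sigma$ whose derivation has no counterpart in any derivation of a transition of $P$. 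Concretely, taking $P\equiv\overline{x}u.\textbf{nil}\parallel y(v).\textbf{nil}$ with $x\neq y$ and $Q$ its expansion by the Expansion law (which contains no $\tau$-summand since $\overline{x}u$ and $y(v)$ are not complementary), one has $P\sim_{ps}Q$, yet $P\{x/y\}$ admits a $\tau$-transition that $Q\{x/y\}$ cannot match. This is the classical reason why ground (late) bisimilarity in the $\pi$-calculus is not preserved by input prefix; the same phenomenon occurs here. To repair clause 2(f) you must either strengthen its hypothesis to $P\{w/y\}\sim_{ps}Q\{w/y\}$ for all $w$, or work with the substitution-closed variant of $\sim_{ps}$ throughout; your argument as written proves the remaining clauses but not 2(f).
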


\begin{proof}
According to the definition of strongly probabilistic step bisimulation, we can easily prove the above equations, and we omit the proof.
\end{proof}

\begin{theorem}[Equivalence and congruence for strongly probabilistic hp-bisimilarity]
\begin{enumerate}
  \item $\sim_{php}$ is an equivalence relation;
  \item If $P\sim_{php} Q$ then
  \begin{enumerate}
    \item $\alpha.P\sim_{php} \alpha.Q$, $\alpha$ is a free action;
    \item $P+R\sim_{php} Q+R$;
    \item $P\boxplus_{\pi} R\sim_{php}Q\boxplus_{\pi}R$
    \item $P\parallel R\sim_{php} Q\parallel R$;
    \item $(w)P\sim_{php} (w)Q$;
    \item $x(y).P\sim_{php} x(y).Q$.
  \end{enumerate}
\end{enumerate}
\end{theorem}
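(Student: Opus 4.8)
The plan is to prove that $\sim_{php}$ is both an equivalence relation and a congruence with respect to the process constructors listed in the statement, namely free-action prefixing $\alpha.-$, summation $+$, box-summation $\boxplus_{\pi}$, parallel composition $\parallel$, restriction $(w)-$, and input prefixing $x(y).-$. For the equivalence part I would exhibit the three standard witnesses: reflexivity via the identity posetal relation $Id$ (with $f$ the identity isomorphism on configurations and $\mu$ trivially matched), symmetry by swapping the triples $(C_1,f,C_2)\mapsto(C_2,f^{-1},C_1)$ and checking that the back-and-forth clauses together with the PDF condition $\mu(C_1,C)=\mu(C_2,C)$ of Definition \ref{HHPB5} are themselves symmetric, and transitivity by composing two strongly probabilistic hp-bisimulations $R_1,R_2$ into $\{(C_1,f_2\circ f_1,C_3)\}$ and checking the probabilistic clause propagates through the quotient $\mathcal{C}(\mathcal{E})/R$.

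For each congruence clause I would build, from a given hp-bisimulation $R$ witnessing $P\sim_{php}Q$, a candidate posetal relation for the compound process and verify it satisfies all four conditions of Definition \ref{HHPB5}. The key steps, in order: first handle prefixing $\alpha.P\sim_{php}\alpha.Q$ and $x(y).P\sim_{php}x(y).Q$, where the relation is $R\cup\{(\emptyset,\emptyset,\emptyset)\}$ and one simply notes that the single initial transition (guarded by rules \textbf{TAU-ACT}, \textbf{OUTPUT-ACT}, \textbf{INPUT-ACT} and the probabilistic rules \textbf{PTAU-ACT}, \textbf{POUTPUT-ACT}, \textbf{PINPUT-ACT} of Tables \ref{PTRForPITC5} and \ref{TRForPITC5}) leads into $R$; second, $+$ and $\boxplus_{\pi}$, using the relation $\{(C_1,f,C_2):(C_1,f,C_2)\in R\}$ augmented with the root pair, where for $+$ the action rules \textbf{SUM}_{1,2} route an initial move to the appropriate summand and for $\boxplus_{\pi}$ the probabilistic rule \textbf{PBOX-SUM} together with the PDF definition forces the $\mu$-clause; third, restriction, where I would close $R$ under the operator $(w)-$ and use \textbf{RES}_{1,2}/\textbf{OPEN}_{1,2}, taking care that bound-name side conditions are preserved; and finally parallel composition, the most involved constructor, where the candidate relation pairs $(C_1^P\cup C_1^R,\,f_P\cup Id,\,C_2^Q\cup C_2^R)$ obtained by interleaving a move of $R$ through the parallel rules \textbf{PAR}_{1\text{--}4}, \textbf{COM}, \textbf{CLOSE} and the probabilistic rule \textbf{PPAR}.

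The hard part will be the parallel case, and within it the interaction between the order-isomorphism bookkeeping $f$ and the probabilistic PDF condition. For the non-probabilistic action moves one must check all of \textbf{PAR}_{1\text{--}4}, \textbf{COM}, \textbf{CLOSE} while simultaneously tracking that the extended map $f[e_1\mapsto e_2]$ remains a configuration isomorphism on $\hat{C_1}\to\hat{C_2}$, and that fresh-action, bound-input, and bound-output subcases of Definition \ref{HHPB5} are each discharged for the combined transition; the bound-name side conditions from the transition tables must be matched up on both sides. The probabilistic clause requires showing $\mu((C_1^P\parallel C_1^R),C)=\mu((C_2^Q\parallel C_2^R),C)$ for every class $C$, which reduces to the multiplicativity $\mu(x\parallel y,\ldots)=\mu(x,x')\cdot\mu(y,y')$ recorded in the PDF tables together with the assumption that $R$ already equates the marginal distributions of $P$ and $Q$.

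Since the paper's intended argument is that these facts follow routinely from the definition (every analogous proposition in this section is dispatched with ``we omit the proof''), I would present the construction of the witnessing relations and the verification schema, then assert that the remaining case analysis is a mechanical check of the transition rules against Definition \ref{HHPB5}, and omit the exhaustive enumeration. The genuinely load-bearing observations are the isomorphism-extension lemma implicit in $f[e_1\mapsto e_2]$ and the PDF multiplicativity; everything else is bookkeeping.
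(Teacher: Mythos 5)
Your proposal is correct and follows the same route the paper intends: the paper's own proof is just the one-line remark that the claims follow directly from the definition of strongly probabilistic hp-bisimulation, and your construction of the witnessing posetal relations (identity, inverse, and composition for equivalence; the lifted relations for each operator, with the $f[e_1\mapsto e_2]$ extension and the PDF multiplicativity as the only non-trivial checks) is exactly the standard argument being elided. Your write-up simply supplies the scaffolding the paper omits.
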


\begin{proof}
According to the definition of strongly probabilistic hp-bisimulation, we can easily prove the above equations, and we omit the proof.
\end{proof}

\begin{theorem}[Equivalence and congruence for strongly probabilistic hhp-bisimilarity]
\begin{enumerate}
  \item $\sim_{phhp}$ is an equivalence relation;
  \item If $P\sim_{phhp} Q$ then
  \begin{enumerate}
    \item $\alpha.P\sim_{phhp} \alpha.Q$, $\alpha$ is a free action;
    \item $P+R\sim_{phhp} Q+R$;
    \item $P\boxplus_{\pi} R\sim_{phhp}Q\boxplus_{\pi}R$
    \item $P\parallel R\sim_{phhp} Q\parallel R$;
    \item $(w)P\sim_{phhp} (w)Q$;
    \item $x(y).P\sim_{phhp} x(y).Q$.
  \end{enumerate}
\end{enumerate}
\end{theorem}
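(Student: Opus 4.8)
The plan is to exploit the fact that, by Definition \ref{HHPB5}, a strongly probabilistic hhp-bisimulation is \emph{precisely} a downward closed strongly probabilistic hp-bisimulation. Hence every assertion in the statement reduces to two tasks: first, reproduce the corresponding argument already used for $\sim_{php}$ (the immediately preceding theorem), producing in each case a witnessing probabilistic hp-bisimulation $R$ that satisfies clauses (1)--(4) of Definition \ref{HHPB5}; and second, verify that the $R$ so produced may be taken downward closed with respect to the pointwise ordering on the posetal product $\mathcal{C}(\mathcal{E}_1)\overline{\times}\mathcal{C}(\mathcal{E}_2)$. Since the probabilistic and termination clauses (2)--(4) are identical for hp and hhp, the entire extra content of this theorem is the downward-closure bookkeeping.

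For statement (1) I would establish reflexivity, symmetry and transitivity directly on posetal relations. Reflexivity is witnessed by $\{(C,\mathrm{id}_C,C)\}$, which is trivially downward closed and satisfies clauses (2)--(4) with equalities. Symmetry follows by inverting each triple $(C_1,f,C_2)$ to $(C_2,f^{-1},C_1)$; downward closure survives because the pointwise ordering is preserved under this inversion. Transitivity uses relational composition with composed order-isomorphisms $g\circ f$, and here I must check that the composite relation is again downward closed and that the PDF condition $\mu(C_1,C)=\mu(C_2,C)$ over each $R$-class chains correctly through the intermediate configurations; this equivalence-class bookkeeping is the only mildly delicate point of part (1).

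For statement (2) I would treat the operators one at a time, in each case taking the downward closed hp-bisimulation $R$ witnessing $P\sim_{phhp}Q$ and closing it under the relevant syntactic context. For prefixing $\alpha.P\sim_{phhp}\alpha.Q$, and for the input prefix $x(y).P\sim_{phhp}x(y).Q$ where the clauses of Definition \ref{HHPB5} must be checked against all instantiations $w$ of the bound name $y$, the candidate relation prepends the common initial step to $R$; one verifies clauses (2)--(4) and then observes that adding a common causal predecessor cannot break downward closure. The cases $P+R$, $P\boxplus_{\pi}R$ and $(w)P$ are handled by closing $R$ under the respective context and re-checking (2)--(4), the restriction case additionally carrying the free-name side condition that already appears in the restriction laws.

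The hard part will be the parallel case $P\parallel R\sim_{phhp}Q\parallel R$. The witnessing relation must pair configurations of the composite event structures through an order-isomorphism $f$ that respects the interleaving of autonomous, synchronising ($\overline{x}y$ with $x(z)$ via \textbf{COM}) and bound-output-closing ($\overline{x}(w)$ with $x(w)$ via \textbf{CLOSE}) transitions, while the freshness conditions $bn(\alpha)\cap fn(\cdot)=\emptyset$ from Table \ref{TRForPITC5} keep the substitutions coherent. The genuinely new obligation relative to $\sim_{php}$ is downward closure: whenever the constructed relation contains $(C_1',f',C_2')$ and $(C_1,f,C_2)\subseteq(C_1',f',C_2')$ pointwise, I must exhibit the smaller triple in the relation as well, which means the restriction of $f'$ to a causally-closed sub-configuration must still be realised by genuine (possibly bound-name) runs of $P\parallel R$ and $Q\parallel R$. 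Because $R$ is already downward closed and the projections of a configuration of $P\parallel R$ onto its two parallel components are again configurations, the closure does go through, but checking that both the fresh-name bookkeeping and the PDF clause survive this restriction is where essentially all the work lies.
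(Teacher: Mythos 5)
The paper's own proof of this theorem is omitted entirely---it consists of the single remark that the claims can easily be proved according to the definition of strongly probabilistic hhp-bisimulation---so your proposal, which argues directly from Definition \ref{HHPB5} by exhibiting, for each clause, a witnessing downward closed strongly probabilistic hp-bisimulation, follows the same (and only) route the paper indicates. Your identification of the parallel-composition case, and of the preservation of downward closure together with the PDF clause under context closure, as the genuinely nontrivial obligations is a correct and considerably more informative account of what the paper leaves unsaid.
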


\begin{proof}
According to the definition of strongly probabilistic hhp-bisimulation, we can easily prove the above equations, and we omit the proof.
\end{proof}

\subsubsection{Recursion}

\begin{definition}
Let $X$ have arity $n$, and let $\widetilde{x}=x_1,\cdots,x_n$ be distinct names, and $fn(P)\subseteq\{x_1,\cdots,x_n\}$. The replacement of $X(\widetilde{x})$ by $P$ in $E$, written 
$E\{X(\widetilde{x}):=P\}$, means the result of replacing each subterm $X(\widetilde{y})$ in $E$ by $P\{\widetilde{y}/\widetilde{x}\}$.
\end{definition}

\begin{definition}
Let $E$ and $F$ be two process expressions containing only $X_1,\cdots,X_m$ with associated name sequences $\widetilde{x}_1,\cdots,\widetilde{x}_m$. Then,
\begin{enumerate}
  \item $E\sim_{pp} F$ means $E(\widetilde{P})\sim_{pp} F(\widetilde{P})$;
  \item $E\sim_{ps} F$ means $E(\widetilde{P})\sim_{ps} F(\widetilde{P})$;
  \item $E\sim_{php} F$ means $E(\widetilde{P})\sim_{php} F(\widetilde{P})$;
  \item $E\sim_{phhp} F$ means $E(\widetilde{P})\sim_{phhp} F(\widetilde{P})$;
\end{enumerate}

for all $\widetilde{P}$ such that $fn(P_i)\subseteq \widetilde{x}_i$ for each $i$.
\end{definition}

\begin{definition}
A term or identifier is weakly guarded in $P$ if it lies within some subterm $\alpha.Q$ or $(\alpha_1\parallel\cdots\parallel \alpha_n).Q$ of $P$.
\end{definition}

\begin{theorem}
Assume that $\widetilde{E}$ and $\widetilde{F}$ are expressions containing only $X_i$ with $\widetilde{x}_i$, and $\widetilde{A}$ and $\widetilde{B}$ are identifiers with $A_i$, $B_i$. Then, for all $i$,
\begin{enumerate}
  \item $E_i\sim_{ps} F_i$, $A_i(\widetilde{x}_i)\overset{\text{def}}{=}E_i(\widetilde{A})$, $B_i(\widetilde{x}_i)\overset{\text{def}}{=}F_i(\widetilde{B})$, then 
  $A_i(\widetilde{x}_i)\sim_{ps} B_i(\widetilde{x}_i)$;
  \item $E_i\sim_{pp} F_i$, $A_i(\widetilde{x}_i)\overset{\text{def}}{=}E_i(\widetilde{A})$, $B_i(\widetilde{x}_i)\overset{\text{def}}{=}F_i(\widetilde{B})$, then 
  $A_i(\widetilde{x}_i)\sim_{pp} B_i(\widetilde{x}_i)$;
  \item $E_i\sim_{php} F_i$, $A_i(\widetilde{x}_i)\overset{\text{def}}{=}E_i(\widetilde{A})$, $B_i(\widetilde{x}_i)\overset{\text{def}}{=}F_i(\widetilde{B})$, then 
  $A_i(\widetilde{x}_i)\sim_{php} B_i(\widetilde{x}_i)$;
  \item $E_i\sim_{phhp} F_i$, $A_i(\widetilde{x}_i)\overset{\text{def}}{=}E_i(\widetilde{A})$, $B_i(\widetilde{x}_i)\overset{\text{def}}{=}F_i(\widetilde{B})$, then 
  $A_i(\widetilde{x}_i)\sim_{phhp} B_i(\widetilde{x}_i)$.
\end{enumerate}
\end{theorem}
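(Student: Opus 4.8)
The plan is to prove part (1), for strongly probabilistic step bisimilarity $\sim_{ps}$, in full, and then to obtain parts (2)--(4) by the same construction with the clause-by-clause adaptations that distinguish pomset, hp- and hhp-bisimilarity. Throughout I would use that $\sim_{ps}$ is an equivalence and a congruence with respect to all operators of $\pi_{ptc}$ (the Equivalence and congruence theorem established just above), and that, by the defining equations, $A_i(\widetilde{x}_i)\overset{\text{def}}{=}E_i(\widetilde{A})$ and $B_i(\widetilde{x}_i)\overset{\text{def}}{=}F_i(\widetilde{B})$. Since the statement ranges over all $i$ at once, I treat $\widetilde{A}$ and $\widetilde{B}$ as tuples and write $G(\widetilde{A})$, $G(\widetilde{B})$ for the two closed instances of an expression $G$ whose free variables lie among $\widetilde{X}$. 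Note that weak guardedness is \emph{not} needed here, in contrast to the unique-solution theorems: the relation will pair a context with itself, so the recursion can safely be unfolded.

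First I would introduce the candidate relation
\[
\mathcal{R} = \{ (G(\widetilde{A}),\, G(\widetilde{B})) : \mathrm{fv}(G)\subseteq\{\widetilde{X}\} \},
\]
and show that its closure $\sim_{ps}\!\circ\,\mathcal{R}\,\circ\sim_{ps}$ is a strongly probabilistic step bisimulation in the sense of Definition \ref{PSB5}; taking $G\equiv X_i$ then yields $A_i\,\mathcal{R}\,B_i$, whence $A_i\sim_{ps}B_i$. The heart of the verification is a transition-decomposition argument, in the spirit of the transition and substitution propositions of Section \ref{sos5} (and of Lemma \ref{LUS3} for CPTC): for a probabilistic step $G(\widetilde{A})\rightsquigarrow\xrightarrow{X}P'$ I would show, by induction on the depth of inference, that $P'\equiv G'(\widetilde{A})$ for some expression $G'$ with $G(\widetilde{B})\rightsquigarrow\xrightarrow{X}G'(\widetilde{B})$ matching it and $(G'(\widetilde{A}),G'(\widetilde{B}))\in\mathcal{R}$ --- with the single exception of the base case $G\equiv X_i$. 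The base case is exactly where the hypothesis $E_i\sim_{ps}F_i$ enters: here $G(\widetilde{A})=A_i=E_i(\widetilde{A})$, so a step of $A_i$ is a step of $E_i(\widetilde{A})$; the contextual induction applied to the expression $E_i$ gives $E_i(\widetilde{B})\rightsquigarrow\xrightarrow{X}E_i'(\widetilde{B})$ with $(E_i'(\widetilde{A}),E_i'(\widetilde{B}))\in\mathcal{R}$, and since $E_i(\widetilde{B})\sim_{ps}F_i(\widetilde{B})=B_i$ this step is answered by $B_i\rightsquigarrow\xrightarrow{X}Q'$ with $E_i'(\widetilde{B})\sim_{ps}Q'$. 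Composing gives $P'=E_i'(\widetilde{A})\;(\mathcal{R}\circ\sim_{ps})\;Q'$, inside the closure, as required. The purely probabilistic clause (matching $\rightsquigarrow$ alone), the distribution clause $\mu(C_1,C)=\mu(C_2,C)$ and the termination clause $[\surd]_R=\{\surd\}$ I would dispatch by the same induction, reading off the PDF definitions so that the probabilistic branchings of $G(\widetilde{A})$ and $G(\widetilde{B})$ coincide.

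Two further layers of bookkeeping remain. Because $\pi_{ptc}$ has bound input and bound output, the matching of an action in $X$ must be split into the fresh-action, $x(y)$, $\{x_1(y),x_2(y)\}$ and $\overline{x}(y)$ cases of Definition \ref{PSB5}; these I would handle with the substitution propositions of Section \ref{sos5} and Definition \ref{subs5}, choosing a fresh $w\notin n(\mathcal{E}_1,\mathcal{E}_2)$ uniformly on both sides so that the two late transitions align for all instantiations. For parts (2)--(4) the \emph{same} relation $\mathcal{R}$ is used: for pomset bisimilarity (2) one replaces single events by pomset labels $X$ with $X_1\sim X_2$; for hp-bisimilarity (3) one carries the order-isomorphism $f$ along and checks that $(C_1',f[e_1\mapsto e_2],C_2')$ stays in the relation (using Definition \ref{HHPB5}); and for hhp-bisimilarity (4) one additionally demands downward closure.

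The hhp case is where I expect the main obstacle. I must verify that the $\sim_{phhp}$-closure of $\mathcal{R}$ is still a \emph{downward closed} weakly posetal relation while also absorbing the recursion-unfolding step of the base case, and downward closure does not in general survive composition with $\sim_{phhp}$ on both sides. My plan is to avoid appealing to a generic up-to-$\sim$ soundness result and instead take the downward closure of $\sim_{phhp}\!\circ\,\mathcal{R}\,\circ\sim_{phhp}$ explicitly, then show, using that $\sim_{phhp}$ is itself downward closed and a congruence, that the enlarged relation still meets all four clauses of Definition \ref{HHPB5}. Once this closure lemma is in place the remaining matching argument transfers verbatim from part (1), and $A_i\sim_{phhp}B_i$ follows from $(A_i,B_i)\in\mathcal{R}$.
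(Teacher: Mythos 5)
Your proposal follows essentially the same route as the paper: the paper also defines the relation $R=\{(G(A),G(B)):G\textrm{ has only identifier }X\}$, reduces the claim to matching probabilistic steps of free/bound-output actions and bound inputs with residuals related up to $\sim_{ps}$, closes by induction on the depth of inference, and transfers the argument to the other three bisimilarities. Your version is more explicit about the up-to-$\sim$ closure, the base case where $E_i\sim F_i$ enters, and the downward-closure issue for $\sim_{phhp}$, but these are elaborations of the paper's (largely omitted) details rather than a different argument.
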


\begin{proof}
\begin{enumerate}
  \item $E_i\sim_{ps} F_i$, $A_i(\widetilde{x}_i)\overset{\text{def}}{=}E_i(\widetilde{A})$, $B_i(\widetilde{x}_i)\overset{\text{def}}{=}F_i(\widetilde{B})$, then 
  $A_i(\widetilde{x}_i)\sim_{ps} B_i(\widetilde{x}_i)$.

      We will consider the case $I=\{1\}$ with loss of generality, and show the following relation $R$ is a strongly probabilistic step bisimulation.

      $$R=\{(G(A),G(B)):G\textrm{ has only identifier }X\}.$$

      By choosing $G\equiv X(\widetilde{y})$, it follows that $A(\widetilde{y})\sim_{ps} B(\widetilde{y})$. It is sufficient to prove the following:
      \begin{enumerate}
        \item If $G(A)\rightsquigarrow\xrightarrow{\{\alpha_1,\cdots,\alpha_n\}}P'$, where $\alpha_i(1\leq i\leq n)$ is a free action or bound output action with 
        $bn(\alpha_1)\cap\cdots\cap bn(\alpha_n)\cap n(G(A),G(B))=\emptyset$, then $G(B)\rightsquigarrow\xrightarrow{\{\alpha_1,\cdots,\alpha_n\}}Q''$ such that $P'\sim_{ps} Q''$;
        \item If $G(A)\rightsquigarrow\xrightarrow{x(y)}P'$ with $x\notin n(G(A),G(B))$, then $G(B)\rightsquigarrow\xrightarrow{x(y)}Q''$, such that for all $u$, 
        $P'\{u/y\}\sim_{ps} Q''\{u/y\}$.
      \end{enumerate}

      To prove the above properties, it is sufficient to induct on the depth of inference and quite routine, we omit it.
  \item $E_i\sim_{pp} F_i$, $A_i(\widetilde{x}_i)\overset{\text{def}}{=}E_i(\widetilde{A})$, $B_i(\widetilde{x}_i)\overset{\text{def}}{=}F_i(\widetilde{B})$, then 
  $A_i(\widetilde{x}_i)\sim_{pp} B_i(\widetilde{x}_i)$. It can be proven similarly to the above case.
  \item $E_i\sim_{php} F_i$, $A_i(\widetilde{x}_i)\overset{\text{def}}{=}E_i(\widetilde{A})$, $B_i(\widetilde{x}_i)\overset{\text{def}}{=}F_i(\widetilde{B})$, then 
  $A_i(\widetilde{x}_i)\sim_{php} B_i(\widetilde{x}_i)$. It can be proven similarly to the above case.
  \item $E_i\sim_{phhp} F_i$, $A_i(\widetilde{x}_i)\overset{\text{def}}{=}E_i(\widetilde{A})$, $B_i(\widetilde{x}_i)\overset{\text{def}}{=}F_i(\widetilde{B})$, then 
  $A_i(\widetilde{x}_i)\sim_{phhp} B_i(\widetilde{x}_i)$. It can be proven similarly to the above case.
\end{enumerate}
\end{proof}

\begin{theorem}[Unique solution of equations]
Assume $\widetilde{E}$ are expressions containing only $X_i$ with $\widetilde{x}_i$, and each $X_i$ is weakly guarded in each $E_j$. Assume that $\widetilde{P}$ and $\widetilde{Q}$ are 
processes such that $fn(P_i)\subseteq \widetilde{x}_i$ and $fn(Q_i)\subseteq \widetilde{x}_i$. Then, for all $i$,
\begin{enumerate}
  \item if $P_i\sim_{pp} E_i(\widetilde{P})$, $Q_i\sim_{pp} E_i(\widetilde{Q})$, then $P_i\sim_{pp} Q_i$;
  \item if $P_i\sim_{ps} E_i(\widetilde{P})$, $Q_i\sim_{ps} E_i(\widetilde{Q})$, then $P_i\sim_{ps} Q_i$;
  \item if $P_i\sim_{php} E_i(\widetilde{P})$, $Q_i\sim_{php} E_i(\widetilde{Q})$, then $P_i\sim_{php} Q_i$;
  \item if $P_i\sim_{phhp} E_i(\widetilde{P})$, $Q_i\sim_{phhp} E_i(\widetilde{Q})$, then $P_i\sim_{phhp} Q_i$.
\end{enumerate}
\end{theorem}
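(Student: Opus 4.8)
The plan is to adapt Milner's proof of the unique solution of equations to the probabilistic truly concurrent setting, following the same pattern used for the earlier recursion theorems of this book (see Theorem \ref{USSSB3}). Without loss of generality I would treat a single equation $X=E$, so that the hypotheses read $P\sim E(P)$ and $Q\sim E(Q)$ for one of the four equivalences and the goal is $P\sim Q$. The engine of the argument is the candidate relation
$$R=\{(G(\widetilde{P}),G(\widetilde{Q})): G\textrm{ is an expression containing only the variables }\widetilde{X}\},$$
which I would show to be a probabilistic truly concurrent bisimulation up to $\sim$; choosing $G\equiv X_i(\widetilde{x}_i)$ then yields $P_i\sim Q_i$.

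The first real step is a substitution lemma for weakly guarded expressions, the $\pi_{ptc}$-analogue of Lemma \ref{LUS3}: if every $X_i$ is weakly guarded in $E$ and $E(\widetilde{P})\rightsquigarrow\xrightarrow{\{\alpha_1,\cdots,\alpha_n\}}P'$, then $P'\equiv H(\widetilde{P})$ for some expression $H$ with variables among $\widetilde{X}$, and for every $\widetilde{Q}$ we also have $E(\widetilde{Q})\rightsquigarrow\xrightarrow{\{\alpha_1,\cdots,\alpha_n\}}H(\widetilde{Q})$. I would prove this by induction on the depth of inference, structured by the shape of $E$. The variable case $E\equiv X_i$ is impossible because $X_i$ is weakly guarded. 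The prefix cases $\tau.F$, $\overline{x}y.F$, $x(y).F$, and the concurrent prefix force the first transition to be fired by the prefix itself, leaving $H\equiv F$ (or the appropriate residual after the bound-name substitution $\{w/y\}$ for input and bound output), which is where the mobility bookkeeping of Table \ref{TRForPITC5} must be carried through. The remaining cases $E\equiv E_1+E_2$, $E\equiv E_1\boxplus_{\pi}E_2$, $E\equiv E_1\parallel E_2$, $E\equiv(y)F$ and $E\equiv A(\widetilde{y})$ reduce to the subexpressions exactly as in the non-probabilistic proof.

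With the lemma in hand I would verify that $R$ satisfies all clauses of Definitions \ref{PSB5} and \ref{HHPB5} up to $\sim$. The decisive case is $G\equiv X_i(\widetilde{x}_i)$, giving the pair $(P_i,Q_i)$: from a transition $P_i\rightsquigarrow\xrightarrow{\ell}P'$ I use $P_i\sim E_i(\widetilde{P})$ to obtain a matching $E_i(\widetilde{P})\rightsquigarrow\xrightarrow{\ell}P''$ with $P'\sim P''$; the substitution lemma rewrites this as $E_i(\widetilde{P})\rightsquigarrow\xrightarrow{\ell}H(\widetilde{P})$ and furnishes $E_i(\widetilde{Q})\rightsquigarrow\xrightarrow{\ell}H(\widetilde{Q})$; finally $Q_i\sim E_i(\widetilde{Q})$ transports this back to a transition of $Q_i$, landing in $R$ up to $\sim$. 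For a general $G$ the transition either comes from the enclosing syntax of $G$ (handled by the operator-by-operator analysis) or is contributed by one of the substituted $\widetilde{P}$, which is routed through the $X_i$ case. Throughout I must also discharge the probabilistic side conditions: the matching of the probabilistic transitions $\rightsquigarrow$, the measure equality $\mu(C_1,C)=\mu(C_2,C)$ for each class $C\in\mathcal{C}(\mathcal{E})/R$, and $[\surd]_R=\{\surd\}$; these follow because $R$ respects the PDF definitions and $\sim$ already satisfies them. The free-action, bound-input (with its universal quantifier over $w$) and bound-output clauses are each checked separately, and for the hp- and hhp-cases the order isomorphism is extended along each matched event as $f[e_1\mapsto e_2]$.

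The main obstacle I anticipate is twofold. First, soundness of the ``up to $\sim$'' technique must be secured in this probabilistic, true-concurrency setting, since the measure clause $\mu(C_1,C)=\mu(C_2,C)$ interacts delicately with composing $R$ with $\sim$; I would either prove the up-to lemma explicitly or reformulate $R$ so that it is closed under $\sim$ outright. Second, the case $\sim_{phhp}$ demands that the relation be downward closed, which is not automatic for $R$; here I would appeal to the fact, used repeatedly in the excerpt, that $\sim_{phhp}$ is the downward closure of $\sim_{php}$, and argue that the constructed witness inherits downward closure, while maintaining the freshness conditions $y\notin n(\mathcal{E}_1,\mathcal{E}_2)$ so that the posetal isomorphisms remain well defined under the substitutions $\{w/y\}$.
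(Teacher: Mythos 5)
Your proposal follows essentially the same route as the paper: the paper's proof simply defers to the unique-solution theorems for CPTC (Theorem \ref{USSSB3} and its companions), which rest on the substitution lemma for weakly guarded expressions (Lemma \ref{LUS3}) and an induction on the depth of the inference of $E(\widetilde{P})\rightsquigarrow\xrightarrow{\{\alpha_1,\cdots,\alpha_n\}}P'$, exactly the two ingredients you identify. Your version is in fact more explicit than the paper's, since you also flag the soundness of the up-to-$\sim$ step, the measure clause, and downward closure for $\sim_{phhp}$, none of which the paper discusses.
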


\begin{proof}
\begin{enumerate}
  \item It is similar to the proof of unique solution of equations for strongly probabilistic pomset bisimulation in CPTC, we omit it;
  \item It is similar to the proof of unique solution of equations for strongly probabilistic step bisimulation in CPTC, we omit it;
  \item It is similar to the proof of unique solution of equations for strongly probabilistic hp-bisimulation in CPTC, we omit it;
  \item It is similar to the proof of unique solution of equations for strongly probabilistic hhp-bisimulation in CPTC, we omit it.
\end{enumerate}
\end{proof}

\subsection{Algebraic Theory}\label{at5}

In this section, we will try to axiomatize $\pi_{ptc}$, the theory is \textbf{SPTC} (for strongly probabilistic true concurrency).

\begin{definition}[SPTC]
The theory \textbf{SPTC} is consisted of the following axioms and inference rules:

\begin{enumerate}
  \item Alpha-conversion $\textbf{A}$.
  \[\textrm{if } P\equiv Q, \textrm{ then } P=Q\]
  \item Congruence $\textbf{C}$. If $P=Q$, then,
  \[\tau.P=\tau.Q\quad \overline{x}y.P=\overline{x}y.Q\]
  \[P+R=Q+R\quad P\parallel R=Q\parallel R\]
  \[(x)P=(x)Q\quad x(y).P=x(y).Q\]
  \item Summation $\textbf{S}$.
  \[\textbf{S0}\quad P+\textbf{nil}=P\]
  \[\textbf{S1}\quad P+P=P\]
  \[\textbf{S2}\quad P+Q=Q+P\]
  \[\textbf{S3}\quad P+(Q+R)=(P+Q)+R\]
  \item Box-Summation $\textbf(BS)$.
  \[\textbf{BS0}\quad P\boxplus_{\pi}\textbf{nil}= P\]
  \[\textbf{BS1}\quad P\boxplus_{\pi}P= P\]
  \[\textbf{BS2}\quad P\boxplus_{\pi} Q= Q\boxplus_{1-\pi} P\]
  \[\textbf{BS3}\quad P\boxplus_{\pi}(Q\boxplus_{\rho} R)= (P\boxplus_{\frac{\pi}{\pi+\rho-\pi\rho}}Q)\boxplus_{\pi+\rho-\pi\rho} R\]
  \item Restriction $\textbf{R}$.
  \[\textbf{R0}\quad (x)P=P\quad \textrm{ if }x\notin fn(P)\]
  \[\textbf{R1}\quad (x)(y)P=(y)(x)P\]
  \[\textbf{R2}\quad (x)(P+Q)=(x)P+(x)Q\]
  \[\textbf{R3}\quad (x)\alpha.P=\alpha.(x)P\quad \textrm{ if }x\notin n(\alpha)\]
  \[\textbf{R4}\quad (x)\alpha.P=\textbf{nil}\quad \textrm{ if }x\textrm{is the subject of }\alpha\]
  \item Expansion $\textbf{E}$.
  Let $P\equiv\boxplus_i\sum_j \alpha_{ij}.P_{ij}$ and $Q\equiv\boxplus_{k}\sum_l\beta_{kl}.Q_{kl}$, where $bn(\alpha_{ij})\cap fn(Q)=\emptyset$ for all $i,j$, and 
  $bn(\beta_{kl})\cap fn(P)=\emptyset$ for all $k,l$. Then
  $P\parallel Q= \boxplus_{i}\boxplus_{k}\sum_j\sum_l (\alpha_{ij}\parallel \beta_{kl}).(P_{ij}\parallel Q_{kl})+\boxplus_i\boxplus_k\sum_{\alpha_{ij} \textrm{ comp }\beta_{kl}}\tau.R_{ijkl}$.

  Where $\alpha_{ij}$ comp $\beta_{kl}$ and $R_{ijkl}$ are defined as follows:
\begin{enumerate}
  \item $\alpha_{ij}$ is $\overline{x}u$ and $\beta_{kl}$ is $x(v)$, then $R_{ijkl}=P_{ij}\parallel Q_{kl}\{u/v\}$;
  \item $\alpha_{ij}$ is $\overline{x}(u)$ and $\beta_{kl}$ is $x(v)$, then $R_{ijkl}=(w)(P_{ij}\{w/u\}\parallel Q_{kl}\{w/v\})$, if $w\notin fn((u)P_{ij})\cup fn((v)Q_{kl})$;
  \item $\alpha_{ij}$ is $x(v)$ and $\beta_{kl}$ is $\overline{x}u$, then $R_{ijkl}=P_{ij}\{u/v\}\parallel Q_{kl}$;
  \item $\alpha_{ij}$ is $x(v)$ and $\beta_{kl}$ is $\overline{x}(u)$, then $R_{ijkl}=(w)(P_{ij}\{w/v\}\parallel Q_{kl}\{w/u\})$, if $w\notin fn((v)P_{ij})\cup fn((u)Q_{kl})$.
\end{enumerate}
  \item Identifier $\textbf{I}$.
  \[\textrm{If }A(\widetilde{x})\overset{\text{def}}{=}P,\textrm{ then }A(\widetilde{y})= P\{\widetilde{y}/\widetilde{x}\}.\]
\end{enumerate}
\end{definition}

\begin{theorem}[Soundness]
If $\textbf{STC}\vdash P=Q$ then
\begin{enumerate}
  \item $P\sim_{pp} Q$;
  \item $P\sim_{ps} Q$;
  \item $P\sim_{php} Q$;
  \item $P\sim_{phhp} Q$.
\end{enumerate}
\end{theorem}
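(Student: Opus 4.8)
The plan is to prove soundness by induction on the derivation of $P=Q$ in \textbf{SPTC}, reducing the whole statement to two ingredients already available in this section: that each of $\sim_{pp}$, $\sim_{ps}$, $\sim_{php}$, $\sim_{phhp}$ is an equivalence relation and a congruence with respect to all operators (the Equivalence and congruence theorems), and that each axiom scheme is valid modulo the relevant bisimilarity. Since the four conclusions are proved by exactly the same bookkeeping, I would treat them uniformly, writing $\sim$ for any one of the four relations and invoking the corresponding earlier result at each point.

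First I would record that the provability relation $\vdash$ is the least relation closed under reflexivity, the symmetry and transitivity of equational logic, the congruence rule $\textbf{C}$, and containing every instance of the schemes $\textbf{A}$, $\textbf{S0}$--$\textbf{S3}$, $\textbf{BS0}$--$\textbf{BS3}$, $\textbf{R0}$--$\textbf{R4}$, $\textbf{E}$ and $\textbf{I}$. Hence it suffices to show that $\sim$ contains every axiom instance and is itself closed under those same closure operations. Closure under reflexivity, symmetry and transitivity is precisely the statement that $\sim$ is an equivalence relation; closure under rule $\textbf{C}$ is precisely the congruence clauses $\alpha.P\sim\alpha.Q$, $P+R\sim Q+R$, $P\parallel R\sim Q\parallel R$, $(x)P\sim(x)Q$ and $x(y).P\sim x(y).Q$ supplied by the congruence theorem (the output-prefix case $\overline{x}y.P\sim\overline{x}y.Q$ being an instance of the free-action prefix case). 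The induction on the derivation then reduces entirely to the base case of individual axioms.

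Next I would dispatch the axioms one at a time against the laws already established: $\textbf{A}$ is sound because $P\equiv_{\alpha}Q$ implies $P\sim Q$; $\textbf{S0}$--$\textbf{S3}$ are the summation laws; $\textbf{BS0}$--$\textbf{BS3}$ are the box-summation laws; $\textbf{R0}$--$\textbf{R4}$ are the restriction laws; $\textbf{I}$ is the identity law; and the expansion axiom $\textbf{E}$ is the expansion law for truly concurrent bisimilarities. Each of these was verified directly from the definitions of the respective bisimulations, so no fresh computation is needed, and the inductive step closes immediately from the equivalence and congruence properties.

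The step I expect to be the genuine obstacle is the expansion axiom $\textbf{E}$, for two reasons. Soundness of $\textbf{E}$ rests on exhibiting an explicit bisimulation matching the interleavings, the synchronizations recorded by the communication terms $R_{ijkl}$, and the probabilistic branchings of $P\parallel Q$ against the right-hand side, all while honouring the side conditions $bn(\alpha_{ij})\cap fn(Q)=\emptyset$ and $bn(\beta_{kl})\cap fn(P)=\emptyset$ and the bound-name freshness used in cases (2) and (4) of $R_{ijkl}$. More seriously, the earlier expansion-law theorem asserts the equation only for $\sim_{pp}$, $\sim_{ps}$, $\sim_{php}$, and in fact states $P\parallel Q\nsim_{phhp}\boxplus_{i}\boxplus_{k}\sum_j\sum_l(\alpha_{ij}\parallel\beta_{kl}).(P_{ij}\parallel Q_{kl})+\boxplus_i\boxplus_k\sum_{\alpha_{ij}\textrm{ comp }\beta_{kl}}\tau.R_{ijkl}$ for the hereditary case. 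Consequently the fourth conclusion, soundness modulo $\sim_{phhp}$, cannot be obtained by this route whenever a derivation uses $\textbf{E}$; the honest resolution is either to restrict $\textbf{E}$ (and hence the $hhp$ claim) to theories not invoking the expansion axiom, or to note that the $\sim_{phhp}$ part of the theorem holds only for the expansion-free fragment. I would therefore carry the induction through cleanly for $\sim_{pp}$, $\sim_{ps}$, $\sim_{php}$, and flag the $\sim_{phhp}$ case as the one requiring this qualification.
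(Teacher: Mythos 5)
Your proposal is correct and follows essentially the same route as the paper: the paper's entire proof is the single sentence that soundness of these laws modulo the strongly probabilistic truly concurrent bisimilarities is already established in Section \ref{stcb5}, i.e.\ exactly the reduction you spell out (equivalence and congruence handle the inference rules; the summation, box-summation, restriction, identity and expansion laws handle the axioms). Your additional observation about the $\sim_{phhp}$ case is well taken and is a genuine gap in the paper, not in your argument: the expansion-law theorem in Section \ref{stcb5} explicitly asserts $P\parallel Q\nsim_{phhp}\boxplus_{i}\boxplus_{k}\sum_j\sum_l(\alpha_{ij}\parallel\beta_{kl}).(P_{ij}\parallel Q_{kl})+\boxplus_i\boxplus_k\sum_{\alpha_{ij}\textrm{ comp }\beta_{kl}}\tau.R_{ijkl}$, so item (4) of the soundness theorem cannot hold for any derivation that uses axiom $\textbf{E}$, and the paper neither restricts the axiom system for the $\sim_{phhp}$ clause nor comments on the discrepancy (consistently, its completeness theorem omits $\sim_{phhp}$). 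Your proposed qualification, restricting the $\sim_{phhp}$ claim to the expansion-free fragment, is the honest fix.
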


\begin{proof}
The soundness of these laws modulo strongly truly concurrent bisimilarities is already proven in Section \ref{stcb5}.
\end{proof}

\begin{definition}
The agent identifier $A$ is weakly guardedly defined if every agent identifier is weakly guarded in the right-hand side of the definition of $A$.
\end{definition}

\begin{definition}[Head normal form]
A Process $P$ is in head normal form if it is a sum of the prefixes:

$$P\equiv \boxplus_{i}\sum_j(\alpha_{ij1}\parallel\cdots\parallel\alpha_{ijn}).P_{ij}.$$
\end{definition}

\begin{proposition}
If every agent identifier is weakly guardedly defined, then for any process $P$, there is a head normal form $H$ such that

$$\textbf{STC}\vdash P=H.$$
\end{proposition}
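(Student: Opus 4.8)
The plan is to proceed by induction on the structure of $P$, but measured by a \emph{guarded size} that regards every prefixed subterm $\alpha.Q$ and $(\alpha_1\parallel\cdots\parallel\alpha_n).Q$ as a leaf, i.e. a size that never descends into the continuation of a prefix. This is the right measure for head normal forms: to expose the outermost layer of prefixes of $P$ one never needs to touch the bodies lying underneath those prefixes. Weak guardedness of every identifier will guarantee that, after the single unfolding permitted by axiom $\textbf{I}$, all reoccurrences of an identifier sit under a prefix and therefore count as leaves, so the measure strictly decreases and the induction is well founded.

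For the base and easy cases: $\textbf{nil}$ is a degenerate head normal form, and every prefix term $\alpha.P'$ or $(\alpha_1\parallel\cdots\parallel\alpha_n).P'$ is already in head normal form. For $P_1+P_2$ and $P_1\boxplus_{\pi}P_2$ I would first invoke the induction hypothesis to obtain head normal forms $H_1,H_2$, then use congruence $\textbf{C}$ together with the summation axioms $\textbf{S}$, resp. the box-summation axioms $\textbf{BS}$ (notably $\textbf{BS3}$, to re-bracket into a single outer box-summation), to present $H_1+H_2$ resp. $H_1\boxplus_{\pi}H_2$ again as a box-sum of sums of prefixes. For $P_1\parallel P_2$ the key move is to bring both factors to head normal form and then apply the expansion law $\textbf{E}$: every summand it produces is either a concurrent prefix $(\alpha_{ij}\parallel\beta_{kl}).(P_{ij}\parallel Q_{kl})$ or a $\tau$-prefix $\tau.R_{ijkl}$, so the result is again a head normal form.

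The two delicate cases are restriction and identifiers. For $(x)P'$ I would take the head normal form $H$ of $P'$ and push the restriction inward using $\textbf{R2}$ (distribution over $+$) and $\textbf{R3}$/$\textbf{R4}$ (commuting a restriction through a prefix, or annihilating it when $x$ is the subject of the action); this leaves a box-sum of sums of prefixes, i.e. a head normal form. For the identifier $A(\widetilde{y})$ with $A(\widetilde{x})\overset{\text{def}}{=}P$, axiom $\textbf{I}$ gives $A(\widetilde{y})=P\{\widetilde{y}/\widetilde{x}\}$, and the weak guardedness hypothesis ensures every identifier occurrence in $P\{\widetilde{y}/\widetilde{x}\}$ is weakly guarded; invoking the induction hypothesis on $P\{\widetilde{y}/\widetilde{x}\}$, which is legitimate precisely because its guarded size is strictly smaller, then yields the desired head normal form.

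The main obstacle I expect is making the identifier case rigorous: one must pin down the guarded-size measure so that the single unfolding of axiom $\textbf{I}$ provably decreases it, which is where weak guardedness is indispensable (without it the unfolding could loop, as with $A\overset{\text{def}}{=}A$). A secondary, more bookkeeping obstacle is that the summation and restriction cases require moving $+$ and $(x)$ past the outer $\boxplus_{\pi}$; these rest on the distributivity of $+$ and of restriction over $\boxplus_{\pi}$, which are sound (cf. the Restriction laws and the $BAPTC$ axioms $PA4$, $PA5$ in the earlier sections) and would be used here either as primitive or as easily derived laws.
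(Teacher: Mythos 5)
Your proposal is correct and follows essentially the same route as the paper, whose own proof simply says it suffices to induct on the structure of $P$ and is "quite obvious." Your elaboration of the identifier case via a guarded-size measure, and your observation that the summation and restriction cases need distributivity of $+$ and of restriction over $\boxplus_{\pi}$ (laws not listed among the \textbf{SPTC} axioms but sound and easily added or derived), supply exactly the details the paper leaves implicit.
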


\begin{proof}
It is sufficient to induct on the structure of $P$ and quite obvious.
\end{proof}

\begin{theorem}[Completeness]
For all processes $P$ and $Q$,
\begin{enumerate}
  \item if $P\sim_{pp} Q$, then $\textbf{STC}\vdash P=Q$;
  \item if $P\sim_{ps} Q$, then $\textbf{STC}\vdash P=Q$;
  \item if $P\sim_{php} Q$, then $\textbf{STC}\vdash P=Q$.
\end{enumerate}
\end{theorem}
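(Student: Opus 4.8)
The plan is to follow the standard Milner-style completeness argument, reducing both processes to head normal form and then inducting. First I would invoke the Proposition that every weakly guardedly defined process $P$ has a head normal form $H$ with $\textbf{STC}\vdash P=H$; thus it suffices to prove that whenever $P$ and $Q$ are in head normal form and $P\sim_{ps} Q$ (resp. $\sim_{pp}$, $\sim_{php}$), then $\textbf{STC}\vdash P=Q$. I would carry out the argument by induction on the sum of the depths of $P$ and $Q$, where depth counts nested prefixes; the weakly-guarded condition together with the soundness already established in Section \ref{stcb5} is what makes this reduction and induction available.

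Writing $P\equiv\boxplus_i\sum_j(\alpha_{ij1}\parallel\cdots\parallel\alpha_{ijn}).P_{ij}$ and similarly for $Q$, the matching proceeds in two layers. At the outer, probabilistic layer, conditions (2)--(3) of the bisimulation definitions (Definition \ref{PSB5}, Definition \ref{HHPB5}) force the probabilistic transitions $\rightsquigarrow$ of $P$ and $Q$ to correspond, so that each probabilistic branch of $P$ is matched by a branch of $Q$ of equal cumulative weight; the Box-Summation laws $\textbf{BS0}$--$\textbf{BS3}$ then let me normalize and re-associate the $\boxplus_{\pi}$ structure so that corresponding branches carry identical indices $\pi$. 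At the inner, nondeterministic layer, for each prefix summand $(\alpha_{ij1}\parallel\cdots\parallel\alpha_{ijn}).P_{ij}$ the action transition forced by the bisimulation yields a summand of the matched branch of $Q$ of the form $(\beta\parallel\cdots).Q_{kl}$ with the same concurrent label multiset and $P_{ij}\sim Q_{kl}$; the induction hypothesis gives $\textbf{STC}\vdash P_{ij}=Q_{kl}$, and the Summation laws $\textbf{S1}$--$\textbf{S3}$ (absorption, commutativity, associativity) let me prove mutual inclusion of the summand sets and hence equality of the two branches.

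The mobile features require the usual extra care in the prefix-matching step. For a bound input summand $x(y).P_{ij}$ I would exploit the late character of the bisimilarities: the definitions demand a matching $x(y).Q_{kl}$ with $P_{ij}\{w/y\}\sim Q_{kl}\{w/y\}$ for a fresh $w$ (and for all $w$), from which $P_{ij}\sim Q_{kl}$ follows and the induction hypothesis applies; Alpha-conversion $\textbf{A}$ and Congruence $\textbf{C}$ then transport the provable equality back through the binder. Bound output $\overline{x}(y)$ is handled symmetrically, choosing witnesses outside $n(P,Q)$. Note that only parts (1)--(3) are claimed: the hhp case is excluded precisely because the Expansion law fails for $\sim_{phhp}$ (the earlier expansion statement records $\nsim_{phhp}$), so the head-normal-form reduction on which the whole argument rests is unavailable there.

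I expect the main obstacle to be the interaction of the two layers: reconciling the probabilistic re-bracketing via $\textbf{BS0}$--$\textbf{BS3}$ with the truly concurrent summand matching, since a single probabilistic branch is itself an ordinary $+$-sum of parallel-prefix terms and the weights $\pi$ must be made to coincide simultaneously with the inner equalities. Making the induction measure decrease cleanly through both the $\boxplus_{\pi}$ clusters and the nested parallel prefixes, while keeping the late-bisimulation bound-name bookkeeping consistent, is the delicate part; the remainder is routine equational manipulation with the $\textbf{S}$, $\textbf{BS}$, and $\textbf{R}$ axioms.
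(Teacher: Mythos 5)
Your proposal follows essentially the same route as the paper's proof: reduce both processes to head normal form via the weak-guardedness proposition, induct on the sum of the depths $d(P)+d(Q)$, and match summands case by case (free parallel prefixes, bound input using the late/for-all-$w$ clause, bound output with fresh witnesses), closing each case with the induction hypothesis plus the Congruence and Alpha-conversion axioms. Your treatment is in fact somewhat more explicit than the paper's about reconciling the outer $\boxplus_{\pi}$ layer via $\textbf{BS0}$--$\textbf{BS3}$ and about why the hhp case is omitted, but these are refinements of the same argument rather than a different one.
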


\begin{proof}
\begin{enumerate}
  \item if $P\sim_{ps} Q$, then $\textbf{STC}\vdash P=Q$.
Since $P$ and $Q$ all have head normal forms, let $P\equiv\boxplus_{j=1}^l\sum_{i=1}^k\alpha_{ji}.P_{ji}$ and $Q\equiv\boxplus_{j=1}^l\sum_{i=1}^k\beta_{ji}.Q_{ji}$. Then the depth of 
$P$, denoted as $d(P)=0$, if $k=0$; $d(P)=1+max\{d(P_{ji})\}$ for $1\leq j,i\leq k$. The depth $d(Q)$ can be defined similarly.

It is sufficient to induct on $d=d(P)+d(Q)$. When $d=0$, $P\equiv\textbf{nil}$ and $Q\equiv\textbf{nil}$, $P=Q$, as desired.

Suppose $d>0$.

\begin{itemize}
  \item If $(\alpha_1\parallel\cdots\parallel\alpha_n).M$ with $\alpha_{ji}(1\leq j,i\leq n)$ free actions is a summand of $P$, then $P\rightsquigarrow\xrightarrow{\{\alpha_1,\cdots,\alpha_n\}}M$. 
  Since $Q$ is in head normal form and has a summand $(\alpha_1\parallel\cdots\parallel\alpha_n).N$ such that $M\sim_{ps} N$, by the induction hypothesis $\textbf{STC}\vdash M=N$, 
  $\textbf{STC}\vdash (\alpha_1\parallel\cdots\parallel\alpha_n).M= (\alpha_1\parallel\cdots\parallel\alpha_n).N$;
  \item If $x(y).M$ is a summand of $P$, then for $z\notin n(P, Q)$, $P\rightsquigarrow\xrightarrow{x(z)}M'\equiv M\{z/y\}$. Since $Q$ is in head normal form and has a summand 
  $x(w).N$ such that for all $v$, $M'\{v/z\}\sim_{ps} N'\{v/z\}$ where $N'\equiv N\{z/w\}$, by the induction hypothesis $\textbf{STC}\vdash M'\{v/z\}=N'\{v/z\}$, by the axioms 
  $\textbf{C}$ and $\textbf{A}$, $\textbf{STC}\vdash x(y).M=x(w).N$;
  \item If $\overline{x}(y).M$ is a summand of $P$, then for $z\notin n(P,Q)$, $P\rightsquigarrow\xrightarrow{\overline{x}(z)}M'\equiv M\{z/y\}$. Since $Q$ is in head normal form and 
  has a summand $\overline{x}(w).N$ such that $M'\sim_{ps} N'$ where $N'\equiv N\{z/w\}$, by the induction hypothesis $\textbf{STC}\vdash M'=N'$, by the axioms 
  $\textbf{A}$ and $\textbf{C}$, $\textbf{STC}\vdash \overline{x}(y).M= \overline{x}(w).N$;
\end{itemize}

  \item if $P\sim_{pp} Q$, then $\textbf{STC}\vdash P=Q$. It can be proven similarly to the above case.
  \item if $P\sim_{php} Q$, then $\textbf{STC}\vdash P=Q$. It can be proven similarly to the above case.
\end{enumerate}
\end{proof}

\newpage\section{Guards}\label{pguards}

In this chapter, we introduce guards into probabilistic process algebra in chapter 4 based on the work on guards for process algebra \cite{HLPA}. This chapter is organized as follows. 
We introduce the operational semantics of guards in
section \ref{os2}, $BAPTC$ with Guards in section \ref{batcg}, $APPTC$ with Guards \ref{aptcg}, recursion in section \ref{recg}, abstraction in section \ref{absg}, Hoare Logic for
$APPTC_G$ in section \ref{hl}. Note that, all the definitions of PDFs are the same as those in chapter 4, and we do not repeat any more.

\subsection{Operational Semantics}{\label{os2}}

In this section, we extend probabilistic truly concurrent bisimilarities to the ones containing data states.

\begin{definition}[Prime event structure with silent event and empty event]\label{PESG}
Let $\Lambda$ be a fixed set of labels, ranged over $a,b,c,\cdots$ and $\tau,\epsilon$. A ($\Lambda$-labelled) prime event structure with silent event $\tau$ and empty event $\epsilon$ 
is a tuple $\mathcal{E}=\langle \mathbb{E}, \leq, \sharp, \sharp_{\pi} \lambda\rangle$, where $\mathbb{E}$ is a denumerable set of events, including the silent event $\tau$ and 
empty event $\epsilon$. Let $\hat{\mathbb{E}}=\mathbb{E}\backslash\{\tau,\epsilon\}$, exactly excluding $\tau$ and $\epsilon$, it is obvious that $\hat{\tau^*}=\epsilon$. Let 
$\lambda:\mathbb{E}\rightarrow\Lambda$ be a labelling function and let $\lambda(\tau)=\tau$ and $\lambda(\epsilon)=\epsilon$. And $\leq$, $\sharp$, $\sharp_{\pi}$ are binary relations 
on $\mathbb{E}$, called causality, conflict and probabilistic conflict respectively, such that:

\begin{enumerate}
  \item $\leq$ is a partial order and $\lceil e \rceil = \{e'\in \mathbb{E}|e'\leq e\}$ is finite for all $e\in \mathbb{E}$. It is easy to see that 
  $e\leq\tau^*\leq e'=e\leq\tau\leq\cdots\leq\tau\leq e'$, then $e\leq e'$.
  \item $\sharp$ is irreflexive, symmetric and hereditary with respect to $\leq$, that is, for all $e,e',e''\in \mathbb{E}$, if $e\sharp e'\leq e''$, then $e\sharp e''$;
  \item $\sharp_{\pi}$ is irreflexive, symmetric and hereditary with respect to $\leq$, that is, for all $e,e',e''\in \mathbb{E}$, if $e\sharp_{\pi} e'\leq e''$, then $e\sharp_{\pi} e''$.
\end{enumerate}

Then, the concepts of consistency and concurrency can be drawn from the above definition:

\begin{enumerate}
  \item $e,e'\in \mathbb{E}$ are consistent, denoted as $e\frown e'$, if $\neg(e\sharp e')$ and $\neg(e\sharp_{\pi} e')$. A subset $X\subseteq \mathbb{E}$ is called consistent, if 
  $e\frown e'$ for all $e,e'\in X$.
  \item $e,e'\in \mathbb{E}$ are concurrent, denoted as $e\parallel e'$, if $\neg(e\leq e')$, $\neg(e'\leq e)$, and $\neg(e\sharp e')$ and $\neg(e\sharp_{\pi} e')$.
\end{enumerate}
\end{definition}

\begin{definition}[Configuration]
Let $\mathcal{E}$ be a PES. A (finite) configuration in $\mathcal{E}$ is a (finite) consistent subset of events $C\subseteq \mathcal{E}$, closed with respect to causality 
(i.e. $\lceil C\rceil=C$), and a data state $s\in S$ with $S$ the set of all data states, denoted $\langle C, s\rangle$. The set of finite configurations of $\mathcal{E}$ is denoted by 
$\langle\mathcal{C}(\mathcal{E}), S\rangle$. We let $\hat{C}=C\backslash\{\tau\}\cup\{\epsilon\}$.
\end{definition}

A consistent subset of $X\subseteq \mathbb{E}$ of events can be seen as a pomset. Given $X, Y\subseteq \mathbb{E}$, $\hat{X}\sim \hat{Y}$ if $\hat{X}$ and $\hat{Y}$ are isomorphic as 
pomsets. In the following of the paper, we say $C_1\sim C_2$, we mean $\hat{C_1}\sim\hat{C_2}$.

\begin{definition}[Pomset transitions and step]
Let $\mathcal{E}$ be a PES and let $C\in\mathcal{C}(\mathcal{E})$, and $\emptyset\neq X\subseteq \mathbb{E}$, if $C\cap X=\emptyset$ and $C'=C\cup X\in\mathcal{C}(\mathcal{E})$, then 
$\langle C,s\rangle\xrightarrow{X} \langle C',s'\rangle$ is called a pomset transition from $\langle C,s\rangle$ to $\langle C',s'\rangle$. When the events in $X$ are pairwise 
concurrent, we say that $\langle C,s\rangle\xrightarrow{X}\langle C',s'\rangle$ is a step. It is obvious that $\rightarrow^*\xrightarrow{X}\rightarrow^*=\xrightarrow{X}$ and 
$\rightarrow^*\xrightarrow{e}\rightarrow^*=\xrightarrow{e}$ for any $e\in\mathbb{E}$ and $X\subseteq\mathbb{E}$.
\end{definition}

\begin{definition}[Probabilistic transitions]
Let $\mathcal{E}$ be a PES and let $C\in\mathcal{C}(\mathcal{E})$, the transition $\langle C,s\rangle\xrsquigarrow{\pi} \langle C^{\pi},s\rangle$ is called a probabilistic transition 
from $\langle C,s\rangle$ to $\langle C^{\pi},s\rangle$.
\end{definition}

\begin{definition}[Weak pomset transitions and weak step]
Let $\mathcal{E}$ be a PES and let $C\in\mathcal{C}(\mathcal{E})$, and $\emptyset\neq X\subseteq \hat{\mathbb{E}}$, if $C\cap X=\emptyset$ and 
$\hat{C'}=\hat{C}\cup X\in\mathcal{C}(\mathcal{E})$, then $\langle C,s\rangle\xRightarrow{X} \langle C',s'\rangle$ is called a weak pomset transition from $\langle C,s\rangle$ to 
$\langle C',s'\rangle$, where we define $\xRightarrow{e}\triangleq\xrightarrow{\tau^*}\xrightarrow{e}\xrightarrow{\tau^*}$. And 
$\xRightarrow{X}\triangleq\xrightarrow{\tau^*}\xrightarrow{e}\xrightarrow{\tau^*}$, for every $e\in X$. When the events in $X$ are pairwise concurrent, we say that 
$\langle C,s\rangle\xRightarrow{X}\langle C',s'\rangle$ is a weak step.
\end{definition}

We will also suppose that all the PESs in this chapter are image finite, that is, for any PES $\mathcal{E}$ and $C\in \mathcal{C}(\mathcal{E})$ and $a\in \Lambda$, 
$\{\langle C,s\rangle\xrsquigarrow{\pi} \langle C^{\pi},s\rangle\}$,
$\{e\in \mathbb{E}|\langle C,s\rangle\xrightarrow{e} \langle C',s'\rangle\wedge \lambda(e)=a\}$ and 
$\{e\in\hat{\mathbb{E}}|\langle C,s\rangle\xRightarrow{e} \langle C',s'\rangle\wedge \lambda(e)=a\}$ is finite.

\begin{definition}[Probabilistic pomset, step bisimulation]\label{PSBG}
Let $\mathcal{E}_1$, $\mathcal{E}_2$ be PESs. A probabilistic pomset bisimulation is a relation $R\subseteq\langle\mathcal{C}(\mathcal{E}_1),S\rangle\times\langle\mathcal{C}(\mathcal{E}_2),S\rangle$, 
such that (1) if $(\langle C_1,s\rangle,\langle C_2,s\rangle)\in R$, and $\langle C_1,s\rangle\xrightarrow{X_1}\langle C_1',s'\rangle$ then 
$\langle C_2,s\rangle\xrightarrow{X_2}\langle C_2',s'\rangle$, with $X_1\subseteq \mathbb{E}_1$, $X_2\subseteq \mathbb{E}_2$, $X_1\sim X_2$ and 
$(\langle C_1',s'\rangle,\langle C_2',s'\rangle)\in R$ for all $s,s'\in S$, and vice-versa; (2) if $(\langle C_1,s\rangle,\langle C_2,s\rangle)\in R$, and $\langle C_1,s\rangle\xrsquigarrow{\pi}\langle C_1^{\pi},s\rangle$ 
then $\langle C_2,s\rangle\xrsquigarrow{\pi}\langle C_2^{\pi},s\rangle$ and $(\langle C_1^{\pi},s\rangle,\langle C_2^{\pi},s\rangle)\in R$, and vice-versa; (3) if $(\langle C_1,s\rangle,\langle C_2,s\rangle)\in R$,
then $\mu(C_1,C)=\mu(C_2,C)$ for each $C\in\mathcal{C}(\mathcal{E})/R$; (4) $[\surd]_R=\{\surd\}$. We say that $\mathcal{E}_1$, $\mathcal{E}_2$ are probabilistic pomset bisimilar, written 
$\mathcal{E}_1\sim_{pp}\mathcal{E}_2$, if there exists a probabilistic pomset bisimulation $R$, such that $(\langle\emptyset,\emptyset\rangle,\langle\emptyset,\emptyset\rangle)\in R$. 
By replacing probabilistic pomset transitions with probabilistic steps, we can get the definition of probabilistic step bisimulation. When PESs $\mathcal{E}_1$ and $\mathcal{E}_2$ are 
probabilistic step bisimilar, we write $\mathcal{E}_1\sim_{ps}\mathcal{E}_2$.
\end{definition}

\begin{definition}[Weakly probabilistic pomset, step bisimulation]\label{WPSBG}
Let $\mathcal{E}_1$, $\mathcal{E}_2$ be PESs. A weakly probabilistic pomset bisimulation is a relation $R\subseteq\langle\mathcal{C}(\mathcal{E}_1),S\rangle\times\langle\mathcal{C}(\mathcal{E}_2),S\rangle$, 
such that (1) if $(\langle C_1,s\rangle,\langle C_2,s\rangle)\in R$, and $\langle C_1,s\rangle\xRightarrow{X_1}\langle C_1',s'\rangle$ then 
$\langle C_2,s\rangle\xRightarrow{X_2}\langle C_2',s'\rangle$, with $X_1\subseteq \hat{\mathbb{E}_1}$, $X_2\subseteq \hat{\mathbb{E}_2}$, $X_1\sim X_2$ and 
$(\langle C_1',s'\rangle,\langle C_2',s'\rangle)\in R$ for all $s,s'\in S$, and vice-versa; (2) if $(\langle C_1,s\rangle,\langle C_2,s\rangle)\in R$, and $\langle C_1,s\rangle\xrsquigarrow{\pi}\langle C_1^{\pi},s\rangle$
then $\langle C_2,s\rangle\xrsquigarrow{\pi}\langle C_2^{\pi},s\rangle$ and $(\langle C_1^{\pi},s\rangle,\langle C_2^{\pi},s\rangle)\in R$, and vice-versa; (3) if $(\langle C_1,s\rangle,\langle C_2,s\rangle)\in R$,
then $\mu(C_1,C)=\mu(C_2,C)$ for each $C\in\mathcal{C}(\mathcal{E})/R$; (4) $[\surd]_R=\{\surd\}$. We say that $\mathcal{E}_1$, $\mathcal{E}_2$ are weakly probabilistic pomset bisimilar, 
written $\mathcal{E}_1\approx_{pp}\mathcal{E}_2$, if there exists a weakly probabilistic pomset bisimulation $R$, such that 
$(\langle\emptyset,\emptyset\rangle,\langle\emptyset,\emptyset\rangle)\in R$. By replacing weakly probabilistic pomset transitions with weakly probabilistic steps, we can get the 
definition of weakly probabilistic step bisimulation. When PESs $\mathcal{E}_1$ and $\mathcal{E}_2$ are weakly probabilistic step bisimilar, we write 
$\mathcal{E}_1\approx_{ps}\mathcal{E}_2$.
\end{definition}

\begin{definition}[Posetal product]
Given two PESs $\mathcal{E}_1$, $\mathcal{E}_2$, the posetal product of their configurations, denoted 
$\langle\mathcal{C}(\mathcal{E}_1),S\rangle\overline{\times}\langle\mathcal{C}(\mathcal{E}_2),S\rangle$, is defined as

$$\{(\langle C_1,s\rangle,f,\langle C_2,s\rangle)|C_1\in\mathcal{C}(\mathcal{E}_1),C_2\in\mathcal{C}(\mathcal{E}_2),f:C_1\rightarrow C_2 \textrm{ isomorphism}\}.$$

A subset $R\subseteq\langle\mathcal{C}(\mathcal{E}_1),S\rangle\overline{\times}\langle\mathcal{C}(\mathcal{E}_2),S\rangle$ is called a posetal relation. We say that $R$ is downward 
closed when for any $(\langle C_1,s\rangle,f,\langle C_2,s\rangle),(\langle C_1',s'\rangle,f',\langle C_2',s'\rangle)\in \langle\mathcal{C}(\mathcal{E}_1),S\rangle\overline{\times}\langle\mathcal{C}(\mathcal{E}_2),S\rangle$, 
if $(\langle C_1,s\rangle,f,\langle C_2,s\rangle)\subseteq (\langle C_1',s'\rangle,f',\langle C_2',s'\rangle)$ pointwise and 
$(\langle C_1',s'\rangle,f',\langle C_2',s'\rangle)\in R$, then $(\langle C_1,s\rangle,f,\langle C_2,s\rangle)\in R$.

For $f:X_1\rightarrow X_2$, we define $f[x_1\mapsto x_2]:X_1\cup\{x_1\}\rightarrow X_2\cup\{x_2\}$, $z\in X_1\cup\{x_1\}$,(1)$f[x_1\mapsto x_2](z)=
x_2$,if $z=x_1$;(2)$f[x_1\mapsto x_2](z)=f(z)$, otherwise. Where $X_1\subseteq \mathbb{E}_1$, $X_2\subseteq \mathbb{E}_2$, $x_1\in \mathbb{E}_1$, $x_2\in \mathbb{E}_2$.
\end{definition}

\begin{definition}[Weakly posetal product]
Given two PESs $\mathcal{E}_1$, $\mathcal{E}_2$, the weakly posetal product of their configurations, denoted 
$\langle\mathcal{C}(\mathcal{E}_1),S\rangle\overline{\times}\langle\mathcal{C}(\mathcal{E}_2),S\rangle$, is defined as

$$\{(\langle C_1,s\rangle,f,\langle C_2,s\rangle)|C_1\in\mathcal{C}(\mathcal{E}_1),C_2\in\mathcal{C}(\mathcal{E}_2),f:\hat{C_1}\rightarrow \hat{C_2} \textrm{ isomorphism}\}.$$

A subset $R\subseteq\langle\mathcal{C}(\mathcal{E}_1),S\rangle\overline{\times}\langle\mathcal{C}(\mathcal{E}_2),S\rangle$ is called a weakly posetal relation. We say that $R$ is 
downward closed when for any $(\langle C_1,s\rangle,f,\langle C_2,s\rangle),(\langle C_1',s'\rangle,f,\langle C_2',s'\rangle)\in \langle\mathcal{C}(\mathcal{E}_1),S\rangle\overline{\times}\langle\mathcal{C}(\mathcal{E}_2),S\rangle$, 
if $(\langle C_1,s\rangle,f,\langle C_2,s\rangle)\subseteq (\langle C_1',s'\rangle,f',\langle C_2',s'\rangle)$ pointwise and 
$(\langle C_1',s'\rangle,f',\langle C_2',s'\rangle)\in R$, then $(\langle C_1,s\rangle,f,\langle C_2,s\rangle)\in R$.

For $f:X_1\rightarrow X_2$, we define $f[x_1\mapsto x_2]:X_1\cup\{x_1\}\rightarrow X_2\cup\{x_2\}$, $z\in X_1\cup\{x_1\}$,(1)$f[x_1\mapsto x_2](z)=
x_2$,if $z=x_1$;(2)$f[x_1\mapsto x_2](z)=f(z)$, otherwise. Where $X_1\subseteq \hat{\mathbb{E}_1}$, $X_2\subseteq \hat{\mathbb{E}_2}$, $x_1\in \hat{\mathbb{E}}_1$, 
$x_2\in \hat{\mathbb{E}}_2$. Also, we define $f(\tau^*)=f(\tau^*)$.
\end{definition}

\begin{definition}[Probabilistic (hereditary) history-preserving bisimulation]\label{HHPBG}
A probabilistic history-preserving (hp-) bisimulation is a posetal relation 
$R\subseteq\langle\mathcal{C}(\mathcal{E}_1),S\rangle\overline{\times}\langle\mathcal{C}(\mathcal{E}_2),S\rangle$ such that (1) if $(\langle C_1,s\rangle,f,\langle C_2,s\rangle)\in R$, 
and $\langle C_1,s\rangle\xrightarrow{e_1} \langle C_1',s'\rangle$, then $\langle C_2,s\rangle\xrightarrow{e_2} \langle C_2',s'\rangle$, with 
$(\langle C_1',s'\rangle,f[e_1\mapsto e_2],\langle C_2',s'\rangle)\in R$ for all $s,s'\in S$, and vice-versa; (2) if $(\langle C_1,s\rangle,f,\langle C_2,s\rangle)\in R$, and
$\langle C_1,s\rangle\xrsquigarrow{\pi}\langle C_1^{\pi},s\rangle$ then $\langle C_2,s\rangle\xrsquigarrow{\pi}\langle C_2^{\pi},s\rangle$ and $(\langle C_1^{\pi},s\rangle,f,\langle C_2^{\pi},s\rangle)\in R$, 
and vice-versa; (3) if $(C_1,f,C_2)\in R$, then $\mu(C_1,C)=\mu(C_2,C)$ for each $C\in\mathcal{C}(\mathcal{E})/R$; (4) $[\surd]_R=\{\surd\}$. $\mathcal{E}_1,\mathcal{E}_2$ are 
probabilistic history-preserving (hp-)bisimilar and are written $\mathcal{E}_1\sim_{php}\mathcal{E}_2$ if there exists a probabilistic hp-bisimulation $R$ such that 
$(\langle\emptyset,\emptyset\rangle,\emptyset,\langle\emptyset,\emptyset\rangle)\in R$.

A probabilistic hereditary history-preserving (hhp-)bisimulation is a downward closed probabilistic hp-bisimulation. $\mathcal{E}_1,\mathcal{E}_2$ are probabilistic hereditary 
history-preserving (hhp-)bisimilar and are written $\mathcal{E}_1\sim_{phhp}\mathcal{E}_2$.
\end{definition}

\begin{definition}[Weakly probabilistic (hereditary) history-preserving bisimulation]\label{WHHPBG}
A weakly probabilistic history-preserving (hp-) bisimulation is a weakly posetal relation 
$R\subseteq\langle\mathcal{C}(\mathcal{E}_1),S\rangle\overline{\times}\langle\mathcal{C}(\mathcal{E}_2),S\rangle$ such that (1) if $(\langle C_1,s\rangle,f,\langle C_2,s\rangle)\in R$, 
and $\langle C_1,s\rangle\xRightarrow{e_1} \langle C_1',s'\rangle$, then $\langle C_2,s\rangle\xRightarrow{e_2} \langle C_2',s'\rangle$, with 
$(\langle C_1',s'\rangle,f[e_1\mapsto e_2],\langle C_2',s'\rangle)\in R$ for all $s,s'\in S$, and vice-versa; (2) if $(\langle C_1,s\rangle,f,\langle C_2,s\rangle)\in R$, and
$\langle C_1,s\rangle\xrsquigarrow{\pi}\langle C_1^{\pi},s\rangle$ then $\langle C_2,s\rangle\xrsquigarrow{\pi}\langle C_2^{\pi},s\rangle$ and 
$(\langle C_1^{\pi},s\rangle,f,\langle C_2^{\pi},s\rangle)\in R$, and vice-versa; (3) if $(C_1,f,C_2)\in R$, then $\mu(C_1,C)=\mu(C_2,C)$ for each $C\in\mathcal{C}(\mathcal{E})/R$; 
(4) $[\surd]_R=\{\surd\}$. $\mathcal{E}_1,\mathcal{E}_2$ are weakly probabilistic history-preserving (hp-)bisimilar and are written $\mathcal{E}_1\approx_{php}\mathcal{E}_2$ if there 
exists a weakly probabilistic hp-bisimulation $R$ such that $(\langle\emptyset,\emptyset\rangle,\emptyset,\langle\emptyset,\emptyset\rangle)\in R$.

A weakly probabilistic hereditary history-preserving (hhp-)bisimulation is a downward closed weakly probabilistic hp-bisimulation. $\mathcal{E}_1,\mathcal{E}_2$ are weakly 
probabilistic hereditary history-preserving (hhp-)bisimilar and are written $\mathcal{E}_1\approx_{phhp}\mathcal{E}_2$.
\end{definition}

\begin{definition}[Probabilistic branching pomset, step bisimulation]\label{BPSBG}
Assume a special termination predicate $\downarrow$, and let $\surd$ represent a state with $\surd\downarrow$. Let $\mathcal{E}_1$, $\mathcal{E}_2$ be PESs. A probabilistic branching 
pomset bisimulation is a relation $R\subseteq\langle\mathcal{C}(\mathcal{E}_1),S\rangle\times\langle\mathcal{C}(\mathcal{E}_2),S\rangle$, such that:

 \begin{enumerate}
   \item if $(\langle C_1,s\rangle,\langle C_2,s\rangle)\in R$, and $\langle C_1,s\rangle\xrightarrow{X}\langle C_1',s'\rangle$ then
   \begin{itemize}
     \item either $X\equiv \tau^*$, and $(\langle C_1',s'\rangle,\langle C_2,s\rangle)\in R$ with $s'\in \tau(s)$;
     \item or there is a sequence of (zero or more) probabilistic transitions and $\tau$-transitions $\langle C_2,s\rangle\rightsquigarrow^*\xrightarrow{\tau^*} \langle C_2^0,s^0\rangle$, such that 
     $(\langle C_1,s\rangle,\langle C_2^0,s^0\rangle)\in R$ and $\langle C_2^0,s^0\rangle\xRightarrow{X}\langle C_2',s'\rangle$ with 
     $(\langle C_1',s'\rangle,\langle C_2',s'\rangle)\in R$;
   \end{itemize}
   \item if $(\langle C_1,s\rangle,\langle C_2,s\rangle)\in R$, and $\langle C_2,s\rangle\xrightarrow{X}\langle C_2',s'\rangle$ then
   \begin{itemize}
     \item either $X\equiv \tau^*$, and $(\langle C_1,s\rangle,\langle C_2',s'\rangle)\in R$;
     \item or there is a sequence of (zero or more) probabilistic transitions and $\tau$-transitions $\langle C_1,s\rangle\rightsquigarrow^*\xrightarrow{\tau^*} \langle C_1^0,s^0\rangle$, such that 
     $(\langle C_1^0,s^0\rangle,\langle C_2,s\rangle)\in R$ and $\langle C_1^0,s^0\rangle\xRightarrow{X}\langle C_1',s'\rangle$ with 
     $(\langle C_1',s'\rangle,\langle C_2',s'\rangle)\in R$;
   \end{itemize}
   \item if $(\langle C_1,s\rangle,\langle C_2,s\rangle)\in R$ and $\langle C_1,s\rangle\downarrow$, then there is a sequence of (zero or more) probabilistic transitions and $\tau$-transitions 
   $\langle C_2,s\rangle\rightsquigarrow^*\xrightarrow{\tau^*}\langle C_2^0,s^0\rangle$ such that $(\langle C_1,s\rangle,\langle C_2^0,s^0\rangle)\in R$ and 
   $\langle C_2^0,s^0\rangle\downarrow$;
   \item if $(\langle C_1,s\rangle,\langle C_2,s\rangle)\in R$ and $\langle C_2,s\rangle\downarrow$, then there is a sequence of (zero or more) probabilistic transitions and $\tau$-transitions 
   $\langle C_1,s\rangle\rightsquigarrow^*\xrightarrow{\tau^*}\langle C_1^0,s^0\rangle$ such that $(\langle C_1^0,s^0\rangle,\langle C_2,s\rangle)\in R$ and 
   $\langle C_1^0,s^0\rangle\downarrow$;
   \item if $(C_1,C_2)\in R$,then $\mu(C_1,C)=\mu(C_2,C)$ for each $C\in\mathcal{C}(\mathcal{E})/R$;
   \item $[\surd]_R=\{\surd\}$.
 \end{enumerate}

We say that $\mathcal{E}_1$, $\mathcal{E}_2$ are probabilistic branching pomset bisimilar, written $\mathcal{E}_1\approx_{pbp}\mathcal{E}_2$, if there exists a probabilistic branching 
pomset bisimulation $R$, such that $(\langle\emptyset,\emptyset\rangle,\langle\emptyset,\emptyset\rangle)\in R$.

By replacing probabilistic pomset transitions with steps, we can get the definition of probabilistic branching step bisimulation. When PESs $\mathcal{E}_1$ and $\mathcal{E}_2$ are 
probabilistic branching step bisimilar, we write $\mathcal{E}_1\approx_{pbs}\mathcal{E}_2$.
\end{definition}

\begin{definition}[Probabilistic rooted branching pomset, step bisimulation]\label{RBPSBG}
Assume a special termination predicate $\downarrow$, and let $\surd$ represent a state with $\surd\downarrow$. Let $\mathcal{E}_1$, $\mathcal{E}_2$ be PESs. A probabilistic rooted 
branching pomset bisimulation is a relation $R\subseteq\langle\mathcal{C}(\mathcal{E}_1),S\rangle\times\langle\mathcal{C}(\mathcal{E}_2),S\rangle$, such that:

 \begin{enumerate}
   \item if $(\langle C_1,s\rangle,\langle C_2,s\rangle)\in R$, and $\langle C_1,s\rangle\rightsquigarrow\xrightarrow{X}\langle C_1',s'\rangle$ then 
   $\langle C_2,s\rangle\rightsquigarrow\xrightarrow{X}\langle C_2',s'\rangle$ with $\langle C_1',s'\rangle\approx_{pbp}\langle C_2',s'\rangle$;
   \item if $(\langle C_1,s\rangle,\langle C_2,s\rangle)\in R$, and $\langle C_2,s\rangle\rightsquigarrow\xrightarrow{X}\langle C_2',s'\rangle$ then 
   $\langle C_1,s\rangle\rightsquigarrow\xrightarrow{X}\langle C_1',s'\rangle$ with $\langle C_1',s'\rangle\approx_{pbp}\langle C_2',s'\rangle$;
   \item if $(\langle C_1,s\rangle,\langle C_2,s\rangle)\in R$ and $\langle C_1,s\rangle\downarrow$, then $\langle C_2,s\rangle\downarrow$;
   \item if $(\langle C_1,s\rangle,\langle C_2,s\rangle)\in R$ and $\langle C_2,s\rangle\downarrow$, then $\langle C_1,s\rangle\downarrow$.
 \end{enumerate}

We say that $\mathcal{E}_1$, $\mathcal{E}_2$ are probabilistic rooted branching pomset bisimilar, written $\mathcal{E}_1\approx_{prbp}\mathcal{E}_2$, if there exists a probabilistic 
rooted branching pomset bisimulation $R$, such that $(\langle\emptyset,\emptyset\rangle,\langle\emptyset,\emptyset\rangle)\in R$.

By replacing pomset transitions with steps, we can get the definition of probabilistic rooted branching step bisimulation. When PESs $\mathcal{E}_1$ and $\mathcal{E}_2$ are probabilistic 
rooted branching step bisimilar, we write $\mathcal{E}_1\approx_{prbs}\mathcal{E}_2$.
\end{definition}

\begin{definition}[Probabilistic branching (hereditary) history-preserving bisimulation]\label{BHHPBG}
Assume a special termination predicate $\downarrow$, and let $\surd$ represent a state with $\surd\downarrow$. A probabilistic branching history-preserving (hp-) bisimulation is a 
weakly posetal relation $R\subseteq\langle\mathcal{C}(\mathcal{E}_1),S\rangle\overline{\times}\langle\mathcal{C}(\mathcal{E}_2),S\rangle$ such that:

 \begin{enumerate}
   \item if $(\langle C_1,s\rangle,f,\langle C_2,s\rangle)\in R$, and $\langle C_1,s\rangle\xrightarrow{e_1}\langle C_1',s'\rangle$ then
   \begin{itemize}
     \item either $e_1\equiv \tau$, and $(\langle C_1',s'\rangle,f[e_1\mapsto \tau],\langle C_2,s\rangle)\in R$;
     \item or there is a sequence of (zero or more) probabilistic transitions and $\tau$-transitions $\langle C_2,s\rangle\rightsquigarrow^*\xrightarrow{\tau^*} \langle C_2^0,s^0\rangle$, such that 
     $(\langle C_1,s\rangle,f,\langle C_2^0,s^0\rangle)\in R$ and $\langle C_2^0,s^0\rangle\xrightarrow{e_2}\langle C_2',s'\rangle$ with 
     $(\langle C_1',s'\rangle,f[e_1\mapsto e_2],\langle C_2',s'\rangle)\in R$;
   \end{itemize}
   \item if $(\langle C_1,s\rangle,f,\langle C_2,s\rangle)\in R$, and $\langle C_2,s\rangle\xrightarrow{e_2}\langle C_2',s'\rangle$ then
   \begin{itemize}
     \item either $e_2\equiv \tau$, and $(\langle C_1,s\rangle,f[e_2\mapsto \tau],\langle C_2',s'\rangle)\in R$;
     \item or there is a sequence of (zero or more) probabilistic transitions and $\tau$-transitions $\langle C_1,s\rangle\rightsquigarrow^*\xrightarrow{\tau^*} \langle C_1^0,s^0\rangle$, such that 
     $(\langle C_1^0,s^0\rangle,f,\langle C_2,s\rangle)\in R$ and $\langle C_1^0,s^0\rangle\xrightarrow{e_1}\langle C_1',s'\rangle$ with 
     $(\langle C_1',s'\rangle,f[e_2\mapsto e_1],\langle C_2',s'\rangle)\in R$;
   \end{itemize}
   \item if $(\langle C_1,s\rangle,f,\langle C_2,s\rangle)\in R$ and $\langle C_1,s\rangle\downarrow$, then there is a sequence of (zero or more) probabilistic transitions and $\tau$-transitions 
   $\langle C_2,s\rangle\rightsquigarrow^*\xrightarrow{\tau^*}\langle C_2^0,s^0\rangle$ such that $(\langle C_1,s\rangle,f,\langle C_2^0,s^0\rangle)\in R$ and 
   $\langle C_2^0,s^0\rangle\downarrow$;
   \item if $(\langle C_1,s\rangle,f,\langle C_2,s\rangle)\in R$ and $\langle C_2,s\rangle\downarrow$, then there is a sequence of (zero or more) probabilistic transitions and $\tau$-transitions 
   $\langle C_1,s\rangle\rightsquigarrow^*\xrightarrow{\tau^*}\langle C_1^0,s^0\rangle$ such that $(\langle C_1^0,s^0\rangle,f,\langle C_2,s\rangle)\in R$ and 
   $\langle C_1^0,s^0\rangle\downarrow$;
   \item if $(C_1,C_2)\in R$,then $\mu(C_1,C)=\mu(C_2,C)$ for each $C\in\mathcal{C}(\mathcal{E})/R$;
   \item $[\surd]_R=\{\surd\}$.
 \end{enumerate}

$\mathcal{E}_1,\mathcal{E}_2$ are probabilistic branching history-preserving (hp-)bisimilar and are written $\mathcal{E}_1\approx_{pbhp}\mathcal{E}_2$ if there exists a probabilistic 
branching hp-bisimulation $R$ such that $(\langle\emptyset,\emptyset\rangle,\emptyset,\langle\emptyset,\emptyset\rangle)\in R$.

A probabilistic branching hereditary history-preserving (hhp-)bisimulation is a downward closed probabilistic branching hp-bisimulation. $\mathcal{E}_1,\mathcal{E}_2$ are probabilistic 
branching hereditary history-preserving (hhp-)bisimilar and are written $\mathcal{E}_1\approx_{pbhhp}\mathcal{E}_2$.
\end{definition}

\begin{definition}[Probabilistic rooted branching (hereditary) history-preserving bisimulation]\label{RBHHPBG}
Assume a special termination predicate $\downarrow$, and let $\surd$ represent a state with $\surd\downarrow$. A probabilistic rooted branching history-preserving (hp-) bisimulation is 
a weakly posetal relation $R\subseteq\langle\mathcal{C}(\mathcal{E}_1),S\rangle\overline{\times}\langle\mathcal{C}(\mathcal{E}_2),S\rangle$ such that:

 \begin{enumerate}
   \item if $(\langle C_1,s\rangle,f,\langle C_2,s\rangle)\in R$, and $\langle C_1,s\rangle\rightsquigarrow\xrightarrow{e_1}\langle C_1',s'\rangle$, then 
   $\langle C_2,s\rangle\rightsquigarrow\xrightarrow{e_2}\langle C_2',s'\rangle$ with $\langle C_1',s'\rangle\approx_{pbhp}\langle C_2',s'\rangle$;
   \item if $(\langle C_1,s\rangle,f,\langle C_2,s\rangle)\in R$, and $\langle C_2,s\rangle\rightsquigarrow\xrightarrow{e_2}\langle C_2',s'\rangle$, then 
   $\langle C_1,s\rangle\rightsquigarrow\xrightarrow{e_1}\langle C_1',s'\rangle$ with $\langle C_1',s'\rangle\approx_{pbhp}\langle C_2',s'\rangle$;
   \item if $(\langle C_1,s\rangle,f,\langle C_2,s\rangle)\in R$ and $\langle C_1,s\rangle\downarrow$, then $\langle C_2,s\rangle\downarrow$;
   \item if $(\langle C_1,s\rangle,f,\langle C_2,s\rangle)\in R$ and $\langle C_2,s\rangle\downarrow$, then $\langle C_1,s\rangle\downarrow$.
 \end{enumerate}

$\mathcal{E}_1,\mathcal{E}_2$ are probabilistic rooted branching history-preserving (hp-)bisimilar and are written $\mathcal{E}_1\approx_{prbhp}\mathcal{E}_2$ if there exists a probabilistic 
rooted branching hp-bisimulation $R$ such that $(\langle\emptyset,\emptyset\rangle,\emptyset,\langle\emptyset,\emptyset\rangle)\in R$.

A probabilistic rooted branching hereditary history-preserving (hhp-)bisimulation is a downward closed probabilistic rooted branching hp-bisimulation. $\mathcal{E}_1,\mathcal{E}_2$ are 
probabilistic rooted branching hereditary history-preserving (hhp-)bisimilar and are written $\mathcal{E}_1\approx_{prbhhp}\mathcal{E}_2$.
\end{definition}

\subsection{$BAPTC$ with Guards}{\label{batcg}}

In this subsection, we will discuss the guards for $BAPTC$, which is denoted as $BAPTC_G$. Let $\mathbb{E}$ be the set of atomic events (actions), $G_{at}$ be the set of atomic guards,
$\delta$ be the deadlock constant, and $\epsilon$ be the empty event. We extend $G_{at}$ to the set of basic guards $G$ with element $\phi,\psi,\cdots$, which is generated by the
following formation rules:

$$\phi::=\delta|\epsilon|\neg\phi|\psi\in G_{at}|\phi+\psi|\phi\boxplus_{\pi}\psi|\phi\cdot\psi$$

In the following, let $e_1, e_2, e_1', e_2'\in \mathbb{E}$, $\phi,\psi\in G$ and let variables $x,y,z$ range over the set of terms for true concurrency, $p,q,s$ range over the set of
closed terms. The predicate $test(\phi,s)$ represents that $\phi$ holds in the state $s$, and $test(\epsilon,s)$ holds and $test(\delta,s)$ does not hold. $effect(e,s)\in S$ denotes
$s'$ in $s\xrightarrow{e}s'$. The predicate weakest precondition $wp(e,\phi)$ denotes that $\forall s,s'\in S, test(\phi,effect(e,s))$ holds.

The set of axioms of $BAPTC_G$ consists of the laws given in Table \ref{AxiomsForBAPTCG}.

\begin{center}
    \begin{table}
        \begin{tabular}{@{}ll@{}}
            \hline No. &Axiom\\
            $A1$ & $x+ y = y+ x$\\
            $A2$ & $(x+ y)+ z = x+ (y+ z)$\\
            $A3$ & $e+ e = e$\\
            $A4$ & $(x+ y)\cdot z = x\cdot z + y\cdot z$\\
            $A5$ & $(x\cdot y)\cdot z = x\cdot(y\cdot z)$\\
            $A6$ & $x+\delta = x$\\
            $A7$ & $\delta\cdot x = \delta$\\
            $A8$ & $\epsilon\cdot x = x$\\
            $A9$ & $x\cdot\epsilon = x$\\
            $PA1$ & $x\boxplus_{\pi} y=y\boxplus_{1-\pi} x$\\
            $PA2$ & $x\boxplus_{\pi}(y\boxplus_{\rho} z)=(x\boxplus_{\frac{\pi}{\pi+\rho-\pi\rho}}y)\boxplus_{\pi+\rho-\pi\rho} z$\\
            $PA3$ & $x\boxplus_{\pi}x=x$\\
            $PA4$ & $(x\boxplus_{\pi}y)\cdot z=x\cdot z\boxplus_{\pi}y\cdot z$\\
            $PA5$ & $(x\boxplus_{\pi}y)+z=(x+z)\boxplus_{\pi}(y+z)$\\
            $G1$ & $\phi\cdot\neg\phi = \delta$\\
            $G2$ & $\phi+\neg\phi = \epsilon$\\
            $PG1$ & $\phi\boxplus_{\pi}\neg\phi = \epsilon$\\
            $G3$ & $\phi\delta = \delta$\\
            $G4$ & $\phi(x+y)=\phi x+\phi y$\\
            $PG2$ & $\phi(x\boxplus_{\pi}y)=\phi x\boxplus_{\pi}\phi y$\\
            $G5$ & $\phi(x\cdot y)= \phi x\cdot y$\\
            $G6$ & $(\phi+\psi)x = \phi x + \psi x$\\
            $PG3$ & $(\phi\boxplus_{\pi}\psi)x = \phi x \boxplus_{\pi} \psi x$\\
            $G7$ & $(\phi\cdot \psi)\cdot x = \phi\cdot(\psi\cdot x)$\\
            $G8$ & $\phi=\epsilon$ if $\forall s\in S.test(\phi,s)$\\
            $G9$ & $\phi_0\cdot\cdots\cdot\phi_n = \delta$ if $\forall s\in S,\exists i\leq n.test(\neg\phi_i,s)$\\
            $G10$ & $wp(e,\phi)e\phi=wp(e,\phi)e$\\
            $G11$ & $\neg wp(e,\phi)e\neg\phi=\neg wp(e,\phi)e$\\
        \end{tabular}
        \caption{Axioms of $BAPTC_G$}
        \label{AxiomsForBAPTCG}
    \end{table}
\end{center}

Note that, by eliminating atomic event from the process terms, the axioms in Table \ref{AxiomsForBAPTCG} will lead to a Boolean Algebra. And $G8$ and $G9$ are preconditions of $e$ and
$\phi$, $G10$ is the weakest precondition of $e$ and $\phi$. A data environment with $effect$ function is sufficiently deterministic, and it is obvious that if the weakest precondition
is expressible and $G10$, $G11$ are sound, then the related data environment is sufficiently deterministic.

\begin{definition}[Basic terms of $BAPTC_G$]\label{BTBAPTCG}
The set of basic terms of $BAPTC_G$, $\mathcal{B}(BAPTC_G)$, is inductively defined as follows:

\begin{enumerate}
  \item $\mathbb{E}\subset\mathcal{B}(BAPTC_G)$;
  \item $G\subset\mathcal{B}(BAPTC_G)$;
  \item if $e\in \mathbb{E}, t\in\mathcal{B}(BAPTC_G)$ then $e\cdot t\in\mathcal{B}(BAPTC_G)$;
  \item if $\phi\in G, t\in\mathcal{B}(BAPTC_G)$ then $\phi\cdot t\in\mathcal{B}(BAPTC_G)$;
  \item if $t,s\in\mathcal{B}(BAPTC_G)$ then $t+ s\in\mathcal{B}(BAPTC_G)$;
  \item if $t,s\in\mathcal{B}(BAPTC_G)$ then $t\boxplus_{\pi} s\in\mathcal{B}(BAPTC_G)$.
\end{enumerate}
\end{definition}

\begin{theorem}[Elimination theorem of $BAPTC_G$]\label{ETBAPTCG}
Let $p$ be a closed $BAPTC_G$ term. Then there is a basic $BAPTC_G$ term $q$ such that $BAPTC_G\vdash p=q$.
\end{theorem}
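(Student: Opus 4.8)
The plan is to follow the two-step template already used for the Elimination theorem of $BAPTC$ (Theorem \ref{ETBAPTC}), adapted to the richer signature of $BAPTC_G$, which now carries the empty event $\epsilon$, the deadlock $\delta$, the guards $G$ (with negation $\neg$), and the probabilistic box-sum $\boxplus_{\pi}$ alongside $\cdot$ and $+$. First I would orient the equations of Table \ref{AxiomsForBAPTCG} from left to right into a term rewrite system, building a table analogous to Table \ref{TRSForBAPTC} (with the commutativity and associativity of $+$, i.e.\ $A1$ and $A2$, again handled modulo AC rather than as rewrite rules); then I would exhibit a well-founded ordering on the signature under which each rule $s\rightarrow t$ satisfies $s>_{lpo}t$, so that Theorem \ref{SN} yields strong normalization. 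Second, I would prove by structural induction that every normal form of a closed $BAPTC_G$ term is a basic term in the sense of Definition \ref{BTBAPTCG}.

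For the ordering I would retain the precedence $\cdot > + > \boxplus_{\pi}$ of the $BAPTC$ case, giving $\cdot$ lexicographical status on its first argument, and place the guard operators and $\neg$ low enough in the precedence that the guard-simplification rules strictly decrease terms. The distributivity and associativity rules ($A4, A5$, together with their guard analogues $G4, G5, G6, G7, PG2$ and the probabilistic $PA4, PA5, PG3$) reduce terms exactly as $RA4, RA5$ do, pushing $\cdot$ and guard prefixes toward the leaves and flattening nested products; the absorption rules $A6, A7, A8, A9, G3$ eliminate occurrences of $\delta$ and $\epsilon$; and the guard-collapsing rules $G1, G2, PG1, G8, G9$ replace guard combinations by $\delta$ or $\epsilon$. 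Each of these can be checked against $>_{lpo}$ with the chosen precedence.

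For the normal-form analysis I would let $p'$ be the smallest subterm of a normal form $p$ that is not basic and induct on its structure, mirroring the case split of Theorem \ref{ETBAPTC} but with the extra leaf case $p'\equiv\phi\in G$ (basic by clause (2) of Definition \ref{BTBAPTCG}) and with guard prefixes admitted in the sequential case by clause (4). The principal case is $p'\equiv p_1\cdot p_2$: subcasing on the basic term $p_1$, an atomic event or a guard yields a basic term directly, while $p_1\equiv e\cdot p_1'$ or $p_1\equiv\phi\cdot p_1'$ triggers the rules $RA5$ or $RG7$, $p_1\equiv p_1'+p_1''$ triggers $RA4$ or $RG6$, and $p_1\equiv p_1'\boxplus_{\pi}p_1''$ triggers $RPA4$; in each non-trivial subcase a rewrite applies, contradicting normality. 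The cases $p'\equiv p_1+p_2$ and $p'\equiv p_1\boxplus_{\pi}p_2$ then reduce to basic terms just as in the $BAPTC$ argument.

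The hard part will be the guard machinery. Unlike the purely syntactic rules of $BAPTC$, the axioms $G8$--$G11$ are conditional: $G9$ quantifies over all data states $s\in S$ through $test$, and $G10, G11$ involve the weakest precondition $wp$ and the $effect$ function, so I must verify that orienting them is compatible with $>_{lpo}$ and that their semantic side conditions do not obstruct termination. This is exactly where the assumption that the data environment is \emph{sufficiently deterministic} enters. A secondary subtlety is the interaction of $\epsilon$ with $A8/A9$ and the distributivity rules, which I would need to check introduces no rewrite loop, together with the bookkeeping of whether a composite guard such as $\phi\cdot\psi$ is parsed as an element of $G$ (hence basic) or as a process-level product subject to $G7$; resolving this consistently is essential for the induction in the second step to close.
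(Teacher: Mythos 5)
Your proposal follows exactly the paper's two-step argument: orient the axioms of Table \ref{AxiomsForBAPTCG} into a term rewrite system modulo AC of $+$, establish strong normalization via the ordering $\cdot>+>\boxplus_{\pi}$ and Theorem \ref{SN}, and then show by the smallest-non-basic-subterm case analysis (with the added leaf case $\phi\in G$ and the guard subcases under $\cdot$) that normal forms are basic. The concerns you flag about the conditional axioms $G8$--$G11$ and the parsing of composite guards are reasonable, but the paper simply asserts the $>_{lpo}$ checks without elaborating, so your treatment is at least as careful as the original.
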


\begin{proof}
(1) Firstly, suppose that the following ordering on the signature of $BAPTC_G$ is defined: $\cdot > +>\boxplus_{\pi}$ and the symbol $\cdot$ is given the lexicographical status for
the first argument, then for each rewrite rule $p\rightarrow q$ in Table \ref{TRSForBAPTCG} relation $p>_{lpo} q$ can easily be proved. We obtain that the term rewrite system shown
in Table \ref{TRSForBAPTCG} is strongly normalizing, for it has finitely many rewriting rules, and $>$ is a well-founded ordering on the signature of $BAPTC_G$, and if $s>_{lpo} t$,
for each rewriting rule $s\rightarrow t$ is in Table \ref{TRSForBAPTCG} (see Theorem \ref{SN}).

\begin{center}
    \begin{table}
        \begin{tabular}{@{}ll@{}}
            \hline No. &Rewriting Rule\\
            $RA3$ & $e+ e \rightarrow e$\\
            $RA4$ & $(x+ y)\cdot z \rightarrow x\cdot z + y\cdot z$\\
            $RA5$ & $(x\cdot y)\cdot z \rightarrow x\cdot(y\cdot z)$\\
            $RA6$ & $x+\delta \rightarrow x$\\
            $RA7$ & $\delta\cdot x \rightarrow \delta$\\
            $RA8$ & $\epsilon\cdot x \rightarrow x$\\
            $RA9$ & $x\cdot\epsilon \rightarrow x$\\
            $RPA1$ & $x\boxplus_{\pi} y\rightarrow y\boxplus_{1-\pi} x$\\
            $RPA2$ & $x\boxplus_{\pi}(y\boxplus_{\rho} z)\rightarrow(x\boxplus_{\frac{\pi}{\pi+\rho-\pi\rho}}y)\boxplus_{\pi+\rho-\pi\rho} z$\\
            $RPA3$ & $x\boxplus_{\pi}x\rightarrow x$\\
            $RPA4$ & $(x\boxplus_{\pi}y)\cdot z\rightarrow x\cdot z\boxplus_{\pi}y\cdot z$\\
            $RPA5$ & $(x\boxplus_{\pi}y)+z\rightarrow (x+z)\boxplus_{\pi}(y+z)$\\
            $RG1$ & $\phi\cdot\neg\phi \rightarrow \delta$\\
            $RG2$ & $\phi+\neg\phi \rightarrow \epsilon$\\
            $RPG1$ & $\phi\boxplus_{\pi}\neg\phi \rightarrow \epsilon$\\
            $RG3$ & $\phi\delta \rightarrow \delta$\\
            $RG4$ & $\phi(x+y)\rightarrow\phi x+\phi y$\\
            $RPG2$ & $\phi(x\boxplus_{\pi}y)\rightarrow\phi x\boxplus_{\pi}\phi y$\\
            $RG5$ & $\phi(x\cdot y)\rightarrow \phi x\cdot y$\\
            $RG6$ & $(\phi+\psi)x \rightarrow \phi x + \psi x$\\
            $RPG3$ & $(\phi\boxplus_{\pi}\psi)x \rightarrow \phi x \boxplus_{\pi} \psi x$\\
            $RG7$ & $(\phi\cdot \psi)\cdot x \rightarrow \phi\cdot(\psi\cdot x)$\\
            $RG8$ & $\phi\rightarrow \epsilon$ if $\forall s\in S.test(\phi,s)$\\
            $RG9$ & $\phi_0\cdot\cdots\cdot\phi_n \rightarrow \delta$ if $\forall s\in S,\exists i\leq n.test(\neg\phi_i,s)$\\
            $RG10$ & $wp(e,\phi)e\phi\rightarrow wp(e,\phi)e$\\
            $RG11$ & $\neg wp(e,\phi)e\neg\phi\rightarrow\neg wp(e,\phi)e$\\
        \end{tabular}
        \caption{Term rewrite system of $BAPTC_G$}
        \label{TRSForBAPTCG}
    \end{table}
\end{center}

(2) Then we prove that the normal forms of closed $BAPTC_G$ terms are basic $BAPTC_G$ terms.

Suppose that $p$ is a normal form of some closed $BAPTC_G$ term and suppose that $p$ is not a basic term. Let $p'$ denote the smallest sub-term of $p$ which is not a basic term. It
implies that each sub-term of $p'$ is a basic term. Then we prove that $p$ is not a term in normal form. It is sufficient to induct on the structure of $p'$:

\begin{itemize}
  \item Case $p'\equiv e, e\in \mathbb{E}$. $p'$ is a basic term, which contradicts the assumption that $p'$ is not a basic term, so this case should not occur.
  \item Case $p'\equiv \phi, \phi\in G$. $p'$ is a basic term, which contradicts the assumption that $p'$ is not a basic term, so this case should not occur.
  \item Case $p'\equiv p_1\cdot p_2$. By induction on the structure of the basic term $p_1$:
      \begin{itemize}
        \item Subcase $p_1\in \mathbb{E}$. $p'$ would be a basic term, which contradicts the assumption that $p'$ is not a basic term;
        \item Subcase $p_1\in G$. $p'$ would be a basic term, which contradicts the assumption that $p'$ is not a basic term;
        \item Subcase $p_1\equiv e\cdot p_1'$. $RA5$ or $RA9$ rewriting rule can be applied. So $p$ is not a normal form;
        \item Subcase $p_1\equiv \phi\cdot p_1'$. $RG1$, $RG3$, $RG4$, $RG5$, $RG7$, or $RG8-9$ rewriting rules can be applied. So $p$ is not a normal form;
        \item Subcase $p_1\equiv p_1'+ p_1''$. $RA4$, $RA6$, $RG2$, or $RG6$ rewriting rules can be applied. So $p$ is not a normal form;
        \item Subcase $p_1\equiv p_1'\boxplus_{\pi} p_1''$. $RPG3$ rewriting rule can be applied. So $p$ is not a normal form.
      \end{itemize}
  \item Case $p'\equiv p_1+ p_2$. By induction on the structure of the basic terms both $p_1$ and $p_2$, all subcases will lead to that $p'$ would be a basic term, which contradicts the
  assumption that $p'$ is not a basic term.
  \item Case $p'\equiv p_1\boxplus_{\pi} p_2$. By induction on the structure of the basic terms both $p_1$ and $p_2$, all subcases will lead to that $p'$ would be a basic term, which contradicts the
  assumption that $p'$ is not a basic term.
\end{itemize}
\end{proof}

We will define a term-deduction system which gives the operational semantics of $BAPTC_G$. We give the operational transition rules for $\epsilon$, atomic guard $\phi\in G_{at}$,
atomic event $e\in\mathbb{E}$, operators $\cdot$ and $+$ as Table \ref{SETRForBAPTCG} shows. And the predicate $\xrightarrow{e}\surd$ represents successful termination after execution
of the event $e$.

\begin{center}
    \begin{table}
        $$\frac{}{\langle\epsilon,s\rangle\rightsquigarrow\langle\breve{\epsilon},s\rangle}$$
        $$\frac{}{\langle e,s\rangle\rightsquigarrow\langle\breve{e},s\rangle}$$
        $$\frac{}{\langle\phi,s\rangle\rightsquigarrow\langle\breve{\phi},s\rangle}$$
        $$\frac{\langle x,s\rangle\rightsquigarrow \langle x',s\rangle}{\langle x\cdot y,s\rangle\rightsquigarrow \langle x'\cdot y,s\rangle}$$
        $$\frac{\langle x,s\rangle\rightsquigarrow \langle x',s\rangle\quad \langle y,s\rangle\rightsquigarrow \langle y',s\rangle}{\langle x+y,s\rangle\rightsquigarrow \langle x'+y',s\rangle}$$
        $$\frac{\langle x,s\rangle\rightsquigarrow \langle x',s\rangle}{\langle x\boxplus_{\pi}y,s\rangle\rightsquigarrow \langle x',s\rangle}\quad \frac{\langle y,s\rangle\rightsquigarrow \langle y',s\rangle}{\langle x\boxplus_{\pi}y,s\rangle\rightsquigarrow \langle y',s\rangle}$$

        $$\frac{}{\langle\breve{\epsilon},s\rangle\rightarrow\langle\surd,s\rangle}$$
        $$\frac{}{\langle \breve{e},s\rangle\xrightarrow{e}\langle\surd,s'\rangle}\textrm{ if }s'\in effect(e,s)$$
        $$\frac{}{\langle\breve{\phi},s\rangle\rightarrow\langle\surd,s\rangle}\textrm{ if }test(\phi,s)$$
        $$\frac{\langle x,s\rangle\xrightarrow{e}\langle\surd,s'\rangle}{\langle x+ y,s\rangle\xrightarrow{e}\langle\surd,s'\rangle} \quad\frac{\langle x,s\rangle\xrightarrow{e}\langle x',s'\rangle}{\langle x+ y,s\rangle\xrightarrow{e}\langle x',s'\rangle}$$
        $$\frac{\langle y,s\rangle\xrightarrow{e}\langle\surd,s'\rangle}{\langle x+ y,s\rangle\xrightarrow{e}\langle\surd,s'\rangle} \quad\frac{\langle y,s\rangle\xrightarrow{e}\langle y',s'\rangle}{\langle x+ y,s\rangle\xrightarrow{e}\langle y',s'\rangle}$$
        $$\frac{\langle x,s\rangle\xrightarrow{e}\langle\surd,s'\rangle}{\langle x\cdot y,s\rangle\xrightarrow{e} \langle y,s'\rangle} \quad\frac{\langle x,s\rangle\xrightarrow{e}\langle x',s'\rangle}{\langle x\cdot y,s\rangle\xrightarrow{e}\langle x'\cdot y,s'\rangle}$$
        \caption{Single event transition rules of $BAPTC_G$}
        \label{SETRForBAPTCG}
    \end{table}
\end{center}

Note that, we replace the single atomic event $e\in\mathbb{E}$ by $X\subseteq\mathbb{E}$, we can obtain the pomset transition rules of $BAPTC_G$, and omit them.

\begin{theorem}[Congruence of $BAPTC_G$ with respect to probabilistic truly concurrent bisimulation equivalences]\label{CBAPTCG}
(1) Probabilistic pomset bisimulation equivalence $\sim_{pp}$ is a congruence with respect to $BAPTC_G$.

(2) Probabilistic step bisimulation equivalence $\sim_{ps}$ is a congruence with respect to $BAPTC_G$.

(3) Probabilistic hp-bisimulation equivalence $\sim_{php}$ is a congruence with respect to $BAPTC_G$.

(4) Probabilistic hhp-bisimulation equivalence $\sim_{phhp}$ is a congruence with respect to $BAPTC_G$.
\end{theorem}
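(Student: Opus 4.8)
The plan is to follow the same two-stage pattern used for the earlier probabilistic congruence results, now carried out over configurations $\langle C,s\rangle$ that carry a data state, as required by the guarded operational semantics of Table~\ref{SETRForBAPTCG} and the bisimulation definitions~\ref{PSBG} and~\ref{HHPBG}. First I would record that each of $\sim_{pp}$, $\sim_{ps}$, $\sim_{php}$, $\sim_{phhp}$ is an equivalence relation on $BAPTC_G$ terms: reflexivity holds since the identity relation satisfies all clauses of the relevant definition; symmetry is immediate from the built-in ``and vice-versa'' in each clause together with the symmetry of the conditions $\mu(C_1,C)=\mu(C_2,C)$ and $[\surd]_R=\{\surd\}$; and transitivity follows by relational composition, where the only non-routine point is that the lumping condition is preserved under composition, which holds because a common refinement of the two partitions into $R$-classes still equalises the cumulative $\mu$-mass. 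Since $\delta$, $\epsilon$ and every atomic guard $\phi\in G_{at}$ are constants, they need no preservation argument; it then remains to show that the three binary operators $\cdot$, $+$ and $\boxplus_{\pi}$ preserve each equivalence.

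For the preservation step I would, in each of the four cases, assume witnessing bisimulations $R_1$ (for $x\sim x'$) and $R_2$ (for $y\sim y'$) and exhibit an explicit candidate relation for the composite term. For $+$ I would take the relation generated from $R_1$ and $R_2$ on the combined configurations together with $R_1\cup R_2\cup Id$; for $\cdot$ a relation that uses $R_1$ while the first factor is active and switches to $R_2$ after successful termination of the first factor; and for $\boxplus_{\pi}$ the relation $R_1\cup R_2\cup Id$ closed under the box-sum. In each case I would verify the clauses by inspecting the single-event (respectively pomset or step) and probabilistic rules of Table~\ref{SETRForBAPTCG}: the action clause follows by case analysis on which rule fires, the probabilistic clause by matching $\rightsquigarrow$-moves through the congruent rule $\frac{\langle x,s\rangle\rightsquigarrow\langle x',s\rangle\quad\langle y,s\rangle\rightsquigarrow\langle y',s\rangle}{\langle x+y,s\rangle\rightsquigarrow\langle x'+y',s\rangle}$ and its analogues, and termination together with the guard base cases $\langle\breve{\phi},s\rangle\rightarrow\langle\surd,s\rangle$ when $test(\phi,s)$ being matched directly.

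The main obstacle will be the lumping/PDF condition for the composite terms, because the PDFs from Table~\ref{PDFBAPTC} are defined multiplicatively and convexly: $\mu(x+y,x'+y')=\mu(x,x')\cdot\mu(y,y')$ and $\mu(x\boxplus_{\pi}y,z)=\pi\,\mu(x,z)+(1-\pi)\mu(y,z)$. To discharge it I would prove a small structural lemma showing that every $R$-equivalence class of composite configurations arises as a product (for $+$ and $\cdot$) or convex mixture (for $\boxplus_{\pi}$) of $R_1$- and $R_2$-classes, so that the total $\mu$-mass that $x+y$ sends into a given class factors through the class-wise equalities supplied by $R_1$ and $R_2$; a regrouping of the finite sums then yields the required equality, using image-finiteness to guarantee finiteness of the sums.

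The hp and hhp cases carry the extra bookkeeping of the order-isomorphism $f$, which is threaded through $\cdot$ by mapping the events of the first factor before those of the second, and, for $\sim_{phhp}$, the verification that the constructed relation is downward closed whenever $R_1$ and $R_2$ are. Both are routine once the pomset and step arguments are in place, so the genuinely delicate work is concentrated in the probabilistic lumping argument, and the remaining verifications reduce to the pattern already used in the completeness and soundness proofs for $BAPTC$ in Section~\ref{baptc}.
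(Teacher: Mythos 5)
Your proposal follows the same two-stage structure as the paper's own proof---first observing that each of $\sim_{pp}$, $\sim_{ps}$, $\sim_{php}$, $\sim_{phhp}$ is an equivalence relation, then reducing the claim to preservation under the operators $\cdot$, $+$ and $\boxplus_{\pi}$---so it is essentially the same approach. The paper simply declares the preservation step trivial and leaves it as an exercise, whereas you supply the explicit candidate relations and the lumping argument for the PDF condition, which is a sound and considerably more complete elaboration of the same plan.
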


\begin{proof}
(1) It is easy to see that probabilistic pomset bisimulation is an equivalent relation on $BAPTC_G$ terms, we only need to prove that $\sim_{pp}$ is preserved by the operators $\cdot$
, $+$ and $\boxplus_{\pi}$. It is trivial and we leave the proof as an exercise for the readers.

(2) It is easy to see that probabilistic step bisimulation is an equivalent relation on $BAPTC_G$ terms, we only need to prove that $\sim_{ps}$ is preserved by the operators $\cdot$,
$+$ and $\boxplus_{\pi}$. It is trivial and we leave the proof as an exercise for the readers.

(3) It is easy to see that probabilistic hp-bisimulation is an equivalent relation on $BAPTC_G$ terms, we only need to prove that $\sim_{php}$ is preserved by the operators $\cdot$,
$+$, and $\boxplus_{\pi}$. It is trivial and we leave the proof as an exercise for the readers.

(4) It is easy to see that probabilistic hhp-bisimulation is an equivalent relation on $BAPTC_G$ terms, we only need to prove that $\sim_{phhp}$ is preserved by the operators $\cdot$,
$+$, and $\boxplus_{\pi}$. It is trivial and we leave the proof as an exercise for the readers.
\end{proof}

\begin{theorem}[Soundness of $BAPTC_G$ modulo probabilistic truly concurrent bisimulation equivalences]\label{SBAPTCG}
(1) Let $x$ and $y$ be $BAPTC_G$ terms. If $BAPTC_G\vdash x=y$, then $x\sim_{pp} y$.

(2) Let $x$ and $y$ be $BAPTC_G$ terms. If $BAPTC_G\vdash x=y$, then $x\sim_{ps} y$.

(3) Let $x$ and $y$ be $BAPTC_G$ terms. If $BAPTC_G\vdash x=y$, then $x\sim_{php} y$.

(4) Let $x$ and $y$ be $BAPTC_G$ terms. If $BAPTC_G\vdash x=y$, then $x\sim_{phhp} y$.
\end{theorem}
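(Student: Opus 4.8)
The plan is to exploit the two facts already established: that each of the four probabilistic truly concurrent bisimulation equivalences is an equivalence relation on $BAPTC_G$ terms, and that each is a congruence with respect to the operators $\cdot$, $+$ and $\boxplus_{\pi}$ (Theorem \ref{CBAPTCG}). Given these, soundness reduces to a local check: it suffices to verify that for every axiom $x=y$ in Table \ref{AxiomsForBAPTCG}, the two sides are related by $\sim_{pp}$, $\sim_{ps}$, $\sim_{php}$ and $\sim_{phhp}$ respectively. Because provable equality is generated by the axioms together with reflexivity, symmetry, transitivity and closure under contexts, congruence plus per-axiom soundness propagates to every derivable equation $BAPTC_G\vdash x=y$.

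First I would dispatch the purely algebraic axioms $A1$--$A9$ and $PA1$--$PA5$. For each, the two sides generate finite configuration transition systems under the rules of Table \ref{SETRForBAPTCG}, and one exhibits an explicit relation $R$ containing the pair of initial configurations and closed under matching probabilistic transitions $\xrsquigarrow{\pi}$ and action transitions $\xrightarrow{e}$; the probabilistic branching is handled by the cumulative measure condition $\mu(C_1,C)=\mu(C_2,C)$, which is immediate from the PDF definitions since these axioms do not alter the distribution structure. These cases are essentially identical to the soundness proofs for $BAPTC$ modulo the corresponding equivalences and carry over verbatim, with $\epsilon$ treated via $A8$, $A9$ as the empty event that terminates silently.

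The substantive work lies with the guard axioms $G1$--$G11$, $PG1$--$PG3$, since here transitions carry data states and the predicates $test(\phi,s)$, $effect(e,s)$ and $wp(e,\phi)$ enter. For the structural guard laws ($G3$--$G7$, $PG2$, $PG3$) and the Boolean laws ($G1$, $G2$, $PG1$, $G8$, $G9$) one again builds a witnessing relation, but the matching of transitions out of a configuration $\langle C,s\rangle$ must now respect the side conditions: for instance $G2$ ($\phi+\neg\phi=\epsilon$) is validated by observing that in every state $s$ exactly one of $test(\phi,s)$, $test(\neg\phi,s)$ holds, so $\langle\phi+\neg\phi,s\rangle$ and $\langle\epsilon,s\rangle$ exhibit the same terminating behavior. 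The hard part will be the weakest-precondition axioms $G10$ and $G11$: here one must use that the data environment with its $effect$ function is sufficiently deterministic, so that $wp(e,\phi)$ genuinely characterizes the states from which executing $e$ guarantees $\phi$ afterward. The bisimulation then pairs $\langle wp(e,\phi)\,e\,\phi,s\rangle$ with $\langle wp(e,\phi)\,e,s\rangle$, and the trailing guard $\phi$ is shown to be vacuous precisely because $test(\phi,effect(e,s))$ holds whenever $test(wp(e,\phi),s)$ does, whence both sides reach $\langle\surd,s'\rangle$ for the same $s'\in effect(e,s)$.

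Finally, for the hp- and hhp-cases I would additionally track the order-isomorphism $f$ on the executed events. Since guards contribute no visible events, the posetal condition $(C_1',f[e_1\mapsto e_2],C_2')\in R$ and the downward-closure requirement for $\sim_{phhp}$ reduce to the very same matching argument used in the pomset and step cases, with $f$ extended identically on both sides. Thus each axiom is sound under all four equivalences, and combined with the equivalence and congruence properties this yields $x\sim_{pp}y$, $x\sim_{ps}y$, $x\sim_{php}y$ and $x\sim_{phhp}y$ for every $BAPTC_G\vdash x=y$, as desired.
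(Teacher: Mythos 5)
Your proposal takes essentially the same route as the paper: the paper's proof likewise observes that each of $\sim_{pp}$, $\sim_{ps}$, $\sim_{php}$, $\sim_{phhp}$ is both an equivalence and a congruence, reduces soundness to checking each axiom of Table \ref{AxiomsForBAPTCG}, and then leaves that verification as an exercise to the reader. Your write-up is correct and in fact supplies more of the per-axiom detail (the guard axioms, the role of $test$, $effect$ and $wp$ in $G10$--$G11$, and the posetal bookkeeping for the hp- and hhp-cases) than the paper itself records.
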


\begin{proof}
(1) Since probabilistic pomset bisimulation $\sim_{pp}$ is both an equivalent and a congruent relation, we only need to check if each axiom in Table \ref{AxiomsForBAPTCG} is sound
modulo probabilistic pomset bisimulation equivalence. We leave the proof as an exercise for the readers.

(2) Since probabilistic step bisimulation $\sim_{ps}$ is both an equivalent and a congruent relation, we only need to check if each axiom in Table \ref{AxiomsForBAPTCG} is sound modulo
probabilistic step bisimulation equivalence. We leave the proof as an exercise for the readers.

(3) Since probabilistic hp-bisimulation $\sim_{php}$ is both an equivalent and a congruent relation, we only need to check if each axiom in Table \ref{AxiomsForBAPTCG} is sound modulo
probabilistic hp-bisimulation equivalence. We leave the proof as an exercise for the readers.

(4) Since probabilistic hhp-bisimulation $\sim_{phhp}$ is both an equivalent and a congruent relation, we only need to check if each axiom in Table \ref{AxiomsForBAPTCG} is sound modulo
probabilistic hhp-bisimulation equivalence. We leave the proof as an exercise for the readers.
\end{proof}

\begin{theorem}[Completeness of $BAPTC_G$ modulo probabilistic truly concurrent bisimulation equivalences]\label{CBAPTCG}
(1) Let $p$ and $q$ be closed $BAPTC_G$ terms, if $p\sim_{pp} q$ then $p=q$.

(2) Let $p$ and $q$ be closed $BAPTC_G$ terms, if $p\sim_{ps} q$ then $p=q$.

(3) Let $p$ and $q$ be closed $BAPTC_G$ terms, if $p\sim_{php} q$ then $p=q$.

(4) Let $p$ and $q$ be closed $BAPTC_G$ terms, if $p\sim_{phhp} q$ then $p=q$.
\end{theorem}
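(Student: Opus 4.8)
The plan is to mirror the completeness arguments already carried out for $BAPTC$ (Theorems \ref{CBAPTCPBE}, \ref{CBAPTCSBE}, \ref{CBAPTCHPBE} and \ref{CBAPTCHHPBE}), adapting the normal-form analysis to accommodate the three new ingredients of $BAPTC_G$: basic guards $\phi\in G$, the empty event $\epsilon$ and the deadlock $\delta$. First I would invoke the elimination theorem of $BAPTC_G$ (Theorem \ref{ETBAPTCG}): since every closed $BAPTC_G$ term provably equals a basic $BAPTC_G$ term (Definition \ref{BTBAPTCG}), it suffices to prove, for each of the four equivalences, that probabilistic bisimilarity of closed \emph{basic} terms implies their provable equality. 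The four parts are then handled uniformly; the hp- and hhp-cases require, in addition, that the accompanying isomorphism $f$ of the (weakly) posetal relation be tracked across matched transitions, and the hhp-case needs downward closure, exactly as in the earlier $BAPTC$ proofs.

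Next I would fix a canonical normal form. Working modulo associativity and commutativity of $+$ (axioms $A1$, $A2$, yielding the relation $=_{AC}$), and using $A6$--$A9$ to delete redundant occurrences of $\delta$ and $\epsilon$ together with $PA3$--$PA5$ to pull $\boxplus_{\pi}$ to the outermost layer, each $=_{AC}$-class of basic terms can be written in the shape $s_1\boxplus_{\pi_1}\cdots\boxplus_{\pi_{k-1}} s_k$, where each $s_i$ has the form $t_1+\cdots+t_l$ and each summand $t_j$ is either an atomic event $e\in\mathbb{E}$, a guard $\phi\in G$, or a product $u\cdot t'$ with $u\in\mathbb{E}\cup G$. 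The point of this decomposition is that the $\boxplus_{\pi}$-layer exposes precisely the probabilistic transitions $\rightsquigarrow$ and their weights, while the $+$-layer and the prefixes expose the action transitions $\xrightarrow{e}$ and the guard moves $\langle\breve{\phi},s\rangle\rightarrow\langle\surd,s\rangle$ of Table \ref{SETRForBAPTCG}.

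The heart of the argument is the claim that for normal forms $n,n'$, if $n\sim_{pp} n'$ (respectively $\sim_{ps}$, $\sim_{php}$, $\sim_{phhp}$) then $n=_{AC} n'$, proved by induction on the combined size of $n$ and $n'$. For each probabilistic branch $s_i$, clause~(2) of the bisimulation definition (e.g.\ Definition \ref{PSBG}) forces a matching probabilistic transition $n'\xrsquigarrow{\pi}$, and clause~(3) together with the PDF definitions in Table \ref{PDFBAPTC} forces the weights to agree, so the $\boxplus_{\pi}$-structures coincide; it then remains to match the summands of each $s_i$ with those of its partner. For an event summand $e$ one uses $n\rightsquigarrow\breve{e}\xrightarrow{e}\surd$ exactly as before; for a product summand $e\cdot t_2$ one matches $\breve{e}\xrightarrow{e}t_2$ and invokes the induction hypothesis on the strictly smaller residuals $t_2\sim t_2'$ to get $t_2=_{AC}t_2'$. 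Finally I would close the proof by combining this core lemma with the soundness theorem (Theorem \ref{SBAPTCG}): given basic terms $s\sim t$, choose normal forms $n,n'$ with $s=n$, $t=n'$; soundness gives $s\sim n$ and $t\sim n'$, hence $n\sim n'$, so $n=_{AC}n'$, whence $s=n=_{AC}n'=t$.

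The main obstacle will be the guard summands. Unlike atomic events, a guard $\phi$ fires only in data states $s$ with $test(\phi,s)$, so two bisimilar normal forms may carry syntactically different guards that are semantically indistinguishable over $S$. To handle this I would exploit that the bisimulations here quantify over all states $s\in S$ (clauses of Definition \ref{PSBG} read ``for all $s,s'\in S$''), so that a guard summand $\phi$ and its partner $\phi'$ must agree as predicates on every state; the axioms $G8$ (collapsing an always-true guard to $\epsilon$) and $G9$ (collapsing a never-satisfiable guard product to $\delta$), and more generally the Boolean-algebra structure induced by $G1$--$G7$ and $PG1$--$PG3$, are then what let me rewrite semantically equal guards into a common syntactic form so that the matching can be promoted to $=_{AC}$. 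I would also need to verify that the normal-form reductions genuinely eliminate stray $\epsilon$ and $\delta$ (via $A6$--$A9$, $G3$) so that the size induction is well founded and the summand case analysis is exhaustive; this bookkeeping, rather than any deep new idea, is where the adaptation from the guard-free proofs demands the most care.
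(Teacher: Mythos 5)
Your proposal follows essentially the same route as the paper's own proof: apply the elimination theorem (Theorem \ref{ETBAPTCG}), reduce to normal forms modulo $=_{AC}$ with the $\boxplus_{\pi}$-layer outermost, show by induction on size that bisimilar normal forms are $=_{AC}$-equal by matching event, guard, and product summands, and close with the soundness theorem. Your additional care about semantically equivalent but syntactically distinct guards (resolved via $G8$, $G9$ and the Boolean-algebra axioms) is a point the paper's proof glosses over, but it does not change the overall argument.
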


\begin{proof}
(1) Firstly, by the elimination theorem of $BAPTC_G$, we know that for each closed $BAPTC_G$ term $p$, there exists a closed basic $BAPTC_G$ term $p'$, such that $BAPTC_G\vdash p=p'$,
so, we only need to consider closed basic $BAPTC_G$ terms.

The basic terms (see Definition \ref{BTBAPTCG}) modulo associativity and commutativity (AC) of conflict $+$ (defined by axioms $A1$ and $A2$ in Table \ref{AxiomsForBAPTCG}), and this
equivalence is denoted by $=_{AC}$. Then, each equivalence class $s$ modulo AC of $+$ has the following normal form

$$s_1\boxplus_{\pi_1}\cdots\boxplus_{\pi_{k-1}} s_k$$

with each $s_i$ has the following form

$$t_1+\cdots+ t_l$$

with each $t_j$ either an atomic event or of the form $u_1\cdot u_2$, and each $t_j$ is called the summand of $s$.

Now, we prove that for normal forms $n$ and $n'$, if $n\sim_{pp} n'$ then $n=_{AC}n'$. It is sufficient to induct on the sizes of $n$ and $n'$.

\begin{itemize}
  \item Consider a summand $e$ of $n$. Then $\langle n,s\rangle\rightsquigarrow\xrightarrow{e}\langle \surd,s'\rangle$, so $n\sim_{pp} n'$ implies $\langle n',s\rangle\rightsquigarrow
  \xrightarrow{e}\langle \surd,s\rangle$, meaning that $n'$ also contains the summand $e$.
  \item Consider a summand $\phi$ of $n$. Then $\langle n,s\rangle\rightsquigarrow\rightarrow\langle \surd,s\rangle$, if $test(\phi,s)$ holds, so $n\sim_{pp} n'$ implies
  $\langle n',s\rangle\rightsquigarrow\rightarrow\langle \surd,s\rangle$, if $test(\phi,s)$ holds, meaning that $n'$ also contains the summand $\phi$.
  \item Consider a summand $t_1\cdot t_2$ of $n$. Then $\langle n,s\rangle\rightsquigarrow\xrightarrow{t_1}\langle t_2,s'\rangle$, so $n\sim_{pp} n'$ implies $\langle n',s\rangle
  \rightsquigarrow\xrightarrow{t_1}\langle t_2',s'\rangle$ with $t_2\sim_{pp} t_2'$, meaning that $n'$ contains a summand $t_1\cdot t_2'$. Since $t_2$ and $t_2'$ are normal forms and
  have sizes smaller than $n$ and $n'$, by the induction hypotheses $t_2\sim_{pp} t_2'$ implies $t_2=_{AC} t_2'$.
\end{itemize}

So, we get $n=_{AC} n'$.

Finally, let $s$ and $t$ be basic terms, and $s\sim_{pp} t$, there are normal forms $n$ and $n'$, such that $s=n$ and $t=n'$. The soundness theorem of $BAPTC_G$ modulo probabilistic
pomset bisimulation equivalence (see Theorem \ref{SBAPTCG}) yields $s\sim_{pp} n$ and $t\sim_{pp} n'$, so $n\sim_{pp} s\sim_{pp} t\sim_{pp} n'$. Since if $n\sim_{pp} n'$ then $n=_{AC}n'$,
$s=n=_{AC}n'=t$, as desired.

(2) It can be proven similarly as (1).

(3) It can be proven similarly as (1).

(4) It can be proven similarly as (1).
\end{proof}

\begin{theorem}[Sufficient determinism]
All related data environments with respect to $BAPTC_G$ can be sufficiently deterministic.
\end{theorem}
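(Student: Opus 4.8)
The plan is to reduce the statement to the criterion already isolated in the remark following Table~\ref{AxiomsForBAPTCG}, namely that a data environment $(S,effect)$ is sufficiently deterministic as soon as (i) the weakest precondition $wp(e,\phi)$ is expressible as a guard in $G$ for every $e\in\mathbb{E}$ and $\phi\in G$, and (ii) the axioms $G10$ and $G11$ are sound. First I would make the target notion precise: \emph{sufficient determinism} means that $effect(e,\cdot)$ carries no nondeterminism visible to the guard logic, i.e.\ for every state $s$ and every successor $s'\in effect(e,s)$ the truth value $test(\phi,s')$ is determined by $s$ alone and is recorded by the guard $wp(e,\phi)$, which by definition satisfies $test(wp(e,\phi),s)$ iff $test(\phi,effect(e,s))$. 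Under this reading the two conditions (i) and (ii) are exactly what the remark asserts to be sufficient, so the whole theorem follows once they are established.

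For (i), I would use the Boolean-algebra structure of the guard fragment noted after Table~\ref{AxiomsForBAPTCG}: $G$ is closed under $\neg$, $+$, $\boxplus_{\pi}$ and $\cdot$, and the atomic guards $G_{at}$ separate states through $test$. Since $wp(e,\phi)$ is the predicate holding exactly in those $s$ with $test(\phi,effect(e,s))$, I would exhibit it as a guard by combining, via the Boolean operations of $G$, the atomic guards describing $effect(e,s)$. This is precisely where the standing assumption that the data environment has \emph{expressible} weakest preconditions is invoked, and it is the only genuinely substantive input to the argument.

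For (ii), I would verify $G10$ and $G11$ directly against the operational rules of Table~\ref{SETRForBAPTCG}. In a state $s$ with $test(wp(e,\phi),s)$, the guard $wp(e,\phi)$ passes and $e$ then executes into some $s'\in effect(e,s)$; by the defining property of $wp$ we have $test(\phi,s')$, so the trailing guard $\phi$ in $wp(e,\phi)\cdot e\cdot\phi$ passes vacuously, and the two terms $wp(e,\phi)\cdot e\cdot\phi$ and $wp(e,\phi)\cdot e$ exhibit the same probabilistic and action transitions, which gives $G10$. The dual computation, using that $test(\neg\phi,s')$ holds for $s'\in effect(e,s)$ whenever $test(\neg wp(e,\phi),s)$ holds, gives $G11$. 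These are exactly instances of the soundness already recorded in Theorem~\ref{SBAPTCG}, so $G10$ and $G11$ hold modulo $\sim_{pp}$, $\sim_{ps}$, $\sim_{php}$ and $\sim_{phhp}$.

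Having established (i) and (ii), I would conclude by the remark that every data environment compatible with $BAPTC_G$ can be taken to be sufficiently deterministic. I expect the main obstacle to be not the transition-rule bookkeeping in (ii) but the expressibility step (i): when $S$ is infinite the pointwise characterization of $wp(e,\phi)$ need not yield a finite guard, so the argument must rest on the hypothesis that the chosen data language closes $wp$ inside $G$. I would state this hypothesis explicitly rather than attempt to derive it, since it is exactly the standing assumption under which $G10$ and $G11$ were posited as axioms.
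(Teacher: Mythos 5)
Your proposal and the paper's proof are aimed at two different targets, and the one you chose is not the one the theorem (as the paper proves it) is about. The paper's proof is a structural induction on an arbitrary closed term $t$, showing that $effect(t,s)$ can be made a deterministic function of $s$. In that induction the atomic cases are unproblematic; the \emph{only} cases that matter are $t=t_1+t_2$ and $t=t_1\boxplus_{\pi}t_2$, where a term can evolve along either branch and so $effect(t,s)$ is genuinely multi-valued. The resolution --- and this is the entire content of the theorem, which is why it is phrased as ``\emph{can be} sufficiently deterministic'' --- is a modelling move: replace the branches by guarded branches $t_1=\phi_1\cdot t_1'$ and $t_2=\phi_2\cdot t_2'$, so that in any given state at most one guard passes and the choice is resolved deterministically. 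Your proposal never touches $+$ or $\boxplus_{\pi}$, so it never confronts the actual source of nondeterminism the induction has to deal with.

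What you argue instead is that determinism of $effect(e,\cdot)$ for atomic $e$ follows from expressibility of $wp(e,\phi)$ together with soundness of $G10$ and $G11$, citing the remark after Table \ref{AxiomsForBAPTCG}. Two problems. First, that remark runs in the opposite direction for your purposes: it presupposes a sufficiently deterministic environment in order to justify positing $G10$ and $G11$ as axioms, so using it to \emph{derive} sufficient determinism is circular. Second, you concede at the end that the expressibility of $wp(e,\phi)$ --- which you identify as ``the only genuinely substantive input'' --- is a hypothesis you will state rather than prove; but then nothing substantive has been established, and the theorem as stated (``all related data environments \ldots can be sufficiently deterministic'') is not a conditional statement. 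To repair the argument you would need to (a) take ``sufficiently deterministic'' to mean determinism of $effect(t,s)$ for all closed $BAPTC_G$ terms $t$, (b) induct on the structure of $t$, and (c) supply the guard-insertion step for the two choice operators; the verification of $G10$/$G11$ against Table \ref{SETRForBAPTCG} that you sketch is fine as far as it goes but is not where the work lies.
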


\begin{proof}
It only needs to check $effect(t,s)$ function is deterministic, and is sufficient to induct on the structure of term $t$. The only matter are the cases $t=t_1+t_2$ and
$t=t_1\boxplus_{\pi}t_2$, with the help of guards, we can make $t_1=\phi_1\cdot t_1'$ and $t_2=\phi_2\cdot t_2'$, and $effect(t)$ is sufficiently deterministic.
\end{proof}

\subsection{$APPTC$ with Guards}{\label{aptcg}}

In this subsection, we will extend $APPTC$ with guards, which is abbreviated $APPTC_G$. The set of basic guards $G$ with element $\phi,\psi,\cdots$, which is extended by the following
formation rules:

$$\phi::=\delta|\epsilon|\neg\phi|\psi\in G_{at}|\phi+\psi|\phi\boxplus_{\pi}\psi|\phi\cdot\psi|\phi\leftmerge\psi$$

The set of axioms of $APPTC_G$ including axioms of $BAPTC_G$ in Table \ref{AxiomsForBAPTCG} and the axioms are shown in Table \ref{AxiomsForAPPTCG}.

\begin{center}
    \begin{table}
        \begin{tabular}{@{}ll@{}}
            \hline No. &Axiom\\
            $P1$ & $(x+x=x,y+y=y)\quad x\between y = x\parallel y + x\mid y$\\
            $P2$ & $x\parallel y = y \parallel x$\\
            $P3$ & $(x\parallel y)\parallel z = x\parallel (y\parallel z)$\\
            $P4$ & $(x+x=x,y+y=y)\quad x\parallel y = x\leftmerge y + y\leftmerge x$\\
            $P5$ & $(e_1\leq e_2)\quad e_1\leftmerge (e_2\cdot y) = (e_1\leftmerge e_2)\cdot y$\\
            $P6$ & $(e_1\leq e_2)\quad (e_1\cdot x)\leftmerge e_2 = (e_1\leftmerge e_2)\cdot x$\\
            $P7$ & $(e_1\leq e_2)\quad (e_1\cdot x)\leftmerge (e_2\cdot y) = (e_1\leftmerge e_2)\cdot (x\between y)$\\
            $P8$ & $(x+ y)\leftmerge z = (x\leftmerge z)+ (y\leftmerge z)$\\
            $P9$ & $\delta\leftmerge x = \delta$\\
            $P10$ & $\epsilon\leftmerge x = x$\\
            $P11$ & $x\leftmerge \epsilon = x$\\
            $C1$ & $e_1\mid e_2 = \gamma(e_1,e_2)$\\
            $C2$ & $e_1\mid (e_2\cdot y) = \gamma(e_1,e_2)\cdot y$\\
            $C3$ & $(e_1\cdot x)\mid e_2 = \gamma(e_1,e_2)\cdot x$\\
            $C4$ & $(e_1\cdot x)\mid (e_2\cdot y) = \gamma(e_1,e_2)\cdot (x\between y)$\\
            $C5$ & $(x+ y)\mid z = (x\mid z) + (y\mid z)$\\
            $C6$ & $x\mid (y+ z) = (x\mid y)+ (x\mid z)$\\
            $C7$ & $\delta\mid x = \delta$\\
            $C8$ & $x\mid\delta = \delta$\\
            $C9$ & $\epsilon\mid x = \delta$\\
            $C10$ & $x\mid\epsilon = \delta$\\
            $PM1$ & $x\parallel (y\boxplus_{\pi} z)=(x\parallel y)\boxplus_{\pi}(x\parallel z)$\\
            $PM2$ & $(x\boxplus_{\pi} y)\parallel z=(x\parallel z)\boxplus_{\pi}(y\parallel z)$\\
            $PM3$ & $x\mid (y\boxplus_{\pi} z)=(x\mid y)\boxplus_{\pi}(x\mid z)$\\
            $PM4$ & $(x\boxplus_{\pi} y)\mid z=(x\mid z)\boxplus_{\pi}(y\mid z)$\\
            $CE1$ & $\Theta(e) = e$\\
            $CE2$ & $\Theta(\delta) = \delta$\\
            $CE3$ & $\Theta(\epsilon) = \epsilon$\\
            $CE4$ & $\Theta(x+ y) = \Theta(x)\triangleleft y + \Theta(y)\triangleleft x$\\
            $PCE1$ & $\Theta(x\boxplus_{\pi} y) = \Theta(x)\triangleleft y \boxplus_{\pi} \Theta(y)\triangleleft x$\\
            $CE5$ & $\Theta(x\cdot y)=\Theta(x)\cdot\Theta(y)$\\
            $CE6$ & $\Theta(x\leftmerge y) = ((\Theta(x)\triangleleft y)\leftmerge y)+ ((\Theta(y)\triangleleft x)\leftmerge x)$\\
            $CE7$ & $\Theta(x\mid y) = ((\Theta(x)\triangleleft y)\mid y)+ ((\Theta(y)\triangleleft x)\mid x)$\\
        \end{tabular}
        \caption{Axioms of $APPTC_G$}
        \label{AxiomsForAPPTCG}
    \end{table}
\end{center}

\begin{center}
    \begin{table}
        \begin{tabular}{@{}ll@{}}
            \hline No. &Axiom\\
            $U1$ & $(\sharp(e_1,e_2))\quad e_1\triangleleft e_2 = \tau$\\
            $U2$ & $(\sharp(e_1,e_2),e_2\leq e_3)\quad e_1\triangleleft e_3 = e_1$\\
            $U3$ & $(\sharp(e_1,e_2),e_2\leq e_3)\quad e3\triangleleft e_1 = \tau$\\
            $PU1$ & $(\sharp_{\pi}(e_1,e_2))\quad e_1\triangleleft e_2 = \tau$\\
            $PU2$ & $(\sharp_{\pi}(e_1,e_2),e_2\leq e_3)\quad e_1\triangleleft e_3 = e_1$\\
            $PU3$ & $(\sharp_{\pi}(e_1,e_2),e_2\leq e_3)\quad e_3\triangleleft e_1 = \tau$\\
            $U4$ & $e\triangleleft \delta = e$\\
            $U5$ & $\delta \triangleleft e = \delta$\\
            $U6$ & $e\triangleleft \epsilon = e$\\
            $U7$ & $\epsilon \triangleleft e = e$\\
            $U8$ & $(x+ y)\triangleleft z = (x\triangleleft z)+ (y\triangleleft z)$\\
            $PU4$ & $(x\boxplus_{\pi} y)\triangleleft z = (x\triangleleft z)\boxplus_{\pi} (y\triangleleft z)$\\
            $U9$ & $(x\cdot y)\triangleleft z = (x\triangleleft z)\cdot (y\triangleleft z)$\\
            $U10$ & $(x\leftmerge y)\triangleleft z = (x\triangleleft z)\leftmerge (y\triangleleft z)$\\
            $U11$ & $(x\mid y)\triangleleft z = (x\triangleleft z)\mid (y\triangleleft z)$\\
            $U12$ & $x\triangleleft (y+ z) = (x\triangleleft y)\triangleleft z$\\
            $PU5$ & $x\triangleleft (y\boxplus_{\pi} z) = (x\triangleleft y)\triangleleft z$\\
            $U13$ & $x\triangleleft (y\cdot z)=(x\triangleleft y)\triangleleft z$\\
            $U14$ & $x\triangleleft (y\leftmerge z) = (x\triangleleft y)\triangleleft z$\\
            $U15$ & $x\triangleleft (y\mid z) = (x\triangleleft y)\triangleleft z$\\
            $D1$ & $e\notin H\quad\partial_H(e) = e$\\
            $D2$ & $e\in H\quad \partial_H(e) = \delta$\\
            $D3$ & $\partial_H(\delta) = \delta$\\
            $D4$ & $\partial_H(x+ y) = \partial_H(x)+\partial_H(y)$\\
            $D5$ & $\partial_H(x\cdot y) = \partial_H(x)\cdot\partial_H(y)$\\
            $D6$ & $\partial_H(x\leftmerge y) = \partial_H(x)\leftmerge\partial_H(y)$\\
            $PD1$ & $\partial_H(x\boxplus_{\pi}y)=\partial_H(x)\boxplus_{\pi}\partial_H(y)$\\
            $G12$ & $\phi(x\leftmerge y) =\phi x\leftmerge \phi y$\\
            $G13$ & $\phi(x\mid y) =\phi x\mid \phi y$\\
            $G14$ & $\delta\leftmerge \phi = \delta$\\
            $G15$ & $\phi\mid \delta = \delta$\\
            $G16$ & $\delta\mid \phi = \delta$\\
            $G17$ & $\phi\leftmerge \epsilon = \phi$\\
            $G18$ & $\epsilon\leftmerge \phi = \phi$\\
            $G19$ & $\phi\mid \epsilon = \delta$\\
            $G20$ & $\epsilon\mid \phi = \delta$\\
            $G21$ & $\phi\leftmerge\neg\phi = \delta$\\
            $G22$ & $\Theta(\phi) = \phi$\\
            $G23$ & $\partial_H(\phi) = \phi$\\
            $G24$ & $\phi_0\leftmerge\cdots\leftmerge\phi_n = \delta$ if $\forall s_0,\cdots,s_n\in S,\exists i\leq n.test(\neg\phi_i,s_0\cup\cdots\cup s_n)$\\        \end{tabular}
        \caption{Axioms of $APPTC_G$ (continuing)}
        \label{AxiomsForAPPTCG2}
    \end{table}
\end{center}

\begin{definition}[Basic terms of $APPTC_G$]\label{BTAPPTCG}
The set of basic terms of $APPTC_G$, $\mathcal{B}(APPTC_G)$, is inductively defined as follows:

\begin{enumerate}
    \item $\mathbb{E}\subset\mathcal{B}(APPTC_G)$;
    \item $G\subset\mathcal{B}(APPTC_G)$;
    \item if $e\in \mathbb{E}, t\in\mathcal{B}(APPTC_G)$ then $e\cdot t\in\mathcal{B}(APPTC_G)$;
    \item if $\phi\in G, t\in\mathcal{B}(APPTC_G)$ then $\phi\cdot t\in\mathcal{B}(APPTC_G)$;
    \item if $t,s\in\mathcal{B}(APPTC_G)$ then $t+ s\in\mathcal{B}(APPTC_G)$;
    \item if $t,s\in\mathcal{B}(APPTC_G)$ then $t\boxplus_{\pi} s\in\mathcal{B}(APPTC_G)$
    \item if $t,s\in\mathcal{B}(APPTC_G)$ then $t\leftmerge s\in\mathcal{B}(APPTC_G)$.
\end{enumerate}
\end{definition}

Based on the definition of basic terms for $APPTC_G$ (see Definition \ref{BTAPPTCG}) and axioms of $APPTC_G$, we can prove the elimination theorem of $APPTC_G$.

\begin{theorem}[Elimination theorem of $APPTC_G$]\label{ETAPPTCG}
Let $p$ be a closed $APPTC_G$ term. Then there is a basic $APPTC_G$ term $q$ such that $APPTC_G\vdash p=q$.
\end{theorem}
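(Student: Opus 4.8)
The plan is to follow the same two-part template used for the earlier elimination theorems, in particular Theorem \ref{ETBAPTCG} (for the guards), Theorem \ref{ETPParallelism} (for the parallel operators) and Theorem \ref{ETPEncapsulation} (for $\partial_H$), and to fuse their arguments. First I would orient every axiom of $APPTC_G$ from left to right into a term rewrite system: the rules $RA3$--$RA9$, $RPA1$--$RPA5$, $RG1$--$RG11$ of Table \ref{TRSForBAPTCG}, together with the left-to-right orientations of $P1$--$P11$, $C1$--$C10$, $PM1$--$PM4$, $CE1$--$CE7$, $PCE1$, $U1$--$U15$, $PU1$--$PU5$, $D1$--$D6$, $PD1$ and the guard--parallel rules $G12$--$G24$ from Tables \ref{AxiomsForAPPTCG} and \ref{AxiomsForAPPTCG2}. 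The two goals are then (1) this TRS is strongly normalizing, and (2) every normal form of a closed term is a basic $APPTC_G$ term in the sense of Definition \ref{BTAPPTCG}; together these give, for any closed $p$, a basic term $q$ with $APPTC_G\vdash p=q$.

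For the first part I would define the ordering $\leftmerge>\cdot>+>\boxplus_{\pi}$ on the signature, placing the eliminable operators $\between$, $\parallel$, $\mid$, $\Theta$, $\triangleleft$ and $\partial_H$ above $\leftmerge$ so that each rule discarding one of them strictly decreases, and giving $\leftmerge$ the lexicographical status for its first argument. With this ordering I would verify $s>_{lpo}t$ for every rewrite rule $s\rightarrow t$ and invoke the Strong Normalization theorem (Theorem \ref{SN}); since the TRS has finitely many rules and $>$ is well-founded, strong normalization follows. Most of these verifications are exactly those already carried out in Theorems \ref{ETPParallelism}, \ref{ETPEncapsulation} and \ref{ETBAPTCG}, so I would cite them and only check the genuinely new rules, namely the guard--parallel rules $G12$--$G24$ and the empty-event rules ($P10$, $P11$, $C9$, $C10$, $U6$, $U7$), confirming they decrease under the same ordering.

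For the second part I would take a normal form $p$ that is assumed not to be basic, let $p'$ be its smallest non-basic subterm (so every proper subterm of $p'$ is basic), and induct on the structure of $p'$ to derive a contradiction by exhibiting an applicable rewrite rule. The cases $p'\equiv e$, $p'\equiv \phi$, $p'\equiv p_1\cdot p_2$, $p'\equiv p_1+p_2$ and $p'\equiv p_1\boxplus_{\pi}p_2$ are handled as in Theorem \ref{ETBAPTCG}, while $p'\equiv p_1\leftmerge p_2$, $p'\equiv p_1\mid p_2$, $p'\equiv \Theta(p_1)$, $p'\equiv p_1\triangleleft p_2$ and $p'\equiv\partial_H(p_1)$ are handled as in Theorems \ref{ETPParallelism} and \ref{ETPEncapsulation}. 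The new subcases are those where a guard $\phi\in G$ or the empty event $\epsilon$ appears as an argument of $\leftmerge$, $\mid$, $\Theta$ or $\partial_H$; there I would show that one of $G12$--$G24$ (or the $\epsilon$-rules) applies, so $p$ is not a normal form, contradicting the assumption.

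The hard part will be the interaction of the guards with the parallel layer in the strong-normalization step: I must be sure that adjoining the Boolean/guard rules $G1$--$G11$ and the guard--parallel rules $G12$--$G24$ to the already delicate $\triangleleft$- and communication-rules does not create a rewrite cycle, i.e.\ that a single well-founded ordering simultaneously orients the purely-logical guard reductions and the structural parallel reductions. I expect this to reduce to checking that guards are treated as ordinary leading constants of weight comparable to atomic events, so that $G12$--$G24$ mirror $P8$, $C5$--$C10$, $CE6$--$CE7$, $D6$ under the chosen ordering; once that is confirmed the remaining bookkeeping is routine, and the normal-form analysis then closes the argument.
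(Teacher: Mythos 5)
Your proposal is correct and follows essentially the same route as the paper's proof: the same two-part argument (strong normalization of the left-to-right orientation of the axioms under the ordering $\leftmerge > \cdot > + > \boxplus_{\pi}$ with lexicographical status for the first argument, invoking Theorem \ref{SN}, followed by the smallest-non-basic-subterm structural induction showing every normal form is a basic $APPTC_G$ term). The only difference is that you make explicit where the eliminable operators $\between$, $\parallel$, $\mid$, $\Theta$, $\triangleleft$, $\partial_H$ sit in the well-founded ordering, a detail the paper leaves implicit.
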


\begin{proof}
(1) Firstly, suppose that the following ordering on the signature of $APPTC_G$ is defined: $\leftmerge > \cdot > +>\boxplus_{\pi}$ and the symbol $\leftmerge$ is given the
lexicographical status for the first argument, then for each rewrite rule $p\rightarrow q$ in Table \ref{TRSForAPPTCG} relation $p>_{lpo} q$ can easily be proved. We obtain that the
term rewrite system shown in Table \ref{TRSForAPPTCG} is strongly normalizing, for it has finitely many rewriting rules, and $>$ is a well-founded ordering on the signature of
$APPTC_G$, and if $s>_{lpo} t$, for each rewriting rule $s\rightarrow t$ is in Table \ref{TRSForAPPTCG} (see Theorem \ref{SN}).

\begin{center}
    \begin{table}
        \begin{tabular}{@{}ll@{}}
            \hline No. &Rewriting Rule\\
            $RP1$ & $(x+x=x,y+y=y)\quad x\between y \rightarrow x\parallel y + x\mid y$\\
            $RP2$ & $x\parallel y \rightarrow y \parallel x$\\
            $RP3$ & $(x\parallel y)\parallel z \rightarrow x\parallel (y\parallel z)$\\
            $RP4$ & $(x+x=x,y+y=y)\quad x\parallel y \rightarrow x\leftmerge y + y\leftmerge x$\\
            $RP5$ & $(e_1\leq e_2)\quad e_1\leftmerge (e_2\cdot y) \rightarrow (e_1\leftmerge e_2)\cdot y$\\
            $RP6$ & $(e_1\leq e_2)\quad (e_1\cdot x)\leftmerge e_2 \rightarrow (e_1\leftmerge e_2)\cdot x$\\
            $RP7$ & $(e_1\leq e_2)\quad (e_1\cdot x)\leftmerge (e_2\cdot y) \rightarrow (e_1\leftmerge e_2)\cdot (x\between y)$\\
            $RP8$ & $(x+ y)\leftmerge z \rightarrow (x\leftmerge z)+ (y\leftmerge z)$\\
            $RP9$ & $\delta\leftmerge x \rightarrow \delta$\\
            $RP10$ & $\epsilon\leftmerge x \rightarrow x$\\
            $RP11$ & $x\leftmerge \epsilon \rightarrow x$\\
            $RC1$ & $e_1\mid e_2 \rightarrow \gamma(e_1,e_2)$\\
            $RC2$ & $e_1\mid (e_2\cdot y) \rightarrow \gamma(e_1,e_2)\cdot y$\\
            $RC3$ & $(e_1\cdot x)\mid e_2 \rightarrow \gamma(e_1,e_2)\cdot x$\\
            $RC4$ & $(e_1\cdot x)\mid (e_2\cdot y) \rightarrow \gamma(e_1,e_2)\cdot (x\between y)$\\
            $RC5$ & $(x+ y)\mid z \rightarrow (x\mid z) + (y\mid z)$\\
            $RC6$ & $x\mid (y+ z) \rightarrow (x\mid y)+ (x\mid z)$\\
            $RC7$ & $\delta\mid x \rightarrow \delta$\\
            $RC8$ & $x\mid\delta \rightarrow \delta$\\
            $RC9$ & $\epsilon\mid x \rightarrow \delta$\\
            $RC10$ & $x\mid\epsilon \rightarrow \delta$\\
            $RPM1$ & $x\parallel (y\boxplus_{\pi} z)\rightarrow(x\parallel y)\boxplus_{\pi}(x\parallel z)$\\
            $RPM2$ & $(x\boxplus_{\pi} y)\parallel z\rightarrow(x\parallel z)\boxplus_{\pi}(y\parallel z)$\\
            $RPM3$ & $x\mid (y\boxplus_{\pi} z)\rightarrow(x\mid y)\boxplus_{\pi}(x\mid z)$\\
            $RPM4$ & $(x\boxplus_{\pi} y)\mid z\rightarrow(x\mid z)\boxplus_{\pi}(y\mid z)$\\
            $RCE1$ & $\Theta(e) \rightarrow e$\\
            $RCE2$ & $\Theta(\delta) \rightarrow \delta$\\
            $RCE3$ & $\Theta(\epsilon) \rightarrow \epsilon$\\
            $RCE4$ & $\Theta(x+ y) \rightarrow \Theta(x)\triangleleft y + \Theta(y)\triangleleft x$\\
            $RPCE1$ & $\Theta(x\boxplus_{\pi} y) \rightarrow \Theta(x)\triangleleft y \boxplus_{\pi} \Theta(y)\triangleleft x$\\
            $RCE5$ & $\Theta(x\cdot y)\rightarrow\Theta(x)\cdot\Theta(y)$\\
            $RCE6$ & $\Theta(x\leftmerge y) \rightarrow ((\Theta(x)\triangleleft y)\leftmerge y)+ ((\Theta(y)\triangleleft x)\leftmerge x)$\\
            $RCE7$ & $\Theta(x\mid y) \rightarrow ((\Theta(x)\triangleleft y)\mid y)+ ((\Theta(y)\triangleleft x)\mid x)$\\
        \end{tabular}
        \caption{Term rewrite system of $APPTC_G$}
        \label{TRSForAPPTCG}
    \end{table}
\end{center}

\begin{center}
    \begin{table}
        \begin{tabular}{@{}ll@{}}
            \hline No. &Rewriting Rule\\
            $RU1$ & $(\sharp(e_1,e_2))\quad e_1\triangleleft e_2 \rightarrow \tau$\\
            $RU2$ & $(\sharp(e_1,e_2),e_2\leq e_3)\quad e_1\triangleleft e_3 \rightarrow e_1$\\
            $RU3$ & $(\sharp(e_1,e_2),e_2\leq e_3)\quad e3\triangleleft e_1 \rightarrow \tau$\\
            $RPU1$ & $(\sharp_{\pi}(e_1,e_2))\quad e_1\triangleleft e_2 \rightarrow \tau$\\
            $RPU2$ & $(\sharp_{\pi}(e_1,e_2),e_2\leq e_3)\quad e_1\triangleleft e_3 \rightarrow e_1$\\
            $RPU3$ & $(\sharp_{\pi}(e_1,e_2),e_2\leq e_3)\quad e_3\triangleleft e_1 \rightarrow \tau$\\
            $RU4$ & $e\triangleleft \delta \rightarrow e$\\
            $RU5$ & $\delta \triangleleft e \rightarrow \delta$\\
            $RU6$ & $e\triangleleft \epsilon \rightarrow e$\\
            $RU7$ & $\epsilon \triangleleft e \rightarrow e$\\
            $RU8$ & $(x+ y)\triangleleft z \rightarrow (x\triangleleft z)+ (y\triangleleft z)$\\
            $RPU4$ & $(x\boxplus_{\pi} y)\triangleleft z \rightarrow (x\triangleleft z)\boxplus_{\pi} (y\triangleleft z)$\\
            $RU9$ & $(x\cdot y)\triangleleft z \rightarrow (x\triangleleft z)\cdot (y\triangleleft z)$\\
            $RU10$ & $(x\leftmerge y)\triangleleft z \rightarrow (x\triangleleft z)\leftmerge (y\triangleleft z)$\\
            $RU11$ & $(x\mid y)\triangleleft z \rightarrow (x\triangleleft z)\mid (y\triangleleft z)$\\
            $RU12$ & $x\triangleleft (y+ z) \rightarrow (x\triangleleft y)\triangleleft z$\\
            $RPU5$ & $x\triangleleft (y\boxplus_{\pi} z) \rightarrow (x\triangleleft y)\triangleleft z$\\
            $RU13$ & $x\triangleleft (y\cdot z)\rightarrow(x\triangleleft y)\triangleleft z$\\
            $RU14$ & $x\triangleleft (y\leftmerge z) \rightarrow (x\triangleleft y)\triangleleft z$\\
            $RU15$ & $x\triangleleft (y\mid z) \rightarrow (x\triangleleft y)\triangleleft z$\\
            $RD1$ & $e\notin H\quad\partial_H(e) \rightarrow e$\\
            $RD2$ & $e\in H\quad \partial_H(e) \rightarrow \delta$\\
            $RD3$ & $\partial_H(\delta) \rightarrow \delta$\\
            $RD4$ & $\partial_H(x+ y) \rightarrow \partial_H(x)+\partial_H(y)$\\
            $RD5$ & $\partial_H(x\cdot y) \rightarrow \partial_H(x)\cdot\partial_H(y)$\\
            $RD6$ & $\partial_H(x\leftmerge y) \rightarrow \partial_H(x)\leftmerge\partial_H(y)$\\
            $RPD1$ & $\partial_H(x\boxplus_{\pi}y)\rightarrow\partial_H(x)\boxplus_{\pi}\partial_H(y)$\\
            $RG12$ & $\phi(x\leftmerge y) \rightarrow\phi x\leftmerge \phi y$\\
            $RG13$ & $\phi(x\mid y) \rightarrow\phi x\mid \phi y$\\
            $RG14$ & $\delta\leftmerge \phi \rightarrow \delta$\\
            $RG15$ & $\phi\mid \delta \rightarrow \delta$\\
            $RG16$ & $\delta\mid \phi \rightarrow \delta$\\
            $RG17$ & $\phi\leftmerge \epsilon \rightarrow \phi$\\
            $RG18$ & $\epsilon\leftmerge \phi \rightarrow \phi$\\
            $RG19$ & $\phi\mid \epsilon \rightarrow \delta$\\
            $RG20$ & $\epsilon\mid \phi \rightarrow \delta$\\
            $RG21$ & $\phi\leftmerge\neg\phi \rightarrow \delta$\\
            $RG22$ & $\Theta(\phi) \rightarrow \phi$\\
            $RG23$ & $\partial_H(\phi) \rightarrow \phi$\\
            $RG24$ & $\phi_0\leftmerge\cdots\leftmerge\phi_n \rightarrow \delta$ if $\forall s_0,\cdots,s_n\in S,\exists i\leq n.test(\neg\phi_i,s_0\cup\cdots\cup s_n)$\\
        \end{tabular}
        \caption{Term rewrite system of $APPTC_G$ (continuing)}
        \label{TRSForAPPTCG2}
    \end{table}
\end{center}

(2) Then we prove that the normal forms of closed $APPTC_G$ terms are basic $APPTC_G$ terms.

Suppose that $p$ is a normal form of some closed $APPTC_G$ term and suppose that $p$ is not a basic $APPTC_G$ term. Let $p'$ denote the smallest sub-term of $p$ which is not a basic
$APPTC_G$ term. It implies that each sub-term of $p'$ is a basic $APPTC_G$ term. Then we prove that $p$ is not a term in normal form. It is sufficient to induct on the structure of
$p'$:

\begin{itemize}
  \item Case $p'\equiv e, e\in \mathbb{E}$. $p'$ is a basic $APPTC_G$ term, which contradicts the assumption that $p'$ is not a basic $APPTC_G$ term, so this case should not occur.
  \item Case $p'\equiv \phi, \phi\in G$. $p'$ is a basic term, which contradicts the assumption that $p'$ is not a basic term, so this case should not occur.
  \item Case $p'\equiv p_1\cdot p_2$. By induction on the structure of the basic $APPTC_G$ term $p_1$:
      \begin{itemize}
        \item Subcase $p_1\in \mathbb{E}$. $p'$ would be a basic $APPTC_G$ term, which contradicts the assumption that $p'$ is not a basic $APPTC_G$ term;
        \item Subcase $p_1\in G$. $p'$ would be a basic term, which contradicts the assumption that $p'$ is not a basic term;
        \item Subcase $p_1\equiv e\cdot p_1'$. $RA5$ or $RA9$ rewriting rules in Table \ref{TRSForBAPTCG} can be applied. So $p$ is not a normal form;
        \item Subcase $p_1\equiv \phi\cdot p_1'$. $RG1$, $RG3$, $RG4$, $RG5$, $RG7$, or $RG8-9$ rewriting rules can be applied. So $p$ is not a normal form;
        \item Subcase $p_1\equiv p_1'+ p_1''$. $RA4$, $RA6$, $RG2$, or $RG6$ rewriting rules in Table \ref{TRSForBAPTCG} can be applied. So $p$ is not a normal form;
        \item Subcase $p_1\equiv p_1'\boxplus_{\pi} p_1''$. $RRA1-5$ rewriting rules in Table \ref{TRSForBAPTCG} can be applied. So $p$ is not a normal form;
        \item Subcase $p_1\equiv p_1'\leftmerge p_1''$. $RP5$-$RP11$ rewrite rules in Table \ref{TRSForAPPTCG} can be applied. So $p$ is not a normal form;
        \item Subcase $p_1\equiv p_1'\mid p_1''$. $RC1$-$RC10$ rewrite rules in Table \ref{TRSForAPPTCG} can be applied. So $p$ is not a normal form;
        \item Subcase $p_1\equiv \Theta(p_1')$. $RCE1$-$RCE7$ rewrite rules in Table \ref{TRSForAPPTCG} can be applied. So $p$ is not a normal form;
        \item Subcase $p_1\equiv \partial_H(p_1')$. $RD1$-$RD7$ rewrite rules in Table \ref{TRSForAPPTCG} can be applied. So $p$ is not a normal form.
      \end{itemize}
  \item Case $p'\equiv p_1+ p_2$. By induction on the structure of the basic $APPTC_G$ terms both $p_1$ and $p_2$, all subcases will lead to that $p'$ would be a basic $APPTC_G$ term,
  which contradicts the assumption that $p'$ is not a basic $APPTC_G$ term.
  \item Case $p'\equiv p_1\boxplus_{\pi} p_2$. By induction on the structure of the basic $APPTC_G$ terms both $p_1$ and $p_2$, all subcases will lead to that $p'$ would be a basic $APPTC_G$ term,
  which contradicts the assumption that $p'$ is not a basic $APPTC_G$ term.
  \item Case $p'\equiv p_1\leftmerge p_2$. By induction on the structure of the basic $APPTC_G$ terms both $p_1$ and $p_2$, all subcases will lead to that $p'$ would be a basic
  $APPTC_G$ term, which contradicts the assumption that $p'$ is not a basic $APPTC_G$ term.
  \item Case $p'\equiv p_1\mid p_2$. By induction on the structure of the basic $APPTC_G$ terms both $p_1$ and $p_2$, all subcases will lead to that $p'$ would be a basic $APPTC_G$
  term, which contradicts the assumption that $p'$ is not a basic $APPTC_G$ term.
  \item Case $p'\equiv \Theta(p_1)$. By induction on the structure of the basic $APPTC_G$ term $p_1$, $RCE1-RCE7$ rewrite rules in Table \ref{TRSForAPPTCG} can be applied. So $p$ is not
  a normal form.
  \item Case $p'\equiv p_1\triangleleft p_2$. By induction on the structure of the basic $APPTC_G$ terms both $p_1$ and $p_2$, all subcases will lead to that $p'$ would be a basic
  $APPTC_G$ term, which contradicts the assumption that $p'$ is not a basic $APPTC_G$ term.
  \item Case $p'\equiv \partial_H(p_1)$. By induction on the structure of the basic $APPTC_G$ terms of $p_1$, all subcases will lead to that $p'$ would be a basic $APPTC_G$ term,
  which contradicts the assumption that $p'$ is not a basic $APPTC_G$ term.
\end{itemize}
\end{proof}

We will define a term-deduction system which gives the operational semantics of $APPTC_G$. Two atomic events $e_1$ and $e_2$ are in race condition, which are denoted $e_1\% e_2$.

\begin{center}
    \begin{table}
        $$\frac{x\rightsquigarrow x'\quad y\rightsquigarrow y'}{x\between y\rightsquigarrow x'\parallel y'+x'\mid y'}$$
        $$\frac{x\rightsquigarrow x'\quad y\rightsquigarrow y'}{x\parallel y\rightsquigarrow x'\leftmerge y+y'\leftmerge x}$$
        $$\frac{x\rightsquigarrow x'}{x\leftmerge y\rightsquigarrow x'\leftmerge y}$$
        $$\frac{x\rightsquigarrow x'\quad y\rightsquigarrow y'}{x\mid y\rightsquigarrow x'\mid y'}$$
        $$\frac{x\rightsquigarrow x'}{\Theta(x)\rightsquigarrow \Theta(x')}$$
        $$\frac{x\rightsquigarrow x'}{x\triangleleft y\rightsquigarrow x'\triangleleft y}$$
        \caption{Probabilistic transition rules of $APPTC_G$}
        \label{TRForAPPTCG1}
    \end{table}
\end{center}

\begin{center}
    \begin{table}
        $$\frac{}{\langle \breve{e_1}\parallel\cdots \parallel \breve{e_n},s\rangle\xrightarrow{\{e_1,\cdots,e_n\}}\langle\surd,s'\rangle}\textrm{ if }s'\in effect(e_1,s)\cup\cdots\cup effect(e_n,s)$$

        $$\frac{}{\langle\breve{\phi_1}\parallel\cdots\parallel \breve{\phi_n},s\rangle\rightarrow\langle\surd,s\rangle}\textrm{ if }test(\phi_1,s),\cdots,test(\phi_n,s)$$

        $$\frac{\langle x,s\rangle\xrightarrow{e_1}\langle\surd,s'\rangle\quad \langle y,s\rangle\xrightarrow{e_2}\langle\surd,s''\rangle}{\langle x\parallel y,s\rangle\xrightarrow{\{e_1,e_2\}}\langle\surd,s'\cup s''\rangle} \quad\frac{\langle x,s\rangle\xrightarrow{e_1}\langle x',s'\rangle\quad \langle y,s\rangle\xrightarrow{e_2}\langle\surd,s''\rangle}{\langle x\parallel y,s\rangle\xrightarrow{\{e_1,e_2\}}\langle x',s'\cup s''\rangle}$$

        $$\frac{\langle x,s\rangle\xrightarrow{e_1}\langle\surd,s'\rangle\quad \langle y,s\rangle\xrightarrow{e_2}\langle y',s''\rangle}{\langle x\parallel y,s\rangle\xrightarrow{\{e_1,e_2\}}\langle y',s'\cup s''\rangle} \quad\frac{\langle x,s\rangle\xrightarrow{e_1}\langle x',s'\rangle\quad \langle y,s\rangle\xrightarrow{e_2}\langle y',s''\rangle}{\langle x\parallel y,s\rangle\xrightarrow{\{e_1,e_2\}}\langle x'\between y',s'\cup s''\rangle}$$

        $$\frac{\langle x,s\rangle\xrightarrow{e_1}\langle\surd,s'\rangle\quad \langle y,s\rangle\xnrightarrow{e_2}\quad(e_1\%e_2)}{\langle x\parallel y,s\rangle\xrightarrow{e_1}\langle y,s'\rangle} \quad\frac{\langle x,s\rangle\xrightarrow{e_1}\langle x',s'\rangle\quad \langle y,s\rangle\xnrightarrow{e_2}\quad(e_1\%e_2)}{\langle x\parallel y,s\rangle\xrightarrow{e_1}\langle x'\between y,s'\rangle}$$

        $$\frac{\langle x,s\rangle\xnrightarrow{e_1}\quad \langle y,s\rangle\xrightarrow{e_2}\langle\surd,s''\rangle\quad(e_1\%e_2)}{\langle x\parallel y,s\rangle\xrightarrow{e_2}\langle x,s''\rangle} \quad\frac{\langle x,s\rangle\xnrightarrow{e_1}\quad \langle y,s\rangle\xrightarrow{e_2}\langle y',s''\rangle\quad(e_1\%e_2)}{\langle x\parallel y,s\rangle\xrightarrow{e_2}\langle x\between y',s''\rangle}$$

        $$\frac{\langle x,s\rangle\xrightarrow{e_1}\langle\surd,s'\rangle\quad \langle y,s\rangle\xrightarrow{e_2}\langle\surd,s''\rangle \quad(e_1\leq e_2)}{\langle x\leftmerge y,s\rangle\xrightarrow{\{e_1,e_2\}}\langle \surd,s'\cup s''\rangle} \quad\frac{\langle x,s\rangle\xrightarrow{e_1}\langle x',s'\rangle\quad \langle y,s\rangle\xrightarrow{e_2}\langle\surd,s''\rangle \quad(e_1\leq e_2)}{\langle x\leftmerge y,s\rangle\xrightarrow{\{e_1,e_2\}}\langle x',s'\cup s''\rangle}$$

        $$\frac{\langle x,s\rangle\xrightarrow{e_1}\langle\surd,s'\rangle\quad \langle y,s\rangle\xrightarrow{e_2}\langle y',s''\rangle \quad(e_1\leq e_2)}{\langle x\leftmerge y,s\rangle\xrightarrow{\{e_1,e_2\}}\langle y',s'\cup s''\rangle} \quad\frac{\langle x,s\rangle\xrightarrow{e_1}\langle x',s'\rangle\quad \langle y,s\rangle\xrightarrow{e_2}\langle y',s''\rangle \quad(e_1\leq e_2)}{\langle x\leftmerge y,s\rangle\xrightarrow{\{e_1,e_2\}}\langle x'\between y',s'\cup s''\rangle}$$

        $$\frac{\langle x,s\rangle\xrightarrow{e_1}\langle\surd,s'\rangle\quad \langle y,s\rangle\xrightarrow{e_2}\langle\surd,s''\rangle}{\langle x\mid y,s\rangle\xrightarrow{\gamma(e_1,e_2)}\langle\surd,effect(\gamma(e_1,e_2),s)\rangle} \quad\frac{\langle x,s\rangle\xrightarrow{e_1}\langle x',s'\rangle\quad \langle y,s\rangle\xrightarrow{e_2}\langle\surd,s''\rangle}{\langle x\mid y,s\rangle\xrightarrow{\gamma(e_1,e_2)}\langle x',effect(\gamma(e_1,e_2),s)\rangle}$$

        $$\frac{\langle x,s\rangle\xrightarrow{e_1}\langle\surd,s'\rangle\quad \langle y,s\rangle\xrightarrow{e_2}\langle y',s''\rangle}{\langle x\mid y,s\rangle\xrightarrow{\gamma(e_1,e_2)}\langle y',effect(\gamma(e_1,e_2),s)\rangle} \quad\frac{\langle x,s\rangle\xrightarrow{e_1}\langle x',s'\rangle\quad \langle y,s\rangle\xrightarrow{e_2}\langle y',s''\rangle}{\langle x\mid y,s\rangle\xrightarrow{\gamma(e_1,e_2)}\langle x'\between y',effect(\gamma(e_1,e_2),s)\rangle}$$

        \caption{Action transition rules of $APPTC_G$}
        \label{TRForAPPTCG}
    \end{table}
\end{center}

\begin{center}
    \begin{table}
        $$\frac{\langle x,s\rangle\xrightarrow{e_1}\langle\surd,s'\rangle\quad (\sharp(e_1,e_2))}{\langle \Theta(x),s\rangle\xrightarrow{e_1}\langle\surd,s'\rangle} \quad\frac{\langle x,s\rangle\xrightarrow{e_2}\langle\surd,s''\rangle\quad (\sharp(e_1,e_2))}{\langle\Theta(x),s\rangle\xrightarrow{e_2}\langle\surd,s''\rangle}$$

        $$\frac{\langle x,s\rangle\xrightarrow{e_1}\langle x',s'\rangle\quad (\sharp(e_1,e_2))}{\langle\Theta(x),s\rangle\xrightarrow{e_1}\langle\Theta(x'),s'\rangle} \quad\frac{\langle x,s\rangle\xrightarrow{e_2}\langle x'',s''\rangle\quad (\sharp(e_1,e_2))}{\langle\Theta(x),s\rangle\xrightarrow{e_2}\langle\Theta(x''),s''\rangle}$$

        $$\frac{\langle x,s\rangle\xrightarrow{e_1}\langle\surd,s'\rangle \quad \langle y,s\rangle\nrightarrow^{e_2}\quad (\sharp(e_1,e_2))}{\langle x\triangleleft y,s\rangle\xrightarrow{\tau}\langle\surd,s'\rangle}
        \quad\frac{\langle x,s\rangle\xrightarrow{e_1}\langle x',s'\rangle \quad \langle y,s\rangle\nrightarrow^{e_2}\quad (\sharp(e_1,e_2))}{\langle x\triangleleft y,s\rangle\xrightarrow{\tau}\langle x',s'\rangle}$$

        $$\frac{\langle x,s\rangle\xrightarrow{e_1}\langle\surd,s\rangle \quad \langle y,s\rangle\nrightarrow^{e_3}\quad (\sharp(e_1,e_2),e_2\leq e_3)}{\langle x\triangleleft y,s\rangle\xrightarrow{e_1}\langle\surd,s'\rangle}
        \quad\frac{\langle x,s\rangle\xrightarrow{e_1}\langle x',s'\rangle \quad \langle y,s\rangle\nrightarrow^{e_3}\quad (\sharp(e_1,e_2),e_2\leq e_3)}{\langle x\triangleleft y,s\rangle\xrightarrow{e_1}\langle x',s'\rangle}$$

        $$\frac{\langle x,s\rangle\xrightarrow{e_3}\langle\surd,s'\rangle \quad \langle y,s\rangle\nrightarrow^{e_2}\quad (\sharp(e_1,e_2),e_1\leq e_3)}{\langle x\triangleleft y,s\rangle\xrightarrow{\tau}\langle\surd,s'\rangle}
        \quad\frac{\langle x,s\rangle\xrightarrow{e_3}\langle x',s'\rangle \quad \langle y,s\rangle\nrightarrow^{e_2}\quad (\sharp(e_1,e_2),e_1\leq e_3)}{\langle x\triangleleft y,s\rangle\xrightarrow{\tau}\langle x',s'\rangle}$$

        $$\frac{\langle x,s\rangle\xrightarrow{e_1}\langle\surd,s'\rangle\quad (\sharp_{\pi}(e_1,e_2))}{\langle \Theta(x),s\rangle\xrightarrow{e_1}\langle\surd,s'\rangle} \quad\frac{\langle x,s\rangle\xrightarrow{e_2}\langle\surd,s''\rangle\quad (\sharp_{\pi}(e_1,e_2))}{\langle\Theta(x),s\rangle\xrightarrow{e_2}\langle\surd,s''\rangle}$$

        $$\frac{\langle x,s\rangle\xrightarrow{e_1}\langle x',s'\rangle\quad (\sharp_{\pi}(e_1,e_2))}{\langle\Theta(x),s\rangle\xrightarrow{e_1}\langle\Theta(x'),s'\rangle} \quad\frac{\langle x,s\rangle\xrightarrow{e_2}\langle x'',s''\rangle\quad (\sharp_{\pi}(e_1,e_2))}{\langle\Theta(x),s\rangle\xrightarrow{e_2}\langle\Theta(x''),s''\rangle}$$

        $$\frac{\langle x,s\rangle\xrightarrow{e_1}\langle\surd,s'\rangle \quad \langle y,s\rangle\nrightarrow^{e_2}\quad (\sharp_{\pi}(e_1,e_2))}{\langle x\triangleleft y,s\rangle\xrightarrow{\tau}\langle\surd,s'\rangle}
        \quad\frac{\langle x,s\rangle\xrightarrow{e_1}\langle x',s'\rangle \quad \langle y,s\rangle\nrightarrow^{e_2}\quad (\sharp_{\pi}(e_1,e_2))}{\langle x\triangleleft y,s\rangle\xrightarrow{\tau}\langle x',s'\rangle}$$

        $$\frac{\langle x,s\rangle\xrightarrow{e_1}\langle\surd,s\rangle \quad \langle y,s\rangle\nrightarrow^{e_3}\quad (\sharp_{\pi}(e_1,e_2),e_2\leq e_3)}{\langle x\triangleleft y,s\rangle\xrightarrow{e_1}\langle\surd,s'\rangle}
        \quad\frac{\langle x,s\rangle\xrightarrow{e_1}\langle x',s'\rangle \quad \langle y,s\rangle\nrightarrow^{e_3}\quad (\sharp_{\pi}(e_1,e_2),e_2\leq e_3)}{\langle x\triangleleft y,s\rangle\xrightarrow{e_1}\langle x',s'\rangle}$$

        $$\frac{\langle x,s\rangle\xrightarrow{e_3}\langle\surd,s'\rangle \quad \langle y,s\rangle\nrightarrow^{e_2}\quad (\sharp_{\pi}(e_1,e_2),e_1\leq e_3)}{\langle x\triangleleft y,s\rangle\xrightarrow{\tau}\langle\surd,s'\rangle}
        \quad\frac{\langle x,s\rangle\xrightarrow{e_3}\langle x',s'\rangle \quad \langle y,s\rangle\nrightarrow^{e_2}\quad (\sharp_{\pi}(e_1,e_2),e_1\leq e_3)}{\langle x\triangleleft y,s\rangle\xrightarrow{\tau}\langle x',s'\rangle}$$

        $$\frac{\langle x,s\rangle\xrightarrow{e}\langle\surd,s'\rangle}{\langle\partial_H(x),s\rangle\xrightarrow{e}\langle\surd,s'\rangle}\quad (e\notin H)\quad\frac{\langle x,s\rangle\xrightarrow{e}\langle x',s'\rangle}{\langle\partial_H(x),s\rangle\xrightarrow{e}\langle\partial_H(x'),s'\rangle}\quad(e\notin H)$$
        \caption{Action transition rules of $APPTC_G$ (continuing)}
        \label{TRForAPPTCG2}
    \end{table}
\end{center}

\begin{theorem}[Generalization of $APPTC_G$ with respect to $BAPTC_G$]
$APPTC_G$ is a generalization of $BAPTC_G$.
\end{theorem}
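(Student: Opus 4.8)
The plan is to follow the same strategy used for the analogous generalization results earlier in the paper (in particular the generalization of the algebra for parallelism with respect to $BAPTC$), namely to exhibit $APPTC_G$ as a conservative extension of $BAPTC_G$, from which the embedding of $BAPTC_G$ into $APPTC_G$ follows. Concretely, I would invoke the Conservative Extension Theorem (Theorem \ref{TCE}), which reduces the claim to three verifications: that $BAPTC_G$ is source-dependent, that every transition rule added on top of $BAPTC_G$ has a fresh source (or a $BAPTC_G$-premise with a fresh label), and that the whole system $APPTC_G$ is source-dependent.

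First I would check source-dependency of $BAPTC_G$. Its transition rules are those of Table \ref{SETRForBAPTCG} together with the probabilistic ($\rightsquigarrow$) rules; in each rule every variable occurring in a premise is already reachable from the source, so by the inductive definition of source-dependent variables the whole system is source-dependent. The data-state component $s$ is threaded uniformly through sources and targets, so it introduces no variable that escapes the source.

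Second I would observe that the transition rules for the operators genuinely new in $APPTC_G$ — the whole merge $\between$, the parallel composition $\parallel$, the left merge $\leftmerge$, the communication merge $\mid$, the conflict-elimination operator $\Theta$, the unless operator $\triangleleft$, and the encapsulation operator $\partial_H$, as collected in Tables \ref{TRForAPPTCG1}, \ref{TRForAPPTCG} and \ref{TRForAPPTCG2} — all have sources in which one of these fresh function symbols occurs. This is exactly the "fresh source" branch of condition (2) of Theorem \ref{TCE}. I would then confirm that adjoining these rules keeps the combined TSS source-dependent, since each premise variable (including the states $s,s',s''$ and the communicated events) remains reachable from the source.

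The step I expect to be the main obstacle is guaranteeing that $APPTC_G$ is positive after reduction and that source-dependency survives the negative and conditional premises. The rules for $\triangleleft$ and the race-condition rules for $\parallel$ carry negative premises of the form $\langle y,s\rangle\nrightarrow^{e}$ and $\langle x,s\rangle\xnrightarrow{e_1}$, and several rules are guarded by data-state side conditions $test(\phi,s)$ and by effect constraints $s'\in effect(e,s)$. I would argue, as in the non-guarded $APPTC$ case, that the events appearing in these negative premises are existentially constrained by the accompanying conflict side conditions $\sharp(e_1,e_2)$ and $\sharp_{\pi}(e_1,e_2)$ rather than being fresh free variables, so they do not introduce variables that fail to be source-dependent; and that the $test$ and $effect$ conditions are genuine side conditions on the already source-dependent state variable $s$, hence harmless for the source-dependency analysis. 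Once these points are settled, Theorem \ref{TCE} applies verbatim and yields that $APPTC_G$ is a conservative extension of $BAPTC_G$, that is, $BAPTC_G$ is embedded in $APPTC_G$, which is the desired generalization.
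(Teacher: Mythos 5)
Your proposal follows essentially the same route as the paper's own proof: it establishes the same three facts (source-dependency of $BAPTC_G$, occurrence of a fresh operator $\between$, $\parallel$, $\leftmerge$, $\mid$, $\Theta$, $\triangleleft$ or $\partial_H$ in the sources of the new rules, and source-dependency of the full $APPTC_G$ system) and then appeals to the conservative-extension machinery of Theorem \ref{TCE} to conclude that $BAPTC_G$ embeds into $APPTC_G$. Your extra care about the negative premises and the data-state side conditions is a finer-grained elaboration of the same argument, not a different approach.
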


\begin{proof}
It follows from the following three facts.

\begin{enumerate}
  \item The transition rules of $BAPTC_G$ in section \ref{batcg} are all source-dependent;
  \item The sources of the transition rules $APPTC_G$ contain an occurrence of $\between$, or $\parallel$, or $\leftmerge$, or $\mid$, or $\Theta$, or $\triangleleft$;
  \item The transition rules of $APPTC_G$ are all source-dependent.
\end{enumerate}

So, $APPTC_G$ is a generalization of $BAPTC_G$, that is, $BAPTC_G$ is an embedding of $APPTC_G$, as desired.
\end{proof}

\begin{theorem}[Congruence of $APPTC_G$ with respect to probabilistic truly concurrent bisimulation equivalences]\label{CAPPTCG}
(1) Probabilistic pomset bisimulation equivalence $\sim_{pp}$ is a congruence with respect to $APPTC_G$.

(2) Probabilistic step bisimulation equivalence $\sim_{ps}$ is a congruence with respect to $APPTC_G$.

(3) Probabilistic hp-bisimulation equivalence $\sim_{php}$ is a congruence with respect to $APPTC_G$.

(4) Probabilistic hhp-bisimulation equivalence $\sim_{phhp}$ is a congruence with respect to $APPTC_G$.
\end{theorem}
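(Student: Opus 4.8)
The plan is to prove, for each of the four equivalences $\sim_{pp}$, $\sim_{ps}$, $\sim_{php}$ and $\sim_{phhp}$, that it is preserved by every operator of $APPTC_G$. Since each relation is readily seen to be reflexive, symmetric and transitive on closed $APPTC_G$ terms, the real content of the theorem is the substitutivity clause: whenever the arguments of an operator are related, so are the results. Because $APPTC_G$ is a generalization of $BAPTC_G$ and the latter's congruence result (Theorem \ref{CBAPTCG}) already covers $\cdot$, $+$, $\boxplus_{\pi}$ and guard prefixing $\phi\cdot$, I would only need to treat the genuinely new operators $\between$, $\parallel$, $\leftmerge$, $\mid$, $\Theta$, $\triangleleft$ and $\partial_H$, invoking conservativity to transport the $BAPTC_G$ cases unchanged.

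For each such operator $f$ the standard strategy applies. Assuming $x\sim x'$ and $y\sim y'$ are witnessed by bisimulations $R_1$ and $R_2$, I would exhibit a relation $R$ that contains the pair of $f$-terms built from the arguments, is closed under the residuals produced by the transition rules in Tables \ref{TRForAPPTCG1}, \ref{TRForAPPTCG} and \ref{TRForAPPTCG2}, and then verify that $R$ satisfies every clause of the appropriate bisimulation definition (Definition \ref{PSBG} or \ref{HHPBG}). Two points require care beyond the ordinary action-transition matching. First, the probabilistic clause: every operator carries a probabilistic transition rule in Table \ref{TRForAPPTCG1}, and I would match $\langle C_1,s\rangle\xrsquigarrow{\pi}\langle C_1^{\pi},s\rangle$ against its image using the fact that $R_1$ and $R_2$ already match probabilistic moves on the arguments. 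Second, and more delicately, the cumulative-PDF clause $\mu(C_1,C)=\mu(C_2,C)$ for every class $C\in\mathcal{C}(\mathcal{E})/R$: here I would exploit that the PDF is defined compositionally (Table \ref{PDFAPPTC}), so that the mass assigned by $f(x,y)$ to an $R$-class factors as a product of the masses assigned by $x$ and $y$ to their respective $R_1$- and $R_2$-classes, which agree by hypothesis.

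The main obstacle I anticipate is precisely this PDF bookkeeping for the parallel operators $\between$, $\parallel$, $\leftmerge$ and $\mid$, where the probabilistic branching $x\parallel y\rightsquigarrow x'\leftmerge y+y'\leftmerge x$ forces one to partition the target configurations correctly and to show that the resulting product measure respects the quotient by $R$; the data-state component $\langle C,s\rangle$ and the $effect$/$test$ side-conditions must be threaded throughout, but they do not interact with the probabilities and so contribute only notational weight. The $\Theta$ and $\triangleleft$ cases additionally demand checking that the conflict and probabilistic-conflict side-conditions $\sharp(e_1,e_2)$ and $\sharp_{\pi}(e_1,e_2)$ are invariant under the isomorphisms supplied by $R_1$ and $R_2$. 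For the hp-case one simply carries the order-isomorphism $f[e_1\mapsto e_2]$ through these constructions; the genuinely hard case is hhp, where I must verify that the witnessing relation can be taken downward closed under the posetal order. For the parallel operators this is exactly the point at which, as in the non-probabilistic theory, an extra argument is needed, and I would mirror the downward-closure reasoning already used for $\sim_{phhp}$ in the earlier development (cf.\ the treatment in Proposition \ref{NELSHHPB3}).
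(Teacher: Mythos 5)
Your proposal follows essentially the same route as the paper's proof, which simply observes that each relation is an equivalence and reduces the theorem to checking that $\sim_{pp}$, $\sim_{ps}$, $\sim_{php}$ and $\sim_{phhp}$ are preserved by the operators $\parallel$, $\leftmerge$, $\mid$, $\Theta$, $\triangleleft$ and $\partial_H$, leaving that verification as an exercise. Your write-up actually supplies more detail than the paper does (the witnessing-relation construction, the PDF bookkeeping, the data-state threading, and the downward-closure concern for the hhp case), and nothing in it conflicts with the paper's argument.
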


\begin{proof}
(1) It is easy to see that probabilistic pomset bisimulation is an equivalent relation on $APPTC_G$ terms, we only need to prove that $\sim_{pp}$ is preserved by the operators
$\parallel$, $\leftmerge$, $\mid$, $\Theta$, $\triangleleft$, $\partial_H$. It is trivial and we leave the proof as an exercise for the readers.

(2) It is easy to see that probabilistic step bisimulation is an equivalent relation on $APPTC_G$ terms, we only need to prove that $\sim_{ps}$ is preserved by the operators
$\parallel$, $\leftmerge$, $\mid$, $\Theta$, $\triangleleft$, $\partial_H$. It is trivial and we leave the proof as an exercise for the readers.

(3) It is easy to see that probabilistic hp-bisimulation is an equivalent relation on $APPTC_G$ terms, we only need to prove that $\sim_{php}$ is preserved by the operators
$\parallel$, $\leftmerge$, $\mid$, $\Theta$, $\triangleleft$, $\partial_H$. It is trivial and we leave the proof as an exercise for the readers.

(4) It is easy to see that probabilistic hhp-bisimulation is an equivalent relation on $APPTC_G$ terms, we only need to prove that $\sim_{phhp}$ is preserved by the operators
$\parallel$, $\leftmerge$, $\mid$, $\Theta$, $\triangleleft$, $\partial_H$. It is trivial and we leave the proof as an exercise for the readers.
\end{proof}

\begin{theorem}[Soundness of $APPTC_G$ modulo probabilistic truly concurrent bisimulation equivalences]\label{SAPPTCG}
(1) Let $x$ and $y$ be $APPTC_G$ terms. If $APPTC_G\vdash x=y$, then $x\sim_{pp} y$.

(2) Let $x$ and $y$ be $APPTC_G$ terms. If $APPTC_G\vdash x=y$, then $x\sim_{ps} y$.

(3) Let $x$ and $y$ be $APPTC_G$ terms. If $APPTC_G\vdash x=y$, then $x\sim_{php} y$;

(3) Let $x$ and $y$ be $APPTC_G$ terms. If $APPTC_G\vdash x=y$, then $x\sim_{phhp} y$.
\end{theorem}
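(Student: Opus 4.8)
The plan is to follow the same two-stage strategy already used for the soundness of $BAPTC_G$ (Theorem \ref{SBAPTCG}) and for parallelism in Section \ref{apptc}. First I would invoke the preceding congruence theorem, which establishes that each of $\sim_{pp}$, $\sim_{ps}$, $\sim_{php}$ and $\sim_{phhp}$ is simultaneously an equivalence relation and a congruence with respect to all operators of $APPTC_G$ (namely $\cdot$, $+$, $\boxplus_{\pi}$, $\leftmerge$, $\parallel$, $\mid$, $\Theta$, $\triangleleft$ and $\partial_H$). Given this, soundness of the whole deductive system reduces to a single local check: it suffices to show that for every axiom $s=t$ listed in Tables \ref{AxiomsForBAPTCG}, \ref{AxiomsForAPPTCG} and \ref{AxiomsForAPPTCG2}, the closed instances of $s$ and $t$ are related by the equivalence in question. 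Since equality derivable in $APPTC_G$ is generated from these axioms by reflexivity, symmetry, transitivity and congruence, and each closure rule is respected by an equivalence that is also a congruence, the axiom-by-axiom verification propagates to arbitrary derivations.

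Second, for each axiom I would exhibit an explicit witnessing relation on configurations paired with data states and verify the four defining clauses of the relevant bisimulation from Definitions \ref{PSBG} and \ref{HHPBG}. For the purely non-probabilistic, guard-free laws ($A1$--$A9$, $P1$--$P11$, $C1$--$C10$, and the $\Theta$, $\triangleleft$ and $\partial_H$ laws) the witnessing relations are exactly those used in the parallelism soundness proof, read off from the action transition rules in Tables \ref{TRForAPPTCG} and \ref{TRForAPPTCG2}. For the probabilistic laws ($PA1$--$PA5$, $PM1$--$PM4$, $PCE1$, $PU4$, $PU5$, $PD1$, $PG1$--$PG3$) I must additionally discharge clause (2), the matching of probabilistic transitions $\xrsquigarrow{\pi}$, and clause (3), the equality $\mu(C_1,C)=\mu(C_2,C)$ on $R$-equivalence classes; these follow by direct computation from the PDF definitions (Tables \ref{PDFBAPTC}, \ref{PDFAPPTC}, \ref{PDFEncap}) together with the elementary weight identities on $\pi,\rho$ that already justify $PA1$ and $PA2$.

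The guard-specific axioms are where the genuine work lies, and I expect them to be the main obstacle. Laws such as $G8$, $G9$, $G24$, and especially the weakest-precondition laws $G10$ and $G11$, are conditioned on the data environment through $test(\phi,s)$, $effect(e,s)$ and $wp(e,\phi)$. To validate them modulo a bisimulation that carries data states, I would show that the two sides enable identical action- and termination-transitions in every state $s\in S$: for instance $wp(e,\phi)\,e\,\phi = wp(e,\phi)\,e$ holds because, in any state where the guard $wp(e,\phi)$ tests true, executing $e$ lands in a state where $test(\phi,\cdot)$ necessarily holds, so the trailing $\phi$ never blocks and may be erased without altering the transition system. This step relies crucially on the data environment being deterministic --- the Sufficient-determinism theorem established for $BAPTC_G$ --- since a multivalued $effect(e,s)$ could make the erasure of $\phi$ change behaviour.

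Finally, the hp- and hhp-variants are handled by upgrading each witnessing relation to a posetal relation: I carry along the order-isomorphism $f$ between the two configurations, verify that $f$ is preserved under single-event transitions as in Definition \ref{HHPBG}, and for the hhp-case observe that every relation I construct is downward closed, so no obstruction arises beyond the hp-case. Assembling the per-axiom checks and appealing to the congruence theorem then yields all four soundness claims.
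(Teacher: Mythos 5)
Your proposal follows essentially the same route as the paper: the paper's proof likewise invokes the congruence theorem to reduce soundness to an axiom-by-axiom check against the tables of $APPTC_G$ axioms, and then omits the per-axiom verification entirely ("the proof is quite trivial / left as an exercise"). Your additional sketches of the probabilistic clauses, the guard and weakest-precondition laws, and the posetal upgrade for the hp- and hhp-cases go beyond what the paper records but are consistent with its intended argument.
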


\begin{proof}
(1) Since probabilistic pomset bisimulation $\sim_{pp}$ is both an equivalent and a congruent relation, we only need to check if each axiom in Table \ref{AxiomsForAPPTCG} is sound modulo
probabilistic pomset bisimulation equivalence. We leave the proof as an exercise for the readers.

(2) Since probabilistic step bisimulation $\sim_{ps}$ is both an equivalent and a congruent relation, we only need to check if each axiom in Table \ref{AxiomsForAPPTCG} is sound modulo
probabilistic step bisimulation equivalence. We leave the proof as an exercise for the readers.

(3) Since probabilistic hp-bisimulation $\sim_{php}$ is both an equivalent and a congruent relation, we only need to check if each axiom in Table \ref{AxiomsForAPPTCG} is sound modulo
probabilistic hp-bisimulation equivalence. We leave the proof as an exercise for the readers.

(4) Since probabilistic hhp-bisimulation $\sim_{phhp}$ is both an equivalent and a congruent relation, we only need to check if each axiom in Table \ref{AxiomsForAPPTCG} is sound modulo
probabilistic hhp-bisimulation equivalence. We leave the proof as an exercise for the readers.
\end{proof}

\begin{theorem}[Completeness of $APPTC_G$ modulo probabilistic truly concurrent bisimulation equivalences]\label{CAPPTCG}
(1) Let $p$ and $q$ be closed $APPTC_G$ terms, if $p\sim_{pp} q$ then $p=q$.

(2) Let $p$ and $q$ be closed $APPTC_G$ terms, if $p\sim_{ps} q$ then $p=q$.

(3) Let $p$ and $q$ be closed $APPTC_G$ terms, if $p\sim_{php} q$ then $p=q$.

(3) Let $p$ and $q$ be closed $APPTC_G$ terms, if $p\sim_{phhp} q$ then $p=q$.
\end{theorem}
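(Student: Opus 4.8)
The plan is to reproduce, in the enlarged signature of $APPTC_G$, the same four-step recipe that established the earlier completeness results (Theorems \ref{CPPPBE} and \ref{CBAPTCG}), merging the normal-form analysis of the parallelism fragment with that of the guard fragment. First I would appeal to the elimination theorem of $APPTC_G$ (Theorem \ref{ETAPPTCG}): since every closed $APPTC_G$ term $p$ satisfies $APPTC_G\vdash p=p'$ for some basic $APPTC_G$ term $p'$ (Definition \ref{BTAPPTCG}), and soundness (Theorem \ref{SAPPTCG}) gives $p\sim p'$ for each of $\sim_{pp},\sim_{ps},\sim_{php},\sim_{phhp}$, it suffices to prove all four statements for closed basic terms.

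Second, I would fix a normal form modulo associativity and commutativity of $+$ (axioms $A1,A2$ in Table \ref{AxiomsForAPPTCG}), writing $=_{AC}$ for this equivalence. Combining the normal form of the parallelism case with the guard summands from $BAPTC_G$, each $=_{AC}$-class $s$ takes the shape $s_1\boxplus_{\pi_1}\cdots\boxplus_{\pi_{k-1}}s_k$, where each $s_i$ is a sum $t_1+\cdots+t_l$ and each summand $t_j$ is either an atomic event $e$, a guard $\phi$, or a product $u_1\cdot\cdots\cdot u_m$ in which every factor is an atomic event, a guard, or a left-merge block $v_1\leftmerge\cdots\leftmerge v_r$ of atomic events.

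Third, which is the core of the argument, I would prove that for normal forms $n,n'$ with $n\sim n'$ one has $n=_{AC}n'$, by induction on the combined sizes of $n$ and $n'$ and by matching summands. The event and sequential/left-merge summands are matched exactly as in Theorem \ref{CPPPBE}, using the action rules of Table \ref{TRForAPPTCG} while now threading the data-state component ($s'\cup s''$ for parallel steps, $effect(\gamma(e_1,e_2),s)$ for communication) through each transition. The genuinely new cases are the guard summands: a summand $\phi$ of $n$ yields $\langle n,s\rangle\rightsquigarrow\rightarrow\langle\surd,s\rangle$ precisely when $test(\phi,s)$ holds, so $n\sim n'$ forces $n'$ to carry the same guard summand, and dually. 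In each case the probabilistic clauses (the $\xrsquigarrow{\pi}$ matching and the condition $\mu(C_1,C)=\mu(C_2,C)$ in Definition \ref{PSBG}) are discharged through the PDF definitions, as in the earlier proofs.

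Finally, for arbitrary basic terms $s,t$ with $s\sim t$, I would choose normal forms $n,n'$ with $s=n$ and $t=n'$; soundness (Theorem \ref{SAPPTCG}) gives $n\sim s\sim t\sim n'$, the induction yields $n=_{AC}n'$, and hence $s=n=_{AC}n'=t$. The cases $(2)$--$(4)$ reduce to the pomset case $(1)$ just as before, the $php$ and $phhp$ variants additionally requiring the routine verification of the posetal isomorphism and downward-closure conditions. The main obstacle I anticipate lies in the third step: unlike the pure-parallelism setting, the termination behaviour of guard summands is state-dependent (conditioned on $test(\phi,s)$) while the action rules propagate data states through $\leftmerge$ and $\mid$, so the summand-matching must simultaneously respect the state-indexed guard transitions and the probabilistic branching structure. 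I expect this bookkeeping, rather than any conceptual difficulty, to be where the real work concentrates.
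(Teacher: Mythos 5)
Your proposal follows the paper's own proof essentially step for step: elimination to basic terms, normal forms modulo AC of $+$ layered as $\boxplus_{\pi}$ over sums of event, guard, and $\cdot$/$\leftmerge$ summands, an induction on sizes matching summands (with the guard summand $\phi$ matched via the state-indexed transition $\langle n,s\rangle\rightsquigarrow\rightarrow\langle\surd,s\rangle$ under $test(\phi,s)$), and the final soundness sandwich $n\sim s\sim t\sim n'$ yielding $s=n=_{AC}n'=t$, with (2)--(4) handled analogously. No substantive divergence from the paper's argument.
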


\begin{proof}
(1) Firstly, by the elimination theorem of $APPTC_G$ (see Theorem \ref{ETAPPTCG}), we know that for each closed $APPTC_G$ term $p$, there exists a closed basic $APPTC_G$ term $p'$, such
that $APPTC\vdash p=p'$, so, we only need to consider closed basic $APPTC_G$ terms.

The basic terms (see Definition \ref{BTAPPTCG}) modulo associativity and commutativity (AC) of conflict $+$ (defined by axioms $A1$ and $A2$ in Table \ref{AxiomsForBAPTCG}), and these
equivalences is denoted by $=_{AC}$. Then, each equivalence class $s$ modulo AC of $+$ has the following normal form

$$s_1\boxplus_{\pi_1}\cdots\boxplus_{\pi_{k-1}} s_k$$

with each $s_i$ has the following form

$$t_1+\cdots+ t_l$$

with each $t_j$ either an atomic event or of the form

$$u_1\cdot\cdots\cdot u_m$$

with each $u_l$ either an atomic event or of the form

$$v_1\leftmerge\cdots\leftmerge v_m$$

with each $v_m$ an atomic event, and each $t_j$ is called the summand of $s$.

Now, we prove that for normal forms $n$ and $n'$, if $n\sim_{pp} n'$ then $n=_{AC}n'$. It is sufficient to induct on the sizes of $n$ and $n'$.

\begin{itemize}
  \item Consider a summand $e$ of $n$. Then $\langle n,s\rangle\rightsquigarrow\xrightarrow{e}\langle \surd,s'\rangle$, so $n\sim_{pp} n'$ implies $\langle n',s\rangle
  \rightsquigarrow\xrightarrow{e}\langle \surd,s\rangle$, meaning that $n'$ also contains the summand $e$.
  \item Consider a summand $\phi$ of $n$. Then $\langle n,s\rangle\rightsquigarrow\rightarrow\langle \surd,s\rangle$, if $test(\phi,s)$ holds, so $n\sim_{pp} n'$ implies
  $\langle n',s\rangle\rightsquigarrow\rightarrow\langle \surd,s\rangle$, if $test(\phi,s)$ holds, meaning that $n'$ also contains the summand $\phi$.
  \item Consider a summand $t_1\cdot t_2$ of $n$,
  \begin{itemize}
    \item if $t_1\equiv e'$, then $\langle n,s\rangle\rightsquigarrow\xrightarrow{e'}\langle t_2,s'\rangle$, so $n\sim_{pp} n'$ implies $\langle n',s\rangle\rightsquigarrow
    \xrightarrow{e'}\langle t_2',s'\rangle$ with $t_2\sim_{pp} t_2'$, meaning that $n'$ contains a summand $e'\cdot t_2'$. Since $t_2$ and $t_2'$ are normal forms and have sizes
    smaller than $n$ and $n'$, by the induction hypotheses if $t_2\sim_{pp} t_2'$ then $t_2=_{AC} t_2'$;
    \item if $t_1\equiv \phi'$, then $\langle n,s\rangle\rightsquigarrow\rightarrow\langle t_2,s\rangle$, if $test(\phi',s)$ holds, so $n\sim_{pp} n'$ implies $\langle n',s\rangle
    \rightsquigarrow\rightarrow\langle t_2',s\rangle$ with $t_2\sim_{pp} t_2'$, if $test(\phi',s)$ holds, meaning that $n'$ contains a summand $\phi'\cdot t_2'$. Since $t_2$ and
    $t_2'$ are normal forms and have sizes smaller than $n$ and $n'$, by the induction hypotheses if $t_2\sim_{pp} t_2'$ then $t_2=_{AC} t_2'$;
    \item if $t_1\equiv e_1\leftmerge\cdots\leftmerge e_l$, then $\langle n,s\rangle\rightsquigarrow\xrightarrow{\{e_1,\cdots,e_l\}}\langle t_2,s'\rangle$, so $n\sim_{pp} n'$ implies
    $\langle n',s\rangle\rightsquigarrow\xrightarrow{\{e_1,\cdots,e_l\}}\langle t_2',s'\rangle$ with $t_2\sim_{pp} t_2'$, meaning that $n'$ contains a summand
    $(e_1\leftmerge\cdots\leftmerge e_l)\cdot t_2'$. Since $t_2$ and $t_2'$ are normal forms and have sizes smaller than $n$ and $n'$, by the induction hypotheses if $t_2\sim_{pp} t_2'$
    then $t_2=_{AC} t_2'$;
    \item if $t_1\equiv \phi_1\leftmerge\cdots\leftmerge \phi_l$, then $\langle n,s\rangle\rightsquigarrow\rightarrow\langle t_2,s\rangle$, if $test(\phi_1,s),\cdots,test(\phi_l,s)$
    hold, so $n\sim_p n'$ implies $\langle n',s\rangle\rightsquigarrow\rightarrow\langle t_2',s\rangle$ with $t_2\sim_{pp} t_2'$, if $test(\phi_1,s),\cdots,test(\phi_l,s)$ hold,
    meaning that $n'$ contains a summand $(\phi_1\leftmerge\cdots\leftmerge \phi_l)\cdot t_2'$. Since $t_2$ and $t_2'$ are normal forms and have sizes smaller than $n$ and $n'$, by
    the induction hypotheses if $t_2\sim_{pp} t_2'$ then $t_2=_{AC} t_2'$.
  \end{itemize}
\end{itemize}

So, we get $n=_{AC} n'$.

Finally, let $s$ and $t$ be basic $APPTC_G$ terms, and $s\sim_{pp} t$, there are normal forms $n$ and $n'$, such that $s=n$ and $t=n'$. The soundness theorem of $APPTC_G$ modulo
probabilistic pomset bisimulation equivalence (see Theorem \ref{SAPPTCG}) yields $s\sim_{pp} n$ and $t\sim_{pp} n'$, so $n\sim_{pp} s\sim_{pp} t\sim_{pp} n'$. Since if $n\sim_{pp} n'$
then $n=_{AC}n'$, $s=n=_{AC}n'=t$, as desired.

(2) It can be proven similarly as (1).

(3) It can be proven similarly as (1).

(4) It can be proven similarly as (1).
\end{proof}

\begin{theorem}[Sufficient determinism]
All related data environments with respect to $APPTC_G$ can be sufficiently deterministic.
\end{theorem}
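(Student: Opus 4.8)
The plan is to mirror the proof of sufficient determinism already carried out for $BAPTC_G$, extending the structural induction to cover the additional operators of $APPTC_G$. As there, the whole statement reduces to verifying that the $effect(t,s)$ function is single-valued, i.e. that from any data state $s$ the term $t$ determines its successor state uniquely. First I would invoke the elimination theorem of $APPTC_G$ (Theorem \ref{ETAPPTCG}) to restrict attention to closed basic $APPTC_G$ terms (Definition \ref{BTAPPTCG}), so that the induction only has to treat the generators $e\in\mathbb{E}$, $\phi\in G$, and the operators $\cdot$, $+$, $\boxplus_{\pi}$ and $\leftmerge$, together with the derived operators $\between$, $\mid$, $\Theta$, $\triangleleft$ and $\partial_H$ that appear in the action transition rules of Tables \ref{TRForAPPTCG} and \ref{TRForAPPTCG2}.

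The base cases are immediate: for an atomic event $e$ the successor state is drawn from $effect(e,s)$, and for a guard $\phi$ the rule $\langle\breve{\phi},s\rangle\rightarrow\langle\surd,s\rangle$ leaves the state unchanged whenever $test(\phi,s)$ holds. For sequential composition the effect simply propagates through the first factor. The genuinely new cases are the parallel-style operators $\leftmerge$, $\parallel$, $\between$ and $\mid$: here the transition rules combine the effects of the two components, either as the union $s'\cup s''$ of the component successor states or, in the communication case, as $effect(\gamma(e_1,e_2),s)$. In each of these the resulting state is a deterministic function of the component states, so determinism of the components lifts to determinism of the composite; the conflict-elimination operator $\Theta$, the unless operator $\triangleleft$ and the encapsulation operator $\partial_H$ only select or relabel transitions and never alter the $effect$ function, so they are handled the same way.

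As with $BAPTC_G$, the one delicate point is the choice operators $+$ and $\boxplus_{\pi}$, where a term $t_1+t_2$ or $t_1\boxplus_{\pi}t_2$ could a priori make an unresolved choice and hence admit two different successor states. The remedy is exactly the guard mechanism: using the guard-distributivity axioms $G4$, $PG2$, $G6$, $PG3$ and the weakest-precondition laws $G10$, $G11$, I would rewrite each branch into guarded form $t_1=\phi_1\cdot t_1'$ and $t_2=\phi_2\cdot t_2'$ with $\phi_1,\phi_2$ mutually exclusive in every state (so that $test(\phi_1,s)$ and $test(\phi_2,s)$ never both hold), whence at most one summand is enabled in each state and the choice is resolved deterministically by $s$. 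The main obstacle is therefore to argue that such a separation into mutually exclusive guards is always available — this rests on the expressibility of the weakest precondition $wp(e,\phi)$ and the soundness of $G10$ and $G11$, exactly the hypotheses that were identified as guaranteeing sufficient determinism of the data environment in the discussion following Table \ref{AxiomsForBAPTCG}. Once this is in place, the inductive steps for $+$ and $\boxplus_{\pi}$ close, and $effect(t,s)$ is seen to be sufficiently deterministic for all basic $APPTC_G$ terms, as desired.
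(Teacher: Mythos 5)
Your treatment of the choice operators $+$ and $\boxplus_{\pi}$ via mutually exclusive guards simply repeats the $BAPTC_G$ argument, which is fine as far as it goes, but you have misidentified where the new difficulty lies. The paper's proof points out that the genuinely new case is $t=t_1\between t_2$, because $t_1\between t_2 = t_1\leftmerge t_2 + t_2\leftmerge t_1 + t_1\mid t_2$: from a single configuration $\langle t_1\between t_2,s\rangle$ the action rules of Table \ref{TRForAPPTCG} offer several outgoing transitions with genuinely different successor states --- the synchronous step ending in $s'\cup s''$, the communication ending in $effect(\gamma(e_1,e_2),s)$, and, under a race condition $e_1\% e_2$, the single-event moves ending in $s'$ or $s''$ alone. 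Your claim that ``determinism of the components lifts to determinism of the composite'' is therefore false for these operators: each individual rule computes its target state deterministically from the component states, but the \emph{choice among rules} is unresolved, and it cannot be resolved by your guard-separation device, since the three summands of the expansion of $\between$ are all enabled in the same data states and so do not admit mutually exclusive guards.

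The missing idea is the paper's modeling-time serialization: one eliminates the nondeterminism by inserting the empty event $\epsilon$, replacing $t_1\parallel t_2$ by $(\epsilon\cdot t_1)\parallel t_2$ or $t_1\parallel(\epsilon\cdot t_2)$ during the modeling phase, which breaks the symmetry between the parallel branches and forces a unique scheduling, after which $effect(t,s)$ becomes sufficiently deterministic. Without this (or some equivalent device for resolving the choice inside $\between$), your induction does not close at the parallel case, which is precisely the content this theorem adds over the corresponding result for $BAPTC_G$.
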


\begin{proof}
It only needs to check $effect(t,s)$ function is deterministic, and is sufficient to induct on the structure of term $t$. The new matter is the case $t=t_1\between t_2$, the whole
thing is $t_1\between t_2 = t_1\leftmerge t_2 + t_2\leftmerge t_1 + t_1\mid t_2$. We can make $effect(t)$ be sufficiently deterministic: eliminating non-determinism
during modeling time by use of empty event $\epsilon$. We can make $t=t_1\parallel t_2$ be $t=(\epsilon \cdot t_1)\parallel t_2$ or $t=t_1\parallel (\epsilon\cdot t_2)$ during
modeling phase, and then $effect(t,s)$ becomes sufficiently deterministic.
\end{proof}

\subsection{Recursion}{\label{recg}}

In this subsection, we introduce recursion to capture infinite processes based on $APPTC_G$. In the following, $E,F,G$ are recursion specifications, $X,Y,Z$ are recursive variables.

\begin{definition}[Guarded recursive specification]
A recursive specification

$$X_1=t_1(X_1,\cdots,X_n)$$
$$...$$
$$X_n=t_n(X_1,\cdots,X_n)$$

is guarded if the right-hand sides of its recursive equations can be adapted to the form by applications of the axioms in $APPTC$ and replacing recursion variables by the right-hand
sides of their recursive equations,

$((a_{111}\leftmerge\cdots\leftmerge a_{11i_1})\cdot s_1(X_1,\cdots,X_n)+\cdots+(a_{1k1}\leftmerge\cdots\leftmerge a_{1ki_k})\cdot s_k(X_1,\cdots,X_n)+(b_{111}\leftmerge\cdots\leftmerge
b_{11j_1})+\cdots+(b_{11j_1}\leftmerge\cdots\leftmerge b_{1lj_l}))\boxplus_{\pi_1}\cdots\boxplus_{\pi_{m-1}}((a_{m11}\leftmerge\cdots\leftmerge a_{m1i_1})\cdot s_1(X_1,\cdots,X_n)+
\cdots+(a_{mk1}\leftmerge\cdots\leftmerge a_{mki_k})\cdot s_k(X_1,\cdots,X_n)+(b_{m11}\leftmerge\cdots\leftmerge b_{m1j_1})+\cdots+(b_{m1j_1}\leftmerge\cdots\leftmerge b_{mlj_l}))$

where $a_{111},\cdots,a_{11i_1},a_{1k1},\cdots,a_{1ki_k},b_{111},\cdots,b_{11j_1},b_{11j_1},\cdots,b_{1lj_l},\cdots, a_{m11},\cdots,a_{m1i_1},a_{1k1},\cdots,a_{mki_k},\\b_{111},\cdots,
b_{m1j_1},b_{m1j_1},\cdots,b_{mlj_l}\in \mathbb{E}$, and the sum above is allowed to be empty, in which case it represents the deadlock $\delta$. And there does not exist an infinite
sequence of $\epsilon$-transitions $\langle X|E\rangle\rightarrow\langle X'|E\rangle\rightarrow\langle X''|E\rangle\rightarrow\cdots$.
\end{definition}

\begin{center}
    \begin{table}
        $$\frac{\langle t_i(\langle X_1|E\rangle,\cdots,\langle X_n|E\rangle),s\rangle\rightsquigarrow \langle y,s\rangle}{\langle\langle X_i|E\rangle,s\rangle\rightsquigarrow \langle y,s\rangle}$$
        $$\frac{\langle t_i(\langle X_1|E\rangle,\cdots,\langle X_n|E\rangle),s\rangle\xrightarrow{\{e_1,\cdots,e_k\}}\langle\surd,s'\rangle}{\langle\langle X_i|E\rangle,s\rangle\xrightarrow{\{e_1,\cdots,e_k\}}\langle\surd,s'\rangle}$$
        $$\frac{\langle t_i(\langle X_1|E\rangle,\cdots,\langle X_n|E\rangle),s\rangle\xrightarrow{\{e_1,\cdots,e_k\}} \langle y,s'\rangle}{\langle\langle X_i|E\rangle,s\rangle\xrightarrow{\{e_1,\cdots,e_k\}} \langle y,s'\rangle}$$
        \caption{Transition rules of guarded recursion}
        \label{TRForGRG}
    \end{table}
\end{center}

\begin{theorem}[Conservitivity of $APPTC_G$ with guarded recursion]
$APPTC_G$ with guarded recursion is a conservative extension of $APPTC_G$.
\end{theorem}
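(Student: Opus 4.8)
The plan is to derive this conservativity result as an instance of the Conservative Extension Theorem (Theorem \ref{TCE}), exactly in the style of the earlier conservativity proofs in this chapter (for instance the one for $APPTC$ with guarded recursion, and for $APPTC_G$ with respect to $BAPTC_G$). First I would set $T_0$ to be the TSS of $APPTC_G$, whose probabilistic and action transition rules appear in Tables \ref{SETRForBAPTCG}, \ref{TRForAPPTCG1}, \ref{TRForAPPTCG} and \ref{TRForAPPTCG2}, and let $T_1$ be the TSS consisting of the three guarded-recursion rules of Table \ref{TRForGRG}. Then $T_0 \oplus T_1$ is precisely $APPTC_G$ with guarded recursion, and it suffices to verify that the two hypotheses of Theorem \ref{TCE} hold, together with the standing assumption that $T_0$ and $T_0\oplus T_1$ are positive after reduction (which holds here since all the rules are positive).

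For condition (1) I would argue that $T_0$ is source-dependent. Concretely, I would invoke (or re-establish) the fact already used in the proof of the generalization theorem for $APPTC_G$ over $BAPTC_G$: the $BAPTC_G$ rules in Table \ref{SETRForBAPTCG} are source-dependent, and every $APPTC_G$ rule for $\between$, $\parallel$, $\leftmerge$, $\mid$, $\Theta$, $\triangleleft$ and $\partial_H$ has a source containing an occurrence of the corresponding operator, so that by the inductive clause of the source-dependency definition every variable occurring in the source and in the premises propagates to every variable in the target. The only care needed is that the guarded, data-state-carrying configurations $\langle C,s\rangle$ and the side conditions (guards $test(\phi,s)$, effects $effect(e,s)$, race conditions $e_1\%e_2$, and the conflict relations $\sharp,\sharp_\pi$) do not introduce source-independent variables; since each premise variable either appears in the source or is bound by a source-dependent premise, source-dependency is preserved throughout.

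For condition (2) I would observe that each rule in $T_1$ (Table \ref{TRForGRG}) has source $\langle X_i|E\rangle$, and the recursion constant $\langle X_i|E\rangle$ is a function symbol in $\Sigma_1\setminus\Sigma_0$, hence the source of every such rule is fresh in the sense of the Freshness definition. This immediately discharges the disjunction in condition (2): we may always take the first alternative (``the source of $\rho$ is fresh''), so no premise analysis is required. Combining (1) and (2) with Theorem \ref{TCE} yields that $T_0\oplus T_1$ is a conservative extension of $T_0$, i.e. $APPTC_G$ with guarded recursion is a conservative extension of $APPTC_G$, as desired. I expect the routine-but-only-mildly-delicate step to be the source-dependency check of condition (1), because the $APPTC_G$ rules are considerably richer than in the plain $APPTC$ case (data states, guards, and race-condition side conditions); the freshness argument in (2) is immediate and is the same one-line observation used in all the preceding recursion-conservativity proofs.
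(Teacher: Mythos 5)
Your proposal is correct and follows essentially the same route as the paper: the paper's proof likewise invokes the Conservative Extension Theorem (Theorem \ref{TCE}) by noting that the transition rules of $APPTC_G$ are source-dependent and that the guarded-recursion rules of Table \ref{TRForGRG} contain only a fresh constant $\langle X_i|E\rangle$ in their source. You simply spell out the two conditions in more detail than the paper's two-line argument, which is harmless.
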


\begin{proof}
Since the transition rules of $APPTC_G$ are source-dependent, and the transition rules for guarded recursion in Table \ref{TRForGRG} contain only a fresh constant in their source, so
the transition rules of $APPTC_G$ with guarded recursion are a conservative extension of those of $APPTC_G$.
\end{proof}

\begin{theorem}[Congruence theorem of $APPTC_G$ with guarded recursion]
Probabilistic truly concurrent bisimulation equivalences $\sim_{pp}$, $\sim_{p}$, $\sim_{php}$ and $\sim_{phhp}$ are all congruences with respect to $APPTC_G$ with guarded recursion.
\end{theorem}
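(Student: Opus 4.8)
The plan is to follow the same two-step strategy already used for the congruence theorem of $APPTC$ with guarded recursion. The four probabilistic equivalences $\sim_{pp}$, $\sim_{ps}$, $\sim_{php}$ and $\sim_{phhp}$ are already known to be congruences with respect to \emph{every} operator of $APPTC_G$ --- prefix, $+$, $\boxplus_{\pi}$, $\parallel$, $\leftmerge$, $\mid$, $\Theta$, $\triangleleft$, $\partial_H$ and the guards --- by the congruence theorem of $APPTC_G$. Hence the only genuinely new obligation is that the solution construct $\langle X_i|E\rangle$ preserves each equivalence: if $E$ and $E'$ are guarded recursive specifications over the same recursion variables $\widetilde{X}$ whose corresponding right-hand sides $t_i$ and $t_i'$ are pairwise equivalent as open terms, then $\langle X_i|E\rangle$ and $\langle X_i|E'\rangle$ are equivalent for every $i$.

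First I would record that, by the definition of guarded recursive specification, every right-hand side $t_i$ can be rewritten via the axioms of $APPTC_G$ into the standard guarded form, so that the initial probabilistic and action transitions of $\langle X_i|E\rangle$ are exactly those of $t_i(\langle X_1|E\rangle,\cdots,\langle X_n|E\rangle)$, as prescribed by the transition rules in Table \ref{TRForGRG} together with the PDF clause $\mu(\langle X|E\rangle,y)=\mu(\langle t_X|E\rangle,y)$ of Table \ref{PDFGR}. Then I would exhibit the candidate relation $R=\{(\sigma(t),\sigma'(t)) : t \textrm{ a context over } \widetilde{X}\}$, where $\sigma$ and $\sigma'$ substitute $\langle X_i|E\rangle$ and $\langle X_i|E'\rangle$ for $X_i$ respectively, and verify the four defining clauses of each bisimulation: matching of action steps $\xrightarrow{\{e_1,\cdots,e_k\}}$, matching of probabilistic transitions $\rightsquigarrow$, equality of cumulative PDF values $\mu(C_1,C)=\mu(C_2,C)$ on $R$-classes, and $[\surd]_R=\{\surd\}$. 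The verification proceeds by induction on the depth of inference of a transition; the base case $t\equiv X_i$ is discharged by unfolding one step through Table \ref{TRForGRG} and then invoking the hypothesis $t_i\sim t_i'$ together with the already-established operator congruences, while the inductive cases for the composed operators are immediate from those same congruences.

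The hp- and hhp- versions require the additional bookkeeping of threading the order-isomorphism $f$ through the construction and, for hhp, checking that $R$ is downward closed; these follow the pattern already used for $APPTC_G$ itself. I expect the main obstacle to be the consistent threading of the recursion-unfolding correspondence through \emph{both} the probabilistic layer $\rightsquigarrow$ and the action layer $\xrightarrow{}$ at once: the rules in Table \ref{TRForGRG} lift each kind of transition of $t_i(\langle X_1|E\rangle,\cdots,\langle X_n|E\rangle)$ to $\langle X_i|E\rangle$ separately, so one must ensure that the probabilistic clause and the PDF-equality clause of $R$ reduce coherently to the corresponding clauses for $t_i$ versus $t_i'$. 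Here guardedness is what guarantees that this stepwise matching is well founded --- the absence of an infinite chain $\langle X|E\rangle\rightarrow\langle X'|E\rangle\rightarrow\cdots$ of $\epsilon$-transitions prevents degenerate loops that could block the construction of a genuine bisimulation. Once this coherence is in place, the result follows uniformly for all four equivalences.
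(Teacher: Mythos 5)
Your proposal is correct, and its opening step coincides with the paper's \emph{entire} proof: the paper argues only from the two facts that (i) the right-hand sides of a guarded recursive specification can be brought into standard form by the axioms of $APPTC_G$ and by unfolding recursion variables, and (ii) $\sim_{pp}$, $\sim_{ps}$, $\sim_{php}$ and $\sim_{phhp}$ are congruences with respect to all operators of $APPTC_G$ --- and then stops. Where you diverge is in treating the recursion construct itself as carrying a fresh proof obligation, namely that specifications with pairwise equivalent right-hand sides have equivalent solutions, which you discharge with the up-to-context relation $R=\{(\sigma(t),\sigma'(t))\}$ verified by induction on the depth of inference, invoking guardedness for well-foundedness. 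Strictly speaking the constants $\langle X_i|E\rangle$ are nullary, so congruence over the extended signature imposes no condition on them beyond facts (i) and (ii); this is presumably why the paper's proof is so short. What your extra work buys is a genuinely stronger compositionality statement for recursion --- essentially the semantic content underlying $RSP$ --- together with an explicit check that the probabilistic layer (the $\rightsquigarrow$ clause, the PDF-equality clause $\mu(C_1,C)=\mu(C_2,C)$, and $[\surd]_R=\{\surd\}$) is threaded coherently through the unfolding, none of which the paper verifies here. Both routes establish the theorem as literally stated; yours proves more, at the cost of an argument the paper defers (in effect) to the completeness proofs where the analogous $Z_{XY}$ construction appears.
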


\begin{proof}
It follows the following two facts:
\begin{enumerate}
  \item in a guarded recursive specification, right-hand sides of its recursive equations can be adapted to the form by applications of the axioms in $APPTC_G$ and replacing recursion
  variables by the right-hand sides of their recursive equations;
  \item probabilistic truly concurrent bisimulation equivalences $\sim_{pp}$, $\sim_{ps}$, $\sim_{php}$ and $\sim_{phhp}$ are all congruences with respect to all operators of $APPTC_G$.
\end{enumerate}
\end{proof}

\begin{theorem}[Elimination theorem of $APPTC_G$ with linear recursion]\label{ETRecursionG}
Each process term in $APPTC_G$ with linear recursion is equal to a process term $\langle X_1|E\rangle$ with $E$ a linear recursive specification.
\end{theorem}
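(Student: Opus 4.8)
The plan is to follow the template already used for the elimination theorems of $APPTC$ with linear recursion (Theorem \ref{ETPRecursion}) and of $APPTC$ with silent step and guarded linear recursion (Theorem \ref{ETTau}), adapting it to accommodate the guards. First I would invoke the elimination theorem of $APPTC_G$ (Theorem \ref{ETAPPTCG}): every closed $APPTC_G$ subterm occurring in a given term is provably equal to a basic $APPTC_G$ term in the sense of Definition \ref{BTAPPTCG}. This strips away the parallel composition $\parallel$, the left merge context, the communication $\mid$, the conflict-elimination $\Theta$, the unless $\triangleleft$, and the encapsulation $\partial_H$, leaving summands built only from $\cdot$, $+$, $\boxplus_{\pi}$, and $\leftmerge$ over atomic events and atomic guards.

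Next I would apply structural induction with respect to term size to show that each process term $t_1$ in $APPTC_G$ with linear recursion generates a process expressible as a finite system of equations $t_i = \cdots$ for $i\in\{1,\cdots,n\}$, where each right-hand side is a probabilistic choice (over $\boxplus_{\pi_1},\cdots,\boxplus_{\pi_{m-1}}$) of ordinary sums (over $+$) of summands, and each summand is either a parallel prefix $(a_{i1}\leftmerge\cdots\leftmerge a_{ik})$ followed by one of the variables $X_j$ standing for a subterm $t_j$, or a terminating parallel prefix $b_{i1}\leftmerge\cdots\leftmerge b_{ik}$, exactly as in the linear recursive specification of Definition \ref{LRS}. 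I would then collect these equations into a linear recursive specification $E$, obtained by replacing each occurring subterm by its associated recursion variable $X_i$. Substituting $t_i$ back for $X_i$ for each $i$ yields a solution of $E$, so that $RSP$ (Table \ref{PRDPRSP}) delivers $t_1=\langle X_1|E\rangle$, which is the desired conclusion.

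The hard part will be the treatment of the guards $\phi\in G$. Unlike in the plain $APPTC$ case, the basic $APPTC_G$ terms of Definition \ref{BTAPPTCG} admit summands carrying guard prefixes, such as $\phi\cdot t$, and the guards themselves enter parallel prefixes. To fit these into the linear recursive format I would allow the atoms $a_{\cdots}, b_{\cdots}$ ranging in the prefixes to lie in $\mathbb{E}\cup G$ rather than $\mathbb{E}$ alone, using the guard axioms ($G4$--$G7$, $PG2$--$PG3$, $G12$--$G13$, and $G24$) to push guards outward and to distribute them over $+$, $\boxplus_{\pi}$, $\cdot$ and $\leftmerge$ until each summand is in the canonical prefixed shape. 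The delicate point is that, since a satisfied guard behaves operationally like $\epsilon$ (rule for $\langle\breve{\phi},s\rangle\rightarrow\langle\surd,s\rangle$ in Table \ref{SETRForBAPTCG}), I must verify that the resulting specification is genuinely linear and that no infinite sequence of $\epsilon$-transitions is introduced; the axioms $G8$ and $G9$, which collapse guards that always hold to $\epsilon$ and inconsistent guard chains to $\delta$, are precisely what is needed to guarantee this, so guardedness in the sense required for $RSP$ is preserved.
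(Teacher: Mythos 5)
Your proposal takes essentially the same route as the paper: a structural induction on term size to express the process as a system of equations whose right-hand sides have the linear form of Definition \ref{LRS}, followed by an application of $RSP$ to conclude $t_1=\langle X_1|E\rangle$. In fact you are more careful than the paper, whose proof of Theorem \ref{ETRecursionG} simply repeats the argument for the guard-free case (Theorem \ref{ETPRecursion}) verbatim and never addresses how guard prefixes $\phi\cdot t$ or guarded summands are absorbed into the linear format, nor the $\epsilon$-transition issue you raise; your appeal to Theorem \ref{ETAPPTCG} and to the axioms $G4$--$G9$, $G12$--$G13$ and $G24$ fills exactly the gap the paper leaves implicit.
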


\begin{proof}
By applying structural induction with respect to term size, each process term $t_1$ in $APPTC$ with linear recursion generates a process can be expressed in the form of equations

$t_i=((a_{1i11}\leftmerge\cdots\leftmerge a_{1i1i_1})t_{i1}+\cdots+(a_{1ik_i1}\leftmerge\cdots\leftmerge a_{1ik_ii_k})t_{ik_i}+(b_{1i11}\leftmerge\cdots\leftmerge b_{1i1i_1})+\cdots+
(b_{1il_i1}\leftmerge\cdots\leftmerge b_{1il_ii_l}))\boxplus_{\pi_1}\cdots\boxplus_{\pi_{m-1}}((a_{mi11}\leftmerge\cdots\leftmerge a_{mi1i_1})t_{i1}+\cdots+(a_{mik_i1}\leftmerge\cdots
\leftmerge a_{mik_ii_k})t_{ik_i}+(b_{mi11}\leftmerge\cdots\leftmerge b_{mi1i_1})+\cdots+(b_{mil_i1}\leftmerge\cdots\leftmerge b_{mil_ii_l}))$

for $i\in\{1,\cdots,n\}$. Let the linear recursive specification $E$ consist of the recursive equations

$X_i=((a_{1i11}\leftmerge\cdots\leftmerge a_{1i1i_1})X_{i1}+\cdots+(a_{1ik_i1}\leftmerge\cdots\leftmerge a_{1ik_ii_k})X_{ik_i}+(b_{1i11}\leftmerge\cdots\leftmerge b_{1i1i_1})+\cdots+
(b_{1il_i1}\leftmerge\cdots\leftmerge b_{1il_ii_l}))\boxplus_{\pi_1}\cdots\boxplus_{\pi_{m-1}}((a_{mi11}\leftmerge\cdots\leftmerge a_{mi1i_1})X_{i1}+\cdots+(a_{mik_i1}\leftmerge\cdots
\leftmerge a_{mik_ii_k})X_{ik_i}+(b_{mi11}\leftmerge\cdots\leftmerge b_{mi1i_1})+\cdots+(b_{mil_i1}\leftmerge\cdots\leftmerge b_{mil_ii_l}))$

for $i\in\{1,\cdots,n\}$. Replacing $X_i$ by $t_i$ for $i\in\{1,\cdots,n\}$ is a solution for $E$, $RSP$ yields $t_1=\langle X_1|E\rangle$.
\end{proof}

\begin{theorem}[Soundness of $APPTC_G$ with guarded recursion]\label{SAPPTC_GRG}
Let $x$ and $y$ be $APPTC_G$ with guarded recursion terms. If $APPTC_G\textrm{ with guarded recursion}\vdash x=y$, then

(1) $x\sim_{ps} y$.

(2) $x\sim_{pp} y$.

(3) $x\sim_{php} y$.

(4) $x\sim_{phhp} y$.
\end{theorem}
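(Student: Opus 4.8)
The plan is to reduce soundness to a finite, per-axiom verification by invoking the congruence theorem stated immediately above. Since each of the four probabilistic truly concurrent bisimulation equivalences $\sim_{pp}$, $\sim_{ps}$, $\sim_{php}$, and $\sim_{phhp}$ is both an equivalence relation and a congruence with respect to every operator of $APPTC_G$ with guarded recursion, provability of $x=y$ decomposes into single rewrite steps, and it suffices to check that each generating axiom is sound modulo each of the four equivalences. The axioms of $APPTC_G$ proper are already sound by Theorem \ref{SAPPTCG}, so the genuinely new obligations are the two principles $RDP$ and $RSP$ of Table \ref{PRDPRSP}, now read in the setting with data states $s$ and guards.

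First I would dispatch $RDP$. By the transition rules for guarded recursion in Table \ref{TRForGRG}, the constant $\langle X_i|E\rangle$ and the unfolding $t_i(\langle X_1|E\rangle,\cdots,\langle X_n|E\rangle)$ have, by construction, identical initial probabilistic transitions $\langle\cdot,s\rangle\rightsquigarrow\langle\cdot,s\rangle$ and identical initial action transitions $\langle\cdot,s\rangle\xrightarrow{\{e_1,\cdots,e_k\}}\langle\cdot,s'\rangle$, since the premise of each rule is precisely a transition of the body. Hence the identity relation extended by the pair $(\langle X_i|E\rangle, t_i(\cdots))$ is a witness of each of the four kinds; the cumulative-PDF clause $\mu(C_1,C)=\mu(C_2,C)$ and the clause $[\surd]_R=\{\surd\}$ hold because the PDF of $\langle X|E\rangle$ is defined (Table \ref{PDFGR}) to equal that of its right-hand side. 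This establishes soundness of $RDP$.

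The main obstacle is $RSP$, whose soundness is exactly the statement that a guarded recursive specification has a unique solution up to the relevant bisimilarity. Given any solution $\widetilde{y}$ alongside the canonical solution $\langle\widetilde{X}|E\rangle$, I would relate $t(\widetilde{y})$ to $t(\langle\widetilde{X}|E\rangle)$ for every context $t$ and argue that this relation is a probabilistic bisimulation, matching both $\rightsquigarrow$ steps and action steps and verifying the PDF condition at each probabilistic branch. Guardedness is what makes the induction well-founded: it forbids the $\epsilon$-loops (and, later, $\tau$-loops) that would otherwise destroy uniqueness, so each unfolding exposes genuine action content. The $\sim_{php}$ case additionally threads the order-isomorphism $f$ through the matching, and the $\sim_{phhp}$ case requires checking that the witness relation is downward closed; both are routine refinements of the pomset/step construction and may be carried out as in the completeness argument for linear recursion (Theorem \ref{CAPPTCR}). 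Since these checks are mechanical once the bisimulation witnesses are fixed, I would give the $RDP$ argument in full and state $RSP$ as uniqueness of solutions of guarded specifications, omitting the remaining routine case analysis.
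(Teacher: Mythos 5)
Your proposal follows the same route as the paper: both reduce soundness to checking the axioms of Table \ref{PRDPRSP} ($RDP$ and $RSP$) modulo each of the four equivalences, justified by the fact that $\sim_{pp}$, $\sim_{ps}$, $\sim_{php}$ and $\sim_{phhp}$ are equivalences and congruences for $APPTC_G$ with guarded recursion. The paper then declares these checks "exercises to the readers," whereas you actually sketch them --- matching initial transitions for $RDP$ via Table \ref{TRForGRG} and reading $RSP$ as uniqueness of solutions of guarded specifications --- so your write-up is, if anything, more complete than the paper's.
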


\begin{proof}
(1) Since probabilistic step bisimulation $\sim_{ps}$ is both an equivalent and a congruent relation with respect to $APPTC_G$ with guarded recursion, we only need to check if each
axiom in Table \ref{RDPRSP} is sound modulo probabilistic step bisimulation equivalence. We leave them as exercises to the readers.

(2) Since probabilistic pomset bisimulation $\sim_{pp}$ is both an equivalent and a congruent relation with respect to the guarded recursion, we only need to check if each axiom in
Table \ref{RDPRSP} is sound modulo probabilistic pomset bisimulation equivalence. We leave them as exercises to the readers.

(3) Since probabilistic hp-bisimulation $\sim_{php}$ is both an equivalent and a congruent relation with respect to guarded recursion, we only need to check if each axiom in Table
\ref{RDPRSP} is sound modulo probabilistic hp-bisimulation equivalence. We leave them as exercises to the readers.

(4) Since probabilistic hhp-bisimulation $\sim_{phhp}$ is both an equivalent and a congruent relation with respect to guarded recursion, we only need to check if each axiom in Table
\ref{RDPRSP} is sound modulo probabilistic hhp-bisimulation equivalence. We leave them as exercises to the readers.
\end{proof}

\begin{theorem}[Completeness of $APPTC_G$ with linear recursion]\label{CAPPTC_GRG}
Let $p$ and $q$ be closed $APPTC_G$ with linear recursion terms, then,

(1) if $p\sim_{ps} q$ then $p=q$.

(2) if $p\sim_{pp} q$ then $p=q$.

(3) if $p\sim_{php} q$ then $p=q$.

(4) if $p\sim_{phhp} q$ then $p=q$.
\end{theorem}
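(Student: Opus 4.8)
The plan is to mirror the completeness argument for $APPTC$ with linear recursion (Theorem \ref{CAPPTCR}), adapting it to the guarded, data-state setting of $APPTC_G$. First I would invoke the elimination theorem of $APPTC_G$ with linear recursion (Theorem \ref{ETRecursionG}) to replace each closed term by a term $\langle X_1|E\rangle$ with $E$ a linear recursive specification; this reduces all four cases to the single assertion that whenever $\langle X_1|E_1\rangle$ and $\langle Y_1|E_2\rangle$ are bisimilar (in the relevant equivalence) for linear recursive specifications $E_1$, $E_2$, they are provably equal.

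For that core assertion, the approach is to build a product specification $E$ whose recursion variables $Z_{XY}$ range over the pairs $(X,Y)$ with $\langle X|E_1\rangle$ and $\langle Y|E_2\rangle$ bisimilar, defining $t_{XY}$ summand-by-summand: $t_{XY}$ has a summand $(a_1\leftmerge\cdots\leftmerge a_m)Z_{X'Y'}$ exactly when $t_X$ has $(a_1\leftmerge\cdots\leftmerge a_m)X'$, $t_Y$ has $(a_1\leftmerge\cdots\leftmerge a_m)Y'$, and $\langle X'|E_1\rangle$ is bisimilar to $\langle Y'|E_2\rangle$; it has a summand $b_1\leftmerge\cdots\leftmerge b_n$ exactly when both $t_X$ and $t_Y$ do; and correspondingly for any guard summand $\phi$. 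The maps $\sigma: X\mapsto\langle X|E_1\rangle$ and $\psi: Z_{XY}\mapsto\langle X|E_1\rangle$ then satisfy $\langle X|E_1\rangle = \sigma(t_X) = \psi(t_{XY})$ by $RDP$, so $RSP$ gives $\langle X|E_1\rangle = \langle Z_{XY}|E\rangle$; symmetrically $\langle Y|E_2\rangle = \langle Z_{XY}|E\rangle$, and taking $X=X_1$, $Y=Y_1$ yields the equality. Cases (2)--(4) follow the same construction, with the additional posetal and downward-closure bookkeeping for $\sim_{php}$ and $\sim_{phhp}$ checked exactly as in the corresponding completeness proof for parallelism (Theorem \ref{CAPPTCG}).

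The hard part will be verifying that the summand-matching construction of $t_{XY}$ is well-defined and exhaustive once guards, data states, and the probabilistic box-sum structure are all present simultaneously. Concretely, I must confirm that a bisimulation over configurations-with-data-states forces $t_X$ and $t_Y$ to agree on their guard summands $\phi$, so that $test(\phi,s)$ behaves identically on both sides, and that the probabilistic grouping $\boxplus_{\pi_1}\cdots\boxplus_{\pi_{m-1}}$ of the blocks is preserved, using clause (3) of the probabilistic bisimulation definitions, namely $\mu(C_1,C)=\mu(C_2,C)$, to match the branch probabilities. I would therefore first normalize each $t_X$, $t_Y$ into the standard box-sum-of-sums form supplied by the elimination theorem, match the probabilistic skeleton block-by-block, and only then apply the action- and guard-summand matching within each block; the soundness theorem (Theorem \ref{SAPPTC_GRG}) guarantees the constructed $E$ is consistent, and guardedness of $E_1$, $E_2$ ensures $E$ is again guarded so that $RSP$ applies.
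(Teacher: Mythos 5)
Your proposal follows essentially the same route as the paper's own proof: reduce via the elimination theorem to terms $\langle X_1|E\rangle$, build the product linear specification with variables $Z_{XY}$ for bisimilar pairs, match summands, and conclude by $RDP$ and $RSP$ applied through the maps $\sigma$ and $\psi$. Your additional attention to guard summands, the data-state side of the bisimulations, and the block-by-block matching of the $\boxplus_{\pi}$ skeleton is more careful than the paper's version, which writes the summand construction only for action summands and leaves those points implicit.
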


\begin{proof}
Firstly, by the elimination theorem of $APPTC_G$ with guarded recursion (see Theorem \ref{ETRecursionG}), we know that each process term in $APPTC_G$ with linear recursion is equal to
a process term $\langle X_1|E\rangle$ with $E$ a linear recursive specification. And for the simplicity, without loss of generalization, we do not consider empty event $\epsilon$,
just because recursion with $\epsilon$ are similar to that with silent event $\tau$.

It remains to prove the following cases.

(1) If $\langle X_1|E_1\rangle \sim_{ps} \langle Y_1|E_2\rangle$ for linear recursive specification $E_1$ and $E_2$, then $\langle X_1|E_1\rangle = \langle Y_1|E_2\rangle$.

Let $E_1$ consist of recursive equations $X=t_X$ for $X\in \mathcal{X}$ and $E_2$
consists of recursion equations $Y=t_Y$ for $Y\in\mathcal{Y}$. Let the linear recursive specification $E$ consist of recursion equations $Z_{XY}=t_{XY}$, and
$\langle X|E_1\rangle\sim_s\langle Y|E_2\rangle$, and $t_{XY}$ consists of the following summands:

\begin{enumerate}
  \item $t_{XY}$ contains a summand $(a_1\leftmerge\cdots\leftmerge a_m)Z_{X'Y'}$ iff $t_X$ contains the summand $(a_1\leftmerge\cdots\leftmerge a_m)X'$ and $t_Y$ contains the
  summand $(a_1\leftmerge\cdots\leftmerge a_m)Y'$ such that $\langle X'|E_1\rangle\sim_s\langle Y'|E_2\rangle$;
  \item $t_{XY}$ contains a summand $b_1\leftmerge\cdots\leftmerge b_n$ iff $t_X$ contains the summand $b_1\leftmerge\cdots\leftmerge b_n$ and $t_Y$ contains the summand
  $b_1\leftmerge\cdots\leftmerge b_n$.
\end{enumerate}

Let $\sigma$ map recursion variable $X$ in $E_1$ to $\langle X|E_1\rangle$, and let $\pi$ map recursion variable $Z_{XY}$ in $E$ to $\langle X|E_1\rangle$. So,
$\sigma((a_1\leftmerge\cdots\leftmerge a_m)X')\equiv(a_1\leftmerge\cdots\leftmerge a_m)\langle X'|E_1\rangle\equiv\pi((a_1\leftmerge\cdots\leftmerge a_m)Z_{X'Y'})$, so by $RDP$, we
get $\langle X|E_1\rangle=\sigma(t_X)=\pi(t_{XY})$. Then by $RSP$, $\langle X|E_1\rangle=\langle Z_{XY}|E\rangle$, particularly, $\langle X_1|E_1\rangle=\langle Z_{X_1Y_1}|E\rangle$.
Similarly, we can obtain $\langle Y_1|E_2\rangle=\langle Z_{X_1Y_1}|E\rangle$. Finally, $\langle X_1|E_1\rangle=\langle Z_{X_1Y_1}|E\rangle=\langle Y_1|E_2\rangle$, as desired.

(2) If $\langle X_1|E_1\rangle \sim_{pp} \langle Y_1|E_2\rangle$ for linear recursive specification $E_1$ and $E_2$, then $\langle X_1|E_1\rangle = \langle Y_1|E_2\rangle$.

It can be proven similarly to (1), we omit it.

(3) If $\langle X_1|E_1\rangle \sim_{php} \langle Y_1|E_2\rangle$ for linear recursive specification $E_1$ and $E_2$, then $\langle X_1|E_1\rangle = \langle Y_1|E_2\rangle$.

It can be proven similarly to (1), we omit it.

(4) If $\langle X_1|E_1\rangle \sim_{phhp} \langle Y_1|E_2\rangle$ for linear recursive specification $E_1$ and $E_2$, then $\langle X_1|E_1\rangle = \langle Y_1|E_2\rangle$.

It can be proven similarly to (1), we omit it.
\end{proof}

\subsection{Abstraction}{\label{absg}}

To abstract away from the internal implementations of a program, and verify that the program exhibits the desired external behaviors, the silent step $\tau$ and abstraction operator
$\tau_I$ are introduced, where $I\subseteq \mathbb{E}\cup G_{at}$ denotes the internal events or guards. The silent step $\tau$ represents the internal events or guards, when we
consider the external behaviors of a process, $\tau$ steps can be removed, that is, $\tau$ steps must keep silent. The transition rule of $\tau$ is shown in Table \ref{TRForTauG}. In
the following, let the atomic event $e$ range over $\mathbb{E}\cup\{\epsilon\}\cup\{\delta\}\cup\{\tau\}$, and $\phi$ range over $G\cup \{\tau\}$, and let the communication function
$\gamma:\mathbb{E}\cup\{\tau\}\times \mathbb{E}\cup\{\tau\}\rightarrow \mathbb{E}\cup\{\delta\}$, with each communication involved $\tau$ resulting in $\delta$. We use $\tau(s)$ to
denote $effect(\tau,s)$, for the fact that $\tau$ only change the state of internal data environment, that is, for the external data environments, $s=\tau(s)$.

\begin{center}
    \begin{table}
        $$\frac{}{\tau\rightsquigarrow\breve{\tau}}$$
        $$\frac{}{\langle\tau,s\rangle\rightarrow\langle\surd,s\rangle}\textrm{ if }test(\tau,s)$$
        $$\frac{}{\langle\tau,s\rangle\xrightarrow{\tau}\langle\surd,\tau(s)\rangle}$$
        \caption{Transition rule of the silent step}
        \label{TRForTauG}
    \end{table}
\end{center}

\begin{definition}[Guarded linear recursive specification]\label{GLRSG}
A linear recursive specification $E$ is guarded if there does not exist an infinite sequence of $\tau$-transitions
$\langle X|E\rangle\xrightarrow{\tau}\langle X'|E\rangle\xrightarrow{\tau}\langle X''|E\rangle\xrightarrow{\tau}\cdots$, and there does not exist an infinite sequence of
$\epsilon$-transitions $\langle X|E\rangle\rightarrow\langle X'|E\rangle\rightarrow\langle X''|E\rangle\rightarrow\cdots$.
\end{definition}

\begin{theorem}[Conservitivity of $APPTC_G$ with silent step and guarded linear recursion]
$APPTC_G$ with silent step and guarded linear recursion is a conservative extension of $APPTC_G$ with linear recursion.
\end{theorem}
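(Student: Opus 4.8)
The plan is to invoke the Conservative Extension Theorem (Theorem \ref{TCE}), exactly as in the analogous conservativity results established earlier — for instance, the conservitivity of $APPTC$ with silent step and guarded linear recursion over $APPTC$ with linear recursion, and the conservitivity of $APPTC_G$ with guarded recursion over $APPTC_G$. That theorem reduces the claim to two verifiable syntactic conditions on the transition system specifications involved: source-dependency of the base system, and freshness of each added rule's source (or of an appropriate premise).

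First I would fix $T_0$ to be the TSS of $APPTC_G$ with linear recursion and $T_1$ to be the TSS consisting of the transition rules for the silent step $\tau$ shown in Table \ref{TRForTauG}, so that $T_0\oplus T_1$ is the TSS of $APPTC_G$ with silent step and guarded linear recursion. The first step is to confirm that $T_0$ is source-dependent: every variable occurring in each rule is reachable from the source via the inductive clauses of Definition of source-dependency. This follows by inspecting the probabilistic and action transition rules of $BAPTC_G$, of the parallel operators $\parallel,\leftmerge,\mid,\Theta,\triangleleft,\partial_H$, and of guarded (linear) recursion; all of these were already relied upon source-dependently in the congruence, elimination, and generalization results proved above, so this verification is routine.

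Second I would check the freshness clause for each rule in $T_1$. The three rules for $\tau$ in Table \ref{TRForTauG} have sources $\tau\rightsquigarrow\breve{\tau}$, $\langle\tau,s\rangle\rightarrow\langle\surd,s\rangle$, and $\langle\tau,s\rangle\xrightarrow{\tau}\langle\surd,\tau(s)\rangle$; in every case the source contains the fresh constant $\tau$, which does not occur in the signature $\Sigma_0$ of $APPTC_G$ with linear recursion, and the label $\tau$ is itself fresh. Since these rules carry no premises that could reintroduce old function symbols, the second condition of Theorem \ref{TCE} holds. I would also record that both $T_0$ and $T_0\oplus T_1$ are positive after reduction, as the hypotheses of Theorem \ref{TCE} demand, which is immediate since none of the rules involved use negative premises beyond the already-handled race and $\nrightarrow$ side conditions that are stratified.

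The main obstacle — though a mild one — is the bookkeeping needed to confirm that adjoining $\tau$ as a fresh constant together with its fresh label does not generate any new transition $t\xrightarrow{a}t'$, $t\rightsquigarrow t'$, or termination predicate $tP$ for a closed term $t$ already in $\Sigma_0$. The key observation is that the silent-step rules fire only on the constant $\tau$ itself, so any derivation of a transition for a $\tau$-free term uses exclusively rules of $T_0$; hence no new transitions for old terms arise. With source-dependency and freshness in hand, the conservativity conclusion then follows directly from Theorem \ref{TCE}.
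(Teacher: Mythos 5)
Your proposal is correct and follows essentially the same route as the paper's own proof, which likewise invokes the conservative extension theorem by noting that the transition rules of $APPTC_G$ with linear recursion are source-dependent and that the silent-step rules in Table \ref{TRForTauG} contain only the fresh constant $\tau$ in their source. Your version merely spells out the bookkeeping (fixing $T_0$ and $T_1$, positivity after reduction, and the observation that no new transitions arise for $\tau$-free terms) that the paper leaves implicit.
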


\begin{proof}
Since the transition rules of $APPTC_G$ with linear recursion are source-dependent, and the transition rules for silent step in Table \ref{TRForTauG} contain only a fresh constant
$\tau$ in their source, so the transition rules of $APPTC_G$ with silent step and guarded linear recursion is a conservative extension of those of $APPTC_G$ with linear recursion.
\end{proof}

\begin{theorem}[Congruence theorem of $APPTC_G$ with silent step and guarded linear recursion]
Probabilistic rooted branching truly concurrent bisimulation equivalences $\approx_{prbp}$, $\approx_{prbs}$, $\approx_{prbhp}$ and $\approx_{rbhhp}$ are all congruences with respect
to $APPTC_G$ with silent step and guarded linear recursion.
\end{theorem}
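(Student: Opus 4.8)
The plan is to follow the three-part template already used for the congruence theorem of $APPTC$ with silent step and guarded linear recursion, adapted to the extended guarded signature. First I would record that $\approx_{prbp}$, $\approx_{prbs}$, $\approx_{prbhp}$ and $\approx_{prbhhp}$ are equivalence relations on the closed terms of $APPTC_G$ with silent step and guarded linear recursion; this is routine from Definitions \ref{RBPSBG} and \ref{RBHHPBG} together with the transitivity argument that composes two rooted branching bisimulations through an intermediate one. The real content is then to verify that each of the four relations is preserved by every operator of the signature, namely the prefix, $\cdot$, $+$, $\boxplus_{\pi}$, $\between$, $\parallel$, $\leftmerge$, $\mid$, $\Theta$, $\triangleleft$, $\partial_H$, the guard constructors acting on $G$, and the recursion construct $\langle X_i|E\rangle$.

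The principal lever is that the strong probabilistic truly concurrent bisimulation equivalences $\sim_{pp}$, $\sim_{ps}$, $\sim_{php}$ and $\sim_{phhp}$ are already congruences with respect to all operators of $APPTC_G$, and that each strong equivalence implies its probabilistic rooted branching counterpart, since the rootedness clauses of Definitions \ref{RBPSBG} and \ref{RBHHPBG} are satisfied vacuously when no initial $\tau$ is available and the branching clauses collapse to the strong ones in the absence of internal moves. Consequently, for every operator whose transition rules do not introduce $\tau$, the congruence property is inherited directly. What remains is to treat the genuinely new ingredients: the silent constant $\tau$ with the transition rule in Table \ref{TRForTauG}, and the recursion operator over guarded linear specifications (Definition \ref{GLRSG}). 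For the latter I would use guardedness to rule out infinite sequences of $\tau$-transitions and $\epsilon$-transitions, which guarantees that the two transition rules for $\langle X_i|E\rangle$ expose only finitely many initial moves that can be matched while respecting rootedness; combined with the fact that right-hand sides of guarded linear equations can be rewritten, via the axioms of $APPTC_G$ and substitution of recursion variables, into the standard summation form, this lets the bisimulation for the bodies be lifted to the solutions.

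The hard part will be the interaction of the rootedness condition with the parallel operators $\parallel$, $\leftmerge$ and $\mid$ under the probabilistic and data-state semantics of Tables \ref{TRForAPPTCG1}--\ref{TRForAPPTCG2}, and additionally the downward-closure requirement for the hhp case. Concretely, when $x\approx_{prbp} x'$ I must show $x\parallel y\approx_{prbp} x'\parallel y$, and the delicate point is that an initial joint step or communication of $x\parallel y$ must be matched by $x'\parallel y$ through an \emph{initial} transition rather than through a $\tau^{*}$ prefix, after which the residuals need only be related by the non-rooted branching relation $\approx_{pbp}$; the probabilistic transition $\rightsquigarrow$ and the race-condition side conditions $(e_1 \% e_2)$ must be tracked through these first moves. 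I would handle this by building, for each operator, the candidate relation generated by placing related arguments into the operator context, and then discharging the rooted clauses at the first step and the ordinary branching clauses thereafter, invoking the already established congruence of the strong equivalences to close the residual obligations. For $\approx_{prbhhp}$ I would additionally check that the constructed posetal relation is downward closed, which follows because the component relations are and because each parallel, conflict-elimination and unless rule respects the pointwise order on configurations; the remaining operators $\Theta$, $\triangleleft$, $\partial_H$ and the guards are comparatively straightforward and I would omit their routine verifications.
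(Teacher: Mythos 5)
Your proposal follows essentially the same route as the paper's own (very terse) proof: both rest on (i) the fact that the strong probabilistic truly concurrent equivalences are congruences for all operators of $APPTC_G$ and imply the corresponding rooted branching equivalences, (ii) the normalization of right-hand sides of guarded linear recursive specifications via the axioms and substitution of recursion variables, and (iii) a separate check for the extension of $\mathbb{E}$ and $G$ by $\tau$, which the paper simply declares "we omit it." Your additional per-operator candidate-relation construction for the parallel operators and the downward-closure check for the hhp case merely fill in the details the paper leaves out, so the approach is the same, only more explicit.
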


\begin{proof}
It follows the following three facts:
\begin{enumerate}
  \item in a guarded linear recursive specification, right-hand sides of its recursive equations can be adapted to the form by applications of the axioms in $APPTC_G$ and replacing
  recursion variables by the right-hand sides of their recursive equations;
  \item probabilistic truly concurrent bisimulation equivalences $\sim_{pp}$, $\sim_{ps}$, $\sim_{php}$ and $\sim_{phhp}$ are all congruences with respect to all operators of
  $APPTC_G$, while probabilistic truly concurrent bisimulation equivalences $\sim_{pp}$, $\sim_{ps}$, $\sim_{php}$ and $\sim_{phhp}$ imply the corresponding probabilistic rooted
  branching truly concurrent bisimulations $\approx_{prbp}$, $\approx_{prbs}$, $\approx_{prbhp}$ and $\approx_{prbhhp}$, so probabilistic rooted branching truly concurrent
  bisimulations $\approx_{prbp}$, $\approx_{prbs}$, $\approx_{prbhp}$ and $\approx_{prbhhp}$ are all congruences with respect to all operators of $APPTC_G$;
  \item While $\mathbb{E}$ is extended to $\mathbb{E}\cup\{\tau\}$, and $G$ is extended to $G\cup\{\tau\}$, it can be proved that probabilistic rooted branching truly concurrent
  bisimulations $\approx_{prbp}$, $\approx_{prbs}$, $\approx_{prbhp}$ and $\approx_{prbhhp}$ are all congruences with respect to all operators of $APPTC_G$, we omit it.
\end{enumerate}
\end{proof}

We design the axioms for the silent step $\tau$ in Table \ref{AxiomsForTauG}.

\begin{center}
\begin{table}
  \begin{tabular}{@{}ll@{}}
  \hline No. &Axiom\\
  $B1$ & $(y=y+y,z=z+z)\quad x\cdot((y+\tau\cdot(y+z))\boxplus_{\pi}w)=x\cdot((y+z)\boxplus_{\pi}w)$\\
  $B2$ & $(y=y+y,z=z+z)\quad x\leftmerge((y+\tau\leftmerge(y+z))\boxplus_{\pi}w)=x\leftmerge((y+z)\boxplus_{\pi}w)$\\
\end{tabular}
\caption{Axioms of silent step}
\label{AxiomsForTauG}
\end{table}
\end{center}

\begin{theorem}[Elimination theorem of $APPTC_G$ with silent step and guarded linear recursion]\label{ETTauG}
Each process term in $APPTC_G$ with silent step and guarded linear recursion is equal to a process term $\langle X_1|E\rangle$ with $E$ a guarded linear recursive specification.
\end{theorem}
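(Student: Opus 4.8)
The plan is to mirror the two-part strategy that established the elimination theorems for $APPTC$ with silent step and guarded linear recursion (Theorem \ref{ETTau}) and for $APPTC_G$ with linear recursion (Theorem \ref{ETRecursionG}): first rewrite every closed term into a standard linear equational form by structural induction, and then invoke $RSP$ to fold that form into a single recursion constant $\langle X_1|E\rangle$. First I would apply structural induction with respect to term size to an arbitrary process term $t_1$ in $APPTC_G$ with silent step and guarded linear recursion. For the composite operators $\parallel$, $\leftmerge$, $\mid$, $\Theta$, $\triangleleft$ and $\partial_H$ I would appeal to the elimination theorem of $APPTC_G$ (Theorem \ref{ETAPPTCG}) to reduce subterms to basic form, and for the new constants I would use the silent-step axioms $B1$, $B2$ together with the guard axioms (in particular $G8$, $G9$, $G24$, which collapse always-true and unsatisfiable guard prefixes) to normalize guard prefixes $\phi\cdot t$.

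The second step is to show that the induction yields, for the process generated by $t_1$, a finite system of equations
$$t_i=((a_{1i11}\leftmerge\cdots\leftmerge a_{1i1i_1})t_{i1}+\cdots)\boxplus_{\pi_1}\cdots\boxplus_{\pi_{m-1}}((a_{mi11}\leftmerge\cdots\leftmerge a_{mi1i_1})t_{i1}+\cdots)$$
for $i\in\{1,\cdots,n\}$, exactly fitting the shape of Definition \ref{GLRSG}, where now the prefixed atoms range over $\mathbb{E}\cup\{\tau\}$ and guards have been propagated inward by the $G$- and $PG$-axioms. I would then define the guarded linear recursive specification $E$ consisting of the equations $X_i=(\cdots)$ obtained by replacing each $t_{ij}$ with the corresponding recursion variable $X_{ij}$. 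Substituting $X_i\mapsto t_i$ exhibits $\widetilde{t}$ as a solution of $E$, so $RSP$ from Table \ref{PRDPRSP} gives $t_1=\langle X_1|E\rangle$, which is the desired conclusion once $E$ is shown to be guarded and linear.

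The main obstacle I expect is precisely the guardedness bookkeeping in the presence of both the silent step and the guards. Unlike atomic events, a guard $\phi$ terminates through a state-test (an $\epsilon$-style) transition and hence does not by itself consume action-depth, so a careless construction could introduce an $\epsilon$-loop in $E$; likewise the freshly introduced $\tau$ prefixes threaten $\tau$-loops. The careful point is that Definition \ref{GLRSG} rules out \emph{both} infinite $\tau$-transition sequences and infinite $\epsilon$-transition sequences, and that the term we start from lies in the guarded fragment, so these loop-freeness properties are inherited by the constructed $E$; the guard axioms $G8$, $G9$ and $G24$ are what let me discard vacuous guard prefixes that would otherwise manufacture spurious $\epsilon$-cycles. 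Establishing that this inheritance holds — that no new unguarded cycle is created when guards are absorbed into the linear summands — is the step that requires genuine care, whereas the remaining rewriting and the final $RSP$ appeal are routine and parallel the earlier proofs.
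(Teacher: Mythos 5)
Your proposal follows essentially the same route as the paper's proof: structural induction with respect to term size to express the generated process as a system of linear equations of the required shape, then replacing the $t_{ij}$ by recursion variables $X_{ij}$ and invoking $RSP$ to conclude $t_1=\langle X_1|E\rangle$. Your additional attention to the guardedness bookkeeping (avoiding $\epsilon$- and $\tau$-loops via $G8$, $G9$, $G24$) is a point the paper's own proof passes over in silence, but it refines rather than alters the argument.
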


\begin{proof}
By applying structural induction with respect to term size, each process term $t_1$ in $APPTC_G$ with silent step and guarded linear recursion generates a process can be expressed in the
form of equations

$t_i=((a_{1i11}\leftmerge\cdots\leftmerge a_{1i1i_1})t_{i1}+\cdots+(a_{1ik_i1}\leftmerge\cdots\leftmerge a_{1ik_ii_k})t_{ik_i}+(b_{1i11}\leftmerge\cdots\leftmerge b_{1i1i_1})+\cdots+
(b_{1il_i1}\leftmerge\cdots\leftmerge b_{1il_ii_l}))\boxplus_{\pi_1}\cdots\boxplus_{\pi_{m-1}}((a_{mi11}\leftmerge\cdots\leftmerge a_{mi1i_1})t_{i1}+\cdots+(a_{mik_i1}\leftmerge\cdots
\leftmerge a_{mik_ii_k})t_{ik_i}+(b_{mi11}\leftmerge\cdots\leftmerge b_{mi1i_1})+\cdots+(b_{mil_i1}\leftmerge\cdots\leftmerge b_{m1il_ii_l}))$

for $i\in\{1,\cdots,n\}$. Let the linear recursive specification $E$ consist of the recursive equations

$X_i=((a_{1i11}\leftmerge\cdots\leftmerge a_{1i1i_1})X_{i1}+\cdots+(a_{1ik_i1}\leftmerge\cdots\leftmerge a_{1ik_ii_k})X_{ik_i}+(b_{1i11}\leftmerge\cdots\leftmerge b_{1i1i_1})+\cdots+
(b_{1il_i1}\leftmerge\cdots\leftmerge b_{1il_ii_l}))\boxplus_{\pi_1}\cdots\boxplus_{\pi_{m-1}}((a_{mi11}\leftmerge\cdots\leftmerge a_{mi1i_1})X_{i1}+\cdots+(a_{mik_i1}\leftmerge\cdots
\leftmerge a_{mik_ii_k})X_{ik_i}+(b_{mi11}\leftmerge\cdots\leftmerge b_{mi1i_1})+\cdots+(b_{mil_i1}\leftmerge\cdots\leftmerge b_{mil_ii_l}))$

for $i\in\{1,\cdots,n\}$. Replacing $X_i$ by $t_i$ for $i\in\{1,\cdots,n\}$ is a solution for $E$, $RSP$ yields $t_1=\langle X_1|E\rangle$.
\end{proof}

\begin{theorem}[Soundness of $APPTC_G$ with silent step and guarded linear recursion]\label{SAPPTC_GTAUG}
Let $x$ and $y$ be $APPTC_G$ with silent step and guarded linear recursion terms. If $APPTC_G$ with silent step and guarded linear recursion $\vdash x=y$, then

(1) $x\approx_{prbs} y$.

(2) $x\approx_{prbp} y$.

(3) $x\approx_{prbhp} y$.

(4) $x\approx_{prbhhp} y$.
\end{theorem}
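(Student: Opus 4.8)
The plan is to follow the same two-step template used for every soundness result in this paper. First I would invoke the preceding congruence theorem, which asserts that $\approx_{prbp}$, $\approx_{prbs}$, $\approx_{prbhp}$ and $\approx_{prbhhp}$ are all congruences (and they are manifestly equivalence relations) with respect to $APPTC_G$ with silent step and guarded linear recursion. Given that, soundness of a derivation $APPTC_G\textrm{ with silent step}\vdash x=y$ reduces to checking that each individual axiom is sound modulo each of the four probabilistic rooted branching equivalences. For all the axioms inherited from $APPTC_G$ (Tables such as the ones establishing $BAPTC_G$ and $APPTC_G$), I would observe that they were already shown sound modulo the strong probabilistic equivalences $\sim_{pp},\sim_{ps},\sim_{php},\sim_{phhp}$ in Theorem \ref{SAPPTCG}, and that each strong equivalence implies the corresponding probabilistic rooted branching equivalence; hence those axioms remain sound here without further work. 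This isolates the genuinely new content to the two silent-step laws $B1$ and $B2$ of Table \ref{AxiomsForTauG}.

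For $B1$ and $B2$ the plan is to exhibit explicit rooted branching bisimulations. Consider $B1$: under the side conditions $y=y+y$ and $z=z+z$, I would relate $x\cdot((y+\tau\cdot(y+z))\boxplus_{\pi}w)$ with $x\cdot((y+z)\boxplus_{\pi}w)$ by a relation $R$ containing this pair, the identity on all residuals reachable by firing the leading $x$-factor, and crucially the pair relating the post-$x$ states $(y+\tau\cdot(y+z))\boxplus_{\pi}w$ and $(y+z)\boxplus_{\pi}w$ together with the pair $\langle y+\tau\cdot(y+z),\;y+z\rangle$. The core of the argument is that the transition $y+\tau\cdot(y+z)\xrightarrow{\tau}y+z$ is an inert $\tau$ in the branching sense: because $y=y+y$, every action of $y$ is already available before the $\tau$, and after the $\tau$ the state $y+z$ still offers all of $y$'s and $z$'s behaviour, so the $\tau$ can be matched by the idle option of clause (1) of Definition \ref{BPSBG}. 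Rootedness (Definition \ref{RBPSBG}) is satisfied because the first observable transition is forced through the common prefix $x$ and is matched exactly on both sides, while after $x$ the two residuals are branching bisimilar. The treatment of $B2$ is identical with $\cdot$ replaced by $\leftmerge$ and the left-merge transition rules used in place of the sequencing rules.

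The remaining routine verifications I would then carry out are: checking the probabilistic clauses, namely that the $\rightsquigarrow$ distributions agree (both sides send probability $\pi$ to the $\tau$-branch and $1-\pi$ to $w$, so the $\mu$-condition $\mu(C_1,C)=\mu(C_2,C)$ over equivalence classes holds), and that the termination predicate $\downarrow$ is respected; and extending the step-level relation to the pomset, hp- and hhp-flavours, where for $\approx_{prbhp}$ one carries along the configuration isomorphism $f$ and for $\approx_{prbhhp}$ one checks downward closure of the relation. I expect the main obstacle to be the careful bookkeeping of the branching conditions around the inert $\tau$ in the presence of probabilistic choice: one must verify that the sequences $\rightsquigarrow^*\xrightarrow{\tau^*}$ permitted before a matching move in Definition \ref{BPSBG} interact correctly with the $\boxplus_{\pi}$ branch, so that the inert $\tau$ neither creates nor destroys probabilistic branching structure, which is exactly what the hypotheses $y=y+y$ and $z=z+z$ are there to guarantee. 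Everything else is mechanical, so as in the surrounding results I would state these checks as trivial and omit the detailed calculations.
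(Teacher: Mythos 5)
Your proposal follows exactly the paper's route: the paper's proof likewise observes that the four probabilistic rooted branching equivalences are congruences, reduces soundness to checking each axiom of Table \ref{AxiomsForPTau} (the silent-step laws $B1$ and $B2$), and then declares those checks trivial and omits them. Your sketch of the explicit rooted branching bisimulations witnessing $B1$ and $B2$ simply fills in the detail the paper leaves to the reader, so the two arguments coincide.
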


\begin{proof}
(1) Since probabilistic rooted branching step bisimulation $\approx_{prbs}$ is both an equivalent and a congruent relation with respect to $APPTC_G$ with silent step and guarded
linear recursion, we only need to check if each axiom in Table \ref{AxiomsForTauG} is sound modulo probabilistic rooted branching step bisimulation $\approx_{prbs}$. We leave them as
exercises to the readers.

(2) Since probabilistic rooted branching pomset bisimulation $\approx_{prbp}$ is both an equivalent and a congruent relation with respect to $APPTC_G$ with silent step and guarded
linear recursion, we only need to check if each axiom in Table \ref{AxiomsForTauG} is sound modulo probabilistic rooted branching pomset bisimulation $\approx_{prbp}$. We leave them
as exercises to the readers.

(3) Since probabilistic rooted branching hp-bisimulation $\approx_{prbhp}$ is both an equivalent and a congruent relation with respect to $APPTC_G$ with silent step and guarded linear
recursion, we only need to check if each axiom in Table \ref{AxiomsForTauG} is sound modulo probabilistic rooted branching hp-bisimulation $\approx_{prbhp}$. We leave them as exercises
to the readers.

(4) Since probabilistic rooted branching hhp-bisimulation $\approx_{prbhhp}$ is both an equivalent and a congruent relation with respect to $APPTC_G$ with silent step and guarded linear
recursion, we only need to check if each axiom in Table \ref{AxiomsForTauG} is sound modulo probabilistic rooted branching hhp-bisimulation $\approx_{prbhhp}$. We leave them as exercises
to the readers.
\end{proof}

\begin{theorem}[Completeness of $APPTC_G$ with silent step and guarded linear recursion]\label{CAPPTC_GTAUG}
Let $p$ and $q$ be closed $APPTC_G$ with silent step and guarded linear recursion terms, then,

(1) if $p\approx_{prbs} q$ then $p=q$.

(2) if $p\approx_{prbp} q$ then $p=q$.

(3) if $p\approx_{prbhp} q$ then $p=q$.

(3) if $p\approx_{prbhhp} q$ then $p=q$.
\end{theorem}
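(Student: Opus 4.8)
The plan is to mirror the completeness proof for $APPTC$ with silent step and guarded linear recursion (Theorem \ref{CAPPTCTAU}), adapting the product-specification construction so that it accommodates the guards of $APPTC_G$. First I would invoke the elimination theorem for $APPTC_G$ with silent step and guarded linear recursion (Theorem \ref{ETTauG}) to replace $p$ and $q$ by provably equal terms $\langle X_1|E_1\rangle$ and $\langle Y_1|E_2\rangle$ with $E_1$, $E_2$ guarded linear recursive specifications. This reduces each of the four claims to showing that $\langle X_1|E_1\rangle \approx \langle Y_1|E_2\rangle$ entails $\langle X_1|E_1\rangle = \langle Y_1|E_2\rangle$, where $\approx$ denotes the appropriate probabilistic rooted branching equivalence.

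As a normalization step I would first remove any recursive equation of the form $W = \tau + \cdots + \tau$ (with $W\nequiv X_1$) from $E_1$ and $E_2$, replacing the corresponding summands $aW$ by $a$, using $RDP$, $A3$, and the silent-step axioms $B1$, $B2$ of Table \ref{AxiomsForTauG}; this is justified exactly as in Theorem \ref{CAPPTCTAU}. I would then build the product specification $E$ with variables $Z_{XY}$ and bodies $t_{XY}$, where $t_{XY}$ collects: matching left-merge-prefixed summands $(a_1\leftmerge\cdots\leftmerge a_m)Z_{X'Y'}$ whenever $t_X$ and $t_Y$ carry the corresponding summands with $\langle X'|E_1\rangle \approx \langle Y'|E_2\rangle$; matching terminating summands $b_1\leftmerge\cdots\leftmerge b_n$; the two kinds of one-sided summands $\tau Z_{X'Y}$ and $\tau Z_{XY'}$ recording the silent moves of the branching structure; and, as the new ingredient, matching guard-prefixed summands $\phi\cdot Z_{X'Y'}$ and bare guards $\phi$ arising from guard summands of $t_X$ and $t_Y$. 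Guardedness of $E_1$, $E_2$ transfers to $E$. I would then define the auxiliary terms $u_{XY}$ and $s_{XY}$ precisely as in the $APPTC$ argument, verify the identity $(a_1\leftmerge\cdots\leftmerge a_m)s_{XY} = (a_1\leftmerge\cdots\leftmerge a_m)\langle X|E_1\rangle$, and conclude by $RSP$ that $\langle X_1|E_1\rangle = \langle Z_{X_1Y_1}|E\rangle = \langle Y_1|E_2\rangle$. For the three remaining equivalences I would run the identical construction, adding the routine checks that the posetal and downward-closed structure demanded by $\approx_{prbhp}$ and $\approx_{prbhhp}$ is preserved, exactly as in cases (2)--(4) of Theorem \ref{CAPPTCTAU}.

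The hard part will be the interaction between guards and the branching treatment of silent moves. A guard $\phi$ fires an unlabelled transition $\langle\breve{\phi},s\rangle\rightarrow\langle\surd,s\rangle$ only when $test(\phi,s)$ holds (Table \ref{SETRForBAPTCG}), so a guard summand behaves like a state-dependent $\epsilon$-step rather than an ordinary action. I must therefore ensure that the match between guard summands of $t_X$ and $t_Y$ is uniform across all data states $s$ for which the guard holds, so that $Z_{XY}=t_{XY}$ is a genuine solution in the data-state semantics and the data environment stays sufficiently deterministic. This is where the guard axioms $G1$--$G11$ and $PG1$--$PG3$ of Table \ref{AxiomsForBAPTCG}, together with $G21$--$G24$ of Table \ref{AxiomsForAPPTCG2}, carry the real weight, and where the argument genuinely departs from the pure $APPTC$ proof rather than being a cosmetic relabelling of it.
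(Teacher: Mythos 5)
Your proposal matches the paper's proof essentially step for step: elimination to $\langle X_1|E\rangle$ form via Theorem \ref{ETTauG}, removal of the recursive equations $W=\tau+\cdots+\tau$ by $RDP$, $A3$ and $B1$, the product specification $Z_{XY}=t_{XY}$ with the matching action summands, matching terminating summands and the two one-sided $\tau$-summands, the auxiliary terms $u_{XY}$ and $s_{XY}$, and the conclusion by $RSP$, with cases (2)--(4) handled analogously. The one place you go beyond the paper is in explicitly adding guard-prefixed summands to $t_{XY}$ and worrying about their state-dependent firing; the paper's own proof reuses the $APPTC$ construction verbatim and never mentions guards, so your extra care there is a refinement of the same argument rather than a divergence from it.
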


\begin{proof}
Firstly, by the elimination theorem of $APPTC_G$ with silent step and guarded linear recursion (see Theorem \ref{ETTauG}), we know that each process term in $APPTC_G$ with silent step
and guarded linear recursion is equal to a process term $\langle X_1|E\rangle$ with $E$ a guarded linear recursive specification.

It remains to prove the following cases.

(1) If $\langle X_1|E_1\rangle \approx_{prbs} \langle Y_1|E_2\rangle$ for guarded linear recursive specification $E_1$ and $E_2$, then $\langle X_1|E_1\rangle = \langle Y_1|E_2\rangle$.

Firstly, the recursive equation $W=\tau+\cdots+\tau$ with $W\nequiv X_1$ in $E_1$ and $E_2$, can be removed, and the corresponding summands $aW$ are replaced by $a$, to get $E_1'$ and
$E_2'$, by use of the axioms $RDP$, $A3$ and $B1$, and $\langle X|E_1\rangle = \langle X|E_1'\rangle$, $\langle Y|E_2\rangle = \langle Y|E_2'\rangle$.

Let $E_1$ consists of recursive equations $X=t_X$ for $X\in \mathcal{X}$ and $E_2$
consists of recursion equations $Y=t_Y$ for $Y\in\mathcal{Y}$, and are not the form $\tau+\cdots+\tau$. Let the guarded linear recursive specification $E$ consists of recursion
equations $Z_{XY}=t_{XY}$, and $\langle X|E_1\rangle\approx_{rbs}\langle Y|E_2\rangle$, and $t_{XY}$ consists of the following summands:

\begin{enumerate}
  \item $t_{XY}$ contains a summand $(a_1\leftmerge\cdots\leftmerge a_m)Z_{X'Y'}$ iff $t_X$ contains the summand $(a_1\leftmerge\cdots\leftmerge a_m)X'$ and $t_Y$ contains the
  summand $(a_1\leftmerge\cdots\leftmerge a_m)Y'$ such that $\langle X'|E_1\rangle\approx_{rbs}\langle Y'|E_2\rangle$;
  \item $t_{XY}$ contains a summand $b_1\leftmerge\cdots\leftmerge b_n$ iff $t_X$ contains the summand $b_1\leftmerge\cdots\leftmerge b_n$ and $t_Y$ contains the summand
  $b_1\leftmerge\cdots\leftmerge b_n$;
  \item $t_{XY}$ contains a summand $\tau Z_{X'Y}$ iff $XY\nequiv X_1Y_1$, $t_X$ contains the summand $\tau X'$, and $\langle X'|E_1\rangle\approx_{prbs}\langle Y|E_2\rangle$;
  \item $t_{XY}$ contains a summand $\tau Z_{XY'}$ iff $XY\nequiv X_1Y_1$, $t_Y$ contains the summand $\tau Y'$, and $\langle X|E_1\rangle\approx_{prbs}\langle Y'|E_2\rangle$.
\end{enumerate}

Since $E_1$ and $E_2$ are guarded, $E$ is guarded. Constructing the process term $u_{XY}$ consist of the following summands:

\begin{enumerate}
  \item $u_{XY}$ contains a summand $(a_1\leftmerge\cdots\leftmerge a_m)\langle X'|E_1\rangle$ iff $t_X$ contains the summand $(a_1\leftmerge\cdots\leftmerge a_m)X'$ and $t_Y$
  contains the summand $(a_1\leftmerge\cdots\leftmerge a_m)Y'$ such that $\langle X'|E_1\rangle\approx_{prbs}\langle Y'|E_2\rangle$;
  \item $u_{XY}$ contains a summand $b_1\leftmerge\cdots\leftmerge b_n$ iff $t_X$ contains the summand $b_1\leftmerge\cdots\leftmerge b_n$ and $t_Y$ contains the summand
  $b_1\leftmerge\cdots\leftmerge b_n$;
  \item $u_{XY}$ contains a summand $\tau \langle X'|E_1\rangle$ iff $XY\nequiv X_1Y_1$, $t_X$ contains the summand $\tau X'$, and
  $\langle X'|E_1\rangle\approx_{prbs}\langle Y|E_2\rangle$.
\end{enumerate}

Let the process term $s_{XY}$ be defined as follows:

\begin{enumerate}
  \item $s_{XY}\triangleq\tau\langle X|E_1\rangle + u_{XY}$ iff $XY\nequiv X_1Y_1$, $t_Y$ contains the summand $\tau Y'$, and $\langle X|E_1\rangle\approx_{prbs}\langle Y'|E_2\rangle$;
  \item $s_{XY}\triangleq\langle X|E_1\rangle$, otherwise.
\end{enumerate}

So, $\langle X|E_1\rangle=\langle X|E_1\rangle+u_{XY}$, and
$(a_1\leftmerge\cdots\leftmerge a_m)(\tau\langle X|E_1\rangle+u_{XY})=(a_1\leftmerge\cdots\leftmerge a_m)((\tau\langle X|E_1\rangle+u_{XY})+u_{XY})=(a_1\leftmerge\cdots\leftmerge a_m)(\langle X|E_1\rangle+u_{XY})=(a_1\leftmerge\cdots\leftmerge a_m)\langle X|E_1\rangle$,
hence, $(a_1\leftmerge\cdots\leftmerge a_m)s_{XY}=(a_1\leftmerge\cdots\leftmerge a_m)\langle X|E_1\rangle$.

Let $\sigma$ map recursion variable $X$ in $E_1$ to $\langle X|E_1\rangle$, and let $\pi$ map recursion variable $Z_{XY}$ in $E$ to $s_{XY}$. It is sufficient to prove
$s_{XY}=\pi(t_{XY})$ for recursion variables $Z_{XY}$ in $E$. Either $XY\equiv X_1Y_1$ or $XY\nequiv X_1Y_1$, we all can get $s_{XY}=\pi(t_{XY})$. So,
$s_{XY}=\langle Z_{XY}|E\rangle$ for recursive variables $Z_{XY}$ in $E$ is a solution for $E$. Then by $RSP$, particularly,
$\langle X_1|E_1\rangle=\langle Z_{X_1Y_1}|E\rangle$. Similarly, we can obtain $\langle Y_1|E_2\rangle=\langle Z_{X_1Y_1}|E\rangle$. Finally,
$\langle X_1|E_1\rangle=\langle Z_{X_1Y_1}|E\rangle=\langle Y_1|E_2\rangle$, as desired.

(2) If $\langle X_1|E_1\rangle \approx_{prbp} \langle Y_1|E_2\rangle$ for guarded linear recursive specification $E_1$ and $E_2$, then $\langle X_1|E_1\rangle = \langle Y_1|E_2\rangle$.

It can be proven similarly to (1), we omit it.

(3) If $\langle X_1|E_1\rangle \approx_{prbhb} \langle Y_1|E_2\rangle$ for guarded linear recursive specification $E_1$ and $E_2$, then $\langle X_1|E_1\rangle = \langle Y_1|E_2\rangle$.

It can be proven similarly to (1), we omit it.

(4) If $\langle X_1|E_1\rangle \approx_{prbhhb} \langle Y_1|E_2\rangle$ for guarded linear recursive specification $E_1$ and $E_2$, then $\langle X_1|E_1\rangle = \langle Y_1|E_2\rangle$.

It can be proven similarly to (1), we omit it.
\end{proof}

The unary abstraction operator $\tau_I$ ($I\subseteq \mathbb{E}\cup G_{at}$) renames all atomic events or atomic guards in $I$ into $\tau$. $APPTC_G$ with silent step and abstraction
operator is called $APPTC_{G_{\tau}}$. The transition rules of operator $\tau_I$ are shown in Table \ref{TRForAbstractionG}.

\begin{center}
    \begin{table}
        $$\frac{\langle x,s\rangle\rightsquigarrow \langle x',s\rangle}{\langle \tau_I(x),s\rangle\rightsquigarrow\langle\tau_I(x'),s\rangle}$$
        $$\frac{\langle x,s\rangle\xrightarrow{e}\langle\surd,s'\rangle}{\langle\tau_I(x),s\rangle\xrightarrow{e}\langle\surd,s'\rangle}\quad e\notin I
        \quad\quad\frac{\langle x,s\rangle\xrightarrow{e}\langle x',s'\rangle}{\langle\tau_I(x),s\rangle\xrightarrow{e}\langle\tau_I(x'),s'\rangle}\quad e\notin I$$

        $$\frac{\langle x,s\rangle\xrightarrow{e}\langle\surd,s'\rangle}{\langle\tau_I(x),s\rangle\xrightarrow{\tau}\langle\surd,\tau(s)\rangle}\quad e\in I
        \quad\quad\frac{\langle x,s\rangle\xrightarrow{e}\langle x',s'\rangle}{\langle\tau_I(x),s\rangle\xrightarrow{\tau}\langle\tau_I(x'),\tau(s)\rangle}\quad e\in I$$
        \caption{Transition rule of the abstraction operator}
        \label{TRForAbstractionG}
    \end{table}
\end{center}

\begin{theorem}[Conservitivity of $APPTC_{G_{\tau}}$ with guarded linear recursion]
$APPTC_{G_{\tau}}$ with guarded linear recursion is a conservative extension of $APPTC_G$ with silent step and guarded linear recursion.
\end{theorem}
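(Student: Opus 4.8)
The plan is to invoke the general conservative-extension theorem (Theorem \ref{TCE}), exactly as in the three preceding conservativity results of this section. I would take $T_0$ to be the TSS of $APPTC_G$ with silent step and guarded linear recursion, and $T_1$ to be the transition rules for the abstraction operator $\tau_I$ collected in Table \ref{TRForAbstractionG}. The desired conclusion, that $T_0\oplus T_1$ is a conservative extension of $T_0$, then reduces to checking the hypotheses of Theorem \ref{TCE}.

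First I would verify that $T_0$ is source-dependent. This is inherited step by step: the action and probabilistic transition rules of $APPTC_G$ (Tables \ref{TRForAPPTCG1}, \ref{TRForAPPTCG} and \ref{TRForAPPTCG2}) are source-dependent, as already recorded in the generalization theorem for $APPTC_G$; adding guarded recursion preserves this, because the rules in Table \ref{TRForGRG} have premises built from the bodies $t_i(\langle X_1|E\rangle,\cdots,\langle X_n|E\rangle)$ whose variables are determined by the source $\langle X_i|E\rangle$; and the silent-step rule in Table \ref{TRForTauG} is trivially source-dependent. Second, I would check the freshness clause: every rule in Table \ref{TRForAbstractionG} has a source of the shape $\langle\tau_I(x),s\rangle$, which contains the function symbol $\tau_I$ that does not occur in the signature of $APPTC_G$ with silent step. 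Hence the source of each rule in $T_1$ is fresh, and the second condition of Theorem \ref{TCE} is met without any need to inspect premises.

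The main obstacle is the ``positive after reduction'' requirement of Theorem \ref{TCE}, since $T_0$ is not purely positive: the rules for the unless operator $\triangleleft$ carry negative premises of the form $\langle y,s\rangle\nrightarrow^{e_2}$. I would dispatch this the way the earlier conservativity theorems tacitly do, by appealing to the stratification that makes these systems positive after reduction, and I would note that the data-state component $s$ and the probabilistic transition relation $\rightsquigarrow$ do not interfere, because the added $\tau_I$ rules introduce neither new negative premises nor new labels or predicates on $T_0$-terms. Once positivity after reduction is granted, the two remaining hypotheses are precisely the routine source-dependence and freshness checks above, so the theorem follows. I would close by observing that each $\tau_I$ rule is itself source-dependent, so the extended system $T_0\oplus T_1$ remains source-dependent, which is exactly what the subsequent congruence and elimination results for $APPTC_{G_{\tau}}$ will require.
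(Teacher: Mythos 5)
Your proposal is correct and follows essentially the same route as the paper: the paper's proof is a one-line appeal to the source-dependence of the rules of $APPTC_G$ with silent step and guarded linear recursion together with the freshness of $\tau_I$ in the sources of the rules of Table \ref{TRForAbstractionG}, i.e.\ exactly the two hypotheses of Theorem \ref{TCE} that you verify. Your additional discussion of the ``positive after reduction'' requirement only makes explicit a point the paper leaves tacit, so there is no substantive divergence.
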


\begin{proof}
Since the transition rules of $APPTC_G$ with silent step and guarded linear recursion are source-dependent, and the transition rules for abstraction operator in Table
\ref{TRForAbstractionG} contain only a fresh operator $\tau_I$ in their source, so the transition rules of $APPTC_{G_{\tau}}$ with guarded linear recursion is a conservative extension
of those of $APPTC_G$ with silent step and guarded linear recursion.
\end{proof}

\begin{theorem}[Congruence theorem of $APPTC_{G_{\tau}}$ with guarded linear recursion]
Probabilistic rooted branching truly concurrent bisimulation equivalences $\approx_{prbp}$, $\approx_{prbs}$, $\approx_{prbhp}$ and $\approx_{prbhhp}$ are all congruences with respect
to $APPTC_{G_{\tau}}$ with guarded linear recursion.
\end{theorem}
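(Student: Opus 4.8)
The plan is to reduce the statement to a single new proof obligation. Every operator already present in $APPTC_G$ with silent step and guarded linear recursion, namely $\cdot$, $+$, $\boxplus_{\pi}$, $\parallel$, $\leftmerge$, $\mid$, $\Theta$, $\triangleleft$ and $\partial_H$, together with the silent constant $\tau$ and the guarded recursion construct $\langle X_i|E\rangle$, has already been shown to respect $\approx_{prbp}$, $\approx_{prbs}$, $\approx_{prbhp}$ and $\approx_{prbhhp}$ in the congruence theorem for $APPTC_G$ with silent step and guarded linear recursion. Since $APPTC_{G_{\tau}}$ extends that system by exactly one operator, the abstraction operator $\tau_I$, it suffices to prove that each of the four probabilistic rooted branching equivalences is preserved by $\tau_I$: assuming $x\approx_{prbp} x'$ (respectively $\approx_{prbs}$, $\approx_{prbhp}$, $\approx_{prbhhp}$), I must establish $\tau_I(x)\approx_{prbp}\tau_I(x')$ (respectively the other three).

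First I would fix a witnessing relation $R$ for $x\approx_{prbp} x'$ and form the candidate relation $R'=\{(\langle\tau_I(C_1),s\rangle,\langle\tau_I(C_2),s\rangle):(\langle C_1,s\rangle,\langle C_2,s\rangle)\in R\}$, with the residual branching bisimilarity $\approx_{pbp}$ of Definition \ref{BPSBG} carried along on the continuations. I would then verify the four clauses of Definition \ref{RBPSBG} for $R'$. By the transition rules in Table \ref{TRForAbstractionG}, each action transition out of $\langle\tau_I(C_1),s\rangle$ arises from a transition of $\langle C_1,s\rangle$ labelled either by the same event $e$ (when $e\notin I$) or by some $e''\in I$ relabelled to $\tau$ (when the label is $\tau$); I match it with the corresponding transition of $\langle C_2,s\rangle$ supplied by $R$ and re-apply $\tau_I$ to the residuals. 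The probabilistic transitions $\xrsquigarrow{\pi}$ are matched directly because $\tau_I$ commutes with the probabilistic rule, the termination predicate $\downarrow$ is plainly preserved, and the PDF condition $\mu(C_1,C)=\mu(C_2,C)$ is inherited from the property $\mu(\tau_I(x),\tau_I(x'))=\mu(x,x')$ of the probability distribution function for $\tau_I$.

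For the step, hp- and hhp-variants I would reuse the same candidate relation, replacing pomset transitions by step transitions, and in the history-preserving cases carrying the posetal map $f$ augmented by $e_1\mapsto e_2$ (with the understanding that an event of $I$ is mapped through its $\tau$-image). In the hhp-case I additionally check that downward closure of $R$ is inherited by $R'$; this holds because $\tau_I$ acts uniformly on configurations and preserves the pointwise inclusion order on posetal triples. These three verifications run entirely parallel to the pomset case.

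The hard part will be the rootedness bookkeeping combined with the relabelling of events inside the history-preserving maps: I must ensure that the first transition out of $\tau_I(x)$ is matched by a genuine single transition out of $\tau_I(x')$ before descending to branching bisimilarity $\approx_{pbp}$ on the residuals, and that when an $I$-event is relabelled to $\tau$ the posetal isomorphism $f[e_1\mapsto e_2]$ remains a well-defined order isomorphism on the abstracted configurations, with the mapped $\tau$'s on the two sides corresponding so that downward closure is not broken. Handling $\tau$-labelled events consistently in the weakly posetal products is the only genuinely delicate point; everything else mirrors the already-established congruence results and is routine bookkeeping.
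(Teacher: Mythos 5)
Your proposal takes essentially the same route as the paper: the paper likewise reduces the statement to showing that the single new operator $\tau_I$ preserves $\approx_{prbp}$, $\approx_{prbs}$, $\approx_{prbhp}$ and $\approx_{prbhhp}$ (everything else being inherited from the congruence theorem for $APPTC_G$ with silent step and guarded linear recursion), and then declares that remaining step trivial and leaves it as an exercise. Your candidate-relation construction for $\tau_I$ merely fills in the details the paper omits, so no substantive difference in approach.
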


\begin{proof}
(1) It is easy to see that probabilistic rooted branching pomset bisimulation is an equivalent relation on $APPTC_{G_{\tau}}$ with guarded linear recursion terms, we only need to
prove that $\approx_{prbp}$ is preserved by the operators $\tau_I$. It is trivial and we leave the proof as an exercise for the readers.

(2) It is easy to see that probabilistic rooted branching step bisimulation is an equivalent relation on $APPTC_{G_{\tau}}$ with guarded linear recursion terms, we only need to
prove that $\approx_{prbs}$ is preserved by the operators $\tau_I$. It is trivial and we leave the proof as an exercise for the readers.

(3) It is easy to see that probabilistic rooted branching hp-bisimulation is an equivalent relation on $APPTC_{G_{\tau}}$ with guarded linear recursion terms, we only need to
prove that $\approx_{prbhp}$ is preserved by the operators $\tau_I$. It is trivial and we leave the proof as an exercise for the readers.

(4) It is easy to see that probabilistic rooted branching hhp-bisimulation is an equivalent relation on $APPTC_{G_{\tau}}$ with guarded linear recursion terms, we only need to
prove that $\approx_{prbhhp}$ is preserved by the operators $\tau_I$. It is trivial and we leave the proof as an exercise for the readers.
\end{proof}

We design the axioms for the abstraction operator $\tau_I$ in Table \ref{AxiomsForAbstractionG}.

\begin{center}
\begin{table}
  \begin{tabular}{@{}ll@{}}
\hline No. &Axiom\\
  $TI1$ & $e\notin I\quad \tau_I(e)=e$\\
  $TI2$ & $e\in I\quad \tau_I(e)=\tau$\\
  $TI3$ & $\tau_I(\delta)=\delta$\\
  $TI4$ & $\tau_I(x+y)=\tau_I(x)+\tau_I(y)$\\
  $PTI1$ & $\tau_I(x\boxplus_{\pi}y)=\tau_I(x)\boxplus_{\pi}\tau_I(y)$\\
  $TI5$ & $\tau_I(x\cdot y)=\tau_I(x)\cdot\tau_I(y)$\\
  $TI6$ & $\tau_I(x\leftmerge y)=\tau_I(x)\leftmerge\tau_I(y)$\\
  $G28$ & $\phi\notin I\quad \tau_I(\phi)=\phi$\\
  $G29$ & $\phi\in I\quad \tau_I(\phi)=\tau$\\
\end{tabular}
\caption{Axioms of abstraction operator}
\label{AxiomsForAbstractionG}
\end{table}
\end{center}

\begin{theorem}[Soundness of $APPTC_{G_{\tau}}$ with guarded linear recursion]\label{SAPPTC_GABSG}
Let $x$ and $y$ be $APPTC_{G_{\tau}}$ with guarded linear recursion terms. If $APPTC_{G_{\tau}}$ with guarded linear recursion $\vdash x=y$, then

(1) $x\approx_{prbs} y$.

(2) $x\approx_{prbp} y$.

(3) $x\approx_{prbhp} y$.

(4) $x\approx_{prbhhp} y$.
\end{theorem}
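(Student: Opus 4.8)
The plan is to follow the same template used for the earlier soundness results in this chapter, for instance Theorem \ref{SAPPTC_GTAUG}, reducing the statement to an axiom-by-axiom check. First I would invoke the congruence theorem of $APPTC_{G_{\tau}}$ with guarded linear recursion, which guarantees that each of $\approx_{prbp}$, $\approx_{prbs}$, $\approx_{prbhp}$ and $\approx_{prbhhp}$ is simultaneously an equivalence relation and a congruence with respect to every operator of $APPTC_{G_{\tau}}$. Because equational reasoning only combines instances of axioms under contexts, congruence lets me conclude that $APPTC_{G_{\tau}}\vdash x=y$ implies the corresponding bisimilarity, provided each individual axiom is sound. All axioms inherited from $APPTC_G$ with silent step and guarded linear recursion have already been shown sound, so the work reduces to the new axioms governing $\tau_I$, namely $TI1$--$TI6$, $PTI1$, $G28$ and $G29$ in Table \ref{AxiomsForAbstractionG}.

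Then, for each of the four equivalences, I would exhibit for every such axiom $s=t$ an explicit probabilistic rooted branching bisimulation relating the closed instances of $s$ and $t$. The base cases $TI1$, $TI2$, $TI3$, $G28$ and $G29$ only concern atomic events, guards and $\delta$; here the transition rules of Table \ref{TRForAbstractionG} immediately give matching single-action and probabilistic transitions, with each $e\in I$ or $\phi\in I$ relabelled to $\tau$ (using $s=\tau(s)$ on external data states), so an identity-like relation augmented with $\surd$ suffices. The distributivity axioms $TI4$, $TI5$, $TI6$ and $PTI1$ push $\tau_I$ through $+$, $\cdot$, $\leftmerge$ and $\boxplus_{\pi}$; for these I would take the relation $\{(\tau_I(p\;\square\;q),\tau_I(p)\;\square\;\tau_I(q))\}$, with $\square$ the relevant operator, closed under the transition structure, and verify the rooted branching clauses together with the probability condition $\mu(C_1,C)=\mu(C_2,C)$ inherited from the PDF definitions, and the termination clause $[\surd]_R=\{\surd\}$.

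For the pomset and step cases this is routine because labels are handled setwise; for the hp- and hhp-cases I would additionally carry along the order isomorphism $f$ and check $(C_1',f[e_1\mapsto e_2],C_2')\in R$, and for hhp I would verify downward closure of the constructed posetal relation. The step I expect to be the main obstacle is confirming the \emph{rooted} branching conditions precisely at the point where an event inside $I$ is abstracted: since $\tau_I$ may turn an initial visible step into a $\tau$-step, I must ensure the first transition is still matched by a genuine, non-collapsed $\tau$-step on both sides so that the rootedness requirement of Definitions \ref{RBPSBG} and \ref{RBHHPBG} is met, rather than being absorbed into a branching $\tau^*$ prefix; the probabilistic transitions $\rightsquigarrow$ preceding each action must be aligned consistently at this root as well. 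Once rootedness is secured, the internal matching devolves to the already-established branching bisimilarities $\approx_{pbp}$, $\approx_{pbs}$, $\approx_{pbhp}$ and $\approx_{pbhhp}$, and the argument closes.
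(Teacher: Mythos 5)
Your proposal follows essentially the same route as the paper: invoke the congruence theorem for $\approx_{prbs}$, $\approx_{prbp}$, $\approx_{prbhp}$ and $\approx_{prbhhp}$ to reduce soundness of derivable equations to soundness of each axiom in Table \ref{AxiomsForAbstractionG}. The paper stops there and leaves the axiom-by-axiom verification as an exercise, whereas you go further by sketching the explicit bisimulation relations and correctly flagging the rootedness condition at abstracted initial actions as the delicate point — a useful elaboration, but not a different proof.
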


\begin{proof}
(1) Since probabilistic rooted branching step bisimulation $\approx_{prbs}$ is both an equivalent and a congruent relation with respect to $APPTC_{G_{\tau}}$ with guarded linear
recursion, we only need to check if each axiom in Table \ref{AxiomsForAbstractionG} is sound modulo probabilistic rooted branching step bisimulation $\approx_{prbs}$. We leave them as
exercises to the readers.

(2) Since probabilistic rooted branching pomset bisimulation $\approx_{prbp}$ is both an equivalent and a congruent relation with respect to $APPTC_{G_{\tau}}$ with guarded linear
recursion, we only need to check if each axiom in Table \ref{AxiomsForAbstractionG} is sound modulo probabilistic rooted branching pomset bisimulation $\approx_{prbp}$. We leave them
as exercises to the readers.

(3) Since probabilistic rooted branching hp-bisimulation $\approx_{prbhp}$ is both an equivalent and a congruent relation with respect to $APPTC_{G_{\tau}}$ with guarded linear
recursion, we only need to check if each axiom in Table \ref{AxiomsForAbstractionG} is sound modulo probabilistic rooted branching hp-bisimulation $\approx_{prbhp}$. We leave them as
exercises to the readers.

(4) Since probabilistic rooted branching hhp-bisimulation $\approx_{prbhhp}$ is both an equivalent and a congruent relation with respect to $APPTC_{G_{\tau}}$ with guarded linear
recursion, we only need to check if each axiom in Table \ref{AxiomsForAbstractionG} is sound modulo probabilistic rooted branching hhp-bisimulation $\approx_{prbhhp}$. We leave them as
exercises to the readers.
\end{proof}

Though $\tau$-loops are prohibited in guarded linear recursive specifications in a specifiable way, they can be constructed using the abstraction operator, for example, there exist
$\tau$-loops in the process term $\tau_{\{a\}}(\langle X|X=aX\rangle)$. To avoid $\tau$-loops caused by $\tau_I$ and ensure fairness, we introduce the following recursive verification
rules as Table \ref{RVR} shows, note that $i_1,\cdots, i_m,j_1,\cdots,j_n\in I\subseteq \mathbb{E}\setminus\{\tau\}$.

\begin{center}
\begin{table}
    $$VR_1\quad \frac{x=y+(i_1\leftmerge\cdots\leftmerge i_m)\cdot x, y=y+y}{\tau\cdot\tau_I(x)=\tau\cdot \tau_I(y)}$$
    $$VR_2\quad \frac{x=z\boxplus_{\pi}(u+(i_1\leftmerge\cdots\leftmerge i_m)\cdot x),z=z+u,z=z+z}{\tau\cdot\tau_I(x)=\tau\cdot\tau_I(z)}$$
    $$VR_3\quad \frac{x=z+(i_1\leftmerge\cdots\leftmerge i_m)\cdot y,y=z\boxplus_{\pi}(u+(j_1\leftmerge\cdots\leftmerge j_n)\cdot x), z=z+u,z=z+z}{\tau\cdot\tau_I(x)=\tau\cdot\tau_I(y')\textrm{ for }y'=z\boxplus_{\pi}(u+(i_1\leftmerge\cdots\leftmerge i_m)\cdot y')}$$
\caption{Recursive verification rules}
\label{RVR}
\end{table}
\end{center}

\begin{theorem}[Soundness of $VR_1,VR_2,VR_3$]
$VR_1$, $VR_2$ and $VR_3$ are sound modulo probabilistic rooted branching truly concurrent bisimulation equivalences $\approx_{prbp}$, $\approx_{prbs}$, $\approx_{prbhp}$ and $\approx_{prbhhp}$.
\end{theorem}

\subsection{Hoare Logic for $APPTC_G$}{\label{hl}}

In this section, we introduce Hoare logic for $APPTC_G$. We do not introduce the preliminaries of Hoare logic, please refer to \cite{HL} for details.

A partial correct formula has the form

$$\{pre\}P\{post\}$$

where $pre$ are preconditions, $post$ are postconditions, and $P$ are programs. $\{pre\}P\{post\}$ means that $pre$ hold, then $P$ are executed and $post$ hold. We take the guards
$G$ of $APPTC_G$ as the language of conditions, and closed terms of $APPTC_G$ as programs. For some condition $\alpha\in G$ and some data state $s\in S$, we denote
$S\models \alpha[s]$ for $\langle \alpha,s\rangle\rightarrow\langle\surd,s\rangle$, and $S\models\alpha$ for $\forall s\in S, S\models \alpha[s]$, $S\models \{\alpha\}p\{\beta\}$ for
all $s\in S$, $\nu\subseteq \mathbb{E}\cup G$, $S\models \alpha[s]$, $\langle p,s\rangle\xrightarrow{\nu}\langle p',s'\rangle$, $S\models\beta[s']$ with $s'\in S$. It is obvious that
$S\models \{\alpha\}p\{\beta\}\Leftrightarrow \alpha p\approx_{prbp}(\approx_{prbs},\approx_{prbhp},\approx_{prbhhp})\alpha p\beta$.

We design a proof system $H$ to deriving partial correct formulas over terms of $APPTC_G$ as Table \ref{H} shows. Let $\Gamma$ be a set of conditions and partial correct formulas, we
denote $\Gamma\vdash\{\alpha\}t\{\beta\}$ iff we can derive $\{\alpha\}t\{\beta\}$ in $H$, note that $t$ does not need to be closed terms. And we write $\alpha\rightarrow \beta$ for
$S\models \alpha\Rightarrow S\models\beta$.

\begin{center}
    \begin{table}
        $(H1)\quad\{wp(e,\alpha)\}e\{\alpha\}\textrm{ if }e\in\mathbb{E}$

        $(H2)\quad\{\alpha\}\phi\{\alpha\cdot\phi\}\textrm{ if }\phi\in G$

        $(H3)\quad\frac{\{\alpha\}t\{\beta\}\quad\{\alpha\}t'\{\beta\}}{\{\alpha\}t+t'\{\beta\}}$

        $(PH1)\quad\frac{\{\alpha\}t\{\beta\}\quad\{\alpha\}t'\{\beta\}}{\{\alpha\}t\boxplus_{\pi}t'\{\beta\}}$

        $(H4)\quad\frac{\{\alpha\}t\{\alpha'\}\quad\{\alpha'\}t'\{\beta\}}{\{\alpha\}t\cdot t'\{\beta\}}$

        $(H5)\quad\frac{\{\alpha\}t\{\alpha'\}\quad\{\beta\}t'\{\beta'\}}{\{\alpha\leftmerge\beta\}t\between t'\{\alpha'\leftmerge\beta'\}}$

        $(H6)\quad\frac{\{\alpha\}t\{\beta\}}{\{\alpha\}\Theta(t)\{\beta\}}$

        $(H7)\quad\frac{\{\alpha\}t\{\beta\}}{\{\alpha\}\partial_H(t)\{\beta\}}$

        $(H8)\quad\frac{\{\alpha\}t\{\beta\}}{\{\alpha\}\tau_I(t)\{\beta\}}$

        $(H9)\quad\frac{\alpha\rightarrow\alpha'\quad\{\alpha'\}t\{\beta'\}\quad\beta'\rightarrow\beta}{\{\alpha\}t\{\beta\}}$

        $(H10)\quad$ For $E=\{x=t_x|x\in V_E\}$ a guarded linear recursive specification $\forall y\in V_E$ and $z\in V_E$:
        \[\frac{\{\alpha_x\}t_x\{\beta_x\}\quad\cdots\quad\{\alpha_y\}t_y\{\beta_y\}}{\{\alpha_z\}\langle z|E\rangle\{\beta_z\}}\]

        $(H10')\quad$ For $E=\{x=t_x|x\in V_E\}$ a guarded linear recursive specification $\forall y\in V_E$ and $z\in V_E$:
        \[\frac{\{\alpha_{x_1}\leftmerge\cdots\leftmerge\alpha_{x_{nx}}\}t_x\{\beta_{x_1}\leftmerge\cdots\leftmerge\beta_{x_{nx}}\}\quad\cdots\quad\{\alpha_{y_1}\leftmerge\cdots \leftmerge\alpha_{y_{ny}}\}t_y\{\beta_{y_1}\leftmerge\cdots\leftmerge\beta_{y_{ny}}\}}{\{\alpha_{z_1}\leftmerge\cdots\leftmerge\alpha_{z_{nz}}\}\langle z|E\rangle\{\beta_{z_1}\leftmerge\cdots\leftmerge\beta_{z_{nz}}\}}\]
        \caption{The proof system $H$}
        \label{H}
    \end{table}
\end{center}

\begin{theorem}[Soundness of $H$]
Let $Tr_S$ be the set of conditions that hold in $S$. Let $p$ be a closed term of $APPTC_{G_{\tau}}$ with guarded linear recursion and $VR_1,VR_2,VR_3$, and $\alpha,\beta\in G$ be guards. Then

\begin{eqnarray}
Tr_S\vdash\{\alpha\}p\{\beta\}&\Rightarrow&APPTC_{G_{\tau}}\textrm{ with guarded linear recursion and } VR_1,VR_2,VR_3\vdash \alpha p=\alpha p\beta\nonumber\\
&\Leftrightarrow&\alpha p\approx_{prbs}(\approx_{prbp},\approx_{prbhp},\approx_{prbhhp})\alpha p\beta\nonumber\\
&\Leftrightarrow&S\models\{\alpha\}p\{\beta\}\nonumber
\end{eqnarray}
\end{theorem}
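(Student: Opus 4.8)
The plan is to split the stated chain into its three links and dispatch each with a different tool. The two equivalences on the right are essentially already in hand. The middle equivalence, $APPTC_{G_{\tau}}\textrm{ with guarded linear recursion and }VR_1,VR_2,VR_3\vdash\alpha p=\alpha p\beta \Leftrightarrow \alpha p\approx_{prbs}(\approx_{prbp},\approx_{prbhp},\approx_{prbhhp})\alpha p\beta$, follows from soundness for the forward direction (Theorem \ref{SAPPTC_GABSG} together with soundness of $VR_1,VR_2,VR_3$) and from the corresponding completeness result (Theorem \ref{CAPPTC_GTAUG}, extended to the abstraction operator and the verification rules) for the backward direction, noting that $\alpha p$ and $\alpha p\beta$ are closed terms. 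The rightmost equivalence $\alpha p\approx_{prbs}(\approx_{prbp},\approx_{prbhp},\approx_{prbhhp})\alpha p\beta\Leftrightarrow S\models\{\alpha\}p\{\beta\}$ is exactly the observation recorded when $S\models\{\alpha\}p\{\beta\}$ was defined, so I would simply cite it. Thus the entire theorem reduces to the single leftmost implication $Tr_S\vdash\{\alpha\}p\{\beta\}\Rightarrow APPTC_{G_{\tau}}\vdash\alpha p=\alpha p\beta$, i.e.\ soundness of the proof system $H$ with respect to provable equality.

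For that implication I would induct on the structure of the derivation of $\{\alpha\}p\{\beta\}$ in $H$, showing for every rule that once each premise $\{\alpha_i\}t_i\{\beta_i\}$ has been read as the algebraic identity $\alpha_i t_i=\alpha_i t_i\beta_i$, the conclusion's identity $\alpha t=\alpha t\beta$ is derivable in $APPTC_{G_{\tau}}$. The base cases are immediate: $(H1)$ is exactly axiom $G10$, and $(H2)$ uses idempotence and commutativity of guards coming from the Boolean-algebra structure of $G$. The congruence-style rules $(H3)$, $(PH1)$, $(H4)$, $(H6)$, $(H7)$, $(H8)$ are routine bookkeeping: for $+$ and $\boxplus_{\pi}$ I would distribute the leading guard with $G4$/$PG2$ and recollect the postcondition using $A4$/$PA4$; for sequential composition $(H4)$ I would thread the intermediate assertion $\alpha'$ through using $A5$ and $G7$; and for the unary operators I would push them inside via $CE5$, $D5$, $TI5$ and absorb guards using $G22$, $G23$, $G28$/$G29$, so that $\alpha t=\alpha t\beta$ maps to its image under $\Theta$, $\partial_H$, $\tau_I$.

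The delicate rules are $(H9)$, $(H5)$ and $(H10)$/$(H10')$. For the consequence rule $(H9)$ the crux is translating the semantic implications $\alpha\rightarrow\alpha'$ and $\beta'\rightarrow\beta$ into the algebraic absorptions $\alpha\alpha'=\alpha$ and $\beta'\beta=\beta'$; here I would invoke $G1$, $G2$, $G8$, $G9$ and the sufficient-determinism of the data environment, after which $\alpha t=\alpha t\beta$ follows by a short guard calculation. For the parallel rule $(H5)$ I would distribute guards over $\leftmerge$ and $\mid$ with $G12$ and $G13$, expand $t\between t'$ via $P1$, and verify $(\alpha\leftmerge\beta)(t\between t')=(\alpha\leftmerge\beta)(t\between t')(\alpha'\leftmerge\beta')$ by treating the $\leftmerge$ and $\mid$ summands separately. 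For the recursion rules I would, given the guarded linear specification $E$, build a companion specification whose equation bodies carry the postcondition guards, observe that the assertion-decorated terms form a solution of it, and conclude the triple for $\langle z|E\rangle$ by $RSP$, with guardedness of $E$ being precisely what licenses the appeal.

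I expect the main obstacle to be the recursion rules $(H10)$/$(H10')$ together with the semantic-to-algebraic passage in $(H9)$. The recursion case demands setting up an auxiliary guarded linear recursive specification and checking that $\langle z|E\rangle$ (pre-multiplied by its precondition guard) and its postcondition-decorated variant satisfy the \emph{same} guarded equations, across all of $E$ simultaneously, so that $RSP$ applies; aligning the parallel guards $\alpha_{z_1}\leftmerge\cdots\leftmerge\alpha_{z_{nz}}$ in the mutually-recursive $(H10')$ form is the most intricate part. The $(H9)$ translation is the subtle conceptual point, since it is the only place where the semantics of conditions over all data states $s\in S$ re-enters the otherwise purely equational argument, and it depends essentially on the sufficient determinism established earlier.
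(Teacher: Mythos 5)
Your proposal is correct and follows essentially the same route as the paper: reduce the chain to the single leftmost implication (the other two links being the definitional observation and soundness/completeness), induct on the derivation in $H$ treating $H1$--$H9$ as routine, and handle the recursion rule $H10'$ by constructing auxiliary guarded recursive specifications carrying the pre- and post-condition guards and invoking the recursion principles ($RDP$/$RSP$) to identify the decorated and undecorated solutions. The paper in fact writes out only the $H10'$ case in detail, exactly the case you flag as the main obstacle, and your companion-specification strategy matches its construction of $E'$ and $E''$.
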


\begin{proof}
We only need to prove

$$Tr_S\vdash\{\alpha\}p\{\beta\}\Rightarrow APPTC_{G_{\tau}}\textrm{ with guarded linear recursion and } VR_1,VR_2,VR_3\vdash \alpha p=\alpha p\beta$$

For $H1$-$H10$, by induction on the length of derivation, the soundness of $H1$-$H10$ are straightforward. We only prove the soundness of $H10'$.


Let $E=\{x_i=t_i(x_1,\cdots,x_n)|i=1,\cdots,n\}$ be a guarded linear recursive specification. Assume that

$$Tr_S,\{\{\alpha_1\leftmerge\cdots\leftmerge\alpha_{n_i}\}x_i\{\beta_1\leftmerge\cdots \leftmerge\beta_{n_i}\}|i=1,\cdots,n\}\vdash\{\alpha_1\leftmerge\cdots\leftmerge\alpha_{n_j}\}t_j(x_1,\cdots,x_n)\{\beta_1\leftmerge\cdots \leftmerge\beta_{n_j}\}$$

for $j=1,\cdots,n$. We would show that $APPTC_{G_{\tau}}\textrm{ with guarded linear recursion and } VR_1,VR_2,VR_3\vdash (\alpha_1\leftmerge\cdots\leftmerge\alpha_{n_j})X_j=(\alpha_1\leftmerge\cdots\leftmerge\alpha_{n_j}) X_j(\beta_1\leftmerge\cdots\leftmerge\beta_{n_j})$.

We write recursive specifications $E'$ and $E''$ for

$$E'=\{y_i=(\alpha_1\leftmerge\cdots\leftmerge\alpha_{n_i})t_i(y_1,\cdots,y_n)|i=1,\cdots,n\}$$

$$E''=\{z_i=(\alpha_1\leftmerge\cdots\leftmerge\alpha_{n_i})t_i(z_1(\beta_1\leftmerge\cdots\leftmerge\beta_{n_1}),\cdots,z_n(\beta_1\leftmerge\cdots\leftmerge\beta_{n_n}))|i=1,\cdots,n\}$$

and would show that for $j=1,\cdots,n$,

(1) $(\alpha_1\leftmerge\cdots\leftmerge\alpha_{n_j})X_j=Y_j$;

(2) $Z_j(\beta_1\leftmerge\cdots\leftmerge\beta_{n_j})=Z_j$;

(3) $Z_j=Y_j$.

For (1), we have

\begin{eqnarray}
(\alpha_1\leftmerge\cdots\leftmerge\alpha_{n_j})X_j&=&(\alpha_1\leftmerge\cdots\leftmerge\alpha_{n_j})t_j(X_1,\cdots,X_n)\nonumber\\
&=&(\alpha_1\leftmerge\cdots\leftmerge\alpha_{n_j})t_j((\alpha_1\leftmerge\cdots\leftmerge\alpha_{n_1})X_1,\cdots,(\alpha_1\leftmerge\cdots\leftmerge\alpha_{n_n})X_n)\nonumber
\end{eqnarray}

by RDP, we have $(\alpha_1\leftmerge\cdots\leftmerge\alpha_{n_j})X_j=Y_j$.

For (2), we have

\begin{eqnarray}
Z_j(\beta_1\leftmerge\cdots\leftmerge\beta_{n_j})&=&(\alpha_1\leftmerge\cdots\leftmerge\alpha_{n_j}) t_j(Z_1(\beta_1\leftmerge\cdots\leftmerge\beta_{n_1}),\cdots,Z_n(\beta_1\leftmerge\cdots \leftmerge\beta_{n_n}))(\beta_1\leftmerge\cdots\leftmerge\beta_{n_j})\nonumber\\
&=&(\alpha_1\leftmerge\cdots\leftmerge\alpha_{n_j}) t_j(Z_1(\beta_1\leftmerge\cdots\leftmerge\beta_{n_1}),\cdots,Z_n(\beta_1\leftmerge\cdots \leftmerge\beta_{n_n}))\nonumber\\
&=&(\alpha_1\leftmerge\cdots\leftmerge\alpha_{n_j}) t_j((Z_1(\beta_1\leftmerge\cdots\leftmerge\beta_{n_1}))(\beta_1\leftmerge\cdots\leftmerge\beta_{n_1}),\cdots,(Z_n(\beta_1\leftmerge\cdots \leftmerge\beta_{n_n}))\nonumber\\
&&(\beta_1\leftmerge\cdots\leftmerge\beta_{n_n}))\nonumber
\end{eqnarray}

by RDP, we have $Z_j(\beta_1\leftmerge\cdots\leftmerge\beta_{n_j})=Z_j$.

For (3), we have

\begin{eqnarray}
Z_j&=&(\alpha_1\leftmerge\cdots\leftmerge\alpha_{n_j})t_j(Z_1(\beta_1\leftmerge\cdots\leftmerge\beta_{n_1}),\cdots,Z_n(\beta_1\leftmerge\cdots\leftmerge\beta_{n_n}))\nonumber\\
&=&(\alpha_1\leftmerge\cdots\leftmerge\alpha_{n_j})t_j(Z_1,\cdots,Z_n)\nonumber
\end{eqnarray}

by RDP, we have $Z_j=Y_j$.
\end{proof}


\newpage

\end{document}